\newcommand{\shortVersion}{false}
\renewcommand\footnotetextcopyrightpermission[1]{}
\declaretheorem{theorem}
\declaretheorem[]{lemma}
\declaretheorem[]{definition}
\declaretheorem[]{corollary}
\theoremstyle{definition}
\theoremstyle{plain}
\newtheorem{assumption}{Assumption}
\newcommand{\Desc}[2]{\State \makebox[2em][l]{#1}#2}
\definecolor{pred}{RGB}{255,160,160}
\definecolor{pgrn}{RGB}{150,230,150}
\definecolor{pblu}{RGB}{180,180,255}
\definecolor{porg}{RGB}{255,180,100}
  \newcommand\reduline{\bgroup\markoverwith{\textcolor{pred}{\rule[-0.5ex]{2pt}{0.4pt}}}\ULon}
\def\blueuwave{\leavevmode \bgroup 
    \ifdim \ULdepth=\maxdimen \ULdepth 3.5\p@
    \else \advance\ULdepth2\p@ 
    \fi \markoverwith{\lower\ULdepth\hbox{\textcolor{pblu}{\sixly \char58}}}\ULon}
\def\bluedotuline{\leavevmode \bgroup 
    \UL@setULdepth
    \ifx\UL@on\UL@onin \advance\ULdepth2\p@\fi
    \markoverwith{\begingroup
       \lower\ULdepth\hbox{\kern.06em \textcolor{pblu}{.}\kern.04em}\endgroup}\ULon}
\def\greendashuline{\leavevmode \bgroup 
    \UL@setULdepth
    \ifx\UL@on\UL@onin \advance\ULdepth2\p@\fi
    \markoverwith{\kern.13em
    \vtop{\color{pgrn}\kern\ULdepth \hrule width .3em}\kern.13em}\ULon}
\def\uwave{\bgroup \markoverwith{\lower6\p@\hbox{\sixly \textcolor{blue}{\char58}}}\ULon}
\font\sixly=lasy6 \makeatother
\newcommand{\totcountermeasures}[0]{9\xspace}
\newcommand{\Bv}{\mathbf{\textcolor{RoyalBlue}{B}}}
\newcommand{\Sv}{\texttt{\textcolor{ForestGreen}{S}}}
\newcommand{\SLSv}{\texttt{\textcolor{Orange}{SLS}}}
\newcommand{\Jv}{\textit{\textcolor{BlueGreen}{J}}}
\newcommand{\Rv}{\mathsf{\textcolor{BrickRed}{R}}}
\newcommand{\Div}{\mf{\textcolor{OrangeRed}{D}}}
\newcommand{\MS}{\mf{\textcolor{Plum}{MS}}}
\newcommand{\CT}{\mf{\textcolor{Aquamarine}{CT}}}
\newcommand{\trgB}[1]{{\mf{\col{RoyalBlue}{#1}}}}
\newcommand{\trgS}[1]{{\mf{\col{ForestGreen}{#1}}}}
\newcommand{\trgJ}[1]{{\mf{\col{BlueGreen}{#1}}}}
\newcommand{\trgR}[1]{{\mf{\col{BrickRed}{#1}}}}
\newcommand{\trgSLS}[1]{{\mf{\col{Orange}{#1}}}}
\newcommand{\Bvr}{\mathbf{\textcolor{RoyalBlue}{B}}}
\newcommand{\Svr}{\texttt{\textcolor{ForestGreen}{S}}}
\newcommand{\SLSvr}{\texttt{\textcolor{Orange}{SLS}}}
\newcommand{\Jvr}{\texttt{\textcolor{BlueGreen}{J}}} 
\newcommand{\Rvr}{\mathsf{\textcolor{BrickRed}{R}}}
\newcommand{\greencheck}{{\color{ForestGreen}\checkmark}}
\newcommand{\xmark}{\ding{55}}\newcommand{\redxmark}{{\color{Maroon}\xmark}}
\newcommand{\specj}{\textsc{Spectre v2}\xspace}
\newcommand{\specr}{\textsc{Spectre v5}\xspace} \newcommand{\specs}{\textsc{Spectre v4}\xspace}  
\newcommand{\muasm}{$\mu$\textsc{Asm}\xspace}
\newcommand{\obsKywd}[1]{\textcolor{RoyalBlue}{\mathtt{#1}}}
\newcommand{\loadObsKywd}{\obsKywd{load}}
\newcommand{\storeObsKywd}{\obsKywd{store}}
\newcommand{\callObsKywd}{\obsKywd{call}}
\newcommand{\retObsKywd}{\obsKywd{ret}}
\newcommand{\pcObsKywd}{\obsKywd{pc}}
\newcommand{\opObs}[2]{#1\ \obsKywd{op}\ #2}
\newcommand{\spKywd}{\kywd{sp}}
\newcommand{\spR}{\spKywd}
\newcommand{\loadObs}[1]{\loadObsKywd\ #1}
\newcommand{\storeObs}[1]{\storeObsKywd\ #1}
\newcommand{\pcObs}[1]{\pcObsKywd\ #1}
\newcommand{\retObs}[1]{\retObsKywd\ #1}
\newcommand{\callObs}[1]{\callObsKywd\ #1}
\newcommand{\startObsKywd}[1]{\obsKywd{start}}
\newcommand{\rollbackObsKywd}[1]{\obsKywd{rlb}} \newcommand{\startObs}[1]{\startObsKywd{}\ #1}
\newcommand{\rollbackObsB}[0]{\rollbackObsKywd{}_{\Bv}\ }
\newcommand{\pcObsID}[1]{\pcObsKywd_{\id}\ #1}
\newcommand{\rollbackObsID}[0]{\rollbackObsKywd{}_{\id}}
\newcommand{\rollbackObsS}[0]{\rollbackObsKywd{}_{\Sv}\ }
\newcommand{\rollbackObsR}[0]{\rollbackObsKywd{}_{\Rv}\ }
\newcommand{\rollbackObsJ}[0]{\rollbackObsKywd{}_{\Jv}\ }
\newcommand{\rollbackObsSLS}[0]{\rollbackObsKywd{}_{\SLSv}\ }
\newcommand{\Nat}{\mathbb{N}}
\newcommand{\Reg}{\mathit{Regs}}
\newcommand{\Val}{\mathit{Vals}}
\newcommand{\Assgn}{\mathit{Assgn}}
\newcommand{\RSB}{\mathit{Rsb}}
\newcommand{\kywd}[1]{\mathbf{#1}}
\newcommand{\skipKywd}{\kywd{skip}}
\newcommand{\storeKywd}{\kywd{store}}
\newcommand{\loadKywd}{\kywd{load}}
\newcommand{\jmpKywd}{\kywd{jmp}}
\newcommand{\jzKywd}{\kywd{beqz}}
\newcommand{\pcKywd}{\kywd{pc}}
\newcommand{\callKywd}{\kywd{call}}
\newcommand{\retKywd}{\kywd{ret}}
\newcommand{\pload}[2]{\loadKywd\ #1, #2}
\newcommand{\ploadprv}[2]{\loadKywd_{prv}\ #1, #2}
\newcommand{\pstoreprv}[2]{\storeKywd_{prv}\ #1, #2}
\newcommand{\pcall}[1]{\callKywd\ #1}
\newcommand{\pret}{\retKywd }
\newcommand{\pcondassign}[3]{#1 \xleftarrow{#3?} #2}
\newcommand{\passign}[2]{#1 \leftarrow #2}
\newcommand{\vassign}[3]{#1 \leftarrow #2\ op\ #3}
\newcommand{\pjz}[2]{\jzKywd\ #1, #2}
\newcommand{\pjmp}[1]{\jmpKywd\ #1}
\newcommand{\pc}{\pcKywd}
\newcommand{\loadC}{\kywd{load}}
\newcommand{\storeC}{\kywd{store}}
\newcommand{\retC}{\kywd{ret}}
\newcommand{\jmpC}{\kywd{jmp}}
\newcommand{\callC}{\kywd{call}}
\newcommand{\jzC}{\kywd{beqz}}
\newcommand{\Rsb}{\mathbb{R}}
\newcommand{\tup}[1]{\langle #1 \rangle}
\newcommand{\specarrowB}[1]{\xRightarrow{\mathit{#1}}\mathrel{\vphantom{\to}}_{\Bv}}
\newcommand{\specarrowS}[1]{\xRightarrow{\mathit{#1}}\mathrel{\vphantom{\to}}_{\Sv}}
\newcommand{\specarrowR}[1]{\xRightarrow{\mathit{#1}}\mathrel{\vphantom{\to}}_{\Rv}}
\newcommand{\instr}[1]{p(\sigma(\pc)) = #1}
\newcommand{\instrneq}[1]{p(\sigma(\pc)) \ne #1}
\newcommand{\exprEval}[3]{#1 \triangleright{} #2 \bigreds #3}
\newcommand{\specProject}[1]{#1\mathord{\upharpoonright_{se}}}
\newcommand{\nspecProject}[1]{#1\mathord{\upharpoonright_{ns}}}
\lstdefinestyle{MUASMstyle}
{
	frame = tb,
    belowskip=.4\baselineskip,
    aboveskip=.4\baselineskip,
  	showstringspaces = false,
  	breaklines = true,
  	breakatwhitespace = true,
  	tabsize = 3,
  	numbers = left,
    stepnumber = 1,
    numberstyle = \tiny\color{gray},
    alsoletter={.\$\%},
    basicstyle={\ttfamily\color{black}},
    keywordstyle={\ttfamily\color{Blue3}},
    keywordstyle=[2]{\ttfamily\color{Green4}},
    keywordstyle=[3]{\ttfamily\color{orange}},
    keywordstyle=[4]{\ttfamily\color{violet}},
    keywordstyle=[5]{\ttfamily\color{red}}, 
    keywords = {store, load, beqz, \leftarrow, ret, call, spbarr, jmp, skip, fences},
    morekeywords = {},
    morekeywords = [2]{A, B},
    morekeywords = [3]{\%rax, \%rbx, \%rcx, \%eax, \%ecx, \%rbp, \%cl, \%rdi, \%edx, \%esi, \%rsi, \%rdx, \%rsp, \%al, \%rip, \%ax, \%bx, \%cx,  \%bp,  \%di, \%dx, \%si, \%sp, \%ip, \%r8b},
    morekeywords = [4]{k, y, size, temp, pub, secr_addr, x, eax, edi, eip, secret, z, w, size_A, edx, public, sp, rsp},
    morekeywords = [5]{END, L1, L2, Main, Misp, Test, speculate, call_start, call_leak, return_new,
        manipulate_return_address, Manip_Stack, Speculate
    }
}
\lstdefinestyle{PrologStyle}
{
	frame = tb,
    belowskip=.4\baselineskip,
    aboveskip=.4\baselineskip,
  	showstringspaces = false,
  	breaklines = true,
  	breakatwhitespace = true,
  	tabsize = 3,
  	numbers = left,
    stepnumber = 1,
    numberstyle = \tiny\color{gray},
    alsoletter={.\$\%},
    basicstyle={\ttfamily\color{black}},
    keywordstyle={\ttfamily\color{Blue3}},
    keywordstyle=[2]{\ttfamily\color{Green4}},
    keywordstyle=[3]{\ttfamily\color{orange}},
    keywordstyle=[4]{\ttfamily\color{violet}},
    keywordstyle=[5]{\ttfamily\color{red}}, 
    keywords = {store, load, beqz, \leftarrow, ret, call, spbarr, jmp,
        xrunv4_1, xc41, enabled, decr, window, update0, spec4, incpc, trace_rawpc, trace, is
    },
    morekeywords = {},
    morekeywords = [2]{ B},
    morekeywords = [3]{\%rax, \%rbx, \%rcx, \%eax, \%ecx, \%rbp, \%cl, \%rdi, \%edx, \%esi, \%rsi, \%rdx, \%rsp, \%al, \%rip, \%ax, \%bx, \%cx,  \%bp,  \%di, \%dx, \%si, \%sp, \%ip, \%r8b},
    morekeywords = [4]{k, y, size, temp, pub, secr_addr, x, eax, edi, eip, secret, z, w},
    morekeywords = [5]{~}
}
\lstdefinestyle{PrologStyle2}
{
	frame = tb,
    belowskip=.4\baselineskip,
    aboveskip=.4\baselineskip,
  	showstringspaces = false,
  	breaklines = true,
  	breakatwhitespace = true,
  	tabsize = 3,
  	numbers = left,
    stepnumber = 1,
    numberstyle = \tiny\color{gray},
    alsoletter={.\$\%},
    basicstyle={\ttfamily\color{black}},
    keywordstyle={\ttfamily\color{Blue3}},
    keywordstyle=[2]{\ttfamily\color{Green4}},
    keywordstyle=[3]{\ttfamily\color{orange}},
    keywordstyle=[4]{\ttfamily\color{violet}},
    keywordstyle=[5]{\ttfamily\color{red}}, 
    keywords = {store, load, beqz, \leftarrow, ret, call, spbarr, jmp,
        xrunv4_1, xc41, enabled, decr, window, update0, spec4, incpc, trace_rawpc, trace, is, f, g, h
    },
    morekeywords = {},
    morekeywords = [2]{A, B},
    morekeywords = [3]{\%rax, \%rbx, \%rcx, \%eax, \%ecx, \%rbp, \%cl, \%rdi, \%edx, \%esi, \%rsi, \%rdx, \%rsp, \%al, \%rip, \%ax, \%bx, \%cx,  \%bp,  \%di, \%dx, \%si, \%sp, \%ip, \%r8b},
    morekeywords = [4]{k, y, size, temp, pub, secr_addr, x, eax, edi, eip, secret, z, w},
    morekeywords = [5]{bp, ~, a, b, c}
}
\newcommand{\clang}{\textsc{Clang}\xspace}
\newcommand{\gcc}{\textsc{Gcc}\xspace}
\theoremstyle{definition}
\theoremstyle{plain}
\newcommand*{\SavedLstInline}{}
\LetLtxMacro\SavedLstInline\lstinline
\DeclareRobustCommand*{\lstinline}{\ifmmode
    \let\SavedBGroup\bgroup
    \def\bgroup{\let\bgroup\SavedBGroup
      \hbox\bgroup
    }\fi
  \SavedLstInline
}
\definecolor{mygreen}{rgb}{0,0.8,0.6}
\lstdefinestyle{Cstyle2}
{
	frame = tb,
  belowskip=.4\baselineskip,
  aboveskip=.4\baselineskip,
  	showstringspaces = false,
  	breaklines = true,
  	breakatwhitespace = true,
  	tabsize = 3,
  	numbers = left,
    stepnumber = 1,
    numberstyle = \tiny\color{gray},
    language = {[ANSI]C},
    basicstyle=\footnotesize,
    emph={uint32_t, uint8_t},
    emphstyle={\color{blue}},
    alsoletter={.\$},
    basicstyle={\ttfamily\color{black}},
    stringstyle={\ttfamily\color{orange}}
}
\lstdefinestyle{Cstyle}
{
	frame = tb,
  belowskip=.4\baselineskip,
  aboveskip=.4\baselineskip,
  	showstringspaces = false,
  	breaklines = true,
  	breakatwhitespace = true,
  	tabsize = 3,
  	numbers = left,
    stepnumber = 1,
    numberstyle = \tiny\color{gray},
    language = {[ANSI]C},
    alsoletter={.\$},
    basicstyle={\ttfamily\color{black}},
    keywordstyle={\ttfamily\color{Blue3}},
    keywordstyle=[2]{\ttfamily\color{Green4}},
    keywordstyle=[3]{\ttfamily\color{orange}},
    keywordstyle=[4]{\ttfamily\color{violet}},
    otherkeywords = {temp,y,A,B,k,size},
    morekeywords = [2]{A,B},
    morekeywords = [3]{},
    morekeywords = [4]{x,y,p,size,k,temp, secret, public, size_A},
}
\lstdefinestyle{ASMstyle}
{
	frame = tb,
  belowskip=.4\baselineskip,
  aboveskip=.4\baselineskip,
  	showstringspaces = false,
              	breaklines = true,
  	breakatwhitespace = true,
  	tabsize = 3,
  	numbers = left,
    stepnumber = 1,
    numberstyle = \tiny\color{gray},
    alsoletter={.\$\%},
    basicstyle={\ttfamily\color{black}},
    stringstyle={\ttfamily\color{gray}},
    keywordstyle={\ttfamily\color{Blue3}},
    keywordstyle=[2]{\ttfamily\color{Green4}},
    keywordstyle=[3]{\ttfamily\color{Mahogany}},
    keywordstyle=[4]{\ttfamily\color{violet}},
    keywordstyle=[5]{\ttfamily\color{red}}, 
    keywords = {cmova, cmovae, cmovb, cmove, cmovne, cmovbe, mov, xor, or,  jbe, jne, add, jmp,  cmp, shl, and, sar, lfence, mfence, sfence, jae, clt,  and, lea, movl, subl, decl, movb, leave, ret, andl, pushl, call, pause},
    morekeywords = {},
    morekeywords = [2]{A, B},
    morekeywords = [3]{\%rax, \%rbx, \%rcx, \%eax, \%ecx, \%rbp, \%cl, \%rdi, \%edx, \%esi, \%rsi, \%rdx, \%rsp, \%al, \%rip, \%ax, \%bx, \%cx,  \%bp,  \%di, \%dx, \%si, \%sp, \%ip, \%r8b, \%ebp, \%esp},
    morekeywords = [4]{k, y, size, temp},
    morekeywords = [5]{END, .L1, .L2, .LFB12, speculate, return_new}
}
\definecolor{Blue3}{HTML}{0000CD}
\definecolor{Green4}{HTML}{008B00}
\definecolor{Red3}{HTML}{CD0000}
\definecolor{orange}{rgb}{0.8, 0.47, 0.196}
\newcommand\remove[1]{}
 \newcommand\xleadsto[1]{\if\relax\detokenize{#1}\relax
   \rightsquigarrow
   \else
   \mathrel{\begin{tikzpicture}[baseline={(current bounding box.south)}
       ]
       \node[,inner sep=.44ex
       ,align=center
       ] (tmp) {$\scriptstyle #1$};
       \path[,draw,<-
       ,decorate,decoration={,zigzag
         ,amplitude=0.7pt
         ,segment length=1.2mm,pre length=3.5pt
       }
       ] 
       (tmp.south east) -- (tmp.south west);
     \end{tikzpicture}
   }
   \fi
 }
\newcommand{\Obs}{\mathit{Obs}}
\newcommand{\lbl}{\ell}
\newcommand{\barrierKywd}{\kywd{spbarr}}
\newcommand{\rollbackObs}[1]{\rollbackObsKywd{}\ #1}
\newcommand{\pskip}{\skipKywd{}}
\newcommand{\pstore}[2]{\storeKywd\ #1, #2}
\newcommand{\pbarrier}{\barrierKywd}
\newcommand{\select}[2]{#1(#2)}
\newcommand{\id}{\mathit{id}}
\newcommand{\ctr}{\mathit{ctr}}
\newcommand{\lbbar}{\{\kern-0.5ex|}
\newcommand{\rbbar}{|\kern-0.5ex\}}
\newcommand{\lanbar}{\langle \kern-0.5ex|}
\newcommand{\ranbar}{|\kern-0.5ex\rangle}
\newcommand{\abs}[1]{\lvert #1\rvert}
\newcommand{\minWindow}[1]{\mathit{minWndw}(#1)}
\newcommand{\phiStack}{\overline{\Phi}\xspace}
\newcommand{\phiStackS}{\overline{\Phi}_{\Sv}\xspace}
\newcommand{\phiStackB}{\overline{\Phi}_{B}\xspace}
\newcommand{\phiStackR}{\overline{\Phi}_{\Rv}\xspace}
\newcommand{\bnfdef}{\ensuremath{\mathrel{::=}}}
\newcommand{\tauStack}{\overline{\tau}\xspace}
\renewcommand{\specarrowB}[1]{\xrightswishingghost{\mathit{#1}}\mathrel{\vphantom{\to}}_{\Bv}} 
\renewcommand{\specarrowS}[1]{\xrightswishingghost{\mathit{#1}}\mathrel{\vphantom{\to}}_{\Sv}} 
\newcommand{\specarrowJ}[1]{\xrightswishingghost{\mathit{#1}}\mathrel{\vphantom{\to}}_{\Jv}} 
\renewcommand{\specarrowR}[1]{\xrightswishingghost{\mathit{#1}}\mathrel{\vphantom{\to}}_{\Rv}} 
\newcommand{\specarrowSLS}[1]{\xrightswishingghost{\mathit{#1}}\mathrel{\vphantom{\to}}_{\SLSv}}
\newcommand{\specarrowBS}[1]{\xrightswishingghost{\mathit{#1}}\mathrel{\vphantom{\to}}_{B+S}}
\newcommand{\mi}[1]{\ensuremath{\mathit{#1}}}
\newcommand{\mtt}[1]{\ensuremath{\mathtt{#1}}}
\newcommand{\mf}[1]{\ensuremath{\mathbf{#1}}}
\newcommand{\mb}[1]{\ensuremath{\mathbb{#1}}}
\newcommand{\beh}[0]{\mi{Beh}}
\newcommand{\behNs}[1]{\beh_{NS}(#1)}
\newcommand{\Trace}[2]{#1 ~\mathghost_{\mi{NS}}~ #2}
\newcommand{\behB}[1]{\beh^{\omega}_{\Bv}(#1)}
\newcommand{\behS}[1]{\beh^{\omega}_{\Sv}(#1)}
\newcommand{\behJ}[1]{\beh^{\omega}_{\Jv}(#1)}
\newcommand{\behSLS}[1]{\beh^{\omega}_{\SLSv}(#1)}
\newcommand{\behR}[1]{\beh^{\omega}_{\Rv}(#1)}
\newcommand{\amTracevB}[2]{#1 ~\mathghost^{\omega}_{\Bv}~ #2}
\newcommand{\amTracevS}[2]{#1 ~\mathghost^{\omega}_{\Sv}~ #2}
\newcommand{\amTracevR}[2]{#1 ~\mathghost^{\omega}_{\Rv}~ #2}
\newcommand{\amTracevJ}[2]{#1 ~\mathghost^{\omega}_{\Jv}~ #2}
\newcommand{\amTracevSLS}[2]{#1 ~\mathghost^{\omega}_{\SLSv}~ #2}
\newcommand{\PhiR}{\Phi_{\Rv}}
\newcommand{\SigmaR}{\Sigma_{\Rv}}
\newcommand{\SigmaRDS}[1]{{\SigmaR}_{#1}} \newcommand{\SigmaRDSP}[1]{{\SigmaR'}_{#1}}
\newcommand{\SigmaB}{\Sigma_{\Bv}}
\newcommand{\PhiB}{\Phi_{\Bv}}
\newcommand{\SigmaS}{\trgS{\Sigma_{\Sv}}}
\newcommand{\PhiS}{\Phi_{\Sv}}
\newcommand{\SigmaSDS}[1]{{\SigmaS}_{#1}} \newcommand{\SigmaSDSP}[1]{{\SigmaS'}_{#1}}
\newcommand{\SigmaJ}{\Sigma_{\Jv}}
\newcommand{\PhiJ}{\Phi_{\Jv}}
\newcommand{\phiStackJ}{\overline{\Phi}_{\Jv}\xspace}
\newcommand{\SigmaJDS}[1]{{\SigmaJ}_{#1}} \newcommand{\SigmaJDSP}[1]{{\SigmaJ'}_{#1}}
\newcommand{\SigmaSLS}{\trgSLS{\Sigma_{\SLSv}}}
\newcommand{\PhiSLS}{\Phi_{\SLSv}}
\newcommand{\phiStackSLS}{\overline{\Phi}_{\SLSv}\xspace}
\newcommand{\SigmaSLSDS}[1]{{\SigmaSLS}_{#1}} \newcommand{\SigmaSLSDSP}[1]{{\SigmaSLS'}_{#1}}
\newcommand{\SigmaBS}{\Sigma_{\Bv + \Sv}}
\newcommand{\SigmaBR}{\Sigma_{B + R}}
\newcommand{\SigmaBSR}{\Sigma_{145}}
\newcommand{\bigspecarrowB}[1]{\prescript{}{}{\Downarrow}^{#1}_{\Bv}~}
\newcommand{\bigspecarrowS}[1]{\trgS{\prescript{}{}{\Downarrow}^{#1}}_{\Sv}~}
\newcommand{\bigspecarrowJ}[1]{\prescript{}{}{\Downarrow}^{#1}_{\Jv}~}
\newcommand{\bigspecarrowR}[1]{\prescript{}{}{\Downarrow}^{#1}_{\Rv}~}
\newcommand{\bigspecarrowSLS}[1]{\prescript{}{}{\Downarrow}^{#1}_{\SLSv}~}
\newcommand{\bigspecarrowBR}[1]{\prescript{}{}{\Downarrow}^{#1}_{15}~}
\newcommand{\bigspecarrowBS}[1]{\prescript{}{}{\Downarrow}^{#1}_{\Bv + \Sv}~}
\newcommand{\bigspecarrowBSR}[1]{\prescript{}{}{\Downarrow}^{#1}_{145}~}
\newcommand{\nsbigarrow}[1]{\Downarrow_{#1}}
 \newcommand{\nsarrow}[1]{\xrightarrow{#1}}
\newcommand{\initFunc}[1]{\src{\Omega_{0}}#1}
\newcommand{\initFuncB}[1]{\SigmaB^{\mi{init}}#1}
\newcommand{\initFuncJ}[1]{\SigmaJ^{\mi{init}}#1}
\newcommand{\initFuncSLS}[1]{\SigmaSLS^{\mi{init}}#1}
\newcommand{\initFuncS}[1]{\SigmaS^{\mi{init}}#1}
\newcommand{\initFuncR}[1]{\SigmaR^{\mi{init}}#1}
\newcommand{\finType}[1]{\vdash #1 \colon \mi{fin}}
\newcommand{\finTypef}[1]{\vdash #1 \colon \mi{fin_{f}}}
\newcommand{\empTr}{\varepsilon}
\newcommand{\sni}[0]{\text{SNI}\xspace}
\newcommand{\rsni}[0]{\text{RSNI}\xspace}
\newcommand{\rsstext}[0]{\text{RSS}\xspace}
\newcommand{\rss}[0]{\text{RSS}\xspace}
\newcommand{\rssdef}[0]{\criteria{\rsstext}{rss}}
\newcommand{\Sigmax}{\Sigma_{x}}
\newcommand{\Sigmay}{\Sigma_{y}}
\newcommand{\Sigmaxy}{\Sigma_{xy}}
\newcommand{\Phix}{\Phi_{x}}
\newcommand{\Phiy}{\Phi_{y}}
\newcommand{\Phixy}{\Phi_{xy}}
\newcommand{\phiStackxy}{\overline{\Phi}_{xy}\xspace}
\newcommand{\phiStackx}{\overline{\Phi}_{x}\xspace}
\newcommand{\phiStacky}{\overline{\Phi}_{y}\xspace}
\newcommand{\Obsx}{\Obs_{x}}
\newcommand{\Obsy}{\Obs_{y}}
\newcommand{\Obsxy}{\Obs_{xy}}
\newcommand{\joinxy}{\sqcup_{xy}}
\newcommand{\specProjectxy}[1]{#1\mathord{\upharpoonright_{xy}}}
\newcommand{\specProjectxyx}[1]{#1\mathord{\upharpoonright_{xy}^{x}}}
\newcommand{\specProjectxyy}[1]{#1\mathord{\upharpoonright_{xy}^{y}}}
\newcommand{\specProjectComp}[2]{#1\mathord{\upharpoonright_{#2}}}
\newcommand{\behxy}[1]{\beh^{\omega}_{xy}#1}
\newcommand{\behx}[1]{\beh^{\omega}_x(#1)}
\newcommand{\behy}[1]{\beh^{\omega}_y(#1)}
\newcommand{\startObsx}[1]{\startObsKywd{}_{x}\ #1}
\newcommand{\rollbackObsx}[1]{\rollbackObsKywd{}_{x}\ #1}
\newcommand{\startObsy}[1]{\startObsKywd{}_{y}\ #1}
\newcommand{\rollbackObsy}[1]{\rollbackObsKywd{}_{y}\ #1}
\newcommand{\specarrowxy}[1]{\xrightswishingghost{\mathit{#1}}\mathrel{\vphantom{\to}}_{xy}}
\newcommand{\specarrowx}[1]{\xrightswishingghost{\mathit{#1}}\mathrel{\vphantom{\to}}_{x}}
\newcommand{\specarrowy}[1]{\xrightswishingghost{\mathit{#1}}\mathrel{\vphantom{\to}}_{y}}
\newcommand{\bigspecarrowxy}[1]{\prescript{}{}{\Downarrow}^{#1}_{xy}~}
\newcommand{\bigspecarrowx}[1]{\prescript{}{}{\Downarrow}^{#1}_{x}~}
\newcommand{\Zxy}{Z_{xy}}
\newcommand{\specarrowxZ}[1]{\xrightswishingghost{\mathit{#1}}\mathrel{\vphantom{\to}}_{x}^{\textcolor{Blue3}{\specProjectxyx{\Zxy}}}}
\newcommand{\specarrowyZ}[1]{\xrightswishingghost{\mathit{#1}}\mathrel{\vphantom{\to}}_{y}^{\textcolor{Blue3}{\specProjectxyy{\Zxy}}}}
\newcommand{\amTracevxy}[2]{#1 ~\mathghost^{\omega}_{xy}~ #2}
\newcommand{\sem}[1]{\mathghost_{#1}}
\newcommand{\semb}{\sem{\Bv}} 
\newcommand{\semj}{\sem{\Jv}} 
\newcommand{\sems}{\sem{\Sv}} 
\newcommand{\semr}{\sem{\Rv}} 
\newcommand{\semsls}{\sem{\SLSv}}
\newcommand{\semd}{\sem{\Div}}
\newcommand{\semms}{\sem{\MS}}
\newcommand{\semct}{\sem{\CT}}
\newcommand{\sigmav}{\sigma_{v}}
\newcommand{\sigmat}{\sigma_{\taint}}
\newcommand{\av}{a_{v}}
\newcommand{\mv}{M_{v}}
\newcommand{\Bva}{B_{v}}
\newcommand{\at}{a_{t}}
\newcommand{\mta}{M_{t}}
\newcommand{\Bt}{B_{t}}
\DeclareMathOperator{\findfun}{find\_fun}
\newcommand{\ffun}[1]{\findfun(#1)} \DeclareMathOperator{\getprog}{get\_program}
\newcommand{\gprog}[2]{\getprog_{#1}(#2)}
\newcommand{\tauStackT}{\overline{\tau^{\taint}}\xspace}
\newcommand{\PhiSv}{\Phi_{\Sv}^{v}}
\newcommand{\PhiSt}{\Phi_{\Sv}^{t}}
\newcommand{\phiStackSv}{\overline{\Phi^{v}_{\Sv}}\xspace}
\newcommand{\phiStackSt}{\overline{\Phi^{t}_{\Sv}}\xspace}
\newcommand{\PhiJv}{\Phi_{\Jv}^{v}}
\newcommand{\PhiJt}{\Phi_{\Jv}^{t}}
\newcommand{\phiStackJv}{\overline{\Phi^{v}_{\Jv}}\xspace}
\newcommand{\phiStackJt}{\overline{\Phi^{t}_{\Jv}}\xspace}
\newcommand{\PhiSLSv}{\Phi_{\SLSv}^{v}}
\newcommand{\PhiSLSt}{\Phi_{\SLSv}^{t}}
\newcommand{\phiStackSLSv}{\overline{\Phi^{v}_{\SLSv}}\xspace}
\newcommand{\phiStackSLSt}{\overline{\Phi^{t}_{\SLSv}}\xspace}
\newcommand{\PhiRv}{\Phi_{\Rv}^{v}}
\newcommand{\PhiRt}{\Phi_{\Rv}^{t}}
\newcommand{\phiStackRv}{\overline{\Phi^{v}_{\Rv}}\xspace}
\newcommand{\phiStackRt}{\overline{\Phi^{t}_{\Rv}}\xspace}
\newcommand{\PhiBv}{\Phi_{\Bv}^{v}}
\newcommand{\PhiBt}{\Phi_{\Bv}^{t}}
\newcommand{\phiStackBv}{\overline{\Phi^{v}_{\Bv}}\xspace}
\newcommand{\Phixyv}{\Phixy^{v}}
\newcommand{\Phixyt}{\Phixy^{t}}
\newcommand{\phiStackxyv}{\overline{\Phixy^{v}}\xspace}
\newcommand{\phiStackxyt}{\overline{\Phixy^{t}}\xspace}
\newcommand{\Omegav}{\Omega_{v}}
\newcommand{\Omegat}{\Omega_{t}}
\newcommand{\instrOv}[1]{p(\Omegav(\pc)) = #1}
\newcommand{\instrneqOv}[1]{p(\Omegav(\pc)) \ne #1}
\newcommand{\instrOt}[1]{p(\Omegat(\pc)) = #1}
\newcommand{\instrneqOt}[1]{p(\Omegat(\pc)) \ne #1}
\newcommand{\trgxy}[1]{{\mf{\col{\neutcol }{#1}}}}
\newcommand{\labels}[1]{labels(#1)} 
\newcommand{\low}{\kywd{L}}
\newcommand{\high}{\kywd{H}}
\newcommand{\MemL}[1]{M^{\low#1}}
\newcommand{\MemH}[1]{M^{\high#1}}
\newcommand{\SInit}[1]{\ensuremath{{\Omega_0}\left({#1}\right)}\xspace}
\newcommand{\SInits}[1]{\ensuremath{\src{\Omega_0}\left(\src{#1}\right)}\xspace}
\newcommand{\SR}[0]{\src{L}\xspace}
\def\hrelssaux#1{\vcenter{\hbox{\ooalign{\hfil
       \raise6pt \hbox{\tiny{\com{H}}}\hfil\cr\hfil
       $\approx$}}}}
\def\hrelssb{\mathrel{\mathpalette\hrelssaux{}}}
\DeclareMathOperator\hrel{\ensuremath{\com{\hrelssb}}}
\newcommand{\hreltext}[0]{\hrel}
\newcommand{\hreldef}[0]{\criteria{\hreltext}{htracerel}}
\def\vrelssaux#1{\vcenter{\hbox{\ooalign{\hfil
       \raise6pt \hbox{\tiny{\com{V}}}\hfil\cr\hfil
       $\approx$}}}}
\def\vrelssb{\mathrel{\mathpalette\vrelssaux{}}}
\DeclareMathOperator\vrel{\ensuremath{\com{\vrelssb}}}
\newcommand{\vreltext}[0]{\vrel}
\newcommand{\vrelref}[0]{\ensuremath{\Cref{cr:vtracerel}}\xspace}
\newcommand{\vreldef}[0]{\criteria{\vreltext}{vtracerel}}
\def\arelssaux#1{\vcenter{\hbox{\ooalign{\hfil
       \raise6pt \hbox{\tiny{\com{A}}}\hfil\cr\hfil
       $\approx$}}}}
\def\arelssb{\mathrel{\mathpalette\arelssaux{}}}
\DeclareMathOperator\arel{\ensuremath{\com{\arelssb}}}
\newcommand{\areltext}[0]{\arel}
\newcommand{\arelref}[0]{\ensuremath{\Cref{cr:atracerel}}\xspace}
\newcommand{\areldef}[0]{\criteria{\areltext}{atracerel}}
\def\srelssaux#1{\vcenter{\hbox{\ooalign{\hfil
       \raise6pt \hbox{\tiny{\com{S}}}\hfil\cr\hfil
       $\approx$}}}}
\def\srelssb{\mathrel{\mathpalette\srelssaux{}}}
\DeclareMathOperator\srel{\ensuremath{\com{\srelssb}}}
\newcommand{\sreltext}[0]{\srel}
\newcommand{\srelref}[0]{\ensuremath{\Cref{cr:stracerel}}\xspace}
\newcommand{\sreldef}[0]{\criteria{\sreltext}{stracerel}}
\def\rrelssaux#1{\vcenter{\hbox{\ooalign{\hfil
       \raise6pt \hbox{\tiny{\com{R}}}\hfil\cr\hfil
       $\approx$}}}}
\def\rrelssb{\mathrel{\mathpalette\rrelssaux{}}}
\DeclareMathOperator\rrel{\ensuremath{\com{\rrelssb}}}
\newcommand{\rreltext}[0]{\rrel}
\newcommand{\rrelref}[0]{\ensuremath{\Cref{cr:rtracerel}}\xspace}
\newcommand{\rreldef}[0]{\criteria{\rreltext}{rtracerel}}
\def\crelssaux#1{\vcenter{\hbox{\ooalign{\hfil
       \raise6pt \hbox{\tiny{\com{C}}}\hfil\cr\hfil
       $\approx$}}}}
\def\crelssb{\mathrel{\mathpalette\crelssaux{}}}
\DeclareMathOperator\crel{\ensuremath{\com{\crelssb}}}
\newcommand{\creltext}[0]{\crel}
\newcommand{\creldef}[0]{\criteria{\creltext}{ctracerel}}
\def\brelssaux#1{\vcenter{\hbox{\ooalign{\hfil
       \raise6pt \hbox{\tiny{\com{B}}}\hfil\cr\hfil
       $\approx$}}}}
\def\brelssb{\mathrel{\mathpalette\brelssaux{}}}
\DeclareMathOperator\brel{\ensuremath{\com{\brelssb}}}
\newcommand{\wfc}[1]{\ensuremath{\vdash#1 : \ensuremath{\mi{WFC}}}}
\newcommand{\wfcn}[2]{\ensuremath{\vdash#1 : \ensuremath{\mi{WFC}_{#2}}}}
\newcommand{\relsa}{\approx~}
\newcommand{\SigmaSt}[1]{\trgS{\Sigma^{#1}_{\Sv}}} \newcommand{\SigmaJt}[1]{\trgJ{\Sigma^{#1}_{\Jv}}}
\newcommand{\SigmaRt}[1]{\trgR{\Sigma^{#1}_{\Rv}}}
\newcommand{\SigmaSLSt}[1]{\trgSLS{\Sigma^{#1}_{\SLSv}}} 
\newcommand{\Sigmaxyt}[1]{\Sigma^{#1}_{xy}}
\newcommand{\Sigmaxt}[1]{\Sigma^{#1}_{x}}
\newcommand{\Sigmayt}[1]{\Sigma^{#1}_{y}}
\newcommand{\relsax}{\approx_{x}~}
\newcommand{\relsay}{\approx_{y}~}
\newcommand{\relsaxy}{\approx_{xy}~}
\newcommand{\notfinType}[1]{\nvdash #1 \colon \mi{fin}}
\newcommand{\labelset}{L}
\newcommand{\labelmap}{\mathcal{L}}
\DeclareMathOperator{\inc}{inc}
\DeclareMathOperator{\succes}{succ}
\newcommand{\OmegaS}{\trgS{\Omega}}
\newcommand{\OmegaSLS}{\trgSLS{\Omega}}
\newcommand{\OmegaR}{\trgR{\Omega}}
\newcommand{\lub}{\sqcup}
\newcommand{\relslh}[0]{\Bumpeq} \newcommand{\rslh}{r_{slh}}
\newcommand{\rslhC}{r_{slhC}} \newcommand{\rscr}{r_{scratch}}
\def\shrelssaux#1{\vcenter{\hbox{\ooalign{\hfil
       \raise6pt \hbox{\tiny{\com{H}}}\hfil\cr\hfil
       $\relslh$}}}}
\def\shrelssb{\mathrel{\mathpalette\shrelssaux{}}}
\DeclareMathOperator\shrel{\ensuremath{\com{\shrelssb}}}
\def\ssrelssaux#1{\vcenter{\hbox{\ooalign{\hfil
       \raise6pt \hbox{\tiny{\com{S}}}\hfil\cr\hfil
       $\relslh$}}}}
\def\ssrelssb{\mathrel{\mathpalette\ssrelssaux{}}}
\DeclareMathOperator\ssrel{\ensuremath{\com{\ssrelssb}}}
\newcommand{\ssreltext}[0]{\ssrel}
\newcommand{\ssrelref}[0]{\ensuremath{\Cref{cr:sstracerel}}\xspace}
\newcommand{\ssreldef}[0]{\criteria{\ssreltext}{sstracerel}}
\newcommand{\binop}[0]{\otimes}
\newcommand{\branchlabels}[0]{\mathcal{B}_{l}}
\newcommand{\SI}[3]{#3\vdash #2 : SI} 
\newcommand{\IND}[2]{#1 \vdash #2 : I}
\def\botrule{\vspace{0mm}\hrule\vspace{2mm}}
\newcommand{\neutcol}[0]{black}
\newcommand{\stlccol}[0]{RoyalBlue}
\newcommand{\ulccol}[0]{RedOrange}
\newcommand{\commoncol}[0]{black}    
\def\relssa{\approx}  
\newcounter{proofr}
\crefname{proofr}{}{}
\newcommand{\proofref}[2]{\def\theproofref{\detokenize{#1}}\refstepcounter{proofr}\label{prf:#2}}
\newcommand{\showproof}[1]{Proof \Cref{prf:#1}}
\DeclareMathOperator\rels{\ensuremath{\com{\relssa}}}
\newcommand{\reltext}[0]{\relssa}
\newcommand{\relref}[0]{\ensuremath{\Cref{cr:tracerel}}\xspace}
\newcommand{\tracerel}[0]{\relref}
\newcommand{\reldef}[0]{\criteria{\reltext}{tracerel}}
\newcommand{\op}[0]{\ensuremath{\oplus}}
\newcommand{\bop}[0]{\ensuremath{\otimes}}
\newcommand{\bl}[1]{\col{\neutcol }{#1}}
\newcommand{\col}[2]{\ensuremath{{\color{#1}{#2}}}}
\newcommand{\com}[1]{\mi{\col{\commoncol }{#1}}}
\newcommand{\comb}[1]{\ensuremath{\bm{\col{\neutcol }{#1}}}}
\newcommand{\OB}[1]{\ensuremath{\overline{#1}}}
\newcommand{\relmiddle}[1]{\mathrel{}\middle#1\mathrel{}}
\newcommand{\myset}[2]{\ensuremath{\left\{#1 ~\relmiddle|~ #2\right\}}}
\DeclareMathOperator\glb{\ensuremath{\sqcap}}
\newcommand{\come}[0]{\com{\emptyset}\xspace}
\newcommand{\src}[1]{{\ensuremath{\textsf{{\col{\stlccol }{#1}}}}}} \newcommand{\trg}[1]{{\mf{\col{\ulccol }{#1}}}}
\newcommand{\srce}[0]{\src{\emptyset}\xspace}
\newcommand{\trgSe}[0]{\trgS{\emptyset}\xspace}
\newcommand{\trgBe}[0]{\trgB{\emptyset}\xspace}
\newcommand{\safeta}[0]{\ensuremath{S}}
\newcommand{\unta}[0]{\ensuremath{U}}
\newcommand{\safe}[1]{\vdash #1 : \mathit{safe}}
\newcommand{\unsafe}[1]{\vdash #1 : \mathit{unsafe}}
\newcommand{\contextletter}[0]{A}
\newcommand{\ctx}[1]{\ensuremath{\contextletter}} \newcommand{\ctxs}[1]{\src{\ctx{\contextletter}#1}\xspace}
\newcommand{\ctxB}[1]{\trgB{\ctx{\contextletter}#1}\xspace}\newcommand{\ctxS}[1]{\trgS{\ctx{\contextletter}#1}\xspace}\newcommand{\ctxR}[1]{\trgR{\ctx{\contextletter}#1}\xspace}\newcommand{\ctxSLS}[1]{\trgSLS{\ctx{\contextletter}#1}\xspace}\newcommand{\ctxc}[1]{\com{\ctx{\contextletter}#1}\xspace}\newcommand{\hole}[1]{\ensuremath{\left[#1\right]}}
\newcommand{\bigred}[0]{\ensuremath{\ \downarrow\ }}
\DeclareMathOperator\bigreds{\bigred}
\newcommand{\taint}[0]{t}
\newcommand{\taintpc}[0]{\taint_{pc}}
\newcommand{\tracesymbol}[0]{\tau}
\newcommand{\actionsymbol}[0]{\tau}
\newcommand{\terminationsymbol}[0]{\lightning}
\newcommand{\ts}[0]{\terminationsymbol}
\newcommand{\proc}[2]{\ensuremath{(#1)_{#2} }}
\newcommand{\taintS}{\trgS{\sigma}}
\newcommand{\taintB}{\trgB{\sigma}}
\newcommand{\taintSLS}{\trgSLS{\sigma}}
\newcommand{\aca}[1]{\actionsymbol{#1}}
\newcommand{\acac}[1]{\com{\actionsymbol{#1}}}
\newcommand{\acas}[1]{\src{\actionsymbol{#1}}}
\newcommand{\tra}[1]{\OB{\tracesymbol{#1}}}
\newcommand{\tras}[1]{\src{\tra{#1}}}
\newcommand{\trac}[1]{\com{\tra{#1}}}
\newcommand{\behav}[1]{{Beh}{\left(#1\right)}}
\newcommand{\behavs}[1]{\src{\behav{#1}}}
\newcommand{\behavc}[1]{\com{\behav{#1}}}
\newcommand{\funname}[1]{\mtt{#1}}
\newcommand{\fun}[2]{\ensuremath{{\bl{\funname{#1}\left(#2\right)}}}\xspace}
\newcommand{\dom}[1]{\fun{dom}{#1}}
\newcommand{\asm}[1]{\mtt{#1}}
\newcommand{\labelfont}[1]{\ensuremath{\asm{#1}}}
\newcommand{\clh}[1]{\ensuremath{\labelfont{call}~ #1{?}}} 	\newcommand{\cbh}[1]{\ensuremath{\labelfont{call}~ #1{!}}} 	
\newcommand{\rth}[0]{\ensuremath{\labelfont{ret}{!}}}	 		\newcommand{\rbh}[0]{\ensuremath{\labelfont{ret}{?}}}
\def\dotminus{\mathbin{\ooalign{\hss\raise1ex\hbox{.}\hss\cr
  \mathsurround=0pt$-$}}}
\newcommand{\compskel}[5]{\ensuremath{\bl{\prescript{#1}{#2}{\left\llbracket \src{#3} \right\rrbracket^{#4}_{#5}}}}}
\newcommand{\comp}[1]{\compskel{}{}{\bl{#1}}{}{}}
\newcommand{\rplSym}{rpl} 
\newcommand{\complfenceB}[1]{\compskel{}{}{#1}{\com{f}}{\Bv}} \newcommand{\compuslhB}[1]{\compskel{}{}{#1}{\com{USLH}}{\Bv}}
\newcommand{\compsslhB}[1]{\compskel{}{}{#1}{\com{SSLH}}{\Bv}}
\newcommand{\compretpJ}[1]{\compskel{}{}{#1}{\com{\rplSym}}{\Jv}}  \newcommand{\compretpJF}[1]{\compskel{}{}{#1}{\com{\rplSym}f}{\Jv}}
\newcommand{\compretpR}[1]{\compskel{}{}{#1}{\com{\rplSym}}{\Rv}}  \newcommand{\complfenceS}[1]{\compskel{}{}{#1}{\com{f}}{\Sv}} \newcommand{\complfenceSLS}[1]{\compskel{}{}{#1}{\com{f}}{\SLSv}} \newcommand{\complfenceR}[1]{\compskel{}{}{#1}{\com{f}}{\Rv}}
\newcommand{\isdef}[0]{\ensuremath{\mathrel{\overset{\makebox[0pt]{\mbox{\normalfont\tiny\sffamily def}}}{=}}}}
\newcommand{\loweq}{\ensuremath{=_{L}}}
\newcommand{\formatCompilers}[1]{\mf{\mi{#1}}\xspace}
\newcounter{criteria}
\crefname{criteria}{}{}
\newcommand{\criteria}[2]{\def\thecriteria{\detokenize{#1}}\refstepcounter{criteria}\label{cr:#2}#1}
\newcommand{\rdsscomp}[0]{\formatCompilers{RSSC}}
\newcommand{\rdss}[0]{\Cref{cr:rdss}\xspace}
\newcommand{\rdssdef}[0]{\criteria{\rdsscomp}{rdss}}
\newcommand{\rdsspcomp}[0]{\formatCompilers{RSSP}}
\newcommand{\rdssp}[0]{\Cref{cr:rdssp}\xspace}
\newcommand{\rssp}[0]{\mi{RSSP}}
\newcommand{\rdsspdef}[0]{\criteria{\rdsspcomp}{rdssp}}
\newcommand{\rsnip}[0]{\mi{RSNIP}}
\newcommand{\sstext}[0]{\text{SS}}
\newcommand{\ssref}[0]{\Cref{cr:ss}\xspace}
\renewcommand{\ss}[0]{\ssref}
\newcommand{\ssdef}[0]{\criteria{\sstext}{ss}}
\newcommand{\backtrskel}[3]{\ensuremath{\bl{\left\langle\!\left\langle {#1} \right\rangle\!\right\rangle^{#2}_{#3}}}}
\newcommand{\backtrfencec}[1]{\backtrskel{#1}{}{\com{c}}}
\newcommand{\contractColor}{Black}
\newcommand{\contra}[1]{{\ensuremath{\textsf{{\col{\contractColor}{#1}}}}}}
\newcommand{\Mcontra}[1]{{\ensuremath{\textit{{\col{Green}{#1}}}}}}
\newcommand{\contract}[2]{\contra{\mathghost^{#1}_{#2}}} \newcommand{\Mcontract}[2]{\Mcontra{\mathfrak{C}^{#1}_{#2}}} \newcommand{\contractSpec}[2]{\contra{\mathghost^{#1}_{#2}}} \newenvironment{insight}
    {\begin{center}
    \begin{tabular}{|p{0.9\textwidth}|}
    \hline\\
    }
    { 
    \\\\\hline
    \end{tabular} 
    \end{center}
    }
\newcommand{\BREAK}[0]{
\botrule
\begin{center}$\spadesuit$\end{center}
\botrule}
\newcommand{\techReportAppendix}[1]{\IfEqCase{\shortVersion}{{false}{Appendix \Cref{#1}}{true}{\cite{techReport}}}
}
\newcommand{\techReportAppendices}[2]{\IfEqCase{\shortVersion}{{false}{Appendices \Cref{#1}--\Cref{#2}}{true}{\cite{techReport}}}
}
\newcommand{\onlyTechReport}[1]{\IfEqCase{\shortVersion}{{false}{#1}}
}
\newcommand{\onlyShortVersion}[1]{\IfEqCase{\shortVersion}{{true}{#1}}
}
\newcommand{\IfShortFirstOption}[2]{
    \IfEqCase{\shortVersion}{{true}{#1}{false}{#2}
    }
}
\newcommand{\SITab}{\textcolor{PineGreen}{SI}}
\newcommand{\ITab}{\textcolor{BurntOrange}{I}} \newcommand{\N}{\textcolor{Mahogany}{\xmark}} \newcommand{\specInstr}{\mi{instr}_{\mathghost}}
\newcommand{\clgen}[1]{\ensuremath{\labelfont{call}~#1}}
\newcommand{\rtgen}{\ensuremath{\labelfont{ret}}}
\newcommand{\storel}[1]{\ensuremath{\labelfont{store}(#1)}}
\newcommand{\loadl}[1]{\ensuremath{\labelfont{load}(#1)}}
\newcommand{\pcl}[1]{\ensuremath{\labelfont{pc}(#1)}}
\newcommand{\startl}[1]{\ensuremath{\labelfont{start}_{#1}}}
\newcommand{\rollbl}[1]{\ensuremath{\labelfont{rlb}_{#1}}}
\newcommand{\trappedS}[1]{#1 : \text{trappedSpec}}
\newcommand{\pmodret}[1]{\kywd{modret}\ #1}
\newcommand{\ppopret}{\kywd{popret}}
\newcommand{\modretC}{modret}
\newcommand{\popretC}{popret}
\newcommand{\insertedInstrs}[1]{\mathit{injInst}(#1)}
\newcommand{\speculationInstrs}[1]{\mathit{specInst}(#1)}
\newcommand{\initFuncxy}[1]{\Sigmaxy^{\mi{init}}#1}
\newcommand{\initFuncx}[1]{\Sigmax^{\mi{init}}#1}
\newcommand{\initFuncy}[1]{\Sigmay^{\mi{init}}#1}
\newcommand{\safeN}[1]{\text{safeN}_{#1}} \newcommand{\trappedC}[1]{\mi{trapped}}
\newcommand{\crssp}{\textit{CRSSP}}
\newcommand{\slhInv}{Inv_{slh}}
\newcommand{\nesting}[0]{Nest}
\newcommand{\SigmaBSLS}{\Sigma_{\Bv + \SLSv}}
\newcommand{\SigmaBSSLS}{\Sigma_{\Bv + \Sv + \SLSv}}
\newcommand{\bigspecarrowBSLS}[1]{\prescript{}{}{\Downarrow}^{#1}_{\Bv + \SLSv}~}
\newcommand{\bigspecarrowBSSLS}[1]{\prescript{}{}{\Downarrow}^{#1}_{\Bv + \Sv + \SLSv}~}
\newcommand{\region}[1]{Region~#1}
\newcommand{\coqed}[0]{\CoqSymbol\xspace}
\newcommand{\BareCoqSymbol}{\includegraphics[height=0.9em]{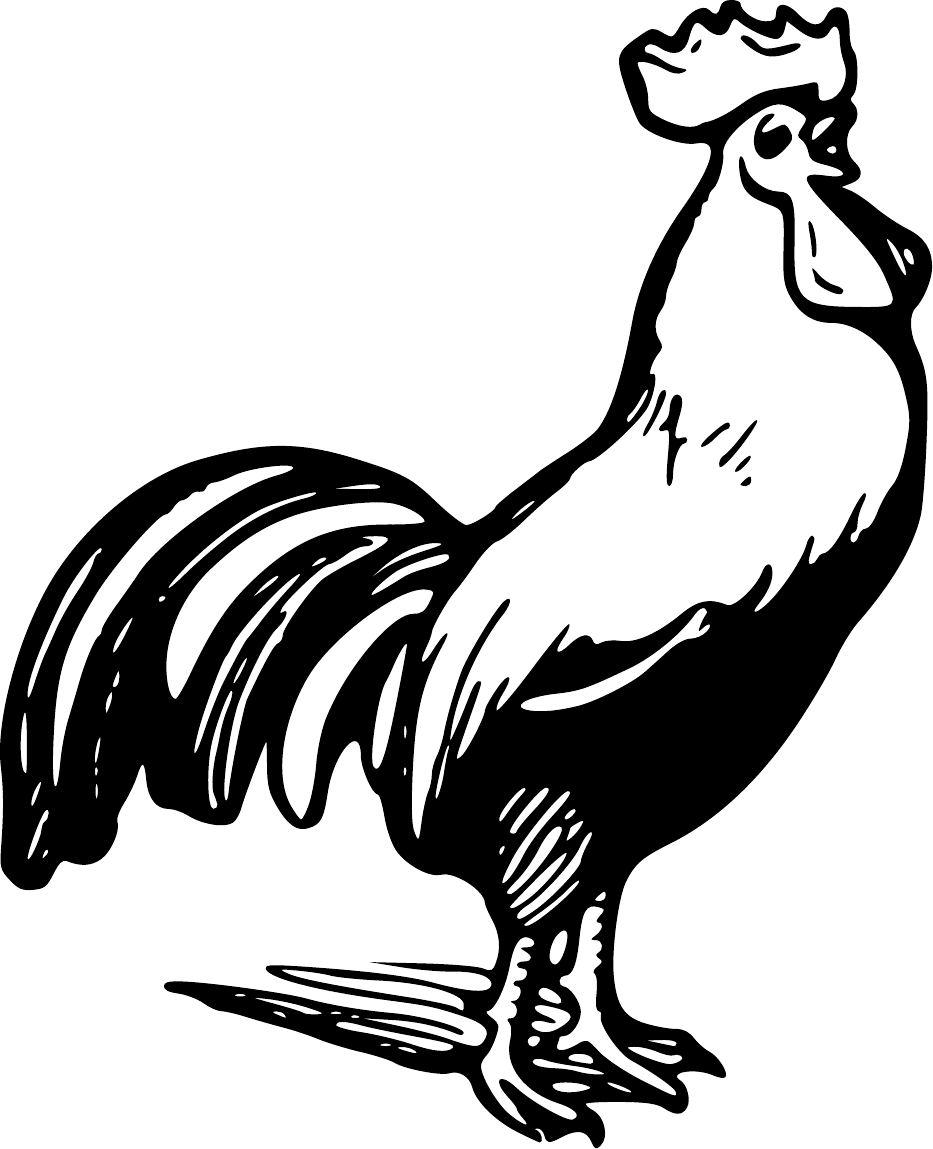}}
\newcommand{\CoqSymbol}{\raisebox{-.2ex}{\BareCoqSymbol\,}}
 \newcounter{typerule}
\crefname{typerule}{rule}{rules}
\newcommand{\typeruleInt}[5]{\def\thetyperule{#1}\refstepcounter{typerule}\label{tr:#4}\ensuremath{\begin{array}{c}#5 \inference{#2}{#3}\end{array}} 
}
\newcommand{\typerule}[4]{\typeruleInt{#1}{#2}{#3}{#4}{\textsf{\scriptsize ({#1})} \\      }
}
\newcommand{\mytoprule}[1]{\vspace{1mm}\noindent\hrulefill\ \raisebox{-0.5ex}{\fbox{\ensuremath{#1}}} \hrulefill\hrulefill\hrulefill\vspace{0.5mm}}
\xdef\@thefnmark{\@empty}
\newcommand{\Thmref}[1]{\Cref{#1}~(\nameref{#1})}
\appto\appendix{
}
\setlist[description]{style=nextline}
\newcommand{\MP}[1]{}
\newcommand{\MPin}[1]{}
\begin{document}

\begin{abstract}
  Mainstream compilers implement different  countermeasures to prevent specific classes of speculative execution attacks.
  Unfortunately, these countermeasures either lack formal guarantees or come with proofs restricted to speculative semantics capturing only a subset of the speculation mechanisms supported by modern CPUs, thereby limiting their practical applicability.
  Ideally, these security proofs should target a speculative semantics capturing the effects of \emph{all} speculation mechanisms implemented in modern CPUs.
  However, this is impractical and requires new secure compilation proofs to support additional speculation mechanisms.

  In this paper, we address this problem by proposing a novel secure compilation framework that allows \emph{lifting} the security guarantees provided by Spectre countermeasures from weaker speculative semantics (ignoring some speculation mechanisms) to stronger ones (accounting for the omitted mechanisms) \emph{without} requiring new secure compilation proofs.
  Using our lifting framework, we performed the most comprehensive security analysis of Spectre countermeasures implemented in mainstream compilers to date.
  Our analysis spans \totcountermeasures different countermeasures against 5 classes of Spectre attacks, which we proved secure against a speculative semantics accounting for five different speculation mechanisms.
  Our analysis highlights that  fence-based and retpoline-based countermeasures can be securely lifted to the strongest speculative semantics under study.
  In contrast, countermeasures based on speculative load hardening   cannot be securely lifted to semantics supporting indirect jump speculation.

\begin{center}\small\it
This paper uses  syntax highlighting accessible to both colourblind and black \& white readers.
  \\
For a better experience, please print or view this in colour~\cite{patrignani2020should}.
\end{center}   
\vspace{-5pt}
\end{abstract}
\title{Do You Even Lift? Strengthening Compiler Security Guarantees Against Spectre Attacks - Technical Report}

\author{Xaver Fabian}
\affiliation{\institution{Cispa Helmholtz Center for Information Security}
  \city{Saarbr\"ucken}
  \country{Germany}
}
\affiliation{
  \institution{University of Trento}
  \city{Trento}
  \country{Italy}
}

\author{Marco Patrignani}
\affiliation{\institution{University of Trento}
  \city{Trento}
  \country{Italy}
}

\author{Marco Guarnieri}
\affiliation{\institution{IMDEA Software Institute}
  \city{Madrid}
  \country{Spain}}

\author{Michael Backes}
\affiliation{\institution{Cispa Helmholtz Center for Information Security}
  \city{Saarbr\"ucken}
  \country{Germany}
}

\maketitle

\section{Introduction}

Spectre~\cite{spectre} and other speculative execution attacks exploit the fact that modern CPUs speculate over the outcome of different instructions---branches~\cite{spectre}, indirect jumps~\cite{spectre}, stores and loads~\cite{S_specv4}, and returns~\cite{spectreRsb}---to bypass software-level security checks and leak sensitive information.

To mitigate these attacks, mainstream compilers like \gcc and \clang{} implement countermeasures in the form of secure compilation passes~\cite{retpoline, transientfail, sok_countermeasures}.
These passes modify a given program to prevent specific classes of speculative leaks.
Unfortunately, the majority of these countermeasures lack formal security guarantees.
Even countermeasures that come with formal security guarantees, however,  are proved secure against  models (called {speculative semantics}) that \emph{only} capture the specific speculative leaks each countermeasure is designed to prevent.
For instance, some Spectre-PHT\footnote{Spectre-PHT~\cite{spectre} refers to a class of speculative execution attacks exploiting speculation over branch instructions. Here, PHT stands for ``Pattern History Table'', one of the microarchitectural mechanism responsible for branch speculation.} countermeasures have recently been proved secure~\cite{S_sec_comp} against a speculative semantics that \emph{only} models speculation over branch instructions.\looseness=-1

Modern CPUs, however,  employ a variety of speculation mechanisms which need to be accounted for when reasoning about speculative leaks.
This limits the practical applicability of existing security proofs that focus on restricted classes of speculative leaks.
For instance, the security proofs from~\citet{S_sec_comp} ignore some speculation mechanisms (e.g., speculation over memory disambiguation or indirect jumps) implemented in all mainstream CPUs, which might compromise the proved guarantees.
Unfortunately, extending security proofs to support new speculation mechanisms is far from trivial since programs that are seemingly secure when considering each speculation mechanism in isolation might still leak due to their interactions~\cite{spec_comb}. 
This has direct impact on existing security proofs:
as we show in \Cref{sec:countermeasures}, the security guarantees of \emph{speculative load hardening}, a countermeasure implemented in the \textsc{Clang} compiler, (and the corresponding proof~\cite{S_sec_comp}) break when extending the underlying speculative semantics to support speculation over indirect jumps.

Thus, establishing the security guarantees of any countermeasure ideally requires proving the security of that countermeasure against attacker models capturing the effects of \emph{all} speculation mechanisms implemented in modern CPUs.
This approach, however, is impractical:
 (1) it requires developing new secure compilation proofs against \emph{stronger} models (i.e., models accounting for additional speculation mechanisms)  for those countermeasures that have already been proved secure against weaker models,
 and
 (2) it requires additional secure compilation proofs whenever a new speculation mechanism is discovered from reverse engineering of existing CPUs.

In this paper, we address this problem by developing a formal framework that allows us to precisely characterize when the security guarantees provided by Spectre countermeasures can be \emph{lifted} from weaker models  (ignoring some speculation mechanism) to stronger ones (accounting for the omitted mechanisms). 
This lifting allows us to account for further speculative mechanisms \emph{without} requiring new secure compilation proofs.
Using our lifting framework, we performed a comprehensive security analysis of the Spectre countermeasures implemented in mainstream compilers, which we proved secure against a speculative semantics accounting for all known speculation mechanisms for which formal models exist.
Concretely, we make the following contributions:
\begin{itemize} 

\item We formalise two novel speculative semantics capturing speculation over indirect jumps~\cite{spectre} (denoted as $\semj$) and straight-line speculation~\cite{sls-whitepaper} (denoted  as $\semsls$).
We present these novel semantics alongside the formalisation of the language model we use (\Cref{sec:bg-formal}).

\item We develop a new framework for reasoning about the security of compiler-level countermeasures against  leaks induced by multiple speculation mechanisms  (\Cref{sec:security-notions}). This framework integrates the core ideas from the composition framework from~\citet{spec_comb} and from the secure compilation framework from~\citet{S_sec_comp} to allow reasoning about secure compilers against multiple speculative semantics.
We equip our framework with a precise notion of \emph{leakage ordering} (inspired by the notion of hardware-software contracts~\cite{contracts}) that precisely relates the information exposed by different speculative semantics, where semantics supporting more speculation mechanisms are stronger w.r.t. our ordering as they leak more.
The integration of all these concepts required developing new insights tailored for secure compilation, e.g., novel well-formedness conditions for composition, preservation of compiler security from composed semantics to components and from stronger semantics to weaker ones.

\item We precisely characterize under which conditions the security guarantees provided by a compiler can be \emph{lifted} from a base speculative semantics $\contract{}{x}$ to a stronger semantics $\contract{}{x+y}$, i.e., one that extends the base semantics  $\contract{}{x}$ to model the effects of additional speculation mechanisms captured by the semantics $\contract{}{y}$ (\Cref{sec:ext-framework}).
Our lifting theorem (\Cref{cor:lift-comp-pres-paper}) states that the  guarantees provided by a secure compiler for the base semantics $\contract{}{x}$ can be lifted to the extended semantics $\contract{}{x+y}$ whenever three core properties are satisfied: Security in Origin (i.e., the compiler is secure w.r.t. the base semantics  $\contract{}{x}$), Independence in Extension (i.e., the compiler does not introduce further leaks under the extension semantics $\contract{}{y}$), and Safe Nesting (i.e., there are no new leaks due to speculations arising only from the combination of semantics  $\contract{}{x}$ and $\contract{}{y}$).
Finally, to simplify proving Independence and Safe Nesting, we propose two sufficient conditions, Syntactic Independence and Trapped Speculation, that provide the same guarantees with simpler proofs but under stricter constraints.

\item Using our framework, we perform a comprehensive security analysis of Spectre countermeasures in mainstream compilers (\Cref{sec:countermeasures}).
Our analysis spans \totcountermeasures countermeasures against 5 different Spectre attacks: Spectre-PHT~\cite{spectre}, Spectre-BTB~\cite{spectre}, Spectre-STL~\cite{S_specv4}, Spectre-RSB~\cite{ret2spec, spectreRsb}, and Spectre-SLS~\cite{sls-whitepaper}.
To the best of our knowledge, this is the most extensive formal analysis of compiler countermeasures against speculative attacks to date (prior studies~\cite{S_sec_comp} are limited to Spectre-PHT countermeasures).
As part of this analysis, we precisely characterize the security guarantees of all countermeasures w.r.t. a combined speculative semantics accounting for five different speculation mechanisms (all those for which formal models exist).
We remark that our lifting theorem (\Cref{cor:lift-comp-pres-paper}) is instrumental in making our security analysis feasible since we use it to lift each countermeasure's security guarantees to all possible combined semantics \emph{without} requiring new secure compilation proofs, which significantly reduces the amount of secure compilation proofs needed.

Our security analysis highlights that:
\begin{itemize}
    \item Countermeasures that block or trap speculation, i.e., fence-based~\cite{v4_fence,Intel-compiler} and retpoline-based~\cite{retpoline} approaches, are the most secure---their security guarantees can be lifted to the stronger speculative semantics we can model.
    Furthermore, lifting their guarantees is ``easy'': it can be done with minimal proof effort since  Syntactic Independence and Trapped Speculation can be used to simplify lifting proofs.

    \item Countermeasures that mask insecure values during speculation i.e., Speculative Load Hardening~\cite{slh} (SLH) and its variants~\cite{S_sec_comp,uslh}, require careful handling of the interactions between masking code and different speculation mechanisms.
    For instance, we show that the guarantees of SLH cannot be lifted to models supporting speculation over indirect jumps, because the speculation flag (which tracks whether mispredictions are happening) is not tracked correctly between jumps.
    Even when lifting is possible, lifting SLH guarantees is more difficult than for the other countermeasures we analyzed since it requires full proofs of Independence and Safe Nesting (the simpler sufficient conditions are not applicable).\looseness=-1

\end{itemize}

\item We mechanise the core results regarding our lifting framework (not those associated with our security analysis) in the Coq proof assistant and indicate those theorems with \coqed.

\end{itemize} \smallskip
The paper concludes with a discussion of the presented result (\Cref{sec:limitations}), related work (\Cref{sec:related-work}), and conclusions (\Cref{sec:conclusion}).
For simplicity, we only discuss key aspects of our formal models here.
Full details, proofs and the Coq development can be found in the companion report.

\section{Language Formalisation: \texorpdfstring{\muasm{}}{uASM}, Speculative Semantics, and Their Combinations}\label{sec:bg-formal}

This section presents \muasm{}, an assembly-like language~\cite{spectector} that we extend with a notion of components in order to identify the unit of compilation~\cite{S_sec_comp}.
We use \muasm{} as a basis for our formal framework and secure compilers.

First, we introduce the attacker model we consider (\Cref{ssec:atk-model}).
Next, we present \muasm{}'s syntax and non-speculative semantics first (\Cref{ssec:muasm}), followed by the different speculative semantics  (\Cref{ssec:spec-semantics}).
We then show how to combine different speculative semantics (\Cref{ssec:bg-combinations}) to account for multiple sources of speculative leaks.
Finally, we describe how different semantics can be compared in terms of leaked information (\Cref{ssec:leakage-ordering}).

\subsection{Attacker Model}\label{ssec:atk-model}

We adopt a commonly-used attacker model \cite{spectector, contracts, ST_binsec, ST_constantTime_Spec,guanciale2020inspectre, ST_jasmin2, S_sec_comp, ST_blade, spec_comb}: a passive attacker observing the execution of a program through events $\tau$.
These events model timing leaks through cache and control flow while abstracting away low-level  microarchitectural details. 
{
\begin{gather*}
\begin{aligned}
    \mi{Events}~\com{\lambda} \bnfdef&\ \com{\epsilon} \mid \com{\alpha?} \mid \com{\alpha!} \mid \com{\delta} \mid \com{\ts}
    &\qquad
    \mi{Actions}~\com{\alpha} \bnfdef&\ (\clgen{f}) \mid (\rtgen{})
\end{aligned}
\\
\begin{aligned}
    \mi{\mu arch.\ Acts.}~\com{\delta} \bnfdef
        &\ 
        \com{\storel{n}} \mid \com{\loadl{n}} \mid \com{\pcl{l}} \mid \com{\startl{x}} \mid \com{\rollbl{x}} 
\end{aligned}
\end{gather*}
}

Events $\lambda$ are either the empty event $\com{\epsilon}$, an action $\com{\alpha?}$  or $\com{\alpha!}$ where $\com{?}$ denotes events \emph{from} the component \emph{to} the attacker and $\com{!}$ denotes events in the other direction, a microarchitectural action $\com{\delta}$, or the designated event  $\com{\ts}$ denoting termination.

Action $\com{\clh{f}}$ represents a call to a function $f$ in the component, while $\com{\cbh{f}}$ represents a call(back) to the attacker. 
In contrast, action $\com{\rth}$ represents a return to the attacker and $\com{\rbh}$ a return(back) to the component.

The microarchitectural actions $\com{\storel{n}}$ and $ \com{\loadl{n}}$ track addresses of store and loads, thereby capturing leaks through the data cache.
Moreover, $\com{\pcl{l}}$ tracks the program counter during execution respectively, thereby capturing leaks through the instruction cache. 
Finally, the $\com{\startl{x}}$ and $\com{\rollbl{x}}$ microarchitectural actions respectively denote the start and rollback of a speculative transaction~\cite{spectector}, i.e., a set of speculatively executed instructions.
Since we consider multiple speculative semantics (and their combinations), $\com{\startl{x}}$ and $\com{\rollbl{x}}$ actions are labelled with an identifier $x$ denoting from which semantics the transaction originated from.

Traces $\tauStack$ are sequences of events $\lambda$.
A trace  $\tauStack$ is \emph{terminating} if it ends in $\com{\ts}$.
Given a trace $\tauStack$, its \textit{non-speculative projection} $\nspecProject{\tauStack}$ \cite{spectector} consists of all observations associated with non-speculatively executed instructions and it is computed by removing all sub-sequences enclosed between  $\com{\startl{x}}$ and $\com{\rollbl{x}}$ for any $x$.
To reason about combined semantics~\cite{spec_comb}, we also need projections $\specProjectComp{\tauStack}{x}$ that removes from the trace the contributions of a specific semantics with identifier $x$.
Specifically, $\specProjectComp{\tauStack}{x}$ denotes the trace obtained by removing from $\tauStack$ all sub-sequences enclosed $\com{\startl{x}}$ and $\com{\rollbl{x}}$ for a given $x$.\looseness=-1

\subsection{Syntax and Semantics of \texorpdfstring{\muasm{}}{uASM}}\label{ssec:muasm}

\muasm{}'s syntax is presented below; we indicate the sequence of elements $e_1, \cdots, e_2$ as $\OB{e}$ and $\OB{e} \cdot e$ denotes a stack with top element $e$ and rest of stack $\OB{e}$.

\begin{gather*}
\begin{aligned}
	\mi{Programs}~ \com{W}, \com{P} \bnfdef&\ \com{M , \OB{F} , \OB{I}}
	&
	\mi{Attackers}~\ctxc{} \bnfdef&\ \com{M, \OB{F}\hole{\cdot}} 
	&
	\mi{Imports}~\com{I} \bnfdef&\ \com{f}
	\\
	\mi{Functions}~\com{F} \bnfdef&\ \emptyset \mid \com{F}; f \mapsto (l_{start}, c)
	&
	\mi{Code}~\com{c} \bnfdef&\  n :i \mid c_1;c_2
        &
	\mi{Codebase}~\com{C} \bnfdef&\ \com{\OB{F} , \OB{I}}
        \\
        \mi{Values}~ v \in&\ \Val = \mathbb{N} \cup \{\bot \}
 &
	\mi{Expressions}~  e \bnfdef&\  v \mid x \mid \ominus e \mid e_1 \otimes e_2
        \end{aligned}
	\\
	\begin{aligned}
	\mi{Instructions}~ i \bnfdef&\ \pskip \mid \passign{x}{e} \mid \pload{x}{e} \mid \pstore{x}{e} \mid                                                \pjmp{e} \mid  \pjz{x}{l} \mid \pcondassign{x}{e'}{e} \mid  \\
                                                  &\pbarrier  \mid \pcall{f} \mid \pret \mid  \ploadprv{x}{e} \mid \pstoreprv{x}{e}
	\end{aligned}
\end{gather*}

\muasm{} has a notion of components, i.e., partial programs $P$, and of attackers $A$.
Components $P$ define their memory $M$ (defined later), a list of functions $\OB{F}$, and a list of imports $\OB{I}$, which are all the functions the component expects to be defined by an attacker. 
An attacker $A$ only defines its memory and its functions. 
We indicate a program code base, i.e., its functions and imports, as $C$.
A component $P$ and attacker $A$ can be linked to obtain a whole program $W \equiv \ctxc{}\hole{P}$.
 
Functions consist of a start label $l_\mathit{start}$ indicating the position of the code $c$ of that function. Each function ends with a $\pret$ instruction.
Code $c$ is a sequence of mappings from natural-number labels to instructions $i$, where instructions $i$ include skipping, (conditional) register assignments, (private) loads, (private) stores, indirect jumps, conditional branches, conditional assignments, speculation barriers, calls, and returns.\footnote{Technically, instruction labels are drawn from a set $\labelset$ of abstract labels  mapped to natural numbers before execution.}
Instructions can refer to expressions $e$, constructed by combining registers $x$ (described below) and values $v$ with unary and binary operators.
Values come from the set $\Val$ and can be natural numbers, labels, or $\bot$.

\paragraph{Non-Speculative States}
\muasm{}'s semantics is defined in terms of program states, defined below:
{
\begin{gather*}
\begin{aligned}
    \text{Configurations } \sigma \bnfdef
        &\ \tup{p, m, r} 
    &\qquad 
    \mi{Frames}~\com{B} \bnfdef
        &\ \emptyset \mid \com{\OB{n}}; B
    &\qquad
    \text{Prog. States } \Omega \bnfdef
        &\ C;\OB{B}; \sigma
\end{aligned}
    \\
\begin{aligned}
    \mi{Register File}~\com{r} \bnfdef
        &\ \emptyset \mid \com{r}; x \mapsto v
            &
            \mi{Registers}~ x \in
                &\ \Reg
    &
    \mi{Memory}~\com{M} \bnfdef
        &\ \emptyset \mid \com{M}; n \mapsto v
            \text{ where }~ n \in
                \mathbb{Z}
\end{aligned}
\end{gather*}
}
Program states $ C;\OB{B}; \sigma$ consist of a codebase $C$, a return frame $B$, and a configuration $\sigma$.
$C$ is used to look up functions, while $B$ stores the return addresses of called functions.
$B$ consists of a stack of stacks of natural numbers $n$. 
A new empty stack $\OB{n}$ is created whenever a context switch between component and attacker happens. 
In this way, neither component nor attacker can manipulate the return stack of each other.
Configurations $\sigma$ consist of the program $p$, the memory $m$, and the register file $a$.
The code of the program $p$ is defined as the union of the code of all the functions and is a partial function mapping labels $l$ to instructions $i$.

Memories $M$ map memory addresses $n \in \mathbb{Z}$ to values $v$.
The memory is split into a public (when domain $n \geq 0$) and a private (if $n < 0$) part.
Attackers $A$ can only define and access the private memory while programs $P$ define the private memory and can access both private and public memory.\looseness=-1 

Register files $r$ map registers $x$ in $\Reg$ to their values $v$.
Note that the set $\Reg$ includes the designated registers $\pc$ and $\spR$, modelling the program counter and the stack pointer respectively.

\paragraph{Non-Speculative Operational Semantics}
\muasm is equipped with a big-step operational semantics~\cite{spectector} for expressions and a small-step operational semantics for instructions generating events.
The former has judgement $\exprEval{a}{e}{v}$ and means: ``expression $e$ reduces to value $v$ under register file $a$.''
The latter has judgement $\Omega \nsarrow{\tau} \Omega'$ meaning: ``state $\Omega$ reduces in one step to $\Omega'$ emitting label $\tau$.''

Below is a selection of these rules.
\Cref{tr:load-paper} executes a load instruction from address $n$, emitting the related microarchitectural action $\loadObs{n}$ exposing the accessed address $n$.
\Cref{tr:call-paper} executes a call to function $f'$ (whose address $n'$ is looked up in the functions table $\mathcal{F}$), it updates the registers file $\av' = \av[\pc \mapsto n']$, and it pushes a new frame with the return address $\av(\pc) + 1$ on the frames stack. 
Auxiliary function $\ffun{n} = f $ is used to look up a function $f$ starting from an address $n$ in memory, while $\src{C}.\mtt{intfs}\vdash\src{f',f} : \src{in}$ is an (omitted) judgement to determine the decorator for the call (and return) action: $?$ or $!$ depending on whether the call or return comes from the attacker to the component or vice-versa.
\Cref{tr:ret} resumes the computation with the $\pc$ set to the address ($l$) it pops from the frames stack.

\begin{center}
  \mytoprule{\sigma \nsarrow{\tau} \sigma}

  \typerule{Load}
  {
  \select{p}{\av(\pc)} = \pload{x}{e} & x \neq \pc & \exprEval{\av}{e}{n}
  }
  {
  \src{C;\OB{\Bva}; \tup{p,\mv,\av}}  \nsarrow{\loadObs{n}} \src{C;\OB{\Bva}; \tup{\mv, \av[\pc \mapsto \av(\pc)+1, x \mapsto \mv(n)]}}
  }{load-paper}

  \typerule{Call}
  {
  \select{p}{\av(\pc)} = \pcall{f'} & \mathcal{F}(f') = n' 
  & 
  f' \in C.\mtt{funs}
  &
  \av' = \av[\pc \mapsto n'] 
  & 
  \av(\pc) = n 
  \\ 
  \ffun{n} = f  
  &
  \src{C}.\mtt{intfs}\vdash\src{f',f} : \src{in} }
  {
  \src{C;\OB{\Bva}; \tup{p,\mv,\av}}  \nsarrow{\pcall{f'}?} \src{C;  (\OB{\Bva \cdot \av(\pc) + 1} ;\emptyset) ;\tup{p, \mv, \av'}}
  }{call-paper}
  \typerule{Ret}
  {
  \select{p}{\av(\pc)} = \pret 
  & 
  \av(\pc) = n 
  & 
  \av' = \av[\pc \mapsto l]  
  \\
  \ffun{n} = f & \ffun{l} = f' 
  & 
  \src{C}.\mtt{intfs}\vdash\src{f',f} : \src{in}
  }
  {
  \src{C; (\OB{\Bva} \cdot l; \OB{n}) \tup{p,\mv,\av}}  \nsarrow{\pret?} \src{C;\OB{\Bva}; \tup{p,\mv,\av'}} 
  }{ret}
\end{center}

Lastly, the semantics must capture the execution of whole programs.
Whole programs are the result of linking attackers and components and must have no undefined function imports.
Whole programs have a (straightforward and therefore omitted) big-step semantics $\nsbigarrow{}$ that concatenates single steps into multiple ones and single labels into traces.
The judgement $\com{\Omega \nsbigarrow{\tauStack} \Omega'}$ is read: "state $\com{\Omega}$ emits trace $\tauStack$ and becomes $\Omega'$."
The behaviour of a whole program $W$, written $\behNs{W}$, is the set of terminating traces it produces.

\paragraph{Source Programs}
The semantics described so far has no speculation.
We use it as semantics for the source programs of all our compilers.
We indicate the language of such source programs as \src{L}.

\subsection{Speculative Semantics}\label{ssec:spec-semantics}
The target languages of the compilers we consider all have different speculative semantics modeling the effects of speculatively executed instructions.
The speculative semantics we define are summarised in \Cref{tab:semantics}, where we list the instruction triggering speculation for each semantics.
\begin{wraptable}{r}{9cm}
    \centering
    \begin{tabular}{lcl}
    \toprule
    Semantics & Spec. Source ($\specInstr$)  & Effect \\
    \midrule
    $\semb$ \cite{spectector}    &  $\jzC{}$& branch misprediction\\
    $\sems$ \cite{spec_comb}    &  $\storeC{}$ & store bypass  \\
    $\semj$  (new)   &  (indirect) $\jmpC{}$ & different jump target \\
    $\semr$ \cite{spec_comb}    &  $\callC$, $\retC$ & return misprediction \\ $\semsls$  (new) &  $\retC$ & return bypass\\
     \bottomrule
    \end{tabular}
    \caption{Speculative semantics with the instructions they speculate on and their effects on execution.}
    \label{tab:semantics}
\end{wraptable}
We consider five different speculative semantics capturing branch speculation ($\semb$), store-bypass speculation ($\sems$), indirect jump speculation ($\semj$), speculation using a return-stack buffer ($\semr$), and straight-line speculation over return instructions ($\semsls$).
While $\semb$, $\sems$, and $\semr$ come from prior work~\cite{spectector,spec_comb}, the $\semj$ and $\semsls$ semantics are novel.
All these semantics follow the always-mispredict approach~\cite{spectector}.
At every instruction related to speculation, the semantics first speculatively executes the \textit{wrong} path for a bounded number of steps (called the speculation window) and then continues with the correct one.
The effects of the speculatively executed instructions are visible on the trace as actions enclosed between $\ensuremath{\labelfont{start}}$ and $\ensuremath{\labelfont{rlb}}$ events.

All the speculative semantics we consider follow a similar structure, which we recap now.
Formally, the speculative state $\Sigmax$ is a stack of speculative instances $\Phix$ where reductions happen only on top of the stack.
Each instance $\Phix$ contains the program state $\Omega$ and the remaining speculation window $n$ describing the number of instructions that can still be executed speculatively (or $\bot$ when no speculation is happening). 
Depending on the semantics, the instance $\Phix$ may track additional data, e.g., the return-stack buffer in $\semr$. 
Below, we leave this additional information abstract and we indicate it by $\cdots$;  we refer to the companion report for full definitions of speculative states.
Throughout the paper, we fix the maximal speculation window, i.e., the maximum number of speculative instructions, to a global constant $\omega$.
{
\begin{align*}
 \mi{Speculative\ States}~ \Sigmax \bnfdef&\ \phiStackx 
 &\qquad
 \mi{Speculative\ Instance}~ \Phix \bnfdef&\ \tup{\Omega, n,...}
\end{align*}
}
Each semantics has an instruction that starts speculation (the central column of \Cref{tab:semantics}):  whenever those instructions are executed, the semantics first pushes the mispredicted state and then the correct state onto the state $\Sigmax$.

The (small-step) judgement for all speculative semantics is of the form $\Sigmax \specarrowx{\tau} \Sigmax'$ and it describes how the speculative state is updated when executing instructions.
For all speculative semantics, the behaviour $\behx{W}$ of a whole program $W$ is the trace $\tauStack$ generated by the big-step judgment $\bigspecarrowx{}$, which executes the program $W$ starting from its initial state until termination and collect all produced actions.

Below, we overview the small-step judgments  $\Sigmax \specarrowx{\tau} \Sigmax'$ for the speculative semantics we study.
We start from the two new speculative semantics $\semsls$ (\Cref{sec:sem-sls}) and $\semj$ (\Cref{sec:sem-j}) and later present the semantics $\semb$ (\Cref{sec:sem-b}), $\sems$ (\Cref{sec:sem-s}), and $\semr$ (\Cref{sec:sem-r}) from prior work~\cite{spectector,spec_comb}.
We first describe all rules for $\semsls$, which we use to explain the structure of all our speculative semantics.
In contrast, for $\semj$, $\semb$, $\sems$, and $\semr$, we only report the most significant rule for each semantics, i.e., the rule that triggers the specific form of speculation listed in \Cref{tab:semantics}.

\subsubsection{Modeling Straight-Line Speculation}\label{sec:sem-sls}
Straight-line speculation (SLS)~\cite{sls-whitepaper, sls-whitepaper2} is a speculation mechanism implemented in some CPUs where return instructions are speculatively bypassed and the execution continues speculatively (after ignoring the return) for a fixed number of steps.
The $\semsls$ semantics models the effect of straight-line speculation using the small-step rules below.
\begin{center}
     \mytoprule{\trgSLS{\PhiSLS} \specarrowSLS{\tau} \trgSLS{\phiStackSLS'}}

    \typerule{$\SLSvr$:AM-Ret-Spec}
    {
        \trgSLS{\instr{\pret}} 
        &  
        \trgSLS{\Omega} \nsarrow{\tau} \trgSLS{\Omega'} 
        & 
        \trgSLS{\Omega} = \trgSLS{\OB{F}; \OB{I}; \OB{B}; \sigma}
        &
        \ffun{\trgSLS{\Omega(\pc)}} = \trgSLS{f} 
        & 
        \trgSLS{f} \notin \trgSLS{\OB{I}} 
        \\
        \trgSLS{\Omega''} = \trgSLS{\OB{F}; \OB{I}; \OB{B}; \sigma''}
        &
        \trgSLS{\sigma''} = \trgSLS{\sigma[\pc \mapsto \Omega(\pc) + 1]} 
        & 
        \trgSLS{j} = \trgSLS{min(\omega, n)} 
    }
    {\trgSLS{{\Omega, n + 1}} \specarrowSLS{\tau} \trgSLS{\tup{\Omega', n} \cdot \tup{\Omega'', j}}
    }{sls-spec-paper}
    \typerule{$\SLSvr$:AM-Ret-Spec-att}
    {\trgSLS{\instr{\pret}} 
    &  
    \trgSLS{\Omega} \nsarrow{\tau} \trgSLS{\Omega'} 
    \\
    \ffun{\trgSLS{\Omega(\pc)}} = \trgSLS{f} 
    &
    \trgSLS{f} \in \trgSLS{\OB{I}} 
    }
    {\trgSLS{\tup{\Omega, n + 1}} \specarrowSLS{\tau} \trgSLS{\tup{\Omega', n}}
    }{sls-spec-att-paper}
    \typerule {$\SLSvr$:AM-NoSpec-action}
    {
        \trgSLS{\instrneq{\pret, \pbarrier, Z}} 
        \\
        \trgSLS{\Omega} \nsarrow{\tau} \trgSLS{\Omega'}
    }
    {\trgSLS{\tup{\Omega, n + 1}} \specarrowSLS{\tau} \trgSLS{\tup{\Omega', n}}
     }{sls-nospec-act-paper}
    \typerule{$\SLSvr$:AM-Rollback}
    { \trgSLS{n = 0}\ \text{or}\ \finType{\trgSLS{\Omega}} }
    {
    \trgSLS{\tup{\Omega, n}} \specarrowSLS{\rollbackObsSLS} \trgSLS{\varepsilon}
    }{sls-rollback-paper}
     \typerule {$\SLSvr$:AM-NoSpec-epsilon} 
    {
        \trgSLS{\instrneq{\pret, \pbarrier, Z}} 
        \\
        \trgSLS{\Omega} \nsarrow{\epsilon} \trgSLS{\Omega'}
    }
    {\trgSLS{\tup{\Omega, n + 1}} \specarrowSLS{\epsilon} \trgSLS{\tup{\Omega', n}}
     }{sls-nospec-eps-paper}
     \typerule {$\SLSvr$:AM-barr}
    {
        \trgSLS{\instr{\pbarrier}} 
        \\
        \trgSLS{\Omega'} = \trgSLS{\Omega[\pc \mapsto \pc + 1]}
    }
    {\trgSLS{\tup{\Omega, \bot}} \specarrowSLS{\epsilon} \trgSLS{\tup{\Omega', \bot}}
     }{sls-barr-paper}
     \typerule {$\SLSvr$:AM-barr-spec}
    {
        \trgSLS{\instr{\pbarrier}} 
        \\
        \trgSLS{\Omega'} = \trgSLS{\Omega[\pc \mapsto \pc + 1]}
    }
    {\trgSLS{\tup{\Omega, n + 1}} \specarrowSLS{\epsilon} \trgSLS{\tup{\Omega', 0}}
     }{sls-barr-spec-paper}
\end{center}

Speculation is started by  $\retC$ instructions.
Whenever the semantics executes $\retC$, the return is speculatively bypassed  and execution speculatively continues after the return (captured by \Cref{tr:sls-spec-paper}).
The rule pushes on the stack of speculative states a new speculative instance $\trgSLS{\tup{\Omega'', j}}$, from which  execution will continue.
Note that speculation only starts when we are inside the component. 
For this, the rule checks that the function $f$ being executed is not in the imports, i.e., it is not attacker-defined: this ensures that labels are only produced when non-attacker code is executed.
Otherwise, execution continues normally (\Cref{tr:sls-spec-att-paper}).

Executing instructions that do not trigger speculation updates the program state according to the non-speculative semantics, reduces the speculation window by 1, and produces actions when needed (\Cref{tr:sls-nospec-act-paper} and \Cref{tr:sls-nospec-eps-paper}).
These rules are only triggered when the instruction is not a return, a fence or it is not in a set of instructions \trgSLS{Z}, which is used to track the speculation-related instructions across combinations of the semantics~\cite{spec_comb}.
When the speculation window in the top instance hits 0, the speculative state is rolled back and discarded (\Cref{tr:sls-rollback-paper}) and execution continues from the speculative instance now on top.

Finally, speculation barriers terminate speculation, which we model by setting the current speculation window to $0$ (\Cref{tr:sls-barr-spec-paper}).
In non-speculative executions (i.e., when the speculation window is $\bot$), speculation barriers are handled as $\pskip$ (\Cref{tr:sls-barr-paper}).

\subsubsection{Modeling Jump Speculation}\label{sec:sem-j}
Jump speculation allows jump instructions to speculate the address where they are jumping to~\cite{spectre}.
To model this, \Cref{tr:v2-spec-paper} creates a set of speculative instances, with one speculative instance for each of the possible jump targets in the program, by updating the program counter $\pc$ to that jump target.
\begin{center}
    \mytoprule{\trgJ{\PhiJ} \specarrowJ{\tau} \trgJ{\phiStackJ'}}
    
     \typerule{$\Jvr$:AM-Jmp-Spec}
    {
    \trgJ{\instr{\pjmp{x}} } 
    & 
    \trgJ{x} \in \trgJ{\Reg} 
    & 
    \trgJ{\Omega} \nsarrow{\tau} \trgJ{\Omega'}  
    &  
    \trgJ{\Omega} = \trgJ{\OB{F}; \OB{I}; \OB{B}; \sigma}
    \\
    \ffun{\trgJ{\Omega'(\pc)}} = \trgJ{f'} 
    & 
    \ffun{\trgJ{\sigma(\pc)}} = \trgJ{f} 
    & 
    \trgJ{f} \notin \trgJ{\OB{I}} 
    &  
    \trgJ{{\OB{I}}}\vdash\trgJ{f,f'}:\src{internal} 
    \\
    \trgJ{j} = \trgJ{min(\omega, n)} 
    & 
    \trgJ{\OB{\SigmaJ''}} = \bigcup_{l \in p} \trgJ{\tup{\Omega'', j}} \text{ where } \trgJ{\Omega''} = \trgJ{\OB{F}; \OB{I}; \OB{B}; \sigma[\pc \mapsto l]}
    }{
        \trgJ{\tup{\Omega, n + 1}} \specarrowJ{\tau} \trgJ{\tup{\Omega', n} \cdot \OB{\SigmaJ''}}
    }{v2-spec-paper}
\end{center}

\subsubsection{Modeling branch speculation}\label{sec:sem-b}
CPUs speculate over the outcome of branch instructions~\cite{spectre}, which might result in speculatively executing the wrong branch.
To model this, we rely on the $\semb$ semantics from~\cite{spectector} where \Cref{tr:v1-spec-paper} speculates on branching instructions by pushing on top of the stack of speculative states the state that is opposite of the evaluated condition.

\begin{center}
  \mytoprule{\trgB{\PhiB} \specarrowB{\tau} \trgB{\phiStackB'}}

  \typerule{$\Bvr$:AM-Branch-Spec}
    {
    \trgB{\instr{\pjz{x}{l}}} 
    &  
    \trgB{\Omega} \nsarrow{\tau} \trgB{\Omega'} 
    & 
    \trgB{\Omega} = \trgB{\OB{F}; \OB{I}; \OB{B}; \sigma}
    &
    \ffun{\trgB{\sigma(\pc)}} = \trgB{f} 
    & 
    \trgB{f} \notin \trgB{\OB{I}} 
    \\
    \trgB{\Omega''} = \trgB{\OB{F}; \OB{I}; \OB{B}; \sigma''}
    &
     \trgB{\sigma''} = \trgB{\sigma[\pc \mapsto l']} & \trgB{j = min(\omega, n)} 
    \\
     \text{if } \trgB{\sigma'(\pc)} = \trgB{l} \text{ then } \trgB{l'} = \trgB{\sigma(\pc) + 1} 
     \text{ otherwise } 
     \trgB{l'} = \trgB{l} 
    }
    {
        \trgB{\tup{\Omega, n + 1}} \specarrowB{\tau} \trgB{\tup{\Omega', n} \cdot \tup{\Omega'', j}}
    }{v1-spec-paper}
\end{center}

\subsubsection{Modeling store-bypass speculation}\label{sec:sem-s}
Modern CPUs write stores to main memory asynchronously to reduce delays caused by the memory subsystem.
This may result in speculatively  bypassing store instructions and fetching stale information from the CPU's load-store queue~\cite{S_specv4}.
To model this, we rely on the $\sems$ semantics from~\cite{spec_comb} where 
\Cref{tr:v4-skip-paper} speculates on stores by adding a state at the top of the stack where a store has been skipped.

\begin{center}
 \mytoprule{\trgS{\PhiS} \specarrowS{\tau} \trgS{\phiStackS'}}

    \typerule{$\Svr$:AM-Store-Spec}
    {
    \trgS{\instr{\pstore{x}{e}}} 
    &  
    \trgS{\Omega} \nsarrow{\tau} \trgS{\Omega'} 
    & 
    \trgS{\Omega} = \trgS{\OB{F}; \OB{I}; \OB{B}; \sigma}
    &
    \ffun{\trgS{\sigma(\pc)}} = \trgS{f} 
    & 
    \trgS{f} \notin \trgS{\OB{I}} 
    \\ 
    \trgS{\Omega''} = \trgS{\OB{F}; \OB{I}; \OB{B}; \sigma''}
    &
     \trgS{\sigma''} = \trgS{\sigma[\pc \mapsto \Omega(\pc) + 1]} 
     & 
     \trgS{j} = \trgS{min(\omega, n)} 
    }
    {\trgS{\tup{\Omega, n + 1}} \specarrowS{\tau} \trgS{\tup{\Omega', n} \cdot \tup{\Omega'', j}}
    }{v4-skip-paper}
\end{center}

\subsubsection{Modeling return speculation}\label{sec:sem-r}
CPUs also speculate on the outcome of return instructions~\cite{spectreRsb}. 
For this, they rely on a microarchitectural data structure called the return-stack-buffer (RSB).
The speculative semantics $\semr$, taken from~\cite{spec_comb}, models this kind of speculation by extending the speculative instances with a return stack buffer \trgR{\Rsb}, which is a list of return locations.
\Cref{tr:v5-spec-paper}, which is the one starting speculation, works by speculatively returning to the `wrong' location \trgR{l} at the top of the RSB whenever \trgR{l} differs from the  expected  return address \trgR{B(0)} upon encountering a return instruction.
Note also that the semantics pushes a return address to the RSB whenever a \textbf{call} instruction is executed (not shown in the rules below).

\begin{center}

   \mytoprule{\trgR{\PhiR} \specarrowR{\tau} \trgR{\phiStackR'}}
   
    \typerule{$\Rvr$:AM-Ret-Spec}
    {
    \trgR{\instr{\pret}} 
    & 
    \trgR{\Omega} \nsarrow{\tau} \trgR{\Omega'}  
    &  
    \trgR{\Omega} = \trgR{\OB{F}; \OB{I}; \OB{B}; \sigma}
    &
    \trgR{\Rsb} = \trgR{\Rsb' \cdot l}  
    &  
    \trgR{l} \neq \trgR{B(0)}  
    &  
    \trgR{\Omega''} = \trgR{\OB{F}; \OB{I}; \OB{B'}; \sigma''}
    \\
    \ffun{\trgR{\sigma(\pc)}} = \trgR{f} 
    &
    \ffun{\trgR{\sigma'(\pc)}} = \trgR{f'} 
    & 
    \trgR{f, f'} \notin \trgR{\OB{I}} 
    &
     \trgR{\sigma''} = \trgR{\sigma[\pc \mapsto l]} 
     & \trgR{j} = \trgR{min(\omega, n)}
    }
    {\trgR{\tup{\Omega, \Rsb, n + 1}} \specarrowR{\tau} \trgR{\tup{\Omega', \Rsb', n} \cdot \tup{\Omega'', \Rsb', j}}
    }{v5-spec-paper}
\end{center}

\subsection{Combining Speculative Semantics}\label{ssec:bg-combinations}
To reason about leaks resulting from multiple speculation sources, we rely on the combination framework from~\citet{spec_comb}. 
This framework allows combining multiple speculative semantics (for different speculation sources) into a combined semantics that allows reasoning about both kinds of speculation.
For instance, combining $\contract{}{\Bv}$ and $\contract{}{\Sv}$ yields the composed semantics $\contract{}{\Bv + \Sv}$ that speculates on both $\jzC$ and $\storeC$ instructions. 
We remark that a composed semantics is ``stronger than its parts'', that is, it may explore speculative actions that only arise from the interaction of its component semantics.
As shown in~\cite{spec_comb}, there are to programs that contain leaks only under the composed semantics, even though the programs are leak-free when considering the base semantics in isolation.\looseness=-1

The core component of this combination framework is the notion of \emph{well-formed composition}, which needs to be tailored to the specific properties that combinations need to preserve.
Here we report two properties that well-formed combinations need to satisfy; we will introduce a third, novel, well-formedness condition in \Cref{sec:security-notions} to deal with preservation of the security properties.

\begin{definition}[Well-Formed Composition -- Part 1]

The well-formed composition of two semantics $\contract{}{x}$ and $\contract{}{y}$, denoted with $\wfc{\contract{}{x+y}}$, is defined according to these points (as well as the point later defined in \Cref{def:wfc2}.
\begin{asparaitem}

    \item \textbf{Confluence~\cite{spec_comb}:} If $\Sigmaxy \specarrowxy{\tau'} \Sigmaxy'$ and $\Sigmaxy \specarrowxy{\tau} \Sigmaxy''$, then  $\Sigmaxy' = \Sigmaxy''$ and $\tau = \tau'$.
    \item \textbf{Projection Preservation~\cite{spec_comb}:}
    $\behx{P} = \specProjectComp{\behxy{(P)}}{y}$ and $\behy{P} = \specProjectComp{\behxy{(P)}}{x}$.
\end{asparaitem}
\end{definition}

Confluence ensures the determinism of the combined semantics (where $\Sigmaxy$ is the operational state for the combined semantics), whereas Projection Preservation ensures that the speculative behaviour of the source semantics can be recovered from the behavior of the composed semantics.

\subsection{Leakage Ordering}\label{ssec:leakage-ordering}
Each speculative semantics in \Cref{tab:semantics}, as well as their compositions, capture different ``attacker models'', where the attacker can observe the effects (visible on the traces) of the speculative instructions modelled by the semantics.
To reason about the strength of these attacker models, we follow~\cite{contracts} and introduce a partial order in terms of leakage between the different semantics.
In particular, we say that semantics $\contract{}{1}$ is weaker than another semantics $\contract{}{2}$, written $\contract{}{1} \sqsubseteq \contract{}{2}$ iff $\contract{}{2}$ leaks more than $\contract{}{1}$, i.e., if any two initial configurations that result in different traces for $\contract{}{1}$ also result in different traces for $\contract{}{2}$.\footnote{For readers familiar with \cite{contracts}, we flipped the relation $\sqsubseteq$ from the original paper.}

\newcommand{\lineCol}{gray}
\begin{wrapfigure}{r}{0.7\textwidth}
    \centering
    \begin{tikzpicture}[line width=0.1 mm, arrows={[black]}]
          \node (BJSR) at (-1, 4) {$\contractSpec{}{\Bv + \Jv + \Sv + \Rv}$};
          \node (BJSSLS) at (1, 4) {$\contractSpec{}{\Bv + \Jv + \Sv + \SLSv}$};
          \node (BJS) at (-3.6, 2) {$\contractSpec{}{\Bv + \Jv + \Sv}$};
          \node (BJR) at (-2.4, 2) {$\contractSpec{}{\Bv + \Jv + \Rv}$};
          \node (BSR) at (-1.2, 2) {$\contractSpec{}{\Bv + \Sv + \Rv}$};
          \node (JSR) at (0, 2) {$\contractSpec{}{\Jv + \Sv + \Rv}$};
          \node (BJSLS) at (1.4, 2) {$\contractSpec{}{\Bv + \Jv + \SLSv}$};
          \node (BSSLS) at (2.8, 2) {$\contractSpec{}{\Bv + \Sv + \SLSv}$};
          \node (JSSLS) at (4, 2) {$\contractSpec{}{\Jv + \Sv + \SLSv}$};
          \node (BJ) at (-4,0) {$\contractSpec{}{\Bv + \Jv}$};
          \node (BS) at (-3,0) {$\contractSpec{}{\Bv + \Sv}$};
          \node (BR) at (-2,0) {$\contractSpec{}{\Bv + \Rv}$};
          \node (JS) at (-1,0) {$\contractSpec{}{\Jv + \Sv}$};
          \node (JR) at (0,0) {$\contractSpec{}{\Jv + \Rv}$};
          \node (SR) at (1,0) {$\contractSpec{}{\Sv + \Rv}$};
          \node (BSLS) at (2,0) {$\contractSpec{}{\Bv + \SLSv}$};
          \node (JSLS) at (3,0) {$\contractSpec{}{\Jv + \SLSv}$};
          \node (SSLS) at (4,0) {$\contractSpec{}{\Sv + \SLSv}$};
          \node (NS) at (0, -3) {$\contract{}{NS}$};
          \node (B) at (-2,-2) {$\contractSpec{}{\Bv}$};
          \node (J) at (-1,-2) {$\contractSpec{}{\Jv}$};
          \node (S) at (0,-2) {$\contractSpec{}{\Sv}$};
          \node (R) at (1,-2) {$\contractSpec{}{\Rv}$};
          \node  (SLS) at (2,-2) {$\contractSpec{}{\SLSv}$};
          \draw[\lineCol]  (B.north) -- (BJ.south);
          \draw[\lineCol]  (J.north) -- (JS.south);
          \draw[\lineCol]  (J.north) -- (BJ.south);
          \draw[\lineCol]  (S.north) -- (JS.south);
          \draw[\lineCol]  (S.north) -- (SR.south) -- (R.north);
          \draw[\lineCol]  (B.north) -- (BS.south) -- (S.north) -- (SSLS.south) -- (SLS.north) -- (JSLS.south) -- (J.north);
          \draw[\lineCol]  (B.north) -- (BR.south) -- (R.north) -- (JR.south) -- (J.north);
          \draw[\lineCol]  (B.north) -- (BSLS.south) -- (SLS.north);
          \draw[\lineCol] (BJ.north) -- (BJS.south);
          \draw[\lineCol] (BJ.north) -- (BJR.south);
          \draw[\lineCol]  (BJ.north) -- (BJSLS.south);
          \draw[\lineCol]  (BS.north) -- (BSR.south);
          \draw[\lineCol]  (BS.north) -- (BSSLS.south);
          \draw[\lineCol]  (BS.north) -- (BJS.south);
          \draw[\lineCol]  (BR.north) -- (BSR.south);
          \draw[\lineCol]  (BR.north) -- (BJR.south);
          \draw[\lineCol]  (BSLS.north) -- (BSSLS.south);
          \draw[\lineCol]  (BSLS.north) -- (BJSLS.south);
          \draw[\lineCol]  (JS.north) -- (BJS.south);\textbf{}
          \draw[\lineCol]  (JS.north) -- (JSR.south);
          \draw[\lineCol]  (JS.north) -- (JSSLS.south);
          \draw[\lineCol]  (JR.north) -- (BJR.south);
          \draw[\lineCol]  (JR.north) -- (JSR.south);
          \draw[\lineCol]  (JSLS.north) -- (BJSLS.south);
          \draw[\lineCol]  (JSLS.north) -- (JSSLS.south);
          \draw[\lineCol]  (SR.north) -- (BSR.south);
          \draw[\lineCol]  (SR.north) -- (JSR.south);
          \draw[\lineCol]  (SSLS.north) -- (BSSLS.south);
          \draw[\lineCol]  (SSLS.north) -- (JSSLS.south);
          \draw[\lineCol]  (BSR.north) -- (BJSR.south);
          \draw[\lineCol]  (BSSLS.north) -- (BJSSLS.south);
          \draw[\lineCol]  (BJS.north) -- (BJSR.south);
          \draw[\lineCol]  (BJR.north) -- (BJSR.south);
          \draw[\lineCol]  (BJSLS.north) -- (BJSSLS.south);
          \draw[\lineCol]  (JSR.north) -- (BJSR.south);
          \draw[\lineCol]  (JSSLS.north) -- (BJSSLS.south);
          \draw[\lineCol]  (NS.north) -- (B.south);
          \draw[\lineCol]  (NS.north) -- (J.south);
          \draw[\lineCol]  (NS.north) -- (S.south);
          \draw[\lineCol]  (NS.north) -- (R.south);
          \draw[\lineCol]  (NS.north) -- (SLS.south);
    \end{tikzpicture}
    \vspace{-10pt}
    \caption{Ordering of \muasm{} semantics. A semantics higher in the order is stronger, i.e., exposes more information.}
    \label{fig:instantiate}
\end{wrapfigure}

\Cref{fig:instantiate} depicts all the semantics studied in this paper as well as their combinations ordered according to the amount of leaked information, where  there is an edge from semantics $\contract{}{1}$ to $\contract{}{2}$ whenever $\contract{}{1} \sqsubseteq \contract{}{2}$.
In the figure,  the weakest semantics is $\contract{}{NS}$, since it only exposes information about non-speculative instructions. 
In contrast, $\contract{}{\Bv + \Sv}$ exposes speculation on both $\jzC$ and $\storeC$ instructions and is thus stronger than its components $\contract{}{\Bv}$ and $\contract{}{\Sv}$.
Note that there no single strongest semantics in \Cref{fig:instantiate} due to limitations in the combination framework we use~\cite{spec_comb}, which does not allow for combining  semantics that speculate on the same instruction like  $\contract{}{\Rv}$ and $\contract{}{\SLSv}$.

 \section{Security notions}\label{sec:security-notions}
In this section, we present the security notions used in our framework.
Since our goal is studying the security of compiler countermeasures against different classes of speculative leaks, we extend the secure compilation framework from \citet{S_sec_comp} to work with \muasm{} and with all speculative semantics from \Cref{sec:bg-formal} and their combinations. 

All our security notions definitions rely on a notion of robustness, typical for secure compilation work~\cite{journey_sec_comp}, which we explain first (\Cref{sec:robust}).
Next, we introduce two security notions for programs (\Cref{ssec:security-programs}):  Robust Speculative Non-Interference (\rsni) and Robust Speculative Safety (\rss).
We continue by presenting the secure compilation criteria (\Cref{ssec:sec_compiler_recap}).
We conclude by introducing Relation Preservation, the novel, remaining property for well-formed compositions (\Cref{ssec:security-combinations}), which together with Confluence and Projection Preservation from \Cref{ssec:bg-combinations} precisely characterize the core properties of composed semantics.

\subsection{Robustness}\label{sec:robust}

All our security definitions are \emph{robust}~\cite{journey_sec_comp,autysec,rsmove,tydisa,sec-typ-prot,dg-rs,davidcaps}, i.e., they quantify over \emph{every} possible valid attacker.
Concretely, \muasm{} defines partial programs $P$ that are linked to the attacker-controlled context $A$. 
Thus, the attacker is also code executed together with the partial program $P$.
In this work, we say that a component \emph{satisfies a property robustly} iff it satisfies the property for all possible valid attackers, where an attacker is valid, written $\vdash \ctxc{} : \com{atk}$, if it does not define a private memory and does not contain instructions that read and write to the private memory.

\subsection{Security Notions for Whole Programs}\label{ssec:security-programs}
In this section we extend the notions of \rsni (\Cref{ssec:rsni}) and \rss (\Cref{ssec:rss}) to \muasm{} and all semantics from \Cref{sec:bg-formal}.

\subsubsection{Robust Speculative Non-Interference (RSNI)}\label{ssec:rsni}
RSNI is the application of Speculative Non-Interference to the robust setting.
Speculative Non-Interference (SNI) is a class of security properties \cite{spectector, contracts} that compares the information leaked by instructions executed speculatively and non-speculatively.
Intuitively, a program satisfies SNI iff it does not leak more information under the speculative semantics than under the non-speculative semantics.
Thus, SNI semantically characterizes security against leaks introduced by speculatively executed instructions.

RSNI is parametric in (1) a policy denoting the sensitive information and (2) in the speculative semantics $\contract{}{x}$, which models the speculative behaviour of programs.
The policy describes which parts of the program state are public. 
In our case, only the private part of the memory $M$ is sensitive. 
Thus, two programs $P$ and $P'$ are low-equivalent, written $P \loweq P'$, if they only differ in their private memory.
RSNI compares the leakage between non-speculative traces and speculative traces. 
In a nutshell, a program $P$ satisfies RSNI (\Cref{def:rsni-paper}) for a speculative semantics $\contract{}{x}$ if {for any low-equivalent program $P'$} 
that {generates the same non-speculative trace}, the two programs {generate the same speculative traces as well} robustly.

\begin{definition}[Robust Speculative Non-Interference~\cite{S_sec_comp} (RSNI)]\label{def:rsni-paper}
	\begin{align*}
		\contract{}{x} \vdash \com{P} : \rsni \isdef \
            \forall \ctxc{}, W'. &\text{ if }  \vdash \ctxc{} : \com{atk} 
                \text{ and }  { \com{\ctxc{}\hole{P}} \loweq W' }
                \text{ and }  
                {\nspecProject{ \behx{ \SInit{ \com{\ctxc{}\hole{P}} } } } = \nspecProject{ \behx{ \SInit{W'} }  } } \\
            & \text{ then } {\behx{ \SInit{\com{\ctxc{}\hole{P}} } }  = \behx{ \SInit{W'} } }
	\end{align*}
\end{definition}

We remark that all programs satisfy \rsni for the non-speculative semantics $\contract{}{NS}$ because there is no speculation and, thus, $\nspecProject{\trac{}} = \trac{}$ for all its traces \cite[Theorem 3.4]{S_sec_comp}.

\subsubsection{Robust Speculative Safety}\label{ssec:rss}
SNI is a hyperproperty and requires reasoning about pairs of traces.
To simplify secure compilation proofs, we follow~\cite{S_sec_comp} and over-approximate RSNI using robust speculative safety (RSS), a safety property which only requires reasoning about single traces.

Just like RSNI, RSS is the application of Speculative Safety to the robust setting.
Speculative Safety uses taint tracking, tainting values as "safe" (denoted by $\safeta$) if the value can be speculatively leaked without violating RSNI (e.g., the public memory is safe), or "unsafe" (denoted by $\unta$) otherwise. 
Furthermore, taints are propagated during computation.
We instantiate taint tracking for all the speculative semantics in \Cref{ssec:spec-semantics}; we refer the interested reader to \cite[Section 3.2]{S_sec_comp} for the taint tracking rules since these rules are virtually unchanged.

RSS ensures that programs $P$ robustly generate only safe ($\safeta$) actions in their traces.

\begin{definition}[Robust Speculative Safety \cite{S_sec_comp} (\rss)]\label{def:rdss-contract-paper}
\begin{align*}
	\contract{}{x} \vdash\com{P} : \rss \isdef\
		\forall \ctxc{}, \acac{}, \lambda^\sigma \ldotp &\text{ if } \vdash\ctxc{}:\com{atk} \text{ and } \acac{} \in \behx{ \SInit{ \com{\ctxc{}\hole{P}} } } \text{ and } \lambda^\sigma \in \tau  \text{, then } \com{\sigma}\equiv\com{\safeta}
\end{align*}	
\end{definition}

Again, \rss trivial holds for the non-speculative semantics $\contract{}{NS}$ \cite[Theorem 3.9]{S_sec_comp} because there is no speculation.

\Cref{thm:rss-overapprox-paper}, which we proved for all speculative semantics defined in \Cref{tab:semantics}, precisely connects \rsni and \rss by showing that \rss over-approximates \rsni.
\begin{theorem}[\rss Overapproximates \rsni]\label{thm:rss-overapprox-paper}
For all semantics $\contractSpec{}{x}$ in \Cref{tab:semantics}, if $\contractSpec{}{x} \vdash P : \rss $
then $\contractSpec{}{x} \vdash P : \rsni $.
\end{theorem}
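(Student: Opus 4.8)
The plan is to prove the contrapositive: assuming $P$ violates \rsni under $\contract{}{x}$, I will exhibit a valid attacker and a trace containing a $\unta$-tainted action, thereby violating \rss. Unfolding \Cref{def:rsni-paper}, the failure of \rsni yields a valid attacker $\ctxc{}$ and a whole program $W'$ such that $\ctxc{}\hole{P} \loweq W'$, the two share the same non-speculative projection $\nspecProject{\behx{\SInit{\ctxc{}\hole{P}}}} = \nspecProject{\behx{\SInit{W'}}}$, yet their speculative behaviours differ, $\behx{\SInit{\ctxc{}\hole{P}}} \neq \behx{\SInit{W'}}$. Since the two whole programs differ only in their private memory and the valid attacker $\ctxc{}$ can neither read nor write that memory, any mismatch must arise from an action emitted while executing $P$ itself.

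The crux is a taint-soundness (two-run confinement) lemma connecting the value semantics used by \rsni to the taint-tracking semantics used by \rss. Following \citet{S_sec_comp}, I would run $\ctxc{}\hole{P}$ and $W'$ in lock-step --- which is well defined because the always-mispredict semantics is deterministic (Confluence) --- while simultaneously executing the taint-instrumented semantics of $\contract{}{x}$ with the private memory tainted $\unta$ and the public memory tainted $\safeta$. The lemma, proved by induction on this joint execution, maintains the invariant that (i) the value projection of the taint run coincides with the plain run, and (ii) every register value and memory cell tagged $\safeta$ holds the same value in both low-equivalent runs. Because the two runs agree on their non-speculative projections, this invariant is preserved across non-speculative steps, so divergences can appear only inside a speculative window delimited by $\startl{x}$ and $\rollbl{x}$.

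From here the argument closes quickly. At the first index where the two speculative traces disagree, the emitted action (a $\loadl{n}$ or $\storel{n}$ address, a $\pcl{l}$ target, etc.) exposes a value that differs between the two runs. By part (ii) of the invariant such a value cannot be tagged $\safeta$, so the taint-tracking semantics labels the corresponding action $\unta$. This action occurs in a trace of $\ctxc{}\hole{P}$, so $\contract{}{x} \vdash P : \rss$ fails for the valid attacker $\ctxc{}$, contradicting the hypothesis; taking $\ctxc{}$ as the witness discharges the robust quantification.

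Finally, I would discharge the theorem uniformly across \Cref{tab:semantics}. The generic contrapositive argument depends only on the taint-soundness invariant, which \citet{S_sec_comp} already established for $\semb$, $\sems$, and $\semr$; the only genuinely new work is verifying the inductive cases of the invariant for the instructions driving the two novel semantics --- the indirect $\jmpC$ of $\semj$ (where a single step spawns one speculative instance per jump target, so the exposed $\pc$ must be shown $\unta$ whenever it depends on private data) and the bypassed $\retC$ of $\semsls$. I expect this taint-soundness lemma, and in particular keeping the two speculative runs aligned when their mispredicted control flow depends on secrets, to be the main obstacle; the remaining cases are routine bookkeeping over the action kinds.
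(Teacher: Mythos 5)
Your proposal is essentially the paper's own argument, just phrased as a contrapositive rather than a contradiction. The paper proves each instance (e.g.\ \Cref{thm:v4:ss-impl-sni} for $\sems$ and its analogues for $\semsls$, $\semj$, $\semr$) by assuming \rss{} and $\neg$\rsni{}, relating the initial states of the two low-equivalent whole programs by exactly your invariant (ii) --- the safe-equivalence relation $\relsa$, which demands equal taints everywhere and equal values on $\safeta$-tagged cells --- and then invoking a step lemma (e.g.\ \Cref{lemma:v4-specarrow-preserves-safe}) stating that $\relsa$-related states performing a step with a \emph{safe} observation emit the \emph{same} observation and remain related; determinism then forces the allegedly diverging action to coincide, a contradiction. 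The one place your framing needs slightly more care is how the invariant is carried up to the first divergence: you justify it by agreement of the non-speculative projections, but the paper's preservation lemmas take $\safe{\tau}$ as a hypothesis (agreement of observations is not what drives the induction --- e.g.\ the branch case of \Cref{lemma:non-spec-preserves-safe} needs the guard to be $\safeta$ to conclude both runs take the same branch). In the contrapositive this is repaired by a case split you leave implicit: either some action in the prefix is already tainted $\unta$, in which case \rss{} fails immediately, or the whole prefix is safe and the paper's lemmas apply verbatim, forcing the first diverging action to be $\unta$. With that small patch, and the per-semantics inductive cases for $\semj$ and $\semsls$ that you correctly identify as the only new work, your proof matches the paper's.
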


\subsection{Secure Compilation Criteria}\label{ssec:sec_compiler_recap}
We now present \textit{robust speculative safety preservation} (\rssp) and \textit{robust speculative non-interference preservation} (\rsnip), two criteria defined in  \citet{S_sec_comp} for reasoning about compiler guarantees against speculative leaks, which we make parametric in the underlying speculative semantics.

A compiler preserves $\rss$ for a given semantics $\contract{}{x}$ if given a source component that is $\rss$ under the non-speculative semantics, the compiled counterpart is also $\rss$ under  $\contract{}{x}$.
\begin{definition}[Robust speculative safety preservation \cite{S_sec_comp} (\rssp)]\label{def:compiler-contract-security-paper}
    \[
    \contract{}{x} \vdash  \comp{\cdot} : \rssp \isdef \forall \src{P} \in \src{L}. \text{ if }
    \contract{}{NS} \vdash \src{P} : \rss \text{ then } \contract{}{x} \vdash \comp{\src{P}} : \rss
    \]
\end{definition}

Similarly, a compiler preserves $\rsni$ for a given semantics  $\contract{}{x}$ if given a source component that is $\rsni$ under the non-speculative semantics, the compiled counterpart is also $\rsni$ under  $\contract{}{x}$.
\begin{definition}[Robust speculative non-interference preservation \cite{S_sec_comp} (\rsnip)]\label{def:compiler-contract-security-sni-paper}
    \[
    \contract{}{x} \vdash  \comp{\cdot} : \rsnip \isdef \forall \src{P} \in \src{L}. \text{ if }
    \contract{}{NS} \vdash \src{P} : \rsni \text{ then } \contract{}{x} \vdash \comp{\src{P}} : \rsni
    \]
\end{definition}

We conclude  by stating \Cref{cor:weak-contract-paper}, which present two new results.
It states that (1) whenever a compiler preserves the security for a stronger semantics (i.e., one that exposes more information), then it also preserves security for weaker ones, and dually that (2) whenever a compiler does not preserve security for a weaker semantics, then it does not preserve security for stronger ones.

\begin{theorem}[Leakage ordering and \rsnip, \coqed]\label{cor:weak-contract-paper}
The following statements hold for any $\contract{}{1}, \contract{}{2}$:\looseness=-1
\begin{itemize}
    \item If $\contract{}{2} \vdash  \comp{\cdot} : \rsnip$ and $\contract{}{1} \sqsubseteq \contract{}{2}$
    then $\contract{}{1} \vdash \comp{\cdot} : \rsnip$.
    \item  If $\contract{}{1} \nvdash  \comp{\cdot} : \rsnip$ and $\contract{}{1} \sqsubseteq \contract{}{2}$
    then $\contract{}{2} \nvdash \comp{\cdot} : \rsnip$.
\end{itemize}
\end{theorem}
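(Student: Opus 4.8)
The plan is to reduce both bullets to a single \emph{downward-closure} lemma for \rsni, namely: if $\contract{}{1} \sqsubseteq \contract{}{2}$, then $\contract{}{2} \vdash Q : \rsni$ implies $\contract{}{1} \vdash Q : \rsni$ for every component $Q$. Granting this lemma, the first bullet follows by simply unfolding \rsnip (\Cref{def:compiler-contract-security-sni-paper}): pick any source $\src{P}$ with $\contract{}{NS} \vdash \src{P} : \rsni$; the hypothesis $\contract{}{2} \vdash \comp{\src{P}} : \rsnip$ gives $\contract{}{2} \vdash \comp{\src{P}} : \rsni$, and the lemma (instantiated at $Q = \comp{\src{P}}$) then yields $\contract{}{1} \vdash \comp{\src{P}} : \rsni$, which is exactly $\contract{}{1} \vdash \comp{\cdot} : \rsnip$. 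The second bullet is the contrapositive of the first, so no separate argument is required.

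To prove the lemma I would unfold \rsni (\Cref{def:rsni-paper}) for $\contract{}{1}$: fix a valid attacker $\ctxc{}$ with $\vdash \ctxc{} : \com{atk}$ and a whole program $W'$ with $\ctxc{}\hole{Q} \loweq W'$ satisfying the premise $\nspecProject{\beh_{1}(\SInit{\ctxc{}\hole{Q}})} = \nspecProject{\beh_{1}(\SInit{W'})}$, and aim for $\beh_{1}(\SInit{\ctxc{}\hole{Q}}) = \beh_{1}(\SInit{W'})$. The first fact I would establish is that the non-speculative projection of a speculative behaviour is \emph{independent of which speculative semantics produced it}: for every semantics $\contract{}{x}$ and whole program $W$, $\nspecProject{\beh_{x}(\SInit{W})} = \behNs{\SInit{W}}$, since erasing every $\startl{x}/\rollbl{x}$ transaction from an always-mispredict trace recovers exactly the sequential trace (this generalises the observation, used just after \Cref{def:rsni-paper}, that $\nspecProject{\tau} = \tau$ under $\contract{}{NS}$). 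With this fact, the \rsni premises for $\contract{}{1}$ and for $\contract{}{2}$ coincide, both reducing to $\behNs{\SInit{\ctxc{}\hole{Q}}} = \behNs{\SInit{W'}}$. Hence the premise I have also discharges the hypothesis of $\contract{}{2} \vdash Q : \rsni$ for the \emph{same} $\ctxc{}, W'$, giving $\beh_{2}(\SInit{\ctxc{}\hole{Q}}) = \beh_{2}(\SInit{W'})$.

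It then remains to transport this equality from $\contract{}{2}$ down to $\contract{}{1}$, which is precisely the job of the leakage ordering. Reading $\SInit{\ctxc{}\hole{Q}}$ and $\SInit{W'}$ as the two initial configurations quantified by $\sqsubseteq$ (they agree on the public part, being low-equivalent), the contrapositive of $\contract{}{1} \sqsubseteq \contract{}{2}$ states that configurations producing the same $\contract{}{2}$-trace also produce the same $\contract{}{1}$-trace. Applied to the equality just obtained, this gives $\beh_{1}(\SInit{\ctxc{}\hole{Q}}) = \beh_{1}(\SInit{W'})$, the desired conclusion. Since $\ctxc{}$ and $W'$ were arbitrary, the lemma holds, and with it the theorem.

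The routine parts are the unfolding and the contraposition; I expect the delicate point to be bridging the two formalisms, since \rsni is phrased over components robustly against all valid attackers with behaviours that may be sets of traces, whereas $\sqsubseteq$ is phrased over pairs of initial configurations and their traces. Concretely, the main work will be (i) establishing the semantics-independence $\nspecProject{\beh_{x}(\SInit{W})} = \behNs{\SInit{W}}$ rigorously for each semantics in \Cref{tab:semantics}, and (ii) verifying that the low-equivalent pair $(\SInit{\ctxc{}\hole{Q}}, \SInit{W'})$ instantiates the ``two initial configurations'' of $\sqsubseteq$ so that the ordering applies verbatim (including, if behaviours are trace sets, a suitable reading of ``same trace'' as set equality). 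The robustness quantifier passes through transparently, since attacker validity and low-equivalence do not depend on the chosen speculative semantics.
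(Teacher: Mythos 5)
Your proposal is correct and is, as far as one can tell, exactly the argument the paper intends: the theorem is only mechanized in Coq (the appendix restates the two directions as corollaries without printed proofs), and the natural proof is precisely your decomposition — a downward-closure lemma for \rsni{} along $\sqsubseteq$, obtained by observing that the non-speculative projection is semantics-independent so the \rsni{} premises for $\contract{}{1}$ and $\contract{}{2}$ coincide, then applying the contrapositive of the leakage ordering to transport trace equality from $\contract{}{2}$ to $\contract{}{1}$, with the second bullet as the contrapositive of the first. The one load-bearing auxiliary fact is the identity $\nspecProject{\behx{\SInit{W}}} = \behNs{\SInit{W}}$ for each always-mispredict semantics, which you correctly single out as the step requiring per-semantics verification.
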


\subsection{Well-Formed Compositions and Compilers}\label{ssec:security-combinations}

We now introduce Relation Preservation, the last property---in addition to Confluence and Projection Preservation (see~\Cref{ssec:bg-combinations})---for well-formed compositions of speculative semantics.
We remark that while Confluence and Projection Preservation come directly from~\cite{spec_comb}, Relation Preservation is new and tailored to ensure, together with the other two properties, that RSS overapproximates RSNI for any well-formed composition. 

\begin{definition}[Well-Formed Composition -- Part 2]\label{def:wfc2}
\hfill
\begin{description}
    \item[Relation Preservation]
    If\  $\Sigmaxy \relsa \Sigmaxyt{'}$ and $\Sigmaxy \bigspecarrowxy{\tauStack} \Sigmaxyt{\dagger}$
    and $\safe{\tauStack}$ then $\Sigmaxyt{'} \bigspecarrowxy{\tauStack} \Sigmaxyt{''}$ and $\Sigmaxyt{\dagger} \relsa \Sigmaxyt{''}$.
\end{description}
\end{definition}

To explain Relation Preservation we need to mention two technical details: the state relation $\relsa{}$ and the judgement $\safe{\tauStack}$.
Judgement $\safe{\tauStack}$ means that all actions on the trace are tainted as safe ($\safeta$).
Intuitively, the relation $\relsa$  relates two states iff their registers and memories locations have the same taint and all elements tainted $\safeta$ have the same values in the two states.
We note that we can derive Relation Preservation in a general manner whenever the source semantics enjoy Relation Preservation as well (like our semantics $\Bv$, $\Jv$, $\Sv$, $\Rv$, and $\SLSv$).

The main result of this section is \Cref{theorem:ss-sni-compositions}, which states that for any well-formed composition, $\rss$ overapproximates $\rsni$.
We remark that we prove \Cref{theorem:ss-sni-compositions} once and for all by exploiting (1) the well-formedness properties and (2) the fact that all compositions in \Cref{fig:instantiate} are well-formed, rather than having to prove the implication for each of the composed semantics.

\begin{theorem}[RSS Overapproximates RSNI for Compositions]\label{theorem:ss-sni-compositions}
   If $\wfc{\contract{}{x+y}}$ and $\contract{}{x+y}\vdash P : \rss $, 
   then $\contract{}{x+y}\vdash P : \rsni$.
\end{theorem}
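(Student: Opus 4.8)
The plan is to prove \Cref{theorem:ss-sni-compositions} \emph{abstractly}, invoking only the three well-formedness properties of $\wfc{\contract{}{x+y}}$ (Confluence and Projection Preservation from \Cref{ssec:bg-combinations}, and Relation Preservation from \Cref{def:wfc2}), so that the argument applies uniformly to every composed semantics in \Cref{fig:instantiate} rather than semantics-by-semantics. The proof mirrors the per-semantics \Cref{thm:rss-overapprox-paper}, but replaces any semantics-specific simulation reasoning by a single appeal to Relation Preservation: intuitively, that property is exactly the statement that two states related by $\relsa$ which emit a fully-safe trace emit the \emph{same} trace, and this is the lockstep invariant needed to turn speculative safety into non-interference.

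First I would unfold the goal $\contract{}{x+y} \vdash P : \rsni$ (\Cref{def:rsni-paper}): fix a valid attacker $\ctxc{}$ and a whole program $W'$ with $\com{\ctxc{}\hole{P}} \loweq W'$ and $\nspecProject{\behxy{(\SInit{\com{\ctxc{}\hole{P}}})}} = \nspecProject{\behxy{(\SInit{W'})}}$, and show $\behxy{(\SInit{\com{\ctxc{}\hole{P}}})} = \behxy{(\SInit{W'})}$. The steps are: (1)~from $\contract{}{x+y} \vdash P : \rss$ (\Cref{def:rdss-contract-paper}) conclude that every event of every trace $\tauStack \in \behxy{(\SInit{\com{\ctxc{}\hole{P}}})}$ is tainted safe, i.e.\ $\safe{\tauStack}$; (2)~show that the initial speculative states are related, $\SInit{\com{\ctxc{}\hole{P}}} \relsa \SInit{W'}$, since low-equivalence means the two programs agree everywhere except on private memory, and the taint initialisation marks public memory (and code) as $\safeta$ and private memory as $\unta$, so the two states carry identical taints and coincide on all $\safeta$-tainted locations, which is precisely what $\relsa$ demands; (3)~apply Relation Preservation (\Cref{def:wfc2}) to the terminating big-step $\SInit{\com{\ctxc{}\hole{P}}} \bigspecarrowxy{\tauStack} \Sigmaxyt{\dagger}$: because $\safe{\tauStack}$ holds, we obtain $\SInit{W'} \bigspecarrowxy{\tauStack} \Sigmaxyt{''}$, hence $\tauStack \in \behxy{(\SInit{W'})}$, yielding the inclusion $\behxy{(\SInit{\com{\ctxc{}\hole{P}}})} \subseteq \behxy{(\SInit{W'})}$; (4)~upgrade this inclusion to equality using Confluence and the non-speculative hypothesis, as explained next.

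The main obstacle is the reverse inclusion, which cannot be obtained by symmetry: Relation Preservation requires the transferred trace to be safe, but the \rss hypothesis concerns $P$ run against $\ctxc{}$, so it certifies safety of the traces of $\com{\ctxc{}\hole{P}}$ and not a priori those of $W'$ (which has different private memory and is not literally of the form $\ctxc{}\hole{P}$). My resolution avoids re-running Relation Preservation in the opposite direction: Confluence makes each composed run deterministic, so each behaviour contains at most one terminating trace, and equality of non-speculative projections forces the two programs to terminate together, since the termination event $\com{\ts}$ is never enclosed between $\com{\startl{x}}$ and $\com{\rollbl{x}}$ and thus survives $\nspecProject{\cdot}$. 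Hence either both behaviours are empty (and equal), or both are singletons, and a singleton contained in a singleton must coincide. I expect the delicate points to be verifying determinism of the composed big-step from single-step Confluence and checking that $\relsa$ is correctly derived from $\loweq$ through the taint initialisation; a secondary check is that the composed taint tracking underlying $\safe{\cdot}$ and $\relsa$ is coherent with the combination (where Projection Preservation and the coherence packaged in $\wfc{\contract{}{x+y}}$ enter), so that \Cref{def:wfc2} indeed holds for every composition in \Cref{fig:instantiate}.
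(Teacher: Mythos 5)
Your proposal is correct and follows essentially the same route as the paper's proof: both establish $\SInit{\com{\ctxc{}\hole{P}}} \relsa \SInit{W'}$ from low-equivalence, use \rss{} to obtain $\safe{\tauStack}$, invoke Relation Preservation (lifted to $\bigspecarrowxy{}$) to transfer the execution in lockstep, and use Confluence to pin down determinism. The only difference is presentational — the paper argues by contradiction at the first point of trace divergence, whereas you run the whole trace forward and close the argument with a determinism/singleton observation — but the decomposition and the key lemmas are the same.
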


\Cref{cor:comb-contract} relates the security of a compiler for a well-formed composition with the security of its composing semantics.
In particular, if a compiler is \rssp{} for a well-formed composition $\contract{}{x+y}$, than it is also \rssp{} for the composing semantics $\contract{}{x}$ and $\contract{}{y}$.
Dually, if a compiler is not $\rssp$ for a component, then it is not $\rssp$ for any composition.

\begin{corollary}[RSSP and compositions, \coqed]\label{cor:comb-contract}
The following statements hold for any well-formed  $\contract{}{x+y}$:
\begin{itemize}
    \item If $\contract{}{x+y} \vdash  \comp{\cdot} : \rssp$ then $\contract{}{x} \vdash  \comp{\cdot} : \rssp$ and $\contract{}{y} \vdash  \comp{\cdot} : \rssp$.
    \item 
    If $\contract{}{x} \nvdash  \comp{\cdot} : \rssp$ or $\contract{}{y} \nvdash  \comp{\cdot} : \rssp$ then
    $\contract{}{x+y} \nvdash  \comp{\cdot} : \rssp$.
\end{itemize}
\end{corollary}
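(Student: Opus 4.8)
The plan is to establish the first item and then obtain the second as its contrapositive, since $\contract{}{x}\nvdash\comp{\cdot}:\rssp$ or $\contract{}{y}\nvdash\comp{\cdot}:\rssp$ is exactly the negation of the conjunction in the conclusion of the first item, while $\contract{}{x+y}\nvdash\comp{\cdot}:\rssp$ is the negation of its premise. For the first item it suffices to prove $\contract{}{x+y}\vdash\comp{\cdot}:\rssp \Rightarrow \contract{}{x}\vdash\comp{\cdot}:\rssp$; the statement for $\contract{}{y}$ is symmetric, using the projection $\specProjectComp{\cdot}{x}$ in place of $\specProjectComp{\cdot}{y}$. Unfolding \rssp, I would fix a source component $\src{P}$ with $\contract{}{NS}\vdash\src{P}:\rss$; the hypothesis $\contract{}{x+y}\vdash\comp{\cdot}:\rssp$ then yields $\contract{}{x+y}\vdash\comp{\src{P}}:\rss$, and the goal reduces to $\contract{}{x}\vdash\comp{\src{P}}:\rss$.

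The heart of the argument is Projection Preservation, which is available from the well-formedness hypothesis $\wfc{\contract{}{x+y}}$. Fix an arbitrary valid attacker $\ctxc{}$ with $\vdash\ctxc{}:\com{atk}$ and write $W$ for the whole program $\com{\ctxc{}\hole{\comp{\src{P}}}}$. Projection Preservation gives $\behx{\SInit{W}} = \specProjectComp{\behxy{(\SInit{W})}}{y}$. The projection $\specProjectComp{\cdot}{y}$ only \emph{deletes} the sub-sequences enclosed between $\startl{y}$ and $\rollbl{y}$, leaving every surviving action---together with its taint annotation---unchanged. Hence every taint-annotated action $\lambda^\sigma$ occurring in a trace of $\behx{\SInit{W}}$ also occurs, with the same taint $\sigma$, in the corresponding trace of $\behxy{(\SInit{W})}$. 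Since $\contract{}{x+y}\vdash\comp{\src{P}}:\rss$ forces every action of $\behxy{(\SInit{W})}$ to be tainted $\safeta$, every action of $\behx{\SInit{W}}$ is tainted $\safeta$ as well. As $\ctxc{}$ was an arbitrary valid attacker, this is precisely $\contract{}{x}\vdash\comp{\src{P}}:\rss$, and therefore $\contract{}{x}\vdash\comp{\cdot}:\rssp$.

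The step I expect to require the most care is conceptual rather than computational: justifying that taints are preserved by the $y$-projection, i.e., that an action surviving the projection carries the same $\safeta/\unta$ taint in the combined semantics $\contract{}{x+y}$ as in the standalone $\contract{}{x}$. This is exactly what Projection Preservation buys us once it is read as an equality of \emph{taint-annotated} trace sets, and the behaviours $\beh$ our semantics produce do record taints, since \rss inspects the annotated events $\lambda^\sigma$. I would therefore make explicit that the projection is deletion-only (it never inserts, relabels, or re-taints an action), so that the ``all actions safe'' predicate is inherited downward along the ordering $\contract{}{x}\sqsubseteq\contract{}{x+y}$. Confluence enters only implicitly, to guarantee that $\behxy{(\SInit{W})}$ is well-defined; Relation Preservation (\Cref{def:wfc2}) plays no role here, as it is needed for the \rss-overapproximates-\rsni result rather than for this purely trace-level inheritance argument.
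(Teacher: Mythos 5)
Your proof is correct and takes essentially the same route as the paper's: both items rest on Projection Preservation from $\wfc{\contract{}{x+y}}$ together with the observation that $\specProjectComp{\cdot}{y}$ only deletes the $\startl{y}$--$\rollbl{y}$ sub-sequences and never re-taints a surviving action, so the ``all actions safe'' predicate transfers between $\behx{\SInit{W}}$ and $\behxy{(\SInit{W})}$. The only cosmetic difference is that the paper proves the negative item directly (exhibiting an unsafe action of the $x$-trace inside the combined trace), whereas you obtain it as the contrapositive of the first item; the two derivations are logically identical.
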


We remark that an analogous of \Cref{cor:comb-contract} holds  for \rsnip{}.

 \section{Lifting Compiler Guarantees}\label{sec:ext-framework}

Compiler countermeasures against speculative leaks are often developed and proven secure against a specific speculative semantics.
For instance, countermeasures against Spectre-PHT have been proven secure against the  $\contract{}{\Bv}$ semantics~\cite{S_sec_comp} modelling speculation over branch instructions.
CPUs, however, may employ other speculation mechanisms, whose details might even be unknown when the countermeasure is designed, beyond those originally targeted by the countermeasure. 

Ensuring the security of a countermeasure hence requires continuously validating their guarantees (e.g., through proofs) against stronger and stronger semantics, as soon as new speculation mechanisms are discovered and modeled. 
For instance, in the context of Spectre-PHT attacks, countermeasures need to be proved secure against stronger semantics like $\contractSpec{}{\Bv + \Jv + \Sv + \Rv}$ rather than $\contract{}{\Bv}$ as done in~\cite{S_sec_comp}.
To reduce the burden of re-proving security whenever a new speculation mechanism is discovered, we need ways of lifting security guarantees from weaker to stronger semantics, which supports more speculations.

In this section, we address this issue by precisely characterizing under which conditions the scope of a countermeasure can be securely extended to other speculation mechanisms. 
More precisely, we study whenever the security guarantees provided by a compiler targeting a semantics $\contract{}{x}$ can be lifted to a stronger semantics $\contract{}{y}$, i.e., $\contract{}{x} \sqsubseteq \contract{}{y}$.
First, we introduce the preconditions for our lifting theorem, i.e., the notions of Independence and Safe Nesting, as well as our main result: the Lifted Compiler Preservation Theorem (\Cref{cor:lift-comp-pres-paper}) characterizing when security guarantees can be lifted to stronger semantics (\Cref{ssec:wft}). 
Then, we introduce Syntactic Independence (\Cref{ssec:si}) and Trapped Speculation (\Cref{ssec:trapped}), a set of sufficient conditions for Independence and Safe Nesting respectively. As we show in \Cref{sec:countermeasures}, these preconditions can be used in many cases to significantly simplify proofs of Independence and Safe Nesting in practice.

\subsection{Lifting Theorem}\label{ssec:wft}
In this section, we precisely characterize the sufficient conditions for  lifting the security guarantees provided by a compiler targeting a semantics $\contract{}{x}$ to a stronger semantics $\contract{}{x+y}$.
We start by providing a high-level intuition about how two component semantics $\contract{}{x}$ and $\contract{}{y}$ can interact when composed as $\contract{}{x+y}$ (\Cref{ssec:wft-interplay}).
Next, we formalize the notions of Independence (\Cref{ssec:wft-ind}) and Safe Nesting (\Cref{ssec:wft-nesting}), which precisely characterize when a compiler's guarantees can be lifted from $\contract{}{x}$ to $\contract{}{x+y}$.
Then, we introduce Conditional Robust Speculative Safety Preservation (CRSSP), a new secure compilation criterion ensuring that $\rss$ is preserved in the composed semantics $\contract{}{x+y}$ by the compiler for $\contract{}{x}$ only for those programs that already satisfy \rss for $\contract{}{y}$.
We conclude by stating and explaining our lifting theorem (\Cref{ssec:wft-theorem}) which is the main result of this section.

\subsubsection{Interplay of Semantics}\label{ssec:wft-interplay}
\begin{figure}[!ht]
\tikzset{
    between/.style args={#1 and #2}{
         at = ($(#1)!0.5!(#2)$)
    }
}
\begin{tikzpicture}
    \node (h1) {$\scriptstyle\startl{x}$};
    \node[right = 0.6 cm of h1] (m1) {$\scriptstyle\tauStack$};
    \node[right = 0.6 cm of m1] (h2) {$\scriptstyle\startl{y}$};
    \node[right = 0.6 cm of h2] (m5) {$\scriptstyle\tauStack'$};
    
    \node[below = 1 cm of h1] (b1){$\scriptstyle\startl{y}$};
    \node[right = 0.6 cm of b1] (m2) {$\scriptstyle\tauStack$};
    \node[below = 1 cm of h2] (b2){$\scriptstyle\startl{x}$};
    \node[below = 1 cm of m5] (m6) {$\scriptstyle\tauStack'$};

    \node[right = 0.6 cm of m5] (h3) {$\scriptstyle\rollbl{y}$};
    \node[right = 0.6 cm of h3] (m3) {$\scriptstyle\tauStack''$};
    \node[right = 0.6 cm of m3] (h4) {$\scriptstyle\rollbl{x}$};

    \node[below = 1 cm of h3] (b3){$\scriptstyle\rollbl{x}$};
    \node[right = 0.6 cm of b3] (m4) {$\scriptstyle\tauStack''$};
    \node[below = 1 cm of h4] (b4){$\scriptstyle\rollbl{y}$};

    \node[between = h1 and m1](t){$\cdot$};
    \node[between = b1 and m2](t){$\cdot$};
    \node[between = m1 and h2](t1){$\cdot$};
    \node[between = m2 and b2](t2){$\cdot$};
    
    \node[between = h2 and m5](t){$\cdot$};
    \node[between = b2 and m6](t){$\cdot$};
    \node[between = m5 and h3](t){$\cdot$};
    \node[between = m6 and b3](t){$\cdot$};

    \node[between = h3 and m3](t3){$\cdot$};
    \node[between = b3 and m4](t4){$\cdot$};
    \node[between = m3 and h4](t){$\cdot$};
    \node[between = m4 and b4](t){$\cdot$};

    \draw[decorate,decoration={brace,mirror, amplitude=5pt},below=10pt]
        (h2.south west) -- (h3.south east) node[font=\footnotesize, draw, midway, below=7pt] {\region{3}};
        
    \draw[decorate,decoration={brace, amplitude=5pt},above=10pt]
        (b2.north west) -- (b3.north east)  {};
    
    \node[fit=(h1) (m1) (t1), inner sep=5pt, fill=red, opacity=0.1, align=center, rounded corners] (background) {};
    \node[fit=(h2) (m5) (h3), inner sep=5pt, fill=green, opacity=0.1, align=center, rounded corners] (background) {};
    \node[fit=(t3)  (h4), inner sep=5pt, fill=red, opacity=0.1, align=center, rounded corners] (background) {};

    \node[fit=(b1) (m2) (t2), inner sep=5pt, fill=blue, opacity=0.1, align=center, rounded corners] (background) {};
    \node[fit=(b2) (m6) (b3), inner sep=5pt, fill=green, opacity=0.1, align=center, rounded corners] (background) {};
    \node[fit=(t4) (b4), inner sep=5pt, fill=blue, opacity=0.1, align=center, rounded corners] (background) {};

    \draw[draw = red, dotted] ([yshift=.3em]m1.north) to ([yshift=0.6cm] m1.north) to node[draw = black, above= 4pt,font=\footnotesize, solid](sd1){\region{1}}  ([yshift=0.6cm]m3.north) to ([yshift=.3em]m3.north);

     \draw[draw = blue, dotted ] ([yshift=-.3em]m2.south) to ([yshift=-0.6cm] m2.south) to node[draw = black, opacity = 1, below = 4pt,font=\footnotesize, solid](sd2){\region{2}}  ([yshift=-0.6cm]m4.south) to ([yshift=-.3em]m4.south);
\end{tikzpicture}
\vspace{-1em}
\caption[]{Interplay of semantics $\contract{}{x}$ and $\contract{}{y}$ when executing a program under the combined semantics $\contract{}{x+y}$.}
\label{figure:lifting-explanation}
\end{figure}

To understand the challenges involved in lifting security guarantees from a weaker semantics $\contract{}{x}$ to a stronger semantics $\contract{}{x+y}$, one needs to consider the interactions of $\contract{}{x}$ and $\contract{}{y}$.
\Cref{figure:lifting-explanation} depicts two portions of traces produced when executing a program under the composed semantics $\contract{}{x+y}$.
The first trace (top) starts with a speculative transaction from semantics $\contract{}{x}$ (highlighted in red and starting with action $\startl{x}$).
Inside the red speculative transaction, there is a \emph{nested} speculative transaction from semantics $\contract{}{y}$ (highlighted in green, starting with action $\startl{y}$ and ending with action  $\rollbl{y}$).
After the termination of the nested transaction, the outer red transaction continues until the end of its speculative window (action $\rollbl{x}$).
Dually, the second trace (bottom) starts with a speculative transaction from $\contract{}{y}$ followed by a nested transaction from $\contract{}{x}$.\looseness=-1
 
Thus, there are three different regions that might results in leaks:
\begin{asparaitem}
\item \textbf{\region{1}}: the speculative transaction started by $\contract{}{x}$ (highlighted in red),
\item \textbf{\region{2}}: the speculative transaction started by $\contract{}{y}$ (highlighted in blue), and 
\item \textbf{\region{3}}: the nested transactions (highlighted in green).
\end{asparaitem}
Note that while regions 1 and 2 are generated by a single semantics, the nested speculative transactions in \region{3} only arise in the combined semantics $\contract{}{x+y}$. 
While leaks in region 1 are fixed by proving the security of a compiler for the speculative semantics $\contract{}{x}$, i.e., $\contract{}{x} \vdash \comp{\cdot} : \rssp$, we need additional conditions to ensure the absence of leaks in Regions 2 and 3.
Thus, we introduce the notions of Independence with respect to a speculative semantics $\contract{}{y}$ and Safe Nesting which ensure the absence of leaks in Regions 2 and 3 respectively.

\subsubsection{Independence}\label{ssec:wft-ind}

Intuitively, when trying to lift our compiler security from the speculative semantics $\contract{}{x}$ to a stronger semantics $\contract{}{x+y}$, we need to ensure that the compiler does not introduce new leaks for the extension semantics $\contract{}{y}$.
The Independence property precisely characterizes this aspect:
A compiler $\comp{\cdot}$ for the origin semantics $\contract{}{x}$ is called independent for the extension semantics $\contract{}{y}$ iff
the compiler does not introduce further leaks under the extension semantics $\contract{}{y}$.

\begin{definition}[Independence in Extension]\label{def:ind-paper}
    \[
        \IND{\contract{}{y}}{\comp{\cdot}} \isdef \forall P\ldotp 
        \text{ if } \contract{}{y} \vdash P : \rss
        \text{ then }\contract{}{y} \vdash \comp{P} : \rss
    \]
\end{definition}

Note that Independence differs from \rssp{} (\Cref{def:compiler-contract-security-paper}) in that the former uses  $\contract{}{y}$ in the pre- and post-condition, whereas the latter employs the $\contract{}{NS}$ in its pre-condition.

As stated in \Cref{lem:self-ind}, a compiler that is \rssp{} for $\contract{}{x}$ is also Independent w.r.t. this semantics.
\begin{corollary}[Self Independence, \coqed]\label{lem:self-ind}
    If $\contract{}{x} \vdash \comp{\cdot} : \rssp$, then 
     $\IND{\contract{}{x}}{\comp{\cdot}}$.
\end{corollary}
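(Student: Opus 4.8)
The plan is to unfold both definitions and observe that the consequent of Independence actually holds \emph{unconditionally}, so the implication it asserts can be discharged on its conclusion side rather than through its hypothesis. Concretely, instantiating \Cref{def:ind-paper} with $\contract{}{y} = \contract{}{x}$, the goal $\IND{\contract{}{x}}{\comp{\cdot}}$ amounts to showing, for every source program $P \in \src{L}$, that $\contract{}{x} \vdash P : \rss$ implies $\contract{}{x} \vdash \comp{P} : \rss$. I would fix an arbitrary $P$ and aim directly at the consequent $\contract{}{x} \vdash \comp{P} : \rss$, setting the antecedent aside.

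The key step is the observation recorded in the remark following \Cref{def:rdss-contract-paper} (citing \cite[Theorem 3.9]{S_sec_comp}): \rss holds trivially under the non-speculative semantics $\contract{}{NS}$ for \emph{every} program, since there is no speculation and hence no unsafe speculative action can ever be emitted. In particular $\contract{}{NS} \vdash P : \rss$. I would then invoke the hypothesis $\contract{}{x} \vdash \comp{\cdot} : \rssp$: unfolding \Cref{def:compiler-contract-security-paper} and instantiating it at $P$, its premise $\contract{}{NS} \vdash P : \rss$ is exactly what has just been established, so it yields $\contract{}{x} \vdash \comp{P} : \rss$. This is the desired consequent, so the implication in \Cref{def:ind-paper} holds and $\IND{\contract{}{x}}{\comp{\cdot}}$ follows.

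The argument is essentially definitional, so the only point requiring care—rather than a genuine obstacle—is lining up the quantifier domains: both \rssp and Independence must range over the same class of source programs so that the instantiation of \rssp is legitimate and $\comp{P}$ is well-defined. An alternative route, for a reader who prefers not to lean on the triviality remark, would replace that step by a monotonicity-of-safety argument: since $\contract{}{NS} \sqsubseteq \contract{}{x}$ in the leakage ordering of \Cref{fig:instantiate}, safety under the stronger semantics $\contract{}{x}$ (the very antecedent one is handed) entails safety under the weaker $\contract{}{NS}$, after which \rssp applies as before. I expect the triviality route to be the cleaner one, since the monotonicity of \rss across the ordering is not stated explicitly in the excerpt (only the analogous fact for \rsnip, in \Cref{cor:weak-contract-paper}), whereas the triviality of \rss under $\contract{}{NS}$ is given directly.
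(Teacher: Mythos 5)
Your proof is correct and matches the paper's own argument: the paper likewise fixes $P$, invokes the fact that every source program is \rss{} under $\contract{}{NS}$ (its \Cref{thm:ss-sni-source}), and then applies \rssp{} to conclude $\contract{}{x} \vdash \comp{P} : \rss$, never actually using the antecedent $\contract{}{x} \vdash P : \rss$. Your observation that the consequent holds unconditionally is exactly the structure of the paper's proof.
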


\subsubsection{Safe Nested Speculation}\label{ssec:wft-nesting}
Given a combined semantics $\contract{}{x+y}$, a program $P$ has Safe Nested Speculations (denoted with $\contract{}{x+y} \vdash P : \safeN{}$) if all actions inside nested speculative transactions are safe.
Safe nesting, therefore, ensures that there are no unsafe interactions between the composing semantics $\contract{}{x}$ and $\contract{}{y}$.

\begin{definition}[Safe Nested Speculation]\label{def:safe-nest-paper}
    \begin{align*}
    \contract{}{x+y} \vdash P : \safeN{} \isdef&\
        \forall \tauStack \in \behxy{(P)}, 
        \text{ if }
        \\
        &\
        \startl{a} \cdot \tauStack' \cdot \rollbl{a} \text{ is a subtrace of $\tauStack$}
        \text{ and }
        \startl{b} \cdot \tauStack'' \cdot \rollbl{b} \text{ is a subtrace of $\tauStack'$} 
        \\
        &\
        \text{ then }
        \safe{\tauStack''} 
        \text{ where }
        a \in \{x,y\}
        \text{ and } 
        b \in \{x,y\} \setminus \{a\}
    \end{align*}
\end{definition}
Finally, we say that a compiler satisfies the Safe Nested Speculation property, written $\contract{}{x+y} \vdash \comp{\cdot} : \safeN{}$, iff all its compiled programs satisfy \Cref{def:safe-nest-paper}, i.e., $\forall P.\ \contract{}{x+y} \vdash \comp{P} : \safeN{}$.

\subsubsection{Conditional Robust Speculative Safety Preservation}\label{ssec:wft-crssp}
Often compilers implementing Spectre countermeasures are developed to prevent leaks introduced by a \emph{specific} speculation mechanism.
Hence, when lifting their security guarantees to a semantics accounting for additional speculation mechanisms, compiled programs might still contain some leaks that the compiler was not designed to prevent in the first place, i.e., leaks caused \emph{only} by the additional mechanisms.   
However, \rssp{} is too strict as a criterion here, since it cannot distinguish between the different speculation mechanisms that might cause the leak.
To account for this, we propose a new secure compilation criterion called Conditional Robust Speculative Safety Preservation ($\crssp$, \Cref{def:crssp}).
As the name indicates, $\crssp$ is a variant of $\rssp$ that restricts $\rss$ preservation \emph{only} to those programs that do not contain leaks caused only by the additional mechanisms. 

\begin{definition}[Conditional Robust Speculative Safety Preservation (\crssp)]\label{def:crssp}
  $$\contract{}{x}, \contract{}{y} \vdash  \comp{\cdot} : \crssp \isdef \forall \src{P} \in \src{L}. \text{ if }
    \contract{}{NS} \vdash \src{P} : \rss \text{ and } \contract{}{y} \vdash \src{P} : \rss \text{, then } \contract{}{x+y} \vdash \comp{\src{P}} : \rss.$$
\end{definition}

\subsubsection{Lifting Theorem}\label{ssec:wft-theorem}
We are now ready to introduce the main result of this section, that is, our lifting theorem (\Cref{cor:lift-comp-pres-paper}).
Intuitively,  we can lift the security guarantees of a compiler $\comp{\cdot}$ targeting semantics $\contract{}{x}$ to a stronger WFC semantics $\contract{}{x+y}$ for a program $P$ provided that the compiler $\comp{\cdot}$ (1) is RSSP for $\contract{}{x}$, (2) fulfills Independence for the extension semantics $\contract{}{y}$, and (3) fulfills Safe  Nesting for the combined semantics $\contract{}{x+y}$.

\begin{theorem}[Lifted Compiler Preservation, \coqed]\label{cor:lift-comp-pres-paper}
    If 
    $\contract{}{x} \vdash  \comp{\cdot} : \rssp$
    and
    $\IND{\contract{}{y}}{\comp{\cdot}}$
    and
    $\contract{}{x+y} \vdash \comp{\cdot} : \safeN{}$
    and
    $\wfc{\contract{}{x+y}}$ 
    , then
    $\contract{}{x}, \contract{}{y} \vdash  \comp{\cdot} : \crssp$.
\end{theorem}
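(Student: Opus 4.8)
The plan is to prove \crssp directly from \Cref{def:crssp}. First I would fix an arbitrary source program $P$ satisfying the two preconditions $\contract{}{NS} \vdash P : \rss$ and $\contract{}{y} \vdash P : \rss$, so that the goal becomes $\contract{}{x+y} \vdash \comp{P} : \rss$. The first step is to extract two component-level safety facts about the \emph{compiled} program from the hypotheses. From $\contract{}{x} \vdash \comp{\cdot} : \rssp$ together with $\contract{}{NS} \vdash P : \rss$, \Cref{def:compiler-contract-security-paper} gives $\contract{}{x} \vdash \comp{P} : \rss$. From $\IND{\contract{}{y}}{\comp{\cdot}}$ together with $\contract{}{y} \vdash P : \rss$, \Cref{def:ind-paper} gives $\contract{}{y} \vdash \comp{P} : \rss$. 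These say the compiled program is already safe under each component semantics in isolation; the remaining work is to assemble these into safety under the combined semantics.

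Next I would unfold the \rss goal (\Cref{def:rdss-contract-paper}): fix a valid attacker $\ctxc{}$, a trace $\tauStack \in \behxy{(\SInit{\ctxc{}\hole{\comp{P}}})}$, and an action $\lambda^\sigma$ occurring in $\tauStack$, reducing the goal to $\sigma \equiv \safeta$. The core of the argument is a case analysis on the speculative transactions enclosing $\lambda^\sigma$, mirroring the three regions of \Cref{figure:lifting-explanation}. Concretely, at least one of the following holds and they are jointly exhaustive: \textbf{(A)} $\lambda^\sigma$ lies inside no $\contract{}{x}$-transaction; \textbf{(B)} $\lambda^\sigma$ lies inside no $\contract{}{y}$-transaction; \textbf{(C)} $\lambda^\sigma$ lies inside both an $\contract{}{x}$- and a $\contract{}{y}$-transaction.

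For case (A) I would invoke Projection Preservation from $\wfc{\contract{}{x+y}}$: since $\lambda^\sigma$ is in no $\contract{}{x}$-transaction, it survives $\specProjectComp{\tauStack}{x}$, and by Projection Preservation $\specProjectComp{\tauStack}{x} \in \behy{\SInit{\ctxc{}\hole{\comp{P}}}}$; as projection only deletes bracketed sub-sequences and leaves surviving actions and their taints untouched, $\lambda^\sigma$ keeps its taint, so $\contract{}{y} \vdash \comp{P} : \rss$ yields $\sigma \equiv \safeta$. Case (B) is symmetric, using $\specProjectComp{\tauStack}{y} \in \behx{\SInit{\ctxc{}\hole{\comp{P}}}}$ and $\contract{}{x} \vdash \comp{P} : \rss$. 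For case (C), I would appeal to Safe Nesting: the transactions enclosing $\lambda^\sigma$ form a containment chain $T_1 \supset \cdots \supset T_k \ni \lambda^\sigma$ of $\contract{}{x}$- and $\contract{}{y}$-transactions; since both types occur, there is an adjacent pair $T_i \supset T_{i+1}$ of differing type, and writing $T_i = \startl{a}\cdot\tauStack'\cdot\rollbl{a}$ and $T_{i+1} = \startl{b}\cdot\tauStack''\cdot\rollbl{b}$ with $a \neq b$ matches the premise of \Cref{def:safe-nest-paper}. Hence $\safe{\tauStack''}$, and since $\lambda^\sigma$ lies within $\tauStack''$ we again obtain $\sigma \equiv \safeta$.

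I expect the main obstacle to be the bookkeeping in case (C): justifying that ``inside both an $\contract{}{x}$- and a $\contract{}{y}$-transaction'' always exposes a directly-nested differing-type pair matching the subtrace shape of \Cref{def:safe-nest-paper}, which relies on the fact that the always-mispredict semantics emit properly nested $\startl{}$/$\rollbl{}$ brackets so that the enclosing transactions are linearly ordered by containment. The remaining delicate points, used in (A) and (B), are the auxiliary lemmas that projection preserves the per-action taints of surviving actions and that Projection Preservation transports through linking to the whole program $\ctxc{}\hole{\comp{P}}$ under the robust (all-attackers) quantification; I would state both as small facts before the case analysis rather than inline.
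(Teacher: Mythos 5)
Your proposal is correct and follows essentially the same route as the paper's proof: extract $\contract{}{x} \vdash \comp{P} : \rss$ from \rssp{} and $\contract{}{y} \vdash \comp{P} : \rss$ from Independence, then case-split the actions of a combined-semantics trace into those covered by one component semantics via Projection Preservation and those in nested transactions covered by Safe Nesting. Your (A)/(B)/(C) decomposition is just a relabeling of the paper's split by enclosing speculation source, and your attention to the containment-chain bookkeeping in case (C) is, if anything, slightly more careful than the paper's own treatment.
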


We remark that our lifted guarantees hold only for programs $P$ that are initially secure w.r.t. the extension semantics  $\contract{}{y}$.
That is, the compiler enjoys $\contract{}{x}, \contract{}{y} \vdash  \comp{\cdot} : \crssp$ not the stronger property $\contract{}{x+y} \vdash  \comp{\cdot} : \rssp$.
The reason is that while compiler $\comp{\cdot}$ does not introduce further leaks under the extension semantics $\contract{}{y}$ (due to Independence), it might not prevent $\contract{}{y}$-leaks already present in the source program.

For the traces depicted in \Cref{figure:lifting-explanation}, \Cref{cor:lift-comp-pres-paper} ensures \rss{} (for any program satisfying \rss{} for the extension semantics $\contract{}{y}$) under the composed semantics as follows.
For \region{1}, \rss follows from $\contract{}{x} \vdash \comp{\cdot} : \rssp$.
For \region{2}, \rss follows from the program being originally \rss under $\contract{}{y}$ and from $\comp{\cdot}$ fulfilling Independence for extension $\contract{}{y}$.
Finally, for \region{3}, \rss follows from the compiled program having Safe Nesting.

\Cref{cor:lift-comp-pres-paper} allows us to lift the security guarantees of our secure compilers to stronger semantics, without worrying about unexpected leaks introduced by other speculation mechanisms (captured by the extension semantics) and, crucially, \emph{without} requiring new secure compilation proofs.

Next, we give sufficient  conditions for  Independence (\Cref{ssec:si}) and Safe Nesting (\Cref{ssec:trapped}).\looseness=-1

\subsection{Syntactic Independence: Independence for Free}\label{ssec:si}

To simplify the task of proving Independence, we now introduce \emph{Syntactic Independence} (SI) (\Cref{def:si}), a syntactic sufficient condition for Independence. As the name suggests, SI can be checked by \textit{syntactic} inspection of the compiler $\comp{\cdot}$ and the extension semantics $\contract{}{y}$.

Before formalizing SI, we introduce some notation.
Given a compiler $\comp{\cdot}$, we denote by $\insertedInstrs{\comp{\cdot}}$ the set of instructions that the compiler inserts during compilation.
For instance, for a simple compiler  $\complfenceB{\cdot}$ that inserts $\kywd{spbarr}$ instructions after branch instruction to prevent speculation~\cite{S_sec_comp}, $\insertedInstrs{\complfenceB{\cdot}}$ is the set  $\{ \kywd{spbarr} \}$.
Given a semantics $\contract{}{}$, we denote by $\speculationInstrs{\contract{}{}}$ the set of instructions that trigger speculation in $\contract{}{}$.
For instance, for semantics $\contract{}{\SLSv}$, which models straight-line speculation over return instructions, the set $\speculationInstrs{\contract{}{\SLSv}}$ is $\{ \pret \}$.

We are now ready to formalize Syntactic Independence (\Cref{def:si}).
In a nutshell, a compiler \comp{\cdot} is syntactically independent for a semantics \contract{}{} (denoted with $\SI{}{\comp{\cdot}}{\contract{}{}}$) if  the compiler does not insert (1) any instructions that trigger speculation under $\contract{}{}$, and (2) any instructions that produce data-dependent actions or modify the program state (except for the program counter $\textbf{pc}$ and the stack pointer $\textbf{sp}$).
The first requirement ensures that the compiler does not introduce new (potentially unsafe) speculative transactions, whereas the second requirement ensures that the compiler does not introduce unsafe actions into existing safe speculative transactions under $\contract{}{}$.

\begin{definition}[Syntactic Independence]\label{def:si}
    \[
    \SI{}{\comp{\cdot}}{\contract{}{}} \isdef \insertedInstrs{\comp{\cdot}} \cap \speculationInstrs{\contract{}{}} = \emptyset \text{ and } \insertedInstrs{\comp{\cdot}} \cap \{\jzC{}, \jmpC, \storeC, \loadC, \leftarrow  \} = \emptyset
    \]
\end{definition}

For instance, the compiler  $\complfenceB{\cdot}$ mentioned above is SI w.r.t. $\contract{}{\SLSv}$ since $ \insertedInstrs{\complfenceB{\cdot}} \cap \speculationInstrs{\contract{}{\SLSv}} = \{ \kywd{spbarr} \} \cap \{ \pret \} = \emptyset$ and  
 $\{ \kywd{spbarr} \} \cap \{\jzC{}, \jmpC, \storeC, \loadC, \leftarrow  \} = \emptyset$.

\Cref{cor:syn-ind-ind-paper} connects Syntactic Independence and Independence. 

\begin{lemma}[SI Implies Independence]\label{cor:syn-ind-ind-paper}
If
$\SI{\contract{}{x}}{\comp{\cdot}}{\contract{}{y}} $, then 
$ \IND{\contract{}{y}}{\comp{\cdot}}$.
\end{lemma}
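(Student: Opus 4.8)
The plan is to unfold both definitions and then transfer safety from $P$ to $\comp{P}$ via a taint‑preserving simulation between their $\contract{}{y}$‑executions. Assume $\SI{\contract{}{x}}{\comp{\cdot}}{\contract{}{y}}$, i.e. $\insertedInstrs{\comp{\cdot}} \cap \speculationInstrs{\contract{}{y}} = \emptyset$ and $\insertedInstrs{\comp{\cdot}} \cap \{\jzC, \jmpC, \storeC, \loadC, \leftarrow\} = \emptyset$, fix an arbitrary $P$ with $\contract{}{y} \vdash P : \rss$, and aim to prove $\contract{}{y} \vdash \comp{P} : \rss$. Unfolding \Cref{def:rdss-contract-paper}, I would fix a valid attacker $\ctxc{}$, a trace $\tauStack \in \behy{\SInit{\ctxc{}\hole{\comp{P}}}}$, and a labelled action $\lambda^\sigma \in \tauStack$, the goal being $\sigma \equiv \safeta$. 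I classify each action of $\tauStack$ according to whether the instruction that emitted it is introduced by the compiler (an element of $\insertedInstrs{\comp{\cdot}}$) or belongs to the original code of $P$ or of the attacker $\ctxc{}$.

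For the inserted case I use the second SI conjunct. Since $\insertedInstrs{\comp{\cdot}}$ is disjoint from $\{\jzC, \jmpC, \storeC, \loadC, \leftarrow\}$, no inserted instruction is a load, store, branch, jump, or assignment; following the intended reading of condition (2) in \Cref{def:si}, which this forbidden set is designed to capture, each inserted instruction therefore emits no data‑dependent action and alters the value and taint of no state other than $\pc$ and $\spR$, keeping $\pc,\spR$ safe. Any action an inserted instruction does emit is thus fixed by the program text rather than by data, so the taint‑tracking rules tag it $\safeta$. This both discharges the subcase where $\lambda^\sigma$ originates from an inserted instruction and supplies the preservation fact needed below.

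For actions emitted by non‑inserted instructions I transfer safety from $P$. Using the first SI conjunct, $\insertedInstrs{\comp{\cdot}} \cap \speculationInstrs{\contract{}{y}} = \emptyset$, so no inserted instruction triggers $\contract{}{y}$‑speculation; hence every $\startl{y}\cdots\rollbl{y}$ transaction in the $\comp{P}$‑execution is started by an original speculation source of $P$, and the nesting structure of $y$‑transactions coincides with that of a $\contract{}{y}$‑execution of $\ctxc{}\hole{P}$. Combining this with the taint/value preservation of the inserted case, I would prove by induction on the derivation length an invariant stating that, whenever control reaches a non‑inserted instruction, the compiled configuration and the matching $\ctxc{}\hole{P}$‑configuration agree on the taints and safe‑values of all state except $\pc,\spR$ (which stay safe). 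The attacker $\ctxc{}$ is itself not compiled, so its instructions occur identically in both runs. It follows that each non‑inserted instruction emits the same action with the same taint in both executions. Since $\contract{}{y} \vdash P : \rss$ is robust, it holds for our fixed $\ctxc{}$, so that taint is $\safeta$; hence $\sigma \equiv \safeta$ in this subcase as well, completing the argument.

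The main obstacle is phrasing the simulation invariant so that it survives the extra steps performed inside inserted‑instruction blocks. Two points need care. First, inserted instructions consume speculation‑window budget, so a transaction in the $\comp{P}$‑execution may be rolled back after fewer original instructions than in $P$; this is harmless for a safety claim (it only deletes actions, never creates unsafe ones), but the invariant must tolerate a shorter speculative suffix rather than demand trace equality. Second, because condition (2) permits inserted instructions to touch $\spR$, I must verify that any such update keeps $\spR$ safe and does not alias the addresses of original loads/stores in a way that changes their emitted taint—precisely the guarantee that the exclusion of $\storeC,\loadC,\leftarrow$ from $\insertedInstrs{\comp{\cdot}}$, together with $\spR$ being safe, is meant to provide. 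Once the invariant accommodates these, the two case analyses close the proof.
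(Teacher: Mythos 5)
Your proof is correct and follows the same route as the paper's: both argue that the first SI conjunct rules out any new $\contract{}{y}$-speculation sources introduced by compilation, and that the second conjunct rules out new data-dependent (hence potentially unsafe) actions, so safety transfers from $P$ to $\comp{P}$. The paper's own proof is a three-sentence informal version of this argument; your elaboration of the simulation invariant and your observation that inserted instructions merely consume speculation-window budget (so compiled transactions can only \emph{delete} actions relative to the source run, never add unsafe ones) are precisely the details the paper leaves implicit.
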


We remark that checking SI is significantly simpler than manually proving Independence.
As we show in \Cref{ssec:ind-compilers}, SI plays a critical role in reducing the amount of Independence proofs necessary to carry out our security analysis.
Despite its restrictiveness, SI is applicable to two classes of compiler countermeasures---\texttt{lfence}-based countermeasures~\cite{Intel-compiler} and return-trampoline countermeasures~\cite{retpoline}---that work by stalling speculative execution.
For more complex countermeasures, e.g., speculative load hardening~\cite{slh}, that aim at preventing speculative leaks (rather than preventing speculation altogether), SI is not applicable since the compiler might instrument the program with additional instructions modifying the program state.
In this case, in our security analysis we fall-back to standard Independence proofs.

\subsection{Trapped Speculation: Fulfilling Safe Nesting}\label{ssec:trapped}
Showing that the compiled program fulfils the Safe Nesting condition is challenging because it requires reasoning about both the compiler as well as the interactions between component semantics.
To help with this, we now introduce a sufficient condition on compilers that ensures that all compiled programs enjoy the Safe Nesting property.

\begin{definition}[Trapped Speculation of Compiler]\label{def:trapped-spec-paper}
\begin{align*}
    \contract{}{x} \vdash \trappedS{\comp{\cdot}} \isdef &\
    \forall P, 
    \tauStack \in \behx{\comp{P}}, 
    \tau \in \specProject{\tauStack}\ldotp 
    \exists \ctr\ldotp \tau = \rollbackObsx{\ctr} \text{ or } \tau = \startObsx{\ctr}
\end{align*}
\end{definition}

In a nutshell, a compiler satisfies \Cref{def:trapped-spec-paper} iff it \emph{traps speculation}, which we model by requiring that the only speculative actions produced by compiled programs are either  $\startObsx{\ctr}$ (i.e., begin of speculation) or $\rollbackObsx{\ctr}$ (i.e., end of speculation).
This, thus, implies that there are no unsafe actions between the start of a speculation transaction and its rollback as required by Safe Nesting.
For example, a compiler inserting fences into the program stops speculation immediately and fulfils our definition of Trapped Speculation.
Similarly, a compiler inserting a so-called return trampoline~\cite{retpoline} (which traps speculation in a loop) also fulfills Trapped Speculation.

\Cref{def:trapped-spec-paper} relies on the speculative projection function $\specProject{}$, which removes all non-speculative observations from the trace and is defined as the inverse of the non-speculative projection $\nspecProject{}$.

\Cref{lem:trapped-imp-safe-nest-paper} connects Trapped Speculation (\Cref{def:trapped-spec-paper}) with the Safe Nesting condition $\safeN{}$. \begin{lemma}[Trapped Speculation Implies Safe Nesting, \coqed]\label{lem:trapped-imp-safe-nest-paper}
If
    $\contract{}{x} \vdash \trappedS{\comp{\cdot}} $ 
    then $\contract{}{x+y} \vdash \comp{\cdot} : \safeN{}$.
\end{lemma}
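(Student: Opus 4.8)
The plan is to reduce the Safe-Nesting obligation of \Cref{def:safe-nest-paper} to a single auxiliary claim: under $\contract{}{x} \vdash \trappedS{\comp{\cdot}}$, every $x$-transaction occurring in any trace $\tauStack \in \behxy{\comp{P}}$ is \emph{trivial}, meaning that between a matching pair $\startl{x} \cdots \rollbl{x}$ the trace carries no data-dependent action and no nested $\startl{y}$. Granting this, I split on which semantics opens the outer transaction. When the outer one is $x$ ($a=x$, $b=y$), triviality says it contains no $\startl{y}$, so the premise ``$\startl{y}\cdot\tauStack''\cdot\rollbl{y}$ is a subtrace of $\tauStack'$'' is unsatisfiable and the obligation is vacuous. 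When the outer one is $y$ and the inner one is $x$ ($a=y$, $b=x$), triviality forces $\tauStack'' = \varepsilon$, so $\safe{\tauStack''}$ holds trivially. Note this uses neither Independence nor \rssp, only the structural Trapped-Speculation hypothesis together with Confluence and Projection Preservation of the well-formed composition $\contract{}{x+y}$ (\Cref{ssec:bg-combinations}).

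For every $x$-transaction sitting in a $y$-free context (in particular a top-level one), the claim is short. Given such a transaction $\startl{x}\cdot\tauStack'\cdot\rollbl{x}$ inside $\tauStack$, Projection Preservation yields $\specProjectComp{\tauStack}{y} \in \behx{\comp{P}}$, in which the image of the transaction is again an $x$-transaction; \Cref{def:trapped-spec-paper} then forces its speculative content to consist only of $\startl{x}/\rollbl{x}$, so it carries no data-dependent action. To also exclude a nested $y$-transaction I would use that, under always-mispredict, every instruction triggering $\contract{}{y}$ (a $\storeC$, $\jzC$, $\jmpC$, $\retC$, or $\callC$ by \Cref{tab:semantics}) leaves an observation outside any nested $y$-window — the $\storel{n}$ of the bypassed store when it is re-executed on the correct path, or a $\pcl{l}$ of the correct-path control transfer — which therefore survives $\specProjectComp{\cdot}{y}$ and would reappear in the projected image, contradicting triviality. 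Hence the $x$-wrong-path runs only observation-free, non-triggering instructions (skips, register assignments, and the trapping $\pbarrier$) before rolling back, and no $y$-transaction can nest inside it.

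The hard part is every $x$-transaction that is itself nested \emph{inside} a $y$-transaction, which is exactly what both cases ultimately require (an outer $x$ may sit inside a $y$, and the inner-$x$ case is of this shape by definition). Here the projection argument collapses: $\specProjectComp{\cdot}{y}$ deletes the whole enclosing $y$-transaction, and with it the very $x$-transaction I must bound, so \Cref{def:trapped-spec-paper} can no longer be invoked by projecting. The substance of the proof is to show that trapping is \emph{robust to the speculative context}: the content of an $x$-transaction depends only on its mispredicted start configuration and the fixed code of $\comp{P}$, and the trap placed after each $x$-trigger fires from \emph{any} such configuration, including those reached only along a $y$-wrong-path (e.g.\ a jump target under $\contract{}{\Jv}$). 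I would extract this state-independence from the abstract hypothesis by combining Confluence — which makes the combined semantics deterministic and lets me treat the nested $x$-transaction as a self-contained $x$-subderivation whose content coincides with an $x$-only $x$-transaction from the same configuration — with the fact that \Cref{def:trapped-spec-paper} quantifies over \emph{all} source programs $P$, which pins the compiler's trapping down as a code-level invariant rather than a property of one reachable state. Once trapping is known to hold uniformly, every nested $x$-transaction is trivial by the previous paragraph's argument, closing both cases. I expect this transfer — from $x$-only-reachable configurations to those reachable only under $y$-speculation — to be the single genuinely delicate step, with the surrounding case analysis routine.
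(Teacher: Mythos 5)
Your decomposition is the same as the paper's: its (two-line) proof of this lemma is exactly the case analysis on which semantics opens the outer transaction, dismissing $y$-inside-$x$ because trapped $x$-speculation never reaches a $y$-trigger, and $x$-inside-$y$ because the nested $x$-transaction has empty content. Where you genuinely add something is in noticing that \Cref{def:trapped-spec-paper} is stated over $\behx{\comp{P}}$ while the obligation lives in $\behxy{(\comp{P})}$, and that for an $x$-transaction nested inside a $y$-transaction the projection $\specProjectComp{\cdot}{y}$ erases the very evidence you need — the paper's proof silently treats trapping as an operational invariant of the compiled code rather than as the trace-level property it is defined to be. Two of your patches do not quite hold as stated, though. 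First, your exclusion of a nested $\startl{y}$ inside an $x$-transaction rests on the claim that every $y$-trigger leaves an observation surviving $\specProjectComp{\cdot}{y}$; this fails for $y\in\{\Rv,\SLSv\}$, where internal $\retC$ and $\callC$ reduce with label $\epsilon$, so a $y$-transaction could in principle open and close inside an $x$-window without leaving any footprint in the $y$-projected trace. Second, Confluence plus the $\forall P$ quantification does not by itself transfer trapping to configurations reachable only along a $y$-wrong-path: Confluence is determinism of a single semantics, not a simulation between $\contract{}{x}$ and the $x$-steps of $\contract{}{x+y}$, and quantifying over source programs does not manufacture a program whose $x$-only execution visits a $y$-mispredicted configuration. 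Both gaps close only under the reading the paper implicitly uses and that \Cref{thm:safe-nesting-results} instantiates: the compiler places a $\pbarrier$ or an observation-free trapping loop immediately after every $x$-trigger, so the wrong path executes no further observable or $y$-triggering instruction from \emph{any} start configuration. If you promote that code-level fact to your auxiliary claim (in place of the trace-projection argument), your proof goes through and is strictly more careful than the paper's.
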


As we show in \Cref{ssec:trapped-compilers},  \Cref{def:trapped-spec-paper} significantly reduces the proof burden. 
In particular, rather than having to reason about the combined semantics $\contract{}{x+y}$ when showing Safe Nesting, we can just reason about the compiler $\comp{\cdot}$ for semantics $\contract{}{x}$.

\smallskip
With this formal setup, we now move on to our security analysis, which demonstrates how to attain \crssp\ for a number of countermeasures, by relying on the notions of Independence, Syntactic Independence, Safe Nesting, and Trapped Speculation.

 \section{Countermeasures Analysis}\label{sec:countermeasures}

In this section, we present a comprehensive analysis of the security guarantees provided by Spectre countermeasures implemented in major compilers. Our analysis covers 9 countermeasures (summarized in \Cref{ssec:compilers}) covering 5 Spectre attacks: Spectre-PHT~\cite{spectre}, Spectre-BTB~\cite{spectre}, Spectre-RSB~\cite{ret2spec, spectreRsb}, Spectre-STL~\cite{S_specv4}, and Spectre-SLS~\cite{sls-whitepaper}.
Using our secure compilation framework, we precisely characterize the security guarantees provided by these countermeasures against the five speculative semantics from \Cref{sec:bg-formal} and their combinations.

\Cref{cor:lift-comp-pres-paper} plays a key role in this analysis since we use it to lift security guarantees from simpler semantics to their combinations, thereby significantly reducing the number of secure compilation proofs that need to be carried out.
We remark that to apply the lifting theorem, which allows us to lift the security guarantees of the compiler to stronger semantics, we need to show (1) Security in Source, (2) Independence in Extension, and (3) Safe Nesting, which is what we do next.

First, we tackle Security in Source and prove that the compilers are \rssp{} w.r.t. their base speculative semantics (\Cref{ssec:sec-compilers}).
Then we focus on Independence, and show Syntactic Independence for four of our compilers and fall-back to full Independence proofs for the remaining five (\Cref{ssec:ind-compilers}).
Next, we analyze Safe Nesting and show that Trapped Speculation (\Cref{def:trapped-spec-paper}) applies in all cases except two (\Cref{ssec:trapped-compilers}).
Finally, we combine these results by evaluating the strongest security guarantees that can be achieved for these compilers using \Cref{cor:lift-comp-pres-paper} (\Cref{ssec:lift-compilers}).

\subsection{The Compilers}\label{ssec:compilers}

\Cref{tab:comp-countermeasures-new} summarizes the Spectre countermeasures that we analyze.
Next, we describe them in more detail.
We refer to our technical report for full definitions of our compilers.

\begin{table}\centering
    \renewcommand{\arraystretch}{1.3} \addtolength{\tabcolsep}{-0.2em} \begin{tabular}{clcc}
    \toprule
      Name & Symbol & Base Semantics & Source  \\
      \midrule
      \makecell{Fences for Returns Straight-Line} & \complfenceSLS{\cdot} & $\contract{}{\SLSv}$ & GCC/\clang{}   \\
      \makecell{Retpoline for Jumps}   & \compretpJ{\cdot} & $\contract{}{\Jv}$ &  GCC/\clang{}/\cite{retpoline} \\
      \makecell{Retpoline with fence for Jumps} & \compretpJF{\cdot} & $\contract{}{\Jv}$ & \gcc{}/\cite{retpoline} \\
      \makecell{Retpoline for Returns}& \compretpR{\cdot} & $\contract{}{\Rv}$ & \gcc{}/\cite{ret2spec} \\
      \makecell{Fences for Returns}  &  \complfenceR{\cdot} & $\contract{}{\Rv}$ & \cite{ret2spec} \\
      \makecell{Fences for Stores}  & \complfenceS{\cdot} & $\contract{}{\Sv}$ & \cite{v4_fence}  \\
      \makecell{Ultimate SLH for Branches}& \compuslhB{\cdot} & $\contract{}{\Bv}$ &  \cite{uslh} (extends \clang{}'s SLH) \\ 
      \makecell{Strong SLH for Branches}  & \compsslhB{\cdot} & $\contract{}{\Bv}$ & \cite{S_sec_comp} (extends \clang{}'s SLH)\\
      \makecell{Fences for Branches}  & \complfenceB{\cdot} & $\contract{}{\Bv}$ & ICC/\clang{}  
      \\
      \bottomrule
    \end{tabular}
    \caption{Analyzed compiler countermeasures.}
    \label{tab:comp-countermeasures-new}
\end{table}

\paragraph*{Fences Against Straight-Line Speculation ($\complfenceSLS{\cdot}$)}\label{ssec:sls-fences}
Modern CPUs can speculatively bypass $\retC$ instructions \cite{sls-whitepaper, sls-whitepaper2}. 
Compilers like \gcc{} and \clang{} (with option \textsf{-mharden-sls=all}) prevent this by injecting a speculation barrier after every $\retC$ instruction. Since the $\retC$ instruction is an unconditional change in control flow, the barrier\footnote{For x86, the int3 single-byte instruction is used to reduce the binary size.} will not be executed architecturally but only when straight-line speculation is happening.
We model this countermeasure in the $\complfenceSLS{\cdot}$ compiler that inserts a barrier after every $\retC$ instruction.

\paragraph*{Retpoline for Indirect Jumps ($\compretpJ{\cdot}, \compretpJF{\cdot}$)}\label{ssec:v2-retpoline}
Spectre-BTB attacks~\cite{spectre} exploit speculation over indirect jumps.
The retpoline countermeasure~\cite{retpoline} replaces all indirect jumps in the code with a return trampoline, i.e., with  a construct that traps the speculation in an infinite loop. 
Retpoline is available in all general compilers like \clang{}
(\textsf{-mretpoline}) and \gcc{} (with option \textsf{-mindirect-branch}) and is widely deployed because current developed hardware mitigations are not enough to protect against  indirect jump speculation~\cite{barberis2022branch}.
We consider two models of the retpoline countermeasure.
The $\compretpJ{\cdot}$ compiler replaces every indirect jump instruction $\pjmp{e}$ with a return trampoline.
Additionally, we consider the $\compretpJF{\cdot}$ compiler, which inserts an additional fence after $\retC$ instructions in trampolines (to prevent straight-line speculation).
This compiler corresponds to activating both flags \textsf{-mindirect-branch} and \textsf{mharden-sls=all} in \gcc{}.

\paragraph*{Retpoline for Returns ($\compretpR{\cdot}$)}\label{v5-retpoline}
A variant of the retpoline countermeasure has been proposed to prevent Spectre-RSB attacks~\cite{ret2spec}.
This countermeasure (implemented in $\gcc$ with option \textsf{-mfunction-return}) replaces each  $\retC$ instruction with a return trampoline; trapping misprediction caused by $\retC$ instructions.
We modeled this countermeasure in the $\compretpR{\cdot}$ compiler.

\paragraph*{Fences for Returns ($\complfenceR{\cdot}$)}\label{ssec:v5-fences}
\citet{ret2spec} propose to add an \textbf{lfence} instruction after every \textbf{call} instruction.
This ensures that mis-speculations over $\retC$ instructions involving the Return Stack Buffer will always land on one of the injected speculation barriers, thereby preventing speculative leaks.
We model this countermeasure in the $ \complfenceR{\cdot}$ compiler, which replaces every $\textbf{call}\ f$ instruction with $\textbf{call}\ f; \textbf{spbarr}$.

\paragraph*{Fences for Stores ($\complfenceS{\cdot}$)}\label{ssec:v4-fences}
To prevent speculation over store-to-load bypasses~\cite{v4_fence} (also known as Spectre-STL), Intel suggested to insert \textbf{lfence} instruction after every store, thereby ensuring that all stores are committed to main memory and preventing speculation.
However, no mainstream compiler implements this countermeasure due to the high performance overhead.
We model this countermeasure in the $ \complfenceS{\cdot}$ compiler, which replaces (1) every $\textbf{store}\ x, e$ instruction with $\textbf{store}\ x,e; \textbf{spbarr}$, and (2) every $\textbf{store}_{\mathit{prv}}\ x, e$ instruction with $\textbf{store}_{\mathit{prv}}\ x,e; \textbf{spbarr}$.

\paragraph{(Strong) Speculative Load Hardening (SSLH, $\compsslhB{\cdot}$)}
Modern CPUs speculate over the outcome of branch instructions~\cite{spectre}.
\clang{} (with option \textsf{-mspeculative-load-hardening}) protects against these speculative leaks by (1) using a speculation flag that tracks whenever misprediction is currently happening or not, and (2) using the flag to conditionally mask loads and stores to prevent the leaks \cite{slh}.
\citet{S_sec_comp} investigated the security of SLH and showed it insecure w.r.t. $\contract{}{\Bv}$. 
They proposed a improved version called strong-SLH and prove it secure w.r.t. to $\contract{}{\Bv}$ semantics.
We evaluate their compiler\footnote{\label{fn:shared}Technically, they targeted a While language while we have an assembly-like language \muasm{}. However, the translation is straightforward.} $\compsslhB{\cdot}$ in our framework to see if we can lift the security guarantees to stronger semantics.

\paragraph{Ultimate Speculative Load Hardening (USLH, $\compuslhB{\cdot}$)}
\citet{uslh} showed that variable-latency arithmetic instructions can leak secret information under speculation, and this is not prevented by speculative-load hardening~\cite{slh} or by its strong-variant $\compsslhB{\cdot}$~\cite{S_sec_comp}.
To prevent these speculative leaks, they propose the ``ultimate speculative load hardening'' compiler, which extends strong-SLH by additionally masking inputs to variable-latency arithmetic instructions.

We modelled the core aspects of ultimate USLH in the $\compuslhB{\cdot}$ compiler.
For this, we extended \muasm{} to support a dedicated instruction $\vassign{x}{y}{z}$ denoting variable-latency computations.
Furthermore, we extended \muasm{} events to include a new observation $\opObs{y}{z}$ that is emitted by the new instruction $\vassign{x}{y}{z}$. 
\begin{align*}
\mi{Instructions}~ i \bnfdef&\ \cdots \mid  \vassign{x}{y}{z}
&
\mi{\mu arch.\ Acts.}~\delta \bnfdef&\ \cdots \mid \com{\opObs{x}{y}}
\end{align*}
These extensions augment the constant time observer (used in our model in \Cref{ssec:atk-model} as well as in the strong-SLH formalization in \cite{S_sec_comp}) to capture leaks related with variable-latency instructions. 
We denote the $\contract{}{\Bv}$ semantics extended with the new observer as $\contract{ct + vl}{\Bv}$, and we have the following leakage ordering: $\contract{ct}{\Bv}\sqsubseteq \contract{ct + vl}{\Bv}$.

\paragraph*{Fences for Branches ($\complfenceB{\cdot}$)}
Another approach to prevent leaks due to branch misprediction is injecting \texttt{lfence} instructions after branch instructions. Compilers like Intel ICC (with flag \textsf{-mconditional-branch=all-fix}) and \clang{} (with flag \textsf{-x86-speculative-load-hardening-lfence}) implement this countermeasure.
This countermeasure was already modelled (and proved secure for $\contractSpec{}{\Bv}$) in \cite{S_sec_comp} as $\complfenceB{\cdot}$\textsuperscript{\ref{fn:shared}},
which replaces every $\jzC{}\ x, l$ with $\jzC{}\ x, l ; \textbf{spbarr} $ and we want to investigate the applicability of the lifting theorem to $\complfenceB{\cdot}$'s security guarantees as well.

\subsection{Security of the Compilers}\label{ssec:sec-compilers}

Here, we report the results of the security analysis of each compiler with respect to their base speculative semantics, as indicated in \Cref{tab:comp-countermeasures-new}.
\Cref{thm:security-base-semantics} states that each compiler is \rssp{} w.r.t. its  base speculative semantics.
Even though we present all results altogether, we remark that each point in \Cref{thm:security-base-semantics} corresponds to an independent secure compilation proof.

\begin{theorem}[Compiler Security]\label{thm:security-base-semantics}
The following statements hold:
\begin{asparaitem}
\item \textit{($\SLSv$: Fence is secure)} $\contractSpec{}{\SLSv} \vdash \complfenceSLS{\cdot} : \rssp$
\item \textit{($\Rv$: Retpoline is secure)}  $\contractSpec{}{\Rv} \vdash \compretpR{\cdot} : \rssp$
\item \textit{($\Rv$: Fence is secure)}  $\contractSpec{}{\Rv} \vdash \complfenceR{\cdot} : \rssp$
\item \textit{($\Jv$: Retpoline is secure)}  $\contractSpec{}{\Jv} \vdash \compretpJ{\cdot} : \rssp$
\item \textit{($\Jv$: Retpoline with fence is secure)} $\contractSpec{}{\Jv} \vdash \compretpJF{\cdot} : \rssp$
\item \textit{($\Sv$: Fence is secure)} 
    $\contractSpec{}{\Sv} \vdash \complfenceS{\cdot} : \rssp$
\item \textit{($\Bv$: USLH is secure)}
     $\contractSpec{ct + vl}{\Bv} \vdash \compuslhB{\cdot} : \rssp$
\item \textit{($\Bv$: SSLH is secure \cite{S_sec_comp})} 
    $\contractSpec{}{\Bv} \vdash \compsslhB{\cdot} : \rssp$
\item \textit{($\Bv$: Fence is secure \cite{S_sec_comp})}
    $\contractSpec{}{\Bv} \vdash \complfenceB{\cdot} : \rssp$
\end{asparaitem}
\end{theorem}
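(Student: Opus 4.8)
The plan is to prove each of the nine statements independently, as the theorem itself remarks. Every statement has the shape $\contract{}{x} \vdash \comp{\cdot} : \rssp$, so I first unfold \Cref{def:compiler-contract-security-paper}: I must show that for every source $\src{P} \in \src{L}$ with $\contract{}{NS} \vdash \src{P} : \rss$, the compiled program satisfies $\contract{}{x} \vdash \comp{\src{P}} : \rss$. Since RSS holds trivially under the non-speculative semantics (\cite[Theorem 3.9]{S_sec_comp}), the premise is always satisfied, so it suffices to establish $\contract{}{x} \vdash \comp{\src{P}} : \rss$ for every $\src{P}$ and every valid attacker $\ctxc{}$. Unfolding \Cref{def:rdss-contract-paper}, this means showing that every action $\lambda^\sigma$ on any trace in $\behx{\SInit{\ctxc{}\hole{\comp{\src{P}}}}}$ carries the safe taint $\safeta$. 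I would argue by induction on the length of the speculative execution, maintaining a per-family invariant on the speculative state. Two observations make the argument uniform: (i) non-speculative actions (those outside any $\startl{x}$/$\rollbl{x}$ block) inherit their safety from the non-speculative triviality result, since non-speculation steps of $\contract{}{x}$ coincide with $\contract{}{NS}$ steps on the same configuration and taints; and (ii) every speculation-starting rule fires only inside component (non-import) code, so speculative transactions only ever arise within the compiled, protected code and never inside the attacker $\ctxc{}$. Thus the whole proof reduces to bounding the actions emitted inside speculative transactions of $\comp{\src{P}}$.

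The nine compilers split into three families. The first is the fence-based countermeasures ($\complfenceSLS{\cdot}$, $\complfenceR{\cdot}$, $\complfenceS{\cdot}$, $\complfenceB{\cdot}$), whose invariant is that every speculative transaction started by the protected instruction is rolled back before emitting a data-dependent action. The compiler places a $\pbarrier$ on the speculative continuation of each speculation source, so the first instruction executed inside the transaction is the barrier, which forces the speculation window to $0$ (\Cref{tr:sls-barr-spec-paper} and its analogues) and triggers an immediate rollback. Hence the only speculative actions are the $\startl{x}$ and $\rollbl{x}$ markers, which are safe by definition. The placement must be checked case by case: for $\complfenceS{\cdot}$ the bypassed $\storeC$ falls through onto the barrier; for $\complfenceR{\cdot}$ the barrier sits after every $\callC$, so a mispredicted $\retC$—which lands on a return-stack-buffer entry pointing just past a call—hits a barrier; and for $\complfenceB{\cdot}$ I must verify that \emph{both} mispredicted continuations of the branch (fall-through and target) reach a barrier before any $\loadC$/$\storeC$, which is exactly the case already discharged in \cite{S_sec_comp} and reused here.

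The second family is the retpoline-based countermeasures ($\compretpJ{\cdot}$, $\compretpJF{\cdot}$, $\compretpR{\cdot}$), where the protected $\jmpC$/$\retC$ is replaced by a return trampoline that traps speculation in a tight loop containing no memory accesses and no secret-dependent branches. The invariant is that once a transaction enters the trampoline it spins, emitting only $\pcl{l}$ observations whose $l$ are fixed program addresses, until the window is exhausted and the transaction is rolled back; such observations are identical across low-equivalent runs and hence carry the $\safeta$ taint. For $\compretpJF{\cdot}$ the additional $\pbarrier$ after the trampoline's own $\retC$ matters only for straight-line speculation and is inert under the base semantics $\contract{}{\Jv}$, so its base-semantics proof coincides with that of $\compretpJ{\cdot}$.

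The third family, the masking countermeasures $\compsslhB{\cdot}$ and $\compuslhB{\cdot}$, is where the main difficulty lies, because there is no ``speculation halts immediately'' shortcut: speculation is allowed to proceed and safety is instead enforced by masking. The proof needs a genuine invariant threaded through the entire speculative execution stating (i) that the speculation flag soundly over-approximates ``a misprediction is ongoing'', and (ii) that every value used as a $\loadC$/$\storeC$ address (and, for USLH, every operand of the variable-latency instruction $\vassign{x}{y}{z}$) is forced to a safe constant whenever the flag is set, so the emitted $\loadl{n}$/$\storel{n}$ (resp. $\opObs{y}{z}$) actions are tainted $\safeta$. For $\compsslhB{\cdot}$ this is precisely the strong-SLH result of \cite{S_sec_comp}, which I can invoke after the straightforward transliteration from their While language to \muasm{}. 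For $\compuslhB{\cdot}$ I would extend that invariant to the new instruction and observation and re-establish it under the extended observer, where the leakage ordering $\contract{ct}{\Bv} \sqsubseteq \contract{ct + vl}{\Bv}$ is used. Getting this invariant right—in particular the interaction between the flag update, the masking of operands, and taint propagation across speculatively executed arithmetic—is the principal obstacle of the whole theorem; by contrast, once the ``barrier/trampoline stops speculation'' invariant is stated, the fence and retpoline families are comparatively routine.
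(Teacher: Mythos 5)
Your proposal is sound, and your per-family invariants (barrier forces immediate rollback; trampoline spins emitting only fixed-address $\pcl{l}$ observations; SLH flag soundly over-approximates misprediction and masks addresses/operands) are exactly the technical heart of the paper's proofs. However, the overall architecture is genuinely different. You exploit the fact that $\contract{}{NS} \vdash \src{P} : \rss$ holds vacuously to collapse \rssp{} into ``every compiled program is \rss{} under $\contract{}{x}$'' and then prove that directly by induction on the execution. The paper instead runs the standard secure-compilation pipeline for each compiler: it defines a cross-language trace relation and a context-based backtranslation, proves the property-free characterization $\mi{RSSC}$ via initial-state relatedness plus forward/backward simulation lemmas for compiled and backtranslated code, and only then uses the equivalence with \rssp{} together with the triviality of source-level \rss{} to reach the stated result. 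Your route is more elementary and avoids the backtranslation machinery entirely; the paper's route buys reusable simulation infrastructure (the same backtranslation and relations are shared across all nine compilers) and a relational statement that is also what the SNI-level results lean on. The actual hard content is the same either way: your invariants correspond to the paper's forward-simulation cases for the protected instructions and, for SLH, to its ``speculation proceeds safely'' lemmas.

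One caveat you should make explicit: reducing the problem to ``actions emitted inside speculative transactions of $\comp{\src{P}}$'' is not quite enough, because a transaction started in component code can speculatively transfer control into attacker code (e.g.\ a speculatively executed call to an imported function), and the attacker's instructions are not instrumented. For the fence and retpoline families this is moot (the barrier or trampoline is reached before any such transfer), but for $\compsslhB{\cdot}$ and $\compuslhB{\cdot}$ your invariant must additionally record that \emph{all registers and public memory are safely tainted} throughout speculation, so that uninstrumented attacker code executed speculatively can only emit safe observations; this is precisely the register/memory clause the paper threads through its speculative-state relation. Without that clause the induction does not close in the attacker-code case.
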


\subsection{Independence of the Compilers}\label{ssec:ind-compilers}

We now investigate whether our compilers satisfy the Independence in Extension condition.
Due to space constraints, we do not report on the  Independence and Syntactic Independence for all our compilers and combinations and refer the interested reader to our companion report.
Instead, \Cref{thm:ind-res-paper} reports the strongest possible semantics for which we can prove (Syntactic) Independence for each compiler.

\begin{theorem}[Compiler Independence]\label{thm:ind-res-paper}
The following statements hold:
\begin{asparaitem}
\item \textit{($\SLSv$: Fence Independence)} $\SI{\contract{}{\SLSv}}{\complfenceSLS{\cdot}}{\contract{}{\Bv+ \Jv+ \Sv + \SLSv}}$ 
\item \textit{($\Rv$: Fence Independence)} $\SI{\contract{}{\Rv}}{\complfenceR{\cdot}}{\contract{}{\Bv+ \Jv +\Sv + \Rv}}$ 
\item \textit{($\Sv$: Fence Independence)} $\SI{\contract{}{\Sv}}{\complfenceS{\cdot}}{\contract{}{\Bv + \Jv + \Sv + \Rv}}\ \text{and}\ \SI{\contract{}{\Sv}}{\complfenceS{\cdot}}{\contract{}{\Bv + \Jv + \Sv + \SLSv}}$
\item \textit{($\Bv$: Fence Independence)}  $\SI{\contract{}{\Bv}}{\complfenceB{\cdot}}{\contract{}{\Bv + \Jv + \Sv + \Rv}}\ \text{and}\ \SI{\contract{}{\Bv}}{\complfenceB{\cdot}}{\contract{}{\Bv + \Jv + \Sv + \SLSv}}$
\item \textit{($\Bv$: USLH Independence)} $\IND{\contract{}{\Bv + \Jv + \Sv + \Rv}}{\compuslhB{\cdot}} \ \text{and}\ \IND{\contract{}{\Bv + \Jv + \Sv + \SLSv}}{\compuslhB{\cdot}}$
\item \textit{($\Bv$: SSLH Independence)}  $\IND{\contract{}{\Bv + \Jv + \Sv + \Rv}}{\compsslhB{\cdot}} \ \text{and}\ \IND{\contract{}{\Bv + \Jv + \Sv + \SLSv}}{\compsslhB{\cdot}}$
\item \textit{($\Rv$: Retpoline Independence)}  $\IND{\contract{}{\Bv + \Jv + \Sv + \Rv}}{\compretpR{\cdot}}$ 
\item \textit{($\Jv$: Retpoline Independence)} $\IND{\contract{}{\Bv + \Jv + \Sv + \Rv}}{\compretpJ{\cdot}}$ 
\item \textit{($\Jv$: Retpoline with Fence Independence)} $\IND{\contract{}{\Bv + \Jv + \Sv + \Rv}}{\compretpJF{\cdot}} \ \text{and}\ \IND{\contract{}{\Bv + \Jv + \Sv + \SLSv}}{\compretpJF{\cdot}}$
\end{asparaitem}
\end{theorem}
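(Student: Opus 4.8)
The plan is to partition the nine statements into two groups according to the proof technique they admit. The four \emph{fence-based} compilers ($\complfenceSLS{\cdot}$, $\complfenceR{\cdot}$, $\complfenceS{\cdot}$, and $\complfenceB{\cdot}$) are claimed to enjoy \emph{Syntactic} Independence, so for these I would simply verify the two set-disjointness conditions of \Cref{def:si} and then invoke \Cref{cor:syn-ind-ind-paper} to obtain full Independence for free. The remaining five compilers (the two retpolines $\compretpJ{\cdot}$, $\compretpJF{\cdot}$, the return retpoline $\compretpR{\cdot}$, and the two hardening compilers $\compsslhB{\cdot}$, $\compuslhB{\cdot}$) insert instructions that fall inside the forbidden sets of \Cref{def:si}, so for these I would discharge \Cref{def:ind-paper} directly.

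For the fence group the verification is uniform: every such compiler inserts only the speculation barrier, i.e.\ $\insertedInstrs{\comp{\cdot}} = \{\kywd{spbarr}\}$. Since $\kywd{spbarr}$ is neither a branch, an (indirect) jump, a store, a call, nor a return, it lies outside $\speculationInstrs{\contract{}{y}}$ for every combined extension $\contract{}{y}$ appearing in the statement --- e.g.\ $\speculationInstrs{\contract{}{\Bv + \Jv + \Sv + \Rv}} = \{\jzC,\jmpC,\storeC,\callC,\retC\}$ --- and also outside $\{\jzC,\jmpC,\storeC,\loadC,\leftarrow\}$. Both intersections are therefore empty, establishing Syntactic Independence. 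I would additionally note that the listed extension is the \emph{strongest} well-formed one: $\contract{}{\Rv}$ and $\contract{}{\SLSv}$ both speculate on $\retC$ and so cannot be composed, which is exactly why $\complfenceSLS{\cdot}$ (base $\SLSv$) is lifted only to $\contract{}{\Bv + \Jv + \Sv + \SLSv}$ while $\complfenceS{\cdot}$ and $\complfenceB{\cdot}$ (bases $\Sv$, $\Bv$, compatible with either) obtain two maximal extensions.

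For the three retpoline compilers Syntactic Independence fails because the return trampoline contains $\retC$ (together with $\callC$ and $\jmpC$), which coincide with the speculation triggers of $\contract{}{\Rv}$ and $\contract{}{\Jv}$. Here I would prove \Cref{def:ind-paper} semantically: assuming $\contract{}{y}\vdash P : \rss$, I would characterise the traces of $\comp{P}$ under $\contract{}{y}$ and argue that any speculation triggered \emph{inside} the inserted trampoline is trapped in its loop, so that the only additional actions it contributes are control-flow observations $\pcl{l}$ over the fixed, data-independent trampoline labels, which are tainted $\safeta$; meanwhile the trampoline's architectural behaviour reproduces the original jump or return, so every data-dependent observation coincides with one of $P$ and remains safe by hypothesis. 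Hence $\comp{P}$ introduces no new unsafe action. This trap argument is essentially the one underlying Trapped Speculation, and it also explains the absence of $\SLSv$-extensions for $\compretpJ{\cdot}$ and $\compretpR{\cdot}$: straight-line speculation bypasses the trampoline's final $\retC$ and escapes the loop, so Independence genuinely breaks there, whereas the fenced variant $\compretpJF{\cdot}$ stops precisely this escape and therefore does extend to $\contract{}{\Bv + \Jv + \Sv + \SLSv}$.

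The hardening compilers $\compsslhB{\cdot}$ and $\compuslhB{\cdot}$ are the main obstacle. They fail Syntactic Independence because their masking code uses register assignments ($\leftarrow$) and loads, both in the forbidden set. I would therefore prove Independence by a taint-preserving simulation under the full combined extension $\contract{}{y}$: starting from $\contract{}{y}\vdash P:\rss$, I would maintain an invariant relating the speculation flag and the masked registers of $\comp{P}$ to the state of $P$, and show that each inserted masking instruction either reproduces a safe action of $P$ or emits a fresh action tainted $\safeta$ because it operates on the (safe) flag or on values the masking has neutralised. The delicate point is that this must hold while several speculation mechanisms interact inside $\contract{}{y}$ and the flag is only updated at branches; the argument goes through only because the hypothesis already rules out any $\contract{}{y}$-leak in $P$, so the masking merely layers safe bookkeeping on top of an already-safe trace rather than having to \emph{repair} a leak. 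I would stress that this is why the documented failure of SLH against indirect jumps does \emph{not} surface here: that failure concerns \emph{nested} transactions and is a violation of Safe Nesting --- a separate precondition of the lifting theorem --- and not of Independence.
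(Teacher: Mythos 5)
Your decomposition matches the paper's: the four fence compilers are handled by checking the two disjointness conditions of \Cref{def:si} (since $\insertedInstrs{\comp{\cdot}}=\{\kywd{spbarr}\}$) and invoking \Cref{cor:syn-ind-ind-paper}, while the retpoline and SLH compilers get direct Independence arguments. Your retpoline argument is also essentially the paper's: the only new speculation source is the $\retC$ inside the trampoline, the mispredicted RSB target lands on the barrier-guarded infinite loop, so the trapped transaction emits nothing unsafe before rollback; and the $\SLSv$ escape through the trampoline's final $\retC$ is exactly why $\compretpJF{\cdot}$ (with its extra fence) regains Independence where $\compretpJ{\cdot}$ does not. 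Where you genuinely diverge is the SLH case. You propose a taint-preserving simulation maintaining an invariant on the speculation flag across all interacting mechanisms. The paper's argument is far lighter: Independence is only ever \emph{used} (in \Cref{thm:lifted-guarantees-paper}) with extension semantics $\contract{}{y}$ that exclude $\semb$ (namely $\contract{}{\Sv+\Rv}$ and $\contract{}{\Sv+\SLSv}$), and under such a $\contract{}{y}$ the flag $\rslh$ is never set to true, so every inserted conditional mask is a semantic no-op; since $\compsslhB{\cdot}$/$\compuslhB{\cdot}$ also insert no instructions that trigger speculation in $\contract{}{y}$, the compiled program's $\contract{}{y}$-behaviour coincides (up to safe bookkeeping actions) with the source's, and $\rss$ transfers directly. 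Your heavier simulation would be needed only if one insisted on Independence w.r.t.\ an extension containing $\Bv$ itself, which the lifting theorem never requires (Security in Origin already covers the $\Bv$ transactions, and the cross-mechanism interaction you correctly flag is delegated to Safe Nesting). So your proof is not wrong, but it does more work than necessary for the statements the paper actually discharges.
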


As stated in \Cref{thm:ind-res-paper}, for compilers $\complfenceSLS{\cdot}$, $\complfenceR{\cdot}$, $\complfenceS{\cdot}$, and $\complfenceB{\cdot}$ , we directly proved that Syntactic Independence holds even for the strongest possible combined semantics.
Given that Syntactic Independence implies Independence (\Cref{cor:syn-ind-ind-paper}), this allows us to derive Independence results for all these compilers through simple syntactic checks.

For the $\compretpJ{\cdot}$, $\compretpJF{\cdot}$ and $\compretpR{\cdot}$ compilers, instead, we have to fall back to a full Independence proof for the strongest semantics.
The reason is that these compilers add $\retC$ instructions to the code which could interact with $\Rv$ or $\SLSv$, i.e., $\insertedInstrs{\comp{\cdot}} \cap \speculationInstrs{\contract{}{\Rv}} = \{\retC \}$, which violates SI.
However, for weaker combinations not including $\Rv$ or $\SLSv$,  SI applies.

We also remark that Independence does not hold for the retpoline compiler $\compretpJ{\cdot}$ and the straight-line speculation semantics $\contract{}{\SLSv}$, i.e.,  $\contract{}{\SLSv} \nvdash \compretpJ{\cdot} : I$.
The reason is that the $\retC$ instructions injected by $\compretpJ{\cdot}$ as part of return trampolines can be speculatively bypassed under $\contract{}{\SLSv}$.
The additional speculation barrier injected by the strengthened compiler $\compretpJF{\cdot}$ fixes this issue and allows recovering Independence  w.r.t. $\SLSv$ as well, i.e., $\contract{}{\SLSv} \vdash \compretpJF{\cdot} : I$.

Finally, for the SLH compilers $\compuslhB{\cdot}$ and $\compsslhB{\cdot}$, we again had to perform full Independence proofs since these compilers inject instructions, which violate SI requirements, for tracking the speculation flag and for masking  $\storeC$ and $\loadC$ instructions.

\subsection{Safe Nesting of the Compilers}\label{ssec:trapped-compilers}

The last condition to fulfill for lifting security guarantees is Safe Nesting.
Rather than directly proving Safe Nesting, we study which compilers trap speculation according to \Cref{def:trapped-spec-paper}.
\Cref{thm:safe-nesting-results} precisely characterize which compilers enjoy this property.
Combining \Cref{thm:safe-nesting-results}  with the fact that trapped speculation implies Safe Nesting (\Cref{lem:trapped-imp-safe-nest-paper}), gives us a precise characterization of which compilers enjoy the Safe Nesting property.

\begin{theorem}[Compiler Safe Nesting]\label{thm:safe-nesting-results}
The following statements hold:
\begin{asparaitem}
\item \textit{($\SLSv$: Fence traps speculation)} $\contract{}{\SLSv} \vdash \trappedS{\complfenceSLS{\cdot}}$
\item \textit{($\Rv$: Retpoline traps speculation)}  $\contract{}{\Rv} \vdash \trappedS{\compretpR{\cdot}}$
\item \textit{($\Rv$: Fence traps speculation)}  $\contract{}{\Rv} \vdash \trappedS{\complfenceR{\cdot}}$
\item \textit{($\Jv$: Retpoline traps speculation)}  $\contract{}{\Jv} \vdash \trappedS{\compretpJ{\cdot}}$
\item \textit{($\Jv$: Retpoline with fence traps speculation)} $\contract{}{\Jv} \vdash \trappedS{\compretpJF{\cdot}}$
\item \textit{($\Sv$: Fence traps speculation)} 
    $\contract{}{\Sv} \vdash \trappedS{\complfenceS{\cdot}}$
\item \textit{($\Bv$: Fence traps speculation)} 
    $\contract{}{\Bv} \vdash \trappedS{\complfenceB{\cdot}}$
\end{asparaitem}
\end{theorem}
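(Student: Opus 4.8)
The plan is to prove each of the seven items by unfolding Trapped Speculation (\Cref{def:trapped-spec-paper}): for the fixed base semantics $\contract{}{x}$ and compiler $\comp{\cdot}$, I must show that for every compiled program $\comp{P}$ and every trace $\tauStack \in \behx{\comp{P}}$, every action in the speculative projection $\specProject{\tauStack}$ is either a $\startl{x}$ or a $\rollbl{x}$. The unifying idea is that each compiler guarantees that every speculative transaction is \emph{trapped immediately}: the first instruction executed on the mispredicted path is a barrier $\pbarrier$ that drives the speculation window to $0$ (or, for the jump retpolines, the speculation source is removed outright). Immediate trapping has two consequences I would exploit: no further speculation source is reachable within the window, so transactions never nest; and the transaction body emits no $\loadl{n}$, $\storel{n}$, or $\pcl{l}$ action. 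Granting these, $\specProject{\tauStack}$ is a concatenation of $\startl{x}\cdot\rollbl{x}$ blocks, and the result follows by induction on the length of the execution.

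For the four fence-based compilers ($\complfenceSLS{\cdot}$, $\complfenceS{\cdot}$, $\complfenceR{\cdot}$, $\complfenceB{\cdot}$) the core lemma is that the mispredicted instance's program counter points at an injected $\pbarrier$, which I would establish by a case analysis on the speculation source. For $\complfenceSLS{\cdot}$ and $\complfenceS{\cdot}$, the instance pushed by \Cref{tr:sls-spec-paper} and \Cref{tr:v4-skip-paper} has program counter $\Omega(\pc)+1$, and by construction the compiler places a $\pbarrier$ immediately after each $\pret$, respectively each $\storeC$, so the next speculative instruction is that barrier. For $\complfenceR{\cdot}$, \Cref{tr:v5-spec-paper} redirects to the top of the $\Rsb$, so I would carry as an induction invariant that every $\Rsb$ entry is the address of a post-$\callC$ barrier---calls push call-site $+1$, which the compiler makes a $\pbarrier$, while returns only pop---whence the mispredicted target is again a barrier. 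For $\complfenceB{\cdot}$, \Cref{tr:v1-spec-paper} mispredicts into the branch direction opposite the architectural one, and I would invoke the prior formalization's placement ensuring that \emph{both} successors of a branch begin with a $\pbarrier$. In every case, executing the barrier under speculation (\Cref{tr:sls-barr-spec-paper} for $\SLSv$, and the analogous rule for the other semantics) emits $\epsilon$ and sets the window to $0$, and the following step is a rollback (\Cref{tr:sls-rollback-paper} and its analogues); hence the body is empty and the transaction contributes exactly $\startl{x}\cdot\rollbl{x}$.

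For the three retpoline compilers I would split on whether the base semantics' speculation source survives compilation. For $\compretpJ{\cdot}$ and $\compretpJF{\cdot}$ with base $\contract{}{\Jv}$, the source is the register-indirect $\jmpC$, which the compiler eliminates entirely (the trampoline is built from $\callC$, \emph{direct} jumps, and $\pret$, none of which match the premise $x \in \Reg$ of \Cref{tr:v2-spec-paper}); hence no $\contract{}{\Jv}$ transaction is ever opened, $\specProject{\tauStack}$ is empty, and Trapped Speculation holds vacuously. For $\compretpR{\cdot}$ with base $\contract{}{\Rv}$, every $\pret$ is the final instruction of a trampoline whose internal $\callC$ is the most recent call executed before it; that call pushes the capture-loop label onto the $\Rsb$, so \Cref{tr:v5-spec-paper} redirects mis-speculation into the capture loop, which begins with a barrier, and the argument closes as in the fence case.

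The main obstacle is the auxiliary-state reasoning that pins down where mis-speculation lands. For $\complfenceR{\cdot}$ and $\compretpR{\cdot}$ this is the $\Rsb$ invariant---that the entry selected by \Cref{tr:v5-spec-paper} is always a barrier address---which must be threaded through the induction together with the main claim and rests on the call/return stack discipline. The second delicate point is $\complfenceB{\cdot}$: since \Cref{tr:v1-spec-paper} can mispredict into either successor, I must confirm that the formalized compiler fences both directions rather than only the fall-through, as otherwise the branch-target direction could run unfenced code and emit observations inside a transaction. A final routine check is that barrier execution truly emits $\epsilon$ and that any skip instructions or direct jumps inside a retpoline capture loop do not emit data-dependent $\pcl{l}$ observations before the trapping barrier is reached.
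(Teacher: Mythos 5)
Your proposal is correct and follows essentially the same route as the paper: the paper's own justification is simply that for the fence-based compilers the injected $\pbarrier$ is the first instruction on every mispredicted path (so the window drops to $0$ and the transaction body contributes nothing but the rollback), that for $\compretpR{\cdot}$ the $\Rsb$ top always points into the capture loop which emits no observations before its barrier, and that for $\compretpJ{\cdot}$/$\compretpJF{\cdot}$ the indirect jumps are eliminated so $\contract{}{\Jv}$ never opens a transaction; your version merely makes the $\Rsb$ invariant and the case analysis explicit.

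One caveat worth keeping: the concern you raise about $\complfenceB{\cdot}$ is real. As literally defined in the paper ($\jzC\ x,l \mapsto \jzC\ x,l;\textbf{spbarr}$), only the fall-through successor is fenced, yet \Cref{tr:v1-spec-paper} mispredicts into the branch \emph{target} $l$ whenever the architectural path falls through, and nothing guarantees $l$ begins with a barrier. The claim for $\complfenceB{\cdot}$ therefore rests on the compiler actually being the both-branches variant inherited from the prior While-language formalization (analogous to the $l_{\mathit{new}}$ redirection used by the SLH compilers for the taken direction); your proof is conditional on confirming that placement, which is exactly the right thing to check.
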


For all our compilers that rely on inserting $\textbf{spbarr}$ as a countermeasure, proving Trapped Speculation was easy since the speculation barriers immediately stop any speculative transaction.
In contrast, for the compilers that inject retpolines, Trapped Speculation follows from the fact that the return-trampoline traps speculation in a loop that does not produce visible events.

Trapped Speculation, however, does not hold for SLH-based countermeasures because these countermeasures do not block speculative execution but rather prevent leaks during speculation.
For $\compsslhB{\cdot}$ and $\compuslhB{\cdot}$, therefore, we need to  directly prove Safe Nesting for each combined semantics, which requires reasoning about all interactions of component semantics. 
\Cref{thm:safe-nesting-results-slh} reports the strongest semantics for which safe nesting holds. 

\begin{theorem}[Compiler Safe Nesting]\label{thm:safe-nesting-results-slh}
The following statements hold:
\begin{asparaitem}   
\item \textit{($\Bv$: SSLH Safe Nesting)}  $\contract{}{\Bv+\Sv+\Rv} \vdash \compsslhB{\cdot} : \safeN{}$ \text{ and } $\contract{}{\Bv+\Sv+\SLSv} \vdash \compsslhB{\cdot} : \safeN{}$
\item \textit{($\Bv$: USLH Safe Nesting)}  $\contract{\mathit{ct}+\mathit{vl}}{\Bv+\Sv+\Rv} \vdash \compuslhB{\cdot} : \safeN{}$ \text{ and } $\contract{\mathit{ct}+\mathit{vl}}{\Bv+\Sv+\SLSv} \vdash \compuslhB{\cdot} : \safeN{}$
\end{asparaitem}
\end{theorem}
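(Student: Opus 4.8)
The plan is to prove Safe Nesting \emph{directly}, since the SLH-based compilers $\compsslhB{\cdot}$ and $\compuslhB{\cdot}$ mask values rather than stopping speculation, so (as already observed before the theorem) Trapped Speculation fails and \Cref{lem:trapped-imp-safe-nest-paper} is not available. Concretely, fixing a source program $P$ and a trace $\tauStack$ produced by $\comp{P}$ under the combined semantics (with the combination instantiated to $\Bv+\Sv+\Rv$ or $\Bv+\Sv+\SLSv$), I would show that for every nested occurrence $\startl{b} \cdot \tauStack'' \cdot \rollbl{b}$ sitting inside an outer transaction $\startl{a} \cdot \tauStack' \cdot \rollbl{a}$ with $a \neq b$ (as in \Cref{def:safe-nest-paper}), we have $\safe{\tauStack''}$. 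The heart of the argument is an execution invariant on the speculative states of $\comp{P}$ that couples the SLH speculation flag with the taints of the register file and memory, strong enough to force every data-dependent action emitted at nesting depth $\geq 2$ to carry the taint $\safeta$.

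To establish the invariant I would induct on the combined small-step reduction $\specarrowxy{}$, reusing the flag/masking reasoning from the base security results $\contract{}{\Bv} \vdash \compsslhB{\cdot} : \rssp$ and $\contract{\mathit{ct}+\mathit{vl}}{\Bv} \vdash \compuslhB{\cdot} : \rssp$ of \Cref{thm:security-base-semantics}. The invariant has two parts: (i) the speculation flag soundly over-approximates branch-initiated speculation, i.e. it is set on the mispredicted side of every $\Bv$-transaction; and (ii) a taint-propagation clause guaranteeing that, once masking is active, every value that can flow into a $\loadl{n}$, a $\storel{n}$ (and, for $\compuslhB{\cdot}$, an $\opObs{y}{z}$) observation is tainted $\safeta$. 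The crucial step is to verify that the flag and these taints are \emph{preserved}, and never spuriously reset, by the rules that do not manipulate the flag register, namely the store-bypass rule (\Cref{tr:v4-skip-paper}), the return-speculation rule (\Cref{tr:v5-spec-paper}), and the straight-line rule (\Cref{tr:sls-spec-paper}): each continues execution inside instrumented code (at $pc+1$, or at a return address landing on SLH-compiled code) without clobbering the flag—exactly the property that breaks for indirect jumps and explains why $\Jv$ is excluded from both combinations. I would use Confluence and Projection Preservation (\Cref{ssec:bg-combinations}) to align the combined execution with its single-mechanism projections $\specProjectComp{\tauStack}{\cdot}$, so that the base-semantics masking argument transports to the composed setting.

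With the invariant in hand, the Safe Nesting conclusion follows by case analysis on the mechanism $b$ of the \emph{inner} transaction. If $b = \Bv$, the branch opening the inner transaction sets the flag at its start, masking is active throughout $\tauStack''$, and part (ii) makes every inner action safe. If $b \in \{\Sv,\Rv,\SLSv\}$, the inner transaction does not itself set the flag; when the enclosing context is (or descends from) a $\Bv$-transaction the flag is already set and masking again yields $\safe{\tauStack''}$, while the remaining subcase—where \emph{no} transaction in the nesting chain is branch-initiated—is discharged by the taint clause (ii) discussed below. For $\compuslhB{\cdot}$ the same case split additionally covers the new $\opObs{y}{z}$ observations, since the masking of variable-latency operands is governed by the very same flag.

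The main obstacle is precisely part (ii) for the cross-nestings in which neither the inner nor the outer transaction is a branch—e.g. an $\Sv$-transaction nested inside an $\Rv$-transaction, or $\Sv$ inside $\SLSv$—because there no transaction boundary sets the flag, so the proof cannot simply appeal to ``flag set $\Rightarrow$ masked.'' Here Safe Nesting must instead rely on the value-masking performed by (strong/ultimate) SLH keeping the architectural taints along the nested path at $\safeta$, so that any address or operand reaching an inner observation is already safe. Pinning down this taint-preservation statement, and checking that the $\Sv$, $\Rv$, and $\SLSv$ rules genuinely maintain it (in contrast to $\Jv$), is the delicate and proof-intensive part, and is exactly why a full Safe Nesting argument is required for the SLH compilers rather than the easy appeal to Trapped Speculation used for the fence- and retpoline-based countermeasures.
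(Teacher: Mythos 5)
Your core strategy coincides with the paper's: both prove Safe Nesting directly by maintaining an execution invariant that couples the SLH speculation flag with the taints of registers and (public) memory, establish it by induction on the combined small-step relation reusing the base-semantics masking argument, and observe that the $\Sv$-, $\Rv$-, and $\SLSv$-rules never clobber the flag (the paper's invariant $\slhInv$ states that whenever a $\Bv$-transaction is ongoing the relation $\hrel$ holds, i.e.\ $\rslh = \mathit{true}$ and all registers and public memory are tainted $\safeta$, so every action emitted inside the nested region is safe). Where you diverge is in the treatment of nestings in which \emph{no} transaction in the chain is branch-initiated (e.g.\ $\Sv$ inside $\Rv$, or $\Sv$ inside $\SLSv$). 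The paper never has to consider this case: \Cref{def:safe-nest-paper} only quantifies over a pair of transactions drawn from the two \emph{component} semantics of the lifting, here $x = \Bv$ and $y = \Sv{+}\Rv$ (resp.\ $\Sv{+}\SLSv$), with $b \in \{x,y\}\setminus\{a\}$ — so one of the two transactions is always a $\Bv$-transaction, the flag is set on its mispredicted side, and the masking argument applies throughout the inner transaction. This is exactly the content of the paper's auxiliary lemma that ``Invariant $+$ Nesting implies $\hrel$''. Nestings internal to the extension $y$ are instead discharged by the Independence condition together with the CRSSP precondition $\contract{}{y} \vdash \src{P} : \rss$ in \Cref{cor:lift-comp-pres-paper}, not by Safe Nesting.

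This matters because the argument you sketch for that extra case does not go through. You propose to show that ``the value-masking keeps the architectural taints along the nested path at $\safeta$'' even when the flag is never set; but SLH's masking is \emph{conditional} on the flag, so with $\rslh = \mathit{false}$ the conditional assignments are no-ops and nothing prevents an unsafe value from reaching a load or store address inside, say, an $\Sv$-transaction nested in an $\Rv$-transaction of an arbitrary compiled program (recall that Safe Nesting of a compiler quantifies over \emph{all} programs, with no RSS precondition). If that case were genuinely in scope, Safe Nesting would fail for the SLH compilers. So you should drop that obligation and instead make explicit, as the paper does, that the definition's asymmetry between base and extension guarantees a $\Bv$-transaction in every nesting pair; with that observation your proof reduces to the paper's.
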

Intuitively, Safe Nesting holds for the cases in \Cref{thm:safe-nesting-results-slh} because the SLH compilers preserve the invariant that the speculation flag is correctly set even when considering speculative transactions caused by semantics like $\sems$ or $\semr$.

In contrast,  Safe Nesting does not hold for combinations including $\semj$.
This follows from the fact that SSLH/USLH compilers do not correctly propagate the value of the speculation flag during indirect jumps.
As we state in~\Cref{thm:slh-insecure-j}, this also leads to both $\compsslhB{\cdot}$ and $\compuslhB{\cdot}$ being insecure w.r.t. any combination of semantics containing $\semj$ and $\semb$, i.e., any speculative semantics supporting branch and indirect jump speculation.

\begin{theorem}[SLH Insecurity w.r.t $\Jvr$]\label{thm:slh-insecure-j}For any speculative semantics $\contract{}{}$ such that $\contract{}{\Bv + \Jv} \sqsubseteq \contract{}{}$,  $\contract{}{}\nvdash \compsslhB{\cdot} : \rsnip$  and 
    $\contract{}{}\nvdash \compuslhB{\cdot} : \rsnip$.
\end{theorem}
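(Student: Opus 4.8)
The plan is to prove the statement only for the base semantics $\contract{}{\Bv + \Jv}$ and then propagate it to every stronger $\contract{}{}$ by the monotonicity of non-preservation. Concretely, it suffices to establish
$$
\contract{}{\Bv + \Jv} \nvdash \compsslhB{\cdot} : \rsnip
\quad\text{and}\quad
\contract{}{\Bv + \Jv} \nvdash \compuslhB{\cdot} : \rsnip ,
$$
since for any $\contract{}{}$ with $\contract{}{\Bv + \Jv} \sqsubseteq \contract{}{}$ the second item of \Cref{cor:weak-contract-paper} immediately lifts both non-preservations from $\contract{}{\Bv + \Jv}$ to $\contract{}{}$. Hence the whole theorem collapses to two base cases, each witnessed by a single counterexample.

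By \Cref{def:compiler-contract-security-sni-paper}, disproving \rsnip\ at $\contract{}{\Bv + \Jv}$ means exhibiting one source component $\src{P} \in \src{L}$ that is $\rsni$ under $\contract{}{NS}$ yet whose compilation is not $\rsni$ under $\contract{}{\Bv + \Jv}$. The premise is free: \emph{every} source program satisfies $\contract{}{NS} \vdash \src{P} : \rsni$, because the non-speculative semantics never speculates (the remark after \Cref{def:rsni-paper}, i.e.\ \cite[Theorem~3.4]{S_sec_comp}). So I only need to construct a witnessing gadget together with a valid attacker context.

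For the witness I would use a component that is benign on its architectural path but whose SLH masking is defeated by jump speculation. The component contains an indirect jump $\pjmp{e}$ whose \emph{architectural} target is benign, while among the mispredicted targets ranged over by the always-mispredict rule \Cref{tr:v2-spec-paper} there is a block that performs a load $\pload{x}{a}$ from a private (secret) address $a$ followed by a secret-indexed transmitter load emitting a $\loadl{n}$ action. The crux—and the precise reason SLH breaks here—is that both $\compsslhB{\cdot}$ and $\compuslhB{\cdot}$ only refresh the SLH speculation flag at \emph{branch} instructions and never poison it at an indirect jump; therefore, when the transmitter block is entered speculatively under the $\semj$ transaction, the flag still carries its stale, non-poison value, the masks the compiler places on the address and result of $\pload{x}{a}$ degenerate to the identity, and the secret flows to the transmitter. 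Because USLH reuses the very same flag and its additional masking only targets variable-latency arithmetic, the identical gadget defeats it as well; in particular no $\vassign{x}{y}{z}$ instruction is needed in the witness. Note that $\contract{}{\Bv + \Jv}$ subsumes jump speculation, so this transaction is realised there directly (branch speculation is present but irrelevant to the defect).

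To close the base cases I would take two low-equivalent whole programs $\ctxc{}\hole{\src{P}} \loweq W'$ differing only in the private value at $a$, with a valid attacker $\vdash \ctxc{} : \com{atk}$ supplying the jump target. Non-speculatively the transmitter is never reached, so the non-speculative projections of the two $\contract{}{\Bv + \Jv}$-behaviours coincide; but the $\semj$ transaction emits a $\loadl{n}$ whose address encodes the secret, so the full $\contract{}{\Bv + \Jv}$-behaviours differ. This falsifies $\rsni$ for $\compsslhB{\src{P}}$ and, identically, for $\compuslhB{\src{P}}$, giving the two base cases; \Cref{cor:weak-contract-paper} then yields the theorem for every $\contract{}{}$ above $\contract{}{\Bv + \Jv}$. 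The main obstacle is the counterexample engineering rather than the lifting step: I must pin down the exact code emitted by $\compsslhB{\cdot}$ and $\compuslhB{\cdot}$ on the gadget, trace the reductions through the jump transaction while tracking the concrete value of the speculation-flag register, and argue rigorously that the inserted masks leave the secret-dependent address unprotected; care is likewise needed to choose a valid attacker and to confirm that the non-speculative projections genuinely agree, so that the witness falsifies $\rsni$ through a purely speculative difference.
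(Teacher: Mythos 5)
Your overall strategy is the paper's: exhibit one counterexample at a base semantics (the $\contract{}{NS}$-side premise of \rsnip{} being vacuous by the remark after \Cref{def:rsni-paper}), then push non-preservation upward through the leakage order via the second item of \Cref{cor:weak-contract-paper}. That part is sound and is exactly how the paper argues. The one genuine difference is in the gadget itself. The paper's witness (spelled out in the appendix on SLH and $\Jvr$) deliberately guards the indirect jump behind a branch that is never taken architecturally, so that the source program is secure even under $\contract{}{\Jv}$ alone and the leak arises only from the \emph{interaction} of branch and jump speculation --- the $\semj$ transaction, nested inside the $\semb$ one, jumps past the flag-poisoning instructions that the SLH pass inserted for the branch. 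Your gadget drops the guard: the transmitter block is reachable by pure jump misprediction with the flag still at its non-poison value, so your witness already falsifies $\contract{}{\Jv} \vdash \compsslhB{\cdot} : \rsnip$. That is perfectly adequate for the theorem as literally stated (indeed $\contract{}{\Jv} \sqsubseteq \contract{}{\Bv+\Jv} \sqsubseteq \contract{}{}$, so the lift still goes through), and it is arguably simpler to verify; what it does not capture is the failure mode the paper cares about for its lifting framework, namely that Safe Nesting breaks --- i.e.\ that even programs which are already \rss{} for the extension semantics $\contract{}{\Jv}$ become insecure under the combination because jump speculation bypasses the flag-tracking code. If you want your counterexample to also explain why the SLH compilers cannot be lifted to any combination containing $\semj$ (as opposed to merely witnessing the \rsnip{} failure), you should reinstate the branch guard as the paper does. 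The remaining work you flag --- tracing the compiled code and the flag register through the transaction, checking attacker validity and agreement of the non-speculative projections --- is exactly the engineering the paper defers to its companion report.
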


We prove \Cref{thm:slh-insecure-j} by  finding a source program (which we present in the companion report for space constraints), that after compilation with $\compsslhB{\cdot}$/$\compuslhB{\cdot}$ still leaks due to indirect jump speculation. The gist of the insecure program is that while the target of the indirect jump is masked by $\compsslhB{\cdot}$/$\compuslhB{\cdot}$ (to prevent leaks), this does not prevent speculation of indirect jumps that can be used to bypass the instructions tracking the speculation flag inserted by the compiler.
Note that this issue could be fixed by relying on hardware support for control-flow-integrity like Intel-CET \cite{intel_cet} or ARM-BTI \cite{arm-cet}, which restrict the targets of indirect jumps to a fixed set of addresses  \emph{even under speculation}.
This should be sufficient to ensure that attackers cannot speculatively bypass instructions setting the speculation flags and it should allow us   derive Safe Nesting for the SLH compilers w.r.t. combinations including $\semj$.
We leave investigating this for future work.

\subsection{Lifting Security Guarantees}\label{ssec:lift-compilers}

We conclude our security analysis by using the results from Sections~\ref{ssec:sec-compilers}--\ref{ssec:trapped-compilers} together with \Cref{cor:lift-comp-pres-paper} to study how far we can lift the security guarantees provided by each compiler.
This allows us to precisely characterize the security  of these countermeasures even under stronger semantics.

\Cref{thm:lifted-guarantees-paper} summarizes the strongest lifted security guarantees one can derive for each of the studied compilers using our lifting theorem (\Cref{cor:lift-comp-pres-paper}).
After an explanation of the result, we present it visually in \Cref{fig:instantiate-slsf}.

\begin{theorem}[Lifted security guarantees]\label{thm:lifted-guarantees-paper}
The following statements hold:
\begin{itemize}
\item ($\Sv$: Fence)  
        $\contract{}{\Sv}, \contract{}{\Bv + \Jv + \Rv} \vdash \complfenceS{\cdot} : \crssp
            \text{ and } 
            \contract{}{\Sv}, \contract{}{\Bv+ \Jv + \SLSv} \vdash \complfenceS{\cdot} : \crssp$
\item ($\SLSv$: Fence) 
         $\contract{}{\SLSv}, \contract{}{\Bv +\Jv +\Sv} \vdash \complfenceSLS{\cdot} : \crssp$
\item ($\Rv$: Fence)     $\contract{}{\Rv} , \contract{}{\Bv + \Jv + \Sv} \vdash \complfenceR{\cdot} : \crssp$
\item ($\Jv$: Retpoline) $\contract{}{\Jv}, \contract{}{\Bv+ \Sv+  \Rv} \vdash \compretpJ{\cdot} : \crssp$
\item ($\Jv$: Retpoline with fences) $\contract{}{\Jv},\contract{}{\Bv + \Sv +\Rv} \vdash \compretpJF{\cdot} : \crssp 
             \text{ and } \contract{}{\Jv}, \contract{}{\Bv +\Sv + \SLSv} \vdash \compretpJF{\cdot} : \crssp$
\item ($\Rv$: Retpoline) $\contract{}{\Rv}, \contract{}{\Bv+ \Jv +\Sv} \vdash \compretpR{\cdot} : \crssp$
\item ($\Bv$: Fence) $\contract{}{\Bv},\contract{}{\Jv + \Sv + \Rv} \vdash \complfenceB{\cdot} : \crssp
             \text{ and } \contract{}{\Bv}, \contract{}{\Jv + \Sv +\SLSv} \vdash \complfenceB{\cdot} : \crssp$
\item ($\Bv$: USLH) $\contract{\mathit{ct} + \mathit{vl}}{\Bv},\contract{\mathit{ct} + \mathit{vl}}{\Sv +\Rv} \vdash \compuslhB{\cdot} : \crssp          \text{ and } \contract{\mathit{ct} + \mathit{vl}}{\Bv}, \contract{\mathit{ct} + \mathit{vl}}{\Sv + \SLSv} \vdash \compuslhB{\cdot} : \crssp$
\item ($\Bv$: SSLH) $\contract{}{\Bv},\contract{}{\Sv +\Rv} \vdash \compsslhB{\cdot} : \crssp 
             \text{ and } \contract{}{\Bv}, \contract{}{\Sv + \SLSv} \vdash \compsslhB{\cdot} : \crssp$

\end{itemize}
\end{theorem}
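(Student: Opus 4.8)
The plan is to prove each of the nine bullets by a single instantiation of the Lifted Compiler Preservation Theorem (\Cref{cor:lift-comp-pres-paper}). For a bullet asserting $\contract{}{x}, \contract{}{y} \vdash \comp{\cdot} : \crssp$, I read off the base semantics $\contract{}{x}$, the extension $\contract{}{y}$, and their composition $\contract{}{x+y}$, and then discharge the four hypotheses of \Cref{cor:lift-comp-pres-paper} in turn: (i) $\contract{}{x} \vdash \comp{\cdot} : \rssp$ is exactly the corresponding line of \Cref{thm:security-base-semantics}; (ii) $\IND{\contract{}{y}}{\comp{\cdot}}$ comes from \Cref{thm:ind-res-paper}; (iii) $\contract{}{x+y} \vdash \comp{\cdot} : \safeN{}$ comes either from \Cref{thm:safe-nesting-results} (via \Cref{lem:trapped-imp-safe-nest-paper}) or directly from \Cref{thm:safe-nesting-results-slh}; and (iv) $\wfc{\contract{}{x+y}}$ holds because every composition appearing in \Cref{fig:instantiate} is well formed. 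Assembling these four facts, \Cref{cor:lift-comp-pres-paper} yields the desired $\crssp$ statement, so the whole theorem reduces to bookkeeping over which earlier result supplies each hypothesis for each compiler.

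First I would handle the block-or-trap countermeasures---the four fence compilers ($\complfenceSLS{\cdot}$, $\complfenceR{\cdot}$, $\complfenceS{\cdot}$, $\complfenceB{\cdot}$) and the three retpoline compilers ($\compretpJ{\cdot}$, $\compretpJF{\cdot}$, $\compretpR{\cdot}$). For all seven, hypothesis (iii) is uniform: each traps speculation (\Cref{thm:safe-nesting-results}), and \Cref{lem:trapped-imp-safe-nest-paper} turns $\contract{}{x} \vdash \trappedS{\comp{\cdot}}$ into $\contract{}{x+y} \vdash \comp{\cdot} : \safeN{}$ for \emph{any} extension, so Safe Nesting is immediate for whatever $\contract{}{x+y}$ the bullet names. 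For hypothesis (ii), the fence compilers satisfy Syntactic Independence (\Cref{thm:ind-res-paper}), which gives Independence by \Cref{cor:syn-ind-ind-paper}; the retpoline compilers fall back to the direct Independence proofs of \Cref{thm:ind-res-paper}. Hypothesis (i) is read off \Cref{thm:security-base-semantics} and (iv) off \Cref{fig:instantiate}---checking in passing that no named composition simultaneously contains $\contract{}{\Rv}$ and $\contract{}{\SLSv}$, which the combination framework forbids.

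Next I would treat the masking countermeasures $\compsslhB{\cdot}$ and $\compuslhB{\cdot}$. These do not trap speculation, so hypothesis (iii) cannot come from \Cref{lem:trapped-imp-safe-nest-paper}; instead it is supplied directly by \Cref{thm:safe-nesting-results-slh}, whose statements ($\contract{}{\Bv+\Sv+\Rv}$ and $\contract{}{\Bv+\Sv+\SLSv}$, with the $\mathit{ct}+\mathit{vl}$ observer for USLH) match the combined semantics $\contract{}{x+y}$ of the two SLH bullets exactly. This match also explains why the SLH extensions are $\contract{}{\Sv+\Rv}$ and $\contract{}{\Sv+\SLSv}$ rather than anything containing $\contract{}{\Jv}$: by \Cref{thm:slh-insecure-j} SLH is genuinely insecure once $\contract{}{\Jv}$ is present, so Safe Nesting, and hence the lifted guarantee, cannot reach past those semantics. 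Independence (ii) for the SLH compilers comes from the full Independence proofs of \Cref{thm:ind-res-paper}, and RSSP (i) from \Cref{thm:security-base-semantics}, with the USLH lines read throughout w.r.t. the $\mathit{ct}+\mathit{vl}$ observer.

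The main obstacle is not a deep argument---all the genuine content lives in the cited theorems---but the careful alignment of the extension semantics. In every bullet the $\contract{}{y}$ demanded by \Cref{cor:lift-comp-pres-paper} is a strictly weaker sub-composition than the extension for which \Cref{thm:ind-res-paper} reports Independence (e.g.\ the SSLH bullet needs $\IND{\contract{}{\Sv+\Rv}}{\compsslhB{\cdot}}$ whereas \Cref{thm:ind-res-paper} states it for $\contract{}{\Bv+\Jv+\Sv+\Rv}$). Since Independence is \emph{not} downward closed under $\sqsubseteq$ by a generic argument---from $\contract{}{y'} \vdash P : \rss$ with $\contract{}{y'} \sqsubseteq \contract{}{y}$ one cannot recover the stronger $\contract{}{y} \vdash P : \rss$ needed to invoke the reported fact---I would discharge this step by appealing to the uniformity of the Independence arguments. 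For the fence compilers Syntactic Independence is monotone, because $\insertedInstrs{\comp{\cdot}} \cap \speculationInstrs{\contract{}{y'}} = \emptyset$ follows from the same intersection for the larger $\speculationInstrs{\contract{}{y}}$, so \Cref{def:si} holds verbatim for the weaker $\contract{}{y'}$; for the retpoline and SLH compilers the direct Independence proof is carried out uniformly for every sub-composition of the reported strongest extension, hence applies to the precise $\contract{}{y}$ each bullet requires. Getting this matching right across all nine compilers, while simultaneously verifying the $\contract{}{\Rv}/\contract{}{\SLSv}$ well-formedness side condition for each composition, is where the bulk of the care is needed.
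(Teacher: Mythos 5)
Your proposal is correct and follows essentially the same route as the paper: Theorem~\ref{thm:lifted-guarantees-paper} is obtained by instantiating \Cref{cor:lift-comp-pres-paper} per compiler, discharging RSSP via \Cref{thm:security-base-semantics}, Independence via \Cref{thm:ind-res-paper} (with Syntactic Independence for the fence compilers), Safe Nesting via \Cref{thm:safe-nesting-results} and \Cref{lem:trapped-imp-safe-nest-paper} or directly via \Cref{thm:safe-nesting-results-slh}, and well-formedness of the compositions. Your care about the extension-semantics mismatch is well placed and resolved the same way the paper does: SI is monotone under shrinking the speculation-instruction set, and the direct Independence proofs are carried out for each sub-composition separately (the companion material states them for exactly the extensions each bullet needs, e.g.\ $\contract{}{\Sv+\Rv}$ and $\contract{}{\Sv+\SLSv}$ for the SLH compilers).
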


For all our compilers except the SLH ones, we can lift the security guarantees up to $\contract{}{\Bv + \Jv + \Sv + \Rv}$ or $\contract{}{\Bv+ \Jv + \Sv +\SLSv}$, the two strongest combined speculative semantics (as shown in \Cref{fig:instantiate}), i.e.,  the strongest attacker models considered in our paper.

For the SLH compilers $\compuslhB{\cdot}$ and $\compsslhB{\cdot}$, we are able to lift the security guarantees only up to $\contract{}{\Bv + \Sv + \Rv}$ and $\contract{}{\Bv + \Sv + \SLSv}$.
Lifting the guarantees of $\compuslhB{\cdot}$ and $\compsslhB{\cdot}$ to stronger semantics including $\semj$ is not possible because Safe Nesting is not fulfilled (cf. \Cref{ssec:trapped-compilers}).

Finally, for $\complfenceSLS{\cdot}$, $\complfenceR{\cdot}$ and $\compretpR{\cdot}$, we cannot lift security to both $\contract{}{\Bv + \Jv + \Sv + \Rv}$ and $\contract{}{\Bv+ \Jv + \Sv +\SLSv}$  because we cannot compose $\contract{}{\Rv}$ and $\contract{}{\SLSv}$ due to limitations of the combination framework (cf.~\Cref{ssec:bg-combinations}).

\begin{figure}[!ht]
    \centering
    \begin{tikzpicture}[line width=0.1 mm, arrows={[black]}]
          \node (BJSR) at (-1, 4) {{$\contractSpec{}{\Bv + \Jv + \Sv + \Rv}$}};
          \node (BJSSLS) at (1, 4) {$\contractSpec{}{\Bv + \Jv + \Sv + \SLSv}$};
          \node (BJS) at (-3.6, 2) {$\contractSpec{}{\Bv + \Jv + \Sv}$};
          \node (BJR) at (-2.4, 2) {$\contractSpec{}{\Bv + \Jv + \Rv}$};
          \node (BSR) at (-1.2, 2) {$\contractSpec{}{\Bv + \Sv + \Rv}$};
          \node (JSR) at (0, 2) {$\contractSpec{}{\Jv + \Sv + \Rv}$};
          \node (BJSLS) at (1.4, 2) {$\contractSpec{}{\Bv + \Jv + \SLSv}$};
          \node (BSSLS) at (2.8, 2) {$\contractSpec{}{\Bv + \Sv + \SLSv}$};
          \node (JSSLS) at (4, 2) {$\contractSpec{}{\Jv + \Sv + \SLSv}$};
          \node (BJ) at (-4,0) {$\contractSpec{}{\Bv + \Jv}$};
          \node (BS) at (-3,0) {$\contractSpec{}{\Bv + \Sv}$};
          \node (BR) at (-2,0) {$\contractSpec{}{\Bv + \Rv}$};
          \node (JS) at (-1,0) {$\contractSpec{}{\Jv + \Sv}$};
          \node (JR) at (0,0) {$\contractSpec{}{\Jv + \Rv}$};
          \node (SR) at (1,0) {$\contractSpec{}{\Sv + \Rv}$};
          \node (BSLS) at (2,0) {$\contractSpec{}{\Bv + \SLSv}$};
          \node (JSLS) at (3,0) {$\contractSpec{}{\Jv + \SLSv}$};
          \node (SSLS) at (4,0) {$\contractSpec{}{\Sv + \SLSv}$};
          \node (NS) at (0, -3) {$\contract{}{NS}$};
          \node (B) at (-2,-2) {$\contractSpec{}{\Bv}$};
          \node (J) at (-1,-2) {$\contractSpec{}{\Jv}$};
          \node (S) at (0,-2) {$\contractSpec{}{\Sv}$};
          \node (R) at (1,-2) {$\contractSpec{}{\Rv}$};
          \node[draw, rounded corners] (SLS) at (2,-2) {$\contractSpec{}{\SLSv}$};
          \draw[\lineCol]  (B.north) -- (BJ.south);
          \draw[\lineCol]  (J.north) -- (JS.south);
          \draw[\lineCol]  (J.north) -- (BJ.south);
          \draw[\lineCol]  (S.north) -- (JS.south);
          \draw[\lineCol]  (S.north) -- (SR.south) -- (R.north);
          \draw[\lineCol]  (B.north) -- (BS.south) -- (S.north) -- (SSLS.south) -- (SLS.north) -- (JSLS.south) -- (J.north);
          \draw[\lineCol]  (B.north) -- (BR.south) -- (R.north) -- (JR.south) -- (J.north);
          \draw[\lineCol]  (B.north) -- (BSLS.south) -- (SLS.north);
          \draw[\lineCol] (BJ.north) -- (BJS.south);
          \draw[\lineCol] (BJ.north) -- (BJR.south);
          \draw[\lineCol]  (BJ.north) -- (BJSLS.south);
          \draw[\lineCol]  (BS.north) -- (BSR.south);
          \draw[\lineCol]  (BS.north) -- (BSSLS.south);
          \draw[\lineCol]  (BS.north) -- (BJS.south);
          \draw[\lineCol]  (BR.north) -- (BSR.south);
          \draw[\lineCol]  (BR.north) -- (BJR.south);
          \draw[\lineCol]  (BSLS.north) -- (BSSLS.south);
          \draw[\lineCol]  (BSLS.north) -- (BJSLS.south);
          \draw[\lineCol]  (JS.north) -- (BJS.south);\textbf{}
          \draw[\lineCol]  (JS.north) -- (JSR.south);
          \draw[\lineCol]  (JS.north) -- (JSSLS.south);
          \draw[\lineCol]  (JR.north) -- (BJR.south);
          \draw[\lineCol]  (JR.north) -- (JSR.south);
          \draw[\lineCol]  (JSLS.north) -- (BJSLS.south);
          \draw[\lineCol]  (JSLS.north) -- (JSSLS.south);
          \draw[\lineCol]  (SR.north) -- (BSR.south);
          \draw[\lineCol]  (SR.north) -- (JSR.south);
          \draw[\lineCol]  (SSLS.north) -- (BSSLS.south);
          \draw[\lineCol]  (SSLS.north) -- (JSSLS.south);
          \draw[\lineCol]  (BSR.north) -- (BJSR.south);
          \draw[\lineCol]  (BSSLS.north) -- (BJSSLS.south);
          \draw[\lineCol]  (BJS.north) -- (BJSR.south);
          \draw[\lineCol]  (BJR.north) -- (BJSR.south);
          \draw[\lineCol]  (BJSLS.north) -- (BJSSLS.south);
          \draw[\lineCol]  (JSR.north) -- (BJSR.south);
          \draw[\lineCol]  (JSSLS.north) -- (BJSSLS.south);
          \draw[\lineCol]  (NS.north) -- (B.south);
          \draw[\lineCol]  (NS.north) -- (J.south);
          \draw[\lineCol]  (NS.north) -- (S.south);
          \draw[\lineCol]  (NS.north) -- (R.south);
          \draw[\lineCol]  (NS.north) -- (SLS.south);

          \fill[blue, opacity=0.2, rounded corners] (SLS.south west) -- (BSLS.south west) -- (BSLS.north west) -- (BJSLS.south west) -- (BJSLS.north west) -- (BJSSLS.south west) -- (BJSSLS.north west) -- (BJSSLS.north east) -- (JSSLS.north east) -- (SSLS.north east) -- (SSLS.south east) -- (SLS.north east)  -- (SLS.south east) -- (SLS.south west) -- (BSLS.south west);

          \node (no) at (-7, 4) {Compilers}; 
          \node (CFR-CRPLJ-CRPLR) [below = .1 of no] {$\complfenceR{\cdot}$, $\compretpR{\cdot}$, $\compretpJ{\cdot}$}; \node (CFS-CRPLFJ-CFB) [below = .8 of no] {$\complfenceB{\cdot}$, $\complfenceS{\cdot}$, $\compretpJF{\cdot}$}; \node (CFSLS) [below = 1.5 of no] {\qquad\qquad\quad$\complfenceSLS{\cdot}$}; \node (CUSLHB-CSLHB) [below = .8 of CFSLS] {$\compuslhB{\cdot}$, $\compsslhB{\cdot}$}; 

            \draw[->, dashed]  (CFSLS.east) -- (BJSSLS.-150);
            \draw[->, dashed]  (CFS-CRPLFJ-CFB.east) -- (BJSSLS.-170);
            \draw[->, dashed]  (CFS-CRPLFJ-CFB.east) -- (BJSR.-150);
            \draw[->, dashed]  (CFR-CRPLJ-CRPLR.east) -- (BJSR.-170);
            \draw[->, dashed]  (CUSLHB-CSLHB.east) -- (BSSLS.-150);
            \draw[->, dashed]  (CUSLHB-CSLHB.east) -- (BSR.-150);
          
    \end{tikzpicture}
    \caption[]{
    On the left is a visualization of \Cref{thm:lifted-guarantees-paper}, where the list of compilers is connected with a dashed line to the strongest semantic their security is lifted to.
    On the right, the blue area of the leakage ordering represents where  \Cref{cor:lift-comp-pres-paper} is applicable for $\complfenceSLS{\cdot}$.
    We can lift the security guarantees of $\complfenceSLS{\cdot}$  from the base semantics $\contract{}{\SLSv}$ to all composed semantics in the highlighted area.
    }
    \label{fig:instantiate-slsf}
\end{figure}
We remark that our lifting theorem allowed us to derive  strong compiler guarantees, i.e., \crssp{} w.r.t.  $\contract{}{\Bv + \Jv + \Sv + \Rv}$ or $\contract{}{\Bv+ \Jv + \Sv +\SLSv}$, \emph{without} requiring new secure compilation proofs, thereby significantly reducing the proving effort.
Note that carrying out secure compilation proofs can be complex because each proof requires setting up multiple cross-language relations (for states, values, actions, etc.)~\cite{akram-csf} as well as defining an invariant that holds for speculation (both in securely-compiled code and in attacker code)~\cite{S_sec_comp}.

As a concrete example, we use \Cref{fig:instantiate-slsf} to visualize (1) the effects of \Cref{thm:lifted-guarantees-paper} and (2), this lifting for the fence compiler $\complfenceSLS{\cdot}$. 
With respect to (1), the list of compilers on the left is connected to the strongest semantics to which they can be lifted.
With respect to (2), the blue shaded area depicts the semantics to which we can lift the security guarantees using \Cref{cor:lift-comp-pres-paper} together with Independence and Safe Nesting.
Our lifting approach allowed us to derive secure compilation results for the 7 semantics in the shaded area with only 1 secure compilation proof (for the base semantics $\contract{}{\SLSv}$), 1 simple proof of Trapped Speculation, and other 6 simple proofs of Syntactic Independence (by syntactic inspection).
Without our approach, proving the same results would have required 7 fully-fledged secure compilation proofs.

 \section{Discussion}\label{sec:limitations}

\paragraph*{Scope of the Security Analysis}
Lifting the results of our security analysis to real-world CPUs and compilers is only possible to the extent that our models faithfully represent the target systems.

In terms of CPUs and speculative leaks, any information flow in the target CPU that is not captured by our speculative semantics (and their combinations) might invalidate our security proofs in practice.
In particular, all speculative semantics from~\Cref{tab:semantics} consider a commonly-used attacker model~\cite{spectector, contracts, ST_binsec, ST_constantTime_Spec,guanciale2020inspectre, ST_jasmin2, S_sec_comp, ST_blade, spec_comb} that captures a cache-based attacker by exposing control-flow and memory accesses along non-speculative and speculative execution paths.
Any leaks not reflected in control-flow and memory accesses might, therefore, be missed by our models.
Regarding speculation, for the two new semantics modeling speculation over indirect jumps ($\contractSpec{}{\Jv}$) and straight-line speculation ($\contractSpec{}{\SLSv}$), the main simplification is in the modeling of speculative indirect jumps where we over-approximate the effect of prediction structures (e.g., BTB) by allowing any speculative target.
Note, however, that we only support valid instructions in the compiled program as targets of speculative jumps; speculatively jumping in the middle of program instructions is not captured by $\contractSpec{}{\Jv}$.
For the other semantics ($\contractSpec{}{\Bv}$, $\contractSpec{}{\Sv}$, and $\contractSpec{}{\Rv}$), we directly employ state-of-the-art models from prior work, whose limitations are discussed in~\cite{spectector,spec_comb}.

In terms of compilers, any divergence between our models in \Cref{ssec:compilers} and their actual implementations might, again, invalidate our results.
One important simplification of our SLH compilers $\compuslhB{\cdot}$ and $\compsslhB{\cdot}$ is that  the speculation flag is \emph{always} stored in a dedicated register.
In contrast, the actual SLH implementation in \textsc{Clang}~\cite{slh} uses a general purpose register for storing the speculation flag which, in some cases, might be spilled to memory.
This might result in unexpected leaks in case speculation over store-bypasses ($\contractSpec{}{\Sv}$) might result in loading a stale value for the speculation flag from memory.

\paragraph*{Using the Framework}
At this point, the reader may wonder how one can use this framework when the next version of Spectre comes out.
For example, recent work~\cite{revizor2} has discovered that CPUs speculate also on division operations (since this does not lead to actual attacks, we ignored this speculative semantics in our security analysis).
Let us indicate a semantics capturing leaks resulting from that speculation over divisions with $\semd$.
The ordering of \Cref{fig:instantiate} would have many new elements, including a top element $\contractSpec{}{\Bv + \Jv + \Sv + \Rv + \Div}$.
Since none of the considered compilers introduce a division operation, it would be sufficient to prove Syntactic Independence for them, reuse all the theorems from \Cref{ssec:sec-compilers,ssec:trapped-compilers} and then apply \Cref{cor:lift-comp-pres-paper} in order to obtain that those compilers are $\crssp$ for the new speculative semantics $\contractSpec{}{\Bv + \Jv + \Sv + \Rv + \Div}$.

\paragraph*{Beyond Speculative Leaks}
The framework presented in this paper focusses on attacks that arise from speculative executions, but it can be used to reason about the composition of other security properties such as memory safety ($\semms$)~\cite{ms} and cryptographic constant time ($\semct$, equivalent to our $\contract{}{NS}$ semantics)~\cite{kocher1996timing}.
In the case of $\semms$ and $\semct$, however, existing semantics that express these leaks can be composed in a much simpler way than speculative semantics, without any nesting.
Thus, proving that e.g., countermeasures for $\contractSpec{}{\MS}$ are $\crssp$ for $\contractSpec{}{\MS + \CT}$, only requires reasoning about Independence, because Safe Nesting trivially holds.
We leave reasoning about these results (and composing the resulting semantics with the presented ones) as future work.

\paragraph*{From Single to Multiple Compilers}
We believe this framework can be used beyond reasoning about a single compiler, and it can be employed in order to reason about the application of several Spectre countermeasures.
In fact, we \emph{speculate} that we can use existing results on composing secure compilers~\cite{csc22} to prove that if a compiler $\comp{\cdot}_1$ is $\crssp$ with respect to a semantics, and another compiler $\comp{\cdot}_2$ is $\crssp$ with respect to the same semantics, then the two can be composed ($\comp{\comp{\cdot}_2}_1$) and the result is $\crssp$ with respect to the same semantics.
The presented work can then be used to (1) first lift \emph{single} compiler countermeasures to their strongest semantics and then (2) compose those countermeasures to obtain that the composition is also $\crssp$ for the strongest semantics.
We leave investigating the theory of secure compilation applied to speculative semantics, as well as its application to the results of this paper via points (1) and (2) for future work.

 \section{Related Work}\label{sec:related-work}

\paragraph{Speculative Execution Attacks}

After Spectre~\cite{spectre} has been disclosed to the public in 2018, researchers have identified many other speculative execution attacks~\cite{spectreRsb,ret2spec,S_smotherSpectre,S_trans_troj,barberis2022branch, wikner_phantom_2023, psf}.
We refer the reader to~\citet{transientfail} for a survey of existing attacks.

\paragraph{Speculative Semantics}

There are many  semantics capturing the effects of speculatively executed instructions~\cite{ST_blade,guanciale2020inspectre, spectector, S_sec_comp, cats, serberus, ST_constantTime_Spec, spec_comb, ST_abs_sem, R_sem_model}.
These semantics differ in the level of microarchitectural details that are modelled (e.g., from program-level models~\cite{spectector} to those closer to simplified CPU designs~\cite{guanciale2020inspectre}) and the languages that are used (e.g., from models targeting While languages~\cite{S_sec_comp} to those targeting assembly-style languages~\cite{spectector}); see~\cite{sok:spectre_defense} for a survey of speculative semantics.\looseness=-1

As indicated in \Cref{tab:semantics}, our branch speculation semantics $\semb$ is from~\cite{spectector}, whereas our store-bypass speculation $\sems$ and return misprediction $\semr$ semantics are from~\cite{spec_comb}.
These semantics and our new semantics  $\semj$ and  $\semsls$ all  follow the \emph{always-mispredict} strategy~\cite{spectector}, which explore mispredicted paths for a fixed number of steps before continuing the architectural execution.

\paragraph{Security Properties for speculative leaks}
Researchers have proposed many program-level properties for security against speculative leaks, which can be classified in three main groups~\cite{sok:spectre_defense}:
\begin{asparaenum}
\item Non-interference definitions ensure the security of speculative \emph{and} non-speculative instructions. 
E.g., speculative constant-time~\cite{ST_constantTime_Spec} extends constant-time to transient instructions as well.\looseness=-1 

\item Relative non-interference definitions~\cite{spectector,ST_tpod,guanciale2020inspectre,ST_specusym, spec_dec} ensure that transient instructions do not leak more information than non-transient ones.
E.g., speculative non-interference~\cite{spectector}, which we inherit from the secure compilation framework we build on~\cite{S_sec_comp}, restricts the information leaked by speculatively executed instructions (without constraining what can be leaked non-speculatively).

\item Definitions that formalise security as a safety property~\cite{cats,S_sec_comp}, which may over-approximate definitions from the groups above.\looseness=-1
\end{asparaenum}

\paragraph{Secure compilation for speculative leaks}
Our secure compilation framework extends the work by~\citet{S_sec_comp} (which is restricted to branch speculation $\semb$) with support for new speculative semantics and their combinations~\cite{spec_comb}.
In particular, the \crssp{} secure compilation criterion from \Cref{def:crssp} is an extension of the \rssp{} criterion from~\cite{S_sec_comp}.

In~\Cref{sec:countermeasures}, we analyzed Spectre countermeasures implemented in mainstream compilers (or variants of them) or suggested by hardware vendors.
Next, we review further countermeasures.

\citet{ST_jasmin} and \citet{spec_ty_v1}  extend Jasmin \cite{ST_jasmin2, ST_jasmin_imp} to protect constant-time programs against leaks induced by branch speculation. 
In contrast, Blade~\cite{ST_blade} is a countermeasure against Spectre-PHT targeting Wasm \cite{wasm} which uses a flow-sensitive security-type system to minimize the amount of protect statements (either fences or SLH) needed to secure programs. 
Differently from our work, which targets speculative non-interference, these works target speculative constant-time.

Swivel \cite{swivel} is a compiler hardening pass for Wasm that protects against multiple Spectre attacks (Spectre-PHT, Spectre-BTB, and Spectre-RSB). 
However, it lacks a formal model and security proof.

In concurrent work, \citet{serberus} proposed Serberus, a set of compiler passes that---in combination with hardware support (e.g., Intel CET-IBT and a shadow stack for return addresses)---offer protection against Spectre-PHT, Spectre-BTB,  Spectre-RSB, Spectre-STL and predictive store forwarding~\cite{psf}.
For any whole program satisfying static constant time (a stricter variant of constant-time), Serberus ensures that its compiled counterpart is speculative constant-time.
Differently from our security analysis, where we lift secure compilation guarantees from weaker semantics to stronger combined semantics, Serberus' security proof directly targets a speculative semantics incorporating all supported speculation mechanisms.
Their proof already targets a semantics at the ``top'' of the leakage order and does not need lifting (until the discovery of new speculation mechanisms).\looseness=-1

\citet{switchpoline} proposes switchpoline,  an alternative to retpoline, to protect ARM cores from Spectre-BTB.
It transforms indirect calls into direct calls and uses a switch statement to select the correct call target. 
Interestingly, the authors argue about the importance of compatible countermeasures and ensure that switchpoline is fully compatible with other Spectre countermeasures, which is in line with our Independence property (\Cref{def:ind-paper}).

 \section{Conclusion}\label{sec:conclusion}

This paper presented a secure compilation framework for reasoning about the security against leaks introduced by different speculation mechanisms modeled as (combinations of) speculative semantics.
In particular, we developed a \emph{lifting theorem} that allows us to lift  a compiler's security guarantees from a weaker base speculative semantics to a stronger extended speculative semantics that accounts for more speculation mechanisms.
Additionally, we precisely characterized the  security guarantees provided by 9 Spectre-countermeasures implemented in mainstream compilers against 23 different speculative semantics covering combinations of 5 different speculation mechanisms.
Our lifting theorem was instrumental in allowing us to precisely characterize each countermeasure's guarantees against all combined semantics \emph{without} requiring additional secure compilation proofs (beyond the proof of security against each compiler's base semantics).

\bibliographystyle{ACM-Reference-Format}
\balance
\bibliography{References/references, References/spectre, References/spectreTools}

\clearpage
\newpage
\onecolumn

\section{Disclaimer}

While we presented the semantics using a separate $\startObs{}$ observation, we do not have this observation in our semantics. Mainly, we can always reconstruct the start of a speculative transaction even without a concrete $\startObs{}$.
For example, we could annotate the observation that is made when speculation is started with a label, telling us that it started speculation.
An actual $\startObs{}$ complicates the semantics and the reasoning about it.

Furthermore, while we use $\mathghost$ to indicate the semantics, we use $\contract{}{x}$ here.

\setlist[description]{style=nextline}
\appendix

\section{Language Definition}

\begin{gather*}
\begin{aligned}
	\mi{Programs}~ \com{W}, \com{P} \bnfdef&\ \com{M , \OB{F} , \OB{I}}
	&
	\mi{Component}~\com{C} \bnfdef&\ \com{\OB{F} , \OB{I}}
	&
	\mi{Imports}~\com{I} \bnfdef&\ \com{f}
	\end{aligned}
	\\
	\begin{aligned}
	\mi{Functions}~\com{F} \bnfdef&\ \emptyset \mid \com{F}; f \mapsto (l_{start}, c_1;  c_2, f_{set})
	&
	\mi{(Code)}~\com{p} \bnfdef&\  l_n:i \mid p_1;p_2
	&
	\mi{Attackers}~\ctxc{} \bnfdef&\ \com{M, \OB{F}\hole{\cdot}} 
	\\
	\mi{Frames}~\com{B} \bnfdef
                &\ \emptyset \mid \com{\OB{n}}; B
	&
	\end{aligned}
	\\
	\begin{aligned}
	\mi{(Registers)}~ x \in&\ \Reg
	&
    \mi{(Labels)}~ l \in&\ \labelset
    &
    \mi{(Nats)}~ n \in&\ \Nat \cup \{\bot\}
    &
    \mi{(Values)}~ v \in&\ \Val = \mi{Nats} \cup  \labelset
	\end{aligned}
	\\
	\begin{aligned}
	\mi{Register File}~\com{A} \bnfdef&\ \emptyset \mid \com{A}; x \mapsto v : \taint
	&
	\com{A^{v}} \bnfdef&\ \emptyset \mid \com{A}; x \mapsto v
	&
	\com{A^{\taint}} \bnfdef&\ \emptyset \mid \com{A}; x \mapsto \taint
	\end{aligned}
	\\
	\begin{aligned}
	\mi{Memory}~\com{M} \bnfdef&\ \emptyset \mid \com{M}; n \mapsto v : \taint
	&
	\com{M^{v}} \bnfdef&\ \emptyset \mid \com{M}; n \mapsto v
	&
	\com{M^{\taint}} \bnfdef&\ \emptyset \mid \com{M}; n \mapsto \taint
	\end{aligned}
	\\
	\begin{aligned}
	\com{\sigma} \bnfdef&\ (p, M, A)
	&
	\mi{State}~ \bnfdef \src{C;\OB{B}; \sigma}
	\end{aligned}
	\\
	\begin{aligned}
	\mi{(Expressions)}~  e \bnfdef&\  n \mid x \mid \ominus e \mid e_1 \otimes e_2
	\\
	\mi{(Instructions)}~ i \bnfdef&\ \pskip \mid \passign{x}{e} \mid \pload{x}{e} \mid \pstore{x}{e} \mid \pjmp{e} \mid  \\
                                                  & \pjz{x}{l} \mid \pcondassign{x}{e}{e'} \mid \pbarrier  \mid \pcall{f} \mid \pret \\
                                                  & \ploadprv{x}{e} \mid \pstoreprv{x}{e}
	\\ 
	\mi{(Administrative\ Inst.)}~ i \bnfdef&\ \ppopret \mid \pmodret{e}			  
	\end{aligned}
\end{gather*}

We extend \muasm{} with a notion of components (a unit of compilation) i.e, partial programs (\com{P}) and attackers (\ctxc{}).
A partial program \com{P} defines a list of functions $\OB{F}$ and a list of imports \OB{I}, which are all the functions an attacker can define. 
An attacker $\ctxc{}$ just defines its functions.

Functions \com{F} are maps from function names to a pair consisting of the abstract starting label of the function, the code of that function. We define as $f_{set}$ all the labels of that function which we recover by iterating over the code of the function.

Code of a \muasm{} program p is defined as a sequence of pairs $l_n : i$ where $i$ is an instruction and $l_n$ is an abstract label drawn from a partially ordered set. We will use $p$ as a partial function from natural numbers to instructions, where $p(n)$ either returns the instruction if the label exists in p or $\bot$.
The same abstract label $l_n$ cannot be used twice to label two different instructions.

Instructions $i$ include skipping, register assignments, loads, store, indirect jumps, branching, conditional assignments, speculation barriers, calls, and returns.
Instructions can contain expressions and values.
The former come from the set $\Reg$, containing register identifiers and designated registers $\pc$ and $\spR$ modelling the program counter and stack pointer respectively, while the latter come from the set $\Val$, which includes natural numbers and $\bot$.

The Administrative instructions $\ppopret{}$ and $\pmodret{e}$ can only be inserted by the compiler and allow for return address manipulation necessary for the Retpoline countermeasure(\Cref{comp:v2-retpoline} and \Cref{comp:v5-retpoline}).

$B$ consists of a stack of stacks of natural numbers $n$. These natural numbers represent the return addresses.
A new empty stack $\OB{n}$ is created, whenever a context switch between component and attacker happens. 

The register file \com{A} and the memory \com{M} contain mappings from registers and natural numbers to values. 
The memory consists of a public and a private part. The public one is the positive one and the private one is the negative part.
We sometimes write $\com{M^{\high}}$ and $\com{M^{\low}}$ to indicate the private and public part respectively of memory $M$.
The memory is preallocated. This means all locations from 0 up to plus and minus infinity is allocated and initialised to 0.

A configuration $\sigma$ is a triple $\tup{p, m, a}$ containing the program $p$, the memory $m$ and the register file or assignments $a$.

The reason to use these abstract labels $l_i$ instead of using natural numbers has to do with compilation. Since our compilation is from \muasm{} to \muasm{} and the compiler inserts new instructions, we run into a lot of trouble with for example jumps or branches. Some jump targets now need to be changed but it is unclear when it needs to change.
Abstract labels do not have that problem. When a new instruction is inserted by the compiler, i.e. between $l_2$ and $l_3$, where $succ(l_2) = l_3$ then we draw a new label $l_2'$ and insert it into the partial order. If these were natural numbers we would have needed to change all labels of all later instructions. But in this way, we do not change old jump targets.

In the semantics we keep these labels abstract. We will explain this choice in the following section. We note that it is possible to instantiate these labels. See \Cref{sec:Linker}.

Furthermore, in the paper we used $\beh^{}_x(P)$ instead of $\behx{P}$ to indicate the behaviour of the program, where $\omega$ is the size of the speculation window.
\subsection{Trace Model}
Computation steps are labelled with events $\tau$, which is either the empty label $\varepsilon$, an action $\alpha?$ or $\alpha!$ recording the control flow interactions between attacker and code (needed for secure compilation proofs), or a microarchitectural event $\delta$.
\begin{align*}
    \mi{Labels}~ \lambda \bnfdef&\ \epsilon \mid \alpha \mid \delta \mid \ts
	\\
	\mi{Actions}~\alpha \bnfdef&\ (\clh{f}) \mid (\cbh{f}) \mid (\rth_id) \mid (\rbh)
	\\
	\mi{Heap\&Pc\ Act.s}~\delta \bnfdef&\ \storeObs{n} \mid \loadObs{n} \mid (\pcObsID{l}) \mid \rollbackObsID{} ~\text{where $\id = \{\Bv, \Jv, \Sv, \Rv, \SLSv, \empTr \}$}
	\\
	\mi{Traces}~\tra{^\taint} \bnfdef&\ \srce \mid \tra{^\taint}\cdot\alpha^\taint \mid \tras{^\taint}\cdot\delta^\taint
\end{align*}
The index will tell us later when and from which semantics speculation started. This is needed to define the projections functions on traces as well.

\subsection{Non-speculative Operational Semantics}

We will write the shorthand $a[x \mapsto y]$ with $x \in \Reg$ and $y \in \Val$ for the assignment $a' \in \Assgn$ that satisfies $a'(x) = y$ and otherwise behaves similar to a. This updates a single register.

We will do the same for memories and lift this to configurations in the obvious way with $\sigma = \tup{p, m,a}$:
\begin{align*}
    \tup{p, m, a}[x \mapsto y] :=&\ \tup{p, m, a[x \mapsto y]} \text{with}~ x \in \Reg~ y \in \Val \\
    \tup{p, m, a}(n \mapsto y) :=&\ \tup{p, m(n \mapsto y), a} \text{with}~ n \in \mathbb{Z} ~ y \in \Val 
\end{align*}

\Cref{tr:plug-us} tells how to obtain a whole program from a component and an attacker.
\Cref{tr:whole-us} tells when a program is whole. It relies on the helper function $\labels{F}$ which extracts the labels $l$ used in the function.

We change the way the list of imports is used between partial and whole programs.
For partial programs, imports are effectively imports, i.e., the functions that contexts define and that the program relies on.
So a context can define more functions.
In whole programs, we change the imports to be the list of all context-defined function (\Cref{tr:plug-us}) to keep track of what is and what is not context.

\begin{center}
    \mytoprule{Helpers}
    
    \typerule{Intfs}{
		\src{C} = \src{\OB{F} , \OB{I}}
	}{
		\src{C}.\mtt{intfs} = \src{\OB{I}}
	}{}
	\typerule{Funs}{
		\src{C} = \src{\OB{F} , \OB{I}}
	}{
		\src{C}.\mtt{funs} = \src{\OB{F}}
	}{}

	\typerule{Jump-Internal}{
		((\src{f'}\in\src{\OB{I}} \wedge \src{f}\in\src{\OB{I}}) \vee
				\\
		(\src{f'}\notin\src{\OB{I}} \wedge \src{f}\notin\src{\OB{I}}))
	}{
		\src{\OB{I}}\vdash\src{f,f'}:\src{internal}
	}{us-aux-intern}
	\typerule{Jump-IN}{
		\src{f}\in\src{\OB{I}} \wedge \src{f'}\notin\src{\OB{I}}
	}{
		\src{\OB{I}}\vdash\src{f,f'}:\src{in}
	}{us-aux-in}
	\typerule{Jump-OUT}{
		\src{f}\notin\src{\OB{I}} \wedge \src{f'}\in\src{\OB{I}}
	}{
		\src{\OB{I}}\vdash\src{f,f'}:\src{out}
	}{us-aux-out}

	\typerule{Call Label - call}{
		\src{C}.\mtt{intfs}\vdash\src{f',f} : \src{in}
	}{
		\src{C};\src{f',f} ; \src{call} \vdash \src{\pcall{f}?}
	}{us-aux-call}
	\typerule{Call Label - callback}{
		\src{C}.\mtt{intfs}\vdash\src{f',f} : \src{out}
	}{
		\src{C};\src{f',f} ; \src{call} \vdash \src{\pcall{f}!}
	}{us-aux-callback}
	\typerule{Call Label - internal}{
		\src{C}.\mtt{intfs}\vdash\src{f',f} : \src{internal}
	}{
		\src{C};\src{f',f} ; \src{call} \vdash \src{\epsilon}
	}{us-aux-call-int}
	
	\typerule{Ret Label - return}{
		\src{C}.\mtt{intfs}\vdash\src{f',f} : \src{in}
	}{
		\src{C};\src{f',f} ; \src{ret}  \vdash \src{\rth{f}{}{}}
	}{us-aux-ret}
	\typerule{Ret Label - returnback}{
		\src{C}.\mtt{intfs}\vdash\src{f',f} : \src{out}
	}{
		\src{C};\src{f',f} ; \src{ret}  \vdash \src{\rbh{f}{}{}}
	}{us-aux-retback}
	\typerule{Ret Label - internal}{
		\src{C}.\mtt{intfs}\vdash\src{f',f} : \src{internal}
	}{
		\src{C};\src{f',f} ; \src{ret}  \vdash \src{\epsilon}
	}{us-aux-ret-int}

	\typerule{Plug}{
		\src{\ctxs{}} \equiv \src{M , \OB{F}\hole{\cdot}}
		&
		\src{C}\equiv\src{M' \OB{F'} , \OB{I}} 
		&
		\vdash\src{\OB{F'};\OB{F} , \OB{I}}:\src{whole}
		&
		\src{main}\in\fun{names}{\src{\OB{F}}}
		\\
		\dom{\src{M}}\cap\dom{\src{M'}}=\emptyset
		\\
		\forall \src{n\mapsto v : \sigma} \in \src{M'}, \src{n}<\src{0} \text{ and } \src{\sigma}=\src{\unta}
	}{
		\src{\ctxs{}\hole{C}} = \src{M;M', \OB{F;F'}, \dom{\src{\OB{F}}}}
	}{plug-us}
	\typerule{Whole}{
		\fun{names}{\src{\OB{F}}}\cap\fun{names}{\src{\OB{F'}}}=\emptyset
		\\
		\fun{names}{\src{\OB{I}}}\subseteq \fun{names}{\src{\OB{F}}}\cup\fun{names}{\src{\OB{F'}}}
		&
		\labels{\OB{F'}} \cap \labels{\OB{F}} \cap \labels{\OB{I}} = \emptyset
	}{
		\vdash\src{\OB{F'};\OB{F} , \OB{I}}:\src{whole}
	}{whole-us}
    
    \typerule{Initial State}{
		\src{M_0} = \src{M''}\cup\src{M}\cup\src{M'}
		\\
		\src{M'} = \myset{ \src{n\mapsto 0 : \safeta} }{ \src{n}\in\mb{N}\setminus\dom{\src{M}} } 
		\\
		\src{M''} = \myset{ \src{-n\mapsto 0 : \unta} }{ \src{n}\in\mb{N}, \src{-n}\notin\dom{\src{M}} } \\
		\src{A_0} = \myset {x \mapsto 0 : \safeta}{x \in \Reg} \\
		\src{A_0'} = \src{A_0}[\pc \mapsto \ffun{main}]
	}{
		\SInits{(\src{M , \OB{F} , \OB{I}})} = \src{\OB{F} , \OB{I} , \emptyset, \tup{p, M_0, A_0'} }
	}{ini-us}
	
	\typerule{Terminal State}{
		\nexists \src{\Omega},\src{\lambda} . \src{\Omega \nsarrow{\lambda} \Omega'}
	}{
		\finType{\src{\Omega}}
	}{term-state}
    
	\typerule{Merge-B-base}{}{
		\srce + \srce = \srce
	}{}
	\typerule{Merge-B-ind}{
		\src{\OB{\Bva} + \OB{\Bt}} = \src{\OB{B}} & \src{\OB{n_v}+\OB{n_t}}=\src{\OB{n}}
	}{
		\src{\OB{\Bva} ; \OB{n_v} + \OB{\Bt} ; \OB{n_t}} = \src{\OB{B} ; R}
	}{}
	
	\typerule{Merge-M-base}{}{
		\srce + \srce = \srce
	}{}
	\typerule{Merge-M-ind}{
		\src{\mv+\mta}=\src{M}
	}{
		\src{\mv ; n\mapsto v+\mta ; n\mapsto \taint} = \src{M ; n\mapsto v:\taint}
	}{}
	\typerule{Merge-A-base}{}{
		\srce + \srce = \srce
	}{}
	\typerule{Merge-A-ind}{
		\src{\av+\at}=\src{A}
	}{
		\src{\av ; n\mapsto v+\at ; n\mapsto \taint} = \src{A ; n\mapsto v:\taint}
	}{}
	
	\typerule{Merge-s-}{
		\src{\mv+\mta}=\src{M}
		&
		\src{\OB{\Bva} + \OB{\Bt}} = \src{\OB{B}} & \src{\av+\at}=\src{A}
	}{
		\src{C;\OB{\Bva}; \tup{p, \mv, \av} + C;\OB{\Bt}; \tup{p, \mta, \at}} = \src{C ; \OB{B'}; \tup{p, M, A}}
	}{}
	
\end{center}

A non-speculative state consists of the codebase $C$, a stack $B$ tracking the return address and a configuration $\sigma$, which is pair of a register file $a \in \Reg$ and a memory $m \in M$.

We can split a configuration $\sigma = \tup{m, a}$ into their value and taint parts and write $\sigmav = \tup{m^{v}, a^{v}}$ and $\sigmat = \tup{m^{\taint}, a^{\taint}}$ respectively.

\mytoprule{\mi{Judgements}}
\begin{align*}
    &
    \exprEval{\sigmav}{e}{v}
    && \text{Expression e big-steps to value v}
    \\
    &
    \exprEval{\sigmat}{e}{\taint}
    && \text{Expression e is tainted \taint}
    \\
    &
    \exprEval{\sigma}{e}{v : \taint}
    && \text{Expression e big-steps to value v tagged \taint}
    \\
    &
    \sigma \nsarrow{\tau} \sigma'
    && \text{Configuration $\sigma$ small-steps to $\sigma'$ and emits observation $\tau$.}
    \\
    &
    \sigma \nsbigarrow{\tauStack} \sigma'
    && \text{Configuration $\sigma$ big-steps to $\sigma'$ and emits a list of observations $\tauStack$.}
    \\
    &
    \Trace{(p, \sigma)}{\tauStack}
    && \text{Program $p$ and initial configuration $\sigma$ produce the observations $\tauStack$ during execution.}
\end{align*}

\subsubsection{Expression Evaluation}

\begin{center}
    \mytoprule{\exprEval{\sigma}{e}{v}}

\typerule{E-val}
{}
{
\exprEval{\sigmav}{v}{v}
}{E-val}
\typerule{E-lookup}
{\sigmav(r) = v & \text{$r \in \Reg$}}
{
\exprEval{\sigmav}{r}{v}
}{E-lookup}

\typerule{E-binop}
{
\exprEval{\sigmav}{e_1}{n_1} & \exprEval{\sigmav}{e_2 }{n_2} & n = n_1 \otimes n_2'
}
{
\exprEval{\sigmav}{e_1 \otimes e_2 }{n}
}{E-binop}
\typerule{E-unop}
{
\exprEval{\sigmav}{e }{n}
}
{
\exprEval{\sigmav}{\ominus e }{\ominus n}
}{E-unop}

\end{center}

\begin{center}
    \mytoprule{\sigmat \triangleright{} e \bigreds \taint}

\typerule{T-val}
{}
{
\exprEval{\sigmat}{v}{\safeta}
}{T-val}
\typerule{T-lookup}
{\sigmat(r) = \taint & \text{$r \in \Reg$}}
{
\exprEval{\sigmat}{r}{\taint}
}{T-lookup}

\typerule{T-binop}
{
\exprEval{\sigmat}{e_1}{\taint} & \exprEval{\sigmat}{e_2 }{\taint'} & \taint'' = \taint \sqcup \taint'
}
{
\exprEval{\sigmat}{e_1 \otimes e_2 }{\taint''}
}{T-binop}
\typerule{T-unop}
{
\exprEval{\sigmat}{e}{\taint}
}
{
\exprEval{\sigmat}{\ominus e }{\taint}
}{T-unnop}

\end{center}

\begin{center}
    \mytoprule{\exprEval{\sigma}{e}{v : \taint}}

\typerule{Combine-val}
{\sigmav + \sigmat = \sigma & \exprEval{\sigmav}{e}{v} & \exprEval{\sigmat}{e}{\taint}}
{
\exprEval{\sigma}{e}{v : \taint}
}{Combine}

\end{center}

Note the rules for \Cref{tr:E-binop} and \Cref{tr:E-unop}. They do not allow the appearance of labels. This means expression like $2 + l$ are not allowed in the program. However, labels can be looked up in \Cref{tr:E-lookup}. For more information we refer to \Cref{sec:Linker}.

\subsection{Part Two}
Furthermore, we only allow jmp l where l is a label, We have an indirect jump if the location is stored in a register.

All arguments of functions are put into registers according to the calling convention. The caller puts all arguments into registers $r_{f1} \cdots r_{f_n}$.

Since each code location $l$ belongs to a function, we can define a helper function $\ffun{}$ that maps each location $l$ of a program to its corresponding function.

\begin{center}\small
\mytoprule{\sigma \nsarrow{\tau} \sigma}

\typerule{Skip}
{
\select{p}{a(\pc)} = \pskip
}
{
\src{C;\OB{\Bva}; \tup{p,m,a}} \nsarrow{} \src{C;\OB{\Bva};\tup{p, m, \av[\pc \mapsto \av(\pc)+1]}}
}{skip}

\typerule{Assign}
{
\select{p}{\av(\pc)} = \passign{x}{e} & x \neq \pc
}
{
\src{C;\OB{\Bva}; \tup{p,\mv,\av}} \nsarrow{} \src{C;\OB{\Bva}; \tup{m, \av[\pc \mapsto \av(\pc)+1,x \mapsto \exprEval{\av}{e}{v}]}}
}{assign}

\typerule{ConditionalUpdate-Sat}
{
	p(\av(\pc)) = \pcondassign{x}{e}{e'} &  \exprEval{\av}{e'}{v} \\
	x \neq \pc & v = 0
}
{
	\src{C;\OB{\Bva}; \tup{p,\mv,\av}} \nsarrow{} \src{C;\OB{\Bva}; \tup{\mv,\av[\pc \mapsto \av(\pc) + 1, x \mapsto \exprEval{\av}{e}{v}]}}
}{condup-sat}

\typerule{ConditionalUpdate-Unsat}
{
	p(\av(\pc)) = \pcondassign{x}{e}{e'} & \exprEval{\av}{e'}{v} \\
	x \neq \pc & v \neq 0
}
{
    \src{C;\OB{\Bva}; \tup{p,\mv,\av}} \nsarrow{} \src{C;\OB{\Bva};\tup{\mv,\av[\pc \mapsto \av(\pc) + 1]}}
}{condup-unsat}

\typerule{Terminate}
{
	\select{p}{\av(\pc)} = \bot
}
{
\src{C;\OB{\Bva}; \tup{p,\mv,\av}}  \nsarrow{} \src{C;\OB{\Bva}; \tup{\mv, \av[\pc \mapsto \bot]}}
}{terminate}

\typerule{Load}
{
\select{p}{\av(\pc)} = \pload{x}{e} & x \neq \pc & \exprEval{\av}{e}{n}
}
{
\src{C;\OB{\Bva}; \tup{p,\mv,\av}}  \nsarrow{\loadObs{n}} \src{C;\OB{\Bva}; \tup{\mv, \av[\pc \mapsto \av(\pc)+1, x \mapsto \mv^{\low}(n)]}}
}{load}

\typerule{Store}
{
\select{p}{\av(\pc)} = \pstore{x}{e} &  \exprEval{\av}{e}{n}
}
{
\src{C;\OB{\Bva}; \tup{p,\mv,\av}} \nsarrow{\storeObs{n}} \src{C;\OB{\Bva}; \tup{ \mv^{\low}[n \mapsto \av(x)], \av[\pc \mapsto \av(\pc)+1]}}
}{store}

\typerule{Beqz-Sat}
{
\select{p}{\av(\pc)} = \pjz{x}{\lbl} &  \av(x) = 0
}
{
\src{C;\OB{\Bva}; \tup{p,\mv,\av}}  \nsarrow{\pcObs{\lbl}} \src{C;\OB{\Bva};\tup{\mv, \av[\pc \mapsto \lbl]}}
}{beqz-sat}

\typerule{Beqz-Unsat}
{
\select{p}{\av(\pc)} = \pjz{x}{\lbl}  & \av(x) \neq 0
}
{
\src{C;\OB{\Bva}; \tup{p,\mv,\av}}  \nsarrow{\pcObs{\av(\pc)+1}} \src{C;\OB{\Bva};\tup{\mv, \av[\pc \mapsto a(\pc) +1]}}
}{beqz-unsat}

\typerule{Jmp}
{
\select{p}{\av(\pc)} = \pjmp{e} &  \exprEval{\av}{e}{\lbl} \\
\ffun{\av(\pc)} = f & \ffun{\lbl} = f }
{
\src{C;\OB{\Bva}; \tup{p,\mv,\av}}  \nsarrow{\pcObs{\lbl}} \src{C;\OB{\Bva};\tup{ m, a[\pc \mapsto \lbl]}}
}{jmp}

\typerule{Load-Prv}
{
\select{p}{a(\pc)} = \ploadprv{x}{e} & x \neq \pc & \exprEval{\av}{e}{n}
}
{
\src{C;\OB{\Bva}; \tup{p,\mv,\av}}  \nsarrow{\loadObs{n}} \src{C;\OB{\Bva}; \tup{\mv, \av[\pc \mapsto a(\pc)+1, x \mapsto \mv^{\high}(n)]}}
}{load-prv}

\typerule{Store-Prv}
{
\select{p}{a(\pc)} = \pstoreprv{x}{e} &  \exprEval{\av}{e}{n}
}
{
\src{C;\OB{\Bva}; \tup{p,\mv,\av}} \nsarrow{\storeObs{n}} \src{C;\OB{\Bva}; \tup{ \mv^{\high}[n \mapsto \av(x)], \av[\pc \mapsto \av(\pc)+1]}}
}{store-prv}

\typerule{Call}
{
\select{p}{\av(\pc)} = \pcall{f'} & \mathcal{F}(f') = n' & f' \in C.\mtt{funs}\\
\av' = \av[\pc \mapsto n'] & \av(\pc) = n & \ffun{n} = f  \\
\src{C}.\mtt{intfs}\vdash\src{f',f} : \src{in} }
{
\src{C;\OB{\Bva}; \tup{p,\mv,\av}}  \nsarrow{\pcall{f'}?} \src{C;  (\OB{\Bva \cdot \av(\pc) + 1} ;\emptyset) ;\tup{p, \mv, \av'}}
}{call}
\typerule{Call-internal}
{
\select{p}{\av(\pc)} = \pcall{f'} & \mathcal{F}(f') = n' & f' \in C.\mtt{funs}\\
\av' = \av[\pc \mapsto n'] & \av(\pc) = n & \ffun{n} = f \\
\src{C}.\mtt{intfs}\vdash\src{f',f} : \src{internal} }
{
\src{C;\OB{\Bva}; \tup{p,\mv,\av}}  \nsarrow{\varepsilon} \src{C;\OB{\Bva} \cdot\av(\pc) + 1;\tup{p, \mv, \av'}}
}{call-internal}

\typerule{Callback}
{
\select{p}{\av(\pc)} = \pcall{f'} & \mathcal{F}(f') = n' & f' \in C.\mtt{funs}\\
\av' = \av[\pc \mapsto n'] & \av(\pc) = n & \ffun{n} = f &
\src{C}.\mtt{intfs}\vdash\src{f',f} : \src{out}
}
{
\src{C;\OB{\Bva}; \tup{p,\mv,\av}}  \nsarrow{\pcall{f'}!} \src{C; (\OB{\Bva} \av(\pc) + 1 ; \emptyset);\tup{p, \mv, \av'}}
}{callback}

\typerule{Ret}
{
\select{p}{\av(\pc)} = \pret & \av(\pc) = n & \av' = \av[\pc \mapsto l]  \\
\ffun{n} = f & \ffun{l} = f' & \src{C}.\mtt{intfs}\vdash\src{f',f} : \src{in}
}
{
\src{C; (\OB{\Bva} \cdot l; \OB{n}) \tup{p,\mv,\av}}  \nsarrow{\pret!} \src{C;\OB{\Bva}; \tup{p,\mv,\av'}} 
}{ret}
\typerule{Ret-internal}
{
\select{p}{\av(\pc)} = \pret & \av(\pc) = n \av' = \av[\pc \mapsto l]  \\
\ffun{n} = f & \ffun{l} = f' & \src{C}.\mtt{intfs}\vdash\src{f',f} : \src{internal}
}
{
\src{C;\OB{\Bva} \cdot l; \tup{p,\mv,\av}}  \nsarrow{\varepsilon} \src{C;\OB{\Bva}; \tup{p,\mv,\av'}} 
}{ret-internal}

\typerule{Retback}
{
\select{p}{\av(\pc)} = \pret &  \av(\pc) = n & \av' = \av[\pc \mapsto l]  \\
\ffun{n} = f & \ffun{l} = f' & \src{C}.\mtt{intfs}\vdash\src{f',f} : \src{out}
}
{
\src{C;(\OB{\Bva} \cdot l; \OB{n}); \tup{p,\mv,\av}}  \nsarrow{\pret?} \src{C;\OB{\Bva}; \tup{p,\mv,\av'}} 
}{retback}

\mytoprule{\text{Administrative instructions}}

\typerule{Popret}
{
\select{p}{\av(\pc)} = \ppopret & \av(\pc) = n & \ffun{n} = f & f \in C.funs &  \\
}
{
\src{C; (\OB{\Bva} \cdot l; \OB{n}) \tup{p,\mv,\av}}  \nsarrow{} \src{C;\OB{\Bva}; \tup{p,\mv,\av[\pc \mapsto \av(\pc)+1]}} 
}{popret}

\typerule{Modret}
{
\select{p}{\av(\pc)} = \pmodret{e} & \av(\pc) = n &  \ffun{n} = f \\
 f \in C.funs &\exprEval{\av}{e}{l'}
}
{
\src{C; (\OB{\Bva} \cdot l; \OB{n}) \tup{p,\mv,\av}}  \nsarrow{\pret!} \src{C;\OB{\Bva \cdot l'}; \tup{p,\mv, \av[\pc \mapsto \av(\pc)+1]}} 
}{modret}

\end{center}

$\pjmp{}$ instructions are required to be internal and only jmp to the same function. Since the component does not know with which context it is linked against, it does not make sense to allow these kind of external jumps. Components and context can only interact via $\pcall{}$ and $\pret$ instructions.
Next, jumping to other functions destroys the call stack and makes it harder to reason about the program.

Return addresses are stored in the current Frame except for cross component calls. There, the return address is stored in the old Frame and a new frame is created.

Store and load operations (\Cref{tr:store}, \Cref{tr:load}) require that the memory location is a natural number using the side-condition $\exprEval{\av}{e}{n}$. This means labels cannot appear there. For an example why this restriction is needed see \Cref{sec:linker-store-ex}. Note that it is still allowed to store and load labels from memory.

Private stores and loads are required to store/load to a High part of the memory. Furthermore, these instructions cannot be executed by the context.

$\taintpc$ will indicate if we speculate or not. While speculating this taint will be \unta. Furthermore, we need to track the value of the $\pc$ even during taint tracking to know which instruction to execute next.

\mytoprule{\taintpc; \Omega \nsarrow{\taint} \Omega;}
\begin{center}

\typerule{T-Skip}
{
\select{p}{a(\pc)} = \pskip
}
{
\src{\taintpc; C;\OB{B}; \tup{p,\mta, a}} \nsarrow{\epsilon} \src{C;\OB{B};\tup{p, \mta, a[\pc \mapsto a(\pc)+1]}}
}{t-skip}

\typerule{T-Assign}
{
\select{p}{a(\pc)} = \passign{x}{e} & x \neq \pc & \exprEval{a}{e}{v : \taint}
}
{
\src{\taintpc; C;\OB{B}; \tup{p,\mta,a}} \nsarrow{\epsilon} \src{C;\OB{B}; \tup{p, m, a[\pc \mapsto a(\pc)+1, x \mapsto v : \taint ]}}
}{t-assign}

\typerule{T-ConditionalUpdate-Sat}
{
	p(a(\pc)) = \pcondassign{x}{e}{e'} &  \exprEval{a}{e'}{v' : \taint'} \\
	x \neq \pc & v = 0 & \exprEval{a}{e}{v : \taint}
}
{
	\src{\taintpc; C;\OB{B}; \tup{p,\mta, a}} \nsarrow{\epsilon} \src{ C;\OB{B}; \tup{\mta,a[\pc \mapsto a(\pc) + 1, x \mapsto v : \taint]}}
}{t-condup-sat}

\typerule{T-ConditionalUpdate-Unsat}
{
	p(a(\pc)) = \pcondassign{x}{e}{e'} & \exprEval{a}{e'}{v : \taint} \\
	x \neq \pc & v \neq 0
}
{
    \src{\taintpc; C;\OB{\Bva}; \tup{p,\mta, a}} \nsarrow{} \src{C;\OB{\Bva};\tup{\mta, a[\pc \mapsto a(\pc) + 1]}}
}{t-condup-unsat}

\typerule{T-Terminate}
{
	\select{p}{a(\pc)} = \bot
}
{
\src{\taintpc; C;\OB{B}; \tup{p,\mta,a}}  \nsarrow{\epsilon} \src{ C;\OB{B}; \tup{p, \mta, a[\pc \mapsto \bot]}}
}{t-terminate}

\typerule{T-Load}
{
\select{p}{a(\pc)} = \pload{x}{e} & x \neq \pc \\
\exprEval{a}{e}{n : \taint} & \mta^{\low}(n) = \taint'
}
{
\src{\taintpc; C;\OB{B}; \tup{p,\mta,a}}  \nsarrow{\taint \lub \taint'} \src{C;\OB{B}; \tup{p, \mta, a[\pc \mapsto a(\pc)+1, x \mapsto 0 : \taint']}}
}{t-load}

\typerule{T-Store}
{
\select{p}{a(\pc)} = \pstore{x}{e} &  \exprEval{a}{e}{n : \taint} 
}
{
\src{\taintpc; C;\OB{B}; \tup{p,\mta,a}} \nsarrow{\taint} \src{C;\OB{B}; \tup{p, \mta^{\low}[\abs{n} \mapsto \safeta], a[\pc \mapsto \av(\pc)+1]}}
}{t-store}

\typerule{T-Beqz-Sat}
{
\select{p}{a(\pc)} = \pjz{x}{\lbl} &  a(x) = 0 : \taint
}
{
\src{\taintpc; C;\OB{B}; \tup{p,\mta, a}}  \nsarrow{\taint} \src{C;\OB{B};\tup{p, \mta, a[\pc \mapsto \lbl]}}
}{t-beqz-sat}

\typerule{T-Beqz-Unsat}
{
\select{p}{a(\pc)} = \pjz{x}{\lbl}  & a(x) = n : \taint & n \neq 0
}
{
\src{\taintpc; C;\OB{B}; \tup{p,\mta,a}}  \nsarrow{\taint} \src{C;\OB{\Bva};\tup{p, \mta, a[\pc \mapsto a(\pc) +1]}}
}{t-beqz-unsat}

\typerule{T-Jmp}
{
\select{p}{a(\pc)} = \pjmp{e} &  \exprEval{a}{e}{\lbl : \taint} \\
\ffun{\av(\pc)} = f & \ffun{\lbl} = f }
{
\src{\taintpc; C;\OB{B}; \tup{p,\mta, a}}  \nsarrow{\taint} \src{C;\OB{B};\tup{p, \mta, a[\pc \mapsto \lbl]}}
}{t-jmp}

\typerule{T-Load-Prv}
{
\select{p}{a(\pc)} = \ploadprv{x}{e} & x \neq \pc \\
\exprEval{a}{e}{n : \taint} & \mta^{\high}(n) = \taint' & \taint'' = \taint \sqcup \taint'
}
{
\src{\taintpc; C;\OB{B}; \tup{p,\mta, a}}  \nsarrow{\taint''} \src{C;\OB{B}; \tup{p, \mta, a[\pc \mapsto a(\pc)+1, x \mapsto 0 : \unta]}}
}{t-load-prv}

\typerule{T-Store-Prv}
{
\select{p}{a(\pc)} = \pstoreprv{x}{e} &  \exprEval{a}{e}{n : \taint} & a(x) = v : \taint'
}
{
\src{\taintpc; C;\OB{B}; \tup{p,\mta,a}} \nsarrow{\taint} \src{C;\OB{B}; \tup{p, \mta^{\high}[n \mapsto \taint'], a[\pc \mapsto a(\pc)+1]}}
}{t-store-prv}

\typerule{T-Call}
{
\select{p}{a(\pc)} = \pcall{f'} & \mathcal{F}(f') = n' & f' \in C.\mtt{funs}\\
a' = a[\pc \mapsto n' : \safeta] & a(\pc) = n & \ffun{n} = f \\
\src{C}.\mtt{intfs}\vdash\src{f',f} : \src{in} 
}
{
\src{\taintpc; C;\OB{B}; \tup{p,\mta,a}}  \nsarrow{\safeta} \src{C;\OB{B} \cdot 0 \mapsto \av(\pc) + 1 : \taintpc ;\tup{p, \mta, a'}}
}{t-call}
\typerule{Call-internal}
{
\select{p}{a(\pc)} = \pcall{f'} & \mathcal{F}(f') = n' & f' \in C.\mtt{funs}\\
a' = a[\pc \mapsto n' : \safeta] & a(\pc) = n & \ffun{n} = f \\
\src{C}.\mtt{intfs}\vdash\src{f',f} : \src{internal} }
{
\src{C;\OB{\Bva}; \tup{p,\mta,\av}}  \nsarrow{\varepsilon} \src{C;\OB{B} \cdot 0 \mapsto \av(\pc) + 1 : \taintpc ;\tup{p, \mta, a'}}
}{t-call-internal}

\typerule{Callback}
{
\select{p}{a(\pc)} = \pcall{f'} & \mathcal{F}(f') = n' & f' \in C.\mtt{funs}\\
a' = a[\pc \mapsto n' : \safeta] & a(\pc) = n & \ffun{n} = f
\src{C}.\mtt{intfs}\vdash\src{f',f} : \src{out}
}
{
\src{C;\OB{\Bva}; \tup{p,\mta, a}}  \nsarrow{\safeta} \src{C;\OB{\Bva} \cdot 0 \mapsto \av(\pc) + 1;\tup{p, \mta, a'}}
}{t-callback}

\typerule{T-Ret}
{
\select{p}{a(\pc)} = \pret & a(\pc) = n & a' = a[\pc \mapsto l] \\
\ffun{n} = f & \ffun{l} = f' & \src{C}.\mtt{intfs}\vdash\src{f',f} : \src{in}
}
{
\src{\taintpc; C;\OB{B} \cdot 0 \mapsto l : \taint; \tup{p,\mta, a}}  \nsarrow{\safeta} \src{C;\OB{B}; \tup{p,\mta,a'}} 
}{t-ret}

\typerule{Ret-internal}
{
\select{p}{a(\pc)} = \pret & a(\pc) = n & a' = a[\pc \mapsto l]  \\
\ffun{n} = f & \ffun{l} = f' & \src{C}.\mtt{intfs}\vdash\src{f',f} : \src{internal}
}
{
\src{C;\OB{B} \cdot 0 \mapsto l; \tup{p,\mta, a}}  \nsarrow{\varepsilon} \src{C;\OB{B}; \tup{p,\mv,\av'}} 
}{t-ret-internal}

\typerule{Retback}
{
\select{p}{a(\pc)} = \pret &  a(\pc) = n & a' = a[\pc \mapsto l]  \\
\ffun{n} = f & \ffun{l} = f' & \src{C}.\mtt{intfs}\vdash\src{f',f} : \src{out}
}
{
\src{C;\OB{B} \cdot 0 \mapsto l; \tup{p,\mta,a}}  \nsarrow{\safeta} \src{C;\OB{B}; \tup{p, \mta,a'}} 
}{t-retback}

\mytoprule{Administrative instructions}

\typerule{Popret}
{
\select{p}{a(\pc)} = \ppopret & a(\pc) = n & \ffun{n} = f & f \in C.funs
}
{
\src{C; (\OB{B} \cdot l; \OB{n}) \tup{p,\mta,a}}  \nsarrow{} \src{C;\OB{B}; \tup{p,\mta,a}} 
}{popret}

\typerule{Modret}
{
\select{p}{a(\pc)} = \pmodret{e} & a(\pc) = n &  \ffun{n} = f \\
 f \in C.funs &\exprEval{a}{e}{l' : \taint}
}
{
\src{C; (\OB{B} \cdot l; \OB{n}) \tup{p,\mta,a}}  \nsarrow{} \src{C;\OB{B \cdot l' : \taint}; \tup{p,\mta,a}} 
}{modret}

\end{center}
We use a trick for the taint tracking of reading values. We do not know the value that is read, because we only have the taint part of the memory and so we read 0 by default. However, we taint correctly and when the memory is merged, we do not consider the values stored here.
Note that \Cref{tr:combine-s} merges the state after each step. This additionally means we will always have the correct values when evaluating the expression. (Since the value part of A is updated with each combined step).

\begin{center}
\mytoprule{\Omega \nsarrow{\tau^{\taint}} \Omega'}

\typerule{Combine-s}
{
\Omega = \Omega_v + \Omega_t &  \Omega_v' + \Omega_t' = \Omega' \\
\Omega_v \nsarrow{\tau} \Omega_v' & \safeta;\Omega_t \nsarrow{\taint} \Omega_t'
}
{\Omega \nsarrow{\tau^{\taint \sqcap \safeta}} \Omega'
}
{combine-s}
\end{center}

We add rules to define a run of the program:

\begin{center} 
    \mytoprule{\Omega \nsbigarrow{\tauStackT} \Omega'}
    
    \typerule{NS-Reflection}
    {
    }
    {\Omega
    \nsbigarrow{\varepsilon}
    \Omega
    }{ns-reflect}
    
    \typerule{NS-Single}
    {\Omega \nsbigarrow{\tra{^\taint}} \Omega'' & \Omega'' \nsarrow{\alpha^{\taint}} \Omega' & \Omega'' = \src{\OB{F}}; \src{\OB{I}}; \src{B}; \src{\sigma''} & \Omega' = \src{\OB{F}}; \src{\OB{I}}; \src{B'}; \src{\sigma'} \\
    \ffun{\sigma''(\pc)} = f & \ffun{\sigma'(\pc)} = f'  \\
    \text{ if } \src{f} == \src{f'} \text{and} \src{f} \in \src{I} \text{ then } \tra{^\taint} = \src{\epsilon} 
            \text{ else } \src{\tra{^\taint}} = \src{\alpha^{\safeta}} 
    }
    {\Omega
    \nsbigarrow{\tra{^\taint}}
    \Omega'
    }{ns-single}
    \typerule{NS-Single-silent}
    {
    \Omega \nsbigarrow{\tra{^\taint}} \Omega'' & \Omega'' \nsarrow{\epsilon} \Omega'
    }
    {\Omega
    \nsbigarrow{\tra{^\taint}}
    \Omega'
    }{ns-silent}

\end{center}

\Cref{tr:ns-single} tells that when you have a single action, if that action is done within the context, then no action is shown.
Otherwise, if the action is done within the component, or if the action is done between component and context, then the action is shown and it is tagged as $\src{\safeta}$.
All generated actions are $\src{\safeta}$ despite their label ($\src{\taint}$) because the only source of unsafety is speculation, which cannot happen in the source language.

\begin{center}
    \mytoprule{\Trace{P}{\tauStackT}}
    
    \typerule {NS-Trace}
    { \exists \Omega' \vdash \bot &  \Omega^0(P) \nsbigarrow{\tauStackT} \Omega'
    }
    { \Trace{P}{\tauStack}
     }{ns-trace}
    \typerule {NS-Beh}
    {}
    { \behNs{P} =  \{\tauStackT \mid \Trace{P}{\tauStack} \}
     }{ns-beh}
\end{center}
We define the behaviour of a whole program $P$, written $\behNs{P}$, as the set of traces that terminate.

 \section{Leakage Ordering}

Each speculative semantics introduced capture different "attacker models".
We follow \cite{contracts} and and introduce a partial order in terms of leakage between the different semantics.

In particular, we say that semantics $\contract{}{1}$ is weaker than another semantics $\contract{}{2}$, written $\contract{}{1} \sqsubseteq \contract{}{2}$ iff $\contract{}{2}$ leaks more than $\contract{}{1}$, i.e., if any two initial configurations that result in different traces for $\contract{}{1}$ also result in different traces for $\contract{}{2}$.
 \section{Contract Compiler and Program Security}

Here we define our compiler contract satisfaction and we first introduce the notions of contract-aware program security.

\begin{insight}
    In the following sections, we will use the names heaps and memories interchangeably.
\end{insight}

For convenience, we restate the definitions needed to define speculative safety.
\subsection{General Definitions}
We will define $\MemH{}$ for high or private memory and $\MemL{}$ for low or public memory.
Again we will use $\MemL{}$ and  $\MemH{}$ defined here to mean the public and private part,
\begin{align*}
    \MemL{} =&  \left.M\right|_{n \geq 0}\\
    \MemH{} =& \left.M\right|_{n < 0}
\end{align*}

\begin{definition}[Dot notation]
In the following, we use a dot notation to refer to components of a speculative instance $\Phix$. For example, we write $\Phix . \Omega$ to refer to $\Omega$ inside the instance $\Phix$.

We lift this notation to speculative states $\Sigmax$ in the following way:
we write $\Sigmax \cdot \sigma$ to denote the configuration $\sigma$ of the topmost speculative instance $\Phix$ of the speculative state $\Sigmax$. For example if $\Sigmax = \phiStackx \cdot \Phix$ then $\Sigmax . \Omega$ refers to $\Phix . \Omega$.
\end{definition}

A memory $M$ is valid if and only if its private part contains unsafe values.
We do not enforce this on the public part too because a program may write a private value in the public heap.
\begin{definition}[Valid Memory]\label{def:val-mem}
	\begin{align*}
		\vdash \com{M} :\com{vld} \isdef&\
			\forall \com{n}\mapsto \com{v}:\com{\taint} \in \com{\MemH{}}, \com{\sigma}=\com{\unta}
	\end{align*}
\end{definition}

An attacker has no private Memory nor instructions to manipulate it directly.
\begin{definition}[Attacker]\label{def:atk}
	\begin{align*}
		\vdash\com{\ctxc{}} :\com{atk} \isdef&\
			\com{\ctxc{}} \equiv \com{\MemL{}};\com{\OB{F}} 
			\\&\
			\text{ and }
			\forall \com{f} \in \com{\OB{F}}, \com{\ppopret{}} \lor \com{\pmodret{e}} \lor \com{\ploadprv{x}{e}} \lor \com{\pstoreprv{x}{e}} \notin \com{f}
	\end{align*}
\end{definition}

We first define what it means for memories and programs to be low equivalent.
\begin{definition}[Low-equivalence for memories and programs]\label{def:loweq}
	\begin{align*}
		\com{M \loweq M'} \isdef&\
			\vdash \com{M} : \com{vld} \wedge \vdash \com{M'} : \com{vld} \wedge \dom{\MemL{}} = \dom{\MemL{'}} \wedge \dom{\MemH{}} = \dom{\MemH{'}} \wedge 
			\\&\
			\forall \com{n\mapsto v:\taint}\in \com{\MemL{}} \text{ then } \com{n\mapsto v:\taint} \in \com{\MemL{'}}
			\\&\
			\forall \com{n\mapsto v:\taint}\in \com{\MemH{}} \text{ then } \exists v'.\ \com{n\mapsto v':\taint} \in \com{\MemH{'}}
		\\
		\com{P\loweq P'} \isdef&\
			\com{P}\equiv \com{M;\OB{F};\OB{I}} \text{ and } \com{P'}\equiv \com{M';\OB{F};\OB{I}} \text{ and } \com{M \loweq M'}
	\end{align*}
\end{definition}

This is a relation we later need to show that SS overapproximates SNI for a given contract. 

Let us first define what it means for program states to be related / safe-equivalent. 
We note that this notion is similar for almost all the semantics, since the state is the same. Except for $\Rv$. Here, we additionally need to require that the RSB $\Rsb$ is related as well.

\begin{definition}[Safe-equivalence for heaps, traces and program states]\label{def:safeeq}
    We first define what it means for program states and traces to be $\safe{}$ or $\unsafe{}$:
	\begin{align*}
		\unsafe{w, ss_0 \cdot (\Omega, m, \sigma)} \isdef & \sigma = \unta\\
		\safe{w, (\Omega, m, \sigma)} \isdef & \sigma = \safeta\\
		\safe{\epsilon} \isdef & \mathit{true} \\
		\safe{\alpha^\sigma} \isdef & \sigma = \safeta \\
		\safe{\trac{^\sigma} \cdot \acac{^\sigma}} \isdef & 
			\safe{\trac{^\sigma}} \text{ and } \safe{\acac{^\sigma}}\\
	    \intertext{And now the definition of $\approx$}
		\com{v:\sigma \approx v':\sigma'} \isdef&\ 
			\sigma = \sigma' \text{ and }
			\text{if } \sigma = \safeta \text{ then } v = v'\\
		\com{M \approx M'} \isdef&\ \dom{M} = \dom{M'} \text{ and } \\
		    & \qquad 
			\forall \com{n} \in \dom{M}.\ M(n) \approx M'(n) \\ 
		\com{A \approx A'} \isdef&\
				\forall \com{x}.\ A(x) \approx A'(x) \\
		\com{\Omega \approx \Omega'} \isdef&\
			\com{\Omega}\equiv \com{\OB{F}, \OB{I}, B, \tup{p, M, A}}  \text{ and } \com{\Omega'}\equiv \OB{F},\OB{I}, B, \tup{p, M', A'} \text{ and } \com{M \approx M'} \text{ and } \com{A \approx A'}
		\\
		\com{\emptyset \approx \emptyset}
		\\
		\com{\Sigma \approx \Sigma'} \isdef&\
			\com{\Sigma} \equiv \phiStack \cdot \tup{\Omega, n, \taintpc} \text{ and } \com{\Sigma'} \equiv \phiStack' \cdot \tup{\Omega', n, \taintpc} \text{ and } \Omega \approx \Omega'\text{ and } \phiStack\approx \phiStack'
	\end{align*}
\end{definition}

Since we do not have start observations we use a marker on our trace actions to indicate if speculation started together with a tag. For example, for $\contract{}{\Bv}$ this would mark $\pcObs{n}_{\mathghost\Bv}$ when speculation is started. We can do this for all our contracts.

\begin{definition}[Non-speculative Projection]
	\begin{align*}
		\nspecProject{\tauStack} =& \nspecProject{(\tauStack, 0)}
		\\
		\nspecProject{(\empTr, 0)} =& \empTr
		\\
		\nspecProject{(\tau_{id} \cdot \tauStack, 0)} =& \tau \cdot \nspecProject{(\tauStack, 1)} 
		\\
		\nspecProject{(\tau \cdot \tauStack, 0)} =& \tau \cdot \nspecProject{(\tauStack, 0)}
		\\
		\nspecProject{(\tau_{id} \cdot \tauStack, n + 1)} =& \nspecProject{(\tauStack, n + 2)} 
		\\
		\nspecProject{(\tau \cdot \tauStack, n + 1)} =& \nspecProject{(\tauStack, n + 1)}
		\\
		\nspecProject{(\rollbackObs{}_{id} \cdot \tauStack, n + 1)} =& \nspecProject{(\tauStack, n)}
	\end{align*}
\end{definition}

\subsection{Program level security}

\begin{definition}[Speculative Non-interference (SNI)]\label{def:sni}
	\begin{align*}
		\contract{}{x} \vdash \com{P} : \sni \isdef&\
			\forall P'. P' \loweq P,
			\forall \trac{_1}\in\contract{}{x}(\SInit{P}), \trac{_2}\in\contract{}{x}(\SInit{P'}) 
			\\&
			\text{ if } \nspecProject{\trac{_1}} = \nspecProject{\trac{_1}} \text{ then } \trac{_1} = \trac{_2}
	\end{align*}
\end{definition}

A component $P$ is robustly speculatively non-interferent for a contract $\contract{}{x}$ if it is SNI for any valid attacker it is linked to (\Cref{def:rsni}).

\begin{definition}[Robust Speculative Non-Interference \cite{S_sec_comp}]\label{def:rsni}
	\begin{align*}
		\contract{}{x} \vdash \com{P} : \rsni \isdef&\
			\forall \ctxc{} \text{ if } \vdash \ctxc{} : \com{atk} \text{ then } \vdash \com{\ctxc{}\hole{P}} : \sni
	\end{align*}
\end{definition}

\begin{definition}[Speculative Safety (SS)]\label{def:ss-contract}
\begin{align*}
	\contract{}{x} \vdash \com{P} : \ss \isdef&\
		\forall \trac{^\sigma}\in\contract{}{x}(P)\ldotp \forall \acac{^\sigma}\in\trac{^\sigma}\ldotp \com{\sigma}\equiv\com{\safeta}
\end{align*}
\end{definition}

A component is robustly speculatively safe (\rssdef{}) for a spec. semantics $\contract{}{x}$ if it is SS no matter what attacker it is linked against (\ssdef{}).
\begin{definition}[ Robust Speculative Safety \cite{S_sec_comp}]\label{def:rdss-contract}
\begin{align*}
	\contract{}{x} \vdash\com{P} : \rss \isdef&\
		\forall \ctxc{} \ldotp \text{ if } \vdash\ctxc{}:\com{atk} \text{ then } \contract{}{x} \vdash\com{\ctxc{}\hole{P}} : \ss
\end{align*}	
\end{definition}

Since there is no speculation in $\SR$ both definitions are trivially satisfied.
\begin{theorem}[\sstext{} and SNI hold for all source programs]\label{thm:ss-sni-source}
\begin{align*}
	\forall \src{P} \in \SR.
	 \contract{}{NS}\vdash \src{P} : \ss 
	\text{ and } \contract{}{NS} \vdash \src{P} : \sni
\end{align*}
\end{theorem}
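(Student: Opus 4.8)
The plan is to observe that the entire statement hinges on a single structural fact: the non-speculative semantics $\contract{}{NS}$ contains no speculation, so it never emits $\startObs{}$ or $\rollbackObs{}$ markers, never pushes speculative instances, and always runs with a safe program-counter taint. I would therefore treat the two conjuncts separately but derive both from this observation. For the $\sstext$ conjunct I would prove that every action appearing on a trace of $\src{P}$ under $\contract{}{NS}$ carries the taint $\safeta$; for the $\sni$ conjunct I would prove that the non-speculative projection $\nspecProject{\cdot}$ acts as the identity on all such traces, which collapses the premise of \Cref{def:sni} onto its conclusion.

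For Speculative Safety (\Cref{def:ss-contract}), I would argue by induction on the derivation of the big-step judgement $\Omega \nsbigarrow{\tauStackT} \Omega'$ that every action on $\tauStackT$ is tagged $\safeta$. The base case \Cref{tr:ns-reflect} emits only $\varepsilon$, and \Cref{tr:ns-silent} adds nothing observable. The only action-producing case is \Cref{tr:ns-single}, which explicitly re-tags the emitted control-flow action as $\alpha^{\safeta}$ regardless of its intrinsic taint. For the microarchitectural actions generated by the underlying single step, I would appeal to \Cref{tr:combine-s}, where the emitted label is $\tau^{\taint \sqcap \safeta}$; since $\safeta$ is the least element of the taint lattice, the meet $\taint \sqcap \safeta$ evaluates to $\safeta$ for every $\taint$. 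Thus every action on every $\contract{}{NS}$-trace of $\src{P}$ is $\safeta$, which is exactly what \Cref{def:ss-contract} demands. This part is the only one requiring genuine bookkeeping, so I expect it to be the main (though still routine) obstacle: one must be careful to cover the store/load, branch, jump, call and return rules uniformly and to confirm that the combined value/taint semantics never manufactures an $\unta$ action in the absence of speculation.

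For Speculative Non-Interference (\Cref{def:sni}), I would first establish the lemma that $\nspecProject{\trac{}} = \trac{}$ for every $\contract{}{NS}$-trace $\trac{}$. This follows by a short induction using the definition of $\nspecProject{\cdot}$: since no speculation-indexed action $\tau_{id}$ (with a real identifier) and no $\rollbackObs{}_{id}$ ever occurs on a non-speculative trace, the nesting counter in the projection stays at $0$ and every action is retained. Given this, the implication in \Cref{def:sni} is immediate: for any $P' \loweq P$ and any $\trac{_1} \in \contract{}{NS}(\SInit{P})$, $\trac{_2} \in \contract{}{NS}(\SInit{P'})$, the hypothesis $\nspecProject{\trac{_1}} = \nspecProject{\trac{_2}}$ rewrites to $\trac{_1} = \trac{_2}$, which is precisely the conclusion. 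Hence $\sni$ holds trivially, and both conjuncts are discharged. This mirrors the corresponding results for the branch-only setting in~\cite{S_sec_comp}, now transported to \muasm{} and its trace model.
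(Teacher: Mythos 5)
Your proposal is correct and follows essentially the same route as the paper, which discharges the theorem by exactly the two observations you elaborate: source programs emit only $\safeta$-tagged actions (the paper's own remark under \Cref{tr:ns-single} and its note on $\taint \sqcap \safeta$ in \Cref{tr:combine-s} confirm your taint argument), and non-speculative traces coincide with their projection, collapsing the \sni{} premise onto its conclusion. The paper states this in two lines without the inductions you spell out, but the underlying argument is identical.
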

\begin{proof}
This trivially holds from (1) programs in $\SR$ produce only actions labelled with $\src{\safeta}$ (for \ss{}, and (2) traces are identical to their non-speculative projection for programs $\SR$ (for $\contract{}{NS}$).
\end{proof}

\subsection{Compiler-level Security}

A compiler preserves $\rss$ for a given contract $\contract{}{x}$ if given a source component that is $\rss$, the compiled counterpart is also $\rss$ (\rdsspdef).

\begin{definition}[Robust speculative safety preservation \cite{S_sec_comp} (\rssp)]\label{def:rdsspc}
    $\contract{}{x} \vdash  \comp{\cdot} : \rssp \isdef \forall \src{P} \in \src{L}. \text{ if }
    \contract{}{NS} \vdash \src{P} : \rss \text{ then } \contract{}{x} \vdash \comp{\src{P}} : \rss$
\end{definition}

Similarly, a compiler preserves $\rsni$ for a given contract $\contract{}{x}$ if given a source component that is $\rsni$, the compiled counterpart is also $\rsni$.
\begin{definition}[Robust speculative non-interference preservation \cite{S_sec_comp} (\rsnip)]\label{def:rsnipc}
    $\contract{}{x} \vdash  \comp{\cdot} : \rsnip \isdef \forall \src{P} \in \src{L}. \text{ if }
    \contract{}{NS} \vdash \src{P} : \rsni \text{ then } \contract{}{x} \vdash \comp{\src{P}} : \rsni$
\end{definition}

Furthermore, we can also specify multiple contract security in the obvious way.

We can recover the original definitions of \citet{S_sec_comp} by specifying contracts $\Mcontract{}{NS}$ and $\Mcontract{}{\Bv}$ for their semantics, capturing non-speculative execution and speculation on branch instructions in their languages.

Given that contracts form a partial order $\sqsubset$, it follows that a compiler preserving the security for a stronger contract also implies that the same compiler preserves security even for a weaker contract (\Cref{cor:weak-contract-sat}).

\begin{corollary}[Leakage Ordering and RSNIP]\label{cor:weak-contract-sat}
    If $\contract{}{1} \vdash  \comp{\cdot} : \rsnip$ and $\contract{}{2} \sqsubset \contract{}{1}$
    then $\contract{}{2} \vdash \comp{\cdot} : \rsnip$
\end{corollary}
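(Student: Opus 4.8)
The plan is to reduce the corollary to a \emph{monotonicity lemma} for $\rsni$ along the leakage order, and then discharge that lemma with exactly two ingredients. First I would unfold $\rsnip$ (\Cref{def:rsnipc}): fix an arbitrary $\src{P}\in\src{L}$ with $\contract{}{NS}\vdash\src{P}:\rsni$, so the goal becomes $\contract{}{2}\vdash\comp{\src{P}}:\rsni$. From the hypothesis $\contract{}{1}\vdash\comp{\cdot}:\rsnip$ applied to this same $\src{P}$ I immediately obtain $\contract{}{1}\vdash\comp{\src{P}}:\rsni$. Hence it suffices to prove, for an arbitrary component $Q$, that $\contract{}{1}\vdash Q:\rsni$ together with $\contract{}{2}\sqsubseteq\contract{}{1}$ yields $\contract{}{2}\vdash Q:\rsni$ (strictness of $\sqsubset$ plays no role; only the underlying containment is used).

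I would prove this lemma by contraposition, working at the level of whole-program behaviours. Suppose $\contract{}{2}\not\vdash Q:\rsni$. Unfolding $\rsni$ (\Cref{def:rsni}) and $\sni$ (\Cref{def:sni}) produces a valid attacker $\ctxc{}$ and a low-equivalent whole program $W'$ with $\com{\ctxc{}\hole{Q}}\loweq W'$ whose $\contract{}{2}$-behaviours agree on their non-speculative projections, $\nspecProject{\contract{}{2}(\SInit{\com{\ctxc{}\hole{Q}}})}=\nspecProject{\contract{}{2}(\SInit{W'})}$, yet are distinguished, $\contract{}{2}(\SInit{\com{\ctxc{}\hole{Q}}})\neq\contract{}{2}(\SInit{W'})$. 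I claim the \emph{same} $\ctxc{}$ and $W'$ witness a violation under $\contract{}{1}$. The two premises transfer as follows. The non-speculative projection deletes every sub-trace enclosed between $\startl{x}$ and $\rollbl{x}$, i.e.\ all speculative contributions, so it coincides with the purely architectural trace and is therefore \emph{independent of the speculative semantics} producing it; formally $\nspecProject{\contract{}{1}(\SInit{C})}=\nspecProject{\contract{}{NS}(\SInit{C})}=\nspecProject{\contract{}{2}(\SInit{C})}$ for every initial configuration $C$ (this is the same observation that makes every program $\rsni$ for $\contract{}{NS}$ in \cite{S_sec_comp}). Instantiating at $\com{\ctxc{}\hole{Q}}$ and $W'$ carries the non-speculative-agreement premise from $\contract{}{2}$ to $\contract{}{1}$. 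For the distinguishing premise, $\contract{}{2}\sqsubseteq\contract{}{1}$ is \emph{by definition} the implication that any two initial configurations yielding different $\contract{}{2}$-behaviours also yield different $\contract{}{1}$-behaviours, giving $\contract{}{1}(\SInit{\com{\ctxc{}\hole{Q}}})\neq\contract{}{1}(\SInit{W'})$. Together these contradict $\contract{}{1}\vdash Q:\rsni$, establishing the lemma and hence the corollary.

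The bookkeeping parts (the unfoldings, and checking that the initial configurations appearing in the $\rsni$ counterexample are exactly the kind of pairs quantified over by the leakage ordering of \Cref{ssec:leakage-ordering}) are routine. The one step carrying real content is the semantics-independence of the non-speculative projection: it must hold uniformly for every semantics in \Cref{fig:instantiate}, including the combined ones, and ultimately rests on the always-mispredict structure, which guarantees that the architectural run embedded in each speculative run is the common non-speculative execution. I expect this to be the main obstacle, although it is essentially inherited from the design of the projection function and the corresponding $\contract{}{NS}$ result; once it is in place, the leakage order supplies the distinguishing direction for free, and the appendix's trace-level formulation of $\sni$ is handled identically to the behaviour-level argument sketched above.
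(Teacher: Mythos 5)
Your proof is correct, and it follows the route the paper itself intends: the paper states this corollary without a written proof (the main-text counterpart, \Cref{cor:weak-contract-paper}, is only discharged in the Coq development), and the argument is exactly the reduction you give — monotonicity of \rsni{} along $\sqsubseteq$, obtained by transporting a $\contract{}{2}$-counterexample to $\contract{}{1}$ using the definition of the leakage ordering for the distinguishing premise. You also correctly isolate the one step with real content that the paper leaves implicit, namely that the non-speculative projection $\nspecProject{\cdot}$ of a behaviour is independent of which speculative semantics produced it (so the agreement premise transfers), which rests on the always-mispredict structure and the paper's own observation that $\nspecProject{\trac{}}=\trac{}$ under $\contract{}{NS}$.
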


Similarly, if a compiler is not secure for a weaker contract, it also is not secure for the stronger contract as well.
\begin{corollary}[[Leakage ordering and \rsnip\ Negative]\label{cor:weak-contract-unsat}
    If $\contract{}{1} \nvdash  \comp{\cdot} : \rsnip$ and $\contract{}{1} \sqsubset \contract{}{2}$
    then $\contract{}{2} \nvdash \comp{\cdot} : \rsnip$
\end{corollary}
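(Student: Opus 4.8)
The plan is to derive \Cref{cor:weak-contract-unsat} as the contrapositive of the positive direction \Cref{cor:weak-contract-sat}, which I may assume. Concretely, I would assume the hypotheses $\contract{}{1} \nvdash \comp{\cdot} : \rsnip$ and $\contract{}{1} \sqsubset \contract{}{2}$, and suppose toward a contradiction that $\contract{}{2} \vdash \comp{\cdot} : \rsnip$. Since $\contract{}{1} \sqsubset \contract{}{2}$ exhibits $\contract{}{2}$ as the stronger semantics and $\contract{}{1}$ as the weaker one strictly below it, \Cref{cor:weak-contract-sat} applies with $\contract{}{2}$ in the role of the preserved (stronger) semantics and $\contract{}{1}$ in the role of the weaker semantics; its side condition is exactly $\contract{}{1} \sqsubset \contract{}{2}$, which we have, so it yields $\contract{}{1} \vdash \comp{\cdot} : \rsnip$. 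This contradicts the first hypothesis, hence $\contract{}{2} \nvdash \comp{\cdot} : \rsnip$, as required.

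To make the argument self-contained rather than merely appealing to \Cref{cor:weak-contract-sat}, I would alternatively unfold the definitions and \emph{transfer a concrete witness}. From $\contract{}{1} \nvdash \comp{\cdot} : \rsnip$ I obtain a source program $\src{P}$ with $\contract{}{NS} \vdash \src{P} : \rsni$ but $\contract{}{1} \nvdash \comp{\src{P}} : \rsni$; unfolding robustness and $\sni$ produces a valid attacker $\ctxc{}$ and a pair of low-equivalent whole programs $W \equiv \ctxc{}\hole{\comp{\src{P}}}$ and $W'$ with $\nspecProject{\contract{}{1}(\SInit{W})} = \nspecProject{\contract{}{1}(\SInit{W'})}$ yet $\contract{}{1}(\SInit{W}) \neq \contract{}{1}(\SInit{W'})$. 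The crucial step is then to feed this same pair through the leakage ordering: since $\SInit{W}$ and $\SInit{W'}$ yield different $\contract{}{1}$ traces and $\contract{}{1} \sqsubseteq \contract{}{2}$, they also yield different $\contract{}{2}$ traces, i.e.\ $\contract{}{2}(\SInit{W}) \neq \contract{}{2}(\SInit{W'})$. Because the non-speculative projection of any speculative trace coincides with the (semantics-independent) architectural behaviour, the precondition $\nspecProject{\contract{}{2}(\SInit{W})} = \nspecProject{\contract{}{2}(\SInit{W'})}$ still holds. Hence the same attacker $\ctxc{}$ and the same pair witness a failure of $\sni$ under $\contract{}{2}$, giving $\contract{}{2} \nvdash \comp{\src{P}} : \rsni$ and therefore $\contract{}{2} \nvdash \comp{\cdot} : \rsnip$.

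The only genuinely delicate point, and the step I would take most care over, is the claim that the non-speculative projection is independent of the chosen speculative semantics, so that the ``same non-speculative projection'' premise of $\sni$ transfers verbatim from $\contract{}{1}$ to $\contract{}{2}$; this is exactly what lets me reuse the identical witness pair rather than having to construct a new one. I would justify it by observing that $\nspecProject{\cdot}$ deletes precisely the sub-traces enclosed between $\startl{x}$ and $\rollbl{x}$ markers, leaving only the architectural events, which are emitted by the non-speculative reduction shared by all semantics in \Cref{tab:semantics}. A secondary check is that validity of the attacker $\ctxc{}$ (\Cref{def:atk}) and low-equivalence of $W$ and $W'$ are semantics-agnostic, so both carry over unchanged, and that the quantification over initial configurations in the leakage-ordering definition indeed ranges over the whole programs $\SInit{W}$ appearing here. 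With these in place the transfer is immediate and the corollary follows.
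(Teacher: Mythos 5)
Your primary argument is correct and matches the paper's intent: the negative corollary is simply the contrapositive of \Cref{cor:weak-contract-sat} (with the roles of the weaker and stronger semantics instantiated as you describe), and the paper offers no further proof beyond this. Your alternative witness-transfer argument is also sound, and you correctly isolate the one point needing justification, namely that $\nspecProject{\cdot}$ yields the semantics-independent architectural trace so the $\sni$ precondition carries over from $\contract{}{1}$ to $\contract{}{2}$ unchanged.
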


Expansion of \rssp for RSS yields:
\begin{align*}
	\forall\src{P}\ldotp 
	\text{ if }&\ \forall \ctxs{}\ldotp \forall \tras{^\sigma}\in\behavs{\ctxs{}\hole{P}}\ldotp \forall \acas{^\sigma}\in\tras{^\sigma}\ldotp \src{\sigma}\equiv\src{\safeta}
	\\
	\text{ then }&\ \forall \ctx{}\ldotp \forall \tra{^{\taint}}\in\contract{}{NS}(\ctx{}\hole{\comp{\src{P}}})\ldotp \forall \acac{^{\taint}}\in\tra{^{\taint}}\ldotp \taint\equiv\safeta
\end{align*}

We say that a source and a target trace are related ($\rels$) if the latter contains the source trace plus interleavings of only safe actions.
The trace relation relies on a relation on actions which in turn relies on a relation on values and heaps.

The last two are compiler-dependent, so they are presented later. 

\mytoprule{\text{Trace relation} \reldef }
\begin{center}
	\typerule{Trace-Relation}{
	}{
		\srce\rels\come
	}{tr-rel-empty}
	\typerule{Trace-Relation-Same-Act}{
		\src{\tras{^\sigma} } \rels \com{\trac{^{\taint}} }	
		&
		\src{\alpha^\sigma} \arel \com{ \acac{^{\taint}} }
	}{
		\src{\tras{^\sigma} \cdot \alpha^\sigma} \rels \com{\trac{^{\taint}} \cdot \acac{^{\taint}} }
	}{tr-rel-same}
	\typerule{Trace-Relation-Same-Heap}{
		\src{\tras{^\sigma} } \rels \com{\trac{^{\taint}} }	
		&
		\src{\delta^\sigma} \arel \com{ \comb{\delta}^{\taint} }
	}{
		\src{\tras{^\sigma} \cdot \delta^\sigma} \rels \com{\trac{^{\taint}} \cdot \comb{\delta}^{\taint} }
	}{tr-rel-same-h}
	\typerule{Trace-Relation-Safe-Act}{
		\src{\tras{^\sigma} } \rels \com{\trac{^{\taint}} }
		&
		\src{\epsilon} \arel \acac{^{\safeta}}
	}{
		\src{\tras{^\sigma}} \rels \com{\trac{^{\taint}} \cdot \acac{^{\safeta}} }
	}{tr-rel-safe-a}
	\typerule{Trace-Relation-Safe-Heap}{
		\src{\tras{^\sigma} } \rels \com{\trac{^{\taint}} }	
		&
		\src{\epsilon} \arel \com{\comb{\delta}^{\safeta}}
	}{
		\src{\tras{^\sigma} } \rels \com{\trac{^{\taint}} \cdot \comb{\delta}^{\safeta} }
	}{tr-rel-safe-h}
	\typerule{Trace-Relation-Rollback}{
		\src{\tras{^\sigma} } \rels \com{\trac{^{\taint}} }	
		&
		\src{\epsilon}\arel\comb{\rollbackObsID}^{\taint}
	}{
		\src{\tras{^\sigma} } \rels \com{\trac{^{\taint}} \cdot \comb{\rollbackObsID}^{\taint} }
	}{tr-rel-rollb}

\mytoprule{\text{Action relation} \areldef }

	\typerule{Action Relation - call}{
		\src{f}\equiv\com{f}
		&
		\src{\taint}\equiv\taint'
	}{
		\src{\clh{f}^\taint} \arel \com{\clh{f}^{\taint'}}
	}{ac-rel-cl}
	\typerule{Action Relation - return}{
	}{
		\src{\rth{}^\safeta} \arel \com{\rth{}^\safeta}
	}{ac-rel-rt}
	\typerule{Action Relation - callback}{
		\src{f}\equiv\com{f}
		&
		\src{\sigma}\equiv\taint'
	}{
		\src{\cbh{f}^\taint} \arel \com{\cbh{f}^{\taint'}}
	}{ac-rel-cb}
	\typerule{Action Relation - returnback}{
	}{
		\src{\rbh{}^\safeta} \arel \com{\rbh{}^\safeta}
	}{ac-rel-rb}
	\typerule{Action Relation - read}{
		\src{n} \vrel \com{n}
		&
		\src{\taint}\equiv\taint'
	}{
		\src{\loadObs{n}^\taint} \arel \com{\loadObs{n}^{\taint'}}
	}{ac-rel-rd}
	\typerule{Action Relation - write}{
		\src{n} \vrel \com{n}
		&
		\src{\taint}\equiv\taint'
	}{
		\src{\storeObs{n}^\taint} \arel \com{\storeObs{n}^{\taint'}}
	}{ac-rel-wr}
	\typerule{Action Relation - if}{
		\src{l} \vrel \com{l;}
		&
		\src{\taint}\equiv\taint'
	}{
		\src{\pcObs{l}^{\taint}} \arel \com{\pcObs{l'}^{\taint'}}
	}{ac-rel-if}
	\typerule{Action Relation - epsi alpha}{
		\taint\equiv \com{\safeta}
	}{
		\src{\epsilon} \arel \acac{^{\taint}}
	}{ac-rel-ep-al}
	\typerule{Action Relation - epsi heap}{
		\taint\equiv \com{\safeta}
	}{
		\src{\epsilon} \arel \comb{\delta}^{\taint}
	}{ac-rel-ep-hp}
	\typerule{Action Relation - rlb}{
		\taint\equiv \com{\safeta}
	}{
		\src{\epsilon}\arel\com{\rollbackObsID}^{\taint}
	}{ac-rel-rlb}
\end{center}
\botrule

And we can define the property-free characterization (\rdssdef):
\begin{definition}[Robust Speculative Safety Compilation]\label{def:rdss}
	\begin{align*}
		\contract{}{x} \vdash \comp{\cdot} : \rdss \isdef&\
			\forall\src{P}, \ctx{}, \trac{^{\taint}},
			\exists \ctxs{}, \tras{^\sigma} \ldotp 
			\\
			\text{ if }&\
			\com{\ctx{}\hole{\comp{\src{P}}}} \mathghost^{\omega}_{id} \trac{^{\taint}}
			\text{ then }
			\Trace{\src{\ctxs{}\hole{P}}}{\tras{^\sigma}} 
			\text{ and }
			\tras{^\sigma}\rels\trac{^{\taint}}
	\end{align*}
\end{definition}

\begin{theorem}[\rdss implies \rdssp]\label{thm:rdss-impl-rdsp}
With the relation \relref:
\begin{align*}
	\forall \comp{\cdot}
	\text{ if } \contract{}{x} \vdash\comp{\cdot} : \rdss
	\text{ then } \contract{}{x} \vdash\comp{\cdot} : \rdssp
\end{align*}
\end{theorem}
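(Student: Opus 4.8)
The plan is to unfold both criteria and then transport safety across the trace relation $\rels$. Fix a compiler $\comp{\cdot}$ satisfying \Cref{def:rdss} for $\contract{}{x}$ and an arbitrary source program $\src{P}$; to obtain the conclusion (\Cref{def:rdsspc}) I must show that $\contract{}{NS} \vdash \src{P} : \rss$ implies $\contract{}{x} \vdash \comp{\src{P}} : \rss$. First I would note that, by \Cref{thm:ss-sni-source}, every source program already satisfies the premise, so the premise itself is never the bottleneck; what I actually exploit is the stronger fact underlying that theorem, namely that the source language performs no speculation and hence every action emitted on a source trace carries taint $\safeta$.

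Next I would unfold the goal $\contract{}{x} \vdash \comp{\src{P}} : \rss$ (via \Cref{def:rdss-contract} and \Cref{def:ss-contract}) to the statement: for every valid attacker $\ctx{}$ and every target trace $\trac{^\taint} \in \contract{}{x}(\ctx{}\hole{\comp{\src{P}}})$, every action of $\trac{^\taint}$ is tagged $\safeta$. Fixing such $\ctx{}$ and $\trac{^\taint}$, I would invoke \Cref{def:rdss} to obtain a source context $\ctxs{}$ and a source trace $\tras{^\sigma}$ with $\Trace{\ctxs{}\hole{\src{P}}}{\tras{^\sigma}}$ and $\tras{^\sigma} \rels \trac{^\taint}$. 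Since $\ctxs{}\hole{\src{P}}$ is a whole source program, the all-safe observation above yields $\safe{\tras{^\sigma}}$, independently of whether $\ctxs{}$ happens to be a valid attacker.

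The crux is a transport lemma: if $\tras{^\sigma} \rels \trac{^\taint}$ and $\safe{\tras{^\sigma}}$, then $\safe{\trac{^\taint}}$. I would prove it by induction on the derivation of $\rels$ (\relref). The base case \Cref{tr:tr-rel-empty} is immediate. In the two matching cases \Cref{tr:tr-rel-same} and \Cref{tr:tr-rel-same-h}, the newly appended target action is tied to a source action through $\arel$, and every rule of $\arel$ either forces both taints to coincide (\Cref{tr:ac-rel-cl}, \Cref{tr:ac-rel-cb}, \Cref{tr:ac-rel-rd}, \Cref{tr:ac-rel-wr}, \Cref{tr:ac-rel-if}) or pins both sides to $\safeta$ (\Cref{tr:ac-rel-rt}, \Cref{tr:ac-rel-rb}); as the source action is safe, so is the target one. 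In the three interleaving cases \Cref{tr:tr-rel-safe-a}, \Cref{tr:tr-rel-safe-h}, and \Cref{tr:tr-rel-rollb}, the extra target action is tagged $\safeta$ by construction. In all cases the appended target action is safe and the induction hypothesis covers the prefix, so $\safe{\trac{^\taint}}$.

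Applying the lemma shows $\trac{^\taint}$ consists only of safe actions, i.e. $\contract{}{x} \vdash \ctx{}\hole{\comp{\src{P}}} : \sstext$; since $\ctx{}$ and $\trac{^\taint}$ were arbitrary, this is exactly $\contract{}{x} \vdash \comp{\src{P}} : \rss$, closing the proof. The only genuine work lies in the transport lemma: I expect the main obstacle to be confirming that no rule of $\arel$ can relate a $\safeta$-tagged source action to a $\unta$-tagged target action, which is precisely the invariant the relation $\rels$ is engineered to guarantee, since it inserts only $\safeta$-tagged actions on the target side and otherwise transmits taints verbatim.
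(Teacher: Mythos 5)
Your proposal is correct and follows exactly the intended route for this equivalence (the paper states the theorem without spelling out the proof, inheriting it from the cited secure-compilation framework): instantiate the property-free characterization to get a related source trace, observe that every source trace is entirely $\safeta$-tagged because the source semantics never speculates, and transport safety across $\rels$ by induction on its derivation, using that every $\arel$ rule either equates the two taints or pins the target action to $\safeta$. The transport lemma you isolate is indeed the only substantive step, and your case analysis over the rules of $\rels$ and $\arel$ covers it completely.
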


\begin{theorem}[\rdssp implies \rdss]\label{thm:rdssp-impl-rdss}
With the relation \relref:
\begin{align*}
	\forall \comp{\cdot}
	\text{ if } \contract{}{x} \vdash\comp{\cdot} :\rdssp
	\text{ then } \contract{}{x} \vdash\comp{\cdot} : \rdss
\end{align*}
\end{theorem}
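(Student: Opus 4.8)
The plan is to reduce \Cref{def:rdss} to a single consequence of the hypothesis: under $\contract{}{x} \vdash \comp{\cdot} : \rssp$ (\Cref{def:rdsspc}), \emph{every} target trace of a compiled component consists entirely of safe actions, after which the trace relation \relref lets me discharge the existential with a trivial source witness. First I would note that the premise of \rssp is always met in the source: since \src{L} has no speculation, every action emitted by a source whole program is tagged \safeta\ (cf.\ \Cref{tr:ns-single}), so $\contract{}{NS} \vdash \src{P} : \rss$ holds for \emph{all} \src{P} by \Cref{thm:ss-sni-source}. Instantiating the hypothesis therefore yields $\contract{}{x} \vdash \comp{\src{P}} : \rss$ for every \src{P}; unfolding \Cref{def:rdss-contract} gives that for every valid target attacker \ctxc{} and every trace with $\com{\ctxc{}\hole{\comp{\src{P}}}} \mathghost^{\omega}_{\mi{id}} \trac{^\taint}$ we have $\safe{\trac{^\taint}}$, i.e.\ each $\acac{^\taint} \in \trac{^\taint}$ satisfies $\taint \equiv \safeta$.

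Next I would supply the source side of \Cref{def:rdss}. Since valid attackers contain neither administrative instructions ($\ppopret$, $\pmodret{e}$) nor private-memory accesses (\Cref{def:atk}), and \src{P} is fixed, I would pick a trivial valid source attacker \ctxs{} whose \funname{main} is a single $\pskip$ that runs off the end of its code and terminates (\Cref{tr:terminate}), defining each imported function of \src{P} by a dummy body and never transferring control into \src{P}. The plug and whole-program side conditions (\Cref{tr:plug-us}, \Cref{tr:whole-us}) are met, and this execution emits no label, so $\Trace{\src{\ctxs{}\hole{P}}}{\srce}$ holds, i.e.\ the empty trace \srce\ is a behaviour of $\src{\ctxs{}\hole{P}}$. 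It then remains to establish $\srce \rels \trac{^\taint}$, which I would prove by induction on $\trac{^\taint}$: the base case is \Cref{tr:tr-rel-empty}, and in the inductive step the trailing target action is safe by the previous paragraph, so it is discharged \emph{without} consuming anything on the source side --- an arbitrary safe action via \Cref{tr:ac-rel-ep-al}/\Cref{tr:tr-rel-safe-a}, a safe microarchitectural action via \Cref{tr:tr-rel-safe-h}, and a rollback marker via \Cref{tr:tr-rel-rollb}. Because \emph{all} target actions are safe, the induction never needs the source-matching rules \Cref{tr:tr-rel-same}/\Cref{tr:tr-rel-same-h}, which is precisely why no compiler-correctness argument enters.

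The delicate point --- and the main obstacle --- is justifying the source witness rather than the relation itself. Concretely I must check that a valid attacker producing the empty behaviour genuinely exists for \emph{every} \src{P} (that \funname{main} reaches a terminal state emitting no label, that all of \src{P}'s imports can be satisfied, and that \srce\ is indeed admitted by \Cref{tr:ns-trace}), and that no trailing non-droppable action can occur on $\trac{^\taint}$. Note that if one instead insisted on a source witness tracking the non-speculative projection $\nspecProject{\trac{^\taint}}$, the obstacle would shift to a full cross-language simulation showing that the compiler preserves the architectural value- and label-level observations so that the matching cases of \relref\ apply; the empty-trace witness sidesteps this entirely, because the safety guarantee extracted from \rssp\ already carries the whole security content of \Cref{def:rdss} and the trace relation only ever needs its action-dropping rules.
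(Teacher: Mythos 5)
Your proof is correct and is essentially the argument this direction requires: since \Cref{thm:ss-sni-source} discharges the premise of \rssp{} for every $\src{P}$, every target action is tagged $\safeta$, and the trace relation's dropping rules (\Cref{tr:tr-rel-safe-a}, \Cref{tr:tr-rel-safe-h}, \Cref{tr:tr-rel-rollb}) then relate the whole target trace to a trivially-produced source trace without ever invoking the matching rules or a backtranslation. The only point to keep an eye on is the one you already flag --- that a valid source attacker with a terminating, label-free run exists for every $\src{P}$ --- which is a plumbing fact about the source semantics rather than about the compiler.
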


\begin{theorem}[\rdss and \rdssp are equivalent]\label{thm:rdss-eq-rdsp}
With the relation \relref:
\begin{align*}
	\forall \comp{\cdot}
	\contract{}{x} \vdash\comp{\cdot} : \rdss
	\iff
	\contract{}{x} \vdash\comp{\cdot} : \rdssp
\end{align*}
\end{theorem}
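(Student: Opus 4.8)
The plan is to obtain the biconditional by simply composing the two implications that have just been established, both taken with respect to the \emph{same} fixed trace relation \relref. Concretely, \Cref{thm:rdss-impl-rdsp} already gives $\contract{}{x}\vdash\comp{\cdot}:\rdss \Rightarrow \contract{}{x}\vdash\comp{\cdot}:\rdssp$, and \Cref{thm:rdssp-impl-rdss} gives the converse $\contract{}{x}\vdash\comp{\cdot}:\rdssp \Rightarrow \contract{}{x}\vdash\comp{\cdot}:\rdss$. Both are stated for an arbitrary compiler \comp{\cdot} and for the same relation \relref, so chaining them yields the equivalence for every \comp{\cdot}, which is exactly the claim. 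The proof itself is therefore a one-line appeal to the two preceding theorems; the only thing to verify is that the relation used in the two directions coincides, which the shared hypothesis ``With the relation \relref'' guarantees.

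Because the equivalence is immediate, let me record how I would justify each half were it not already in hand, to make clear where the real effort sits. First, for \rdss $\Rightarrow$ \rdssp (the easy half) I would unfold \rssp: assume a source $\src{P}$ with $\contract{}{NS}\vdash\src{P}:\rss$, together with an arbitrary target attacker and an arbitrary target trace $\trac{^{\taint}}$ of $\comp{\src{P}}$. By \rdss there is a corresponding source whole program and a source trace $\tras{^\sigma}$ with $\tras{^\sigma}\rels\trac{^{\taint}}$. Inspecting the trace relation (rules \Cref{tr:tr-rel-same}--\Cref{tr:tr-rel-rollb}), every action of $\trac{^{\taint}}$ is either matched by a source action, hence tainted $\safeta$ since source traces are all-safe (\Cref{thm:ss-sni-source}), or is an inserted action (a safe action or a rollback) that the relation admits only when it is already tainted $\safeta$. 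Consequently every action on $\trac{^{\taint}}$ is $\safeta$, i.e.\ $\comp{\src{P}}$ is \rss, which is \rssp.

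For the converse, \rdssp $\Rightarrow$ \rdss, I would argue by backtranslation: given any target trace $\trac{^{\taint}}$ of $\comp{\src{P}}$ under some attacker, I must exhibit a source context and a source trace $\tras{^\sigma}\rels\trac{^{\taint}}$. The idea is to backtranslate the target attacker to a source attacker that drives $\src{P}$ along the architecturally-committed (non-speculative) projection of $\trac{^{\taint}}$, and then use \rssp to conclude that all speculative contributions on $\trac{^{\taint}}$ are $\safeta$, so that they are absorbed by the safe-action and rollback cases of $\rels$. The hard part will be exactly this step: constructing the source witness and checking, action by action, that the target trace differs from its source counterpart only by $\safeta$-tagged insertions and rollbacks, matching precisely the cases of \relref.

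In summary, for the equivalence the only step I would actually carry out is invoking \Cref{thm:rdss-impl-rdsp} and \Cref{thm:rdssp-impl-rdss}; all substantive and potentially delicate reasoning, in particular the backtranslation underlying \Cref{thm:rdssp-impl-rdss}, is discharged by those two lemmas, which I am entitled to assume.
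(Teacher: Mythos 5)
Your proposal is correct and matches the paper's (implicit) argument: the equivalence follows immediately by composing \Cref{thm:rdss-impl-rdsp} and \Cref{thm:rdssp-impl-rdss}, both stated for the same relation \relref{} and an arbitrary compiler. The additional sketches of how each implication would be proved are consistent with the paper's development but are not needed for this theorem.
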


\subsection{Context-based Backtranslation}
Since the backtranslation can be reused by all of their compilers, we define it here. If there are small additions needed, we will describe them in the specific compiler. The reason why we can reuse the backtranslation is that the syntax of our different target languages ($\Bv, \Jv, \Sv, \Rv, \SLSv$) is the same.

\begin{align*}
    \backtrfencec{ \trg{M ; \OB{F} ; \OB{I}}} &= \src{ \backtrfencec{ \trg{ M }  ; \backtrfencec{ \trg{\OB{F}} } ; \backtrfencec{ \trg{\OB{I}}} }}
	\\\\
	\backtrfencec{ \trg{\emptyset}} &= \src{\emptyset}
	\\
	\backtrfencec{\trg{\OB{I}\cdot f} } &= \src{ \backtrfencec{ \trg{\OB{I}} } \cdot \backtrfencec{ \trg{f}} }
	\\
	\backtrfencec{ \trg{name}, \trg{p1;p2} } &= \src{name, \backtrfencec{\trg{p1;p2}}}  
	\\\\
	\backtrfencec{\trg{M ; n\mapsto v : \taint}} &= \src{\backtrfencec{ \trg{M} } ; \backtrfencec{ \trg{n} } \mapsto \backtrfencec{ \trg{v}} : \backtrfencec{ \trg{\taint}} }
	\\
	\backtrfencec{\trg{\taint}} =&\ \src{\taint}
	\\\\
	\backtrfencec{ \trg{n} } =&\ \src{n}
	\\
	\backtrfencec{ \trg{e\op e'} } =&\  \src{\backtrfencec{ \trg{e} }\op\backtrfencec{ \trg{e'} }}
	\\
	\backtrfencec{ \trg{e\bop e'} } =&\ \src{\backtrfencec{ \trg{e} }\bop\backtrfencec{ \trg{e'} }}
	\\\\
	\backtrfencec{\trg{p_1;p_2}} &= \src{\backtrfencec{ \trg{p_1} } ; \backtrfencec{ \trg{p_2}} }
	\\
	\backtrfencec{ \trg{l : instr} } &= \src{\backtrfencec{ \trg{l} } : \backtrfencec{ \trg{instr} }}
	\\
	\backtrfencec{ \trg{l} } =&\ \src{l}
	\\\\
	\backtrfencec{\trg{\pskip}} &= \src{\pskip} 
	\\
	\backtrfencec{\trg{\passign{x}{e}}} &= \src{\passign{x}{\backtrfencec{\trg{e}}}}
        \\
        \backtrfencec{\trg{\pcondassign{x}{e}{e'}}} &= \src{\pcondassign{x}{\backtrfencec{\trg{e}}}{\backtrfencec{\trg{e'}}}}
	\\
	\backtrfencec{\trg{\pload{x}{e}}} &= \src{\pload{x}{\backtrfencec{\trg{e}}}}
	\\
	\backtrfencec{\trg{\pstore{x}{e}}} &= \src{ \pstore{x}{\backtrfencec{\trg{e}}}}
	\\
	\backtrfencec{\trg{\pjmp{e}}} &= \src{\pjmp{\backtrfencec{\trg{e}}}} \\
	\backtrfencec{\trg{\pjz{x}{l}}} &= \src{\pjz{x}{l}}
	\\
	\backtrfencec{\trg{\pcall{f}}} &= \src{\pcall{f} }
	\\
	\backtrfencec{\trg{\pret}} &= \src{ \pret}
	\\
	\backtrfencec{ \trg{\pbarrier} } =&\ \src{\pskip}
\end{align*}

\subsection{Additional Lemmas for the Compilers}

Here we define some Lemmas that all compilers we define share. These are mostly related to bookkeeping. All of these Lemmas follow from the inspection of the respective compiler.

First, the function maps are related between source and target. This means that compilation does not change the starting label of a source function during compilation (except for $\compuslhB{\cdot}$, however, we do not use these definitions there).
However, compilation can add new functions to the compiled program. For example, if a countermeasure needs an additional function like retpoline.

\begin{definition}
    We say that a function map $\mathcal{F}$ extends another function map $\mathcal{F'}$ written $\mathcal{F} \subset \mathcal{F'}$ iff 
    $\forall f \in \mathcal{F} \ldotp \mathcal{F}(f) = \mathcal{F'}(f)$.
\end{definition}
This means they have the same starting label and the same code.

\begin{definition}[Equivalence of Function Maps]
Two function maps $\mathcal{F}$ and $\mathcal{F'}$ are equivalent, written $\mathcal{F} \equiv \mathcal{F'}$ iff
$\mathcal{F} \subset \mathcal{F'}$ and $\mathcal{F'} \subset \mathcal{F}$
\end{definition}

This just means that labels do not change their function during compilation.
\begin{lemma}[$\ffun{}$ remains invariant]
$\forall \src{l} \in \src{p} \ldotp \ffun{l} = \src{f}$ then $\ffun{\comp{l}} = \comp{f}$ and $\src{l} = \comp{l}$ and $\src{f} = \comp{f}$.
\end{lemma}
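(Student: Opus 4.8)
The plan is to prove this by structural induction on the code $\src{p}$ of the source program, leveraging the fact --- evident from the definition of each compiler in \Cref{ssec:compilers} --- that the compilers act homomorphically on labelled instructions and never rename a pre-existing source label. Indeed, the whole point of drawing labels from a partially ordered \emph{abstract} set (as explained in the Language Definition) is precisely that injected instructions can be given fresh labels slotted ``between'' existing ones in the order, so that the labels of source instructions survive compilation unchanged.

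First I would establish the label-identity claim $\src{l} = \comp{l}$. By inspecting the compiler clauses, each source instruction $\src{l : i}$ is compiled into a block that preserves the original label on the corresponding target instruction, i.e.\ $\comp{\src{l : i}}$ contains a target instruction labelled $\comp{l}$ with $\comp{l} = \src{l}$, possibly surrounded by freshly-labelled injected instructions (e.g.\ the $\barrierKywd$ inserted by $\complfenceB{\cdot}$ or the trampoline emitted by $\compretpJ{\cdot}$). Since these injected instructions receive labels fresh w.r.t.\ the partial order, no pre-existing label is ever reassigned, and the compiler is the identity on source labels.

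Next I would establish $\src{f} = \comp{f}$. The compilers preserve the names of all source functions (they may introduce additional helper functions, such as retpoline trampolines, but these carry fresh names). This is exactly the function-map containment $\mathcal{F} \subset \mathcal{F'}$ recorded just above the lemma statement: compilation keeps both the starting label and the name of each source function, so $\src{f} = \comp{f}$. Finally, for $\ffun{\comp{l}} = \comp{f}$: from $\ffun{\src{l}} = \src{f}$ the label $\src{l}$ lies in the label set $f_{set}$ of $\src{f}$; since compilation injects its extra instructions strictly inside the body of $\src{f}$ (their fresh labels lie within $\src{f}$'s range in the partial order) and leaves $\src{f}$'s remaining labels unchanged, every label of $\comp{f}$ maps under $\ffun{}$ to $\comp{f}$. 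In particular $\ffun{\comp{l}} = \comp{f}$, and combining with the two identities above yields all three conjuncts.

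The hard part will be the third step: one must verify that injecting new instructions (and, for retpoline-style compilers, new functions) preserves the partition of labels into functions --- i.e.\ that no injected label escapes its enclosing function and that fresh labels and function names do not clash with existing ones. This is where the well-formedness of the compiled codebase (the analogue of the disjointness conditions in \Cref{tr:whole-us}) must be invoked; it is the only ingredient that is not a purely syntactic reading of the compiler clauses, the rest following directly by inspection as the surrounding text anticipates.
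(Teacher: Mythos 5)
Your proposal is correct and takes essentially the same route as the paper, which dispatches this lemma (along with the other bookkeeping lemmas in that section) simply by inspection of the respective compiler: source labels are never renamed, injected instructions get fresh labels inside the enclosing function, and function names are preserved. Your version merely makes that inspection explicit as a structural induction over the compiler clauses, which is a faithful elaboration rather than a different argument.
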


\section{Speculative Semantics $\Bv$}

The speculative state $\trgB{\SigmaB}$ is a stack of speculative instances $\trgB{\PhiB}$ that contain the non-speculative state $\Omega$ and the speculation window $n$.
Note that we define $n + 1$ to match $\bot$ as well.
All elements from the source language exist in this language

\begin{align*}
    \mi{Instructions}~ \trgB{i} \bnfdef&\ \cdots \mid \pbarrier \\
    \mi{Speculative\ States}~ \trgB{\SigmaB} \bnfdef&\ \trgB{\phiStackB} \\
    \mi{Speculative\ Instance}~ \trgB{\PhiB} \bnfdef&\ \trgB{\tup{\Omega, n}} \\
    \mi{Speculative\ Instance Vals.}~ \trgB{\PhiBv} \bnfdef&\ \trgB{\tup{\Omegav, n}} \\
    \mi{Speculative\ Instance Taint}~ \trgB{\PhiBt} \bnfdef&\ \trgB{\tup{\taintpc, \Omegat, n}} \\ \mi{Windows}~ \trgB{w} \bnfdef&\ n \mid \bot \\
    \mi{Observations}~ \trgB{\tau} \bnfdef&\ \trgB{\rollbackObsB}
\end{align*}

\begin{center}
    \mytoprule{\trgB{\PhiBv \specarrowB{\tau} \phiStackBv'}}

    \typerule {$\Bvr$:AM-barr}
    {\trgB{\instrOv{\pbarrier}} & \trgB{\Omegav' = \Omegav[\pc \mapsto \pc + 1]}
    }
    {\trgB{\tup{\Omegav, \bot} \specarrowB{\epsilon} \tup{\Omegav', \bot}}
     }{v1-barr}
     \typerule {$\Bvr$:AM-barr-spec}
    {\trgB{\instrOv{\pbarrier}} & \trgB{\Omegav' = \Omegav[\pc \mapsto \pc + 1]}
    }
    {\trgB{\tup{\Omegav, n + 1} \specarrowB{\epsilon} \tup{\Omegav', 0}}
     }{v1-barr-spec}

    \typerule {$\Bvr$:AM-NoSpec-epsilon} {\trgB{\instrneqOv{\pjz{x}{l}, \pbarrier, Z}} &  \trgB{\Omegav \nsarrow{\epsilon} \Omegav'}
    }
    {\trgB{\tup{\Omegav, n + 1} \specarrowB{\epsilon} \tup{\Omegav', n}}
     }{v1-nospec-eps}
    \typerule {$\Bvr$:AM-NoSpec-action}
    {\trgB{\instrneqOv{\pjz{x}{l}, \pbarrier, Z}} & \trgB{\Omegav \nsarrow{\tau} \Omegav'}
    }
    {\trgB{\tup{\Omegav, n + 1} \specarrowB{\tau} \tup{\Omegav', n}}
     }{v1-nospec-act}
     
    \typerule{$\Bvr$:AM-Branch-Spec}
    {\trgB{\instrOv{\pjz{x}{l}}} &  \trgB{\Omegav \nsarrow{\tau} \Omegav'} & \trgB{\Omegav = \OB{F}; \OB{I}; \OB{B}; \sigmav}\\
    \ffun{\trgB{\Omegav(\pc)}} = \trgB{f} & \trgB{f \notin \OB{I}} & \trgB{\Omegav'' = \OB{F}; \OB{I}; \OB{B}; \sigmav''}\\
    l' = {\begin{cases} \sigmav(\pc) + 1 & \text{if $\sigmav'(\pc) = l$} \\
                        l & \text{if $\sigmav'(\pc) \neq l$}
    \end{cases}
    }\\
     \trgB{\sigmav'' = \sigmav[\pc \mapsto l']} & \trgB{j = min(\omega, n)} 
    }
    {\trgB{\tup{\Omegav, n + 1} \specarrowB{\tau} \tup{\Omegav', n} \cdot \tup{\Omegav'', j}}
    }{v1-spec}

    \typerule{$\Bvr$:AM-Branch-Spec-att}
    {\trgB{\instrOv{\pjz{x}{l}}} &  \trgB{\Omegav \nsarrow{\tau} \Omegav'} \\
    \ffun{\trgB{\Omegav(\pc)}} = \trgB{f} & \trgB{f \in \OB{I}} \\
    }
    {\trgB{\tup{\Omegav, n + 1} \specarrowB{\tau} \tup{\Omegav', n}}
    }{v1-skip-att}

    \typerule{$\Bvr$:AM-Rollback}
    { \trgB{n = 0}\ \text{or}\ \finType{\trgB{\Omegav}} }
    {\trgB{\tup{\Omegav, n} \specarrowB{\rollbackObsB} \varepsilon}
    }{v1-rollback}
    
\end{center}

\begin{center}
    \mytoprule{\trgB{\SigmaB \bigspecarrowB{\tauStack} \SigmaB'}}
    
    \typerule{$\Bvr$:AM-Reflection}
    {
    }
    {\trgB{\SigmaB \bigspecarrowB{\varepsilon} \SigmaB}
    }{v1-reflect}
    \typerule{$\Bvr$:AM-Single}
    {
    \trgB{\SigmaB \bigspecarrowB{\OB{\tau^{\taint}}} \SigmaB''} & \trgB{\SigmaB'' \specarrowB{\tau^{\taint}} \SigmaB'} \\
    \trgB{\SigmaB'' = \phiStackB \cdot  \tup{\OB{F}; \OB{I}; \OB{B}; \Omega, n}} &
    \trgB{\SigmaB' = \phiStackB \cdot \tup{\OB{F}; \OB{I}; \OB{B'}; \Omega', n' }} \\
    \ffun{\trgB{\Omega(\pc)}} = \trgB{f} & \ffun{\trgB{\Omega'(\pc)}} = \trgB{f'} \\
    \text{ if } \trgB{f == f'} and ~\trgB{f \in \OB{I}} \text{ then } \trgB{\tau^{\taint} = \epsilon} \text{ else } \trgB{\tau^{\taint} = \tau^{\taint}}
    }
    {
    \trgB{\SigmaB \bigspecarrowB{\OB{\tau^{\taint}} \cdot \tau^{\taint}} \SigmaB'}
    }{v1-single}
    
    \typerule{$\Bvr$:AM-silent}
    {
    \trgB{\SigmaB \bigspecarrowB{\tra{^\taint}} \SigmaB''} & \trgB{\SigmaB'' \specarrowB{\epsilon} \SigmaB'}
    }
    {
    \trgB{\SigmaB \bigspecarrowB{\tra{^\taint}} \SigmaB'}
    }{v1-silent}

\end{center}
Again, we ignore events that happen in the context.

\begin{center}
    \mytoprule{Helpers}
    
    \typerule{$\Bvr$:AM-Merge}
    {\trgB{\Omegav + \Omegat = \Omega }
    }
    {\trgB{\tup{\Omegav, n} + \tup{\taintpc, \Omegat, n} = \tup{\Omega, n , \taintpc}}
    }{v1-merge}
    
    \typerule{$\Bvr$:AM-Init}
    {\trgB{\SigmaB = \initFunc(M, \OB{F}, \OB{I}), \bot, \safeta}
    }
    {\trgB{\initFuncB(M ,\OB{F}, \OB{I}) = \SigmaB}
    }{v1-init}
    \typerule{$\Bvr$:AM-Fin-Ending}
    { \trgB{\SigmaB = (\Omega, \bot)} & \finType{\trgB{\Omega}}
    }
    {\finTypef{\trgB{\SigmaB}}
    }{v1-fin-end}
    
    \typerule{$\Bvr$:AM-Fin}
    { \trgB{\SigmaB = \phiStackB \cdot (\Omega, n)} & \finType{\trgB{\Omega}}
    }
    {\finType{\trgB{\SigmaB}}
    }{v1-fin}
    
    \typerule {$\Bvr$:AM-Trace}
    { \exists \finTypef{\trgB{\SigmaB'}}\ & \trgB{\initFuncB(P) \bigspecarrowB{\tauStackT}\ \SigmaB'}
    }
    { \trgB{\amTracevB{P}{\tauStackT}}
     }{v1-trace}
    \typerule {$\Bvr$:AM-Beh}
    {}
    { \trgB{\behB{P}} =  \{\trgB{\tauStackT} \mid \trgB{\amTracevB{P}{\tauStackT}} \}
     }{v1-beh}
\end{center}

 \section{Speculative Semantics $\Sv$}\label{app:semV4} 

The speculative state $\trgS{\SigmaS}$ is a stack of speculative instances $\trgS{\PhiS}$ that contain the non-speculative state $\Omega$ and the speculation window $n$.
Note that we define $n + 1$ to match $\bot$ as well.
All elements from the source language exist in this language

\begin{align*}
    \mi{Instructions}~ \trgS{i} \bnfdef&\ \cdots \mid \pbarrier \\
    \mi{Speculative\ States}~ \trgS{\SigmaS} \bnfdef&\ \trgS{\phiStackS} \\
    \mi{Speculative\ Instance}~ \trgS{\PhiS} \bnfdef&\ \trgS{\tup{\Omega, n, \taintpc}} \\
    \mi{Speculative\ Instance Vals.}~ \trgS{\PhiSv} \bnfdef&\ \trgS{\tup{\Omegav, n}} \\
    \mi{Speculative\ Instance Taint}~ \trgS{\PhiSt} \bnfdef&\ \trgS{\tup{\taintpc, \Omegat, n}} \\
    \mi{Windows}~ \trgS{w} \bnfdef&\ n \mid \bot \\
    \mi{Observations}~ \trgS{\tau} \bnfdef&\ \trgS{\rollbackObsS}
\end{align*}

We use the shorthand notation $\Omega(\pc)$ for $\sigma(\pc)$ with $\Omega = \OB{C};\OB{B}; \sigma$. The same notation works on $\Omegav$ and $\Omegat$ as well. We note that for $\Omegat$, the notation will still yield a value, since the value of the $\pc$ is tracked even when doing taint tracking

\mytoprule{Judgements}
\begin{align*}
    &
    \trgS{\phiStackSv \specarrowS{\tau} \phiStackSv'}
    && \text{Stack $\PhiSv$ small-steps to $\phiStackSv$ and emits observation $\tau$.}
    \\
    &
    \trgS{\PhiSv \specarrowS{\tau} \phiStackSv'}
    && \text{Speculative instance $\PhiSv$ small-steps to $\phiStackSv$ and emits observation $\tau$.}
    \\
    &
    \trgS{\phiStackSt \specarrowS{\taint} \phiStackSt}
    && \text{Stack $\phiStackSt$ small-steps to $\phiStackSt'$ and emits taint $\taint$.}
    \\
    &
    \trgS{\PhiSt \specarrowS{\taint} \phiStackSt'}
    && \text{Speculative instance $\PhiSt$ small-steps to $\phiStackSt'$ and emits taint $\taint$.}
    \\
    &
    \trgS{\SigmaS \specarrowS{\tau^{\taint}} \SigmaS'}
    && \text{Speculative state $\SigmaS$ small-steps to $\SigmaS'$ and emits tainted observation $\tau^{\taint}$.}
    \\
    &
    \trgS{\SigmaS \bigspecarrowS{\OB{\tau^{\taint}}} \SigmaS'}
    && \text{State $\SigmaS$ big-steps to $\SigmaS'$ and emits a list of observations $\tauStack$.}
    \\
    &
    \trgS{\amTracevS{P}{\tauStackT}}
    && \text{Program $P$ produces the observations $\tauStackT$ during execution.}
\end{align*}

\begin{center}
    \centering
    \small
    \mytoprule{\trgS{\phiStackSv \specarrowS{\tau} \phiStackSv'}}
    
    \typerule {$\Svr$:AM-Context}
    {\trgS{\PhiSv \specarrowS{\tau} \phiStackSv'}
    }
    {
    \trgS{
    \phiStackSv \cdot \PhiSv
    \specarrowS{\tau}
    \phiStackSv \cdot \phiStackSv'
    }
     }{v4-context}
     
     \mytoprule{\trgS{\PhiSv \specarrowS{\tau} \phiStackSv'}}
     
     \typerule {$\Svr$:AM-barr}
    {\trgS{\instrOv{\pbarrier}} & \trgS{\Omegav' = \Omegav[\pc \mapsto \pc + 1]}
    }
    {\trgS{\tup{\Omegav, \bot} \specarrowS{\epsilon} \tup{\Omegav', \bot}}
     }{v4-barr}
     \typerule {$\Svr$:AM-barr-spec}
    {\trgS{\instrOv{\pbarrier}} & \trgS{\Omegav' = \Omegav[\pc \mapsto \pc + 1]}
    }
    {\trgS{\tup{\Omegav, n + 1} \specarrowS{\epsilon} \tup{\Omegav', 0}}
     }{v4-barr-spec}
     
     \typerule {$\Svr$:AM-NoSpec-epsilon} {\trgS{\instrneqOv{\pstore{x}{e}, \pbarrier, Z}} &  \trgS{\Omegav \nsarrow{\epsilon} \Omegav'}
    }
    {\trgS{\tup{\Omegav, n + 1} \specarrowS{\epsilon} \tup{\Omegav', n}}
     }{v4-nospec-eps}
    \typerule {$\Svr$:AM-NoSpec-action}
    {\trgS{\instrneqOv{\pstore{x}{e}, \pbarrier, Z}} & \trgS{\Omegav \nsarrow{\tau} \Omegav'}
    }
    {\trgS{\tup{\Omegav, n + 1} \specarrowS{\tau} \tup{\Omegav', n}}
     }{v4-nospec-act}
    
    \typerule{$\Svr$:AM-Store-Spec}
    {\trgS{\instrOv{\pstore{x}{e}}} &  \trgS{\Omegav \nsarrow{\tau} \Omegav'} & \trgS{\Omegav = \OB{F}; \OB{I}; \OB{B}; \sigmav}\\
    \ffun{\trgS{\Omegav(\pc)}} = \trgS{f} & \trgS{f \notin \OB{I}} & \trgS{\Omegav'' = \OB{F}; \OB{I}; \OB{B}; \sigmav''}\\
     \trgS{\sigmav'' = \sigmav[\pc \mapsto \Omegav(\pc) + 1]} & \trgS{j = min(\omega, n)} 
    }
    {\trgS{\tup{\Omegav, n + 1} \specarrowS{\tau} \tup{\Omegav', n} \cdot \tup{\Omegav'', j}}
    }{v4-skip}
    \typerule{$\Svr$:AM-Rollback}
    { \trgS{n = 0}\ \text{or}\ \finType{\trgS{\Omegav}} }
    {\trgS{\tup{\Omegav, n} \specarrowS{\rollbackObsS} \varepsilon}
    }{v4-rollback}
    
    \typerule{$\Svr$:AM-Store-Spec-att}
    {\trgS{\instrOv{\pstore{x}{e}}} &  \trgS{\Omegav \nsarrow{\tau} \Omegav'} \\
    \ffun{\trgS{\Omegav(\pc)}} = \trgS{f} & \trgS{f \in \OB{I}} \\
    }
    {\trgS{\tup{\Omegav, n + 1} \specarrowS{\tau} \tup{\Omegav', n}}
    }{v4-skip-att}

\end{center}

We do not speculate when we are in the attacker (\Cref{tr:v4-skip-att}). Since we quantify over all contexts, we could find another attacker that does exactly what the speculation would do. In a sense, speculation does give the attacker no additional power.

\subsection{Taint Semantics}
\begin{center}
    \centering
    \small
    \mytoprule{\trgS{\phiStackSt \specarrowS{\taint} \phiStackSt}}
    
    \typerule {$\Svr$:T-Context}
    {\trgS{\PhiSt \specarrowS{\tau} \phiStackSt}
    }
    {\trgS{\phiStackSt \cdot \PhiSt \specarrowS{\tau} \phiStackSt \cdot \phiStackSt'}
     }{v4-t-context}

     \mytoprule{\trgS{\PhiSt \specarrowS{\taint} \phiStackSt'}}
     
     \typerule {$\Svr$:T-barr}
    {\trgS{\instrOt{\pbarrier}} & \trgS{\Omegat' = \Omegat[\pc \mapsto \pc + 1]}
    }
    {\trgS{\tup{\taintpc, \Omegat, \bot}
    \specarrowS{\epsilon}
    \tup{\taintpc, \Omegat', \bot}}
     }{v4-t-barr}
     \typerule {$\Svr$:T-barr-spec}
    {\trgS{\instrOt{\pbarrier}} & \trgS{\Omegat' = \Omegat[\pc \mapsto \pc + 1]}
    }
    {\trgS{\tup{\taintpc, \Omegat, n + 1}
    \specarrowS{\epsilon}
    \tup{\taintpc, \Omegat', 0}}
     }{v4-t-barr-spec}
     
      \typerule {$\Svr$:AM-NoSpec-epsilon} {\trgS{\instrneqOv{\pstore{x}{e}, \pbarrier, Z}} &  \trgS{\taintpc; \Omegat \nsarrow{\epsilon} \Omegat'}
    }
    {\trgS{\tup{\taintpc, \Omegat, n + 1} \specarrowS{\epsilon} \tup{\taintpc, \Omegat', n}}
     }{v4-t-nospec-eps}
     \typerule {$\Svr$:T-NoSpec-action}
    {\trgS{\instrneqOt{\pstore{x}{e}, \pbarrier, Z}} & \trgS{\taintpc;\Omegat \nsarrow{\taint'} \Omegat'}
    }
    {\trgS{\tup{\taintpc, \Omegat, n + 1} \specarrowS{\taintpc \glb \taint'} \tup{\taintpc, \Omegat', n}}
     }{v4-t-nospec-act}

    \typerule{$\Svr$:T-Store-Spec}
    {\trgS{\instrOt{\pstore{x}{e}}} & \trgS{\taintpc;\Omegat \nsarrow{\taint'} \Omegat'} & \trgS{\Omegat = \OB{F}; \OB{I}; \OB{B}; \sigmat} \\
    \ffun{\trgS{\Omegav(\pc)}} = \trgS{f} & \trgS{f \notin \OB{I}} & \trgS{\Omegat'' = \OB{F}; \OB{I}; \OB{B}; \sigmat''} \\
     \trgS{\sigmat'' = \sigmat[\pc \mapsto \Omegat(\pc) + 1 : \unta]} & \trgS{j = min(\omega, n)}
    }
    {\trgS{\tup{\taintpc, \Omegat, n + 1} \specarrowS{\taintpc \glb \taint'}
    \tup{\taintpc, \Omegat', n}\cdot \tup{\unta, \Omegat'', j}} }{v4-t-skip}
    \typerule{$\Svr$:T-Store-Spec-att}
    {\trgS{\instrOt{\pstore{x}{e}}} & \trgS{\taintpc;\Omegat \nsarrow{\taint'} \Omegat'}  \\
    \ffun{\trgS{\Omegav(\pc)}} = \trgS{f} & \trgS{f \notin \OB{I}} \\
    }
    {\trgS{\tup{\taintpc, \Omegat, n + 1} \specarrowS{\taintpc \glb \taint'} \tup{\taintpc, \Omegat', n}}
    }{v4-t-skip-att}
    
    \typerule{$\Svr$:T-Rollback}
    { \trgS{n = 0} \text{ or } \finType{\trgS{\Omegat}} }
    { 
    \trgS{\tup{\taintpc, \Omegat, n } \specarrowS{\safeta} \emptyset}
    }{v4-t-rollback}
    
\end{center}

\begin{center}
    \centering
    \small
    \mytoprule{\trgS{\SigmaS \specarrowS{\tau^{\taint}} \SigmaS'}}
    
    \typerule {V4:Combine}
    {\trgS{\SigmaS = \phiStackS{}} & \trgS{\SigmaS = \phiStackS'} \\
    \trgS{\phiStackSv + \phiStackSt = \phiStackS} &  \trgS{\phiStackSv' + \phiStackSt' = \phiStackS'} \\
    \trgS{\phiStackSv \specarrowS{\tau} \phiStackSv'} & \trgS{\phiStackSt \specarrowS{\taint} \phiStackSt'}
    }
    {\trgS{\SigmaS \specarrowS{\tau^{\taint}} \SigmaS'}
     }{v4-combine}
\end{center}

\begin{center}
    \mytoprule{\trgS{\SigmaS \bigspecarrowS{\tauStack} \SigmaS'}}
    
    \typerule{$\Svr$:AM-Reflection}
    {
    }
    {\trgS{\SigmaS \bigspecarrowS{\varepsilon} \SigmaS}
    }{v4-reflect}
    \typerule{$\Svr$:AM-Single}
    {
    \trgS{\SigmaS \bigspecarrowS{\OB{\tau^{\taint}}} \SigmaS''} & \trgS{\SigmaS'' \specarrowS{\tau^{\taint}} \SigmaS'} \\
    \trgS{\SigmaS'' = \phiStackS \cdot  \tup{\OB{F}; \OB{I}; \OB{B}; \Omega, n}} &
    \trgS{\SigmaS' = \phiStackS \cdot \tup{\OB{F}; \OB{I}; \OB{B'}; \Omega', n' }} \\
    \ffun{\trgS{\Omega(\pc)}} = \trgS{f} & \ffun{\trgS{\Omega'(\pc)}} = \trgS{f'} \\
    \text{ if } \trgS{f == f'} and ~\trgS{f \in \OB{I}} \text{ then } \trgS{\tau^{\taint} = \epsilon} \text{ else } \trgS{\tau^{\taint} = \tau^{\taint}}
    }
    {
    \trgS{\SigmaS \bigspecarrowS{\OB{\tau^{\taint}} \cdot \tau^{\taint}} \SigmaS'}
    }{v4-single}
    
    \typerule{$\Svr$:AM-silent}
    {
    \trgS{\SigmaS \bigspecarrowS{\tra{^\taint}} \SigmaS''} & \trgS{\SigmaS'' \specarrowS{\epsilon} \SigmaS'}
    }
    {
    \trgS{\SigmaS \bigspecarrowS{\tra{^\taint}} \SigmaS'}
    }{v4-silent}

\end{center}
Again, we ignore events that happen in the context.

\begin{center}
    \mytoprule{Helpers}
    
    \typerule{$\Svr$:AM-Merge}
    {\trgS{\Omegav + \Omegat = \Omega }
    }
    {\trgS{\tup{\Omegav, n} + \tup{\taintpc, \Omegat, n} = \tup{\Omega, n , \taintpc}}
    }{v4-merge}
    \typerule{$\Svr$:AM-Init}
    {\trgS{\SigmaS = \initFunc(M, \OB{F}, \OB{I}), \bot, \safeta}
    }
    {\trgS{\initFuncS(M ,\OB{F}, \OB{I}) = \SigmaS}
    }{v4-init}
    
    \typerule{$\Svr$:AM-Fin-Ending}
    { \trgS{\SigmaS = (\Omega, \bot)} & \finType{\trgS{\Omega}}
    }
    {\finTypef{\trgS{\SigmaS}}
    }{v4-fin-end}
    \typerule{$\Svr$:AM-Fin}
    { \trgS{\SigmaS = \phiStackS \cdot (\Omega, n)} & \finType{\trgS{\Omega}}
    }
    {\finType{\trgS{\SigmaS}}
    }{v4-fin}
    
    \typerule {$\Svr$:AM-Trace}
    { \exists \finTypef{\trgS{\SigmaS'}}\ & \trgS{\initFuncS(P) \bigspecarrowS{\tauStackT}\ \SigmaS'}
    }
    { \trgS{\amTracevS{P}{\tauStackT}}
     }{v4-trace}
    \typerule {$\Svr$:AM-Beh}
    {}
    { \trgS{\behS{P}} =  \{\trgS{\tauStackT} \mid \trgS{\amTracevS{P}{\tauStackT}} \}
     }{v4-beh}
\end{center}

The speculative semantics starts with an initial value of $\safeta$ for the $\taintpc$. When speculation happens (\Cref{tr:v4-t-skip}) it will be set to $\unta$ for the newly created speculative instance and after rollback (\Cref{tr:v4-t-rollback}) the previous value of $\taintpc$.
 \section{Speculative Semantics $\Jvr$}\label{app:semV2}
A detailed description is given in \Cref{app:semV4}.

\begin{align*}
    \mi{Instructions}~ \trgJ{i} \bnfdef&\ \cdots \mid \pbarrier \\
    \mi{Speculative\ States}~ \trgJ{\SigmaJ} \bnfdef&\ \trgJ{\phiStackJ} \\
    \mi{Speculative\ Instance}~ \trgJ{\PhiJ} \bnfdef&\ \trgJ{\tup{\Omega, n, \taintpc}} \\
    \mi{Speculative\ Instance Vals.}~ \trgJ{\PhiJv} \bnfdef&\ \trgJ{\tup{\Omegav, n}} \\
    \mi{Speculative\ Instance Taint}~ \trgJ{\PhiJt} \bnfdef&\ \trgJ{\tup{\taintpc, \Omegat, n}} \\ \mi{Windows}~ \trgJ{w} \bnfdef&\ \trgJ{n} \mid \trgJ{\bot} \\
    \mi{Observations}~ \trgJ{\tau} \bnfdef&\ \trgJ{\rollbackObsJ}
\end{align*}

\mytoprule{Judgements}
\begin{align*}
    &
    \trgJ{\SigmaJ \specarrowJ{\tau} \SigmaJ'}
    && \text{State $\trgJ{\SigmaJ}$ small-steps to $\trgJ{\SigmaJ'}$ and emits observation $\trgJ{\tau}$.}
    \\
    &
    \trgJ{\PhiJ \specarrowJ{\tau} \phiStackJ'}
    && \text{Speculative instance $\trgJ{\PhiJ}$ small-steps to $\trgJ{\phiStackJ{'}}$ and emits observation $\trgJ{\tau}$.}
    \\
    &
    \trgJ{\SigmaJ \specarrowJ{\taint} \SigmaJ'}
    && \text{State $\trgJ{\SigmaJ}$ small-steps to $\trgJ{\SigmaJ'}$ and emits taint $\trgJ{\taint}$.}
    \\
    &
    \trgJ{\PhiJ \specarrowJ{\taint} \phiStackJ'}
    && \text{Speculative instance $\trgJ{\PhiJ}$ small-steps to $\trgJ{\phiStackJ{'}}$ and emits taint $\trgJ{\taint}$.}
    \\
    &
    \trgJ{\SigmaJ \specarrowJ{\tau^{\taint}} \SigmaJ'}
    && \text{Speculative state $\trgJ{\SigmaJ}$ small-steps to $\trgJ{\SigmaJ'}$ and emits tainted observation $\trgJ{\tau^{\taint}}$.}
    \\
    &
    \trgJ{\SigmaJ \bigspecarrowJ{\OB{\tau^{\taint}}} \SigmaJ'}
    && \text{State $\trgJ{\SigmaJ}$ big-steps to $\trgJ{\SigmaJ'}$ and emits a list of tainted observations $\trgJ{\tauStackT}$.}
    \\
    &
    \trgJ{\amTracevJ{P}{\tauStackT}}
    && \text{Program $\trgJ{P}$ produces the tainted observations $\trgJ{\tauStackT}$ during execution.}
\end{align*}

\begin{center}
    \centering
    \small
    \mytoprule{\trgJ{\phiStackJv \specarrowJ{\tau} \phiStackJv'}}
    
    \typerule {$\Jvr$:AM-Context}
    {\trgJ{\PhiJv \specarrowJ{\tau} \phiStackJv'}
    }
    {\trgJ{\phiStackJv \cdot \PhiJv \specarrowJ{\tau} \phiStackJv \cdot \phiStackJv'}
     }{v2-context}
     
    \mytoprule{\trgJ{\PhiJv \specarrowJ{\tau} \phiStackJv'}}
     
     \typerule {$\Jvr$:barr}
    {\trgJ{\instrOv{\pbarrier}} & \trgJ{\Omegav' = \Omegav[\pc \mapsto \pc + 1]}
    }
    {
    \trgJ{\tup{\Omegav, \bot} \specarrowJ{\epsilon} \tup{\Omegav', \bot}}
     }{v2-barr}
     \typerule {$\Jvr$:barr-spec}
    {\trgJ{\instrOv{\pbarrier}} & \trgJ{\Omegav' = \Omegav[\pc \mapsto \pc + 1]}
    }
    {\trgJ{\tup{\Omegav, n + 1} \specarrowJ{\epsilon} \tup{\Omegav', 0}}
     }{v2-barr-spec}
     
      \typerule {$\Jvr$:AM-NoSpec-epsilon} {\trgJ{\instrneqOv{\pjmp{x}, \pbarrier, Z}} & \trgJ{\Omegav \nsarrow{\epsilon} \Omegav'}
    }
    {\trgJ{\tup{\Omegav, n + 1} \specarrowJ{\epsilon} \tup{\Omegav', n}}
     }{v2-nospec-eps}
    \typerule {$\Jvr$:AM-NoSpec-action}
    {\trgJ{\instrneqOv{\pjmp{e}, \pbarrier, Z}} & \trgJ{\Omegav \nsarrow{\tau} \Omegav'}
    }
    {\trgJ{\tup{\Omegav, n + 1} \specarrowJ{\tau} \tup{\Omegav', n}}
     }{v2-nospec-act}

    \typerule{$\Jvr$:AM-Jmp-Spec}
    {\trgJ{\instrOv{\pjmp{x}}} & \trgJ{x \in \Reg} & \trgJ{\Omegav \nsarrow{\tau} \Omegav'}  &  \trgJ{\Omegav = \OB{F}; \OB{I}; \OB{B}; \sigmav}\\
      \ffun{\trgJ{\Omegav'(\pc)}} = \trgJ{f'} & \ffun{\trgJ{\Omegav(\pc)}} = \trgJ{f} & \trgJ{f \notin \OB{I}} &  \src{\OB{I}}\vdash\src{f,f'}:\src{internal} \\
    \trgJ{j = min(\omega, n)} & \trgJ{\OB{\SigmaJ''}} = \bigcup_{l \in p} \trgJ{\tup{\Omegav, j}} \text{ where } \trgJ{\Omegav = \OB{F}; \OB{I}; \OB{B}; \sigmav[\pc \mapsto l]} \text{ and } \ffun(l) = f \text{ and } f \in \OB{F}
    }
    {\trgJ{\tup{\Omegav, n + 1} \specarrowJ{\tau} \tup{\Omegav', n} \cdot \OB{\SigmaJ''}}
    }{v2-spec}
    \typerule{$\Jvr$:AM-Rollback}
    { \trgJ{n = 0}\ \text{ or }\ \finType{\trgJ{\Omegav}} }
    {\trgJ{\tup{\Omegav, n} \specarrowJ{\rollbackObsJ} \emptyset}
    }{v2-rollback}
    
    \typerule{$\Jvr$:AM-Jmp-att}
    {\trgJ{\instrOv{\pjmp{e}}} &  \trgJ{\Omegav \nsarrow{\tau} \Omegav'} &  \trgJ{\Omegav = \OB{F}; \OB{I}; \OB{B}; \sigmav}\\
    \ffun{\trgJ{\Omegav(\pc)}} = \trgJ{f} & \trgJ{f \in \OB{I}} \\
    }
    {\trgJ{\tup{\Omegav, n + 1} \specarrowJ{\tau} \tup{\Omegav', n}}
    }{v2-spec-att}
\end{center}

Here $\bigcup$ is the concatenation operator for states.
We allow speculation to only jump to any location in the component.
\subsection{Taint semantics}

\begin{center}
    \centering
    \small
    \mytoprule{\trgJ{\phiStackJt \specarrowJ{\taint} \phiStackJt'}}
    
    \typerule {$\Jvr$:AM-Context}
    {\trgJ{\PhiJt \specarrowJ{\taint} \phiStackJt'}
    }
    {\trgJ{\phiStackJt \cdot \PhiJt \specarrowJ{\taint} \phiStackJt \cdot \phiStackJt'}
     }{v2-t-context}
     
    \mytoprule{\trgJ{\PhiJt \specarrowJ{\taint} \phiStackJt'}}
     
     \typerule {$\Jvr$:barr}
    {\trgJ{\instrOt{\pbarrier}} & \trgJ{\Omegat' = \Omegat[\pc \mapsto \pc + 1]}
    }
    {
    \trgJ{\tup{\taintpc, \Omegat, \bot} \specarrowJ{\epsilon} \tup{\taintpc; \Omegat', \bot}}
     }{v2-t-barr}
     \typerule {$\Jvr$:barr-spec}
    {\trgJ{\instrOt{\pbarrier}} & \trgJ{\Omegat' = \Omegat[\pc \mapsto \pc + 1]}
    }
    {\trgJ{\tup{\taintpc; \Omegat, n + 1} \specarrowJ{\epsilon} \tup{\taintpc; \Omegat', 0}}
     }{v2-t-barr-spec}
     
      \typerule {$\Jvr$:AM-NoSpec-epsilon} {\trgJ{\instrneqOt{\pjmp{e}, \pbarrier, Z}} & \trgJ{\taintpc;\Omegat \nsarrow{\epsilon} \Omegat'}
    }
    {\trgJ{\tup{\taintpc, \Omegat, n + 1} \specarrowJ{\epsilon} \tup{\taintpc, \Omegat', n}}
     }{v2-t-nospec-eps}
    \typerule {$\Jvr$:AM-NoSpec-action}
    {\trgJ{\instrneqOt{\pjmp{e}, \pbarrier, Z}} & \trgJ{\Omegav \nsarrow{\taint'} \Omegat'}
    }
    {\trgJ{\tup{\taintpc, \Omegat, n + 1} \specarrowJ{\taintpc \glb \taint'} \tup{\taintpc, \Omegat', n}}
     }{v2-t-nospec-act}

    \typerule{$\Jvr$:AM-Jmp-Spec}
    {\trgJ{\instrOt{\pjmp{e}}} & \trgJ{\Omegat \nsarrow{\taint'} \Omegat'}  &  \trgJ{\Omegat = \OB{F}; \OB{I}; \OB{B}; \sigmat}\\
    \trgJ{\exprEval{a}{e}{l : \taint''}} &  \trgJ{l' \in \labelset} &  \trgJ{\Omegat = \OB{F}; \OB{I}; \OB{B}; \sigmat''}\\
    \ffun{\trgJ{l'}} = \trgJ{f'} & \ffun{\trgJ{l}} = \trgJ{f} & \trgJ{f \notin \OB{I}}  & \trgJ{j = min(\omega, n)} \\
    \src{\OB{I}}\vdash\src{f,f'}:\src{internal}  & \trgJ{\OB{\SigmaJ''}} = \bigcup_{l \in p} \trgJ{\tup{\Omegat, j}} \text{ where } \trgJ{\Omegat = \OB{F}; \OB{I}; \OB{B}; \sigmav[\pc \mapsto l]} \text{ and } \ffun(l) = f \text{ and } f \in \OB{F}
    }
    {\trgJ{\tup{\taintpc, \Omegat, n + 1} \specarrowJ{\taintpc \glb \taint'} \tup{\taintpc, \Omegat', n} \cdot \tup{\unta, \Omegat'', j}} }{v2-t-spec}
    \typerule{$\Jvr$:AM-Rollback}
    { \trgJ{n = 0}\ \text{ or }\ \finType{\trgJ{\Omegat}} }
    {\trgJ{\tup{\taintpc, \Omegat, n} \specarrowJ{\safeta} \emptyset}
    }{v2-t-rollback}
    
    \typerule{$\Jvr$:AM-Jmp-att}
    {\trgJ{\instrOt{\pjmp{e}}} &  \trgJ{\Omegat \nsarrow{\taint'} \Omegat'} &  \trgJ{\Omegat = \OB{F}; \OB{I}; \OB{B}; \sigmat}\\
    \ffun{\trgJ{\Omegat(\pc)}} = \trgJ{f} & \trgJ{f \in \OB{I}} \\
    }
    {\trgJ{\tup{\taintpc, \Omegat, n + 1} \specarrowJ{\taintpc \glb \taint'} \tup{\taintpc, \Omegat', n}}
    }{v2-t-spec-att}
\end{center}

\begin{center}
    \centering
    \small
    \mytoprule{\trgJ{\SigmaJ \specarrowJ{\tau^{\taint}} \SigmaJ'}}
    
    \typerule {$\Jvr$:Combine}
    {\trgJ{\SigmaJ = \phiStackJ{}} & \trgJ{\SigmaJ = \phiStackJ'} \\
    \trgJ{\phiStackJv + \phiStackJt = \phiStackJ} &  \trgJ{\phiStackJv' + \phiStackJt' = \phiStackJ'} \\
    \trgJ{\phiStackJv \specarrowJ{\tau} \phiStackJv'} & \trgJ{\phiStackJt \specarrowJ{\taint} \phiStackJt'}
    }
    {\trgJ{\SigmaJ \specarrowJ{\tau^{\taint}} \SigmaJ'}
     }{v2-combine}
\end{center}

Note that speculation only happens on indirect jumps in \Cref{tr:v2-spec-att}. An indirect jump happens when the address to jump to has to be fetched. This can happen if the jump address/label is inside a register. All other jumps are executed non-speculatively.

\begin{center}
    \mytoprule{\trgJ{\SigmaJ \bigspecarrowJ{\tauStack} \SigmaJ'}}
    
    \typerule{$\Jvr$:AM-Reflection}
    {
    }
    {\trgJ{\SigmaJ \bigspecarrowJ{\varepsilon} \SigmaJ}
    }{v2-reflect}
    \typerule{$\Jvr$:AM-Single}
    {
    \trgJ{\SigmaJ \bigspecarrowJ{\OB{\tau^{\taint}}} \SigmaJ''} & \trgJ{\SigmaJ'' \specarrowJ{\tau^{\taint}} \SigmaJ'} \\
    \trgJ{\SigmaJ'' = \phiStackJ \cdot  \tup{\OB{F}; \OB{I}; \OB{B}; \sigma, n}} &
    \trgJ{\SigmaJ' = \phiStackJ \cdot \tup{\OB{F}; \OB{I}; \OB{B'}; \sigma', n' }} \\
    \ffun{\trgJ{\sigma(\pc)}} = \trgJ{f} & \ffun{\trgJ{\sigma'(\pc)}} = \trgJ{f'} \\
    \text{ if } \trgJ{f == f'} and ~\trgJ{f \in \OB{I}} \text{ then } \trgJ{\tau^{\taint} = \epsilon} \text{ else } \trgJ{\tau^{\taint} = \tau^{\taint}}
    }
    {
    \trgJ{\SigmaJ \bigspecarrowJ{\OB{\tau^{\taint}} \cdot \tau^{\taint}} \SigmaJ'}
    }{v2-single}
    
    \typerule{$\Jvr$:AM-silent}
    {
    \trgJ{\SigmaJ \bigspecarrowJ{\tra{^\taint}} \SigmaJ''} & \trgJ{\SigmaJ'' \specarrowJ{\epsilon} \SigmaJ'}
    }
    {
    \trgJ{\SigmaJ \bigspecarrowJ{\tra{^\taint}} \SigmaJ'}
    }{v2-silent}
\end{center}

\begin{center}
    \mytoprule{Helpers}
    
    \typerule{$\Jvr$:AM-Merge}
    {\trgJ{\Omegav + \Omegat = \Omega }
    }
    {\trgJ{\tup{\Omegav, n} + \tup{\taintpc, \Omegat, n} = \tup{\Omega, n , \taintpc}}
    }{v2-merge}
    \typerule{$\Jvr$:AM-Init}
    {\trgJ{\SigmaJ = \initFunc(M, \OB{F}, \OB{I}), \bot, \safeta}
    }
    {\trgJ{\initFuncJ(M ,\OB{F}, \OB{I}) = \SigmaJ}
    }{v2-init}
    
    \typerule{$\Jvr$:AM-Fin-Ending}
    { \trgJ{\SigmaJ = (\Omega, \bot, \taintpc)} & \finType{\trgJ{\Omega}}
    }
    {\finTypef{\trgJ{\SigmaJ}}
    }{v2-fin-end}
    \typerule{$\Jvr$:AM-Fin}
    { \trgJ{\SigmaJ = \phiStackJ \cdot (\Omega, n, \taintpc)} & \finType{\trgJ{\Omega}}
    }
    {\finType{\trgJ{\SigmaJ}}
    }{v2-fin}
    
    \typerule {$\Jvr$:AM-Trace}
    { \exists \finTypef{\trgJ{\SigmaJ'}}\ & \trgJ{\initFuncJ(P) \bigspecarrowJ{\tauStackT}\ \SigmaJ'}
    }
    { \trgJ{\amTracevJ{P}{\tauStackT}}
     }{v2-trace}
    \typerule {$\Jvr$:AM-Beh}
    {}
    { \trgJ{\behJ{P}} =  \{\trgJ{\tauStackT} \mid \trgJ{\amTracevJ{P}{\tauStackT}} \}
     }{v2-beh}
\end{center}

 \section{Speculative Semantics $\Rv$}\label{app:semV5}

\begin{align*}
    \mi{Instructions}~ \trgR{i} \bnfdef&\ \cdots \mid \pbarrier \\
    \mi{Speculative\ States}~ \trgR{\SigmaR} \bnfdef&\ \trgR{\phiStackR} \\
    \mi{Speculative\ Instance}~ \trgR{\PhiR} \bnfdef&\ \trgR{\tup{\Omega, \Rsb, n, \taintpc}} \\
    \mi{Speculative\ Instance Vals.}~ \trgR{\PhiRv} \bnfdef&\ \trgR{\tup{\Omegav, \Rsb, n}} \\
    \mi{Speculative\ Instance Taint}~ \trgR{\PhiRt} \bnfdef&\ \trgR{\tup{\taintpc, \Omegat, \Rsb, n}} \\ \mi{Windows}~ \trgR{w} \bnfdef&\ \trgR{n} \mid \trgR{\bot} \\
    \mi{Observations}~ \trgR{\tau} \bnfdef&\ \trgR{\rollbackObsR}
\end{align*}

The return stack buffer (RSB) is infinite in our model. Typically, the RSB is limited in size in actual CPUs and tracks the last $N$ calls. 
For the modified retpoline countermeasure \cite{ret2spec}, it is only important what is at the top of the RSB, which is also correctly tracked in our model. That is why we simplify the model here. The different underflow behaviours (\cite{ret2spec}) are not of interest here, because we are interested in Compiler Countermeasures against speculation and not in attacks using speculation. Again, we use the simplest behaviour here by just stopping speculation when the RSB is empty.

\mytoprule{Judgements}
\begin{align*}
    &
    \trgR{\SigmaR \specarrowR{\tau} \SigmaR'}
    && \text{State $\trgR{\SigmaR}$ small-steps to $\trgR{\SigmaR'}$ and emits observation $\trgR{\tau}$.}
    \\
    &
    \trgR{\PhiR \specarrowR{\tau} \phiStackR'}
    && \text{Speculative instance $\trgR{\PhiR}$ small-steps to $\trgR{\phiStackR{'}}$ and emits observation $\trgR{\tau}$.}
    \\
    &
    \trgR{\SigmaR \specarrowR{\taint} \SigmaR'}
    && \text{State $\trgR{\SigmaR}$ small-steps to $\trgR{\SigmaR'}$ and emits taint $\trgR{\taint}$.}
    \\
    &
    \trgR{\PhiR \specarrowR{\taint} \phiStackR'}
    && \text{Speculative instance $\trgR{\PhiR}$ small-steps to $\trgR{\phiStackR{'}}$ and emits taint $\trgR{\taint}$.}
    \\
    &
    \trgR{\SigmaR \specarrowR{\tau^{\taint}} \SigmaR'}
    && \text{Speculative state $\trgR{\SigmaR}$ small-steps to $\trgR{\SigmaR'}$ and emits tainted observation $\trgR{\tau^{\taint}}$.}
    \\
    &
    \trgR{\SigmaR \bigspecarrowR{\OB{\tau^{\taint}}} \SigmaR'}
    && \text{State $\trgR{\SigmaR}$ big-steps to $\trgR{\SigmaR'}$ and emits a list of tainted observations $\trgR{\tauStackT}$.}
    \\
    &
    \trgR{\amTracevR{P}{\tauStackT}}
    && \text{Program $\trgR{P}$ produces the tainted observations $\trgR{\tauStackT}$ during execution.}
\end{align*}

\begin{center}
    \centering
    \small
    \mytoprule{\trgR{\phiStackRv \specarrowR{\tau} \phiStackRv'}}
    
    \typerule {$\Rvr$:AM-Context}
    {\trgR{\PhiRv \specarrowR{\tau} \phiStackRv'}
    }
    {\trgR{\phiStackRv \cdot \PhiRv \specarrowR{\tau} \phiStackRv \cdot \phiStackRv'}
     }{v5-context}
     
    \mytoprule{\trgR{\PhiRv \specarrowR{\tau} \phiStackRv'}}
     
     \typerule {$\Rvr$:barr}
    {\trgR{\instrOv{\pbarrier}} & \trgR{\Omegav' = \Omegav[\pc \mapsto \pc + 1]}
    }
    {
    \trgR{\tup{\Omegav, \Rsb, \bot} \specarrowR{\epsilon} \tup{\Omegav', \Rsb, \bot}}
     }{v5-barr}
     \typerule {$\Rvr$:barr-spec}
    {\trgR{\instrOv{\pbarrier}} & \trgR{\Omegav' = \Omegav[\pc \mapsto \pc + 1]}
    }
    {\trgR{\tup{\Omegav, \Rsb, n + 1} \specarrowR{\epsilon} \tup{\Omegav', \Rsb, 0}}
     }{v5-barr-spec}
     
      \typerule {$\Rvr$:AM-NoSpec-epsilon} {\trgR{\instrneqOv{\pcall{f}, \pret, \pbarrier, Z}} & \trgR{\Omegav \nsarrow{\epsilon} \Omegav'}
    }
    {\trgR{\tup{\Omegav, \Rsb, n + 1} \specarrowR{\epsilon} \tup{\Omegav', \Rsb, n}}
     }{v5-nospec-eps}
    \typerule {$\Rvr$:AM-NoSpec-action}
    {\trgR{\instrneqOv{\pcall{f}, \pret, \pbarrier, Z}} & \trgR{\Omegav \nsarrow{\tau} \Omegav'}
    }
    {\trgR{\tup{\Omegav, \Rsb, n + 1} \specarrowR{\tau} \tup{\Omegav', \Rsb, n}}
     }{v5-nospec-act}

    \typerule{$\Rvr$:AM-Call}
    {\trgR{\instrOv{\pcall{f}}} &  \trgR{\Omegav \nsarrow{\tau} \Omegav'} \\ 
      \trgR{\Rsb'} = \trgR{\Rsb \cdot (\Omegav(\pc) + 1)}  \\
       \ffun{\trgR{\Omegav(\pc)}} = \trgR{f'} & \trgR{f, f' \notin \OB{I}} 
    }
    {
    \trgR{\tup{\Omegav, \Rsb, n + 1} \specarrowR{\tau} \tup{\Omegav', \Rsb', n}}
    }{v5-call}
    \typerule{$\Rvr$:AM-Call-att}
    {\trgR{\instrOv{\pcall{f}}} &  \trgR{\Omegav \nsarrow{\tau} \Omegav'} \\ 
     \ffun{\trgR{\Omegav(\pc)}} = \trgR{f} & \trgR{f \in \OB{I}} 
    }
    {
    \trgR{\tup{\Omegav, \Rsb, n + 1} \specarrowR{\tau} \tup{\Omegav', \Rsb, n}}
    }{v5-call-att}
     
    \typerule{$\Rvr$:AM-Ret-Same}
    {\trgR{\instrOv{\pret}} & \trgR{\Omegav \nsarrow{\tau} \Omegav'} & \trgR{\Omegav = \OB{F}; \OB{I}; \OB{B}; \sigmav''} & \trgR{\Rsb = \Rsb' ; l} & \trgR{B = B' ; l'}  \\
    \trgR{l} = \trgR{l'} & \ffun{\trgR{\Omegav(\pc)}} = \trgR{f} &
    \ffun{\trgR{\Omegav'(\pc)}} = \trgR{f'} & \trgR{f, f' \notin \OB{I}} 
    }
    {
    \trgR{\tup{\Omegav, \Rsb, n + 1} \specarrowR{\tau} \tup{\Omegav', \Rsb', n}}
     }{v5-retS}
     \typerule{$\Rvr$:AM-Ret-Empty}
    {
    \trgR{\instr{\pret}} & \trgR{\Omegav \nsarrow{\tau} \Omegav'}
    }
    {
    \trgR{\tup{\Omegav, \emptyset, n + 1} \specarrowR{\tau} \tup{\Omegav', \emptyset, n}}
    }{v5-retE}

    \typerule{$\Rvr$:AM-Ret-Spec}
    {\trgR{\instrOv{\pret}} & \trgR{\Omegav \nsarrow{\tau} \Omegav'}  &  \trgR{\Omegav = \OB{F}; \OB{I}; \OB{B}; \sigmav}\\
    \trgR{\Rsb = \Rsb' \cdot l}  &  \trgR{l} \neq \trgR{B(0)}  &  \trgR{\Omegav'' = \OB{F}; \OB{I}; \OB{B'}; \sigmav''}\\
    \ffun{\trgR{\Omegav(\pc)}} = \trgR{f} &
    \ffun{\trgR{\Omegav'(\pc)}} = \trgR{f'} & \trgR{f, f' \notin \OB{I}} \\
     \trgR{\sigmav'' = \sigmav[\pc \mapsto l]} & \trgR{j = min(\omega, n)}
    }
    {\trgR{\tup{\Omegav, \Rsb, n + 1} \specarrowR{\tau} \tup{\Omegav', \Rsb', n} \cdot \tup{\Omegav'', \Rsb', j}}
    }{v5-spec}
    \typerule{$\Rvr$:AM-Rollback}
    { \trgR{n = 0}\ \text{ or }\ \finType{\trgR{\Omegav}} }
    {\trgR{\tup{\Omegav, \Rsb, n} \specarrowR{\rollbackObsR} \emptyset}
    }{v5-rollback}
    
    \typerule{$\Rvr$:AM-Ret-att}
    {\trgR{\instrOv{\pret}} &  \trgR{\Omegav \nsarrow{\tau} \Omegav'} &  \trgR{\Omegav = \OB{F}; \OB{I}; \OB{B}; \sigmav}\\
    \ffun{\trgR{\Omegav(\pc)}} = \trgR{f} & \trgR{f \in \OB{I}} \\
    }
    {\trgR{\tup{\Omegav, \Rsb, n + 1} \specarrowR{\tau} \tup{\Omegav', \Rsb, n}}
    }{v5-spec-att}
\end{center}

We only speculate internally, not between cross-components. That means a return in the component to the context does not trigger speculation. Furthermore, call from context to attacker does not modify the RSB $\Rsb$. This means the RSB only contains addresses of the component.
Call and return observations are generally not unsafe. That is why we do not use the glb there.
\subsection{Taint semantics}

\begin{center}
    \centering
    \small
    \mytoprule{\trgR{\phiStackRt \specarrowR{\taint} \phiStackRt'}}
    
    \typerule {$\Rvr$:AM-Context}
    {\trgR{\PhiRt \specarrowR{\taint} \phiStackRt'}
    }
    {\trgR{\phiStackRt \cdot \PhiRt \specarrowR{\taint} \phiStackRt \cdot \phiStackRt'}
     }{v5-t-context}
     
    \mytoprule{\trgR{\PhiRt \specarrowR{\taint} \phiStackRt'}}
     
     \typerule {$\Rvr$:barr}
    {\trgR{\instrOt{\pbarrier}} & \trgR{\Omegat' = \Omegat[\pc \mapsto \pc + 1]}
    }
    {
    \trgR{\tup{\taintpc, \Omegat, \Rsb, \bot} \specarrowR{\epsilon} \tup{\taintpc; \Omegat', \Rsb, \bot}}
     }{v5-t-barr}
     \typerule {$\Rvr$:barr-spec}
    {\trgR{\instrOt{\pbarrier}} & \trgR{\Omegat' = \Omegat[\pc \mapsto \pc + 1]}
    }
    {\trgR{\tup{\taintpc; \Omegat, \Rsb, n + 1} \specarrowR{\epsilon} \tup{\taintpc; \Omegat', \Rsb, 0}}
     }{v5-t-barr-spec}
     
      \typerule {$\Rvr$:AM-NoSpec-epsilon} {\trgR{\instrneqOt{\pcall{f}, \pret, \pbarrier, Z}} & \trgR{\taintpc;\Omegat \nsarrow{\epsilon} \Omegat'}
    }
    {\trgR{\tup{\taintpc, \Omegat, \Rsb, n + 1} \specarrowR{\epsilon} \tup{\taintpc, \Omegat', \Rsb, n}}
     }{v5-t-nospec-eps}
    \typerule {$\Rvr$:AM-NoSpec-action}
    {\trgR{\instrneqOt{\pcall{f}, \pret, \pbarrier, Z}} & \trgR{\taintpc; \Omegat \nsarrow{\taint'} \Omegat'}
    }
    {\trgR{\tup{\taintpc, \Omegat, \Rsb, n + 1} \specarrowR{\taintpc \glb \taint'} \tup{\taintpc, \Omegat', \Rsb, n}}
     }{v5-t-nospec-act}

    \typerule{$\Rvr$:AM-Call}
    {\trgR{\instrOv{\pcall{f}}} &  \trgR{\taintpc; \Omegat \nsarrow{\taint} \Omegat'} \\ 
      \trgR{\Rsb'} = \trgR{\Rsb \cdot (\Omegat(\pc) + 1)}  \\
       \ffun{\trgR{\Omegat(\pc)}} = \trgR{f} & \ffun{\trgR{\Omegat'(\pc)}} = \trgR{f'} & \trgR{f, f' \notin \OB{I}} 
    }
    {
    \trgR{\tup{\taintpc, \Omegat, \Rsb, n + 1} \specarrowR{\taint} \tup{\taintpc, \Omegat', \Rsb', n}}
    }{v5-t-call}
    \typerule{$\Rvr$:AM-Call-att}
    {\trgR{\instrOv{\pcall{f}}} &  \trgR{\taintpc; \Omegat \nsarrow{\taint} \Omegat'} \\ 
     \ffun{\trgR{\Omegat(\pc)}} = \trgR{f} & \trgR{f \in \OB{I}} 
    }
    {
    \trgR{\tup{\taintpc, \Omegat, \Rsb, n + 1} \specarrowR{\taint} \tup{\taintpc, \Omegat', \Rsb, n}}
    }{v5-t=call-att}
     
    \typerule{$\Rvr$:AM-Ret-Same}
    {\trgR{\instrOv{\pret}} & \trgR{\taintpc; \Omegat \nsarrow{\tau} \Omegat'} & \trgR{\Rsb = \Rsb' ; l} & \trgR{B = B' \cdot l'} \\ 
    \trgR{l} = \trgR{l'} & \ffun{\trgR{\Omegat(\pc)}} = \trgR{f} & \ffun{\trgR{\Omegat'(\pc)}} = \trgR{f'} & \trgR{f, f' \notin \OB{I}} 
    }
    {
    \trgR{\tup{\taintpc, \Omegat, \Rsb, n + 1} \specarrowR{\safeta} \tup{\taintpc, \Omegat', \Rsb', n}}
     }{v5-t-retS}
     \typerule{$\Rvr$:AM-Ret-Empty}
    {
    \trgR{\instr{\pret}} & \trgR{\taintpc; \Omegat \nsarrow{\taint} \Omegat'}
    }
    {
    \trgR{\tup{\taintpc, \Omegat, \emptyset, n + 1} \specarrowR{\safeta} \tup{\taintpc, \Omegat', \emptyset, n}}
    }{v5-t=retE}
    
    \typerule{$\Rvr$:AM-Ret-Spec}
    {\trgR{\instrOt{\pret}} & \trgR{\taintpc; \Omegat \nsarrow{\taint'} \Omegat'}  &  \trgR{\Omegat = \OB{F}; \OB{I}; \OB{B}; \sigmat}\\
     \trgR{\Rsb = \Rsb' ; l} & & \trgR{B = B'' \cdot l'} & l \neq l'  &  \trgR{\Omegat' = \OB{F}; \OB{I}; \OB{B'}; \sigmat''}\\
    \ffun{\trgR{l}} = \trgR{f'} & \ffun{\trgR{\Omegat(\pc)}} = \trgR{f} & \trgR{f, f' \notin \OB{I}} \\
    \trgR{\sigmat'' = \sigmat[\pc \mapsto l : \unta]} & \trgR{j = min(\omega, n)}
    }
    {\trgR{\tup{\taintpc, \Omegat, \Rsb, n + 1} \specarrowR{\taintpc \glb \taint'} \tup{\taintpc, \Omegat', \Rsb', n} \cdot \tup{\unta, \Omegat'', \Rsb', j}} }{v5-t-spec}
    \typerule{$\Rvr$:AM-Rollback}
    { \trgR{n = 0}\ \text{ or }\ \finType{\trgR{\Omegat}} }
    {\trgR{\tup{\taintpc, \Omegat, \Rsb,  n} \specarrowR{\safeta} \emptyset}
    }{v5-t-rollback}
    
    \typerule{$\Rvr$:AM-Ret-att}
    {\trgR{\instrOt{\pret}} &  \trgR{\taintpc; \Omegat \nsarrow{\taint} \Omegat'} &
    \ffun{\trgR{\Omegat(\pc)}} = \trgR{f} & \trgR{f \in \OB{I}}
    }
    {\trgR{\tup{\taintpc, \Omegat, \Rsb, n + 1} \specarrowR{\taint} \tup{\taintpc, \Omegat', \Rsb, n}}
    }{v5-t-spec-att}
\end{center}

\begin{center}
    \centering
    \small
    \mytoprule{\trgR{\SigmaR \specarrowR{\tau^{\taint}} \SigmaR'}}
    
    \typerule {$\Rvr$:Combine}
    {\trgR{\SigmaR = \phiStackR{}} & \trgR{\SigmaR = \phiStackR'} \\
    \trgR{\phiStackRv + \phiStackRt = \phiStackR} &  \trgR{\phiStackRv' + \phiStackRt' = \phiStackR'} \\
    \trgR{\phiStackRv \specarrowR{\tau} \phiStackRv'} & \trgR{\phiStackRt \specarrowR{\taint} \phiStackRt'}
    }
    {\trgR{\SigmaR \specarrowR{\tau^{\taint}} \SigmaR'}
     }{v5-combine}
\end{center}

\begin{center}
    \mytoprule{\trgR{\SigmaR \bigspecarrowR{\tauStack} \SigmaR'}}
    
    \typerule{$\Rvr$:AM-Reflection}
    {
    }
    {\trgR{\SigmaR \bigspecarrowR{\varepsilon} \SigmaR}
    }{v5-reflect}
    \typerule{$\Rvr$:AM-Single}
    {
    \trgR{\SigmaR \bigspecarrowR{\OB{\tau^{\taint}}} \SigmaR''} & \trgR{\SigmaR'' \specarrowR{\tau^{\taint}} \SigmaR'} \\
    \trgR{\SigmaR'' = \phiStackR \cdot  \tup{\OB{F}; \OB{I}; \OB{B}; \sigma, n}} &
    \trgR{\SigmaR' = \phiStackR \cdot \tup{\OB{F}; \OB{I}; \OB{B'}; \sigma', n' }} \\
    \ffun{\trgR{\sigma(\pc)}} = \trgR{f} & \ffun{\trgR{\sigma'(\pc)}} = \trgR{f'} \\
    \text{ if } \trgR{f == f'} and ~\trgR{f \in \OB{I}} \text{ then } \trgR{\tau^{\taint} = \epsilon} \text{ else } \trgR{\tau^{\taint} = \tau^{\taint}}
    }
    {
    \trgR{\SigmaR \bigspecarrowR{\OB{\tau^{\taint}} \cdot \tau^{\taint}} \SigmaR'}
    }{v5-single}
    
    \typerule{$\Rvr$:AM-silent}
    {
    \trgR{\SigmaR \bigspecarrowR{\tra{^\taint}} \SigmaR''} & \trgR{\SigmaR'' \specarrowR{\epsilon} \SigmaR'}
    }
    {
    \trgR{\SigmaR \bigspecarrowR{\tra{^\taint}} \SigmaR'}
    }{v5-silent}

\end{center}

\begin{center}
    \mytoprule{Helpers}
    
    \typerule{$\Rvr$:AM-Merge}
    {\trgR{\Omegav + \Omegat = \Omega }
    }
    {\trgR{\tup{\Omegav, \Rsb, n} + \tup{\taintpc, \Omegat, \Rsb, n} = \tup{\Omega, n , \taintpc}}
    }{v5-merge}
    \typerule{$\Rvr$:AM-Init}
    {\trgR{\SigmaR = \initFunc(M, \OB{F}, \OB{I}),\emptyset, \bot, \safeta}
    }
    {\trgR{\initFuncR(M ,\OB{F}, \OB{I}) = \SigmaR}
    }{v5-init}
    
    \typerule{$\Rvr$:AM-Fin-Ending}
    { \trgR{\SigmaR = (\Omega, \Rsb, \bot, \taintpc)} & \finType{\trgR{\Omega}}
    }
    {\finTypef{\trgR{\SigmaR}}
    }{v5-fin-end}
    \typerule{$\Rvr$:AM-Fin}
    { \trgR{\SigmaR = \phiStackR \cdot (\Omega, \Rsb, n, \taintpc)} & \finType{\trgR{\Omega}}
    }
    {\finType{\trgR{\SigmaR}}
    }{v5-fin}
    
    \typerule {$\Rvr$:AM-Trace}
    { \exists \finTypef{\trgR{\SigmaR'}}\ & \trgR{\initFuncR(P) \bigspecarrowR{\tauStackT}\ \SigmaR'}
    }
    { \trgR{\amTracevR{P}{\tauStackT}}
     }{v5-trace}
    \typerule {$\Rvr$:AM-Beh}
    {}
    { \trgR{\behR{P}} =  \{\trgR{\tauStackT} \mid \trgR{\amTracevR{P}{\tauStackT}} \}
     }{v5-beh}
\end{center}

 \section{Speculative Semantics $\SLSv$}\label{app:semSLS}

Modern CPUs can speculatively bypass $\retC$ instructions. This kind of speculation is called straight-line speculation (SLS) \cite{sls-whitepaper, sls-whitepaper2}.

The speculative state $\trgSLS{\SigmaSLS}$ is a stack of speculative instances $\trgSLS{\PhiSLS}$ that contain the non-speculative state $\Omega$ and the speculation window $n$.
Note that we define $n + 1$ to match $\bot$ as well.
All elements from the source language exist in this language

\begin{align*}
    \mi{Instructions}~ \trgSLS{i} \bnfdef&\ \cdots \mid \pbarrier \\
    \mi{Speculative\ States}~ \trgSLS{\SigmaSLS} \bnfdef&\ \trgSLS{\phiStackSLS} \\
    \mi{Speculative\ Instance}~ \trgSLS{\PhiSLS} \bnfdef&\ \trgSLS{\tup{\Omega, n, \taintpc}} \\
    \mi{Speculative\ Instance Vals.}~ \trgSLS{\PhiSLSv} \bnfdef&\ \trgSLS{\tup{\Omegav, n}} \\
    \mi{Speculative\ Instance Taint}~ \trgSLS{\PhiSLSt} \bnfdef&\ \trgSLS{\tup{\taintpc, \Omegat, n}} \\
    \mi{Windows}~ \trgSLS{w} \bnfdef&\ n \mid \bot \\
    \mi{Observations}~ \trgSLS{\tau} \bnfdef&\ \trgSLS{\rollbackObsSLS}
\end{align*}

We use the shorthand notation $\Omega(\pc)$ for $\sigma(\pc)$ with $\Omega = \OB{C};\OB{B}; \sigma$. The same notation works on $\Omegav$ and $\Omegat$ as well. We note that for $\Omegat$, the notation will still yield a value, since the value of the $\pc$ is tracked even when doing taint tracking

\mytoprule{Judgements}
\begin{align*}
    &
    \trgSLS{\phiStackSLSv \specarrowSLS{\tau} \phiStackSLSv'}
    && \text{Stack $\PhiSLSv$ small-steps to $\phiStackSLSv$ and emits observation $\tau$.}
    \\
    &
    \trgSLS{\PhiSLSv \specarrowSLS{\tau} \phiStackSLSv'}
    && \text{Speculative instance $\PhiSLSv$ small-steps to $\phiStackSLSv$ and emits observation $\tau$.}
    \\
    &
    \trgSLS{\phiStackSLSt \specarrowSLS{\taint} \phiStackSLSt}
    && \text{Stack $\phiStackSLSt$ small-steps to $\phiStackSLSt'$ and emits taint $\taint$.}
    \\
    &
    \trgSLS{\PhiSLSt \specarrowSLS{\taint} \phiStackSLSt'}
    && \text{Speculative instance $\PhiSLSt$ small-steps to $\phiStackSLSt'$ and emits taint $\taint$.}
    \\
    &
    \trgSLS{\SigmaSLS \specarrowSLS{\tau^{\taint}} \SigmaSLS'}
    && \text{Speculative state $\SigmaSLS$ small-steps to $\SigmaSLS'$ and emits tainted observation $\tau^{\taint}$.}
    \\
    &
    \trgSLS{\SigmaSLS \bigspecarrowSLS{\OB{\tau^{\taint}}} \SigmaSLS'}
    && \text{State $\SigmaSLS$ big-steps to $\SigmaSLS'$ and emits a list of observations $\tauStack$.}
    \\
    &
    \trgSLS{\amTracevSLS{P}{\tauStackT}}
    && \text{Program $P$ produces the observations $\tauStackT$ during execution.}
\end{align*}

\begin{center}
    \centering
    \small
    \mytoprule{\trgSLS{\phiStackSLSv \specarrowSLS{\tau} \phiStackSLSv'}}
    
    \typerule {$\SLSvr$:AM-Context}
    {\trgSLS{\PhiSLSv \specarrowSLS{\tau} \phiStackSLSv'}
    }
    {
    \trgSLS{
    \phiStackSLSv \cdot \PhiSLSv
    \specarrowSLS{\tau}
    \phiStackSLSv \cdot \phiStackSLSv'
    }
     }{sls-context}
     
     \mytoprule{\trgSLS{\PhiSLSv \specarrowSLS{\tau} \phiStackSLSv'}}
     
     \typerule {$\SLSvr$:AM-barr}
    {\trgSLS{\instrOv{\pbarrier}} & \trgSLS{\Omegav' = \Omegav[\pc \mapsto \pc + 1]}
    }
    {\trgSLS{\tup{\Omegav, \bot} \specarrowSLS{\epsilon} \tup{\Omegav', \bot}}
     }{sls-barr}
     \typerule {$\SLSvr$:AM-barr-spec}
    {\trgSLS{\instrOv{\pbarrier}} & \trgSLS{\Omegav' = \Omegav[\pc \mapsto \pc + 1]}
    }
    {\trgSLS{\tup{\Omegav, n + 1} \specarrowSLS{\epsilon} \tup{\Omegav', 0}}
     }{sls-barr-spec}
     
     \typerule {$\SLSvr$:AM-NoSpec-epsilon} {\trgSLS{\instrneqOv{\pret, \pbarrier, Z}} &  \trgSLS{\Omegav \nsarrow{\epsilon} \Omegav'}
    }
    {\trgSLS{\tup{\Omegav, n + 1} \specarrowSLS{\epsilon} \tup{\Omegav', n}}
     }{sls-nospec-eps}
    \typerule {$\SLSvr$:AM-NoSpec-action}
    {\trgSLS{\instrneqOv{\pret, \pbarrier, Z}} & \trgSLS{\Omegav \nsarrow{\tau} \Omegav'}
    }
    {\trgSLS{\tup{\Omegav, n + 1} \specarrowSLS{\tau} \tup{\Omegav', n}}
     }{sls-nospec-act}
    
    \typerule{$\SLSvr$:AM-Ret-Spec}
    {\trgSLS{\instrOv{\pret}} &  \trgSLS{\Omegav \nsarrow{\tau} \Omegav'} & \trgSLS{\Omegav = \OB{F}; \OB{I}; \OB{B}; \sigmav}\\
    \ffun{\trgSLS{\Omegav(\pc)}} = \trgSLS{f} & \trgSLS{f \notin \OB{I}} & \trgSLS{\Omegav'' = \OB{F}; \OB{I}; \OB{B}; \sigmav''}\\
     \trgSLS{\sigmav'' = \sigmav[\pc \mapsto \Omegav(\pc) + 1]} & \trgSLS{j = min(\omega, n)} 
    }
    {\trgSLS{\tup{\Omegav, n + 1} \specarrowSLS{\tau} \tup{\Omegav', n} \cdot \tup{\Omegav'', j}}
    }{sls-spec}
    \typerule{$\SLSvr$:AM-Rollback}
    { \trgSLS{n = 0}\ \text{or}\ \finType{\trgSLS{\Omegav}} }
    {\trgSLS{\tup{\Omegav, n} \specarrowSLS{\rollbackObsS} \varepsilon}
    }{sls-rollback}
    
    \typerule{$\SLSvr$:AM-Ret-Spec-att}
    {\trgSLS{\instrOv{\pret}} &  \trgSLS{\Omegav \nsarrow{\tau} \Omegav'} \\
    \ffun{\trgSLS{\Omegav(\pc)}} = \trgSLS{f} & \trgSLS{f \in \OB{I}} \\
    }
    {\trgSLS{\tup{\Omegav, n + 1} \specarrowSLS{\tau} \tup{\Omegav', n}}
    }{sls-spec-att}

\end{center}

In \Cref{tr:sls-spec} we essentially ignore the return instruction by bypassing it and just executing the instruction that comes next.

\subsection{Taint Semantics}
\begin{center}
    \centering
    \small
    \mytoprule{\trgSLS{\phiStackSLSt \specarrowSLS{\taint} \phiStackSLSt}}
    
    \typerule {$\SLSvr$:T-Context}
    {\trgSLS{\PhiSLSt \specarrowSLS{\tau} \phiStackSLSt}
    }
    {\trgSLS{\phiStackSLSt \cdot \PhiSLSt \specarrowSLS{\tau} \phiStackSLSt \cdot \phiStackSLSt'}
     }{sls-t-context}

     \mytoprule{\trgSLS{\PhiSLSt \specarrowSLS{\taint} \phiStackSLSt'}}
     
     \typerule {$\SLSvr$:T-barr}
    {\trgSLS{\instrOt{\pbarrier}} & \trgSLS{\Omegat' = \Omegat[\pc \mapsto \pc + 1]}
    }
    {\trgSLS{\tup{\taintpc, \Omegat, \bot}
    \specarrowSLS{\epsilon}
    \tup{\taintpc, \Omegat', \bot}}
     }{sls-t-barr}
     \typerule {$\SLSvr$:T-barr-spec}
    {\trgSLS{\instrOt{\pbarrier}} & \trgSLS{\Omegat' = \Omegat[\pc \mapsto \pc + 1]}
    }
    {\trgSLS{\tup{\taintpc, \Omegat, n + 1}
    \specarrowSLS{\epsilon}
    \tup{\taintpc, \Omegat', 0}}
     }{sls-t-barr-spec}
     
      \typerule {$\SLSvr$:AM-NoSpec-epsilon} {\trgSLS{\instrneqOv{\pret, \pbarrier, Z}} &  \trgSLS{\taintpc; \Omegat \nsarrow{\epsilon} \Omegat'}
    }
    {\trgSLS{\tup{\taintpc, \Omegat, n + 1} \specarrowSLS{\epsilon} \tup{\taintpc, \Omegat', n}}
     }{sls-t-nospec-eps}
     \typerule {$\SLSvr$:T-NoSpec-action}
    {\trgSLS{\instrneqOt{\pret, \pbarrier, Z}} & \trgSLS{\taintpc;\Omegat \nsarrow{\taint'} \Omegat'}
    }
    {\trgSLS{\tup{\taintpc, \Omegat, n + 1} \specarrowSLS{\taintpc \glb \taint'} \tup{\taintpc, \Omegat', n}}
     }{sls-t-nospec-act}

    \typerule{$\SLSvr$:T-Ret-Spec}
    {\trgSLS{\instrOt{\pret}} & \trgSLS{\taintpc;\Omegat \nsarrow{\taint'} \Omegat'} & \trgSLS{\Omegat = \OB{F}; \OB{I}; \OB{B}; \sigmat} \\
    \ffun{\trgSLS{\Omegat(\pc)}} = \trgSLS{f} & \trgSLS{f \notin \OB{I}} & \trgSLS{\Omegat'' = \OB{F}; \OB{I}; \OB{B}; \sigmat''} \\
     \trgSLS{\sigmat'' = \sigmat[\pc \mapsto \Omegat(\pc) + 1 : \unta]} & \trgSLS{j = min(\omega, n)}
    }
    {\trgSLS{\tup{\taintpc, \Omegat, n + 1} \specarrowSLS{\taintpc \glb \taint'}
    \tup{\taintpc, \Omegat', n}\cdot \tup{\unta, \Omegat'', j}} }{sls-t-spec}
    \typerule{$\SLSvr$:T-Ret-Spec-att}
    {\trgSLS{\instrOt{\pret}} & \trgSLS{\taintpc;\Omegat \nsarrow{\taint'} \Omegat'}  \\
    \ffun{\trgSLS{\Omegat(\pc)}} = \trgSLS{f} & \trgSLS{f \notin \OB{I}} \\
    }
    {\trgSLS{\tup{\taintpc, \Omegat, n + 1} \specarrowSLS{\taintpc \glb \taint'} \tup{\taintpc, \Omegat', n}}
    }{sls-t-spec-att}
    
    \typerule{$\SLSvr$:T-Rollback}
    { \trgSLS{n = 0} \text{ or } \finType{\trgSLS{\Omegat}} }
    { 
    \trgSLS{\tup{\taintpc, \Omegat, n } \specarrowSLS{\safeta} \emptyset}
    }{sls-t-rollback}
    
\end{center}

\begin{center}
    \centering
    \small
    \mytoprule{\trgSLS{\SigmaSLS \specarrowSLS{\tau^{\taint}} \SigmaSLS'}}
    
    \typerule {$\SLSvr$:Combine}
    {\trgSLS{\SigmaSLS = \phiStackSLS{}} & \trgSLS{\SigmaSLS = \phiStackSLS'} \\
    \trgSLS{\phiStackSLSv + \phiStackSLSt = \phiStackSLS} &  \trgSLS{\phiStackSLSv' + \phiStackSLSt' = \phiStackSLS'} \\
    \trgSLS{\phiStackSLSv \specarrowSLS{\tau} \phiStackSLSv'} & \trgSLS{\phiStackSLSt \specarrowSLS{\taint} \phiStackSLSt'}
    }
    {\trgSLS{\SigmaSLS \specarrowSLS{\tau^{\taint}} \SigmaSLS'}
     }{sls-combine}
\end{center}

\begin{center}
    \mytoprule{\trgSLS{\SigmaSLS \bigspecarrowSLS{\tauStack} \SigmaSLS'}}
    
    \typerule{$\SLSvr$:AM-Reflection}
    {
    }
    {\trgSLS{\SigmaSLS \bigspecarrowSLS{\varepsilon} \SigmaSLS}
    }{sls-reflect}
    \typerule{$\SLSvr$:AM-Single}
    {
    \trgSLS{\SigmaSLS \bigspecarrowSLS{\OB{\tau^{\taint}}} \SigmaSLS''} & \trgSLS{\SigmaSLS'' \specarrowSLS{\tau^{\taint}} \SigmaSLS'} \\
    \trgSLS{\SigmaSLS'' = \phiStackSLS \cdot  \tup{\OB{F}; \OB{I}; \OB{B}; \sigma, n}} &
    \trgSLS{\SigmaSLS' = \phiStackSLS \cdot \tup{\OB{F}; \OB{I}; \OB{B'}; \sigma', n' }} \\
    \ffun{\trgSLS{\sigma(\pc)}} = \trgSLS{f} & \ffun{\trgSLS{\sigma'(\pc)}} = \trgSLS{f'} \\
    \text{ if } \trgSLS{f == f'} and ~\trgSLS{f \in \OB{I}} \text{ then } \trgSLS{\tau^{\taint} = \epsilon} \text{ else } \trgSLS{\tau^{\taint} = \tau^{\taint}}
    }
    {
    \trgSLS{\SigmaSLS \bigspecarrowSLS{\OB{\tau^{\taint}} \cdot \tau^{\taint}} \SigmaSLS'}
    }{sls-single}
    
    \typerule{$\SLSvr$:AM-silent}
    {
    \trgSLS{\SigmaSLS \bigspecarrowSLS{\tra{^\taint}} \SigmaSLS''} & \trgSLS{\SigmaSLS'' \specarrowSLS{\epsilon} \SigmaSLS'}
    }
    {
    \trgSLS{\SigmaSLS \bigspecarrowSLS{\tra{^\taint}} \SigmaSLS'}
    }{sls-silent}

\end{center}
Again, we ignore events that happen in the context.

\begin{center}
    \mytoprule{Helpers}
    
    \typerule{$\SLSvr$:AM-Merge}
    {\trgSLS{\Omegav + \Omegat = \Omega }
    }
    {\trgSLS{\tup{\Omegav, n} + \tup{\taintpc, \Omegat, n} = \tup{\Omega, n , \taintpc}}
    }{sls-merge}            
    \typerule{$\SLSvr$:AM-Init}
    {\trgSLS{\SigmaSLS = \initFunc(M, \OB{F}, \OB{I}), \bot, \safeta}
    }
    {\trgSLS{\initFuncSLS(M ,\OB{F}, \OB{I}) = \SigmaSLS}
    }{sls-init}
    
    \typerule{$\SLSvr$:AM-Fin-Ending}
    { \trgSLS{\SigmaSLS = (\Omega, \bot)} & \finType{\trgSLS{\Omega}}
    }
    {\finTypef{\trgSLS{\SigmaSLS}}
    }{sls-fin-end}
    \typerule{$\SLSvr$:AM-Fin}
    { \trgSLS{\SigmaSLS = \phiStackSLS \cdot (\Omega, n)} & \finType{\trgSLS{\Omega}}
    }
    {\finType{\trgSLS{\SigmaSLS}}
    }{sls-fin}
    
    \typerule {$\SLSvr$:AM-Trace}
    { \exists \finTypef{\trgSLS{\SigmaSLS'}}\ & \trgSLS{\initFuncSLS(P) \bigspecarrowSLS{\tauStackT}\ \SigmaSLS'}
    }
    { \trgSLS{\amTracevSLS{P}{\tauStackT}}
     }{sls-trace}
    \typerule {$\SLSvr$:AM-Beh}
    {}
    { \trgSLS{\behSLS{P}} =  \{\trgSLS{\tauStackT} \mid \trgSLS{\amTracevSLS{P}{\tauStackT}} \}
     }{sls-beh}
\end{center}

\section{NS: Safe Equivalence Preservation}
For the speculative contracts, we need to ensure that the underlying non-speculative semantics preserve the safe equivalence relation. We do this once here.

\begin{center}
\begin{forest}
    [\Thmref{lemma:non-spec-preserves-safe}
        [\Thmref{lemma:safe-equi-expression1}
            [\Thmref{lemma:safe-equi-expression2}
                [\Thmref{lemma:bounds-preserve-relation}]
            ]
        ]
    ]
\end{forest}
\end{center}
\begin{insight}
    The distinction of splitting the step into the taint and value part is important here. The speculative semantics uses either the taint or the value non-speculative semantics to delegate back. For example in \cref{tr:v4-nospec-eps}.
    For convenience, we write in the combined step that a "full" non-spec step $\src{\Omega \nsarrow{\tau^\taint} \Omega'}$ was done.
    Note however that in \Cref{tr:combine-s} we modify the taint by $\taint \sqcap \safeta$. 
    This has implications for \Thmref{lemma:non-spec-preserves-safe} as it is stated here because then the fact of $\safe{\tau}$ would not help us with the taint because the taint is always $\safeta$ independent of the action.
    However, in the proofs about the speculative semantics that uses a non-speculative step the \Cref{tr:combine-s} is never used and so the $\safeta$ does not appear and $\safe{\tau}$ actually tells us if the action was saved or not.
    Again, for convenience, we will leave \Thmref{lemma:non-spec-preserves-safe} as is where \Cref{tr:combine-s} is not used. But note that it should have been stated once for
    $\Omega_v \nsarrow{\tau} \Omega_v'$  and once for $\taintpc;\Omega_t \nsarrow{\taint} \Omega_t'$. Both proofs follow the exact same structure.
\end{insight}
\begin{lemma}[Steps of $\nsarrow{}$ with safe Observations preserve Safe-Equivalence]\label{lemma:non-spec-preserves-safe}
If
\begin{enumerate}
    \item $\src{\Omega \nsarrow{\tau} \Omega^{\dagger}}$ and
    \item $\safe{\src{\tau}}$ and 
    \item $\src{\Omega} \relsa \src{\Omega'}$
\end{enumerate}
Then
\begin{enumerate}[label=\Roman*]
    \item $\src{\Omega' \nsarrow{\tau'} \Omega^{*}}$
    \item $\src{\Omega^{\dagger}} \relsa \src{\Omega^{*}}$
    \item $\src{\tau} = \src{\tau'}$
\end{enumerate}
\end{lemma}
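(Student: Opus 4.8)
The plan is to prove the lemma by case analysis on the instruction $p(\sigma(\pc))$ that drives the step $\Omega \nsarrow{\tau} \Omega^{\dagger}$, i.e.\ on the last rule used to derive it. Following the Insight preceding the statement, I would actually run the argument twice with identical structure --- once for the value projection $\Omega_v \nsarrow{\tau} \Omega_v'$ and once for the taint projection $\taintpc; \Omega_t \nsarrow{\taint} \Omega_t'$ --- since the combined step of \Cref{tr:combine-s} is never invoked by the speculative rules that appeal to this lemma (so $\safe{\tau}$ genuinely reports the action's taint rather than the artificially-$\safeta$ combined taint). From $\Omega \relsa \Omega'$ (\Cref{def:safeeq}) I get that both states carry the same program $p$ and frame stack $B$, and register files / memories with pointwise-equal taints whose safe-tainted entries moreover agree in value. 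In particular the program counter, being safe-tainted along any non-speculative step, has the same value in $\Omega$ and $\Omega'$, so $\Omega'$ is poised to execute the very same instruction; this is what lets me produce the matching step (I) and reuse the same control decisions.

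For each instruction class I would then establish (I)--(III) as follows. The cases with no microarchitectural action ($\pskip$, register and conditional assignment, $\pbarrier$, and the internal/silent $\pcall{}$/$\pret$ variants) are immediate: the step replays on $\Omega'$, $\tau=\tau'$ holds vacuously or with an $\epsilon$/call/ret label that is safe by construction, and the expression-equivalence lemmas (\Cref{lemma:safe-equi-expression1}, built on \Cref{lemma:safe-equi-expression2} and \Cref{lemma:bounds-preserve-relation}) guarantee that any freshly assigned register receives an equal taint, and an equal value whenever that taint is safe, preserving $\relsa$. The interesting cases are the data-dependent ones --- $\loadKywd$, $\storeKywd$, $\jzKywd$, and $\jmpKywd$ --- where the emitted action $\tau$ carries the taint of the address/condition/target expression. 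Here I would use the hypothesis $\safe{\tau}$: since the action is safe, that expression is safe-tainted, so \Cref{lemma:safe-equi-expression1} forces it to evaluate to the same value in $\Omega$ and $\Omega'$. This simultaneously yields $\tau=\tau'$ (same accessed address, or the same $\pcObs{l}$) and the same branch direction, giving (I) and (III). For (II) it remains to check that the successor stays related: a $\storeKywd$ writes the (equal) location with the register's value and taint, preserving $\approx$; a $\loadKywd$ copies the location's taint to the destination, so if that taint is safe the two memories agree there and the loaded values match, while if it is unsafe the values may differ --- exactly what $\relsa$ permits.

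The main obstacle I expect is precisely the bookkeeping in these memory and control-flow cases: I must be careful that the taint governed by $\safe{\tau}$ is the taint of the \emph{right} expression (address, condition, or jump target), so that the expression lemma applies and pins down equality of addresses and branch targets. A secondary subtlety is the frame stack: the $\approx$ relation on states requires $B$ to be literally equal, so for $\pcall{}$/$\pret$ I must argue that the pushed return address $\sigma(\pc)+1$ and any popped entry coincide across the two states, which again reduces to the program counter being safe-tainted and hence equal. Once these expression- and frame-level equalities are in place, reconstructing the matching step from $\Omega'$ and re-establishing $\relsa$ for the successor states is routine.
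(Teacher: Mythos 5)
Your proposal matches the paper's proof: it likewise proceeds by inversion on the step, uses $\src{\Omega(\pc)}=\src{\Omega'(\pc)}$ from $\relsa$ to replay the instruction, invokes \Cref{lemma:safe-equi-expression1} (via \Cref{lemma:safe-equi-expression2} and \Cref{lemma:bounds-preserve-relation}) for expressions, and uses $\safe{\src{\tau}}$ exactly where you do --- to force equality of addresses, branch conditions, jump targets, and popped return addresses in the data-dependent cases, while letting unsafe-tainted loaded values differ. No gaps; this is essentially the same argument.
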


\begin{proof}

Let us denote what we already know. 

\begin{align*}
    \src{\Omega} &= \src{C;B; \tup{p, M, A}} \\
    \src{\Omega'} &= \src{C;B'; \tup{p, M', A'}}
\end{align*}
From $\src{\Omega} \relsa \src{\Omega'}$ we get
\begin{enumerate}
    \item $\src{\Omega(\pc)} = \src{\Omega'(\pc)}$
    \item $\src{M} \relsa \src{M'}$ and $\src{A} \relsa \src{A'}$
\end{enumerate}
We now proceed by inversion on $\src{\Omega \nsarrow{\tau} \Omega'}$:
\begin{description}
    \item[\Cref{tr:skip}]
    Then $\src{\Omega(\pc)} = \pskip$ and $\src{\tau} = \src{\varepsilon}$. 
    We thus have $\src{\Omega'(\pc)} = \pskip$ and can apply \Cref{tr:skip} to derive the step $\src{\Omega' \nsarrow{\tau'} \Omega_{*}}$. 
    
    \begin{description}
        \item[II]
         \Cref{tr:skip} only increments the program counter. 
         Since $\src{\Omega(\pc)} = \src{\Omega'(\pc)}$ we have $\src{\Omega^{\dagger}(\pc)} = \src{\Omega^{*}(\pc)}$.
         
         Thus $\src{\Omega^{\dagger}} \relsa \src{\Omega^{*}}$.
        \item[III]
        From  \Cref{tr:skip} it follows that $\src{\tau'} = \src{\varepsilon}$ and thus $\src{\tau} = \src{\tau'}$.
    \end{description}
    
    \item[\Cref{tr:assign}]
    Then $\src{\Omega(\pc)} = \passign{x}{e}$, $\src{\tau} = \src{\varepsilon}$ and $\src{\exprEval{a}{e}{v : \taint}}$. 
    
    We can apply \Cref{lemma:safe-equi-expression1} and get $\src{\exprEval{A'}{e}{v' : \taint'}}$ with $\src{v : \taint} \relsa \src{v' : \taint'}$.
    
    We thus have $\src{\Omega'(\pc)} = \passign{x}{e}$ and $\src{\exprEval{A'}{e}{v' : \taint'}}$ and can apply \Cref{tr:assign} to derive the step $\src{\Omega' \nsarrow{\tau'} \Omega_{*}}$. 
    
    \begin{description}
        \item[II]
        
        Since the rule only updates the stores $\src{A}$ and $\src{A'}$, we need to show that $\src{A^{\dagger}} \relsa \src{A^{*}}$.
        
        Because of $\src{v : \taint} \relsa \src{v' : \taint'}$ and these values are the only change to the store, we can conclude that $\src{A^{\dagger}} \relsa \src{A^{*}}$.

         Since $\src{\Omega(\pc)} = \src{\Omega'(\pc)}$ we have $\src{\Omega^{\dagger}(\pc)} = \src{\Omega^{*}(\pc)}$.
         
         Thus, $\src{\Omega^{\dagger}} \relsa \src{\Omega^{*}}$.
        \item[III]
        From \Cref{tr:assign} it follows that $\tau' = \varepsilon$ and thus $\tau = \tau'$.
    \end{description}
    
    \item[\Cref{tr:beqz-sat}]
    Then $\src{\Omega(\pc)} = \pjz{x}{\lbl}$ and $\src{A(x)} = \src{0 : \taint}$
    
    Since $\src{A} \relsa \src{A'}$ we know that $\src{A'(x)} = \src{v'} : \src{\taint'}$ with $\src{0 : \taint} \relsa \src{v' : \taint'}$.
    
    Additionally, because of $\safe{\src{\tau}}$, we know $\src{\taint} = \safeta$ and we can conclude that $\src{v'} = \src{0}$ and $\src{\taint} = \src{\taint'}$ using $\src{0 : \taint} \relsa \src{v' : \taint'}$.

    Thus, we can apply \Cref{tr:beqz-sat} to derive the step $\src{\Omega' \nsarrow{\tau'} \Omega_{*}}$. 
    
    \begin{description}
        \item[II]
        
        We need to show $\src{A^{\dagger}} \relsa \src{A^{*}}$, since that is the only component changing during the step. The rest follows from $\src{\Omega} \relsa \src{\Omega'}$.
        
        Note that, $\src{A^{\dagger}} = \src{A}[\pc \mapsto \lbl : \taint]$ and $\src{A^{*}} = \src{A'}[\pc \mapsto \lbl : \taint']$.
        
        Because $\src{\taint} = \src{\taint'}$ we get $\src{A^{\dagger}} \relsa \src{A^{*}}$.
        
        \item[III]
        We know that $\src{\tau} = \src{\pcObs{\lbl}^{\taint}}$ and $\src{\tau'} = \src{\pcObs{\lbl}^{\taint'}}$. 
        
        Since $\src{\taint} = \src{\taint'}$, we have $\src{\tau} = \src{\tau'}$.
    \end{description}

    \item[\Cref{tr:beqz-unsat}]
    
    The proof of this case is similar to the \Cref{tr:beqz-sat} case above.
    
    We need the additional fact that $\src{A(\pc)} = \src{A'(\pc)}$ since $\src{\tau} = \src{\pcObs{(A(\pc) +1)}}$ and $\src{\tau'} = \src{(\pcObs{A(\pc) + 1)}}$. That fact follows immediately from the assumption $\src{\Omega(\pc)} = \src{\Omega'(\pc)}$.

    \item[\Cref{tr:jmp}]
    
    Then $\src{\Omega(\pc)} = \pjmp{e}$. $ \src{\exprEval{A}{e}{\lbl : \taint}}$ and $\ffun{\src{A(\pc)}} = \src{f}$, $\ffun{\src{\lbl}} = \src{f'}$ and $\src{C.\mtt{intfs}}\vdash\src{f,f'}:\src{internal}$
    
    We can apply \Cref{lemma:safe-equi-expression1} and get $\src{\exprEval{A'}{e}{n' : \taint'}}$ with $\src{\lbl : \taint} \relsa \src{\lbl' : \taint'}$.

    Additionally, because of $\safe{\src{\tau}}$, we know $\src{\taint} = \src{\safeta}$ and we can conclude that $\src{\lbl} = \src{\lbl'}$ and $\src{\taint} = \src{\taint'}$ using $\src{\lbl : \taint} \relsa \src{\lbl' : \taint'}$.

    Thus, we can apply \Cref{tr:jmp} to derive the step $\src{\Omega' \nsarrow{\tau'} \Omega_{*}}$.
    
    \begin{description}
        
        \item[II]
        
        We need to show $\src{A^{\dagger}} \relsa \src{A^{*}}$, since that is the only component changing during the step. The rest follows from $\src{\Omega} \relsa \src{\Omega'}$.
        
        Note that, $A^{\dagger} = A[\pc \mapsto \lbl : \taint]$ and $A^{*} = A'[\pc \mapsto \lbl : \taint']$.
        
        Because $\src{\taint} = \src{\taint'}$ we get $\src{A^{\dagger}} \relsa \src{A^{*}}$.
        
        \item[III]
        We know that $\src{\tau} = \src{\pcObs{\lbl}^{\taint}}$ and $\src{\tau'} = \src{\pcObs{\lbl'}^{\taint'}}$. 
        
        Since $\src{\lbl} = \src{\lbl'}$ and $\src{\taint} = \src{\taint'}$, we have $\src{\tau} = \src{\tau'}$.
    \end{description}

    \item[\Cref{tr:load}]
    
    Then $\src{\Omega(\pc)} = \pload{x}{e}$, $\src{x} \neq \src{\pc}$ and $\src{\exprEval{A}{e}{n : \taint}}$.
    
    We can apply \Cref{lemma:safe-equi-expression1} and get $\src{\exprEval{A'}{e}{n' : \taint'}}$ with $\src{n : \taint} \relsa \src{n' : \taint'}$.
    
    Thus, we can apply \Cref{tr:load} to derive the step $\src{\Omega' \nsarrow{\tau'} \Omega^{*}}$.

    Additionally, because of $\safe{\src{\tau}}$, we know $\src{\taint} = \safeta$ and we can conclude that $\src{n} = \src{n'}$ and $\src{\taint} = \src{\taint'}$ using $\src{n : \taint} \relsa \src{n' : \taint'}$.
    
    \begin{description}
        \item[II]
        
        Since the rule only updates the stores $\src{A}$ and $\src{A'}$, we need to show that $\src{A^{\dagger}} \relsa \src{A^{*}}$. The rest follows from the assumptions.
        
        Let $\src{M(n)} = \src{v : \taint_1}$ and $\src{M(n')} = \src{v' : \taint_1'}$.
        We know that $\src{M} \relsa \src{M'}$ and $\src{n} = \src{n'}$. Thus, $\src{v : \taint_1} \relsa \src{v' : \taint_1'}$.
        
        Because of $\src{v : \taint_1} \relsa \src{v' : \taint_1'}$ and these values are the only change to the store, we can conclude that $\src{A^{\dagger}} \relsa \src{A^{*}}$.
        
        \item[III]
        
        We know that $\src{\tau} = \src{\loadObs{n}^{\taint \lub \taint_1}}$ and $\src{\tau' = \loadObs{n'}^{\taint' \lub \taint_2}}$. 
        
        Follows from $\src{n = n'}$, $\src{\taint = \taint'}$ and \Thmref{lemma:bounds-preserve-relation}.
    \end{description}

    \item[\Cref{tr:store}]
    
    Then $\src{\Omega(\pc)} = \pstore{x}{e}$, $\src{x} \neq \src{\pc}$ and $\src{\exprEval{A}{e}{n : \taint}}$ and $\src{A(x) = v : \taint_1}$
    
    We can apply \Cref{lemma:safe-equi-expression1} and get $\src{\exprEval{A'}{e}{n' : \taint'}}$ with $\src{n : \taint} \relsa \src{n' : \taint'}$.
    
    Furthermore, by $\src{A} \relsa \src{A'}$ we know that $\src{v : \taint_1} \relsa \src{v' : \taint_1'}$ where $\src{A'(x) = v' : \taint_1'}$.
    
    Thus, we can apply \Cref{tr:store} to derive the step $\src{\Omega' \nsarrow{\tau'} \Omega_{*}}$. 
    
    Additionally, because of $\safe{\src{\tau}}$, we know $\src{\taint} = \safeta$ and we can conclude that $\src{n} = \src{n'}$, $\src{v} = \src{v'}$, $\src{\taint} = \src{\taint'}$ and $\src{\taint_1} = \src{\taint_1'}$ using $\src{n : \taint} \relsa \src{n' : \taint'}$ and $\src{v : \taint_1} \relsa \src{v' : \taint_1'}$.
    
    \begin{description}
        
        \item[II]
        
        The rules only update the memories $\src{M}$ and $\src{M'}$ (Ignoring the increment of pc in the store since that trivially holds. See the proof of \Cref{tr:skip}).
        
        Because of $\src{n} = \src{n'}$, $\src{v} = \src{v'}$, $\src{\taint} = \src{\taint'}$ and $\src{\taint_1} = \src{\taint_1'}$ and the fact that $\src{M^{\dagger}} = \src{M[n \mapsto v : \taint_1]}$ and $\src{M^{*}} = \src{M^{\low'}[n' \mapsto v' : \taint_1']}$ we get $\src{M^{\dagger}} \relsa \src{M^{*}}$.
        
        \item[III]
        
        We know that $\src{\tau} = \src{\storeObs{n}^{\taint}}$ and $\src{\tau' = \storeObs{n'}^{\taint'}}$. 
        
        Since $\src{n} = \src{n'}$ and $\src{\taint} = \src{\taint'}$  we know that $\src{\tau} = \src{\tau'}$.
    \end{description}

    \item[\Cref{tr:load-prv}]
    
    Then $\src{\Omega(\pc)} = \ploadprv{x}{e}$, $\src{x} \neq \src{\pc}$ and $\src{\exprEval{A}{e}{n : \taint}}$.
    
    We can apply \Cref{lemma:safe-equi-expression1} and get $\src{\exprEval{A'}{e}{n' : \taint'}}$ with $\src{n : \taint} \relsa \src{n' : \taint'}$.
    
    Thus, we can apply \Cref{tr:load} to derive the step $\src{\Omega' \nsarrow{\tau'} \Omega^{*}}$.

    Additionally, because of $\safe{\src{\tau}}$, we know $\src{\taint} = \src{\safeta}$ and we can conclude that $\src{n} = \src{n'}$ and $\src{\taint} = \src{\taint'}$ using $\src{n : \taint} \relsa \src{n' : \taint'}$.
    
    \begin{description}
        \item[II]
        
        Since the rule only updates the stores $\src{A}$ and $\src{A'}$, we need to show that $\src{A^{\dagger}} \relsa \src{A^{*}}$. The rest follows from the assumptions.
        
        Let $\src{M^{\high}(n)} = \src{v : \unta}$ and $\src{M^{\high'}(n')} = \src{v' : \unta}$
        We know that $\src{M} \relsa \src{M'}$ and $\src{n} = \src{n'}$. Thus, $\src{v : \unta} \relsa \src{v' : \unta'}$.

        Here, $\src{A^{\dagger} = A[x \mapsto v : \unta]}$ and $\src{A^{*} = A'[x \mapsto v' : \unta]}$
        
        Because of $\src{v : \unta} \relsa \src{v' : \unta}$ and these values are the only change to the store, we can conclude that $\src{A^{\dagger}} \relsa \src{A^{*}}$.
        
        \item[III]
        We know that $\src{\tau} = \loadObs{n}^{\src{\taint}}$ and $\tau' = \loadObs{n'}^{\src{\taint'}}$. 
        
        Since $\src{n} = \src{n'}$ and $\src{\taint} = \src{\taint'}$, we know that $\src{\tau} = \src{\tau'}$.
    \end{description}
    
    \item[\Cref{tr:store-prv}]
    
    Then $\src{\Omega(\pc)} = \pstoreprv{x}{e}$ and $\src{\exprEval{A}{e}{n : \taint}}$ and $\src{A(x) = v : \taint_1}$
    
    We can apply \Thmref{lemma:safe-equi-expression1} and get $\src{\exprEval{A'}{e}{n' : \taint'}}$ with $\src{n : \taint} \relsa \src{n' : \taint'}$.
    
    Furthermore, by $\src{A} \relsa \src{A'}$ we know that $\src{v : \taint_1} \relsa \src{v' : \taint_1'}$ where $\src{A'(x) = v' : \taint_1'}$.
    
    Thus, we can apply \Cref{tr:store-prv} to derive the step $\src{\Omega' \nsarrow{\tau'} \Omega^{*}}$. 
    
    Additionally, because of $\safe{\src{\tau}}$, we know $\src{\taint = \safeta}$ and we can conclude that $\src{n} = \src{n'}$, $\src{v} = \src{v'}$, $\src{\taint} = \src{\taint'}$ and $\src{\taint_1} = \src{\taint_1'}$ using $\src{n : \taint} \relsa \src{n' : \taint'}$ and $\src{v : \taint_1} \relsa \src{v' : \taint_1'}$.
    
    \begin{description}
        
        \item[II]
        
        The rules only update the memories $\src{M}$ and $\src{M'}$ (Ignoring the increment of pc in the store since that trivially holds. See the proof of \Cref{tr:skip}).
        
        Because of $\src{n} = \src{n'}$, $\src{v} = \src{v'}$, $\src{\taint} = \src{\taint'}$ and $\src{\taint_1} = \src{\taint_1'}$ and the fact that $\src{M^{\high \dagger} = M^{\high}[n \mapsto v : \taint_1]}$ and $\src{M^{\high*} = M^{\high'}[n' \mapsto v' : \taint_1']}$ we get $\src{M^{\dagger}} \relsa \src{M^{*}}$.
        
        \item[III]
        We know that $\src{\tau} = \storeObs{n}^{\src{\taint}}$ and $\src{\tau'} = \storeObs{n'}^{\src{\taint'}}$. 
        
        Since $\src{n} = \src{n'}$, $\src{\taint} = \src{\taint'}$  we know that $\src{\tau} = \src{\tau'}$.
    \end{description}

    \item[\Cref{tr:call-internal}]
    
    Then $\src{\Omega(\pc)} = \pcall{f'}$, $\mathcal{F}(f') = n $, $f' \in C.\mtt{funs}$, $\src{A(\pc)} = \src{n_1 : \taint_1}$, $\ffun{\src{n_1}} = \src{f}$
    and $\src{C}.\mtt{intfs}\vdash\src{f',f} : \src{internal}$.

    Because of $\src{\Omega(\pc)} = \src{\Omega'(\pc)}$ and the fact that the $\src{\pc}$ is always tainted safe, we get $\src{A'(\pc)} = \src{n' : \taint_1'}$ with $\src{n_1} = \src{n'}$ and $\src{\taint_1} = \src{\taint_1'}$

    Thus, we can apply \Cref{tr:call-internal} to derive the step $\src{\Omega' \nsarrow{\tau'} \Omega^{*}}$. 
    
    \begin{description}
        
        \item[II]
        
        We need to show $\src{A^{\dagger}} \relsa \src{A^{*}}$ and $\src{B^{\dagger}} \relsa \src{B^{*}}$. The rest follows from $\src{\Omega} \relsa \src{\Omega'}$.
        
        Note that, $\src{A^{\dagger}} = \src{A}[\pc \mapsto  n' : \safeta]$ and $\src{A^{*}} = \src{A'}[\pc \mapsto n' : \safeta]$. Thus, $\src{A^{\dagger}} \relsa \src{A^{*}}$.

        Here $\src{B^{\dagger}} = \src{OB{B} \cdot 0 \mapsto A(\pc) + 1}$ and $\src{B^{*}} = \src{OB{B} \cdot 0 \mapsto A'(\pc) + 1}$.
        
        Since $\src{A(\pc)} = \src{A'(\pc)}$ we have $\src{B^{\dagger}} \relsa \src{B^{*}}$.

        \item[III]
        
        We know that $\src{\tau} = \src{\empTr}$ and $\src{\tau'} = \src{\empTr}$. Thus, $\src{\tau} = \src{\tau'}$ trivially holds.
        
    \end{description}

    \item[\Cref{tr:call}]
    
    Then $\src{\Omega(\pc)} = \pcall{f'}$, $\mathcal{F}(f') = n $, $f' \in C.\mtt{funs}$, $\src{A(\pc) = n_1 : \taint_1}$, $\ffun{\src{n_1}} = f$
    and $\src{C}.\mtt{intfs}\vdash\src{f',f} : \src{in}$.

    Because of $\src{\Omega(\pc)} = \src{\Omega'(\pc)}$ and the fact that the $\pc$ is always tainted safe, we get $\src{A'(\pc) = n' : \taint_1'}$ with $\src{n_1} = \src{n'}$ and $\src{\taint_1} = \src{\taint_1'}$

    Thus, we can apply \Cref{tr:call} to derive the step $\src{\Omega' \nsarrow{\tau'} \Omega^{*}}$. 
    
    \begin{description}
        
        \item[II]
        
        This case is analogous to \Cref{tr:call-internal} above.

        \item[III]
        
        We know that $\src{\tau} = \pcall{f'}?$ and $\src{\tau'} = \pcall{f}?$. Thus, $\src{\tau} = \src{\tau'}$ trivially holds.
        
    \end{description}
    
    \item[\Cref{tr:callback}]
    Analogous to \Cref{tr:call} above.

    \item[\Cref{tr:ret-internal}]
    
    Then $\src{\Omega(\pc)} = \pret$, $\src{\OB{B} = \OB{B_1} \cdot 0 \mapsto l : \taint}$, $\ffun{l} = f'$, $f' \in C.\mtt{funs}$, $\src{A(\pc) = n_1 : \taint_1}$, $\ffun{\src{n_1}} = f$and $\src{C}.\mtt{intfs}\vdash\src{f',f} : \src{internal}$.

    Because of $\src{\Omega(\pc)} = \src{\Omega'(\pc)}$ and the fact that the $\pc$ register is always tainted safe, we get $\src{A'(\pc) = n' : \taint_1'}$ with $\src{n_1} = \src{n'}$ and $\src{\taint_1} = \src{\taint_1'}$
    
    Let $\src{\OB{B'} = \OB{B_1'} \cdot 0 \mapsto l' : \taint'}$.
    Additionally, because of $\safe{\tau}$ and $\OB{B} \relsa \OB{B'}$, we know $\taint = \safeta$ and we can conclude that $l = l'$, $\taint = \taint'$.

    Thus, we can apply \Cref{tr:ret-internal} to derive the step $\src{\Omega' \nsarrow{\tau'} \Omega^{*}}$. 
    
    \begin{description}
        \item[II]
        
        We need to show $\src{A^{\dagger}} \relsa \src{A^{*}}$ and $\src{B^{\dagger}} \relsa \src{B^{*}}$. The rest follows from $\src{\Omega} \relsa \src{\Omega'}$.
        
        Note that, $\src{A^{\dagger}} = \src{A[\pc \mapsto  l : \taint]}$ and $\src{A^{*}} = \src{A'[\pc \mapsto l' : \taint']}$.
        
        Since $\src{l} = \src{l'}$, $\src{\taint} = \src{\taint'}$, we can conclude that $\src{A^{\dagger}} \relsa \src{A^{*}}$.

        Here $\src{B^{\dagger}} = \src{\OB{B_1}}$ and $\src{B^{*}} = \src{\OB{B_1'}}$.
        
        Since $\src{\OB{B}} \relsa \src{\OB{B'}}$, we now trivially have $\src{B^{\dagger}} \relsa \src{B^{*}}$.
        
        \item[III]
        
        We know that $\src{\tau} = \src{\empTr}$ and $\src{\tau'} = \src{\empTr}$. Thus, $\src{\tau} = \src{\tau'}$ trivially holds.
        
    \end{description}

    \item[\Cref{tr:ret}]

    Then $\src{\Omega(\pc)} = \pret$, $\src{\OB{B} = \OB{B_1} \cdot 0 \mapsto l : \taint}$, $\ffun{l} = f'$, $f' \in C.\mtt{funs}$, $\src{A(\pc) = n_1 : \taint_1}$, $\ffun{n_1} = f$and $\src{C}.\mtt{intfs}\vdash\src{f',f} : \src{in}$.

    Because of $\src{\Omega(\pc)} = \src{\Omega'(\pc)}$ and the fact that the $\pc$ is always tainted safe, we get $\src{A'(\pc) = n' : \taint_1'}$ with $\src{n_1} = \src{n'}$ and $\src{\taint_1} = \src{\taint_1'}$
    
    Let $\src{\OB{B'}} = \src{\OB{B_1'} \cdot 0 \mapsto l' : \taint'}$.
    Additionally, because of $\safe{\src{\tau}}$ and $\src{\OB{B}} \relsa \src{\OB{B'}}$, we know $\src{\taint} = \src{\safeta}$ and we can conclude that $\src{l} = \src{l'}$, $\src{\taint} = \src{\taint'}$.

    Thus, we can apply \Cref{tr:ret} to derive the step $\src{\Omega' \nsarrow{\tau'} \Omega^{*}}$. 
    
    \begin{description}
        \item[II]
        
        The case is analogous to \Cref{tr:ret-internal} above.
        
        \item[III]
        
        We know that $\src{\tau} = \pret^{\taint}$ and $\src{\tau'} = \pret^{\taint'}$. 
        
        Thus, $\src{\tau} = \src{\tau'}$ follows from $\src{\taint} = \src{\taint'}$.
        
    \end{description}

    \item[\Cref{tr:retback}]
    Analogous to \Cref{tr:ret} above.
    
\end{description}
\end{proof}

\begin{lemma}[Safe-Equivalent configurations produce safe-equivalent expression evaluations-1]\label{lemma:safe-equi-expression1}
If 
\begin{enumerate}
    \item $\src{\exprEval{A}{e}{v : \taint}}$ and
    \item $\src{\exprEval{A'}{e}{v' : \taint'}}$ and
    \item $\src{A} \relsa \src{A'}$
\end{enumerate}
Then
\begin{enumerate}[label=\Roman*]
    \item $\src{v} \relsa \src{v'}$ and
    \item $\src{\taint} \relsa \src{\taint'}$
\end{enumerate}

\end{lemma}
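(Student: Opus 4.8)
The plan is to prove the lemma by structural induction on the expression $e$, exploiting the fact that the combined judgment $\exprEval{A}{e}{v:\taint}$ decomposes, via \Cref{tr:Combine}, into a value derivation $\exprEval{A^v}{e}{v}$ and a taint derivation $\exprEval{A^t}{e}{\taint}$ (and symmetrically for $A'$). Since both the value and the taint evaluation relations are syntax-directed and deterministic in the register file, the derivations for $A$ and $A'$ have the same shape, so I can analyse them in lockstep by case on the top-level constructor of $e$. Throughout I unfold the definition of value safe-equivalence from \Cref{def:safeeq}: the goal $v:\taint \relsa v':\taint'$ amounts to (a) $\taint = \taint'$, and (b) $v = v'$ under the additional assumption $\taint = \safeta$. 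Hence each case reduces to establishing these two facts.

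First I would handle the base cases. If $e$ is a constant $v$, then by the value-rule and taint-rule for literals both register files evaluate $e$ to the value $v$ with taint $\safeta$, so $\taint = \taint' = \safeta$ and $v = v'$ hold trivially. If $e$ is a register $x$, the lookup rules give $v:\taint = A(x)$ and $v':\taint' = A'(x)$, and the conclusion is then exactly the pointwise instance $A(x) \relsa A'(x)$ supplied by the hypothesis $A \relsa A'$.

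For the inductive cases I would invoke the induction hypothesis on the immediate subexpressions. For a unary operator $\ominus e_1$, the taint is propagated unchanged, so $\taint = \taint_1$ and $\taint' = \taint_1'$, while the values are $\ominus n_1$ and $\ominus n_1'$; the IH gives $\taint_1 = \taint_1'$ (hence $\taint = \taint'$) and, when $\taint = \safeta$, $n_1 = n_1'$, so $\ominus n_1 = \ominus n_1'$. For a binary operator $e_1 \otimes e_2$, the taints combine as $\taint = \taint_1 \sqcup \taint_2$ and $\taint' = \taint_1' \sqcup \taint_2'$; the IH on each subexpression yields $\taint_1 = \taint_1'$ and $\taint_2 = \taint_2'$, whence $\taint = \taint'$. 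The one genuine ingredient is the lattice fact that $\taint_1 \sqcup \taint_2 = \safeta$ forces $\taint_1 = \taint_2 = \safeta$, which holds because $\safeta$ is the bottom of the two-point taint lattice (the same property underlying $\taint \sqcap \safeta = \safeta$). Combining this with the value parts of the two induction hypotheses gives $n_1 = n_1'$ and $n_2 = n_2'$, hence $n_1 \otimes n_2 = n_1' \otimes n_2'$.

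I expect no serious obstacle: the argument is a routine syntax-directed induction, and the only non-bookkeeping step is the observation about the join of taints in the binary case, where unsafety of either operand propagates but safety of the whole forces safety (and thus value-equality) of both operands. The mild care needed is in justifying that the value and taint derivations for $A$ and $A'$ proceed in lockstep; this is immediate once one notes that each evaluation relation is a deterministic function of the register file, so the shape of every derivation is fixed by $e$ alone, and the decomposition through \Cref{tr:Combine} applies uniformly to both sides.
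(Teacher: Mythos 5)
Your proof is correct, and the substance of your induction coincides with what the paper actually proves --- but the paper packages it differently. The paper's own proof of this lemma is a two-step corollary: it first invokes \Cref{lemma:safe-equi-expression2}, an \emph{existence}-form statement (from $\src{\exprEval{A}{e}{v : \taint}}$ and $\src{A} \relsa \src{A'}$ one can \emph{construct} an evaluation $\src{\exprEval{A'}{e}{v'' : \taint''}}$ with $\src{v:\taint} \relsa \src{v'':\taint''}$), and then uses determinism of expression evaluation once, at the top level, to identify that constructed evaluation with the given one for $\src{A'}$. All of the case analysis you describe --- literals, register lookup via the pointwise hypothesis, unary propagation, and the binary case with the observation that $\src{\taint_1 \lub \taint_2} = \src{\safeta}$ forces both operand taints to be $\src{\safeta}$ (the paper's \Cref{lemma:bounds-preserve-relation}) --- lives inside the proof of that auxiliary lemma, and matches your cases exactly. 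What you do differently is run the induction directly over the \emph{two given} derivations in lockstep, appealing to determinism (or rather syntax-directedness) at every node instead of once at the end. Both are sound; the paper's factoring has the practical advantage that the existence-form lemma is the one actually needed in the downstream proof of \Cref{lemma:non-spec-preserves-safe} (where only one evaluation is given and the second must be produced), so it earns its keep as a separate statement, whereas your inlined version would have to be accompanied by that existence lemma anyway. Your one flagged "mild care" point --- that the two derivations have the same shape --- is indeed the place where determinism/syntax-directedness is doing the work, and it is exactly the step the paper discharges with its single appeal to determinism.
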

\begin{proof}

Applying \Cref{lemma:safe-equi-expression2} on $\src{\exprEval{A}{e}{v : \taint}}$ together with $\src{A} \relsa \src{A'}$ we get $\src{\exprEval{A'}{e}{v'' : \taint''}}$ and $\src{v} \relsa \src{v''}$ and $\src{\taint} \relsa \src{\taint''}$

Since $\src{\exprEval{}{}{}}$ is deterministic, we have $\src{v'} = \src{v''}$ and $\src{\taint'} = \src{\taint''}$.

Thus, $\src{v} \relsa \src{v'}$ and $\src{\taint} \relsa \src{\taint'}$
\end{proof}

\begin{lemma}[Safe-Equivalent configurations produce safe-equivalent expression evaluations-2]\label{lemma:safe-equi-expression2}
If 
\begin{enumerate}
    \item $\src{\exprEval{A}{e}{v : \taint}}$ and
    \item $\src{A} \relsa \src{A'}$
\end{enumerate}

Then there exists $v'$ and $\taint'$ such that
\begin{enumerate}[label=\Roman*]
    \item $\src{\exprEval{A'}{e}{v' : \taint'}}$ and
    \item $\src{v : \taint} \relsa \src{v' : \taint'}$ and
\end{enumerate}
\end{lemma}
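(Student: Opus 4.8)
The plan is to prove \Cref{lemma:safe-equi-expression2} by structural induction on the expression $\src{e}$. The key observation is that the combined judgement $\src{\exprEval{A}{e}{v : \taint}}$ decomposes, via inversion on \Cref{tr:Combine}, into a value evaluation $\src{\exprEval{A^{v}}{e}{v}}$ and an independent taint evaluation $\src{\exprEval{A^{\taint}}{e}{\taint}}$ under the value and taint projections of $\src{A}$. So in each case I would separately reconstruct both the value and the taint under $\src{A'}$ and then verify the relation $\src{v : \taint} \relsa \src{v' : \taint'}$, which by its definition requires $\src{\taint} = \src{\taint'}$ together with $\src{v} = \src{v'}$ whenever $\src{\taint} = \src{\safeta}$.

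For the base cases, first I would treat a constant $\src{e} = \src{n}$: here \Cref{tr:E-val} and \Cref{tr:T-val} force $\src{v} = \src{n}$ and $\src{\taint} = \src{\safeta}$, and the very same rules apply under $\src{A'}$, yielding $\src{v'} = \src{n}$ and $\src{\taint'} = \src{\safeta}$, so the relation holds trivially. Second, for a register $\src{e} = \src{x}$, \Cref{tr:E-lookup} and \Cref{tr:T-lookup} give $\src{v : \taint} = \src{A(x)}$ and analogously $\src{v' : \taint'} = \src{A'(x)}$; thus $\src{v : \taint} \relsa \src{v' : \taint'}$ is exactly the instance $\src{A(x)} \relsa \src{A'(x)}$ of the hypothesis $\src{A} \relsa \src{A'}$.

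The inductive cases apply the induction hypothesis to subexpressions. For the unary operator $\src{e} = \src{\ominus e_1}$, \Cref{tr:E-unop} and \Cref{tr:T-unnop} produce $\src{v} = \src{\ominus n_1}$ and $\src{\taint} = \src{\taint_1}$; the IH on $\src{e_1}$ supplies $\src{\exprEval{A'}{e_1}{n_1' : \taint_1'}}$ with $\src{n_1 : \taint_1} \relsa \src{n_1' : \taint_1'}$, and I would rebuild the evaluation of $\src{\ominus e_1}$ under $\src{A'}$, noting that the taint is copied unchanged so $\src{\taint} = \src{\taint_1} = \src{\taint_1'} = \src{\taint'}$, and that $\src{n_1} = \src{n_1'}$ (when the shared taint is $\src{\safeta}$) transfers through $\src{\ominus}$ to give $\src{v} = \src{v'}$. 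The binary case $\src{e} = \src{e_1 \otimes e_2}$ proceeds the same way, invoking the IH on both $\src{e_1}$ and $\src{e_2}$ and recombining via \Cref{tr:E-binop} and \Cref{tr:T-binop}.

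The main obstacle is the taint reasoning in the binary case. There the resulting taint is the join $\src{\taint} = \src{\taint_1 \lub \taint_2}$, so to conclude $\src{v} = \src{v'}$ under the assumption $\src{\taint} = \src{\safeta}$ I must argue that $\src{\taint_1 \lub \taint_2} = \src{\safeta}$ forces both $\src{\taint_1} = \src{\safeta}$ and $\src{\taint_2} = \src{\safeta}$; this relies on $\src{\safeta}$ being the bottom element of the two-point taint lattice, so that a join is safe only when both arguments are safe. Given this, the two induction hypotheses yield $\src{n_1} = \src{n_1'}$ and $\src{n_2} = \src{n_2'}$, whence $\src{v} = \src{n_1 \otimes n_2} = \src{n_1' \otimes n_2'} = \src{v'}$, closing the case. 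Taint equality $\src{\taint} = \src{\taint'}$ follows directly from $\src{\taint_1} = \src{\taint_1'}$ and $\src{\taint_2} = \src{\taint_2'}$, since the resulting taint is a function of the operand taints alone. The existence of the evaluation under $\src{A'}$ is immediate in every case because each derivation built above witnesses the conclusion judgement, and this existence is precisely what \Cref{lemma:safe-equi-expression1} later combines with determinism of $\src{\exprEval{}{}{}}$.
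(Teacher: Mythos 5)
Your proposal is correct and follows essentially the same route as the paper: structural induction over the expression (equivalently, over the syntax-directed derivation), with the register case discharged directly by $\src{A} \relsa \src{A'}$ and the binary case resting on the observation that $\src{\taint_1 \lub \taint_2} = \src{\safeta}$ forces both operand taints to be safe. The only cosmetic difference is that the paper factors the join-preservation argument into a separate lemma (\Cref{lemma:bounds-preserve-relation}) and phrases the binary case as a case split on the joined taint, whereas you argue it inline.
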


\begin{proof}
We proceed by structural induction on $\src{\exprEval{A}{e}{v : \taint}}$:

\begin{description}
    \item[\Cref{tr:E-val}]
    Then by definition $\src{e} = \src{v}$ and $\src{\taint} = \src{\safeta}$. Pick $\src{v'} = \src{v}$ and $\src{\taint'} = \src{\safeta}$. 
    
    We can now derive $\src{\exprEval{A'}{e}{v' : \taint'}}$ by \Cref{tr:E-val}.
    
    Now, $\src{v : \taint} \relsa \src{v' : \taint'}$ trivially holds.
    
    \item[\Cref{tr:E-lookup}]
    Then $\src{e} = \src{r}$ where $\src{r} \in \Reg$ and $\src{A(r) = v : \taint}$.
    
    Since $\src{A} \relsa \src{A'}$ we get $\src{A'(r) = v' : \taint'}$ with $\src{v : \taint} \relsa \src{v' : \taint'}$.

    \item[\Cref{tr:E-binop}]
    
    Then $\src{e} = \src{e_1} \otimes \src{e_2}$, $\src{v} =\src{v_1} \otimes \src{v_2'}$ and $\src{\taint} = \src{\taint_1 \lub \taint_2}$. 
    
    From the IH we get $\src{\exprEval{A'}{e_1}{v_1' : \taint_1'}}$ and $\src{\exprEval{A'}{e_2}{v_2' : \taint_2'}}$ such that $\src{v_1 : \taint_1} \relsa \src{v_1' : \taint_1'}$ and $\src{v_2 : \taint_2} \relsa \src{v_2' : \taint_2'}$.
    
    Pick $\src{v'} = \src{v_1'} \otimes \src{v_2'}$ and $\src{\taint'} = \src{\taint_1'} \lub \src{\taint_2'}$.
    We can now derive $\src{\exprEval{A'}{e}{v' : \taint'}}$ by \Cref{tr:E-binop} using $\src{\exprEval{A'}{e_1}{v_1 : \taint_1}}$ and $\src{\exprEval{A'}{e_2}{v_2 : \taint_2}}$.

    Applying \Thmref{lemma:bounds-preserve-relation} we get that $\src{\taint_1 \lub \taint_2} \relsa \src{\taint_1' \lub \taint_2'}$ which means
    $\src{\taint_1 \lub \taint_2} = \src{\taint_1' \lub \taint_2'}$.
    
    We continue by case distinction on $\src{\taint_1 \lub \taint_2}$: 
    \begin{description}
        \item[$\src{\taint_1 \lub \taint_2} = \src{\safeta}$]

        Thus, $\src{\taint_1} = \src{\safeta}$, $\src{\taint_2} = \src{\safeta}$, $\src{\taint_1'} = \src{\safeta}$ and $\src{\taint_2'} = \src{\safeta}$.
        From $\src{v_1 : \taint_1} \relsa \src{v_1' : \taint_1'}$ and $\src{v_2 : \taint_2} \relsa \src{v_2' : \taint_2'}$ we get $\src{v1} = \src{v_1'}$ and $\src{v_2} = \src{v_2'}$. 
    
        Now, we can follow that $\src{v1} \otimes \src{v_2} = \src{v_1'} \otimes \src{v_2'}$.
        
        Thus, we have $\src{v1} \otimes \src{v_2} \src{: \taint_1 \lub \taint_2} \relsa  \src{v_1'} \otimes \src{v_2' : \taint_1' \lub \taint_2'}$ and after substitution arrive at $\src{v : \taint} \relsa \src{v' : \taint'}$.
    
        \item[$\src{\taint_1 \lub \taint_2} = \src{\unta}$]
        
        Then, $\src{\taint_1' \lub \taint_2'} = \src{\unta}$ as well.
        
        Thus, we can derive $\src{v : \taint_1 \lub \taint_2} \relsa \src{v' : \taint_1' \lub \taint_2'}$
        
        Substituting $\src{\taint}$ and $\src{\taint'}$ we get $\src{v : \taint} \relsa \src{v' : \taint'}$.
    \end{description}

    \item[\Cref{tr:E-unop}]
    
    Then $\src{e} = \ominus \src{e_1}$ and from the IH we get $\src{\exprEval{A'}{e_1}{v_1' : \taint_1'}}$ with $\src{v_1 : \taint_1} \relsa \src{v_1' : \taint_1'}$.
    
    We can now derive $\src{\exprEval{A'}{e}{v' : \taint'}}$ by \Cref{tr:E-unop} using $\src{\exprEval{A'}{e_1}{v_1' : \taint_1'}}$.
    
    Since, $\src{\taint} = \src{\taint_1}$ and $\src{\taint'} = \src{\taint_1'}$ we know that $\src{\taint} = \src{\taint'}$.
    
    We do a case distinction on $\taint$: 
    \begin{description}
        \item[$\src{\taint} = \src{\safeta}$]
        
        Similar to corresponding case in \Cref{tr:E-binop} above.
        \item[$\src{\taint} = \src{\unta}$]
        
        Similar to corresponding case in \Cref{tr:E-binop} above.
    \end{description}
\end{description}
\end{proof}

\begin{lemma}[$\lub$ and $\glb$ preserve Safe-Equivalence relation $\relsa$]\label{lemma:bounds-preserve-relation}
If 
\begin{enumerate}
    \item $\src{\taint_1} \relsa \src{\taint_1'}$ and 
    \item $\src{\taint_2} \relsa \src{\taint_2'}$
\end{enumerate}
Then 
\begin{enumerate}[label=\Roman*]
    \item $\src{\taint_1 \lub \taint_2} \relsa \src{\taint_1' \lub \taint_2'}$ and
    \item $\src{\taint_1 \glb \taint_2} \relsa \src{\taint_1' \glb \taint_2'}$ 
\end{enumerate}
\end{lemma}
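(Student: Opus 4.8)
The plan is to reduce the statement to the congruence of equality under the two lattice operations. First I would pin down what $\relsa$ means when it is applied to bare taints rather than to tainted values. Inspecting \Cref{def:safeeq}, the relation on a tainted value $\src{v : \taint} \relsa \src{v' : \taint'}$ demands $\src{\taint} = \src{\taint'}$ together with a constraint on the values; once the value component is stripped away, the relation on taints alone therefore collapses to plain equality $\src{\taint} = \src{\taint'}$. This is exactly the reading already relied upon in the proof of \Cref{lemma:safe-equi-expression2}, where the conclusion $\src{\taint_1 \lub \taint_2} \relsa \src{\taint_1' \lub \taint_2'}$ is immediately rewritten as an equality.

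With that observation in hand, the two hypotheses become $\src{\taint_1} = \src{\taint_1'}$ and $\src{\taint_2} = \src{\taint_2'}$. Since $\lub$ (join) and $\glb$ (meet) are total binary operations on the two-point taint lattice $\{\src{\safeta}, \src{\unta}\}$, they are in particular functions, so equal arguments yield equal results: substituting the hypotheses gives $\src{\taint_1 \lub \taint_2} = \src{\taint_1' \lub \taint_2'}$ and $\src{\taint_1 \glb \taint_2} = \src{\taint_1' \glb \taint_2'}$. Re-reading each equality back as the taint relation $\relsa$ discharges conclusions I and II respectively. If one prefers not to appeal to the equality reading, the same two conclusions follow from a finite case split over the four combinations $(\src{\taint_1}, \src{\taint_2}) \in \{\src{\safeta}, \src{\unta}\}^2$, evaluating $\lub$ and $\glb$ in each case; but this is unnecessary once $\relsa$ on taints is recognised as equality.

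There is no real obstacle here: the claim is a pure congruence fact about a two-element lattice. The only step that requires any care is the first one, namely making explicit that $\relsa$ restricted to taints coincides with equality. This is implicit in \Cref{def:safeeq} but is used throughout the surrounding expression-evaluation lemmas, so I would state it at the outset and then let substitution do the rest.
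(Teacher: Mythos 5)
Your proposal is correct and rests on the same underlying fact as the paper's proof: that $\relsa$ restricted to taints is just equality, so the conclusion follows from the congruence of $\lub$ and $\glb$ with equality. The paper phrases this as an explicit case distinction on whether $\src{\taint_1 \lub \taint_2}$ is $\src{\safeta}$ or $\src{\unta}$ (and dually for $\glb$), but each branch is discharged by exactly the equality substitution you make directly, so the two arguments are essentially the same.
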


\begin{proof}
\begin{description}
    \item[I]
    We do a case distinction on $\src{\taint_1 \lub \taint_2}$:
    \begin{description}
        \item[$\src{\taint_1 \lub \taint_2} = \src{\safeta}$]
        
        Then by definition of $\lub$ we know that $\src{\taint_1} = \src{\safeta}$ and $\src{\taint_2} = \src{\safeta}$.
        
        From our assumptions we can derive $\src{\taint_1'} = \src{\safeta}$ and $\src{\taint_2'} = \src{\safeta}$ as well.
        
        Thus, $\src{\taint_1' \lub \taint_2'} = \src{\safeta}$ and we have $\src{\taint_1 \lub \taint_2} \relsa \src{\taint_1' \lub \taint_2'}$ as needed.
        
        \item[$\src{\taint_1 \lub \taint_2} = \src{\unta}$]
        
        Then, either $\src{\taint_1} = \src{\unta}$ or $\src{\taint_2} = \src{\unta}$.
        
        W.l.o.g assume that $\src{\taint_1} = \src{\unta}$. By assumption, we get $\src{\taint_1'} = \src{\unta}$ as well. 
        
        Thus, $\src{\taint_1' \lub \taint_2'} = \src{\unta}$ independent of the value of $\src{\taint_2'}$ and we have $\src{\taint_1 \lub \taint_2} \relsa \src{\taint_1' \lub \taint_2'}$ as needed.
        
    \end{description}
    
    \item[II]
    
    The proof of this case is similar to the case for $\lub$.
\end{description}
\end{proof} 
 \section{$\Sv$: SS overapproximates SNI}

The proof structure is as follows:

\begin{center}
	\scalebox{0.85}{
    \begin{forest}
    for tree={
        align = left,
        font = \footnotesize,
        forked edge,
    }
        [\Thmref{thm:v4:ss-impl-sni}
            [\Thmref{lemma:v4-bigspecarrow-preserves-safe}
                [\Thmref{lemma:v4-specarrow-preserves-safe}
                    [\Thmref{lemma:non-spec-preserves-safe}]
                ]
            ]
            [\Thmref{lemma:v4-low-equivalent-programs-low-equivalent-states}]
        ]
    \end{forest}
    }
\end{center}

\begin{insight}
We use a small trick when doing inversion on $\trgS{\specarrowS{}}$. To compute a step in that semantics, we need to do a step in the taint semantics and in the value semantics. These steps are later combined using \Cref{tr:v4-combine}. Since this step is always necessary to derive a step, we will omit it from the inversion chain. Similarly for \Cref{tr:v4-t-context} and \Cref{tr:v4-context} that just unfold the state and are also always necessary. This just makes our life easier.
\end{insight}

\begin{theorem}[$\Sv$: \sstext{} implies SNI]\label{thm:v4:ss-impl-sni}
If
\begin{enumerate}
    \item $\contractSpec{}{\Sv} \vdash \trgS{P} : \ss $
\end{enumerate}
Then
\begin{enumerate}[label=\Roman*]
    \item $\contractSpec{}{\Sv} \vdash \trgS{P} : \sni $
\end{enumerate}

\end{theorem}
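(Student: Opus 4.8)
The plan is to prove \Cref{thm:v4:ss-impl-sni} by a lockstep argument driven by the safe-equivalence relation $\approx$ of \Cref{def:safeeq}, exploiting that Speculative Safety forces \emph{every} emitted action to carry the tag $\safeta$. First I would unfold \Cref{def:sni}: fix an arbitrary $\trgS{P'}$ with $\trgS{P'} \loweq \trgS{P}$ together with terminating traces $\trgS{\tra{_1}} \in \contractSpec{}{\Sv}(\SInit{\trgS{P}})$ and $\trgS{\tra{_2}} \in \contractSpec{}{\Sv}(\SInit{\trgS{P'}})$ satisfying $\nspecProject{\trgS{\tra{_1}}} = \nspecProject{\trgS{\tra{_2}}}$; the goal becomes $\trgS{\tra{_1}} = \trgS{\tra{_2}}$. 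Two facts are assembled up front: (i) low-equivalent whole programs begin in safe-equivalent speculative states, $\SInit{\trgS{P}} \approx \SInit{\trgS{P'}}$, which is exactly the content of the initial-state lemma (\Cref{lemma:v4-low-equivalent-programs-low-equivalent-states}) sitting at the right leaf of the proof tree; and (ii) the hypothesis $\contractSpec{}{\Sv} \vdash \trgS{P} : \ss$ gives $\safe{\trgS{\tra{_1}}}$.

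With these in hand I would transport the initial safe-equivalence across the entire run. The central device is the big-step preservation lemma \Cref{lemma:v4-bigspecarrow-preserves-safe}: if $\SigmaS \approx \SigmaSDSP{}$, $\SigmaS \bigspecarrowS{\trgS{\tra{}}} \SigmaSFin$, and $\safe{\trgS{\tra{}}}$, then $\SigmaSDSP{} \bigspecarrowS{\trgS{\tra{}}} \SigmaSDSP{\mi{fin}}$ with $\SigmaSFin \approx \SigmaSDSP{\mi{fin}}$. Instantiating it with the initial safe-equivalence and $\safe{\trgS{\tra{_1}}}$ shows that $\SInit{\trgS{P'}}$ emits exactly $\trgS{\tra{_1}}$ and reaches a state $\approx$-related to the endpoint of $\trgS{P}$'s run; since $\approx$ keeps the program $p$, the program counter, and the window identical, and the endpoint of $\trgS{P}$ is final, the endpoint of $\trgS{P'}$ is final too. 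Hence $\trgS{\tra{_1}}$ is a \emph{terminating} trace of $\trgS{P'}$. As the $\Sv$ semantics is deterministic (each configuration admits a unique applicable rule, so each whole program has at most one terminating trace), it follows that $\trgS{\tra{_2}} = \trgS{\tra{_1}}$, which discharges SNI. Notice that this route never consumes the hypothesis $\nspecProject{\trgS{\tra{_1}}} = \nspecProject{\trgS{\tra{_2}}}$: safety is strong enough to pin down the full traces directly.

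The big-step lemma is obtained by induction on the length of $\bigspecarrowS{}$ from a single-step preservation lemma \Cref{lemma:v4-specarrow-preserves-safe}, which for the purely non-speculative transitions delegates to the already-established \Cref{lemma:non-spec-preserves-safe}. The main obstacle, and the genuinely $\Sv$-specific work, is the single-step lemma over the \emph{stack} of speculative instances, and in particular the store-bypass rules. Two observations make it go through. First, the taint of an emitted action is fixed by the code and by the \emph{taints} (not the values) recorded in registers and memory, both of which $\approx$ keeps identical on the two sides; therefore an action safe on one side is safe on the other, and by \Cref{lemma:safe-equi-expression1} a safe action evaluates its address/condition operands to equal values, so the two sides emit the \emph{same} action. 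Second, the store-speculation rule \Cref{tr:v4-skip} must be shown to preserve $\approx$ when it pushes a fresh instance: on both sides the bypassed-store instance advances $\pc$ identically and, via the taint rule \Cref{tr:v4-t-skip}, is tagged with $\taintpc = \unta$, so the two stacks stay of equal length with corresponding instances related, while \Cref{tr:v4-rollback} simply pops matching instances. Threading $\approx$ through the separate value- and taint-tracking derivations that are recombined by \Cref{tr:v4-combine} (together with the context rules that merely descend into the stack), while keeping this stack-wise invariant intact, is where the bulk of the case analysis lies.
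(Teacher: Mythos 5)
Your proof is correct and rests on exactly the same lemma tree as the paper's: safe-equivalence of initial states for low-equivalent programs, big-step preservation of $\approx$ along safe traces, and the single-step preservation lemma handling the store-speculation and rollback rules. The only difference is presentational --- the paper argues by contradiction at the first point of divergence (using the single-step lemma once more there), whereas you run the big-step lemma through the entire trace and close with determinism plus the (correctly discharged) observation that finality transfers across $\approx$; both versions, as you note, never actually consume the equal-non-speculative-projection hypothesis.
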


\begin{proof}
Let $\trgS{P}$ be an arbitrary program such that $\contractSpec{}{\Sv} \vdash \trgS{P} : \ss$.

The proof proceeds by contradiction. 

Assume that $\contractSpec{}{\Sv} \nvdash \trgS{P} : \sni$. That is, there is another program $\trgS{P'}$ and traces $\trgS{\tra{_1}} \in \trgS{ \contractSpec{}{\Sv}(\SInit{P}) }$ and  $\trgS{\tra{_2}} \in \trgS{ \contractSpec{}{\Sv}(\SInit{P'}) }$ such that $\trgS{P} \loweq \trgS{P'}$, $\nspecProject{\trgS{\tra{_1}}} = \nspecProject{\trgS{\tra{_2}}}$, and $\trgS{\tra{_1}} \neq \trgS{\tra{_2}}$.

Since $\trgS{\tra{_1}} \in \trgS{ \contractSpec{}{\Sv}(\SInit{P}})$ we have $\trgS{\amTracevS{P}{\tauStackT}}$ and by \Cref{tr:v4-trace} we know that there exists $\SigmaSDS{F}$ such that $\trgS{\initFuncS{\trgS{(P)}} \bigspecarrowS{\tra{_1}}\ \SigmaSDS{F}}$. Similarly, we get  $\trgS{\initFuncS{(\trgS{P'})} \bigspecarrowS{\tra{_2}}\ \SigmaSDSP{F}}$.

Combined with the fact that $\trgS{\tra{_1}} \neq \trgS{\tra{_2}}$, it follows that there exists speculative states $\SigmaSt{\dagger}, \SigmaSDS{}, \SigmaSt{'}, \SigmaSt{''}$ and sequences of observations $\trgS{\tauStack, \tauStack_{end}, \tauStack'_{end}}, \trgS{\aca{^\taint}, {\tau'}^{\taint'} } $ such that $\trgS{\aca{^{\taint}}} \neq \trgS{ {\tau'}^{\taint'}}$  and:

\begin{align*}
\trgS{\initFuncS{(\trgS{P})}} & \trgS{\bigspecarrowS{\tauStack}} \SigmaSt{\dagger} \specarrowS{\aca{^{\taint}}} \SigmaSt{\dagger\dagger} \bigspecarrowS{\tauStack_{end}} \trgS{\SigmaSDS{F}} 
\\
\trgS{\initFuncS{(\trgS{P'})}} & \bigspecarrowS{\tauStack} \SigmaSt{'} \specarrowS{{\tau'}^{\taint'}} \SigmaSt{''} \bigspecarrowS{\tauStack'_{end}} \trgS{\SigmaSDSP{F}}
\end{align*}

From $\contractSpec{}{\Sv} \vdash \trgS{P} : \ss $ we get that $\safe{\trgS{\tra{_1}}}$.

From \Thmref{lemma:v4-low-equivalent-programs-low-equivalent-states} we get ${\initFuncS{(\trgS{P})}} \relsa {\initFuncS{(\trgS{P'})}}$.

Using \Thmref{lemma:v4-bigspecarrow-preserves-safe} on $\initFuncS{(\trgS{P})} \trgS{\bigspecarrowS{\tauStack}} \SigmaSt{\dagger}$ and the fact $\initFuncS{(\trgS{P})} \relsa {\initFuncS{(\trgS{P'})}}$ we get $\trgS{\initFuncS{(\trgS{P'})}} \bigspecarrowS{\tauStack} \SigmaSt{'''} $ and  $\SigmaSt{\dagger} \relsa \SigmaSt{'''}$

We now need to show that $\SigmaSt{'''} = \SigmaSt{'}$.
For that, we use the determinism of $\specarrowS{}$ on $\trgS{\initFuncS{(\trgS{P'})}} \bigspecarrowS{\tauStack} \SigmaSt{'''} $ and $\trgS{\initFuncS{(\trgS{P'})}} \bigspecarrowS{\tauStack} \SigmaSt{'}$ which directly yields $\SigmaSt{'''} = \SigmaSt{'}$.

Let us collect everything we know:
\begin{enumerate}[label=(\alph*)]
    \item $\safe{\trgS{\tra{_1}}}$
    \item ${\initFuncS{(\trgS{P})}} \relsa {\initFuncS{(\trgS{P'})}}$
    \item $\SigmaSt{\dagger} \relsa \SigmaSt{'}$
    \item $\nspecProject{\trgS{\tra{_1}}} = \nspecProject{\trgS{\tra{_2}}}$
    \item $\trgS{\aca{^{\taint}}} \neq \trgS{ {\tau'}^{\taint'}}$
\end{enumerate}

We apply \Thmref{lemma:v4-specarrow-preserves-safe} on $\SigmaSt{\dagger} \specarrowS{\aca{^{\taint}}} \SigmaSt{\dagger\dagger}$ with $\SigmaSt{\dagger} \relsa \SigmaSt{'}$ and get $\SigmaSt{'} \specarrowS{\aca{^\taint}} \SigmaSt{''''}$ with $\SigmaSt{\dagger\dagger} \relsa \SigmaSt{''''}$.

Applying the determinism of $\specarrowS{}$  on  $\SigmaSt{'} \specarrowS{\aca{^\taint}} \SigmaSt{''''}$ and $ \SigmaSt{'} \specarrowS{{\tau'}^{\taint'}} \SigmaSt{''}$ we get $\SigmaSt{''''} = \SigmaSt{''}$ and $\trgS{\aca{^\taint}} = \trgS{{\tau'}^{\taint'}}$

Now we have $\trgS{{\tau'}^{\taint'}} = \trgS{\aca{^\taint}}$ contradicting our assumption $\trgS{\aca{^\taint}} \neq \trgS{ {\tau'}^{\taint'}}$.

\end{proof}

\begin{lemma}[$\Sv$: Low-equivalent programs, safe-equivalent initial states]\label{lemma:v4-low-equivalent-programs-low-equivalent-states}
\begin{align*}
\forall \trgS{P = M, \OB{F}, \OB{I}}, \trgS{P' = M', \OB{F'}, \OB{I'}}. &
	\text{ if } \trgS{P} \loweq \trgS{P'} 
	\text{ then } \initFuncS{(\trgS{P})} \relsa \initFuncS{(\trgS{P'})}
\end{align*}
\end{lemma}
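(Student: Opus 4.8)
The plan is to unfold the definition of the initial speculative state on both sides and then reduce the goal, via the definition of $\relsa$ on speculative states (\Cref{def:safeeq}), to a statement purely about the initial program states. By \Cref{tr:v4-init}, each initial state $\initFuncS{(\trgS{P})}$ is the singleton stack $\tup{\Omega_0, \bot, \safeta}$ with $\Omega_0 = \SInits{(\trgS{P})}$, and likewise $\initFuncS{(\trgS{P'})} = \tup{\Omega_0', \bot, \safeta}$ with $\Omega_0' = \SInits{(\trgS{P'})}$. Since both instances carry the same window $\bot$ and the same speculation flag $\safeta$, and the remaining (empty) stacks are trivially related, the definition of $\relsa$ on speculative states reduces the goal to $\Omega_0 \relsa \Omega_0'$.

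To establish $\Omega_0 \relsa \Omega_0'$, I would unfold \Cref{tr:ini-us}. From $\trgS{P} \loweq \trgS{P'}$ (\Cref{def:loweq}) the two programs share the same functions $\OB{F}$ and imports $\OB{I}$, hence the same code $p$, the same (empty) return frame, and the same value $\ffun{main}$. The register files are therefore literally identical—in each case \Cref{tr:ini-us} produces $\myset{x\mapsto 0:\safeta}{x\in\Reg}$ with $\pc$ overwritten by the common value $\ffun{main}$—so the register-file clause of $\relsa$ holds immediately. It then remains only to prove that the two preallocated initial memories $M_0$ and $M_0'$ are safe-equivalent.

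For the memories I would split along the public/private boundary. Low-equivalence gives $\dom{\MemL{}} = \dom{\MemL{'}}$ and $\dom{\MemH{}} = \dom{\MemH{'}}$, so the cells added by \Cref{tr:ini-us} (the missing public addresses filled with $0:\safeta$, the missing private addresses filled with $0:\unta$) coincide for the two programs; hence $\dom{M_0} = \dom{M_0'} = \mathbb{Z}$. On the public part, $\loweq$ forces $\MemL{}$ and $\MemL{'}$ to agree on both value and taint at every address and the completion is identical, so every public cell satisfies $M_0(n) \relsa M_0'(n)$ directly (whether its taint is $\safeta$ or $\unta$, the values already match). On the private part, the validity conditions $\vdash \trgS{M}:\com{vld}$ and $\vdash \trgS{M'}:\com{vld}$ (\Cref{def:val-mem}, guaranteed by $\loweq$) make every private taint $\unta$, and $\loweq$ matches these taints; since $\relsa$ on cells only constrains values when the taint is $\safeta$, the possibly-differing private values are irrelevant and $M_0(n)\relsa M_0'(n)$ holds there too. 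Combining the two cases yields $M_0 \relsa M_0'$, hence $\Omega_0 \relsa \Omega_0'$, and thus the claim.

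I expect no deep obstacle: the statement is essentially a bookkeeping exercise. The one point needing care is the interaction between memory preallocation and low-equivalence—one must check explicitly that the automatically-added cells are genuinely identical on both sides (which hinges on the domain equalities in $\loweq$) and that $\relsa$ is undemanding exactly on the private region, where $\loweq$ permits value mismatches but pins the taints to $\unta$ via validity. Once these alignments are made precise, the remaining reasoning is a direct case analysis.
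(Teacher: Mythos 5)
Your proposal is correct and follows essentially the same route as the paper's proof: unfold \Cref{tr:v4-init} and \Cref{tr:ini-us}, observe that low-equivalence fixes the functions, imports, and hence the code and the (identical) initial register files, and reduce to safe-equivalence of the preallocated memories, which follows from $\trgS{M}\loweq\trgS{M'}$. Your treatment of the memory case is in fact more explicit than the paper's (which dispatches it by appeal to the definitions), but the decomposition and the key observations — identical completion cells via the domain equalities, and $\relsa$ being undemanding on the $\unta$-tainted private region — are the same.
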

\begin{proof}

By definition of $\initFuncS{}$ (\Cref{tr:v4-init}) we know that:
\begin{align*}
    \initFuncS{P} = \SigmaS \\ 
    \initFuncS{P'} = \SigmaS' \\ 
    \SigmaS = \initFunc(M, \OB{F}, \OB{I}), \bot, \safeta \\
    \SigmaS' = \initFunc(M', \OB{F'}, \OB{I'}), \bot, \safeta \\
    \initFunc(M, \OB{F}, \OB{I}) =& \OB{F}, \OB{I}, \emptyset, \tup{p, M_0, A_1} \\
    M_0 =& M_1 \cup M \cup M_2 \\
    M_1 =& \myset{ n\mapsto 0 : \safeta }{n\in\mb{N}\setminus\dom{M} } \\
	M_2 =& \myset{ -n\mapsto 0 : \unta }{ n\in\mb{N}, -n\notin\dom{M} } \\
	A_0 =& \myset {x \mapsto 0 : \safeta}{x \in \Reg} \\
	A_1 =& A_0[\pc \mapsto \ffun{main} : \safeta] \\ 
    \initFunc(M', \OB{F'}, \OB{I'}) =& \OB{F'}, \OB{I'}, \emptyset, \tup{p, M_0', A_1'} \\
	M_0' =& M_1' \cup M' \cup M_2' \\
    M_1' =& \myset{ n\mapsto 0 : \safeta }{n\in\mb{N}\setminus\dom{M'} } \\
	M_2' =& \myset{ -n\mapsto 0 : \unta }{ n\in\mb{N}, -n\notin\dom{M'} } \\
	A_0' =& \myset {x \mapsto 0 : \safeta}{x \in \Reg} \\
	A_1' =& A_0'[\pc \mapsto \ffun{main} : \safeta]
\end{align*}

From $\trgS{P} \loweq \trgS{P'}$ we get:
\begin{align*}
    \trgS{\OB{F}} = \trgS{\OB{F'}} \\ 
    \trgS{\OB{I}} = \trgS{\OB{I'}} \\ 
    \trgS{M} \loweq \trgS{M'}
\end{align*}

To show $\SigmaS \relsa \SigmaS'$ it remains to show that $ \initFunc(M, \OB{F}, \OB{I}) \relsa \initFunc(M', \OB{F'}, \OB{I'})$.

Because of $\trgS{\OB{F}} = \trgS{\OB{F'}}$ and $\trgS{\OB{I}} = \trgS{\OB{I'}}$ it remains to show that $A_1 \relsa A_1'$ and $M_0 \relsa M_0'$:
\begin{description}
    \item[$\trgS{A_1} \relsa \trgS{A_1'}$]
    Trivially follows by inspecting the definition of $\trgS{A_1}$ and $\trgS{A_1'}$.
    \item[$\trgS{M_0} \relsa \trgS{M_0'}$]
    Follows from the definition of $\trgS{M_0}$ and $\trgS{M_0'}$ and the fact that $\trgS{M} \loweq \trgS{M'}$. 
\end{description}

Thus, $\initFuncS{P} \relsa \initFuncS{P'}$ as required.

\end{proof}

\begin{lemma}[$\Sv$: Steps of $\bigspecarrowS{}$ with safe observations preserve Safe-equivalence]\label{lemma:v4-bigspecarrow-preserves-safe}
If
\begin{enumerate}
    \item $\trgS{\SigmaS} \trgS{\bigspecarrowS{\tauStack}} \SigmaSt{\dagger}$ and 
    \item $\trgS{\SigmaS} \relsa \trgS{\SigmaSt{'}}$
    \item $\safe{\tauStack}$ \end{enumerate}
Then there exists $\SigmaSt{''}$ such that
\begin{enumerate}
    \item $\trgS{\SigmaSt{'}} \bigspecarrowS{\tauStack} \SigmaSt{''}$ and
    \item $\SigmaSt{\dagger} \relsa \SigmaSt{''}$
\end{enumerate}
\end{lemma}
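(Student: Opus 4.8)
The plan is to prove this by structural induction on the derivation of the big-step judgement $\trgS{\SigmaS} \bigspecarrowS{\tauStack} \SigmaSt{\dagger}$, lifting the single-step preservation result \Cref{lemma:v4-specarrow-preserves-safe} from one step to an entire run. The three rules generating $\bigspecarrowS{}$ namely \Cref{tr:v4-reflect}, \Cref{tr:v4-single}, and \Cref{tr:v4-silent}, give exactly the three cases of the induction. In the reflexive base case I have $\tauStack = \empTr$ and $\SigmaSt{\dagger} = \trgS{\SigmaS}$, so I would simply pick $\SigmaSt{''} = \SigmaSt{'}$, discharge $\trgS{\SigmaSt{'}} \bigspecarrowS{\empTr} \SigmaSt{'}$ by \Cref{tr:v4-reflect}, and observe that $\SigmaSt{\dagger} \relsa \SigmaSt{''}$ is precisely hypothesis (2).

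For the \Cref{tr:v4-single} case the derivation decomposes as a prefix run $\trgS{\SigmaS} \bigspecarrowS{\OB{\tau^{\taint}}} \SigmaSt{\ddagger}$ followed by one step $\SigmaSt{\ddagger} \specarrowS{\tau^{\taint}} \SigmaSt{\dagger}$, with $\tauStack = \OB{\tau^{\taint}} \cdot \tau^{\taint}$. First I would split the safety hypothesis using the clause for traces in \Cref{def:safeeq} to obtain $\safe{\OB{\tau^{\taint}}}$ and $\safe{\tau^{\taint}}$. Applying the induction hypothesis to the prefix with $\trgS{\SigmaS} \relsa \SigmaSt{'}$ yields a matching prefix run $\trgS{\SigmaSt{'}} \bigspecarrowS{\OB{\tau^{\taint}}} \SigmaSt{*}$ with $\SigmaSt{\ddagger} \relsa \SigmaSt{*}$. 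Then \Cref{lemma:v4-specarrow-preserves-safe}, applied to the single safe step $\SigmaSt{\ddagger} \specarrowS{\tau^{\taint}} \SigmaSt{\dagger}$ from the related state $\SigmaSt{*}$, produces $\SigmaSt{*} \specarrowS{\tau^{\taint}} \SigmaSt{''}$ emitting the same action, with $\SigmaSt{\dagger} \relsa \SigmaSt{''}$. Finally I reassemble the big step by \Cref{tr:v4-single}. The \Cref{tr:v4-silent} case is structurally identical but simpler: the appended step is silent, so I apply the induction hypothesis to the prefix with the same $\tauStack$ and the same $\safe{\tauStack}$, then invoke \Cref{lemma:v4-specarrow-preserves-safe} on the silent step (for which $\safe{\empTr}$ holds trivially), and reassemble via \Cref{tr:v4-silent}.

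The main obstacle lies in the reassembly by \Cref{tr:v4-single}: that rule carries a side condition which \emph{hides} the emitted action, replacing it by $\empTr$, exactly when the step happens entirely inside attacker code, a fact it detects by computing $\ffun{}$ of the program counter of the top instance before and after the step. To re-apply the rule along the related run I must argue that this hiding decision agrees in both executions. This follows because $\relsa$ forces related states to coincide on all $\safeta$-tainted locations, and the program counter is always safe-tainted; hence $\SigmaSt{\ddagger}$ and $\SigmaSt{*}$, together with their successors $\SigmaSt{\dagger}$ and $\SigmaSt{''}$, agree on $\pc$, so the functions $f$ and $f'$ computed by \Cref{tr:v4-single} coincide and the same trace label is produced on both sides. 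A second, more bookkeeping-heavy subtlety, flagged in the insight box preceding \Cref{lemma:non-spec-preserves-safe}, is that $\specarrowS{}$ delegates to the value and taint non-speculative semantics separately before recombining via \Cref{tr:v4-combine}; I would rely on \Cref{lemma:v4-specarrow-preserves-safe} having already absorbed this split, so that the induction at this level manipulates only whole speculative states and never re-opens the value/taint decomposition.
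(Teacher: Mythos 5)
Your proposal is correct and follows essentially the same route as the paper's proof: induction on the big-step derivation with the three cases given by \Cref{tr:v4-reflect}, \Cref{tr:v4-single}, and \Cref{tr:v4-silent}, using the induction hypothesis on the prefix and \Cref{lemma:v4-specarrow-preserves-safe} on the appended step before reassembling. Your extra care about the action-hiding side condition of \Cref{tr:v4-single} (justified via the safe taint of $\pc$ under $\relsa$) is a detail the paper's proof leaves implicit, but it does not change the argument.
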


\begin{proof}
The proof proceeds by induction on $\trgS{\SigmaS} \trgS{\bigspecarrowS{\tauStack}} \SigmaSt{\dagger}$:

\begin{description}
    \item[\Cref{tr:v4-reflect}]
    
    Then we have $\SigmaS \bigspecarrowS{\tauStack} \SigmaSt{\dagger}$ where $\tauStack = \empTr$ and $\SigmaSt{\dagger} = \SigmaS$.
    
    Thus, we use \Cref{tr:v4-reflect} on $\SigmaSt{'}$ and get $\SigmaSt{'} \bigspecarrowS{\tauStack'} \SigmaSt{''}$, where where $\tauStack' = \empTr$ and $\SigmaSt{''} = \SigmaSt{'}$.
    
    Thus, $\tauStack = \tauStack'$ trivially holds and $\SigmaSt{\dagger} \relsa \SigmaSt{''}$ holds by $\trgS{\SigmaS} \relsa \trgS{\SigmaSt{'}}$.
    
    \item[\Cref{tr:v4-silent}]
    
    Then we have $\trgS{\SigmaS \bigspecarrowS{\tra{^\taint}} \SigmaSt{\dagger\dagger}}$ and $\trgS{\SigmaSt{\dagger\dagger} \specarrowS{\epsilon} \SigmaSt{\dagger}}$.
    
    Applying the IH on $\SigmaS \bigspecarrowS{\tra{^\taint}} \SigmaS''$ we get:
    \begin{align*}
    \SigmaSt{'} \bigspecarrowS{\tauStack''} \SigmaSt{'''} \\
    \SigmaSt{\dagger\dagger} \relsa \SigmaSt{'''} \\
    \safe{\tauStack''}
    \end{align*}
    
    Applying \Thmref{lemma:v4-specarrow-preserves-safe} on $\trgS{\SigmaSt{\dagger\dagger} \specarrowS{\epsilon} \SigmaSt{\dagger}}$ using $\SigmaSt{\dagger\dagger} \relsa \SigmaSt{'''}$ and $\safe{\empTr}$ we get $\SigmaSt{'''} \specarrowS{\empTr} \SigmaSt{*}$ and $\SigmaSt{\dagger} \relsa \SigmaSt{*}$.
    
    Now choose $\SigmaSt{''} = \SigmaSt{*}$.
    
    We now use \Cref{tr:v4-silent} and get $\trgS{\SigmaSt{'}} \bigspecarrowS{\tauStack} \SigmaSt{''}$
    
    $\SigmaSt{\dagger} \relsa \SigmaSt{''}$ follows from $\SigmaSt{\dagger} \relsa \SigmaSt{*}$ and $\SigmaSt{''} = \SigmaSt{*}$.
    
    \item[\Cref{tr:v4-single}]
    
    Then we have $\trgS{\SigmaS \bigspecarrowS{\tra{^\taint}} \SigmaSt{\dagger\dagger}}$ and $\trgS{\SigmaSt{\dagger\dagger} \specarrowS{\tau^{\taint}} \SigmaSt{\dagger}}$.
    
    Furthermore, we get:
    \begin{align*}
    \trgS{\SigmaS''} =& \trgS{\phiStackS \cdot  \tup{\Omega, n}} \\
    \trgS{\SigmaS'} =& \trgS{\phiStackS \cdot \tup{\Omega', n' }} \\
    \ffun{\trgS{\sigma''(\pc)}} =& \trgS{f} \\
    \ffun{\trgS{\sigma'(\pc)}} =& \trgS{f'} \\
    \text{ if } \trgS{f == f'}& and ~\trgS{f \in \OB{I}} \text{ then }& \trgS{\tau^{\taint} = \epsilon} \text{ else } \trgS{\tau^{\taint} = \tau^{\taint}}
    \end{align*}
    
    Applying the IH on $\SigmaS \bigspecarrowS{\tra{^\taint}} \SigmaS''$ we get:
    \begin{align*}
    \SigmaSt{'} \bigspecarrowS{\tauStack''} \SigmaSt{'''} \\
    \SigmaSt{\dagger\dagger} \relsa \SigmaSt{'''} \\
    \safe{\tauStack''}
    \end{align*}
    
    Applying \Thmref{lemma:v4-specarrow-preserves-safe} on $\trgS{\SigmaSt{\dagger\dagger} \specarrowS{\tau^{\taint}} \SigmaSt{\dagger}}$ using $\SigmaSt{\dagger\dagger} \relsa \SigmaSt{'''}$ and $\safe{\tau^{\taint}}$ (since $\safe{\tauStack}$) we get $\SigmaSt{'''} \specarrowS{\tau^{\taint}} \SigmaSt{*}$ and $\SigmaSt{\dagger} \relsa \SigmaSt{*}$.
    
    Now choose $\SigmaSt{''} = \SigmaSt{*}$.
    
    We can now use \Cref{tr:v4-single} and get $\trgS{\SigmaSt{'}} \bigspecarrowS{\tauStack} \SigmaSt{''}$
    
    $\SigmaSt{\dagger} \relsa \SigmaSt{''}$ follows from $\SigmaSt{\dagger} \relsa \SigmaSt{*}$ and $\SigmaSt{''} = \SigmaSt{*}$.
\end{description}
\end{proof}

\begin{lemma}[Steps of $\specarrowS{}$ with safe observations preserve Safe-equivalence]\label{lemma:v4-specarrow-preserves-safe}
If
\begin{enumerate}
    \item $\SigmaSt{\dagger} \relsa \SigmaSt{'}$
    \item $\SigmaSt{\dagger} \specarrowS{\tau} \SigmaSt{\dagger\dagger}$
    \item $\safe{\tau}$
\end{enumerate}
Then 
\begin{enumerate}
    \item $\SigmaSt{'} \specarrowS{\tau} \SigmaSt{''}$
    \item $\SigmaSt{\dagger\dagger} \relsa \SigmaSt{''}$
\end{enumerate}
\end{lemma}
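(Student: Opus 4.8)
The plan is to proceed by inversion on the speculative step $\SigmaSt{\dagger} \specarrowS{\tau} \SigmaSt{\dagger\dagger}$, following the convention of the insight box that the combine rule (\Cref{tr:v4-combine}) and the context rule (\Cref{tr:v4-context}) are always present and hence can be suppressed from the inversion chain. First, from $\SigmaSt{\dagger} \relsa \SigmaSt{'}$ and the definition of $\relsa$ on speculative states (\Cref{def:safeeq}), I would extract that the two stacks have equal height, that their top instances are $\tup{\Omega, n, \taintpc}$ and $\tup{\Omega', n, \taintpc}$ with the \emph{same} window $n$ and $\taintpc$, that $\Omega \relsa \Omega'$, and that the tails satisfy $\phiStack \relsa \phiStack'$. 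Since the $\pc$ register is always tainted safe, $\Omega \relsa \Omega'$ forces $\Omega(\pc) = \Omega'(\pc)$; hence the same instruction $p(\Omega(\pc))$ is fetched on both sides and the same $\Sv$ rule is applicable to $\SigmaSt{'}$. The same observation shows that both states agree on whether the executing function lies in $\OB{I}$, so the attacker and non-attacker variants of each rule line up.

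Next I would discharge each case. For the rules that merely delegate to a non-speculative step and decrement the window (\Cref{tr:v4-nospec-eps}, \Cref{tr:v4-nospec-act}, the barrier rules \Cref{tr:v4-barr}/\Cref{tr:v4-barr-spec}, and the attacker store rule \Cref{tr:v4-skip-att}), I apply \Cref{lemma:non-spec-preserves-safe} to the underlying $\nsarrow{}$ step, whose $\safe{\tau}$ hypothesis is exactly assumption (3). This yields a matching non-speculative step from $\Omega'$ emitting the same label and reaching a related state. Because these rules touch only the top instance and update the window identically ($n+1 \mapsto n$, or $\mapsto 0$ for the barrier), the resulting top instances remain related and the untouched tail $\phiStack \relsa \phiStack'$ is preserved, giving $\SigmaSt{\dagger\dagger} \relsa \SigmaSt{''}$ with the same emitted observation.

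The interesting case is the store-speculation rule \Cref{tr:v4-skip}, where the top instance $\tup{\Omega, n+1}$ is replaced by two instances $\tup{\Omega^{*}, n} \cdot \tup{\Omega^{**}, j}$ with $j = min(\omega, n)$: $\Omega^{*}$ is the architectural continuation ($\Omega \nsarrow{\tau} \Omega^{*}$, store committed) and $\Omega^{**}$ is the speculative continuation obtained by skipping the store, i.e.\ by the identical update $\pc \mapsto \Omega(\pc)+1$ with memory unchanged. I relate the committed continuations via \Cref{lemma:non-spec-preserves-safe} applied to the architectural step, and I relate the speculative continuations directly from $\Omega \relsa \Omega'$, since both differ from their source only by the same $\pc$ increment. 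As $n$ is common, $j = min(\omega,n)$ is identical on both sides, and the pushed instance carries the fixed taint $\unta$ on both sides, so the two new two-element tops are related and the tail is unchanged, yielding $\SigmaSt{\dagger\dagger} \relsa \SigmaSt{''}$. The rollback rule \Cref{tr:v4-rollback} is immediate: its guard ($n = 0$ or $\finType{\Omega}$) is preserved across $\relsa$ (equal windows, and $\finType{}$ depends only on components equated by $\relsa$), and both sides step to the empty stack $\varepsilon$, which is trivially related.

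The main obstacle I anticipate is bookkeeping rather than conceptual: getting the combine rule (\Cref{tr:v4-combine}) interaction right, i.e.\ ensuring that the separate value- and taint-semantics steps recombine into related states emitting the same tainted label $\tau^{\taint}$. This is precisely where the insight box matters: because \Cref{tr:combine-s} is \emph{not} used inside the speculative derivation, the hypothesis $\safe{\tau}$ genuinely pins down the taint of the emitted action, so \Cref{lemma:non-spec-preserves-safe} can be applied component-wise and the label equality transfers cleanly. The remaining care is in the push of \Cref{tr:v4-skip}, where I must confirm the newly created instance is related with its correct $\unta$ taint and matching window.
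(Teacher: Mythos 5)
Your proposal is correct and follows essentially the same route as the paper's proof: inversion on the speculative step (suppressing the combine/context rules), appeal to \Cref{lemma:non-spec-preserves-safe} for the delegating cases, and direct construction of the related skipped-store instance with matching window $j$ and $\unta$ taint in the \Cref{tr:v4-skip} case. The only cosmetic difference is that the barrier rules do not actually invoke $\nsarrow{}$ (they update $\pc$ directly), but they are discharged the same trivial way in both arguments.
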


\begin{proof}

We have 
\begin{align*}
    \SigmaSt{\dagger} =& \phiStackSv \cdot (\trgS{\taint, \Omega, n}) \\
    \SigmaSt{'} =& \phiStackSv' \cdot (\trgS{\taint', \Omega', n}) \\ 
    \trgS{\Omega} \relsa& \trgS{\Omega'} \\
    \trgS{\taint} =& \trgS{\taint'} \\
    \trgS{\Omega(\pc)} =& \trgS{\Omega'(\pc)}
\end{align*}

The proof proceeds by inversion on $\SigmaSt{\dagger} \specarrowS{\tau} \SigmaSt{\dagger\dagger}$:
(Implicitly unpack the context rule since it is always used)

\begin{description}

    \item[\Cref{tr:v4-barr}]
    Then we have $\trgS{(\taint, \Omega, \bot) \specarrowS{\empTr} (\taint, \Omega^{\dagger}, \bot)}$ with $\trgS{\Omega(\pc)} = \pbarrier$ and $\trgS{\Omega^{\dagger}} = \trgS{\Omega[\pc \mapsto \pc + 1]}$.
    
    Since $\trgS{\Omega(\pc)} = \pbarrier$ and $\trgS{\Omega(\pc)} = \trgS{\Omega'(\pc)}$ we can use \Cref{tr:v4-barr} 
    \\
    to derive $\trgS{(\taint', \Omega', \bot) \specarrowS{\empTr} (\taint', \Omega'', \bot)}$, where $\trgS{\Omega'' = \Omega'[\pc \mapsto \pc + 1]}$.
    
    Since $\trgS{\Omega^{\dagger}} \relsa \trgS{\Omega''}$ we get $(\trgS{\taint, \Omega^{\dagger}, \bot}) \relsa (\trgS{\taint', \Omega'', \bot})$.
    
    Combined with $\phiStackSv \relsa \phiStackSv'$ we get $\trgS{\SigmaSt{\dagger\dagger} \relsa \SigmaSt{''}}$
    
    \item[\Cref{tr:v4-barr-spec}]
    
    Then we have $\trgS{(\taint, \Omega, n) \specarrowS{\empTr} (\taint, \Omega', 0)}$ with $\trgS{\Omega(\pc)} = \pbarrier$.
    
    The case is analogous to \Cref{tr:v4-barr}.
    
    \item[\Cref{tr:v4-nospec-eps}]
    
    Then we have $\trgS{(\taint, \Omega, n + 1) \specarrowS{\empTr} (\taint, \Omega', n)}$ with $\trgS{\Omega(\pc) \neq \pbarrier \mid \pstore{x}{e}}$ and $\trgS{\Omega \nsarrow{\empTr} \Omega'}$.
    
     We use \Cref{lemma:non-spec-preserves-safe} on $\trgS{\Omega \nsarrow{\empTr} \Omega'}$ and $\trgS{\Omega \relsa \Omega'}$ and get $\trgS{\Omega' \nsarrow{\empTr} \Omega''}$ with $\trgS{\Omega^{\dagger} \relsa \Omega''}$.
    
    Thus, we can use \Cref{tr:v4-nospec-eps} to derive $\trgS{(\taint', \Omega', n + 1) \specarrowS{\empTr} (\taint', \Omega'', n)}$.
    
    Since $\trgS{\Omega^{\dagger} \relsa \Omega''}$ we get $\trgS{(\taint, \Omega^{\dagger}, n) \relsa (\taint', \Omega'', n)}$.
    
    Combined with $\trgS{\phiStackSv \relsa \phiStackSv'}$ we get $\trgS{\SigmaSt{\dagger\dagger} \relsa \SigmaSt{''}}$

    \item[\Cref{tr:v4-nospec-act}]
    
    Then we have $\trgS{(\taint, \Omega, n + 1) \specarrowS{\tau^{\taint \glb \taint_1}} (\taint, \Omega^{\dagger}, n)}$ with $\Omega(\pc) \neq \pbarrier \mid \pstore{x}{e}$ and $\trgS{\Omega \nsarrow{\tau^{\taint_1}} \Omega^{\dagger}}$.
    
    We use \Thmref{lemma:non-spec-preserves-safe} on $\trgS{\Omega \nsarrow{\tau^{\taint_1}} \Omega^{\dagger}}$ and $\trgS{\Omega \relsa \Omega^{'}}$ and get $\trgS{\Omega' \nsarrow{\tau^{\taint_1}} \Omega''}$ with $\trgS{\Omega^{\dagger} \relsa \Omega''}$.
    
    Thus, we can use \Cref{tr:v4-nospec-act} to derive $\trgS{(\taint', \Omega', n' + 1) \specarrowS{\tau^{\taint' \glb \taint_1}} (\taint', \Omega'', n')}$.
    
    Since $\trgS{\Omega^{\dagger} \relsa \Omega''}$ we get $\trgS{(\taint, \Omega^{\dagger}, n' + 1) \relsa (\taint', \Omega'', n')}$.
    
    Combined with $\trgS{\phiStackSv \relsa \phiStackSv'}$ we get $\trgS{\SigmaSt{\dagger\dagger} \relsa \SigmaSt{''}}$.
    
    Furthermore, $\trgS{\tau^{\taint' \glb \taint_1}} = \trgS{\tau^{\taint \glb \taint_1}}$ follows from $\trgS{\taint} = \trgS{\taint'}$.
    
    \item[\Cref{tr:v4-skip-att}]
    
    Then we have $\trgS{(\taint, \Omega, n + 1) \specarrowS{\tau^{\taint \glb \taint_1}} (\taint, \Omega^{\dagger}, n)}$ with $\trgS{\Omega(\pc)} =  \trgS{\pstore{x}{e}}$ and $\trgS{\Omega \nsarrow{\tau^{\taint_1}} \Omega^{\dagger}}$ and  $\ffun{\trgS{\Omega(\pc)}} = \trgS{f}$ and $\trgS{f \in \OB{I}}$.
    
    Thus, no speculation is happening.
    
    The rest of the case is analogous to \Cref{tr:v4-nospec-act}.
    
    \item[\Cref{tr:v4-skip}]
    Then we have $\trgS{(\taint, \Omega, n + 1) \specarrowS{\tau^{\taint \glb \taint_1}} \tup{\taint \Omega^{\dagger}, n} \cdot \tup{\unta, \Omega^{\dagger\dagger}, j}}$ with $\trgS{\Omega(\pc)} = \trgS{\pstore{x}{e}}$ and $\trgS{\Omega \nsarrow{\tau^{\taint_1}} \Omega^{\dagger}}$.

    Furthermore, we have $ \trgS{\Omega^{\dagger\dagger} = \OB{F}; \OB{I}; \OB{B}; \sigma^{\dagger\dagger}}$ and $\trgS{\sigma^{\dagger\dagger} = \sigma[\pc \mapsto \Omegav(\pc) + 1]}$.

    We use  \Thmref{lemma:non-spec-preserves-safe} on $\src{\Omega \nsarrow{\tau^{\taint_1}} \Omega^{\dagger}}$  and $\src{\Omega \relsa \Omega^{\dagger}}$ and get $\Omega' \nsarrow{\tau^{\taint_1}} \Omega''$ with $\trgS{\Omega^{\dagger} \relsa \Omega''}$.

    Next, we define $\trgS{\Omega'''}$ as $\trgS{\Omega''' = \OB{F'}; \OB{I'}; \OB{B'}; \sigma'''}$ and $\trgS{\sigma''' = \sigma[\pc \mapsto \sigma'(\pc) + 1]}$ where $\trgS{\Omega'} =  \trgS{\OB{F'}; \OB{I'}; \OB{B'}; \sigma'}$.

     From $\trgS{\Omega \relsa \Omega'}$ we can derive that $\trgS{\Omega^{\dagger\dagger} \relsa \Omega'''}$.

     We can now apply \Cref{tr:v4-skip} to derive $\trgS{(\taint', \Omega', n' + 1) \specarrowS{\tau^{\taint' \glb \taint_1}} \tup{\taint' \Omega'', n'} \cdot \tup{\unta, \Omega''', j'}}$.

     From $\trgS{\SigmaS \relsa \SigmaS'}$ we get $\trgS{j = j'}$, $\trgS{\taint = \taint'}$ and we can now conclude that $\trgS{\tau^{\taint \glb \taint_1}} = \trgS{\tau^{\taint' \glb \taint_1}}$.

     $\trgS{\SigmaS{\dagger\dagger} \relsa \SigmaS''}$ follows from $\trgS{\SigmaS^{\dagger} \relsa \SigmaS'}$, $\trgS{\Omega^{\dagger} \relsa \Omega''}$ and $\trgS{\Omega^{\dagger\dagger} \relsa \Omega'''}$.

    \item[\Cref{tr:v4-rollback}]
    Then we have $\trgS{(\taint, \Omega, 0) \specarrowS{\rollbackObsS{}}} $.
    
    Thus we can apply \Cref{tr:v4-rollback} and get $\trgS{(\taint', \Omega', n) \specarrowS{\rollbackObsS{}}}$.
    
    Now $\trgS{\SigmaSt{\dagger\dagger} \relsa \SigmaSt{''}}$ directly follows from $\trgS{\phiStackSv \relsa \phiStackSv'}$.
\end{description}
\end{proof}

 \section{$\SLSv$: SS overapproximates SNI}

The proof structure is as follows:

\begin{center}
	\scalebox{0.8}{
    \begin{forest}
    for tree={
        align = left,
        font = \footnotesize,
        forked edge,
    }
        [\Thmref{thm:sls-ss-impl-sni}
            [\Thmref{lemma:sls-bigspecarrow-preserves-safe}
                [\Thmref{lemma:sls-specarrow-preserves-safe}
                    [\Thmref{lemma:non-spec-preserves-safe}]
                ]
            ]
            [\Thmref{lemma:sls-low-equivalent-programs-low-equivalent-states}]
        ]
    \end{forest}
    }
\end{center}

\begin{theorem}[$\SLSv$: \sstext{} implies SNI]\label{thm:sls-ss-impl-sni}
If
\begin{enumerate}
    \item $\contractSpec{}{\SLSv} \vdash \trgSLS{P} : \ss $
\end{enumerate}
Then
\begin{enumerate}[label=\Roman*]
    \item $\contractSpec{}{\SLSv} \vdash \trgSLS{P} : \sni $
\end{enumerate}

\end{theorem}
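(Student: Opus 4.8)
The plan is to mirror exactly the proof structure already used for the $\Sv$ case in \Thmref{thm:v4:ss-impl-sni}, since the $\SLSv$ semantics has the same state shape $\SigmaSLS = \phiStackSLS \cdot \tup{\Omega, n, \taintpc}$ and the same combine/big-step scaffolding. I would argue by contradiction: assume $\contractSpec{}{\SLSv} \nvdash \trgSLS{P} : \sni$, so there is a low-equivalent $\trgSLS{P'}$ and traces $\trgSLS{\tra{_1}} \in \contractSpec{}{\SLSv}(\SInit{P})$, $\trgSLS{\tra{_2}} \in \contractSpec{}{\SLSv}(\SInit{P'})$ with $\nspecProject{\trgSLS{\tra{_1}}} = \nspecProject{\trgSLS{\tra{_2}}}$ but $\trgSLS{\tra{_1}} \neq \trgSLS{\tra{_2}}$. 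Using \Cref{tr:sls-trace} I unfold both traces into big-step runs from $\initFuncSLS{(\trgSLS{P})}$ and $\initFuncSLS{(\trgSLS{P'})}$, and locate the first point of divergence: a common prefix $\tauStack$ followed by differing single-step actions $\trgSLS{\aca{^\taint}} \neq \trgSLS{{\tau'}^{\taint'}}$.

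The two key lemmas I would state and prove first are the $\SLSv$ analogues of the supporting lemmas: \Thmref{lemma:sls-low-equivalent-programs-low-equivalent-states} (low-equivalent programs yield safe-equivalent initial states $\initFuncSLS{(\trgSLS{P})} \relsa \initFuncSLS{(\trgSLS{P'})}$) and \Thmref{lemma:sls-bigspecarrow-preserves-safe} (a big-step with only safe observations preserves $\relsa$), the latter resting on \Thmref{lemma:sls-specarrow-preserves-safe} for single steps, which in turn delegates to the shared non-speculative \Thmref{lemma:non-spec-preserves-safe}. The initial-state lemma is essentially identical to the $\Sv$ version since $\initFuncSLS{}$ (\Cref{tr:sls-init}) has the same form. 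The big-step lemma proceeds by induction on $\bigspecarrowSLS{}$ over \Cref{tr:sls-reflect}, \Cref{tr:sls-silent}, and \Cref{tr:sls-single}, using the single-step lemma at each inductive step; this is structurally the same as \Thmref{lemma:v4-bigspecarrow-preserves-safe}.

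With these in hand, the main argument runs as in the $\Sv$ case: from $\contractSpec{}{\SLSv} \vdash \trgSLS{P} : \ss$ I get $\safe{\trgSLS{\tra{_1}}}$; applying \Thmref{lemma:sls-bigspecarrow-preserves-safe} along the common safe prefix $\tauStack$ gives $\SigmaSLSt{\dagger} \relsa \SigmaSLSt{'}$ for the two divergence-point states, after using determinism (Confluence) of $\specarrowSLS{}$ to identify the reached states; then \Thmref{lemma:sls-specarrow-preserves-safe} on the diverging safe step forces $\SigmaSLSt{'} \specarrowSLS{\aca{^\taint}} \SigmaSLSt{''''}$ with the same action, and determinism again yields $\trgSLS{\aca{^\taint}} = \trgSLS{{\tau'}^{\taint'}}$, contradicting the divergence assumption.

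The main obstacle is the single-step preservation lemma \Thmref{lemma:sls-specarrow-preserves-safe}, specifically the case \Cref{tr:sls-spec} where a $\retC$ is speculatively bypassed: this rule pushes a new nested instance $\tup{\Omega'', j}$ whose configuration sets $\sigma'' = \sigma[\pc \mapsto \Omega(\pc)+1]$. I must check that $\relsa$ is preserved both for the advanced base instance $\tup{\Omega', n}$ (via \Thmref{lemma:non-spec-preserves-safe} on the underlying $\nsarrow{}$ step) \emph{and} for the freshly spawned bypass instance, where the new speculation window $j = \min(\omega, n)$ must agree across the two runs and the program counter update must match; since both depend only on $\Omega(\pc)$ and $n$, which are equal under $\relsa$, this should go through, but it is the delicate step that has no exact counterpart in the $\Sv$ store-skip case and warrants the most care. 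Because the $\SLSv$ bypass does not make any value- or taint-dependent branching choice (it always advances past the return), I expect no genuinely new difficulty beyond bookkeeping, so the proof should reduce cleanly to the established template.
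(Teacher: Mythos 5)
Your proposal matches the paper's proof essentially step for step: the same contradiction argument locating the first divergence, the same three supporting lemmas (safe-equivalent initial states, big-step preservation of $\relsa$ under safe traces, and single-step preservation delegating to the shared non-speculative lemma), and the same identification of \Cref{tr:sls-spec} as the case requiring care. The only small inaccuracy is your claim that the return-bypass case has no counterpart in the $\Sv$ proof --- the store-skip case of \Cref{tr:v4-skip} is structurally identical (it also pushes a fresh instance at $\pc+1$ with window $\min(\omega,n)$), and the paper handles \Cref{tr:sls-spec} exactly as you sketch.
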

\begin{proof}

    Let $\trgSLS{P}$ be an arbitrary program such that $\contractSpec{}{\SLSv} \vdash \trgSLS{P} : \ss$.

    The proof proceeds by contradiction. 
    
    Assume that $\contractSpec{}{\SLSv} \nvdash \trgSLS{P} : \sni$. That is, there is another program $\trgSLS{P'}$ and traces $\trgSLS{\tra{_1}} \in \trgSLS{ \contractSpec{}{\Sv}(\SInit{P}) }$ and  $\trgSLS{\tra{_2}} \in \trgSLS{ \contractSpec{}{\SLSv}(\SInit{P'}) }$ such that $\trgSLS{P} \loweq \trgSLS{P'}$, $\nspecProject{\trgSLS{\tra{_1}}} = \nspecProject{\trgSLS{\tra{_2}}}$, and $\trgSLS{\tra{_1}} \neq \trgSLS{\tra{_2}}$.

    Since $\trgSLS{\tra{_1}} \in \trgSLS{ \contractSpec{}{\SLSv}(\SInit{P}})$ we have $\trgSLS{\amTracevSLS{P}{\tauStackT}}$ and by \Cref{tr:sls-trace} we know that there exists $\SigmaSLSDS{F}$ such that $\trgSLS{\initFuncSLS{\trgSLS{(P)}} \bigspecarrowSLS{\tra{_1}}\ \SigmaSLSDS{F}}$. Similarly, we get  $\trgSLS{\initFuncSLS{(\trgSLS{P'})} \bigspecarrowSLS{\tra{_2}}\ \SigmaSLSDSP{F}}$.
    
    Combined with the fact that $\trgSLS{\tra{_1}} \neq \trgSLS{\tra{_2}}$, it follows that there exists speculative states $\SigmaSLSt{\dagger}, \SigmaSLSDS{}, \SigmaSLSt{'}, \SigmaSLSt{''}$ and sequences of observations $\trgSLS{\tauStack, \tauStack_{end}, \tauStack'_{end}}, \trgSLS{\aca{^\taint}, {\tau'}^{\taint'} } $ such that $\trgSLS{\aca{^{\taint}}} \neq \trgSLS{ {\tau'}^{\taint'}}$  and:
    
    \begin{align*}
    \trgSLS{\initFuncSLS{(\trgSLS{P})}} & \trgSLS{\bigspecarrowSLS{\tauStack}} \SigmaSLSt{\dagger} \specarrowSLS{\aca{^{\taint}}} \SigmaSLSt{\dagger\dagger} \bigspecarrowSLS{\tauStack_{end}} \trgSLS{\SigmaSLSDS{F}} 
    \\
    \trgSLS{\initFuncSLS{(\trgSLS{P'})}} & \bigspecarrowSLS{\tauStack} \SigmaSLSt{'} \specarrowSLS{{\tau'}^{\taint'}} \SigmaSLSt{''} \bigspecarrowSLS{\tauStack'_{end}} \trgSLS{\SigmaSLSDSP{F}}
    \end{align*}
    
    From $\contractSpec{}{\SLSv} \vdash \trgSLS{P} : \ss $ we get that $\safe{\trgSLS{\tra{_1}}}$.

    From \Thmref{lemma:sls-low-equivalent-programs-low-equivalent-states} we get ${\initFuncSLS{(\trgSLS{P})}} \relsa {\initFuncSLS{(\trgSLS{P'})}}$.

    Using \Thmref{lemma:sls-bigspecarrow-preserves-safe} on $\initFuncSLS{(\trgSLS{P})} \trgSLS{\bigspecarrowSLS{\tauStack \cdot \tau^{\taint}}} \SigmaSLSt{\dagger\dagger}$ and the fact that $\initFuncSLS{(\trgSLS{P})} \relsa {\initFuncSLS{(\trgSLS{P'})}}$ we get $\trgSLS{\initFuncSLS{(\trgSLS{P'})}} \bigspecarrowSLS{\tauStack \cdot \tau^{\taint}} \SigmaSLSt{'''} $ and  $\SigmaSLSt{\dagger\dagger} \relsa \SigmaSLSt{'''}$.
    
    By determinism of $\specarrowSLS{}$ we get $\SigmaSLSt{'''} = \SigmaSLSt{''}$.

    However, this means that $\trgSLS{\tau^{'\taint'}} = \trgSLS{\tau^{\taint}}$ which is a contradiction.
\end{proof}

\begin{lemma}[$\SLSv$: Low-equivalent programs have low-equivalent initial states]\label{lemma:sls-low-equivalent-programs-low-equivalent-states}
\begin{align*}
\forall \trgSLS{P = M, \OB{F}, \OB{I}}, \trgSLS{P' = M', \OB{F'}, \OB{I'}}. &
	\text{ if } \trgSLS{P} \loweq \trgSLS{P'} 
	\text{ then } \initFuncSLS{(\trgSLS{P})} \approx \initFuncSLS{(\trgSLS{P'})}
\end{align*}
\end{lemma}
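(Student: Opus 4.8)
The plan is to mirror the proof of \Cref{lemma:v4-low-equivalent-programs-low-equivalent-states} almost verbatim, exploiting that the $\SLSv$ initial-state construction coincides with the $\Sv$ one: both wrap the non-speculative initial configuration produced by the shared rule \Cref{tr:ini-us} into a single speculative instance with window $\bot$ and speculation taint $\safeta$, and neither carries an RSB (in contrast to $\Rv$). First I would unfold both sides with the initialization rule \Cref{tr:sls-init}, so that $\initFuncSLS{(\trgSLS{P})}$ is the non-speculative configuration $\initFunc{(M,\OB{F},\OB{I})}$ equipped with window $\bot$ and taint $\safeta$, and symmetrically for $P'$. Since the window $\bot$ and the speculation taint $\safeta$ are syntactically identical on the two sides, the definition of $\relsa$ on speculative states (\Cref{def:safeeq}), which compares the topmost instances' windows, speculation taints, and configurations and relates the remaining (here empty) stacks, reduces the goal $\initFuncSLS{(\trgSLS{P})} \relsa \initFuncSLS{(\trgSLS{P'})}$ to the single obligation $\initFunc{(M,\OB{F},\OB{I})} \relsa \initFunc{(M',\OB{F'},\OB{I'})}$ on the underlying non-speculative states.

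Next I would extract from $\trgSLS{P} \loweq \trgSLS{P'}$ (\Cref{def:loweq}) the three facts $\OB{F} = \OB{F'}$, $\OB{I} = \OB{I'}$, and $M \loweq M'$. Because the function table, the import list, the derived codebase $p$, and the empty frame stack are all determined solely by $\OB{F}$ and $\OB{I}$, they agree on both sides and discharge every component of $\relsa$ except the initial register file and the preallocated memory. Thus the remaining obligations are $A_1 \relsa A_1'$ and $M_0 \relsa M_0'$, using the names from \Cref{tr:ini-us} for the register file and memory produced at initialization.

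For the register file, $A_1$ and $A_1'$ both set every register to $0 : \safeta$ and then $\pc \mapsto \ffun{main} : \safeta$; since $\OB{F} = \OB{F'}$ the entry point of \texttt{main} coincides, so $A_1$ and $A_1'$ are literally equal, hence $\relsa$-related. For the memory, $M_0 = M_1 \cup M \cup M_2$, where $M_1$ is the safe-tainted zero-filled completion of the public domain and $M_2$ is the unsafe-tainted zero-filled completion of the private domain (and symmetrically for $M_0'$): the public portions agree because $M \loweq M'$ forces identical public entries (values and taints), while the private portions are $\relsa$-related because every private location carries taint $\unta$, for which $\relsa$ places no constraint on values. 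The only point needing (routine) care — and the sole conceivable obstacle — is checking that the completions line up with the low-equivalence hypothesis on domains, i.e. that $\dom{\MemL{}} = \dom{\MemL{'}}$ and $\dom{\MemH{}} = \dom{\MemH{'}}$ from \Cref{def:loweq} force $M_1 = M_1'$ and $M_2 = M_2'$; this is immediate from their set-builder definitions. Assembling these gives $M_0 \relsa M_0'$, hence $\initFunc{(M,\OB{F},\OB{I})} \relsa \initFunc{(M',\OB{F'},\OB{I'})}$ and therefore $\initFuncSLS{(\trgSLS{P})} \relsa \initFuncSLS{(\trgSLS{P'})}$, as required. I \emph{expect no real difficulty here}: the argument is structurally identical to the $\Sv$ case, and the $\SLSv$-specific data (which is none beyond the standard instance fields) introduces nothing new.
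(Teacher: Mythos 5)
Your proposal is correct and matches the paper's argument: the paper's own proof of this lemma simply says it is analogous to \Cref{lemma:v4-low-equivalent-programs-low-equivalent-states}, and your write-up is precisely that analogy carried out in detail (unfold \Cref{tr:sls-init}, reduce to the non-speculative initial configurations, and discharge the register-file and memory obligations from \Cref{def:loweq} exactly as in the $\Sv$ case). No gaps.
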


\begin{proof}
    Analogous to \Thmref{lemma:v4-low-equivalent-programs-low-equivalent-states}.
\end{proof}

\begin{lemma}[$\SLSv$: Steps of $\bigspecarrowSLS{}$ with safe observations preserve Safe-equivalence]\label{lemma:sls-bigspecarrow-preserves-safe}
If
\begin{enumerate}
    \item $\trgSLS{\SigmaSLS} \trgSLS{\bigspecarrowSLS{\tauStack}} \SigmaSLSt{\dagger}$ and 
    \item $\trgSLS{\SigmaSLS} \relsa \trgSLS{\SigmaSLSt{'}}$
    \item $\safe{\tauStack}$ \end{enumerate}
Then there exists $\SigmaSLSt{''}$ such that
\begin{enumerate}
    \item $\trgSLS{\SigmaSLSt{'}} \bigspecarrowSLS{\tauStack} \SigmaSLSt{''}$ and
    \item $\SigmaSLSt{\dagger} \relsa \SigmaSLSt{''}$
\end{enumerate}
\end{lemma}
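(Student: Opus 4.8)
The plan is to prove the lemma by induction on the derivation of $\trgSLS{\SigmaSLS} \bigspecarrowSLS{\tauStack} \SigmaSLSt{\dagger}$, mirroring the structure of the corresponding $\Sv$ result (\Thmref{lemma:v4-bigspecarrow-preserves-safe}). The three cases correspond to the big-step rules \Cref{tr:sls-reflect}, \Cref{tr:sls-silent}, and \Cref{tr:sls-single}, and in each inductive step the heavy lifting is delegated to the single-step preservation lemma \Thmref{lemma:sls-specarrow-preserves-safe}; this proof only has to thread the inductive hypothesis through the big-step rules. In the reflexive case (\Cref{tr:sls-reflect}) I have $\tauStack = \empTr$ and $\SigmaSLSt{\dagger} = \trgSLS{\SigmaSLS}$, so I take $\SigmaSLSt{''} = \SigmaSLSt{'}$, re-apply \Cref{tr:sls-reflect} to get $\trgSLS{\SigmaSLSt{'}} \bigspecarrowSLS{\empTr} \SigmaSLSt{'}$, and the required relation $\SigmaSLSt{\dagger} \relsa \SigmaSLSt{''}$ is exactly hypothesis~(2).

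For the two inductive cases I decompose the big step into a prefix big step $\trgSLS{\SigmaSLS} \bigspecarrowSLS{\OB{\tau_0}} \SigmaSLSt{\dagger\dagger}$ followed by a final small step $\SigmaSLSt{\dagger\dagger} \specarrowSLS{\tau} \SigmaSLSt{\dagger}$, where the last emitted action $\tau$ is $\epsilon$ in the \Cref{tr:sls-silent} case and the genuine trailing action in the \Cref{tr:sls-single} case. Since $\safe{\tauStack}$ holds, both the prefix trace $\OB{\tau_0}$ and the final action $\tau$ are safe. I apply the induction hypothesis to the prefix to obtain a matching run $\trgSLS{\SigmaSLSt{'}} \bigspecarrowSLS{\OB{\tau_0}} \SigmaSLSt{'''}$ with $\SigmaSLSt{\dagger\dagger} \relsa \SigmaSLSt{'''}$, then invoke \Thmref{lemma:sls-specarrow-preserves-safe} on the final small step (using $\SigmaSLSt{\dagger\dagger} \relsa \SigmaSLSt{'''}$ and the safety of $\tau$) to extend the run to $\SigmaSLSt{'''} \specarrowSLS{\tau} \SigmaSLSt{''}$ with $\SigmaSLSt{\dagger} \relsa \SigmaSLSt{''}$, and finally recombine the prefix and final step via \Cref{tr:sls-silent} or \Cref{tr:sls-single}. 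The side conditions of \Cref{tr:sls-single} — which decide whether the final action is suppressed based on $\ffun{\cdot}$ and membership in $\trgSLS{\OB{I}}$ — transfer from the source run because $\relsa$ preserves the program counter and hence the function labels computed by $\ffun{\cdot}$.

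The main obstacle is not in this lemma, whose induction is a routine replay of the $\Sv$ argument, but is deferred to the single-step lemma \Thmref{lemma:sls-specarrow-preserves-safe}. There the interesting case is the return-bypass rule \Cref{tr:sls-spec}, which pushes a fresh speculative instance $\trgSLS{\tup{\Omega'', j}}$ obtained by skipping the $\retC$ and incrementing the program counter. One must check that the bypassed state $\Omega''$ remains safe-equivalent to its counterpart and that the window $j = \trgSLS{min(\omega, n)}$ agrees on both sides; both follow from $\trgSLS{\Omega} \relsa \trgSLS{\Omega'}$ together with equality of the program counters, while the underlying non-speculative sub-step is handled by \Thmref{lemma:non-spec-preserves-safe}. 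Care is also needed to match the attacker-side rule \Cref{tr:sls-spec-att} (which performs no speculation) consistently on both runs, but this is again discharged from the shared program counter determining $\ffun{\cdot}$ and membership in $\trgSLS{\OB{I}}$.
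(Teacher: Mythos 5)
Your proof is correct and follows exactly the paper's approach: induction on the big-step derivation with the three rule cases, delegating the real work in the inductive cases to the single-step preservation lemma (\Cref{lemma:sls-specarrow-preserves-safe}), just as the paper does by replaying the corresponding $\Sv$ argument (\Cref{lemma:v4-bigspecarrow-preserves-safe}). Your additional remarks about the side conditions of the single rule and the deferred difficulty in the return-bypass case are consistent with where the paper also places that burden.
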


\begin{proof}
    \begin{description}
    \item[\Cref{tr:sls-reflect}]

    Analogous to the corresponding case in \Thmref{lemma:v4-bigspecarrow-preserves-safe}
    \item[\Cref{tr:sls-silent}]

    Analogous to the corresponding case in \Thmref{lemma:v4-bigspecarrow-preserves-safe} using \Thmref{lemma:sls-specarrow-preserves-safe}
    \item[\Cref{tr:sls-single}]
    Analogous to the corresponding case in \Thmref{lemma:v4-bigspecarrow-preserves-safe} using \Thmref{lemma:sls-specarrow-preserves-safe}
    
    \end{description}
\end{proof}

\begin{lemma}[$\SLSv$: Steps of $\specarrowSLS{}$ with safe observations preserve Safe-equivalence]\label{lemma:sls-specarrow-preserves-safe}
If
\begin{enumerate}
    \item $\SigmaSLSt{\dagger} \relsa \SigmaSLSt{'}$
    \item $\SigmaSLSt{\dagger} \specarrowSLS{\tau} \SigmaSLSt{\dagger\dagger}$
    \item $\safe{\tau}$
\end{enumerate}
Then 
\begin{enumerate}
    \item $\SigmaSLSt{'} \specarrowSLS{\tau} \SigmaSLSt{''}$
    \item $\SigmaSLSt{\dagger\dagger} \relsa \SigmaSLSt{''}$
\end{enumerate}
\end{lemma}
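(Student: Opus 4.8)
The plan is to prove the statement by inversion on the speculative step $\SigmaSLSt{\dagger} \specarrowSLS{\tau} \SigmaSLSt{\dagger\dagger}$, mirroring the structure of the corresponding $\Sv$ lemma (\Thmref{lemma:v4-specarrow-preserves-safe}); indeed the two speculative semantics differ only in which instruction triggers speculation ($\retC$ for $\SLSv$ versus $\storeC$ for $\Sv$), while the bookkeeping of windows, taints, and speculative instances is identical. First I would unpack the hypotheses: from $\SigmaSLSt{\dagger} \relsa \SigmaSLSt{'}$ and \Cref{def:safeeq} I obtain $\SigmaSLSt{\dagger} = \phiStackSLSv \cdot \tup{\taintpc, \Omega, n}$ and $\SigmaSLSt{'} = \phiStackSLSv' \cdot \tup{\taintpc, \Omega', n}$ with the \emph{same} window $n$ and the \emph{same} $\taintpc$, together with $\Omega \relsa \Omega'$, $\phiStackSLSv \relsa \phiStackSLSv'$, and in particular $\Omega(\pc) = \Omega'(\pc)$. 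As in the $\Sv$ proof I would implicitly fold in the combine rule (\Cref{tr:sls-combine}) and the context rule (\Cref{tr:sls-context}), since these are always needed to derive a step, and then case-split on the rule applied to the top instance.

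Most cases are routine and reuse \Thmref{lemma:non-spec-preserves-safe}. For the barrier cases (\Cref{tr:sls-barr} and \Cref{tr:sls-barr-spec}) the equality $\Omega(\pc) = \Omega'(\pc)$ lets me replay the same rule on $\Omega'$; the only effect is the $\pc$ increment, which preserves $\relsa$. For the non-speculating cases (\Cref{tr:sls-nospec-eps} and \Cref{tr:sls-nospec-act}) and the attacker-return case (\Cref{tr:sls-spec-att}, where no speculation is triggered), I would feed the underlying non-speculative step $\Omega \nsarrow{\tau} \Omega^{\dagger}$ together with $\Omega \relsa \Omega'$ into \Thmref{lemma:non-spec-preserves-safe}, obtaining a matching step $\Omega' \nsarrow{\tau} \Omega''$ with $\Omega^{\dagger} \relsa \Omega''$ and identical observation; combined with the shared $\taintpc$ this yields equal emitted labels, and with $\phiStackSLSv \relsa \phiStackSLSv'$ it yields the conclusion. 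The rollback case (\Cref{tr:sls-rollback}) merely pops the top instance, so $\SigmaSLSt{\dagger\dagger} \relsa \SigmaSLSt{''}$ follows directly from $\phiStackSLSv \relsa \phiStackSLSv'$.

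The main case---and the only one requiring genuine care---is the return-speculation rule \Cref{tr:sls-spec}, which pushes a fresh speculative instance: the step has the shape $\tup{\taintpc, \Omega, n+1} \specarrowSLS{\tau} \tup{\taintpc, \Omega^{\dagger}, n} \cdot \tup{\unta, \Omega^{\dagger\dagger}, j}$, where $\Omega \nsarrow{\tau} \Omega^{\dagger}$ is the architectural execution of the return and $\Omega^{\dagger\dagger}$ is the speculative bypass with $\pc$ set to $\Omega(\pc)+1$ and tainted $\unta$. Here I would (i) apply \Thmref{lemma:non-spec-preserves-safe} to relate the architectural continuations, $\Omega^{\dagger} \relsa \Omega''$, with equal observation; (ii) construct the mirrored bypass instance $\Omega'''$ from $\Omega'$ by the same $\pc \mapsto \pc+1$ update and argue $\Omega^{\dagger\dagger} \relsa \Omega'''$ from $\Omega \relsa \Omega'$ and $\Omega(\pc)=\Omega'(\pc)$; and (iii) note $j = \min(\omega,n) = j'$ because the windows coincide in $\relsa$. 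The expected obstacle is showing the two freshly pushed instances are related despite their $\pc$ being tainted $\unta$, but this is exactly where the definition of $\relsa$ helps: unsafe-tainted components impose no value constraint, so the relation on the bypass instance is immediate and the taint tags match because both are $\unta$. The only real subtlety is confirming that the window $n$ is shared between $\SigmaSLSt{\dagger}$ and $\SigmaSLSt{'}$ (so that $j=j'$), which is guaranteed by the clause of \Cref{def:safeeq} requiring identical windows in related speculative states.
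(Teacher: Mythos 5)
Your proposal matches the paper's proof essentially step for step: inversion on the speculative step, delegation of the routine cases to \Thmref{lemma:non-spec-preserves-safe} and the corresponding cases of \Thmref{lemma:v4-specarrow-preserves-safe}, and for \Cref{tr:sls-spec} the same three-part argument (relate the architectural continuations via the non-speculative lemma, mirror the bypass instance with the $\pc+1$ update using $\Omega(\pc)=\Omega'(\pc)$, and conclude $j=j'$ from the shared window). The observation that the $\unta$-tainted components of the fresh instance impose no value constraint under $\relsa$ is exactly the point the paper relies on as well.
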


\begin{proof}
    The proof proceeds by inversion on $\SigmaSLSt{\dagger} \specarrowSLS{\tau} \SigmaSLSt{\dagger\dagger}$.
    
    Most cases are similar to \Thmref{lemma:v4-specarrow-preserves-safe}.
    The interesting case arises when speculation is triggered when a $\pret$ instruction is encountered.

    \begin{description}

    \item[\Cref{tr:sls-barr}, \Cref{tr:sls-barr-spec}, \Cref{tr:sls-nospec-eps}, \Cref{tr:sls-nospec-act}, \Cref{tr:sls-rollback}, \Cref{tr:sls-spec-att}]
    These cases are analogous to the corresponding cases in \Thmref{lemma:v4-specarrow-preserves-safe}.
    
    \item[\Cref{tr:sls-spec}]
    Then we have $\trgSLS{(\taint, \Omega, n + 1) \specarrowSLS{\tau^{\taint \glb \taint_1}} \tup{\taint, \Omega^{\dagger}, n} \cdot \tup{\unta, \Omega^{\dagger\dagger}, j}}$ with $\trgSLS{\Omega(\pc)} = \trgSLS{\pret}$ and $\trgSLS{\Omega \nsarrow{\tau^{\taint_1}} \Omega^{\dagger}}$.

    Furthermore, we have $ \trgSLS{\Omega^{\dagger\dagger} = \OB{F}; \OB{I}; \OB{B}; \sigma^{\dagger\dagger}}$ and $\trgSLS{\sigma^{\dagger\dagger} = \sigma[\pc \mapsto \Omegav(\pc) + 1]}$.
    
    We use  \Thmref{lemma:non-spec-preserves-safe} on $\src{\Omega \nsarrow{\tau^{\taint_1}} \Omega^{'}}$  and $\src{\Omega \relsa \Omega^{'}}$ and get $\src{\Omega' \nsarrow{\tau^{\taint_1}} \Omega''}$ with $\trgSLS{\Omega^{\dagger} \relsa \Omega''}$.

    Next, we define $\trgSLS{\Omega'''}$ as $\trgSLS{\Omega''' = \OB{F'}; \OB{I'}; \OB{B'}; \sigma'''}$ and $\trgSLS{\sigma''' = \sigma[\pc \mapsto \sigma'(\pc) + 1]}$ where $\trgSLS{\Omega'} =  \trgSLS{\OB{F'}; \OB{I'}; \OB{B'}; \sigma'}$.

     From $\trgSLS{\Omega \relsa \Omega'}$ we can derive that $\trgSLS{\Omega^{\dagger\dagger} \relsa \Omega'''}$.

     We can now apply \Cref{tr:sls-spec} to derive $\trgSLS{(\taint', \Omega', n' + 1) \specarrowSLS{\tau^{\taint', \glb \taint_1}} \tup{\taint' \Omega'', n'} \cdot \tup{\unta, \Omega''', j'}}$.

     From $\trgSLS{\SigmaSLS^{\dagger} \relsa \SigmaSLS'}$ we get $\trgSLS{j = j'}$, $\trgSLS{\taint = \taint'}$ and we can now conclude that $\trgSLS{\tau^{\taint \glb \taint_1}} = \trgSLS{\tau^{\taint' \glb \taint_1}}$.

     $\trgSLS{\SigmaSLS{\dagger\dagger} \relsa \SigmaSLS''}$ follows from $\trgSLS{\SigmaSLS^{\dagger} \relsa \SigmaSLS'}$, $\trgSLS{\Omega^{\dagger} \relsa \Omega''}$ and $\trgSLS{\Omega^{\dagger\dagger} \relsa \Omega'''}$.
     
    \end{description}
\end{proof}
 \section{$\Jvr$: SS overapproximates SNI}

\begin{center}
	\scalebox{0.85}{
    \begin{forest}
    for tree={
        align = left,
        font = \footnotesize,
        forked edge,
    }
        [\Thmref{thm:v2-ss-impl-sni}
            [\Thmref{lemma:v2-bigspecarrow-preserves-safe}
                [\Thmref{lemma:v2-specarrow-preserves-safe}
                    [\Thmref{lemma:non-spec-preserves-safe}]
                ]
            ]
            [\Thmref{lemma:v2-low-equivalent-programs-low-equivalent-states}]
        ]
    \end{forest}
    }
\end{center}

\begin{theorem}[$\Jv$: \sstext{} implies SNI]\label{thm:v2-ss-impl-sni}
If
\begin{enumerate}
    \item $\contract{}{\Jv} \vdash \trgJ{P} : \ss $
\end{enumerate}
Then
\begin{enumerate}[label=\Roman*]
    \item $\contract{}{\Jv}  \vdash \trgJ{P} : \sni $
\end{enumerate}

\end{theorem}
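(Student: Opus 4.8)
The plan is to follow exactly the contradiction template already used for $\Sv$ (\Thmref{thm:v4:ss-impl-sni}) and $\SLSv$ (\Thmref{thm:sls-ss-impl-sni}), instantiated to the jump-speculation semantics. I would assume $\contract{}{\Jv} \vdash \trgJ{P} : \ss$ but, for contradiction, $\contract{}{\Jv} \nvdash \trgJ{P} : \sni$. Unfolding \Cref{def:sni}, this yields a low-equivalent program $\trgJ{P'}$ (so $\trgJ{P} \loweq \trgJ{P'}$) together with traces $\trgJ{\tra{_1}}$ of $\trgJ{P}$ and $\trgJ{\tra{_2}}$ of $\trgJ{P'}$ that share the same non-speculative projection, i.e.\ $\nspecProject{\trgJ{\tra{_1}}} = \nspecProject{\trgJ{\tra{_2}}}$, yet differ, $\trgJ{\tra{_1}} \neq \trgJ{\tra{_2}}$. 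Using \Cref{tr:v2-trace} on both runs, I would split each execution at the first point of divergence into a common safe prefix $\tauStack$ followed by diverging single actions, obtaining $\trgJ{\initFuncJ{(\trgJ{P})}} \bigspecarrowJ{\tauStack} \SigmaJt{\dagger} \specarrowJ{\aca{^\taint}} \SigmaJt{\dagger\dagger}$ and $\trgJ{\initFuncJ{(\trgJ{P'})}} \bigspecarrowJ{\tauStack} \SigmaJt{'} \specarrowJ{{\tau'}^{\taint'}} \SigmaJt{''}$ with $\trgJ{\aca{^\taint}} \neq \trgJ{{\tau'}^{\taint'}}$. From $\contract{}{\Jv} \vdash \trgJ{P} : \ss$ I get $\safe{\trgJ{\tra{_1}}}$, and hence both $\safe{\tauStack}$ and $\safe{\trgJ{\aca{^\taint}}}$.

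Next I would invoke the two structural lemmas from the proof tree. First, \Thmref{lemma:v2-low-equivalent-programs-low-equivalent-states} turns $\trgJ{P} \loweq \trgJ{P'}$ into safe-equivalent initial states $\initFuncJ{(\trgJ{P})} \relsa \initFuncJ{(\trgJ{P'})}$; its proof is routine and parallels \Thmref{lemma:v4-low-equivalent-programs-low-equivalent-states}. Second, \Thmref{lemma:v2-bigspecarrow-preserves-safe} applied to the safe common prefix $\tauStack$ propagates safe-equivalence through the big step, giving $\trgJ{\initFuncJ{(\trgJ{P'})}} \bigspecarrowJ{\tauStack} \SigmaJt{'''}$ with $\SigmaJt{\dagger} \relsa \SigmaJt{'''}$; determinism of $\specarrowJ$ then forces $\SigmaJt{'''} = \SigmaJt{'}$, so $\SigmaJt{\dagger} \relsa \SigmaJt{'}$. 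Applying \Thmref{lemma:v2-specarrow-preserves-safe} to the safe diverging step $\SigmaJt{\dagger} \specarrowJ{\aca{^\taint}} \SigmaJt{\dagger\dagger}$ yields a matching step $\SigmaJt{'} \specarrowJ{\aca{^\taint}} \SigmaJt{''''}$, and determinism of $\specarrowJ$ against $\SigmaJt{'} \specarrowJ{{\tau'}^{\taint'}} \SigmaJt{''}$ forces $\trgJ{\aca{^\taint}} = \trgJ{{\tau'}^{\taint'}}$, contradicting $\trgJ{\aca{^\taint}} \neq \trgJ{{\tau'}^{\taint'}}$. The crux therefore reduces to establishing \Thmref{lemma:v2-specarrow-preserves-safe}, whose routine cases (barriers, rollback, and non-speculating steps) rely on \Thmref{lemma:non-spec-preserves-safe} and are shared essentially verbatim with the $\Sv$ development.

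The main obstacle is the jump-speculation rule \Cref{tr:v2-spec}, which differs structurally from the single-instance pushes of $\Sv$ and $\SLSv$: upon an indirect jump it spawns an entire \emph{family} $\trgJ{\OB{\SigmaJ''}} = \bigcup_{l \in p} \trgJ{\tup{\Omega'', \mi{min}(\omega,n)}}$ of mispredicted instances, one per jump target $l$ of the program code $p$. To preserve $\relsa$ across this rule I would first observe that low-equivalence keeps the functions $\OB{F}$ and imports $\OB{I}$ (and hence the program $p$ and its target set $\{\,l \in p\,\}$) identical between $\trgJ{P}$ and $\trgJ{P'}$; thus the two families are indexed by the same labels and have the same length. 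Each spawned instance sets $\pc \mapsto l$ and is created with speculation taint $\unta$ (\Cref{tr:v2-t-spec}), while the remainder of its state is derived from the $\relsa$-related configurations; I would then show that the stack relation of \Cref{def:safeeq} extends element-wise along the freshly pushed instances, so $\SigmaJt{\dagger\dagger} \relsa \SigmaJt{''}$ follows from $\SigmaJt{\dagger} \relsa \SigmaJt{'}$ together with \Thmref{lemma:non-spec-preserves-safe} for the underlying non-speculative step. Matching the emitted taint $\trgJ{\taintpc \glb \taint'}$ across both runs (using $\trgJ{\taint} = \trgJ{\taint'}$ supplied by $\relsa$) completes the jump case, while all other cases mirror \Thmref{lemma:v4-specarrow-preserves-safe}.
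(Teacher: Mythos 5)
Your proposal is correct and follows essentially the same route as the paper: a contradiction argument splitting the two runs at the first divergence, using low-equivalence of initial states, safe-equivalence preservation along the common safe prefix, determinism of $\specarrowJ{}$, and the same treatment of \Cref{tr:v2-spec} (identical program code under low-equivalence gives identical families of spawned instances, related element-wise). The only cosmetic difference is that the paper applies the big-step preservation lemma to the prefix extended with the diverging action and concludes by determinism in one shot, whereas you factor out the last step through the small-step lemma as in the $\Sv$ development; both rest on the same lemmas.
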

\begin{proof}

    Let $\trgJ{P}$ be an arbitrary program such that $\contractSpec{}{\Jv} \vdash \trgJ{P} : \ss$.

    The proof proceeds by contradiction. 
    
    Assume that $\contractSpec{}{\Jv} \nvdash \trgJ{P} : \sni$. That is, there is another program $\trgJ{P'}$ and traces $\trgJ{\tra{_1}} \in \trgJ{ \contractSpec{}{\Jv}(\SInit{P}) }$ and  $\trgJ{\tra{_2}} \in \trgJ{ \contractSpec{}{\Jv}(\SInit{P'}) }$ such that $\trgJ{P} \loweq \trgJ{P'}$, $\nspecProject{\trgJ{\tra{_1}}} = \nspecProject{\trgJ{\tra{_2}}}$, and $\trgJ{\tra{_1}} \neq \trgJ{\tra{_2}}$.

    Since $\trgJ{\tra{_1}} \in \trgJ{ \contractSpec{}{\Sv}(\SInit{P}})$ we have $\trgJ{\amTracevJ{P}{\tauStackT}}$ and by \Cref{tr:v2-trace} we know that there exists $\SigmaJDS{F}$ such that $\trgJ{\initFuncJ{\trgJ{(P)}} \bigspecarrowJ{\tra{_1}}\ \SigmaJDS{F}}$. Similarly, we get  $\trgJ{\initFuncJ{(\trgJ{P'})} \bigspecarrowJ{\tra{_2}}\ \SigmaJDSP{F}}$.
    
    Combined with the fact that $\trgJ{\tra{_1}} \neq \trgJ{\tra{_2}}$, it follows that there exists speculative states $\SigmaJt{\dagger}, \SigmaJDS{}, \SigmaJt{'}, \SigmaJt{''}$ and sequences of observations $\trgJ{\tauStack, \tauStack_{end}, \tauStack'_{end}}, \trgJ{\aca{^\taint}, {\tau'}^{\taint'} } $ such that $\trgJ{\aca{^{\taint}}} \neq \trgJ{ {\tau'}^{\taint'}}$  and:
    
    \begin{align*}
    \trgJ{\initFuncJ{(\trgJ{P})}} & \trgJ{\bigspecarrowJ{\tauStack}} \SigmaJt{\dagger} \specarrowJ{\aca{^{\taint}}} \SigmaJt{\dagger\dagger} \bigspecarrowJ{\tauStack_{end}} \trgJ{\SigmaJDS{F}} 
    \\
    \trgJ{\initFuncJ{(\trgJ{P'})}} & \bigspecarrowJ{\tauStack} \SigmaJt{'} \specarrowJ{{\tau'}^{\taint'}} \SigmaJt{''} \bigspecarrowJ{\tauStack'_{end}} \trgJ{\SigmaJDSP{F}}
    \end{align*}
    
    From $\contractSpec{}{\Jv} \vdash \trgJ{P} : \ss $ we get that $\safe{\trgJ{\tra{_1}}}$.

    From \Thmref{lemma:v2-low-equivalent-programs-low-equivalent-states} we get ${\initFuncJ{(\trgJ{P})}} \relsa {\initFuncJ{(\trgJ{P'})}}$.

    Using \Thmref{lemma:v2-bigspecarrow-preserves-safe} on $\initFuncJ{(\trgJ{P})} \trgJ{\bigspecarrowJ{\tauStack \cdot \tau^{\taint}}} \SigmaJt{\dagger\dagger}$ and the fact that $\initFuncJ{(\trgJ{P})} \relsa {\initFuncJ{(\trgJ{P'})}}$ we get $\trgJ{\initFuncJ{(\trgJ{P'})}} \bigspecarrowJ{\tauStack \cdot \tau^{\taint}} \SigmaJt{'''} $ and  $\SigmaJt{\dagger\dagger} \relsa \SigmaJt{'''}$.
    
    By determinism of $\specarrowJ{}$ we get $\SigmaJt{'''} = \SigmaJt{''}$.

    However, this means that $\trgJ{\tau^{'\taint'}} = \trgJ{\tau^{\taint}}$ which is a contradiction.
\end{proof}

\begin{lemma}[$\Jv$: Low-equivalent programs have low-equivalent initial states]\label{lemma:v2-low-equivalent-programs-low-equivalent-states}
\begin{align*}
\forall \trgJ{P = M, \OB{F}, \OB{I}}, \trgJ{P' = M', \OB{F'}, \OB{I'}}. &
	\text{ if } \trgJ{P} \loweq \trgJ{P'} 
	\text{ then } \initFuncJ{P} \approx \initFuncJ{P'}
\end{align*}
\end{lemma}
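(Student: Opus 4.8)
\textbf{Proof plan for \Cref{lemma:v2-low-equivalent-programs-low-equivalent-states}.}
The plan is to follow exactly the strategy already used for the store-bypass semantics in \Thmref{lemma:v4-low-equivalent-programs-low-equivalent-states}, since the two initialisation rules \Cref{tr:v4-init} and \Cref{tr:v2-init} are structurally identical: both wrap the non-speculative initial state $\initFunc(M, \OB{F}, \OB{I})$ into a single speculative instance with window $\bot$ and $\taintpc = \safeta$. First I would unfold the definition of $\initFuncJ{(\trgJ{P})}$ via \Cref{tr:v2-init}, obtaining $\SigmaJ = \initFunc(M, \OB{F}, \OB{I}), \bot, \safeta$, and likewise $\SigmaJ' = \initFunc(M', \OB{F'}, \OB{I'}), \bot, \safeta$. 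The memory $M_0$, the register file $A_1$, and the sets $M_1, M_2$ are built by the very same comprehensions used in the $\Sv$ case, so the whole structure of the state is shared between the two semantics.

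Next I would invoke the hypothesis $\trgJ{P} \loweq \trgJ{P'}$, which by \Cref{def:loweq} unfolds to $\trgJ{\OB{F}} = \trgJ{\OB{F'}}$, $\trgJ{\OB{I}} = \trgJ{\OB{I'}}$, and $\trgJ{M} \loweq \trgJ{M'}$. Because the windows ($\bot$) and the speculation taints ($\safeta$) agree trivially on both sides, establishing $\SigmaJ \relsa \SigmaJ'$ reduces by \Cref{def:safeeq} to showing $\initFunc(M, \OB{F}, \OB{I}) \relsa \initFunc(M', \OB{F'}, \OB{I'})$, i.e.\ that the two underlying non-speculative program states are safe-equivalent. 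Since the codebase components $\OB{F}, \OB{I}$ coincide, this further reduces to the two subgoals $A_1 \relsa A_1'$ and $M_0 \relsa M_0'$. The register file equivalence is immediate from inspecting the definitions of $A_1$ and $A_1'$: both map every register to $0 : \safeta$ except $\pc$, which is mapped to the same label $\ffun{main} : \safeta$ (and $\ffun{}$ is program-independent on labels present in both). The memory equivalence $M_0 \relsa M_0'$ follows from $\trgJ{M} \loweq \trgJ{M'}$: the public default region $M_1$ and the unsafe private default region $M_2$ depend only on $\dom{M}$, which \Cref{def:loweq} guarantees agrees with $\dom{M'}$ on public and private domains, while on the explicit entries the low-equivalence of $M$ and $M'$ matches public values exactly and matches taints on private entries, which is precisely what $\relsa$ on memories demands.

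I expect there to be \emph{no substantial obstacle}: the statement is pure bookkeeping about the shape of the initial state, and the proof is genuinely identical to \Thmref{lemma:v4-low-equivalent-programs-low-equivalent-states}. Accordingly my plan is to discharge it simply by appealing to that earlier lemma, observing that \Cref{tr:v2-init} and \Cref{tr:v4-init} produce isomorphic initial states and that $\relsa$ is defined uniformly across the two semantics. The only point requiring a moment's care is the private memory: one must not confuse $\loweq$ (which only constrains the \emph{taints} of private locations, allowing values to differ) with equality, and check that $\relsa$ likewise only requires agreement of values on \emph{safe}-tainted locations, so that the differing private values are permitted on both sides. Once that alignment between the two relations is noted, the result is a direct corollary of the $\Sv$ case.
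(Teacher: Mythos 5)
Your proposal matches the paper's proof, which simply discharges this lemma as analogous to \Thmref{lemma:v4-low-equivalent-programs-low-equivalent-states}; the detailed unfolding you give (initial state via \Cref{tr:v2-init}, reduction to $A_1 \relsa A_1'$ and $M_0 \relsa M_0'$, and the careful distinction between $\loweq$ and $\relsa$ on private memory) is exactly the argument used there. No issues.
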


\begin{proof}
    Analogous to \Thmref{lemma:v4-low-equivalent-programs-low-equivalent-states}.
\end{proof}

\begin{lemma}[$\Jv$: Steps of $\bigspecarrowJ{}$ with safe observations preserve Safe-equivalence]\label{lemma:v2-bigspecarrow-preserves-safe}
If
\begin{enumerate}
    \item $\trgJ{\SigmaJ} \trgJ{\bigspecarrowJ{\tauStack}} \SigmaJt{\dagger}$ and 
    \item $\trgJ{\SigmaJ} \relsa \trgJ{\SigmaJt{'}}$
    \item $\safe{\tauStack}$ \end{enumerate}
Then there exists $\SigmaJt{''}$ such that
\begin{enumerate}
    \item $\trgJ{\SigmaJt{'}} \bigspecarrowJ{\tauStack} \SigmaJt{''}$ and
    \item $\SigmaJt{\dagger} \relsa \SigmaJt{''}$
\end{enumerate}
\end{lemma}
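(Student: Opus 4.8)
The statement to prove (\Cref{lemma:v2-bigspecarrow-preserves-safe}) asserts that big-step reductions under the $\Jv$ semantics, when all emitted observations are safe, preserve the safe-equivalence relation $\relsa$. My plan is to proceed by induction on the derivation of the big-step reduction $\trgJ{\SigmaJ} \trgJ{\bigspecarrowJ{\tauStack}} \SigmaJt{\dagger}$, exactly mirroring the structure of the analogous lemmas \Thmref{lemma:v4-bigspecarrow-preserves-safe} and \Thmref{lemma:sls-bigspecarrow-preserves-safe} that were already established for the $\Sv$ and $\SLSv$ semantics. The big-step judgement $\bigspecarrowJ{}$ is generated by three rules: reflection (\Cref{tr:v2-reflect}), the silent step (\Cref{tr:v2-silent}), and the single-observation step (\Cref{tr:v2-single}), so the induction has exactly three cases.

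First I would handle the base case \Cref{tr:v2-reflect}: here $\tauStack = \empTr$ and $\SigmaJt{\dagger} = \trgJ{\SigmaJ}$, so I can reapply \Cref{tr:v2-reflect} to $\SigmaJt{'}$ to obtain $\SigmaJt{''} = \SigmaJt{'}$, and the conclusion $\SigmaJt{\dagger} \relsa \SigmaJt{''}$ follows immediately from the hypothesis $\trgJ{\SigmaJ} \relsa \trgJ{\SigmaJt{'}}$. For the two inductive cases \Cref{tr:v2-silent} and \Cref{tr:v2-single}, the derivation decomposes as a shorter big-step $\trgJ{\SigmaJ \bigspecarrowJ{\tra{^\taint}} \SigmaJt{\dagger\dagger}}$ followed by a single small-step $\trgJ{\SigmaJt{\dagger\dagger} \specarrowJ{\cdot} \SigmaJt{\dagger}}$. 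The plan is to apply the induction hypothesis to the shorter big-step (noting $\safe{\tra{^\taint}}$ follows from $\safe{\tauStack}$) to get a matching reduction $\SigmaJt{'} \bigspecarrowJ{\tra{^\taint}} \SigmaJt{'''}$ with $\SigmaJt{\dagger\dagger} \relsa \SigmaJt{'''}$, and then invoke the single-step preservation lemma \Thmref{lemma:v2-specarrow-preserves-safe} on the final small-step (using that its observation is safe, again because $\safe{\tauStack}$) to extend the reduction by one step, producing the required $\SigmaJt{''}$ and establishing $\SigmaJt{\dagger} \relsa \SigmaJt{''}$. I would then reassemble the full big-step using the appropriate rule (\Cref{tr:v2-silent} or \Cref{tr:v2-single}).

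The key external ingredient is \Thmref{lemma:v2-specarrow-preserves-safe}, the single-small-step analogue, whose structure I expect to parallel \Thmref{lemma:v4-specarrow-preserves-safe}. I would rely on it as a black box here (it appears earlier in the proof tree for this lemma), so within this proof the reasoning is entirely compositional and routine. The main obstacle is not conceptual but bookkeeping: in the \Cref{tr:v2-single} case one must carefully track the side-conditions about whether the step occurs inside the attacker (the $\ffun{}$ lookups determining $\trgJ{f}$ versus $\trgJ{f'}$ and the test $\trgJ{f \in \OB{I}}$), since these govern whether the emitted action is suppressed to $\epsilon$ or exposed. Because $\relsa$ guarantees that the two related states agree on the program counter and on the functions being executed, these side-conditions are preserved across related states, so the same rule instance applies to $\SigmaJt{'''}$ as to $\SigmaJt{\dagger\dagger}$. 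The one point requiring genuine attention specific to $\Jv$ (as opposed to $\Sv$ or $\SLSv$) is that jump speculation spawns an entire \emph{set} of speculative instances $\bigcup_{l \in p} \tup{\Omega'', j}$, but since the big-step judgement only ever reduces on top of the stack and this lemma threads through the already-produced trace, the set-spawning behaviour is confined to the small-step lemma and does not complicate the inductive argument here.
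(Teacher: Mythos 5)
Your proposal is correct and follows essentially the same route as the paper: induction on the big-step derivation with the three cases (reflection, silent, single), discharging the base case trivially and the inductive cases by applying the induction hypothesis to the shorter big-step and then the single-step preservation lemma (\Cref{lemma:v2-specarrow-preserves-safe}) to the final step. Your observation that the $\Jv$-specific set-spawning of speculative instances is entirely confined to the small-step lemma is exactly why the paper can dispatch these cases as ``analogous to'' the $\Sv$ version.
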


\begin{proof}
The proof proceeds by induction on $\trgJ{\SigmaJ} \trgJ{\bigspecarrowJ{\tauStack}} \SigmaJt{\dagger}$:

\begin{description}
    \item[\Cref{tr:v2-reflect}]
    
    Then we have $\SigmaJ \bigspecarrowJ{\tauStack} \SigmaJt{\dagger}$ where $\tauStack = \empTr$ and $\SigmaJt{\dagger} = \SigmaJ$.
    
    Thus, we use \Cref{tr:v2-reflect} on $\SigmaJt{'}$ and get $\SigmaJt{'} \bigspecarrowJ{\tauStack'} \SigmaJt{''}$, where where $\tauStack' = \empTr$ and $\SigmaJt{''} = \SigmaJt{'}$.
    
    Thus, $\tauStack = \tauStack'$ trivially holds and $\SigmaJt{\dagger} \relsa \SigmaJt{''}$ holds by $\trgJ{\SigmaJ} \relsa \trgJ{\SigmaJt{'}}$.
    
    \item[\Cref{tr:v4-silent}]
    
    Then we have $\trgJ{\SigmaJ \bigspecarrowJ{\tra{^\taint}} \SigmaJt{\dagger\dagger}}$ and $\trgJ{\SigmaJt{\dagger\dagger} \specarrowJ{\epsilon} \SigmaJt{\dagger}}$.
    
    Applying the IH on $\SigmaJ \bigspecarrowJ{\tra{^\taint}} \SigmaJ''$ we get:
    \begin{align*}
    \SigmaJt{'} \bigspecarrowJ{\tauStack''} \SigmaJt{'''} \\
    \SigmaJt{\dagger\dagger} \relsa \SigmaJt{'''} \\
    \safe{\tauStack''}
    \end{align*}
    
    The case is analogous to \Thmref{lemma:v4-bigspecarrow-preserves-safe} using \Thmref{lemma:v2-specarrow-preserves-safe}.

    \item[\Cref{tr:v2-single}]
    
    Then we have $\trgJ{\SigmaJ \bigspecarrowJ{\tra{^\taint}} \SigmaJt{\dagger\dagger}}$ and $\trgJ{\SigmaJt{\dagger\dagger} \specarrowJ{\tau^{\taint}} \SigmaJt{\dagger}}$.
    
    Applying the IH on $\SigmaJ \bigspecarrowJ{\tra{^\taint}} \SigmaJ''$ we get:
    \begin{align*}
        \SigmaJt{'} \bigspecarrowJ{\tauStack''} \SigmaJt{'''} \\
        \SigmaJt{\dagger\dagger} \relsa \SigmaJt{'''} \\
        \safe{\tauStack''}
    \end{align*}
    
    The case is analogous to \Thmref{lemma:v4-bigspecarrow-preserves-safe} using \Thmref{lemma:v2-specarrow-preserves-safe}.
\end{description}
\end{proof}

\begin{lemma}[$\Jv$: Steps of $\specarrowJ{}$ with safe observations preserve Safe-equivalence]\label{lemma:v2-specarrow-preserves-safe}
If
\begin{enumerate}
    \item $\SigmaJt{\dagger} \relsa \SigmaJt{'}$
    \item $\SigmaJt{\dagger} \specarrowJ{\tau} \SigmaJt{\dagger\dagger}$
    \item $\safe{\tau}$
\end{enumerate}
Then 
\begin{enumerate}
    \item $\SigmaJt{'} \specarrowJ{\tau} \SigmaJt{''}$
    \item $\SigmaJt{\dagger\dagger} \relsa \SigmaJt{''}$
\end{enumerate}
\end{lemma}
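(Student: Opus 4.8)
The plan is to prove \Thmref{lemma:v2-specarrow-preserves-safe} by inversion on the speculative step $\SigmaJt{\dagger} \specarrowJ{\tau} \SigmaJt{\dagger\dagger}$, mirroring the structure of the already-established $\Sv$ and $\SLSv$ analogues (\Thmref{lemma:v4-specarrow-preserves-safe} and \Thmref{lemma:sls-specarrow-preserves-safe}). As in those proofs, I would first implicitly unpack the combine rule and the context rule, which are always present, and then record the standard consequences of $\SigmaJt{\dagger} \relsa \SigmaJt{'}$: the two top instances share the same program $p$, the same window $n$ and the same speculation flag, their non-speculative states are safe-equivalent ($\Omega \relsa \Omega'$), and in particular $\Omega(\pc) = \Omega'(\pc)$ because the $\pc$ register is always tainted $\safeta$.

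For the rules that perform a single non-speculative step---\Cref{tr:v2-barr}, \Cref{tr:v2-barr-spec}, \Cref{tr:v2-nospec-eps}, \Cref{tr:v2-nospec-act}, the attacker case \Cref{tr:v2-spec-att}, and the rollback \Cref{tr:v2-rollback}---the argument is essentially identical to the corresponding cases in \Thmref{lemma:v4-specarrow-preserves-safe}: I would apply \Thmref{lemma:non-spec-preserves-safe} to the underlying step $\Omega \nsarrow{} \Omega^{\dagger}$ (using $\safe{\tau}$ and $\Omega \relsa \Omega'$) to obtain a matching step from $\Omega'$ with a safe-equivalent result and an identical emitted label, and then rebuild the matching speculative step, using $\Omega(\pc) = \Omega'(\pc)$ to discharge the side conditions that determine $f$, $f'$ and their membership in $\OB{I}$.

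The genuinely new case is jump speculation (\Cref{tr:v2-spec}), which, unlike the single-successor rules of the other semantics, pushes a whole family $\OB{\SigmaJ''} = \bigcup_{l} \tup{\Omega[\pc \mapsto l], j}$ of mispredicted instances, one per admissible jump target. The key observation is that the set of admissible targets $\{l \in p : \ffun{l} = f,\ f \in \OB{F}\}$ depends only on the program $p$ and the (safe-tainted, hence equal) current function $f$, both of which are shared between $\SigmaJt{\dagger}$ and $\SigmaJt{'}$; therefore the two executions spawn the same set of targets in lockstep. For each target $l$, the pushed instance is tainted $\unta$ and has $\pc$ set to the same $l$ with the same window $j = \min(\omega, n)$, so safe-equivalence of the new instances follows directly from $\Omega \relsa \Omega'$. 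Combining this with the architectural successor instance $\tup{\Omega', n}$ (handled via \Thmref{lemma:non-spec-preserves-safe} as above) yields $\SigmaJt{\dagger\dagger} \relsa \SigmaJt{''}$, with the emitted label agreeing by the same lemma. The main obstacle is precisely this lockstep argument over the target family: I must confirm that the filtering conditions in \Cref{tr:v2-spec} are data-independent, so that no difference in the private memories between the two low-equivalent runs can alter which instances are generated; this data-independence is exactly what allows the union $\OB{\SigmaJ''}$ to be related instance-by-instance rather than merely as a set.
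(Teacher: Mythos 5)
Your proposal is correct and follows essentially the same route as the paper's proof: inversion on the step, delegation of the non-speculation cases to the corresponding cases of \Thmref{lemma:v4-specarrow-preserves-safe} via \Thmref{lemma:non-spec-preserves-safe}, and a dedicated argument for \Cref{tr:v2-spec} showing that both runs spawn the same family of mispredicted instances (one per admissible target $l$), each pair related because the pushed instances differ from $\Omega \relsa \Omega'$ only in the safely-tainted $\pc$ and the shared window $j$. Your explicit remark that the target-set filter is data-independent is exactly the point the paper relies on (implicitly) when it relates $\OB{\SigmaJ''}$ and $\OB{\SigmaJ'''}$ instance-by-instance.
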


\begin{proof}

We have 
\begin{align*}
    \SigmaJt{\dagger} =& \phiStackSv \cdot (\taint, \Omega, n) \\
    \SigmaJt{'} =& \phiStackSv' \cdot (\taint', \Omega', n) \\ 
    \Omega \relsa& \Omega' \\
    \taint =& \taint' \\
    \Omega(\pc) =& \Omega'(\pc)
\end{align*}

The proof proceeds by inversion on $\SigmaJt{\dagger} \specarrowJ{\tau} \SigmaJt{\dagger\dagger}$:
(Implicitly unpack the context rule since it is always used)

\begin{description}

    \item[\Cref{tr:v2-barr}, \Cref{tr:v2-barr-spec}, \Cref{tr:v2-nospec-eps}, \Cref{tr:v2-nospec-act}, \Cref{tr:v2-spec-att}, \Cref{tr:v2-rollback}]
    These cases are analogous to the corresponding cases in \Thmref{lemma:v4-specarrow-preserves-safe}.

    \item[\Cref{tr:v2-spec}]

    Then we have $\trgJ{(\taint, \Omega, n + 1) \specarrowJ{\tau^{\taint \glb \taint_1}} \tup{\taint, \Omega^{\dagger}, n} \cdot \OB{\SigmaJ''}}$ with $\trgJ{\Omega(\pc)} = \trgJ{\pjmp{x}}$ and $\trgJ{\Omega \nsarrow{\tau^{\taint_1}} \Omega^{\dagger}}$.

    Here $\trgJ{\OB{\SigmaJ''}} = \bigcup_{l \in p} \trgJ{\tup{\Omegav, j}} \text{ where } \trgJ{\Omegav = \OB{F}; \OB{I}; \OB{B}; \sigmav[\pc \mapsto l]}$.

    We use  \Thmref{lemma:non-spec-preserves-safe} on $\src{\Omega \nsarrow{\tau^{\taint_1}} \Omega^{\dagger}}$  and $\src{\Omega \relsa \Omega^{'}}$ and get $\src{\Omega' \nsarrow{\tau^{\taint_1}} \Omega''}$ with $\trgJ{\Omega^{\dagger} \relsa \Omega''}$.
    
    Next, choose $\trgJ{\OB{\SigmaJ''}} = \bigcup_{l \in p} \trgJ{\tup{\Omegav, j}} \text{ where } \trgJ{\Omegav' = \OB{F}; \OB{I}; \OB{B}; \sigmav'[\pc \mapsto l]}$.

     From $\trgJ{\Omega \relsa \Omega'}$ we can derive that $\trgJ{\Omega^{\dagger\dagger} \relsa \Omega'''}$ for each $\Omega^{\dagger\dagger} \in \trgJ{\OB{\SigmaJ''}}$ and $\Omega''' \in \trgJ{\OB{\SigmaJ'''}}$.

     We can now apply \Cref{tr:v2-spec} to derive 
     $\trgJ{(\taint', \Omega', n' + 1) \specarrowJ{\tau^{\taint' \glb \taint_1}} \tup{\taint' \Omega'', n'} \cdot \OB{\SigmaJ'''}}$.

     From $\trgJ{\SigmaJ \relsa \SigmaJ'}$ we get, $\trgJ{\taint = \taint'}$ and we can now conclude that $\trgJ{\tau^{\taint \glb \taint_1}} = \trgJ{\tau^{\taint' \glb \taint_1}}$.

     $\trgJ{\SigmaJ{\dagger\dagger} \relsa \SigmaJ''}$ follows from $\trgJ{\SigmaJ^{\dagger} \relsa \SigmaJ'}$, $\trgJ{\Omega^{\dagger} \relsa \Omega''}$ and $\trgJ{\Omega^{\dagger\dagger} \relsa \Omega'''}$.

\end{description}
\end{proof}
 \section{$\Rvr$: SS overapproximates SNI}

We extend $\approx$ to cover the Rsb $\Rsb$ as well.
\begin{align*}
    \trgR{\Rsb \cdot (n : \taint)} \approx \trgR{\Rsb' \cdot n' : \taint} \isdef&\
    \trgR{n : \taint} \approx \trgR{n' : \taint'} \land \trg{\Rsb} \approx \trgR{\Rsb'}
\end{align*}

\begin{center}
	\scalebox{0.85}{

    \begin{forest}
    for tree={
        align = left,
        font = \footnotesize,
        forked edge,
    }
        [\Thmref{thm:v5-ss-impl-sni}
            [\Thmref{lemma:v5-bigspecarrow-preserves-safe}
                [\Thmref{lemma:v5-specarrow-preserves-safe}
                    [\Thmref{lemma:non-spec-preserves-safe}]
                ]
            ]
            [\Thmref{lemma:v5-low-equivalent-programs-low-equivalent-states}]
        ]
    \end{forest}
}
\end{center}

\begin{theorem}[$\Rv$: \sstext{} implies SNI]\label{thm:v5-ss-impl-sni}
If
\begin{enumerate}
    \item $\contractSpec{}{\Rv} \vdash \trgR{P} : \ss $
\end{enumerate}
Then
\begin{enumerate}[label=\Roman*]
    \item $\contractSpec{}{\Rv} \vdash \trgR{P} : \sni $
\end{enumerate}

\end{theorem}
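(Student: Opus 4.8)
The plan is to replay, essentially verbatim, the template already established for \Thmref{thm:v4:ss-impl-sni}, \Thmref{thm:sls-ss-impl-sni}, and \Thmref{thm:v2-ss-impl-sni}, instantiated for the return-speculation semantics $\contractSpec{}{\Rv}$. I would argue by contradiction: assume $\contractSpec{}{\Rv}\vdash\trgR{P}:\ss$ but $\contractSpec{}{\Rv}\nvdash\trgR{P}:\sni$. Unfolding the negation of $\sni$ gives a second program $\trgR{P'}$ with $\trgR{P}\loweq\trgR{P'}$ and traces $\trgR{\tra{_1}}$, $\trgR{\tra{_2}}$ that agree on their non-speculative projections, $\nspecProject{\trgR{\tra{_1}}}=\nspecProject{\trgR{\tra{_2}}}$, yet differ, $\trgR{\tra{_1}}\neq\trgR{\tra{_2}}$. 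Via \Cref{tr:v5-trace} I expose the two big-step runs $\initFuncR{(\trgR{P})}\bigspecarrowR{\tra{_1}}\SigmaRDS{F}$ and $\initFuncR{(\trgR{P'})}\bigspecarrowR{\tra{_2}}\SigmaRDSP{F}$ and split each at the first diverging action into a common prefix $\tauStack$ followed by single steps emitting $\trgR{\aca{^{\taint}}}\neq\trgR{{\tau'}^{\taint'}}$.

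I would first discharge the three supporting ingredients, which mirror the $\Sv$ case exactly. A lemma \Thmref{lemma:v5-low-equivalent-programs-low-equivalent-states} (analogue of \Thmref{lemma:v4-low-equivalent-programs-low-equivalent-states}) establishes that low-equivalent programs yield safe-equivalent initial states; this is immediate because $\initFuncR{}$ sets the RSB to $\emptyset$ identically in both runs, so the RSB component of $\relsa$ holds trivially at the start. A single-step lemma \Thmref{lemma:v5-specarrow-preserves-safe} states that a $\specarrowR{}$ step emitting a safe observation maps safe-equivalent states to safe-equivalent states and preserves the emitted action; and its multi-step version \Thmref{lemma:v5-bigspecarrow-preserves-safe} follows by induction on the big-step derivation, delegating each step to \Thmref{lemma:v5-specarrow-preserves-safe}, just as \Thmref{lemma:v4-bigspecarrow-preserves-safe} does. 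With these in place, the endgame is purely mechanical: $\ss$ gives $\safe{\trgR{\tra{_1}}}$ (so the prefix $\tauStack$ carries only safe observations); \Thmref{lemma:v5-low-equivalent-programs-low-equivalent-states} relates the initial states; \Thmref{lemma:v5-bigspecarrow-preserves-safe} propagates safe-equivalence along $\tauStack$; and \Thmref{lemma:v5-specarrow-preserves-safe} plus determinism of $\specarrowR{}$ forces $\trgR{\aca{^{\taint}}}=\trgR{{\tau'}^{\taint'}}$, contradicting divergence.

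The genuinely new work, and the main obstacle, lives inside \Thmref{lemma:v5-specarrow-preserves-safe}, where I must thread the return-stack buffer through the safe-equivalence relation. Here I would use the extension of $\relsa$ to buffers declared just before the theorem, relating $\trgR{\Rsb\cdot(n:\taint)}$ and $\trgR{\Rsb'\cdot n':\taint}$ componentwise. The cases for $\pbarrier$, non-speculating instructions, and rollback transfer from \Thmref{lemma:v4-specarrow-preserves-safe} unchanged, reducing to \Thmref{lemma:non-spec-preserves-safe}; the delicate cases are \Cref{tr:v5-call}, \Cref{tr:v5-retS}, and especially the misprediction rule \Cref{tr:v5-spec}. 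For \Cref{tr:v5-call} the pushed address is $\Omega(\pc)+1$, which is safe-tainted (the $\pc$ is always safe) and, since related states agree on $\Omega(\pc)$, identical across the two runs, so the extended buffers remain related; dually \Cref{tr:v5-retS} and \Cref{tr:v5-retE} pop related tops. The crux is \Cref{tr:v5-spec}: the speculative target $\trgR{l}$ is read off the top of the RSB, and precisely because every RSB entry is safe-tainted and equal across the two executions, both runs misspeculate to the \emph{same} target $\trgR{l}$; I then show the spawned instance $\trgR{\tup{\Omega'',\Rsb',j}}$ is safe-equivalent in both runs, its $\pc$ set to the common $\trgR{l}$ and tainted $\unta$ by \Cref{tr:v5-t-spec}. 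Thus the technical heart of the $\Rv$ proof is isolating and maintaining the invariant that all RSB entries are safe and equal, which collapses buffer-relatedness to equality exactly where the misprediction target is consumed.
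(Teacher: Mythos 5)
Your proposal matches the paper's proof essentially verbatim: the same proof by contradiction splitting the two runs at the first diverging action, the same three supporting lemmas (safe-equivalence of initial states, single-step and multi-step preservation of safe-equivalence), and the same identification of the RSB-carrying cases \Cref{tr:v5-call}, \Cref{tr:v5-retS}, and \Cref{tr:v5-spec} as the only genuinely new work, resolved by the same observation that RSB entries are safe-tainted program-counter successors and hence equal across the two related runs. No gaps; this is the paper's argument.
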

\begin{proof}
    Let $\trgR{P}$ be an arbitrary program such that $\contractSpec{}{\Rv} \vdash \trgR{P} : \ss$.

    The proof proceeds by contradiction. 
    
    Assume that $\contractSpec{}{\Rv} \nvdash \trgR{P} : \sni$. That is, there is another program $\trgR{P'}$ and traces $\trgR{\tra{_1}} \in \trgR{ \contractSpec{}{\Rv}(\SInit{P}) }$ and  $\trgR{\tra{_2}} \in \trgR{ \contractSpec{}{\Rv}(\SInit{P'}) }$ such that $\trgR{P} \loweq \trgR{P'}$, $\nspecProject{\trgR{\tra{_1}}} = \nspecProject{\trgR{\tra{_2}}}$, and $\trgR{\tra{_1}} \neq \trgR{\tra{_2}}$.

    Since $\trgR{\tra{_1}} \in \trgR{ \contractSpec{}{\Rv}(\SInit{P}})$ we have $\trgR{\amTracevR{P}{\tauStackT}}$ and by \Cref{tr:v5-trace} we know that there exists $\SigmaRDS{F}$ such that $\trgR{\initFuncR{\trgR{(P)}} \bigspecarrowR{\tra{_1}}\ \SigmaRDS{F}}$. Similarly, we get  $\trgR{\initFuncR{(\trgR{P'})} \bigspecarrowR{\tra{_2}}\ \SigmaRDSP{F}}$.
    
    Combined with the fact that $\trgR{\tra{_1}} \neq \trgR{\tra{_2}}$, it follows that there exists speculative states $\SigmaRt{\dagger}, \SigmaRDS{}, \SigmaRt{'}, \SigmaRt{''}$ and sequences of observations $\trgR{\tauStack, \tauStack_{end}, \tauStack'_{end}}, \trgR{\aca{^\taint}, {\tau'}^{\taint'} } $ such that $\trgR{\aca{^{\taint}}} \neq \trgR{ {\tau'}^{\taint'}}$  and:
    
    \begin{align*}
    \trgR{\initFuncR{(\trgR{P})}} & \trgR{\bigspecarrowR{\tauStack}} \SigmaRt{\dagger} \specarrowR{\aca{^{\taint}}} \SigmaRt{\dagger\dagger} \bigspecarrowR{\tauStack_{end}} \trgR{\SigmaRDS{F}} 
    \\
    \trgR{\initFuncR{(\trgR{P'})}} & \bigspecarrowR{\tauStack} \SigmaRt{'} \specarrowR{{\tau'}^{\taint'}} \SigmaRt{''} \bigspecarrowR{\tauStack'_{end}} \trgR{\SigmaRDSP{F}}
    \end{align*}
    
    From $\contractSpec{}{\Rv} \vdash \trgR{P} : \ss $ we get that $\safe{\trgR{\tra{_1}}}$.

    From \Thmref{lemma:v5-low-equivalent-programs-low-equivalent-states} we get ${\initFuncR{(\trgR{P})}} \relsa {\initFuncR{(\trgR{P'})}}$.

    Using \Thmref{lemma:v5-bigspecarrow-preserves-safe} on $\initFuncR{(\trgR{P})} \trgR{\bigspecarrowR{\tauStack \cdot \tau^{\taint}}} \SigmaRt{\dagger\dagger}$ and the fact that $\initFuncR{(\trgR{P})} \relsa {\initFuncR{(\trgR{P'})}}$ we get $\trgR{\initFuncR{(\trgR{P'})}} \bigspecarrowR{\tauStack \cdot \tau^{\taint}} \SigmaRt{'''} $ and  $\SigmaRt{\dagger\dagger} \relsa \SigmaRt{'''}$.
    
    By determinism of $\specarrowR{}$ we get $\SigmaRt{'''} = \SigmaRt{''}$.

    However, this means that $\trgR{\tau^{'\taint'}} = \trgR{\tau^{\taint}}$ which is a contradiction.
\end{proof}

\begin{lemma}[$\Rv$: Low-equivalent programs have low-equivalent initial states]\label{lemma:v5-low-equivalent-programs-low-equivalent-states}
\begin{align*}
\forall \trgR{P = M, \OB{F}, \OB{I}}, \trgR{P' = M', \OB{F'}, \OB{I'}}. &
	\text{ if } \trgR{P} \loweq \trgR{P'} 
	\text{ then } \initFuncR{\trgR{(P)}} \approx \initFuncR{\trgR{(P')}}
\end{align*}
\end{lemma}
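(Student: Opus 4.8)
The plan is to mirror the proof of \Thmref{lemma:v4-low-equivalent-programs-low-equivalent-states}, since the $\Rv$ initial state differs from the $\Sv$ one only by the presence of the return-stack buffer. First I would unfold $\initFuncR$ via \Cref{tr:v5-init}, obtaining for $\trgR{P}$ the speculative state $\trgR{\tup{\initFunc(M,\OB{F},\OB{I}), \emptyset, \bot, \safeta}}$ and analogously for $\trgR{P'}$; note in particular that the RSB is initialised to $\emptyset$ and the speculation window to $\bot$ in both. Then, from $\trgR{P} \loweq \trgR{P'}$ (\Cref{def:loweq}) I would extract $\OB{F} = \OB{F'}$, $\OB{I} = \OB{I'}$, and $\trgR{M} \loweq \trgR{M'}$.

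Next I would unfold the goal $\initFuncR{\trgR{(P)}} \approx \initFuncR{\trgR{(P')}}$ using the $\Rv$ instance of $\approx$ (\Cref{def:safeeq}) together with the extension of $\approx$ to return-stack buffers stated just above the lemma. This reduces the goal to four parts: equality of the windows ($\bot = \bot$) and speculation flags ($\safeta = \safeta$), which hold trivially; safe-equivalence of the return-stack buffers, $\emptyset \approx \emptyset$, which holds by the base case of the extended relation since both are empty; and safe-equivalence of the non-speculative program states, $\initFunc(M,\OB{F},\OB{I}) \approx \initFunc(M',\OB{F'},\OB{I'})$.

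For the last part I would reuse the argument from the $\Sv$ case verbatim. Because $\OB{F} = \OB{F'}$ and $\OB{I} = \OB{I'}$, it suffices to show that the initial register files satisfy $A_1 \approx A_1'$ and the preallocated memories satisfy $M_0 \approx M_0'$. The register files coincide (every register is mapped to $0 : \safeta$ and $\pc$ to $\ffun{main} : \safeta$ in both), so $A_1 \approx A_1'$ is immediate. For $M_0 \approx M_0'$, the default-filled regions are identical ($0 : \safeta$ on the public side, $0 : \unta$ on the private side), while on the explicitly-defined domain $\trgR{M} \loweq \trgR{M'}$ gives matching taints everywhere and matching values on the public (safe) locations; since $\approx$ only constrains values at $\safeta$-tainted locations and the private part is entirely $\unta$ by validity, this is exactly what $\approx$ requires.

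I do not expect a genuine obstacle here: the lemma is purely a bookkeeping fact about the initial state, and the only difference from \Thmref{lemma:v4-low-equivalent-programs-low-equivalent-states} is the RSB, which---being initialised empty---is trivially related. The one point worth stating carefully is the interface between $\loweq$ (which demands equal values and taints on public memory but only equal taints on private memory) and $\approx$ (which demands equal values only at safe taints); checking that $\loweq$ indeed entails $\approx$ on $M_0$ is the single non-mechanical line of the argument.
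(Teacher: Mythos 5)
Your proposal is correct and matches the paper's argument exactly: the paper's proof of this lemma is literally ``analogous to the $\Sv$ case since the RSB is initialized to be empty,'' which is precisely the reduction you carry out in detail. The extra care you take at the $\loweq$/$\approx$ interface is a faithful expansion of what the $\Sv$ proof already does.
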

\begin{proof}
    Analogous to \Thmref{lemma:v4-low-equivalent-programs-low-equivalent-states} since the RSB is initialized to be empty.
\end{proof}

\begin{lemma}[$\Rv$: Steps of $\bigspecarrowR{}$ with safe observations preserve Safe-equivalence]\label{lemma:v5-bigspecarrow-preserves-safe}
If
\begin{enumerate}
    \item $\trgR{\SigmaR} \trgR{\bigspecarrowR{\tauStack}} \SigmaRt{\dagger}$ and 
    \item $\trgR{\SigmaR} \relsa \trgR{\SigmaRt{'}}$
    \item $\safe{\tauStack}$ \end{enumerate}
Then there exists $\SigmaRt{''}$ such that
\begin{enumerate}
    \item $\trgR{\SigmaRt{'}} \bigspecarrowR{\tauStack} \SigmaRt{''}$ and
    \item $\SigmaRt{\dagger} \relsa \SigmaRt{''}$
\end{enumerate}
\end{lemma}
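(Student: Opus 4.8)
The plan is to mirror the structure of the analogous big-step preservation lemmas already established for the other semantics---\Thmref{lemma:v4-bigspecarrow-preserves-safe}, \Thmref{lemma:sls-bigspecarrow-preserves-safe}, and \Thmref{lemma:v2-bigspecarrow-preserves-safe}---proceeding by induction on the derivation of $\trgR{\SigmaR} \bigspecarrowR{\tauStack} \SigmaRt{\dagger}$. The big-step judgement for $\Rv$ is generated by exactly three rules, \Cref{tr:v5-reflect}, \Cref{tr:v5-silent}, and \Cref{tr:v5-single}, so the induction has three cases; in each inductive case the single step is discharged by the single-step preservation lemma \Thmref{lemma:v5-specarrow-preserves-safe}, which is where all the $\Rv$-specific reasoning is concentrated.

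For the base case (\Cref{tr:v5-reflect}) I would note that $\tauStack = \empTr$ and $\SigmaRt{\dagger} = \trgR{\SigmaR}$, take $\SigmaRt{''} = \trgR{\SigmaRt{'}}$, reapply \Cref{tr:v5-reflect} to obtain $\trgR{\SigmaRt{'}} \bigspecarrowR{\empTr} \SigmaRt{''}$, and observe that $\SigmaRt{\dagger} \relsa \SigmaRt{''}$ is just assumption (2). For the silent case (\Cref{tr:v5-silent}) I would decompose the derivation into a shorter big step $\trgR{\SigmaR} \bigspecarrowR{\tauStack} \SigmaRt{\dagger\dagger}$ followed by a silent step $\SigmaRt{\dagger\dagger} \specarrowR{\epsilon} \SigmaRt{\dagger}$; the induction hypothesis yields $\trgR{\SigmaRt{'}} \bigspecarrowR{\tauStack} \SigmaRt{'''}$ with $\SigmaRt{\dagger\dagger} \relsa \SigmaRt{'''}$, and then \Thmref{lemma:v5-specarrow-preserves-safe} applied to the (trivially safe) silent step produces a matching step out of $\SigmaRt{'''}$ that preserves $\relsa$; reassembling via \Cref{tr:v5-silent} closes the case. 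The action-emitting case (\Cref{tr:v5-single}) is identical except that the safety of the emitted action $\tau^\taint$ is supplied by the hypothesis $\safe{\tauStack}$, which is exactly the premise required by \Thmref{lemma:v5-specarrow-preserves-safe}.

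The one detail in the \Cref{tr:v5-single} case that needs care is the side-condition governing whether the step emits its full labelled action or is silenced to $\epsilon$: the rule emits $\epsilon$ precisely when the functions before and after the step coincide and lie in the imports $\trgR{\OB{I}}$. Because safe-equivalence forces the two related states to agree on the program counter, the functions that $\ffun{\cdot}$ returns before and after the step agree across the two executions, so both executions take the same branch of this side-condition and emit the same label---this is what lets the big step out of $\SigmaRt{'''}$ carry exactly the same trace. I therefore expect the genuine difficulty to lie not in this lemma but in the underlying single-step lemma \Thmref{lemma:v5-specarrow-preserves-safe}: it must propagate the safe-equivalence of the return-stack buffer $\trgR{\Rsb}$ (for which $\relsa$ has been extended in this section) through call pushes, matching returns, and, most delicately, a speculative return (\Cref{tr:v5-spec}), showing that a misprediction launched from related states targets the same RSB-top location and yields related nested speculative instances. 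At the big-step level the RSB is threaded transparently and poses no additional obstacle.
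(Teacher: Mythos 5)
Your proposal matches the paper's proof exactly: the paper proves this lemma by induction on the big-step derivation with the three cases \Cref{tr:v5-reflect}, \Cref{tr:v5-silent}, and \Cref{tr:v5-single}, declaring each analogous to the corresponding case of \Thmref{lemma:v4-bigspecarrow-preserves-safe} and discharging the single steps via \Thmref{lemma:v5-specarrow-preserves-safe}, which is where the RSB-specific reasoning lives. Your additional observations about the silencing side-condition and the pc agreement are consistent elaborations of what the paper leaves implicit.
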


\begin{proof}
    \begin{description}
    \item[\Cref{tr:v5-reflect}]

    Analogous to the corresponding case in \Thmref{lemma:v4-bigspecarrow-preserves-safe}
    \item[\Cref{tr:v5-silent}]

    Analogous to the corresponding case in \Thmref{lemma:v4-bigspecarrow-preserves-safe} using \Thmref{lemma:v5-specarrow-preserves-safe}
    \item[\Cref{tr:v5-single}]
    Analogous to the corresponding case in \Thmref{lemma:v4-bigspecarrow-preserves-safe} using \Thmref{lemma:v5-specarrow-preserves-safe}
    
    \end{description}
\end{proof}

\begin{lemma}[$\Rv$: Steps of $\specarrowR{}$ with safe observations preserve Safe-equivalence]\label{lemma:v5-specarrow-preserves-safe}
If
\begin{enumerate}
    \item $\SigmaRt{\dagger} \relsa \SigmaRt{'}$
    \item $\SigmaRt{\dagger} \specarrowR{\tau} \SigmaRt{\dagger\dagger}$
    \item $\safe{\tau}$
\end{enumerate}
Then 
\begin{enumerate}
    \item $\SigmaRt{'} \specarrowR{\tau} \SigmaRt{''}$
    \item $\SigmaRt{\dagger\dagger} \relsa \SigmaRt{''}$
\end{enumerate}
\end{lemma}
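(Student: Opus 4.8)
The plan is to mirror the proofs of the corresponding lemmas for the other speculative semantics, in particular \Thmref{lemma:v4-specarrow-preserves-safe} for $\Sv$, and proceed by inversion on the step $\SigmaRt{\dagger} \specarrowR{\tau} \SigmaRt{\dagger\dagger}$. The one new structural ingredient of $\Rv$ is the return-stack buffer $\Rsb$ stored in each speculative instance, which the relation $\relsa$ now also constrains through its extension to RSBs. First I would unpack $\SigmaRt{\dagger} \relsa \SigmaRt{'}$ into the now-standard facts about the topmost instances---equal taints $\trgR{\taint} = \trgR{\taint'}$, related configurations $\trgR{\Omega} \relsa \trgR{\Omega'}$, equal program counters $\trgR{\Omega(\pc)} = \trgR{\Omega'(\pc)}$, and the relation on the tails $\trgR{\phiStackRv} \relsa \trgR{\phiStackRv'}$---together with the new fact $\trgR{\Rsb} \relsa \trgR{\Rsb'}$. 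Following the convention noted in the $\Sv$ development, I would silently unfold the context and combine rules, and in every case lift the underlying non-speculative step using \Thmref{lemma:non-spec-preserves-safe}, relying on $\safe{\tau}$ to recover both the matching observation $\tau$ and the preserved relation on the resulting configurations.

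For the cases that do not manipulate the RSB---the barrier rules (\Cref{tr:v5-barr}, \Cref{tr:v5-barr-spec}), the non-speculative rules (\Cref{tr:v5-nospec-eps}, \Cref{tr:v5-nospec-act}), the attacker-side rules (\Cref{tr:v5-call-att}, \Cref{tr:v5-spec-att}), and the rollback rule (\Cref{tr:v5-rollback})---the argument is verbatim that of \Thmref{lemma:v4-specarrow-preserves-safe}, since $\Rsb$ is simply carried through unchanged and $\trgR{\Rsb} \relsa \trgR{\Rsb'}$ is preserved trivially. The genuinely new cases are \Cref{tr:v5-call}, which pushes $\trgR{\Omega(\pc)+1}$ onto the RSB; \Cref{tr:v5-retS} and \Cref{tr:v5-retE}, which pop from (resp.\ read the emptiness of) the RSB; and the speculation rule \Cref{tr:v5-spec}, which pops the top return target $\trgR{l}$ of the RSB and pushes a fresh speculative instance returning to $\trgR{l}$ with $\trgR{\pc \mapsto l : \unta}$. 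In each of these I would show that the corresponding step fires from $\SigmaRt{'}$ and that $\relsa$ is re-established on both the continuation instance and the pushed instance.

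The key observation that makes all RSB cases go through is that RSB entries, like the return addresses stored in frames $\trgR{B}$, are \emph{always} safe-tainted: they are created by \Cref{tr:v5-call} from $\trgR{\Omega(\pc)}$, and the program counter is always tagged $\safeta$. Consequently $\trgR{\Rsb} \relsa \trgR{\Rsb'}$ forces the RSBs to be equal elementwise, so the side-conditions that discriminate between the return rules---$\trgR{\Rsb} = \trgR{\Rsb' \cdot l}$ versus emptiness, and $\trgR{l} \neq \trgR{B(0)}$ versus $\trgR{l} = \trgR{B(0)}$---hold identically in the two related runs, and hence the \emph{same} rule fires from $\SigmaRt{'}$. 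For \Cref{tr:v5-spec} specifically, the pushed mispredicting instance carries pc taint $\unta$, so no value equality on its pc is required, and relating it reduces to relating the shared $\trgR{\Rsb'}$ and the configurations obtained from the non-speculative step.

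I expect this bookkeeping about the RSB---and in particular its interaction with the frame stack $\trgR{B}$ in the return rules---to be the main obstacle, though it is ultimately routine once the invariant that RSB and frame contents are safe-tainted is made explicit. Everything else is a transcription of the $\Sv$ argument, after which the enclosing $\Rv$ theorem \Thmref{thm:v5-ss-impl-sni} follows from \Thmref{lemma:v5-bigspecarrow-preserves-safe} and \Thmref{lemma:v5-low-equivalent-programs-low-equivalent-states} exactly as in the other semantics.
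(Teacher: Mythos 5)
Your proposal follows essentially the same route as the paper's proof: inversion on the step, deferring the non-RSB cases to the $\Sv$ argument, and handling \Cref{tr:v5-call}, \Cref{tr:v5-retS}, \Cref{tr:v5-retE}, and \Cref{tr:v5-spec} by showing the RSBs (and frame stacks) coincide under $\relsa$ so the same rule fires on both sides. Your explicit remark that RSB entries are always safe-tainted, forcing elementwise equality, is exactly the justification the paper leaves implicit when it asserts $\trgR{\Rsb} = \trgR{\Rsb'}$ from the state relation.
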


\begin{proof}

    The proof proceeds by inversion on $\SigmaRt{\dagger} \specarrowR{\tau} \SigmaRt{\dagger\dagger}$.
    Most cases are similar to \Thmref{lemma:v4-specarrow-preserves-safe}.
    The interesting case arises when speculation is triggered when a $\pret$ instruction is encountered.

    \begin{description}

    \item[\Cref{tr:v5-barr}, \Cref{tr:v5-barr-spec}, \Cref{tr:v5-nospec-eps}, \Cref{tr:v5-nospec-act}, \Cref{tr:v5-spec-att}, \Cref{tr:v5-rollback}]
    These cases are analogous to the corresponding cases in \Thmref{lemma:v4-specarrow-preserves-safe}

    \item[\Cref{tr:v5-call}]
    Then we have $\trgR{(\taint, \Omega, \Rsb, n + 1) \specarrowR{\tau^{\taint \glb \taint_1}} (\taint, \Omega^{\dagger}, \Rsb^{\dagger}, n)}$ with $\Omega(\pc) = \pcall{f}$, $\trgR{\Rsb^{\dagger}} = \trgR{\Rsb \cdot (\Omega(\pc) + 1)}$, $\trgR{f \notin \OB{I}}$ and $\trgR{\Omega \nsarrow{\tau^{\taint_1}} \Omega^{\dagger}}$.

    By using \Thmref{lemma:non-spec-preserves-safe} on $\trgR{\Omega \nsarrow{\tau^{\taint_1}} \Omega^{\dagger}}$ and $\trgR{\Omega \relsa \Omega^{'}}$ and get $\trgR{\Omega' \nsarrow{\tau^{\taint_1}} \Omega''}$ with $\trgR{\Omega^{\dagger}} \relsa \trgR{\Omega''}$.

    Since $\trgR{\Omega(\pc)} = \pcall{f}$, $\trgR{\Omega(\pc)} = \trgR{\Omega'(\pc)}$ thus $\trgR{\Rsb^{''}} = \trgR{\Rsb' \cdot (\Omega'(\pc) + 1)}$ and $\trgR{\Omega' \nsarrow{\tau^{\taint_1}} \Omega''}$,
    we can use \Cref{tr:v5-call} to derive $\trgR{(\taint', \Omega', \Rsb', n' + 1) \specarrowR{\tau^{\taint' \glb \taint_1}} (\taint', \Omega'', \Rsb'', n')}$

    Since $\trgR{\Omega^{\dagger}} \relsa \trgR{\Omega''}$ and $\trgR{\Rsb^{\dagger}} = \trgR{\Rsb^{''}}$ we get $(\trgR{\taint, \Omega^{\dagger}, \Rsb^{\dagger}, n}) \relsa (\trgR{\taint', \Omega'', \Rsb^{''}, n'})$.
    
    Combined with $\phiStackRv \relsa \phiStackRv'$ we get $\trgR{\SigmaRt{\dagger\dagger} \relsa \SigmaRt{''}}$.

    Furthermore, $\trgR{\tau^{\taint' \glb \taint_1}} = \trgR{\tau^{\taint \glb \taint_1}}$ follows from $\trgR{\taint} = \trgR{\taint'}$
    
    \item[\Cref{tr:v5-call-att}]
    Then we have $\trgR{(\taint, \Omega, \Rsb, n + 1) \specarrowR{\tau^{\taint \glb \taint_1}} (\taint, \Omega^{\dagger}, \Rsb, n)}$ with $\Omega(\pc) = \pcall{f}$, $\trgR{f \in \OB{I}}$ and $\trgR{\Omega \nsarrow{\tau^{\taint_1}} \Omega^{\dagger}}$.
    
    The case is analogous to \Cref{tr:v5-call} above except that the RSB does not change.

    \item[\Cref{tr:v5-retS}]
    Then we have $\trgR{(\taint, \Omega, \Rsb, n + 1) \specarrowR{\tau^{\taint \glb \taint_1}} (\taint, \Omega^{\dagger}, \Rsb^{\dagger}, n)}$ with $\Omega(\pc) = \pret$, $\trgR{\Rsb} = \trgR{\Rsb^{\dagger} \cdot l}$, $\trgR{f \notin \OB{I}}$, $\trgR{B} = \trgR{B; l'}$, $\trgR{l} = \trgR{l'}$ and $\trgR{\Omega \nsarrow{\tau^{\taint_1}} \Omega^{\dagger}}$.

    By using \Thmref{lemma:non-spec-preserves-safe} on $\trgR{\Omega \nsarrow{\tau^{\taint_1}} \Omega^{\dagger}}$ and $\trgR{\Omega \relsa \Omega^{'}}$ and get $\trgR{\Omega' \nsarrow{\tau^{\taint_1}} \Omega''}$ with $\trgR{\Omega^{\dagger}} \relsa \trgR{\Omega''}$.

    Since $\trgR{\Omega} \relsa \trgR{\Omega'}$ we get $\trgR{B} = \trgR{B'}$ and from $\SigmaRt{\dagger} \relsa \SigmaRt{'}$ we get $\trgR{\Rsb} = \trgR{\Rsb'}$ and $\trgR{n} = \trgR{n'}$.

    Thus, we fulfill all conditions for \Cref{tr:v5-retS} and derive \\
     $\trgR{(\taint', \Omega', \Rsb', n' + 1) \specarrowR{\tau^{\taint' \glb \taint_1}} (\taint', \Omega'', \Rsb'', n')}$.

    $\trgR{\Rsb^{\dagger}} = \trgR{\Rsb''}$ follows from $\trgR{\Rsb} = \trgR{\Rsb'}$.
    
    Since $\trgR{\Omega^{\dagger}} \relsa \trgR{\Omega''}$ and $\trgR{\Rsb^{\dagger}} = \trgR{\Rsb^{''}}$ we get $(\trgR{\taint, \Omega^{\dagger}, \Rsb^{\dagger}, n}) \relsa (\trgR{\taint', \Omega'', \Rsb^{''}, n'})$.
    
    Combined with $\phiStackRv \relsa \phiStackRv'$ we get $\trgR{\SigmaRt{\dagger\dagger} \relsa \SigmaRt{''}}$.

     Furthermore, $\trgR{\tau^{\taint' \glb \taint_1}} = \trgR{\tau^{\taint \glb \taint_1}}$ follows from $\trgR{\taint} = \trgR{\taint'}$

    \item[\Cref{tr:v5-retE}]
    Then we have $\trgR{(\taint, \Omega, \empty, n + 1) \specarrowR{\tau^{\taint \glb \taint_1}} (\taint, \Omega^{\dagger}, \empty, n)}$ with $\Omega(\pc) = \pret$ and $\trgR{\Omega \nsarrow{\tau^{\taint_1}} \Omega^{\dagger}}$.
    
    The case is analogous to \Cref{tr:v5-call} above except that the RSB does not change.
    
    \item[\Cref{tr:v5-spec}]
    Then we have $\trgR{(\taint, \Omega, \Rsb, n + 1) \specarrowR{\tau^{\taint \glb \taint_1}} \tup{\taint, \Omega^{\dagger}, \Rsb^{\dagger}, n} \cdot \tup{\unta, \Omega^{\dagger\dagger}, \Rsb^{\dagger}, j}}$ with $\trgR{\Omega(\pc)} = \trgR{\pret}$, $\trgR{\Rsb} = \trgR{\Rsb^{\dagger} \cdot l}$, $\trgR{f \notin \OB{I}}$, $\trgR{B} = \trgR{B; l'}$, $\trgR{l} \neq \trgR{l'}$ and $\trgR{\Omega \nsarrow{\tau^{\taint_1}} \Omega^{\dagger}}$.

    Furthermore, we have $ \trgR{\Omega^{\dagger\dagger} = \OB{F}; \OB{I}; \OB{B}; \sigma^{\dagger\dagger}}$ and $\trgR{\sigma^{\dagger\dagger} = \sigma[\pc \mapsto l]}$.
    
    We use  \Thmref{lemma:non-spec-preserves-safe} on $\src{\Omega \nsarrow{\tau^{\taint_1}} \Omega^{\dagger}}$  and $\src{\Omega \relsa \Omega^{'}}$ and get $\src{\Omega' \nsarrow{\tau^{\taint_1}} \Omega''}$ with $\trgR{\Omega^{\dagger} \relsa \Omega''}$.

    Next, we define $\trgR{\Omega'''}$ as $\trgR{\Omega''' = \OB{F'}; \OB{I'}; \OB{B'}; \sigma'''}$ and $\trgR{\sigma''' = \sigma[\pc \mapsto l]}$ where $\trgR{\Omega'} =  \trgR{\OB{F'}; \OB{I'}; \OB{B'}; \sigma'}$.

     From $\trgR{\Omega \relsa \Omega'}$ we can derive that $\trgR{\Omega^{\dagger\dagger} \relsa \Omega'''}$.

    From $\SigmaRt{\dagger} \relsa \SigmaRt{'}$ we get $\trgR{\Rsb} = \trgR{\Rsb'}$ and $\trgR{n} = \trgR{n'}$.

    We can now apply \Cref{tr:v5-spec} to derive $\trgR{(\taint', \Omega', \Rsb', n' + 1) \specarrowR{\tau^{\taint' \glb \taint_1}} \tup{\taint' \Omega'', \Rsb^{''}, n'} \cdot \tup{\unta, \Omega''', j'}}$.

     From $\trgR{\SigmaR \relsa \SigmaR'}$ we get $\trgR{j = j'}$, $\trgR{\taint = \taint'}$ and we can now conclude that $\trgR{\tau^{\taint \glb \taint_1}} = \trgR{\tau^{\taint' \glb \taint_1}}$.

     $\trgR{\SigmaR{\dagger\dagger} \relsa \SigmaR''}$ follows from $\trgR{\SigmaR^{\dagger} \relsa \SigmaR'}$, $\trgR{\Omega^{\dagger} \relsa \Omega''}$, $\trgR{\Rsb^{\dagger}} = \trgR{\Rsb''}$ and $\trgR{\Omega^{\dagger\dagger} \relsa \Omega'''}$.
     
    \end{description}
\end{proof}

\newpage
\section{$\Svr$: THE LFENCE COMPILER FOR \specs}\label{comp:v4-fence}

The countermeasure against \specs is to disable STL speculation on the hardware. 
A possible compiler would insert a fence after every $\storeC{}$ instruction.

\begin{align*}
    \complfenceS{ M ; \OB{F} ; \OB{I}} &= \trgS{ \complfenceS{M} ; \complfenceS{\OB{F}} ; \complfenceS{\OB{I}}}
	\\\\
	\complfenceS{ \srce} &= \trgS{\emptyset}
	\\
	\complfenceS{ \OB{I}\cdot f } &= \trgS{ \complfenceS{ \OB{I} }\cdot \complfenceS{f} }
	\\\\
	\complfenceS{ M ; -n\mapsto v : \unta} &= \trgS{\complfenceS{M} ; -\complfenceS{n} \mapsto \complfenceS{v} : \unta}
	\\\\
	\complfenceS{p_1;p_2} &= \trgS{\complfenceS{p_1} ; \complfenceS{p_2}}
	\\
	\complfenceS{l : i} &= \trgS{\complfenceS{l} : \complfenceS{i}} ~~ \text{if $i \neq \storeC$}
	\\\\
	\complfenceS{\pskip} &= \trgS{\pskip} 
	\\
	\complfenceS{\passign{x}{e}} &= \trgS{\passign{x}{\complfenceS{e}}}
        \\
        \complfenceS{\pcondassign{x}{e}{e'}} &= \trgS{\pcondassign{x}{\complfenceS{e}}{\complfenceS{e'}}}
        \\
        \complfenceS{\ploadprv{x}{e}} &= \trgS{\ploadprv{x}{\complfenceS{e}}}
	\\
	\complfenceS{\pload{x}{e}} &= \trgS{\pload{x}{\complfenceS{e}}}
	\\
        \complfenceS{\pstoreprv{x}{e}} &= \trgS{\pstoreprv{x}{\complfenceS{e}}}
        \\
	\complfenceS{\pjz{x}{l'}} &= \trgS{\pjz{x}{\complfenceS{l'}}} 
	\\
	\complfenceS{\pjmp{e}} &= \trgS{\pjmp{\complfenceS{e}}} 
	\\
	\complfenceS{\pcall{f}} &= \trgS{\pcall{f}} ~ \text{Here $f$ is a function name}
	\\
	\complfenceS{\pret} &= \trgS{\pret}
	\\\\
    \complfenceS{l : \pstore{x}{e}} &= \trgS{l} : \trgS{\pstore{x}{e}} ; l_{new} : \trgS{\pbarrier}  ~ \text{where $l \subseteq l_{new}$ and $l_{new} \leq next(l)$}
    \\\\
    \complfenceS{l} &= \trgS{l}
    \\
    \complfenceS{n} &= \trgS{n}
    \\
    \complfenceS{ \ominus e} &= \trgS{ \ominus \complfenceS{e}}
    \\
    \complfenceS{e_1 \otimes e_2} &= \trgS{\complfenceS{e_1} \otimes \complfenceS{e_2}}
\end{align*}

\subsection{Compilation of instructions}
Note that the compilation of instructions is a function from programs to programs. Since $l : i$ belongs to the syntactic category of programs. This means stuff like $\trgS{p(l)} = \complfenceS{i}$ does not make much sense because $p(l)$ is a partial function from labels to instructions.
Thus, when we state $\trgS{l : i} = \complfenceS{l} : \complfenceS{i}$ it should be $\trgS{l : i} = \complfenceS{l : i}$. However, for example for a store instruction, the compiler will insert an lfence. Thus, now we have a sequence.
$\complfenceS{l : i} = p_1; p_2$ which cannot be equal.
Thus, in this case we can state this: 
\begin{align*}
  \complfenceS{l : i} =& p_c \\
  \trgS{p_c} =& \trgS{l : i} ; \trgS{p_c'}\\
  \complfenceS{l : i} =&  \trgS{l : i} ; \trgS{p_c'}
\end{align*}
Recovering the relation to $\trgS{l : i}$. This means, that in the compiled program, we need to handle all steps introduced by the compilation in one big step. Otherwise, this relation will not hold, because we would be somewhere in $\trgS{p_c'}$.
We will see that sequence in the following proof to relate a source instruction to its compilation.
This forces the proof to handle all additional instructions added by the compiler.

\begin{theorem}[$\Sv$: The lfence compiler is \rdss]\label{thm:v4-lfence-comp-rdss}(\showproof{v4-lfence-comp-rdss})
	\begin{align*}
		\contract{}{\Sv} \vdash \complfenceS{\cdot} : \rdss
	\end{align*}
\end{theorem}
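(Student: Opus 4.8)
The plan is to establish $\contract{}{\Sv} \vdash \complfenceS{\cdot} : \rdss$ by unfolding the property-free characterisation of \rdss (\Cref{def:rdss}) and constructing, for any target trace produced by $\complfenceS{\src{P}}$ under the $\Sv$ semantics, a corresponding source trace produced by $\src{P}$ under the non-speculative semantics, related by \reltext. Concretely, given $\ctx{}$, $\trac{^{\taint}}$ with $\com{\ctx{}\hole{\comp{\src{P}}}} \amTracevS{}{\trac{^\taint}}$, I would back-translate the context via $\backtrfencec{\cdot}$ to obtain $\ctxs{}$, and then proceed by induction on the length of the target execution $\trgS{\initFuncS{(\ctx{}\hole{\comp{\src{P}}})} \bigspecarrowS{\trac{^\taint}} \SigmaSt{F}}$, maintaining a cross-language invariant relating the speculative target state $\SigmaS$ to a non-speculative source state $\src{\Omega}$. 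The central claim is that every speculative sub-trace (everything enclosed between $\startl{\Sv}$ and $\rollbl{\Sv}$) produced by the compiled program carries \emph{only} safe ($\safeta$) actions, so that \reltext can absorb them via rules \Cref{tr:tr-rel-safe-a}, \Cref{tr:tr-rel-safe-h}, and \Cref{tr:tr-rel-rollb}, while non-speculative steps map one-to-one to source steps.

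The key structural fact driving the safety argument is the compiler's behaviour on stores: $\complfenceS{\src{l : \pstore{x}{e}}} = \trgS{l : \pstore{x}{e}}; l_{new} : \trgS{\pbarrier}$. This means that in the compiled program, the only instruction that triggers store-bypass speculation (\Cref{tr:v4-skip}) is \emph{immediately} followed by an \barrierKywd. I would first prove a \emph{trapped-speculation} lemma: whenever the $\Sv$ semantics speculates on a store in $\complfenceS{\src{P}}$, the newly pushed speculative instance has its program counter pointing at the inserted \barrierKywd, so the very next step in that transaction is \Cref{tr:v4-barr-spec}, which sets the speculation window to $0$ and forces an immediate rollback (\Cref{tr:v4-rollback}). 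Hence no data-dependent action (\loadObsKywd, \storeObsKywd, \pcObsKywd) is ever emitted inside an $\Sv$-transaction in compiled code, so every such transaction contributes an empty (or purely safe) action sequence to the trace. This is precisely the invariant needed to discharge \reltext, and it mirrors the Trapped Speculation reasoning of \Cref{def:trapped-spec-paper}, though here we need the concrete version tied to the compiler definition rather than the abstract one.

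I would then assemble the global argument: the non-speculative skeleton of the compiled execution simulates the source execution step-for-step (the compiler is otherwise the identity on instructions, modulo the extra \barrierKywd and the relabelling $l \subseteq l_{new} \leq next(l)$, which only appears on a non-speculative path as a silent \pskip-like step), and every speculative detour is safe and hence relatable. Care must be taken with the \emph{context}: speculation only starts when control is inside the component (the side condition $\trgS{f} \notin \trgS{\OB{I}}$ in \Cref{tr:v4-skip} vs. \Cref{tr:v4-skip-att}), and the back-translated attacker must produce matching control-flow actions at component/context boundaries; I would rely on the shared back-translation and on the bookkeeping lemmas ($\ffun{}$ invariance, function-map extension) already provided. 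The main obstacle, I expect, is the simulation bookkeeping around the inserted \barrierKywd in the \emph{architectural} (non-speculative) execution: the compiled program executes one extra \barrierKywd step after every store that does not appear in the source, so I must show this step is silent (emits $\epsilon$ via \Cref{tr:v4-barr} when $w=\bot$) and that the cross-language state relation is preserved across it, reconstructing the one-step correspondence with the single source store instruction. Packaging the pair (store-step, barrier-step) as matching the single source store — exactly the ``handle all compiler-introduced instructions in one big step'' remark in the compilation-of-instructions discussion — is the delicate part of the induction.
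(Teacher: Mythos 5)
Your proposal is correct and follows essentially the same route as the paper: back-translate the context, relate initial states, and run a (backward) simulation between the target big-step execution and the source, with the compiled $\storeC;\barrierKywd$ pair handled as one block whose speculative detour immediately hits the barrier, zeroes the window, and rolls back, so the transaction contributes only a $\rollbl{}$ absorbed by the trace relation. The only cosmetic differences are that you factor the trapped-speculation fact into a separate lemma (the paper proves it inline in the store case of the forward simulation) and that the paper obtains the backward simulation for compiled steps from forward simulation plus determinism rather than directly.
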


\begin{theorem}[$\Sv$: All lfence-compiled programs are \rss]\label{thm:v4-all-lfence-comp-are-rdss}(\showproof{v4-all-lfence-comp-are-rdss})
	\begin{align*}
		\forall\src{P}\ldotp \contract{}{\Sv} \vdash\complfenceS{P} : \rss
	\end{align*}
\end{theorem}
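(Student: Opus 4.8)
The plan is to obtain this statement as an immediate corollary of the property-free result \Cref{thm:v4-lfence-comp-rdss}, together with the observation that \emph{every} source program is trivially robustly speculatively safe under the non-speculative semantics. Concretely, I would proceed in three small steps, the only non-mechanical content being the discharge of a vacuous precondition.

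First, I would establish that $\contract{}{NS} \vdash \src{P} : \rss$ holds for \emph{every} source program $\src{P}$. This is the robust analogue of \Cref{thm:ss-sni-source}: for any valid attacker $\ctxc{}$, the whole program $\ctxc{}\hole{\src{P}}$ is again a program of $\src{L}$, and the non-speculative semantics has no speculation rules, so every action it emits is tagged $\safeta$ and every trace coincides with its non-speculative projection. Hence $\contract{}{NS} \vdash \ctxc{}\hole{\src{P}} : \ss$ for all valid $\ctxc{}$, which is exactly $\contract{}{NS} \vdash \src{P} : \rss$. This follows verbatim from the reasoning used to prove \Cref{thm:ss-sni-source}, now quantified over attackers.

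Second, I would invoke \Cref{thm:v4-lfence-comp-rdss} to learn that the lfence compiler $\complfenceS{\cdot}$ enjoys the property-free characterization, and then apply \Cref{thm:rdss-impl-rdsp} to conclude that $\complfenceS{\cdot}$ satisfies the robust speculative safety \emph{preservation} criterion \rssp{} of \Cref{def:rdsspc}. Unfolding that criterion gives: for every source $\src{P}$, if $\contract{}{NS} \vdash \src{P} : \rss$ then $\contract{}{\Sv} \vdash \complfenceS{\src{P}} : \rss$. Finally, I would combine the two steps: the premise $\contract{}{NS} \vdash \src{P} : \rss$ is discharged by the first step for \emph{all} $\src{P}$, so the conclusion $\contract{}{\Sv} \vdash \complfenceS{\src{P}} : \rss$ holds unconditionally, which is precisely the claim.

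There is no genuine obstacle here beyond recognising that the precondition of \rssp{} is always met in the source setting; the real work lives in \Cref{thm:v4-lfence-comp-rdss}, which supplies, for each target trace of $\complfenceS{\src{P}}$, a related source trace via $\rels$, all of whose actions are $\safeta$, and $\rels$ admits only $\safeta$-tainted interleavings and taint-matching action pairs, forcing every target action to be $\safeta$ too. If a self-contained argument were preferred instead, I would reason directly on the $\Sv$ semantics of $\complfenceS{\src{P}}$: any store-bypass transaction started by \Cref{tr:v4-skip} resumes at the compiler-inserted $\pbarrier$, which immediately collapses the speculation window to $0$ (\Cref{tr:v4-barr-spec}) and triggers a rollback, so no data-dependent (and hence potentially $\unta$-tainted) observation is ever emitted inside a speculative window.
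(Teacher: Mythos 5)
Your proposal is correct and follows essentially the same route as the paper: the paper likewise discharges the \rssp{} precondition by noting that every source program is trivially \rss{} under $\contract{}{NS}$ (its \Cref{thm:ss-sni-source}), obtains \rssp{} from \Cref{thm:v4-lfence-comp-rdss} via the equivalence \Cref{thm:rdss-eq-rdsp} (you use the implication \Cref{thm:rdss-impl-rdsp}, which is the only direction needed), and instantiates. No substantive difference.
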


\begin{center}
	\scalebox{0.7}{
	\begin{forest}
		for tree={
			align = left,
			font = \footnotesize,
			forked edge,
			}
			[\Thmref{thm:v4-all-lfence-comp-are-rdss}
			[\Thmref{thm:v4-lfence-comp-rdss}
			[\Thmref{thm:v4-corr-bt-lfence}
			[\Thmref{thm:v4-ini-state-rel}
			[\Cref{thm:v4-heap-rel-comp-lfence}]
			[\Cref{thm:v4-heap-rel-bt-lfence}, name=heap-bt]
			]
			[\Thmref{thm:v4-bwd-sim-lfence}
			[\Thmref{thm:v4-bwd-sim-comp-steps-lfence}
			[\textbf{\Thmref{thm:v4-fwd-sim-stm-lfence}}
			[\Thmref{thm:v4-fwd-sim-exp-lfence}]
			]
			]
			[\Thmref{thm:v4-back-sim-bts-lfence}   
			[\Thmref{thm:v4-back-sim-bte-lfence}, name=exp-bt]
			]
			]
			]
			]
			]
			\node (sniSource) at (-4,3) {\Thmref{thm:ss-sni-source}};
			\node (rsscEq) [below=of sniSource] {\Thmref{thm:rdss-eq-rdsp}};
			\node [font=\footnotesize](test) at (-4,-7) {\Thmref{thm:v4-val-rel-bt-lfence}};
			\draw[->] (test.west) to [bend left=10] (heap-bt.south);
			\draw[->] (test.south) to [bend right] (exp-bt.south);
			\draw[red, thick, dotted] ($(sniSource.north west)+(-0.3,0.3)$)  rectangle ($(rsscEq.south east)+(0.9,-0.3)$);
		\end{forest}
	}
	\end{center}
		
The most interesting proof is \Thmref{thm:v4-fwd-sim-stm-lfence} since one has to reason about the speculation.

\subsection{Properties of the Context-based Backtranslation}\label{sec:ctx-bt-fence-props}

We derive the usual backward simulation result from forward simulation and determinism of the semantics.
Values are only numbers. so they are related if they are the same. 
Memories are related if they map related addresses (which are numbers) to related values.

\mytoprule{\text{Memory relation} \hreldef \text{Register Relation} \rrel \text{Value relation} \vreldef }
\begin{center}
	\typerule{Memory - base }{}{
		\srce \hrel \trgSe
	}{hrel-b}
	\typerule{Memory - ind }{
		\src{M} \hrel \trgS{M}
		\\
		\src{z} \vrel \trgS{z}
		&
		\src{v} \vrel \trgS{v}
		&
		\src{\sigma} \equiv \taintS
	}{
		\src{M ; z\mapsto v:\taint} \hrel \trgS{M; z\mapsto v:\taintS}
	}{hrel-i}
	\typerule{Memory - start }{
		\src{M} \hrel \trgS{M}
		&
		\src{M'} \hrel \trgS{M'}
		\\
		\src{v} \vrel \trgS{v}
		&
		\src{\sigma} \equiv \taintS
	}{
		\src{M ; 0\mapsto v:\sigma ; M'} \hrel \trgS{M ;0\mapsto v:\taintS ; M'}
	}{hrel-start}

    \typerule{ Register File }{
		\forall x, \src{A}(x) = \src{v} : \taint
		&
		\trgS{A}(x) = \trgS{v} : \taintS
		\\
		\src{v} \vrel \trgS{v}
		& 
		\taint \equiv \taintS
	}{
		\src{A} \rrel \trgS{A}
	}{rsrel}
 
	\typerule{Value - num }{
		\src{z} \equiv \trgS{z} &
		\com{z}\in\mb{Z}
	}{
		\src{z} \vrel \trgS{z}
	}{vr}

    \typerule{Value - label }{
		\src{l} \equiv \trgS{l} &
		\com{l}\in\mb{\labelset}
	}{
		\src{l} \vrel \trgS{l}
	}{vr}
\end{center}
\botrule

Register Files are related if the same register maps to related values.

\begin{insight}
    All compilers that we investigate fulfil the property that they do not delete source instructions and labels. If one would consider an optimizing compiler that e.g. deletes unnecessary instructions then the program relations does not hold.
    Then one would use the relation in Programs2
\end{insight}

\mytoprule{\text{Register relation} \rreldef \text{ Program relation} \creldef  \text{ State relation} \sreldef}
\begin{center}

    \typerule{Stack - base }{}{
		\srce \brel \trgSe
	}{v4-brel-b}
    \typerule{Stack - start }{
		\src{B} \brel \trgS{B}
		&
		\src{\OB{n}} \brel \trgS{\OB{n}}
	}{
		\src{B ; \OB{n}} \brel \trgS{B ; \OB{n}}
	}{v4-brel-start}

 \typerule{Stack Region - ind }{
		\src{\OB{n}} \brel \trgS{\OB{n}} 
		\\
		\src{z} \vrel \trgS{z}
		&
		\src{\sigma} \equiv \taintS
        & 
        \src{v} \vrel \trgS{v}
	}{
		\src{\OB{n} ; z \mapsto v:\sigma} \brel \trgS{\OB{n} ; z\mapsto v:\taintS}
	}{v4-brel-region}

\typerule{ Programs }{
        \forall \src{l : i} \in \src{p} \text{ and } \ffun{\src{l}} = \src{f} \text{ and } \ffun{\complfenceS{\src{l}}} = \trgS{f} 
        \\
		\text{ if } \src{f} \in \src{\OB{F}} \text{ and } \src{f} \in \src{\OB{f}}
		\text{ then } \trgS{p}(\complfenceS{l}) = \complfenceS{i}
		\\
  \forall \trgS{l : i} \in \trgS{p} \ldotp \ffun{\trgS{l}} = \trgS{f} \text{ and } \trgS{f} \notin \src{\OB{f}} \text{ and } \trgS{f} \in \trgS{\OB{F}}\text{ then }
    \src{p(\backtrfencec{\trgS{l}})} = \backtrfencec{\trgS{i}}
		\\
	}{
		\src{\OB{F}; \src{p}} \crel_{\src{\OB{f}}} \trgS{\OB{F}; \trgS{p}} 
	}{v4-comps}

 \typerule{ States }{
        \src{A} \rrel \trgS{A}
        &
		\src{\OB{F}; p} \crel_{\src{\OB{f''}}} \trgS{\OB{F}; p}
		&
		\src{M} \hrel \trgS{M}
		&
        \src{B} \brel \trgS{B}
        &
        \src{\OB{I}} \equiv \trgS{\OB{I}}
	}{
		\src{\OB{F};\OB{I},\OB{B}, \tup{p, M, A} } \srel_{\src{\OB{f''}}} \trgS{\OB{F};\OB{I},\OB{B}, \tup{p, M, A}, \bot, \safeta }
	}{v4-states}
 
\end{center}
\botrule

We define a helper function $\gprog{\src{p}}{\src{f}}$ returns the code corresponding to the function $\src{f}$.

Components are related according to a list of function names $\OB{f}$ which identify compiled code. We need this because we lose the information which functions are defined by the Program $P$ when we plug P together with the context A. Since only the functions in P will be compiled, we index the state relation by exactly those functions (See \Cref{thm:v4-ini-state-rel}).
All functions in this list have a compiled counterpart in the target component while all functions not in the list have a backtranslated counterpart in the source component.

\begin{theorem}[$\Sv$: Correctness of the Backtranslation for lfence]\label{thm:v4-corr-bt-lfence}(\showproof{v4-corr-bt-lfence})
	\begin{align*}
		\text{ if } 
			&\
			\trgS{ \amTracevS{\ctxS{}\hole{\complfenceS{P}}}{\tra{^{\taint}}} }
		\\
		\text{ then }
			&
			\src{ \Trace{\backtrfencec{\ctxS{}}\hole{P}}{ \tras{^{\sigma}}} }
		\\
		\text{ and }
			&\
			\tras{^{\sigma}} \rels \trgS{\tra{^{\taint}}}
	\end{align*}
\end{theorem}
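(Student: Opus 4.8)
The plan is to derive \Cref{thm:v4-corr-bt-lfence} from an initial-state-relation lemma and a backward-simulation lemma, exactly following the dependency tree sketched above. First I would assume $\trgS{ \amTracevS{\ctxS{}\hole{\complfenceS{P}}}{\tra{^{\taint}}} }$ and invert \Cref{tr:v4-trace} to expose a terminating big-step run $\initFuncS{(\ctxS{}\hole{\complfenceS{P}})} \bigspecarrowS{\tra{^{\taint}}} \SigmaS'$ with $\finTypef{\SigmaS'}$. The goal then becomes constructing a matching non-speculative source run of $\backtrfencec{\ctxS{}}\hole{P}$ whose trace is $\rels$-related to $\tra{^{\taint}}$; I would obtain it by replaying the target run through a backward simulation seeded by relating the two initial states.

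For the base of the induction, I would invoke \Cref{thm:v4-ini-state-rel} to show that the initial source state of $\backtrfencec{\ctxS{}}\hole{P}$ and the initial target speculative state of $\ctxS{}\hole{\complfenceS{P}}$ are related by $\srel$. Here $\srel$ is indexed by the names of the functions originating from $P$, so that those functions relate their compiled code to the source via $\crel$, while the context functions relate their target code to its backtranslation; this indexing is needed because the plug rule forgets which functions belong to $P$. The obligation then reduces, by inspecting the plug rule and the clauses of $\complfenceS{\cdot}$ and $\backtrfencec{\cdot}$, to relating the (all-zero) register files, the (empty) stacks, and the memories, the last of which is discharged by \Cref{thm:v4-heap-rel-comp-lfence} for the compiled private memory of $P$ and by \Cref{thm:v4-heap-rel-bt-lfence} for the backtranslated context memory.

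The core is the backward simulation \Cref{thm:v4-bwd-sim-lfence}: from $\src{\Omega} \srel \SigmaS$ and a target big step, I would produce a source run $\src{\Omega \nsbigarrow{\tras{^{\sigma}}} \Omega'}$ with $\src{\Omega'} \srel \SigmaS'$ and $\tras{^{\sigma}} \rels \trgS{\tra{^{\taint}}}$, proceeding by induction on the big step and case-splitting each small step on which code is running and whether speculation is active. Steps inside backtranslated context code are matched one-for-one by the context backward simulation \Cref{thm:v4-back-sim-bts-lfence} (which rests on the backtranslated expression evaluation \Cref{thm:v4-back-sim-bte-lfence}), yielding identical actions related by \Cref{tr:tr-rel-same} or \Cref{tr:tr-rel-same-h}. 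Steps inside compiled component code are matched by the backward simulation for compiled steps \Cref{thm:v4-bwd-sim-comp-steps-lfence}, where each source instruction is simulated by its compiled counterpart and the extra $\pbarrier$ instructions inserted after stores take a silent step emitting $\varepsilon$, absorbed by $\rels$ via \Cref{tr:tr-rel-safe-a} and \Cref{tr:tr-rel-safe-h}. The $\sems$-specific observation that makes the whole argument go through is that a store-bypass transaction, started only inside the component by \Cref{tr:v4-skip} (never in the context, by \Cref{tr:v4-skip-att}), always reaches the compiler-inserted fence as its very first speculative instruction; that fence sets the speculation window to $0$ by \Cref{tr:v4-barr-spec}, so the transaction is immediately discarded by \Cref{tr:v4-rollback} before any data-dependent instruction runs. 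Hence every action enclosed in such a transaction is $\varepsilon$ tainted $\safeta$, the whole transaction is consumed by the safe-action and rollback rules of $\rels$ (\Cref{tr:tr-rel-safe-a}, \Cref{tr:tr-rel-safe-h}, \Cref{tr:tr-rel-rollb}), and the non-speculative projection of the target trace coincides with the produced source trace.

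The main obstacle is the instruction-level forward simulation \Cref{thm:v4-fwd-sim-stm-lfence}, the one node in the tree that forces genuine reasoning about speculation rather than pure bookkeeping. There one must show, for the compiled store together with its trailing fence, that the forked speculative instance of the $\sems$ semantics is fully accounted for and contributes only safe observations, which requires tracking the speculation window across the inserted fence and verifying that the pushed instance performs exactly one $\varepsilon$-emitting fence step before being rolled back. The remaining ingredients---the initial-state relation, the context-side backward simulation, and the expression-level lemmas \Cref{thm:v4-fwd-sim-exp-lfence} and \Cref{thm:v4-back-sim-bte-lfence}, which rely on the value relation of \Cref{thm:v4-val-rel-bt-lfence}---follow routinely by structural induction and inspection of the $\complfenceS{\cdot}$ and $\backtrfencec{\cdot}$ definitions. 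Finally, composing the initial-state relation with the backward simulation over the full terminating run yields the source run and the trace relation required by the statement.
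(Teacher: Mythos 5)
Your proposal matches the paper's proof exactly: the theorem is discharged by combining the initial-state relation (\Cref{thm:v4-ini-state-rel}) with the generalised backward simulation (\Cref{thm:v4-bwd-sim-lfence}), and your elaboration of the supporting lemmas follows the same dependency tree the paper uses. The additional detail you give about the store/fence case and the immediate rollback of the speculative instance is consistent with how the paper handles it in \Cref{thm:v4-fwd-sim-stm-lfence}.
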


\begin{theorem}[$\Sv$: Generalised Backward Simulation for lfence]\label{thm:v4-bwd-sim-lfence}(\showproof{v4-bwd-sim-lfence})
	\begin{align*}
		\text{ if }
			&\
			\ffun{\src{A(\pc)}}\in\src{\OB{f''}} \text{ then } \trgS{A(\pc)} : \trgS{i} = \complfenceS{l : i} \text{ or } (\complfenceS{A(\pc) : i} = \trgS{A(\pc)} : \trgS{i} ; \trgS{p_c'}) 
			\\
		\text{ else }
			&\
			\src{A(\pc)} : \src{i} = \backtrfencec{\trgS{A(\pc)}} : \backtrfencec{\trgS{i}} 
			\\
		\text{ and}
			&\
			\text{ if } \ffun{\src{A'(\pc)}}\in\src{\OB{f''}} \text{ then } \trgS{A'(\pc)} : \trgS{i'} = \complfenceS{A'(\pc) : i'} \text{ or } (\complfenceS{A'(\pc) : i'} = \trgS{A'(\pc)} : \trgS{i'} ; \trgS{p_c''})
			\\
		\text{ else }
			&\
			\src{A'(\pc)} : \src{i'} = \backtrfencec{\trgS{A'(\pc)}} : \backtrfencec{\trgS{i'}}
		\\
		\text{ and }
			&\
			\SigmaS = \trgS{(C, \OB{B}, \tup{p, M, A}, \bot,\safeta)}
		\\
		\text{ and }
			&\
			\SigmaSt{'}=\trgS{(C, \OB{B'}, \tup{p, M', A'},\bot,\safeta)}
		\\
		\text{ and }
			&\
			\trgS{ \SigmaS \bigspecarrowS{\tra{^{\taint}}} \SigmaS' }
		\\
		\text{ and }
			&\
			\src{\Omega} \srel_{\src{\OB{f''}}} \trgS{\SigmaS}
		\\
            \text{ and }
    			&\
    			\trgS{p(A(\pc))} = \trgS{i}  \text{ and } \trgS{p(A'(\pc))} = \trgS{i'}
    		\\
             \text{ and }
    			&\
    			\src{p(A(\pc))} =  \src{i} \text{ and } \src{p(A'(\pc))} = \src{i'}
            \\
		\text{ then }
			&\
			\src{\Omega}=\src{C, \OB{B}, \tup{p , M, A}  \nsbigarrow{\tras{^\sigma}} C, \OB{B'}, \tup{p , M', A'}} =\src{\Omega'}
		\\
		\text{ and }
			&\
			\src{\tras{^\sigma}} \tracerel \trgS{\tra{^{\taint}}}
		\\
		\text{ and }
			&\
			\src{\Omega'} \srelref_{\src{\OB{f''}}}\, \trgS{\SigmaS{'}}
	\end{align*}
\end{theorem}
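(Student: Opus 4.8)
The plan is to prove \Cref{thm:v4-bwd-sim-lfence} as a \emph{backward} simulation obtained from a forward simulation together with determinism of the $\Sv$ speculative semantics, proceeding by induction on the length of the target derivation $\trgS{\SigmaS \bigspecarrowS{\tra{^\taint}} \SigmaS'}$. The crucial structural fact I would exploit is that both endpoints $\SigmaS$ and $\SigmaSt{'}$ are \emph{non-speculative}, i.e.\ single-instance stacks with window $\bot$ and taint $\safeta$; hence the target run, although it may open and close store-bypass transactions in between, always returns to a non-speculative boundary, so it can be cut into complete blocks, each corresponding to exactly one source instruction. At each such boundary I would case-split on whether the executing code lies in the component (i.e.\ $\ffun{\src{A(\pc)}} \in \OB{f''}$, so the target instruction is a compiled instruction $\complfenceS{\cdot}$) or in the attacker (so the source instruction is the backtranslation $\backtrfencec{\cdot}$), matching the two premises of the statement and the two sub-derivations \Cref{thm:v4-bwd-sim-comp-steps-lfence} and \Cref{thm:v4-back-sim-bts-lfence} of the dependency tree.

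For the component case I would invoke the forward-simulation lemma \Cref{thm:v4-fwd-sim-stm-lfence} (which rests on expression forward simulation \Cref{thm:v4-fwd-sim-exp-lfence}): the source takes one non-speculative step $\src{\Omega \nsarrow{\tau^\sigma} \Omega''}$, and the lemma produces a matching target big-step covering the compiled instruction \emph{together with} the barrier inserted after it and any transaction it triggers. The decisive subcase is a $\storeC$ instruction. Under $\Sv$, \Cref{tr:v4-skip} first performs the store architecturally (emitting $\src{\storeObs{n}}$ on the correct continuation) and pushes a store-bypass instance whose program counter points precisely at the inserted $\pbarrier$; that barrier then fires \Cref{tr:v4-barr-spec}, truncating the speculative window to $0$, which immediately forces \Cref{tr:v4-rollback}. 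Thus the entire transaction emits only the $\epsilon$ of the barrier and a safe $\rollbackObsS$, with no data-dependent observation escaping. Consequently the target block's trace relates to the single source store observation: the stores match via \Cref{tr:tr-rel-same-h}/\Cref{tr:ac-rel-wr} and the trailing rollback is absorbed via \Cref{tr:tr-rel-rollb}. All other (non-store) component instructions are handled uniformly, since the compiler is homomorphic on them and they trigger no speculation.

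To turn this forward step into the required backward step I would then use \emph{determinism} of the $\Sv$ semantics (the combine rule \Cref{tr:v4-combine} over the deterministic value and taint semantics): the forward-simulated target block and the corresponding prefix of the given derivation $\bigspecarrowS{\tra{^\taint}}$ start from the same state, hence coincide, fixing the intermediate target state and trace. I peel this prefix off, observe that the remaining target run again begins at a non-speculative boundary, and apply the induction hypothesis to it. The attacker case is simpler: since attacker code triggers no component speculation (e.g.\ \Cref{tr:v4-skip-att}) and the backtranslation is essentially the identity on instructions, I would apply the direct backward simulation \Cref{thm:v4-back-sim-bts-lfence} (with \Cref{thm:v4-back-sim-bte-lfence} for expressions) to obtain the matching source step in one shot. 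Finally I would assemble the per-block source steps into $\src{\Omega \nsbigarrow{\tras{^\sigma}} \Omega'}$, concatenate the trace relations (the inductive rules of $\rels$ are closed under the appending performed here), and read off $\src{\Omega'} \srel_{\OB{f''}} \SigmaSt{'}$ from the last block together with the endpoint premise on $\src{A'(\pc)}$.

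The main obstacle I anticipate is the interaction between \emph{granularity matching} and the speculative transaction in the store case. One must argue that each block cut from the given target derivation corresponds exactly to one source instruction and genuinely returns to a non-speculative, depth-one configuration before the next block begins---otherwise neither the determinism step nor the well-foundedness of the induction is justified. Establishing that the store-bypass window is truly closed by the inserted barrier before any leaking instruction runs (so that the block's observations are exactly ``matched store plus absorbed rollback'') is precisely the content that makes the forward-simulation lemma \Cref{thm:v4-fwd-sim-stm-lfence} the delicate ingredient; the backward direction adds to it the bookkeeping of lining these blocks up deterministically across the whole run.
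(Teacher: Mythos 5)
Your proposal follows essentially the same route as the paper's proof: induction on the target big-step derivation, with backward simulation for compiled steps recovered from the forward simulation of \Cref{thm:v4-fwd-sim-stm-lfence} plus determinism of $\specarrowS{}$ (this is exactly \Cref{thm:v4-bwd-sim-comp-steps-lfence}), the direct backward simulation \Cref{thm:v4-back-sim-bts-lfence} for attacker code, and the store--barrier--rollback block absorbed by the safe/rollback rules of the trace relation. The one thing your two-way case split elides is the cross-boundary transitions --- a $\clh{f}$ or $\rth$ step moving between compiled component and attacker context, which is covered by neither of your two cited sub-lemmas (both require source and destination to lie in the same region) and which the paper handles directly inside the main induction by matching the call/return actions and re-establishing $\srel_{\src{\OB{f''}}}$ on the pushed or popped frame; these cases are routine, so this is a minor omission rather than a flaw in the approach.
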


In the general theorem we need backward simulation, which we derive from forward simulation in a standard way (This assumes determinism of the language).
Note that by ending up in a state with a compiled statement, we rule out cross-boundary calls and returns.
These cases pop up in the proofs using this theorem and we deal with them there. 
\begin{theorem}[$\Sv$: Backward Simulation for Compiled Steps in lfence]\label{thm:v4-bwd-sim-comp-steps-lfence}(\showproof{v4-bwd-sim-comp-steps-lfence})
	\begin{align*}
		\text{ if }
			&\
			\trgS{\SigmaS}=\trgS{(C, \OB{B}, \tup{p, M, A}, \bot, \safeta) }
		\\
		\text{ and }
			&\
			\trgS{\SigmaS'}=\trgS{(C, \OB{B'}, \tup{p, M', A'}, \bot, \safeta)}
		\\
		\text{ and }
			&\
			\trgS{\SigmaS \bigspecarrowS{\tra{^{\taint}}} \SigmaS'}
		\\
		\text{ and }
			&\
			\src{\Omega} \srelref_{\src{\OB{f''}}}\, \SigmaS
		\\
            \text{ and }
			&\
			\src{p(A(\pc))} =  \src{i} \text{ and } \src{p(A'(\pc))} =  \src{i'}
		\\
            \text{ and }
			&\
			\trgS{A(\pc) : i} = \complfenceS{A(\pc) : i}   \text{ or } (\complfenceS{A(\pc) : i} = \trgS{A(\pc)} : \trgS{i} ; \trgS{p_c'})
            \\
            \text{ and }
                &\
                \trgS{A'(\pc) : i'} = \complfenceS{A'(\pc) : i'}  \text{ or } (\complfenceS{A'(\pc) : i'} = \trgS{A'(\pc)} : \trgS{i'} ; \trgS{p_c''})
		\\
		\text{ then }
			&\
			\src{\Omega}=\src{C, \OB{B}, \tup{p, M, A} \nsarrow{\tau^\sigma} C, \OB{B'}, \tup{p, M', A'}}= \src{\Omega'}
        \\
		\text{ and }
			&\
			\src{\tau^\sigma} \tracerel \trgS{\tra{^{\taint}}} \qquad \text{( using the trace relation!)}
		\\
		\text{ and }
			&\
			\src{\Omega'} \srelref_{\src{\OB{f''}}}\, \SigmaS'
	\end{align*}
\end{theorem}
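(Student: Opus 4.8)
The plan is to obtain this backward simulation from the forward simulation statement \Cref{thm:v4-fwd-sim-stm-lfence} together with determinism of $\specarrowS{}$, following the standard route from forward to backward simulation in a deterministic semantics. First I would unfold $\src{\Omega} \srel_{\src{\OB{f''}}} \trgS{\SigmaS}$ to extract that the source configuration and the single, window-$\bot$ target instance agree on the program counter and that their memories, register files and stacks are related by $\hrel$, $\rrel$ and $\brel$. Since $\src{p(A(\pc))} = \src{i}$ is a genuine instruction, the source takes a unique step $\src{\Omega \nsarrow{\tau^\sigma} \Omega^\dagger}$ by determinism of $\nsarrow{}$. Applying \Cref{thm:v4-fwd-sim-stm-lfence} to this step and to $\src{\Omega} \srel \trgS{\SigmaS}$ yields a target big-step $\trgS{\SigmaS \bigspecarrowS{\tra{^*}} \SigmaSt{*}}$ with $\src{\Omega^\dagger} \srel \SigmaSt{*}$ and $\src{\tau^\sigma} \rels \trgS{\tra{^*}}$, where this big-step runs exactly the fragment $\complfenceS{A(\pc) : i}$ emitted for the one source instruction.

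The gluing step is to identify this forward-produced run with the one in the hypothesis. Both start at $\trgS{\SigmaS}$, both end at a single-instance, window-$\bot$ state, and both traverse precisely the fragment $\complfenceS{A(\pc) : i}$: the hypothesis $\trgS{A'(\pc) : i'} = \complfenceS{A'(\pc) : i'}$ forces the endpoint to be the \emph{start} of the next compiled source statement rather than a label internal to the current fragment, while the non-speculative shape of $\trgS{\SigmaS'}$ forbids stopping in the middle of a speculative transaction (such intermediate states carry two instances). By determinism of $\specarrowS{}$ the two runs therefore coincide, giving $\SigmaSt{*} = \trgS{\SigmaS'}$ and $\trgS{\tra{^*}} = \trgS{\tra{^\taint}}$, and hence $\src{\Omega^\dagger} = \src{\Omega'} = \src{C, \OB{B'}, \tup{p, M', A'}}$, which discharges the three conclusions.

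The real content sits in the store case of the forward image, which is where the inserted $\pbarrier$ matters, so I would spell it out. Starting from window $\bot$ (which the convention that $n+1$ also matches $\bot$ lets match the speculating rule), executing $\pstore{x}{e}$ fires \Cref{tr:v4-skip}: it emits the architectural $\storeObs{n}$ and pushes a bypassing instance whose program counter sits on the freshly inserted barrier at $l_{new}$ (with $\src{l} \subseteq l_{new} \le \mathit{next}(\src{l})$). That instance takes a single step via \Cref{tr:v4-barr-spec} (window set to $0$, emitting $\varepsilon$) and is discarded by \Cref{tr:v4-rollback} (emitting $\rollbackObsS$); the correct instance then executes the barrier architecturally through \Cref{tr:v4-barr} as a $\pskip$. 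Thus the fragment emits $\storeObs{n}$ followed only by $\varepsilon$ and one $\rollbackObsS$, which relates to the source's lone $\storeObs{n}$ using \Cref{tr:tr-rel-same-h} for the store and \Cref{tr:tr-rel-rollb} (plus the safe rule \Cref{tr:tr-rel-safe-h}) for the trapped transaction. Every non-store, non-barrier instruction compiles to a single instruction modulo relabelling, triggers no speculation, and matches its action directly; internal calls and returns compile to themselves in one step, while cross-boundary calls and returns are excluded by the requirement that both endpoint instructions be compiled.

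The hard part will be exactly this store case: one must verify that the lfence genuinely traps the store-induced bypass so that the nested transaction contributes only safe, erasable actions, and at the same time keep the program-counter bookkeeping honest (the relabelled store target, the internal barrier label $l_{new}$, and the next-statement label) so that the hypothesis's big-step is forced to halt at exactly the non-speculative state that forward simulation produces, which is what makes the determinism argument close.
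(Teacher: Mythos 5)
Your proposal is correct and takes essentially the same route as the paper: the paper also derives this backward simulation from \Cref{thm:v4-fwd-sim-stm-lfence} together with determinism of $\specarrowS{}$, merely packaging the argument as a proof by contradiction (assume the trace or state relation fails, apply forward simulation, and use determinism to force agreement) rather than as your direct gluing argument. The store/barrier analysis you spell out is the genuine content, but in the paper it lives inside the proof of the forward simulation lemma, not here.
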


Starting with related components, heaps and stack frames, if a source statement takes a step emitting a label, then the compiled statement can take several steps and emit a trace that is related to the label.
Effectively, the target also only emits a single label, but we need to account for multiple steps (for the compilation of the store).

Note that by ending up in a state with a compiled statement, we rule out cross-boundary calls and returns. These cases pop up in the proofs using this theorem and we deal with them there.

\begin{lemma}[$\Sv$: Forward Simulation for Compiled Statements in lfence]\label{thm:v4-fwd-sim-stm-lfence}(\showproof{v4-fwd-sim-stm-lfence})
	\begin{align*}
		\text{ if }
			&\
			\src{\Omega}=\src{C, \OB{B}, \tup{p, M, A} \nsarrow{\tau^\sigma} C, \OB{B'}, \tup{p, M', A'}} =\src{\Omega'}
		\\
		\text{ and }
			&\
			\src{\Omega} \srelref_{\src{\OB{f''}}} \trgS{\SigmaS}
		\\
        \text{ and }
			&\
			\src{p(A(\pc))} = \src{i} \text{ and } \src{p(A'(\pc))} = \src{i'}
		\\
		\text{ and }
			&\
			\trgS{\SigmaS}= \trgS{C, \OB{B}, \tup{p, M , A}, \bot, \safeta}
		\\
		\text{ and }
			&\
			\trgS{\SigmaS'}= \trgS{C, \OB{B}, \tup{p, M' , A'}, \bot, \safeta}
		\\
            \text{ and }
                &\
                \trgS{A(\pc) : i} = \complfenceS{A(\pc) : i} \text{ or } (\complfenceS{A(\pc) : i} = \trgS{A(\pc)} : \trgS{i} ; \trgS{p_c'})
            \\
            \text{ and } 
                &\
                \trgS{A'(\pc) : i'} = \complfenceS{A'(\pc) : i'} \text{ or } (\complfenceS{A'(\pc) : i'} = \trgS{A'(\pc)} : \trgS{i'} ; \trgS{p_c''})
		\\
		\text{ then }
			&\
			\trgS{\SigmaS \bigspecarrowS{\tra{^{\taint}}} \SigmaS'}
		\\
		\text{ and }
			&\
			\src{\tau^\sigma} \tracerel \trgS{\tra{^{\taint}}} \qquad \text{( using the trace relation!)}
		\\
		\text{ and }
			&\
			\src{\Omega'} \srelref_{\src{\OB{f''}}}\ \trgS{\SigmaS'}
	\end{align*}
\end{lemma}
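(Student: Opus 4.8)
The plan is to prove this forward simulation by case analysis on the source instruction $\src{i} = \src{p(A(\pc))}$, equivalently by inverting the source step $\src{\Omega \nsarrow{\tau^\sigma} \Omega'}$. For each instruction I would read off the compiled target code from the definition of $\complfenceS{\cdot}$, exhibit a matching big-step $\trgS{\SigmaS \bigspecarrowS{\tra{^\taint}} \SigmaS'}$ in the $\Sv$ speculative semantics, and then check the trace relation $\src{\tau^\sigma} \reltrace \trgS{\tra{^\taint}}$ and the state relation $\src{\Omega'} \srel_{\src{\OB{f''}}} \trgS{\SigmaS'}$. The decisive structural fact I would exploit is that the target starts with speculation window $\bot$ and that in $\Sv$ \emph{only} $\storeC$ instructions start speculation, so every non-store instruction runs purely architecturally.

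First I would dispatch all the non-store cases (skip, assign, conditional assignment, load, private load and store, branch, indirect jump, internal call and return). Here $\complfenceS{\cdot}$ is the identity modulo recursing into expressions, so the target performs a single step via the appropriate non-speculative delegation rule of $\Sv$ (\Cref{tr:v4-nospec-act} or \Cref{tr:v4-nospec-eps}), neither of which pushes a new speculative instance since the window is $\bot$. The emitted observation coincides with the source one; the required equalities of values and taints follow from the expression-simulation lemma \Cref{thm:v4-fwd-sim-exp-lfence} together with $\src{A} \rrel \trgS{A}$ and $\src{M} \hrel \trgS{M}$, and the trace relation then follows by \Cref{tr:tr-rel-same} / \Cref{tr:tr-rel-same-h} (or \Cref{tr:tr-rel-safe-a} for silent steps). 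Preservation of $\srel_{\src{\OB{f''}}}$ is routine bookkeeping on the register file, memory, and stack components.

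The crux is the store case, where $\complfenceS{\src{A(\pc):\pstore{x}{e}}} = \trgS{A(\pc)}:\trgS{\pstore{x}{e}}; \trgS{l_{new}}:\trgS{\pbarrier}$. Starting from window $\bot$ (which matches the $\src{n+1}$ pattern), the store fires \Cref{tr:v4-skip}, emitting $\trgS{\storeObs{n}}$ and producing the stack $\trgS{\tup{\Omega', \bot} \cdot \tup{\Omega'', j}}$ with $\trgS{j} = min(\omega,\bot) = \omega$, where the top instance $\trgS{\Omega''}$ is the \emph{bypassed}-store continuation. Because the compiler placed the inserted barrier exactly at the bypassed program counter, the top instance next executes \Cref{tr:v4-barr-spec}, which zeroes its window, so \Cref{tr:v4-rollback} immediately discards it, emitting only $\trgS{\rollbackObsS}^{\safeta}$ and no data-dependent observation. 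Execution then resumes from the correctly-executed instance $\trgS{\Omega'}$, whose pc now points at the barrier; the architectural barrier rule \Cref{tr:v4-barr} advances the pc to $\trgS{A'(\pc)}$ and restores the window to $\bot$. The net target trace is $\trgS{\storeObs{n}}^{\taint} \cdot \trgS{\rollbackObsS}^{\safeta}$, which relates to the single source action $\src{\storeObs{n}}^{\sigma}$ by \Cref{tr:tr-rel-same-h} followed by \Cref{tr:tr-rel-rollb} (the rollback relates to $\src{\epsilon}$), and the resulting speculative state is again a single instance of the required form $\trgS{\SigmaS'}$.

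I expect the store case to be the main obstacle, for two intertwined reasons. First, one must carefully track the speculation-window arithmetic under the $\bot$-as-$\src{n+1}$ convention (in particular $min(\omega,\bot)=\omega$) and the ordering of small steps inside the big-step: the bypassed instance on top of the stack must be fully executed and rolled back \emph{before} the architectural store continuation proceeds, so that the transaction closes and we genuinely return to a single non-speculative instance. Second, the abstract-label bookkeeping must line up, since the inserted barrier occupies a fresh label $\trgS{l_{new}}$ strictly between $\src{A(\pc)}$ and $\mathit{next}(\src{A(\pc)})$, so that the store advances the pc to $\trgS{l_{new}}$ and the barrier advances it to $\trgS{A'(\pc)}$, matching the pc in the goal state $\trgS{\SigmaS'}$. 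Establishing that the barrier traps the bypass before any unsafe observation escapes is precisely what makes the compiled store safe, and is the technical heart of the lemma.
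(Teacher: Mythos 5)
Your proposal is correct and follows essentially the same route as the paper's proof: case analysis on the source step, single-step delegation for all non-store instructions via the expression simulation lemma, and for the store case the sequence \Cref{tr:v4-skip} (emitting the store observation), \Cref{tr:v4-barr-spec} zeroing the bypassed instance's window, \Cref{tr:v4-rollback} emitting only $\rollbackObsS$, and the architectural barrier advancing the surviving instance, with the trace related by \Cref{tr:ac-rel-wr} and \Cref{tr:ac-rel-rlb}. Your treatment of the window arithmetic and the ordering of the bypassed instance on top of the stack is, if anything, slightly more explicit than the paper's.
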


Starting with a related stack frame, if a source expression with a substitution star-reduces to a value, then the compiled expression with the compiled substitution star-reduces to the compiled value.
\begin{lemma}[Forward Simulation for Expressions in lfence]\label{thm:v4-fwd-sim-exp-lfence}(\showproof{v4-fwd-sim-exp-lfence})
	\begin{align*}
		\text{ if }
			&\ 
			\src{A\triangleright e \bigreds v : \sigma}
		\\
		\text{ and }
			&\
			\src{A}\rrelref\trgS{A}
		\\
		\text{ and }
			&\
			\src{\sigma}\equiv\taintS
		\\
		\text{ then }
			&\
			\trgS{A \triangleright \complfenceS{e}\bigreds \complfenceS{v} : \taintS}
	\end{align*}
\end{lemma}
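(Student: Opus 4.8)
The plan is to proceed by structural induction on the derivation of $\src{A \triangleright e \bigreds v : \sigma}$, i.e.\ on the shape of the expression $\src{e}$. The whole argument rests on two observations. First, $\complfenceS{\cdot}$ acts as a homomorphism on expressions: it is the identity on numbers and labels ($\complfenceS{n} = \trgS{n}$, $\complfenceS{l} = \trgS{l}$), it leaves registers untouched ($\complfenceS{x} = \trgS{x}$), and it merely recurses through the operators ($\complfenceS{\ominus e} = \trgS{\ominus\complfenceS{e}}$ and $\complfenceS{e_1 \otimes e_2} = \trgS{\complfenceS{e_1} \otimes \complfenceS{e_2}}$). Second, expression evaluation in the target language $\TS$ uses exactly the same rules (\Cref{tr:E-val} through \Cref{tr:E-unop} and their taint counterparts) as in the source, since the lfence compiler only alters the instruction stream and never the expression sublanguage. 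Hence the goal reduces to showing that, up to compilation, the source and target derivations coincide whenever the register files are related by $\rrel$.

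First I would unfold the tagged evaluation via \Cref{tr:Combine}, splitting $\src{A \triangleright e \bigreds v : \sigma}$ into a value derivation $\exprEval{\src{A^{v}}}{\src{e}}{\src{v}}$ and a taint derivation $\exprEval{\src{A^{\taint}}}{\src{e}}{\src{\sigma}}$, and treat the two in lock-step. For the base case $\src{e} = \src{v}$, \Cref{tr:E-val} gives the value $\src{v}$ and \Cref{tr:T-val} gives taint $\safeta$; since $\complfenceS{v} = \trgS{v}$, the target derivation built from the same two rules yields $\complfenceS{v}$ with taint $\safeta$, and $\src{\sigma} \equiv \taintS$ holds trivially. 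For $\src{e} = \src{x}$, rule \Cref{tr:E-lookup} reads $\src{A}(x) = \src{v}:\src{\sigma}$; the hypothesis $\src{A} \rrel \trgS{A}$ (\Cref{tr:rsrel}) supplies $\trgS{A}(x) = \trgS{v}:\taintS$ with $\src{v} \vrel \trgS{v}$ and $\src{\sigma} \equiv \taintS$, and since $\vrel$ on numbers and labels is just equality, $\trgS{v} = \complfenceS{v}$, closing the case.

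The inductive cases $\ominus \src{e_1}$ and $\src{e_1} \otimes \src{e_2}$ are discharged by applying the induction hypothesis to the immediate subexpressions and then reapplying the corresponding evaluation rule on the target side: compilation commutes with $\ominus$ and $\otimes$, the compiled subvalues coincide with the compilations of the source subvalues by the IH, and the output taints---obtained by joining the subtaints with $\lub$---match because the subtaints match and $\lub$ is a function of its arguments. I do not expect any genuine obstacle here: this is a pure congruence argument and is one of the simplest leaves of the proof tree for \Cref{thm:v4-lfence-comp-rdss}. The only points requiring care are the bookkeeping of the value/taint split through \Cref{tr:Combine} and the use of the (implicit) fact that $\complfenceS{\cdot}$ does not rename registers, so that lookups in $\src{A}$ and $\trgS{A}$ line up exactly under $\rrel$.
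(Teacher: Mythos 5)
Your proposal is correct and follows essentially the same route as the paper: structural induction on the evaluation derivation, using that $\complfenceS{\cdot}$ is a homomorphism on expressions, with the base cases closed by the register-file relation and the identity of compilation on values, and the inductive cases by congruence. The paper packages your observation that ``$\vrel$ on numbers and labels is just equality'' as two small lemmas (a value is related to its compilation, and functionality of $\vrel$), but the content is the same.
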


\subsection{$\Sv$: Simulation and Relation for Backtranslated Code}\label{sec:bwd-sim-lfence}
We need backward simulation for backtranslated code.

As before, we need two lemmas on the backward simulation for backtranslated expressions and on the backward simulation for statements.

\begin{lemma}[Backward Simulation for Backtranslated Code]\label{thm:v4-back-sim-bts-lfence}(\showproof{v4-back-sim-bts-lfence})
	\begin{align*}
		\text{ if }
			&\
			\trgS{\SigmaS}=\trgS{(C, \OB{B}, \tup{p, M, A},\bot, \safeta) }
		\\
		\text{ and }
			&\
			\trgS{\SigmaS'}=\trgS{(C, \OB{B'}, \tup{p, M', A'}, \bot, \safeta)}
		\\
		\text{ and }
			&\
			\trgS{\SigmaS \specarrowS{\tau{^{\taint}}} \SigmaS'}
		\\
		\text{ and }
			&\
			\src{\Omega} \srel_{\src{\OB{f''}}} \trgS{\SigmaS}
		\\
        \text{ and }
			&\
			\trgS{p(A(\pc))} = \trgS{l} : \trgS{i} \text{ and } \trgS{p(A'(\pc))} = \trgS{l'} : \trgS{i'}
		\\
		\text{ then }
			&\
			\src{\Omega}=\src{C, \OB{B}, \tup{p, M, A} \nsarrow{\tau^\sigma} C, \OB{B'},\tup{p, M', A'} } = \src{\Omega'}
		\\
        \text{ and }
			&\
			\src{p(A(\pc))} = \backtrfencec{\trgS{l}} : \backtrfencec{\trgS{i}} \text{ and } \src{p(A'(\pc))} = \backtrfencec{\trgS{l'}} : \backtrfencec{\trgS{i'}}
		\\
		\text{ and }
			&\
			\src{\tau^\sigma} \arelref \trgS{\tau^\taint} \qquad \text{( using the action relation!)}
		\\
		\text{ and }
			&\
			\src{\Omega'} \srel_{\src{\OB{f''}}} \trgS{\SigmaS'}
	\end{align*}
\end{lemma}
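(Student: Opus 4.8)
The plan is to prove this backward simulation by inversion on the target speculative step $\trgS{\SigmaS \specarrowS{\tau^\taint} \SigmaS'}$, exploiting the fact that backtranslated code behaves essentially like source code: it executes one instruction per step and, crucially, never triggers speculation. First I would observe that since $\trgS{p(A(\pc))} = \trgS{l} : \trgS{i}$ is a backtranslated instruction, the program relation \Cref{tr:v4-comps} forces the enclosing function to lie outside the compiled set $\src{\OB{f''}}$; after plugging (\Cref{tr:plug-us}) this means the current function is in the imports $\trgS{\OB{I}}$, i.e.\ we are in attacker/context code. Together with the fact that the speculation window of $\trgS{\SigmaS}$ is $\bot$, this rules out the speculation-triggering rule \Cref{tr:v4-skip} (which requires $f \notin \OB{I}$) and the rollback rule \Cref{tr:v4-rollback} (which requires $n = 0$ or a final state). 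Hence the only applicable rules are \Cref{tr:v4-barr}, \Cref{tr:v4-nospec-eps}, \Cref{tr:v4-nospec-act}, and the attacker store rule \Cref{tr:v4-skip-att}, all of which leave the window at $\bot$, consistently with the shape of $\trgS{\SigmaS'}$. Unfolding the combine rule \Cref{tr:v4-combine} (which is always present and which I would omit from the explicit inversion chain), each of these reduces to a single non-speculative target step $\trgS{\Omega \nsarrow{\tau^\taint} \Omega'}$.

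I would then proceed by case analysis on the backtranslated instruction $\trgS{i}$, using that $\backtrfencec{\cdot}$ is structurally the identity on instructions except that it maps $\trgS{\pbarrier}$ to $\src{\pskip}$. From $\src{\Omega} \srel_{\src{\OB{f''}}} \trgS{\SigmaS}$ I read off $\src{A} \rrel \trgS{A}$, $\src{M} \hrel \trgS{M}$, $\src{B} \brel \trgS{B}$, and $\src{\OB{I}} \equiv \trgS{\OB{I}}$. For each instruction I would transfer the relevant target data to the source side, invoke the companion expression lemma \Thmref{thm:v4-back-sim-bte-lfence} (the backtranslated-expression analogue of \Thmref{thm:v4-fwd-sim-exp-lfence}) to relate the expression evaluations together with their taints, and fire the corresponding non-speculative source rule to obtain $\src{\Omega \nsarrow{\tau^\sigma} \Omega'}$. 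The conclusion $\src{p(A(\pc))} = \backtrfencec{\trgS{l}} : \backtrfencec{\trgS{i}}$ follows directly from \Cref{tr:v4-comps}, and both $\src{\tau^\sigma} \arel \trgS{\tau^\taint}$ and $\src{\Omega'} \srel_{\src{\OB{f''}}} \trgS{\SigmaS'}$ follow because both sides perform the same operation on related states: the value relation $\vrel$ preserves addresses and leaked values, while the program and import components are fixed and the memory/register/stack relations are re-established pointwise from the single updated location. The barrier case is the only one where source and target instructions differ: the target barrier step at window $\bot$ (incrementing the $\pc$ and emitting $\epsilon$, via \Cref{tr:v4-barr}) is matched by a source $\pskip$ step (\Cref{tr:skip}), with $\src{\epsilon} \arel \trgS{\epsilon}$.

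I expect the main obstacle to be the inversion bookkeeping needed to argue, uniformly across the cases, that the small step genuinely stays non-speculative; concretely, that the store case fires the attacker rule \Cref{tr:v4-skip-att} rather than the speculating rule \Cref{tr:v4-skip}, so that no second speculative instance is pushed and the window remains $\bot$ as demanded by the shape of $\trgS{\SigmaS'}$. A secondary subtlety is that \Cref{tr:v4-combine} couples the value and taint projections, so each case must be checked to be consistent on both halves; this is routine given \Thmref{thm:v4-back-sim-bte-lfence}, which supplies both the value and the taint component of every expression evaluation. The overall structure mirrors the forward-simulation lemma \Thmref{thm:v4-fwd-sim-stm-lfence}, but in the backtranslated direction target and source proceed in lock-step (a single $\specarrowS{}$ step matched by a single $\nsarrow{}$ step rather than a $\nsbigarrow{}$ multi-step), precisely because the backtranslation inserts no extra instructions.
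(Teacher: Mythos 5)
Your proposal is correct and follows essentially the same route as the paper's proof: a case analysis on the backtranslated instruction $\trgS{i}$, using \Thmref{thm:v4-back-sim-bte-lfence} for expressions, the fact that backtranslated code is context code so the speculating store rule \Cref{tr:v4-skip} is excluded in favour of \Cref{tr:v4-skip-att}, and the barrier-to-$\pskip$ correspondence. The only cosmetic differences are that the paper additionally notes explicitly that the $\loadKywd_{prv}$/$\storeKywd_{prv}$ cases cannot arise by the attacker well-formedness condition (\Cref{def:atk}) and that component-defined call/return targets are excluded because the post-state is again a backtranslated instruction; your argument covers these implicitly, and your appeal to the $\bot$ window to exclude \Cref{tr:v4-skip} is redundant (the $f\in\OB{I}$ condition is what does the work, since $n+1$ is defined to match $\bot$), but neither point affects correctness.
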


\begin{lemma}[Backward Simulation for Backtranslated Expressions]\label{thm:v4-back-sim-bte-lfence}(\showproof{v4-back-sim-bte-lfence})
	\begin{align*}
		\text{ if }
			&\ 
			\trgS{A \triangleright e \bigreds v : \taintS }
		\\
		\text{ and }
			&\
			\src{A}\rrelref\trgS{A}
		\\
		\text{ and }
			&\
			\src{\taint}\equiv\taintS
		\\
		\text{ then }
			&\
			\src{A \triangleright \backtrfencec{\trgS{e}} \bigreds \backtrfencec{\trgS{v}} : \taint}
	\end{align*}
\end{lemma}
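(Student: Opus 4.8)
The plan is to prove this by structural induction on the derivation of the target evaluation $\trgS{\exprEval{A}{e}{v : \taintS}}$. The key observations that make the induction routine are that expression evaluation never inspects the memory—it is a pure function of the register file—and that the backtranslation $\backtrfencec{\cdot}$ on expressions is a purely structural homomorphism, mapping values to equal values, registers to the same registers, and operators to the same operators. First I would invoke \Cref{tr:Combine} to split the combined judgement into its value component $\trgS{\exprEval{A^{v}}{e}{v}}$ and its taint component $\trgS{\exprEval{A^{\taint}}{e}{\taintS}}$, and then run the induction on $\trgS{e}$ tracking both components simultaneously, reassembling via \Cref{tr:Combine} at each step.

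For the base cases I would handle values and register lookups. When $\trgS{e} = \trgS{n}$, the rules \Cref{tr:E-val} and \Cref{tr:T-val} force $\trgS{v} = \trgS{n}$ and $\taintS = \safeta$; since $\backtrfencec{\trgS{n}} = \src{n}$ and the value relation relates numbers and labels only when they are syntactically equal, the source side reduces by the same two rules to $\src{n} : \safeta$, matching $\backtrfencec{\trgS{v}} : \src{\taint}$ because $\src{\taint} \equiv \taintS = \safeta$. When $\trgS{e} = \trgS{x}$ is a register, $\backtrfencec{\trgS{x}} = \src{x}$ and $\trgS{A}(x) = \trgS{v} : \taintS$; the register relation $\src{A} \rrel \trgS{A}$ then yields $\src{A}(x) = \src{v'} : \src{\taint'}$ with $\src{v'} \vrel \trgS{v}$ and $\src{\taint'} \equiv \taintS$, so by \Cref{tr:E-lookup} and \Cref{tr:T-lookup} the source expression evaluates to $\src{v'} : \src{\taint'}$, and $\src{v'} = \backtrfencec{\trgS{v}}$ follows from the value relation forcing equality.

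For the inductive cases I would treat the unary operator $\trgS{\ominus e_1}$ and the binary operator $\trgS{e_1 \otimes e_2}$. In each case the backtranslation distributes over the operator, so I apply the induction hypothesis to each subexpression to obtain matching source reductions, and then reassemble using \Cref{tr:E-unop} together with \Cref{tr:T-unnop}, or \Cref{tr:E-binop} together with \Cref{tr:T-binop}. The only point requiring a word is taint propagation in the binary case: the target taint is $\taintS = \taintS_1 \lub \taintS_2$, and since the induction hypothesis gives $\src{\taint_1} \equiv \taintS_1$ and $\src{\taint_2} \equiv \taintS_2$, the source taint $\src{\taint_1} \lub \src{\taint_2}$ equals $\taintS$, as needed.

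I do not expect a genuine obstacle: this is the ``easy half'' of the backtranslation machinery, whose entire content is that backtranslation of expressions commutes with evaluation once register files are related. The hypothesis $\src{\taint} \equiv \taintS$ merely names the source taint as the one matching the target, and determinism of expression evaluation guarantees that this is indeed the taint produced by the source reduction. The only mild care needed is to keep the value and taint halves aligned throughout, exactly mirroring the structure of the corresponding forward-simulation lemma \Cref{thm:v4-fwd-sim-exp-lfence}.
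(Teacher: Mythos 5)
Your proposal is correct and follows essentially the same route as the paper's proof: structural induction on the target evaluation derivation with cases for values, register lookups, and unary/binary operators, using the register relation in the lookup case and the fact that the value relation is injective (the paper invokes \Cref{thm:v4-val-rel-bt-lfence} and \Cref{thm:v4-val-uniqueL} for what you call ``the value relation forcing equality''). The explicit splitting via \Cref{tr:Combine} into value and taint halves is a minor presentational elaboration of what the paper does implicitly.
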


We need the fact that initial states are related for the general theorem (\Cref{thm:v4-corr-bt-lfence}). This means the initial states made of a compiled program and a backtranslated context are related. 
\begin{lemma}[Initial States are Related]\label{thm:v4-ini-state-rel}(\showproof{v4-ini-state-rel})
	\begin{align*}
		&
		\forall \src{P}, \forall \src{\OB{f}}=\dom{\src{P}.\src{F}}, \forall \ctxS{}
		\\
		&
		\initFunc{\backtrfencec{\ctxS{}}\src{\hole{P}}} \srelref_{\src{\OB{f}}}\, \initFuncS{\ctxS{\hole{\complfenceS{\src{P}}}}}
	\end{align*}
\end{lemma}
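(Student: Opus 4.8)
The plan is to prove the claim by unfolding the two initial-state constructions and then discharging each conjunct of the state relation \Cref{tr:v4-states} componentwise. First I would fix an arbitrary source program $\src{P}$, set $\src{\OB{f}} = \dom{\src{P}.\src{F}}$, and fix an arbitrary target context $\ctxS{}$. By \Cref{tr:v4-init}, the speculative initial state on the right is exactly the non-speculative initial state of $\ctxS{\hole{\complfenceS{\src{P}}}}$ paired with top window $\bot$ and taint $\safeta$; since \Cref{tr:v4-states} demands precisely window $\bot$ and taint $\safeta$, these two bookkeeping components are immediate, and the problem reduces to relating the two underlying program states produced by $\initFunc{\cdot}$ through \Cref{tr:ini-us}.

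Next I would compute the two whole programs explicitly via the plug rule \Cref{tr:plug-us}. On the left, $\backtrfencec{\ctxS{}}\hole{\src{P}}$ yields a source whole program whose functions are the backtranslated context functions together with $\src{P}$'s unmodified functions, whose memory is the backtranslated context memory merged with $\src{P}$'s memory, and whose imports are the context's function names. On the right, $\ctxS{\hole{\complfenceS{\src{P}}}}$ yields the target whole program with the unmodified context functions, the compiled functions of $\complfenceS{\src{P}}$, and the analogous memory and imports. With both whole programs fixed, I would invoke \Cref{tr:ini-us} on each to obtain the concrete initial memories $M_0$, register files $A_1$, and empty frames.

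With these in place, most conjuncts of \Cref{tr:v4-states} follow by inspection. The register files are related by \Cref{tr:rsrel} because both assign $0 : \safeta$ to every register and then set $\pc \mapsto \ffun{main}$, and compilation preserves the start label of the \texttt{main} function (by the label-invariance lemma for $\complfenceS{\cdot}$), so the two $\pc$ values coincide and $0 \vrel 0$; the frames are both $\emptyset$, so $\brel$ holds by its base rule \Cref{tr:v4-brel-b}; the imports agree since both equal the set of context function names and neither transformation renames functions; and the memories are related by $\hrel$ because $\initFunc{\cdot}$ fills all undefined public locations with $0:\safeta$ and all undefined private locations with $0:\unta$ identically on both sides, while the defined portions originate from context and program memory, whose numeric values and taints are preserved pointwise by backtranslation and compilation. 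The heart of the argument is the component relation $\crel_{\src{\OB{f}}}$ of \Cref{tr:v4-comps}: for every label of a function in $\src{\OB{f}}$ the target code must be the compilation of the corresponding source instruction, and for every label of a context function the source code must be the backtranslation of the target instruction.

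The main obstacle I expect is exactly this component-relation step, since it must reconcile three things at once: the compositional, instruction-by-instruction definitions of $\complfenceS{\cdot}$ and $\backtrfencec{\cdot}$; the fact that although $\complfenceS{\cdot}$ preserves the labels of original instructions, it injects a $\pbarrier$ after each $\storeC$ under a \emph{fresh} label $l_{new}$ with $\src{l} \subseteq l_{new} \le next(\src{l})$, so the condition $\trgS{p}(\complfenceS{\src{l}}) = \complfenceS{\src{i}}$ must be read with $\complfenceS{\src{i}}$ denoting the whole inserted sequence located at $\complfenceS{\src{l}} = \src{l}$ and the inserted labels recognized as belonging to program functions; and the bookkeeping that $\src{\OB{f}} = \dom{\src{P}.\src{F}}$ cleanly partitions the labels of the plugged whole programs into compiled (program) labels and backtranslated (context) labels, which rests on the label-disjointness side conditions of \Cref{tr:whole-us} and \Cref{tr:plug-us}. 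Once this partition and the freshness of inserted barrier labels are settled, both directions of \Cref{tr:v4-comps} follow directly from the definitions of the two transformations, completing the relation $\initFunc{\backtrfencec{\ctxS{}}\src{\hole{P}}} \srel_{\src{\OB{f}}} \initFuncS{\ctxS{\hole{\complfenceS{\src{P}}}}}$.
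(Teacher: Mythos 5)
Your proposal is correct and follows essentially the same route as the paper's proof: unfold the initial-state constructions and discharge each conjunct of the state relation componentwise, using the heap-relation lemmas for compiled and backtranslated memory, the value relation for the label of \texttt{main}, the base rule for empty frames, and inspection of the compiler and backtranslation for the program relation. Your extra attention to the fresh labels of the injected barriers in the $\crel$ conjunct is more detailed than the paper, which dispatches that step with the observation that the compiler only adds instructions and never deletes any.
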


\begin{lemma}[A Value is Related to its Compilation for lfence]\label{thm:v4-val-rel-comp-lfence}(\showproof{v4-val-rel-comp-lfence})
	\begin{align*}
		{\src{v}}\vrel\complfenceS{v}
	\end{align*}
\end{lemma}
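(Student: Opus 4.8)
The plan is to proceed by a simple case analysis on the syntactic form of the source value $\src{v}$, exploiting the fact that the lfence compiler $\complfenceS{\cdot}$ acts as the identity on values. Recall that $v \in \Val = \mi{Nats} \cup \labelset$, so $\src{v}$ is either a number (including the distinguished value $\bot$) or a label $\src{l} \in \labelset$, and that the value relation $\vrel$ has exactly two introduction rules: one relating syntactically equal numbers and one relating syntactically equal labels.

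First I would treat the numeric case. If $\src{v} = \src{n}$ with $\src{n} \in \mb{Z}$, then the value clause of the compiler gives $\complfenceS{n} = \trgS{n}$, i.e. compilation leaves the number unchanged. Since $\src{n} \equiv \trgS{n}$ and $\src{n} \in \mb{Z}$, the numeric clause of the value relation applies directly, yielding $\src{n} \vrel \trgS{n}$, which is exactly $\src{v} \vrel \complfenceS{v}$.

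Second I would treat the label case. If $\src{v} = \src{l}$ with $\src{l} \in \labelset$, then likewise $\complfenceS{l} = \trgS{l}$. Since $\src{l} \equiv \trgS{l}$, the label clause of the value relation gives $\src{l} \vrel \trgS{l}$, i.e. $\src{v} \vrel \complfenceS{v}$.

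I do not expect any genuine obstacle: the entire content of the lemma is that compilation does not alter values and that $\vrel$ relates syntactically identical values. The only points requiring (trivial) care are ensuring the case split is exhaustive over $\Val$, so that $\bot$ is covered under the numeric case, and confirming by inspection of $\complfenceS{\cdot}$ that no operator clauses arise, since a value is atomic rather than a compound expression. This leaf fact is isolated as a separate statement precisely because it is consumed when establishing the memory- and register-file relations for compiled code (feeding, e.g., \Cref{thm:v4-heap-rel-comp-lfence}), where relatedness of stored values is needed.
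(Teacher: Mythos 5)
Your proof is correct and matches the paper's approach: the paper dispatches this lemma as a ``trivial analysis of the compiler,'' and your case split on numbers versus labels, using that $\complfenceS{\cdot}$ is the identity on values and that $\vrel$ relates syntactically equal numbers and labels, is exactly that analysis spelled out.
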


\begin{lemma}[A Heap is Related to its Compilation for lfence]\label{thm:v4-heap-rel-comp-lfence}(\showproof{v4-heap-rel-comp-lfence})
	\begin{align*}
		{\src{H}}\hrel\complfenceS{H}
	\end{align*}
\end{lemma}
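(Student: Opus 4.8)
The plan is to prove $\src{H} \hrel \complfenceS{H}$ by structural induction on the source memory $\src{H}$, viewed as a finite list of address-to-tainted-value mappings. The key observation is that the \textbf{lfence}-for-stores compiler $\complfenceS{\cdot}$ is a \emph{code-only} transformation: it inserts $\pbarrier$ after store instructions but leaves the data segment untouched, up to the recursive compilation of individual values and addresses (which is itself the identity on numbers and labels, and preserves taints verbatim). Thus each source mapping will be matched against a syntactically identical target mapping, and the work reduces to discharging the side-conditions of the heap-relation rules.

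First I would handle the base case $\src{H} = \srce$. By definition of the compiler $\complfenceS{\srce} = \trgSe$, so \Cref{tr:hrel-b} immediately gives $\srce \hrel \trgSe$.

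For the inductive step, write $\src{H} = \src{M ; z \mapsto v : \taint}$. The compiler yields $\complfenceS{H} = \trgS{\complfenceS{M} ; \complfenceS{z} \mapsto \complfenceS{v} : \taint}$, where crucially the taint is carried over unchanged. The induction hypothesis gives $\src{M} \hrel \complfenceS{M}$. To apply the relevant heap-relation rule it then remains to establish three side-conditions: (i) $\src{z} \vrel \complfenceS{z}$ on the address, (ii) $\src{v} \vrel \complfenceS{v}$ on the stored value, and (iii) that the source and target taints coincide. Conditions (i) and (ii) follow directly from \Cref{thm:v4-val-rel-comp-lfence} (a value is related to its own compilation), applied to the address and to the value respectively, since both are integers or labels and the value relation applies uniformly; condition (iii) is immediate because the compiler leaves taints syntactically identical. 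I would split on whether the top address is $0$ or not, applying \Cref{tr:hrel-start} for the boundary address $0$ and \Cref{tr:hrel-i} otherwise, but the assembly of the induction hypothesis with (i)--(iii) is identical in both cases.

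I do not expect a genuine obstacle here: the lemma is essentially bookkeeping that reflects the code-only nature of $\complfenceS{\cdot}$. The only points demanding a little care are remembering to treat the $z = 0$ boundary mapping with \Cref{tr:hrel-start} rather than \Cref{tr:hrel-i}, and ensuring that \Cref{thm:v4-val-rel-comp-lfence} is invoked for both the address component and the value component of each mapping. This lemma then feeds, together with its register-file and stack analogues, into the proof that initial states are related (\Cref{thm:v4-ini-state-rel}), which underpins the correctness of the backtranslation.
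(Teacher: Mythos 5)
Your proposal is correct and matches the paper's approach: the paper discharges this lemma with a one-line "trivial analysis of the compiler," and your structural induction over the memory, using \Cref{tr:hrel-b}/\Cref{tr:hrel-i}/\Cref{tr:hrel-start} together with \Cref{thm:v4-val-rel-comp-lfence} and the observation that $\complfenceS{\cdot}$ leaves taints and data untouched, is exactly the elaboration of that analysis. (Minor remark: the compiler clause for memories only covers the private part $-n \mapsto v : \unta$, so the case split you describe is even simpler than stated, but this does not affect correctness.)
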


\begin{lemma}[A Value is Related to its Backtranslation]\label{thm:v4-val-rel-bt-lfence}(\showproof{v4-val-rel-bt-lfence})
	\begin{align*}
		\backtrfencec{\trgS{v}}\vrel\trgS{v}
	\end{align*}
\end{lemma}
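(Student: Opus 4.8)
The plan is to prove this by a direct case analysis on the syntactic form of the target value $\trgS{v}$, unfolding both the definition of the backtranslation $\backtrfencec{\cdot}$ and the value relation $\vrel$. Recall that a value comes from $\Val = \Nat \cup \labelset$, so $\trgS{v}$ is either a natural number or a label; in either case the backtranslation acts essentially as the identity (erasing only the target coloring), and $\vrel$ relates syntactically equal numbers and syntactically equal labels. The whole content of the lemma is therefore that these two observations line up.

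First I would split on whether $\trgS{v}$ is a number or a label. In the case $\trgS{v} = \trgS{n}$ with $n \in \mb{Z}$, the backtranslation clause $\backtrfencec{\trgS{n}} = \src{n}$ gives $\backtrfencec{\trgS{v}} = \src{n}$, and since $\src{n} \equiv \trgS{n}$ the Value-num rule of $\vrel$ yields $\src{n} \vrel \trgS{n}$, that is $\backtrfencec{\trgS{v}} \vrel \trgS{v}$. In the case $\trgS{v} = \trgS{l}$ with $l \in \labelset$, the clause $\backtrfencec{\trgS{l}} = \src{l}$ gives $\backtrfencec{\trgS{v}} = \src{l}$, and $\src{l} \equiv \trgS{l}$ lets me conclude via the Value-label rule of $\vrel$.

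There is essentially no hard part here: the statement is a bookkeeping lemma asserting that the backtranslation preserves the (trivial) value relation, and it follows purely by unfolding definitions. The only point requiring any attention is ensuring the case split is exhaustive over the syntactic category of values -- in particular confirming that the degenerate value $\bot$ is subsumed by the numeric case, for which the identity backtranslation again applies. Since the same value-relation and backtranslation clauses are shared verbatim across the target languages $\Bv$, $\Jv$, $\Sv$, $\Rv$, $\SLSv$, this argument transfers unchanged to the analogous value-relation lemmas for the other compilers.
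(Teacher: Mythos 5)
Your proof is correct and matches the paper's approach: the paper simply records this as a trivial analysis of the backtranslation, which is exactly the case split on numbers versus labels and the unfolding of the $\backtrfencec{\cdot}$ clauses and $\vrel$ rules that you carry out. Nothing further is needed.
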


\begin{lemma}[A Memory is Related to its Backtranslation]\label{thm:v4-heap-rel-bt-lfence}(\showproof{v4-heap-rel-bt-lfence})
	\begin{align*}
		\backtrfencec{\trgS{M}}\hrel\trgS{M}
	\end{align*}
\end{lemma}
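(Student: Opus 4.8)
The plan is to proceed by structural induction on the target memory $\trgS{M}$, following its grammar $\trgS{M} \bnfdef \trgS{\emptyset} \mid \trgS{M}; n \mapsto v : \taint$ and mirroring the three constructors of the memory relation $\hrel$ (\Cref{tr:hrel-b}, \Cref{tr:hrel-i}, and the ``start'' rule \Cref{tr:hrel-start}). The backtranslation $\backtrfencec{\cdot}$ acts homomorphically on memories and is the identity on addresses and taints, so each inductive case lines up directly with a corresponding rule for $\hrel$, and the argument is essentially a matter of assembling the right premises.

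In the base case $\trgS{M} = \trgS{\emptyset}$, the defining clause gives $\backtrfencec{\trgS{\emptyset}} = \src{\emptyset}$, and $\src{\emptyset} \hrel \trgS{\emptyset}$ holds immediately by \Cref{tr:hrel-b}. For the inductive step on an entry of the form $\trgS{M}; n \mapsto v : \taint$, I would first unfold the backtranslation using its clauses $\backtrfencec{\trgS{M}; n\mapsto v:\taint} = \backtrfencec{\trgS{M}} ; \backtrfencec{\trgS{n}} \mapsto \backtrfencec{\trgS{v}} : \backtrfencec{\trgS{\taint}}$, and then simplify using $\backtrfencec{\trgS{n}} = \src{n}$ and $\backtrfencec{\trgS{\taint}} = \src{\taint}$ to obtain $\backtrfencec{\trgS{M}} ; n \mapsto \backtrfencec{\trgS{v}} : \taint$. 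The three premises of the memory-relation rule then follow directly: the tail relation $\backtrfencec{\trgS{M}} \hrel \trgS{M}$ is the induction hypothesis; the value premise $\backtrfencec{\trgS{v}} \vrel \trgS{v}$ is exactly \Thmref{thm:v4-val-rel-bt-lfence}; and the address premise $\src{n} \vrel \trgS{n}$ holds because addresses are numbers related by equality, while the taints agree since backtranslation leaves them unchanged. Applying the appropriate rule concludes the case.

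The one point requiring care is the choice between \Cref{tr:hrel-i} and \Cref{tr:hrel-start}, the latter singling out the boundary entry at address $0$ that separates the private (negative) and public (non-negative) regions. I would therefore perform a case split on whether $n = 0$, applying \Cref{tr:hrel-start} in the boundary case---invoking the induction hypothesis on both the lower and upper segments of the memory so that both segment premises are available---and \Cref{tr:hrel-i} otherwise. Beyond this bookkeeping the proof is entirely routine and I do not anticipate a genuine obstacle; the only mild subtlety is ensuring the segmentation of $\trgS{M}$ around address $0$ is respected so that the correct rule's premises line up, and that the value lemma is invoked uniformly for both numeric and label-valued entries.
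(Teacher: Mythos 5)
Your proof is correct and matches the paper's approach: the paper simply discharges this lemma as a ``trivial analysis of the backtranslation,'' and your structural induction on $\trgS{M}$ --- unfolding the backtranslation clauses, invoking \Thmref{thm:v4-val-rel-bt-lfence} for the stored values, and picking \Cref{tr:hrel-i} or \Cref{tr:hrel-start} as appropriate --- is exactly that analysis spelled out. No gaps.
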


\begin{lemma}[A Taint is Related to its Backtranslation]\label{thm:v4-taint-rel-bt-lfence}(\showproof{v4-taint-rel-bt-lfence})
	\begin{align*}
		\backtrfencec{\trgS{\taint}}\equiv \trgS{\taint}
	\end{align*}
\end{lemma}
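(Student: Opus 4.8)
The plan is to dispatch this as a one-line definitional lemma by appealing directly to the defining clause of the context-based backtranslation on taints. Recall that in the definition of $\backtrfencec{\cdot}$ given earlier, the taint case is the single equation $\backtrfencec{\trgS{\taint}} = \src{\taint}$; that is, the backtranslation does not transform a taint structurally but merely reproduces it in the source. Since this clause applies uniformly (there is no recursion and no dependence on the surrounding syntactic context), unfolding it immediately rewrites the left-hand side of the goal to $\src{\taint}$.

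The key observation that closes the goal is that taints are drawn from the \emph{shared} two-element lattice $\{\safeta, \unta\}$, which is identical in the source and target languages: the source/target annotations on a taint are purely presentational (a ``colour''), and carry no semantic content that the backtranslation could alter. Hence $\src{\taint}$ and $\trgS{\taint}$ denote the very same lattice element, so $\src{\taint} \equiv \trgS{\taint}$ holds by definition of $\equiv$ on taints (equality of the underlying element). Chaining this with the rewrite from the first step yields $\backtrfencec{\trgS{\taint}} \equiv \trgS{\taint}$, as required. If one prefers an explicit argument, I would do a trivial case split on whether $\taint = \safeta$ or $\taint = \unta$, but in each case the reasoning is the same appeal to the single defining clause.

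There is no real obstacle here: this sits among the purely administrative bookkeeping lemmas (alongside \Thmref{thm:v4-val-rel-bt-lfence} and \Thmref{thm:v4-heap-rel-bt-lfence}) that feed the backward-simulation argument, and it ``follows from inspection of the respective compiler'' in the sense flagged in the excerpt. The only point requiring any care is correctly interpreting the relation $\equiv$ on taints as identity on the shared lattice rather than as some cross-language coercion; once that reading is fixed, the proof is immediate from the definition of $\backtrfencec{\cdot}$.
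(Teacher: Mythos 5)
Your proposal is correct and matches the paper's own (one-line) proof, which simply observes that the claim follows from inspection of the backtranslation: the defining clause $\backtrfencec{\trgS{\taint}} = \src{\taint}$ together with the fact that taints live in the shared lattice $\{\safeta,\unta\}$ closes the goal. Your additional remarks on reading $\equiv$ as identity on that lattice are a reasonable elaboration but do not change the argument.
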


\begin{lemma}[The Value Relation is Functional]\label{thm:v4-val-uniqueR}(\showproof{v4-val-uniqueR})
If
\begin{enumerate}
    \item $\src{v} \vrel \trgS{v}$ and 
    \item $\src{v} \vrel \trgS{v'}$
\end{enumerate}
Then
\begin{enumerate}[label=\Roman*]
    \item $\trgS{v} = \trgS{v'}$
\end{enumerate}
    
\end{lemma}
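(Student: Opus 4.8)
The plan is to prove \Cref{thm:v4-val-uniqueR} by a straightforward inversion argument on the two derivations of the value relation, exploiting the fact that \vrel{} is essentially the identity relation on values. Recall that values are drawn from $\Val = \mb{Z} \cup \labelset$, and that numbers and labels live in disjoint sets; hence every source value $\src{v}$ is \emph{either} a number \emph{or} a label, never both. This disjointness is the only structural fact the proof relies on.

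First I would perform a case analysis on the shape of $\src{v}$. If $\src{v}$ is a number $\src{z} \in \mb{Z}$, then both hypotheses $\src{v} \vrel \trgS{v}$ and $\src{v} \vrel \trgS{v'}$ can only have been derived by the \emph{number} rule defining \vrel{}: the label rule requires the related source value to lie in $\labelset$, which is impossible when $\src{v} \in \mb{Z}$. Inverting both derivations then forces $\trgS{v} \equiv \src{z}$ and $\trgS{v'} \equiv \src{z}$, and since $\equiv$ on the underlying numbers is syntactic equality, we obtain $\trgS{v} = \trgS{v'}$, which is exactly conclusion (I).

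The symmetric case, in which $\src{v}$ is a label $\src{l} \in \labelset$, is handled identically: disjointness rules out the number rule, so both derivations must use the \emph{label} rule, and inversion yields $\trgS{v} \equiv \src{l} \equiv \trgS{v'}$, hence $\trgS{v} = \trgS{v'}$.

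I expect no genuine obstacle in this lemma; it is a bookkeeping result. The single point worth stating explicitly is that the disjointness of $\mb{Z}$ and $\labelset$ guarantees that the two given derivations must select the \emph{same} defining rule of \vrel{}, which is what removes any ambiguity in the target value and makes functionality immediate once the case split is performed.
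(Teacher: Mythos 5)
Your proof is correct and is simply a fully spelled-out version of the paper's own argument, which disposes of the lemma in one line ("Follows by inspection of $\vrelref$"); the case split on numbers versus labels and the inversion of the two derivations is exactly what that inspection amounts to. No gap and no meaningful difference in approach.
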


\begin{proof}
Follows by inspection of $\vrelref$.
\end{proof}

\begin{lemma}[The Value Relation is Injective]\label{thm:v4-val-uniqueL}(\showproof{v4-val-uniqueL})
If
\begin{enumerate}
    \item $\src{v} \vrel \trgS{v}$ and 
    \item $\src{v'} \vrel \trgS{v}$
\end{enumerate}
Then
\begin{enumerate}[label=\Roman*]
    \item $\src{v} = \src{v'}$
\end{enumerate}
    
\end{lemma}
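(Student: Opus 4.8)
The plan is to prove this by inspection of the value relation $\vrelref$, exactly mirroring the proof of its functional counterpart (\Cref{thm:v4-val-uniqueR}). The key observation is that both defining rules of $\vrel$---the numeric rule and the label rule---require the source and target components to be syntactically identical (via the premise $\src{z} \equiv \trgS{z}$, respectively $\src{l} \equiv \trgS{l}$). Thus $\vrel$ is essentially the identity relation restricted to well-typed values, and such a relation is trivially injective.

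First I would perform a case analysis on the derivation of the first hypothesis $\src{v} \vrel \trgS{v}$. In the numeric case the rule forces $\src{v}$ and $\trgS{v}$ to be the same integer, so $\src{v} \equiv \trgS{v}$; in the label case the rule forces them to be the same label, again giving $\src{v} \equiv \trgS{v}$. Either way I conclude $\src{v} \equiv \trgS{v}$. Next I would apply the same case analysis to the second hypothesis $\src{v'} \vrel \trgS{v}$, obtaining $\src{v'} \equiv \trgS{v}$. Combining $\src{v} \equiv \trgS{v}$ and $\src{v'} \equiv \trgS{v}$ by transitivity of equality yields $\src{v} = \src{v'}$, as required.

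I do not expect any genuine obstacle here: since each rule of $\vrelref$ equates its two sides, the statement reduces to transitivity of equality. The only point requiring a moment's care is confirming that both hypotheses are resolved by the \emph{same} rule (both numeric or both label) so that no kind mismatch arises; this is guaranteed because the two derivations share the same target value $\trgS{v}$, whose kind (number or label) is already fixed. Consequently the proof can be discharged as succinctly as the functional case, namely ``by inspection of $\vrelref$''.
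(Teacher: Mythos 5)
Your proof is correct and matches the paper's argument, which is simply ``Follows by inspection of $\vrelref$'': both rules of the relation equate source and target values syntactically, so injectivity reduces to transitivity of equality. Your additional remark about the two derivations necessarily using the same rule (since the target value fixes the kind) is a fine elaboration but not a departure from the paper's approach.
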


\begin{proof}
Follows by inspection of $\vrelref$.
\end{proof}

\begin{lemma}[Related Function maps after compilation]\label{thm:v4-fmaps-related}
For any program $\src{p}$ and its function map $\src{\mathcal{F}}$ we have $\mathcal{F}(f) = l \implies \complfenceS{\mathcal{F}}(f) = l$, where $\complfenceS{\mathcal{F}}$ is the function map for the compiled program $\complfenceS{p}$.
\end{lemma}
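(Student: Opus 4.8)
The plan is to prove this by straightforward inspection of the $\complfenceS{\cdot}$ compiler, since the claim concerns only the preservation of the starting labels recorded in the function map. Recall that a function entry has the shape $f \mapsto (l_{start}, c)$, so $\mathcal{F}(f) = l_{start}$ is exactly the starting label of the code $c$ of $f$, i.e.\ the least label (in the partial order on abstract labels) occurring in $c$. First I would unfold $\complfenceS{\cdot}$ on a function entry: the compiler compiles the body $c$ while leaving the function name and the starting label untouched, producing the target entry $f \mapsto (l_{start}, \complfenceS{c})$.

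The key step is to observe that $\complfenceS{\cdot}$ preserves every existing instruction label. By the clause $\complfenceS{l} = \trgS{l}$, each source label maps to itself, and the only fresh labels the compiler introduces are the fence labels $l_{new}$ inserted after a store $l : \pstore{x}{e}$; by the side condition $l \subseteq l_{new} \leq next(l)$ these lie strictly above $l$ in the label order, sitting between $l$ and its successor. Hence the compiler never deletes a label, never relabels an existing instruction, and never inserts a label \emph{below} an existing one, so the least label of $\complfenceS{c}$ coincides with the least label $l_{start}$ of $c$. This gives $\complfenceS{\mathcal{F}}(f) = l_{start} = l$, and it reuses exactly the reasoning behind the ``$\ffun{}$ remains invariant'' lemma, which already records that source labels survive compilation unchanged.

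The only case worth a second look—and the closest thing to an obstacle—is a function that \emph{begins} with a store, i.e.\ when $l_{start} : \pstore{x}{e}$ is the first instruction of $c$. There the compiler emits $l_{start} : \pstore{x}{e} ; l_{new} : \pbarrier$, but since the injected fence carries the fresh label $l_{new} > l_{start}$, the store still occupies the least label $l_{start}$ of the compiled body and the starting label is unchanged. As no other clause of $\complfenceS{\cdot}$ introduces a label below an existing one, the invariant holds in every case. I expect no genuine difficulty: the argument reduces to checking that fence insertion respects the abstract-label ordering fixed in the language definition.
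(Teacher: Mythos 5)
Your proof is correct and follows the same route as the paper, whose entire proof of this lemma is ``Follows by inspection of the compiler''; your write-up simply makes that inspection explicit (label preservation, freshness of the fence labels $l_{new}$ above $l$, and the leading-store edge case). Nothing is missing.
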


\begin{proof}
    Follows by inspection of the compiler.
\end{proof}

Since we are doing a context-based backtranslation, and since back-translated values are related to source values using $\vrelref$ as for compiled values, we do not need extra relations.

\begin{proof}[Proof of \Thmref{thm:v4-lfence-comp-rdss}]\proofref{}{v4-lfence-comp-rdss}\hfill

    Instantiate $\src{A}$ with $\backtrfencec{\trgS{A}}$.
    This holds by \Thmref{thm:v4-corr-bt-lfence}.
\end{proof}

\BREAK

\begin{proof}[Proof of \Thmref{thm:v4-all-lfence-comp-are-rdss}]\proofref{}{v4-all-lfence-comp-are-rdss}\hfill

    By \Thmref{thm:ss-sni-source} we know all source languages are SS: $\text{HPS:} \forall\src{P}\ldotp \contract{}{NS}\vdash P : \rss$.
    
    By \Thmref{thm:v4-lfence-comp-rdss} we have HPC: $\contractSpec{}{\Sv} \vdash \complfenceS{\cdot} : \rdss$.

    By \Thmref{thm:rdss-eq-rdsp} with HPC we have $\text{HPP:} \contractSpec{}{\Sv}\vdash \complfenceS{\cdot} : \rdssp$

    Unfolding \Thmref{def:rdsspc} of HPP we get 
    $\forall\src{P}\ldotp \text{ if } \contract{}{NS} \vdash\src{P} : \rss \text{ then } \contractSpec{}{\Sv}\vdash\comp{\src{P}} : \rss$. 
  
    Now, we instantiate it with HPS and can conclude that $\forall\src{P}\ldotp \contractSpec{}{\Sv}\vdash\complfenceS{P} : \rss$.
    
\end{proof}

\BREAK 

\begin{proof}[Proof of \Thmref{thm:v4-corr-bt-lfence}]\proofref{}{v4-corr-bt-lfence}\hfill

    This holds by \Thmref{thm:v4-ini-state-rel} and \Thmref{thm:v4-bwd-sim-lfence}.
\end{proof}

\BREAK

\begin{proof}[Proof of \Thmref{thm:v4-bwd-sim-lfence}]\proofref{}{v4-bwd-sim-lfence}\hfill

We proceed by induction on $\SigmaS \bigspecarrowS{\tra{^{\taint}}} \SigmaS'$:
\begin{description}
    \item[\Cref{tr:v4-reflect}]
    Trivial.
    \item[\Cref{tr:v4-single}]
    We have 
    \begin{align*}
        \trgS{\SigmaS \bigspecarrowS{\OB{\tau^{\taint}}} \SigmaS''} \\
        \trgS{\SigmaS'' \specarrowS{\tau^{\taint}} \SigmaS'} \\ 
        \trgS{\SigmaS''} = \trgS{\tup{\OB{C''}, \OB{B''}, \tup{p ,M'', A''}, \bot, \safeta}}
    \end{align*}

    Applying IH on $\trgS{\SigmaS \bigspecarrowS{\OB{\tau^{\taint''}}} \SigmaS''}$ we get: 
    \begin{enumerate}
        \item $\src{\Omega  \nsbigarrow{\tras{^{\sigma''}}} \Omega''}$ and
        \item $\src{\tras{^{\sigma''}}} \tracerel \trgS{\tra{^{\taint''}}}$ and
        \item $\src{\Omega''} \srelref_{\src{\OB{f''}}}\, \trgS{\SigmaS{''}}$
    \end{enumerate}
    Let $\ffun{\src{A''(\pc)}} = \src{f''}$ and let $\ffun{\trgS{l'}} = \trgS{f'}$ (where $\trgS{\SigmaS' . A(\pc)} = \trgS{l'}$).

    From  $\src{\Omega''} \srelref_{\src{\OB{f''}}}\, \trgS{\SigmaS{''}}$ we have $\src{\OB{F}} \equiv \trgS{\OB{F}}$. Thus, there is $\src{f'}$.
    
    We proceed by case analysis on $\src{f''} \in \src{\OB{f''}}$:
    \begin{description}
        \item[$\src{f''} \in \src{\OB{f''}}$ (\textbf{in the compiled component})]
            By case analysis on $\src{f'} \in \src{\OB{f''}}$:
            \begin{description}
                \item[$\src{f'} \in \src{\OB{f''}}$ (\textbf{in the compiled component})]
                This holds by \Thmref{thm:v4-bwd-sim-comp-steps-lfence}.
                
                \item[$\src{f'} \notin \src{\OB{f''}}$ (\textbf{in the context})]
                This switch from component to context arises in two cases:
                \begin{description}
                    \item[\textbf{call}]
                    This is a call from a compiled function to a context function.

                    In this case we know that $\trgS{\SigmaS'' \specarrowS{\tau^{\taint}} \SigmaS'}$ was derived by \Cref{tr:v4-nospec-act} combined with \trgS{\Cref{tr:call}} and we know $\trgS{\tau^{\taint}} = \trgS{\clh{f'}^{\safeta}}$.
                    
                    So $\trgS{\SigmaS'} = \trgS{\tup{\OB{C''}, \OB{B'}, \tup{p, M', A'}, \bot, \safeta}}$, where 
                    $\trgS{\OB{B'}} = \trgS{\OB{B''} \cdot \mathcal{F}(f) : \safeta ; \emptyset}$ and
                    $\trgS{M'} = \trgS{M''}$ and $\trgS{A'} = \trgS{A''[\pc \mapsto \mathcal{F}(f) : \safeta]}$.

                    By definition of $\src{\OB{F}; p} \crel_{\src{\OB{f''}}} \trgS{\OB{F}; p}$ and $\src{f'} \notin \src{\OB{f''}}$ we have $\gprog{\trgS{p}}{\trgS{f'}} = \trgS{p'}$ and
                    and $\gprog{\src{p}}{\src{f'}} = \backtrfencec{\trgS{p'}}$.

                    By \src{\Cref{tr:call}}, we get $\src{\Omega \nsarrow{\clh{f'}^{\safeta}} \Omega'}$ .
                    So $\src{\Omega'} = \src{\OB{C''}, \OB{B'}, \tup{p, M', A'}}$, where 
                    $\src{\OB{B'}} = \src{\OB{B'} \cdot \mathcal{F}(f) : \safeta ; \emptyset}$ and
                    $\src{M'} = \src{M''}$ and $\src{A'} = \src{A''[\pc \mapsto \mathcal{F}(f) : \safeta]}$.

                    We now need to prove that: 
                    \begin{description}
                        \item[$\src{\Omega'} \srelref_{\src{\OB{f''}}}\, \SigmaS'$]
                        By \Cref{tr:v4-states} it remains to show:
                        \begin{enumerate}
                            \item[$\src{A'} \rrel \trgS{A'}$]
                            
                            Follows from $\src{A''} \hrel \trgS{A''}$ from IH and $\src{\mathcal{F}(f) = l} \implies \trgS{\mathcal{F}(f) = l}$ by \Thmref{thm:v4-fmaps-related}.
                            \item[$\src{M'} \hrel \trgS{M'}$]

                            Follows from $\src{M'} = \src{M''}$ and $\trgS{M'} = \trgS{M''}$ and $\src{M''} \hrel \trgS{M''}$ from IH.
                            
                            \item[$ \src{B'} \brel \trgS{B'}$]

                            From \Thmref{thm:v4-fmaps-related} we get $\src{\mathcal{F}(f)} = \trgS{\mathcal{F}(f)}$ and are thus finished.
                            
                            \item[$\src{\OB{F}; p} \crel_{\src{\OB{f''}}} \trgS{\OB{F}; p}$]

                            Follows from $\src{\Omega''} \srelref_{\src{\OB{f''}}}\, \trgS{\SigmaS{''}}$ since programs do not change during execution.
                           
                            \item[$\src{\OB{I}} \equiv \trgS{\OB{I}}$] 
                            
                            Follows from $\src{\Omega''} \srelref_{\src{\OB{f''}}}\, \trgS{\SigmaS{''}}$ since components do not change.
                            
                        \end{enumerate}

                        \item[$\src{\clh{f'}^{\safeta}} \arelref \trgS{\clh{f'}^{\safeta}}$]
                        Holds trivially by \Cref{tr:ac-rel-cl}.
                        
                    \end{description}

                    \item[\textbf{return}]
                    This is a return from a compiled function to a context function.
                    This is the dual of the case below for return from context to component.
                \end{description}
            \end{description}
        
        \item[$\src{f''} \notin \src{\OB{f''}}$ (\textbf{in the context})]
        
            By case analysis on $\src{f'} \in \src{\OB{f''}}$:
            \begin{description}
                \item[$\src{f'} \in \src{\OB{f''}}$ (\textbf{in the compiled component})]
                
                This switch from context to component arises in two cases:
                
                \begin{description}
                    \item[\textbf{call}]
                    
                    This is a call from a context function to a compiled function.
                    This is the dual of the case for call above.

                    \item[\textbf{return}]
                    
                    This is a return from a context function to a compiled function.
                    In this case we know that $\trgS{\SigmaS'' \specarrowS{\tau^{\taint}} \SigmaS'}$ was derived by \Cref{tr:v4-nospec-act} combined with \trgS{\Cref{tr:ret}} and we know $\trgS{\tau^{\taint}} = \trgS{\rth^{\safeta}}$.
                    So $\trgS{\SigmaS'} = \trgS{(\OB{C''}, \OB{B'}, \tup{p, M', A'}, \bot, \safeta)}$, where 
                    $\trgS{\OB{B''}} = \trgS{\OB{B'} \cdot 0 \mapsto l : \safeta}$ and
                    $\trgS{M'} = \trgS{M''}$ and $\trgS{A'} = \trgS{A''[\pc \mapsto l]}$.

                    By \src{\Cref{tr:ret}}, we get $\src{\Omega \nsarrow{\retObs^{\safeta}} \Omega'}$ .
                    So $\src{\Omega'} = \src{\OB{C''}, \OB{B'}, \tup{p, M', A'}}$, where 
                    $\src{\OB{B''}} = \src{\OB{B'} \cdot 0 \mapsto l : \safeta}$ and
                    $\src{M'} = \src{M''}$ and $\src{A'} = \src{A''[\pc \mapsto l : \safeta]}$.
                    We now need to prove that: 
                    \begin{description}
                        \item[$\src{\Omega'} \srelref_{\src{\OB{f''}}}\, \SigmaS'$]
                        By \Cref{tr:v4-states} it remains to show:
                        \begin{enumerate}
                            \item[$\src{A'} \rrel \trgS{A'}$]

                            From $\src{\Omega''} \srelref_{\src{\OB{f''}}} \trgS{\SigmaS''}$ we know $\src{B''} \brel \trgS{B''}$ and thus
                            $\src{0 \mapsto l : \safeta} \vrel \trgS{0 \mapsto l : \safeta}$ and get $\src{l} \equiv \trgS{l}$.

                            Thus, together with IH $\src{A''} \rrel \trgS{A''}$, we have $\src{A'} \rrel \trgS{A'}$.
                            
                            \item[$\src{M'} \hrel \trgS{M'}$]
                            
                            Follows from $\src{M'} = \src{M''}$ and $\trgS{M'} = \trgS{M''}$ and $\src{M''} \hrel \trgS{M''}$ from IH.
                            
                            \item[$\src{B'} \brel \trgS{B'}$]
                            
                            Trivially follows from $\src{\Omega''} \srelref_{\src{\OB{f''}}} \trgS{\SigmaS''}$.
                            
                            \item[$\src{\OB{F}; p} \crel_{\src{\OB{f''}}} \trgS{\OB{F}; p}$] 
                            
                            Follows from $\src{\Omega''} \srelref_{\src{\OB{f''}}}\, \trgS{\SigmaS{''}}$ since programs do not change during execution.
                            
                            \item[$\src{\OB{F}} \equiv \trgS{\OB{F}}$ and $\src{\OB{I}} \equiv \trgS{\OB{I}}$] 
                            
                            Follows from $\src{\Omega''} \srelref_{\src{\OB{f''}}}\, \trgS{\SigmaS{''}}$ since components do not change.
                            
                        \end{enumerate}
                        \item[$\src{\rth^{\safeta}} \arelref \trgS{\rth^{\safeta}}$]
                        Holds trivially by \Cref{tr:ac-rel-rt}.
                        
                    \end{description}
                        
                \end{description}

                \item[$\src{f'} \notin \src{\OB{f''}}$ (\textbf{in the context})]
                This holds by \Thmref{thm:v4-back-sim-bts-lfence}.
                
            \end{description}
        
    \end{description}
    \item[\Cref{tr:v4-silent}]
    
    Then we have $\trgS{\SigmaS \bigspecarrowS{\tra{^\taint}} \SigmaS''}$ and $\trgS{\SigmaS'' \specarrowS{\epsilon} \SigmaS'}$.

     Applying IH on $\trgS{\SigmaS \bigspecarrowS{\OB{\tau^{\taint''}}} \SigmaS''}$ we get: 
    \begin{enumerate}
        \item $\src{\Omega  \nsbigarrow{\tras{^{\sigma''}}} \Omega''}$ and
        \item $\src{\tras{^{\sigma''}}} \tracerel \trgS{\tra{^{\taint''}}}$ and
        \item $\src{\Omega''} \srelref_{\src{\OB{f''}}}\, \trgS{\SigmaS{''}}$
    \end{enumerate}

     The step $\trgS{\SigmaS'' \specarrowS{\epsilon} \SigmaS'}$ cannot be a cross component call or return, because the matching observation is missing.
     This means execution continues in either the component or the context:
     \begin{description}
         \item[Execution continues in component] 
         This holds by \Thmref{thm:v4-bwd-sim-comp-steps-lfence}.
         
         \item[Execution continues in context] 
         This holds by \Thmref{thm:v4-back-sim-bts-lfence}.
     \end{description}
\end{description}

\end{proof}

\BREAK

\begin{proof}[Proof of \Thmref{thm:v4-bwd-sim-comp-steps-lfence}]\proofref{}{v4-bwd-sim-comp-steps-lfence}\hfill

    Proof by Contradiction
    \begin{description}

        \item[no trace relation]
        Then there exists $\src{\Omega''}$ such that $\src{\Omega \nsarrow{\tras{^{\sigma'}}} \Omega''}$ such that $\src{\tras{^{\sigma'}}} \tracerel \trgS{\tra{^{\taint}}}$ does not hold.
        By \Thmref{thm:v4-fwd-sim-stm-lfence} we have that the target ends up in a state $\trgS{\SigmaS''}$ such that $\src{\tras{^{\sigma'}}} \tracerel \trgS{\tra{^{\taint'}}}$ holds.
        Since the target semantics is deterministic, we know that $\SigmaS' = \SigmaS''$ and that $\trgS{\tra{^{\taint'}}} = \trgS{\tra{^{\taint}}}$.
        
        This contradicts our assumption that the traces are not in the trace relation.
        \item[not in state relation]

        Then there exists $\src{\Omega''}$ such that $\src{\Omega \nsarrow{\tras{^{\sigma'}}} \Omega''}$ such that $\src{\Omega''} \srelref_{\src{\OB{f''}}}\, \trgS{\SigmaS{'}}$ does not hold.
        
        By \Thmref{thm:v4-fwd-sim-stm-lfence} we have that the target ends up in a state $\trgS{\SigmaS''}$ such that $\src{\Omega''} \srelref_{\src{\OB{f''}}}\, \trgS{\SigmaS{''}}$ holds.

        Since the target semantics is deterministic, we know that $\SigmaS' = \SigmaS''$ and thus $\src{\Omega''} \srelref_{\src{\OB{f''}}}\, \trgS{\SigmaS{'}}$.

        We now have a contradiction.
    \end{description}
\end{proof}

\BREAK

\begin{proof}[Proof of \Thmref{thm:v4-fwd-sim-stm-lfence}]\proofref{}{v4-fwd-sim-stm-lfence}\hfill

    Since $\src{\Omega} \srel_{\src{\OB{f''}}}\ \trgS{\SigmaS}$ we know that $\src{A(\pc)} = \trgS{A(\pc)}$.
    We proceed by induction on $\nsarrow{}$:
    \begin{description}
        \item[\Cref{tr:skip}]
        The only change to the state is $\src{A'} = \src{A[pc \mapsto \succes(\pc)]}$ and we know $\tau^\sigma = \epsilon$.

        \begin{description}
            \item[$\trgS{\SigmaS \bigspecarrowS{\tra{^{\taint}}} \SigmaS'}$]
                Thus, we can use \Cref{tr:skip} to derive $\OmegaS \trgS{\nsarrow{\epsilon}} \OmegaS'$ with $\trgS{A'} = \trgS{A[pc \mapsto \succes(\pc)]}$.
                
                Next, we can use \Cref{tr:v4-t-nospec-eps} to derive a single step which we combine with \Cref{tr:v4-silent} to derive
                $\SigmaS \bigspecarrowS{\varepsilon} \SigmaS'$.
                
            \item[$\src{\tau^\sigma} \tracerel \trgS{\tra{^{\taint}}}$]
                $\src{\varepsilon} \tracerel \trgS{\varepsilon}$ follows trivially.
                
            \item[$\src{\Omega'} \srelref_{\src{\OB{f''}}}\, \trgS{\SigmaS'}$]
            Follows from $\src{A'} \rrel \trgS{A'}$ which follows from $\src{A} \rrel \trgS{A}$.
            
        \end{description}

        \item[\Cref{tr:assign}]
        The only change to the state is $\src{A'} = \src{A[pc \mapsto \succes(\pc), x \mapsto v : \taint]}$ where $\src{\exprEval{A}{e}{v : \taint}}$.
        Furthermore, we know $\src{\tau^\sigma} = \src{\epsilon}$. 

        \begin{description}
            \item[$\trgS{\SigmaS \bigspecarrowS{\tra{^{\taint}}} \SigmaS'}$]
                By \Thmref{thm:v4-fwd-sim-exp-lfence} we get $\trgS{\exprEval{A}{\complfenceS{e}}{\complfenceS{v} : \taintS}}$ and $\src{\taint} \equiv \taintS$.
                
                Thus, we can use \Cref{tr:assign} to derive $\OmegaS \trgS{\nsarrow{\epsilon}} \OmegaS'$ with $\trgS{A'} = \trgS{A[pc \mapsto \succes(\pc), x \mapsto \complfenceS{v} : \taintS]}$.
                
                Next, we can use \Cref{tr:v4-t-nospec-eps} to derive a single step which we combine with \Cref{tr:v4-silent} to derive
                $\SigmaS \bigspecarrowS{\varepsilon} \SigmaS'$.
                
            \item[$\src{\epsilon} \tracerel \trgS{\epsilon}$]
                $\src{\varepsilon} \tracerel \trgS{\varepsilon}$ follows trivially.
                
            \item[$\src{\Omega'} \srelref_{\src{\OB{f''}}}\, \trgS{\SigmaS'}$]
                Follows from $\src{A'} \rrel \trgS{A'}$ which follows from $\src{A} \rrel \trgS{A}$ and \Thmref{thm:v4-val-rel-comp-lfence} for 
                $\src{v} \vrel \complfenceS{v}$.
            
        \end{description}

        \item[\Cref{tr:load}]
        The only change to the state is $\src{A'} = \src{A[pc \mapsto \succes(\pc), x \mapsto M(n)]}$ where $\src{\exprEval{A}{e}{n : \taint}}$.
        Furthermore, we know $\src{\tau^\safeta} = \src{\loadObs^{\mta(n)}}$.
        \begin{description}
            \item[$\trgS{\SigmaS \bigspecarrowS{\tra{^{\taint}}} \SigmaS'}$]
            By \Thmref{thm:v4-fwd-sim-exp-lfence} we get $\trgS{\exprEval{A}{\complfenceS{e}}{\complfenceS{v} : \taintS}}$ and $\src{\taint} \equiv \taintS$.
            
            Thus, we can use \Cref{tr:load} to derive $\OmegaS \trgS{\nsarrow{\loadObs{n}^{\taint'}}} \OmegaS'$ with $\trgS{A'} = \trgS{A[pc \mapsto \succes(\pc), x \mapsto \complfenceS{v} : \taintS]}$.

            The rest follows trivially.
            \item[$\src{\loadObs{n}^{\safeta}} \tracerel \trgS{\loadObs{\complfenceS{n}}^{\taintpc \glb \taint'}}$]
            From $\srel$ we have $\trgS{\taintpc} = \trgS{\safeta}$ and thus $\trgS{\taintpc \glb \taint'} = \trgS{\safeta}$.
            
            Next, \Cref{tr:ac-rel-rd} with \Thmref{thm:v4-val-rel-comp-lfence} we follow that $\src{n} \vrel \complfenceS{n}$.

            \item[$\src{\Omega'} \srelref_{\src{\OB{f''}}}\, \trgS{\SigmaS'}$]
            Follows from $\src{A'} \rrel \trgS{A'}$ which follows from $\src{A} \rrel \trgS{A}$ and $\src{M} \hrel \trgS{M}$.
        \end{description}

        \item[\Cref{tr:load-prv}]
        This case is similar to the \textbf{load} case above and follows from \Thmref{thm:v4-fwd-sim-exp-lfence} and \Cref{tr:ac-rel-rd}.
        
        \item[\Cref{tr:store}]
        
        \begin{insight}
            This is the reason why we have $(\complfenceS{l : i} = \trgS{A(\pc)} : \trgS{i} ; \trgS{p_c'})$ in the statement of the lemma. Since the compilation introduces multiple instructions.
        \end{insight}
        This is the case where we need to account for multiple steps in the target, since the $\complfenceS{}$ adds an $\trgS{\pbarrier}$ instruction after the $\storeC{}$ instruction:
        \begin{align*}
             \complfenceS{l : \pstore{x}{e}} &= \trgS{
                                            \begin{aligned}[t]
                                                    &
                                                    l : \pstore{x}{e} 
                                                    ~\text{\Cref{tr:v4-skip}}
                                                        \\
                                                        &\
                                                       l' :  \pbarrier
                                                       ~\text{\Cref{tr:v4-barr-spec}}
                                            \end{aligned}
                                            }
        \end{align*}
        The only change to the state is $\src{A'} = \src{A[pc \mapsto \succes(\pc)}$ and $\src{M'} = \src{M[n \mapsto A(x)]}$ where $\src{\exprEval{A}{e}{n : \taint}}$.

        Furthermore, we know $\tau^\sigma = \storeObs^{\safeta}$.

        \begin{description}
            \item[$\trgS{\SigmaS \bigspecarrowS{\tra{^{\taint}}} \SigmaS'}$]
                By \Thmref{thm:v4-fwd-sim-exp-lfence} we get $\trgS{\exprEval{A}{\complfenceS{e}}{\complfenceS{n} : \taintS}}$ and $\src{\taint} \equiv \taintS$.
                
                Thus, we can use \Cref{tr:store} to derive $\OmegaS \trgS{\nsarrow{\storeObs{n}^{\taint}}} \OmegaS'$ with $\trgS{A'} = \trgS{A[pc \mapsto \succes(\pc)]}$ and $\trgS{M'} = \trgS{M[n \mapsto A(x)]}$.

                This is used in the step \Cref{tr:v4-skip}. The next instruction is the $\pbarrier$ and after \Cref{tr:v4-barr}, a rollback is triggered via \Cref{tr:v4-rollback}.

                Thus, we have the execution  $\SigmaS \bigspecarrowS{\storeObs{n}^{\taintpc \glb \taint} \cdot \rollbackObsS} \SigmaS'$, where $\SigmaS' = \trgS{\tup{\OB{C}, \OB{B}, \tup{p, M', A'}, \bot, \safeta}}$

            \item[$\src{\storeObs{n}^{\safeta}} \tracerel \trgS{\storeObs{\complfenceS{n}}^{\taintpc \glb \taint} \cdot \rollbackObsS }$]
                From $\srel$ we have $\trgS{\taintpc} = \trgS{\safeta}$ and thus $\trgS{\taintpc \glb \taint} = \trgS{\safeta}$.

                Next, from \Cref{tr:ac-rel-rlb} we get $\src{\epsilon} \arel \rollbackObsS$ and from \Cref{tr:ac-rel-wr} with \Thmref{thm:v4-val-rel-comp-lfence} for $\src{n} \vrel \complfenceS{n}$ we get $\src{\storeObs{n}^{\safeta}} \arel \trgS{\storeObs{\complfenceS{n}}^{\taintpc \glb \taint}}$.

            \item[$\src{\Omega'} \srelref_{\src{\OB{f''}}}\, \trgS{\SigmaS'}$]
                Follows from $\src{A'} \rrel \trgS{A'}$ which follows from $\src{A} \rrel \trgS{A}$ and $\src{M} \hrel \trgS{M}$ and \Thmref{thm:v4-val-rel-comp-lfence} for $\src{n} \vrel \complfenceS{n}$.
        \end{description}
        \item[\Cref{tr:store-prv}]
        This case is similar to the \textbf{load} case above. 
        
        \item[\Cref{tr:beqz-sat}]
        The only change to the state is $\src{A'} = \src{A[pc \mapsto l]}$ and $\src{A(x) = 0 : \taint}$.
        Furthermore, we know $\src{\tau^\sigma} = \src{\pcObs{l}^{\taint}}$.

        \begin{description}
            \item[$\trgS{\SigmaS \bigspecarrowS{\tra{^{\taint}}} \SigmaS'}$]
            From  $\src{A} \rrel \trgS{A}$ we know that $\trgS{A(x)} = \trgS{0} : \taintS'$ and $\src{\taint} \equiv \trgS{\taintS'}$.
            
            Thus, we can use \Cref{tr:beqz-sat} to derive $\OmegaS \trgS{\nsarrow{\pcObs{l}^{\taintS'}}} \OmegaS'$ with $\trgS{A'} = \trgS{A[pc \mapsto l}$.

            The rest follows trivially.
            \item[$\src{\pcObs{l}^{\taint}} \tracerel \trgS{\pcObs{l}^{\taintS'}}$]
            Follows from \Cref{tr:ac-rel-if} with \Thmref{thm:v4-val-rel-comp-lfence} for $\src{l} \vrel \complfenceS{l}$ and $\src{\taint} \equiv \trgS{\taintS'}$.

            \item[$\src{\Omega'} \srelref_{\src{\OB{f''}}}\, \trgS{\SigmaS'}$]
            Follows from $\src{A'} \rrel \trgS{A'}$ which follows from $\src{A} \rrel \trgS{A}$.
        \end{description}

        \item[\Cref{tr:beqz-unsat}]
        This case is analogous to case \textbf{beqz-sat} above.
        \item[\Cref{tr:call}]
        By \Thmref{thm:v4-fmaps-related} we have $\src{\mathcal{F}(f)} = \trgS{\mathcal{F}(f)}$.
        This means the register $\pc$ is updated in the same way. The rest follows from the relatedness of memories, registers and stacks.
        \item[\Cref{tr:call-internal}]
        This case is analogous to case \textbf{call} above.
        \item[\Cref{tr:callback}]
        This case cannot arise as we do not step to a compiled instructions $\complfenceS{i'}$ in the target.
        
        \item[\Cref{tr:ret}]
        Follows from the relatedness of stacks $\src{B} \brel \trgS{B}$. Thus, the $\pc$ register is updated to related values, which means we can show $\src{A'} \arel \trgS{A'}$.
        \item[\Cref{tr:ret-internal}]
        This case is analogous to case \textbf{ret} above.

        \item[\Cref{tr:retback}]
        This case cannot arise as we do not step to a compiled instructions $\complfenceS{i'}$ in the target.
    \end{description}
    
\end{proof}

\BREAK

\begin{proof}[Proof of \Thmref{thm:v4-fwd-sim-exp-lfence}]\proofref{}{v4-fwd-sim-exp-lfence}\hfill

    The proof proceeds by structural induction on $\src{A\triangleright e \bigreds v : \sigma}$:
\begin{description}
    \item[\Cref{tr:E-val}]
    Then we have $\src{e = v}$ and $\complfenceS{\src{v}} = \trgS{v}$ and $\sigma = \safeta$.
    
    Now, we fulfil all conditions for \cref{tr:E-val} for $\trgS{A \triangleright \complfenceS{v} \bigreds \complfenceS{v} : \safeta}$ and are finished.
    
    \item[\Cref{tr:E-lookup}]
    Then we have $\src{e = r}$ and $\complfenceS{r} = \trgS{r}$, where $r \in \Reg$ and $\complfenceS{v} = \trgS{v}$.
    Additionally, $\src{A(r) = v : \sigma}$.
    
    From $\src{A}\rrelref\trgS{A}$ we get $\trgS{A(r) = v' : \taint}$ with $\trgS{\taint} \equiv \taintS$ and $\src{v'} \vrel \trgS{v}$.
    
    By \Thmref{thm:v4-val-rel-comp-lfence} we have $\src{v} \vrel \complfenceS{v}$.
    
    By \Thmref{thm:v4-val-uniqueR} with $\src{v} \vrel \complfenceS{v}$ and $\src{v} \vrel \trgS{v'}$ we get $\complfenceS{v} = \trgS{v'}$.

    We can now use \Cref{tr:E-lookup} with $\trgS{A(r) = v' : \taint}$ for $\trgS{A \triangleright \trgS{r} \bigreds \trgS{v'} : \taint}$.

    \item[\Cref{tr:E-binop}]
    Then we have as IH: 
    \begin{align*}
        \src{A \triangleright \complfenceS{e_1} \bigreds \complfenceS{n_1} : \taint_1 } \\
        \src{A \triangleright \complfenceS{e_2} \bigreds \complfenceS{n_1} : \taint_2 }
    \end{align*}
    and $ \complfenceS{e_1 + e_2} = \complfenceS{e_1} + \complfenceS{e_2}$.
    
    Follows from IH and \Cref{tr:E-binop}.
    
    \item[\Cref{tr:E-unop}]
    Then by IH we have $\src{A \triangleright \complfenceS{e'} \bigreds \complfenceS{n} : \taint }$ and $ \complfenceS{\src{\ominus e'}} = \ominus \complfenceS{\src{e'}}$.
    
    Follows from IH and \Cref{tr:E-unop}.
\end{description}
\end{proof}

\BREAK

\begin{proof}[Proof of \Thmref{thm:v4-back-sim-bte-lfence}]\proofref{}{v4-back-sim-bte-lfence}\hfill

The proof proceeds by structural induction on $\trgS{A \triangleright e \bigreds v : \taintS }$:
\begin{description}
    \item[\Cref{tr:E-val}]
    Then we have $\trgS{e = v}$ and $\backtrfencec{\trgS{v}} = \src{v}$ and $\taintS = \safeta$.
    
    Now, we fulfill all conditions for \cref{tr:E-val} for $\src{A \triangleright \src{v} \bigreds v : \safeta}$ and are finished.
    
    \item[\Cref{tr:E-lookup}]
    Then we have $\trgS{e = r}$ and $\backtrfencec{\trgS{r}} = \src{r}$, where $r \in \Reg$ and $\backtrfencec{\trgS{v}} = \src{v}$.
    Additionally, $\trgS{A(r) = v : \taintS}$.
    
    From $\src{A}\rrelref\trgS{A}$ we get $\src{A(r) = v' : \taint}$ with $\src{\taint} \equiv \taintS$ and $\src{v'} \vrel \trgS{v}$.
    
    By \Thmref{thm:v4-val-rel-bt-lfence} we have $\backtrfencec{\trgS{v}} \vrel \trgS{v}$.
    
    By \Thmref{thm:v4-val-uniqueL} with $\backtrfencec{\trgS{v}} \vrel \trgS{v}$ and $\src{v'} \vrel \trgS{v}$ we get $\backtrfencec{\trgS{v}} = \src{v'}$.

    We can now use \Cref{tr:E-lookup} with $\src{A(r) = v' : \taint}$ for $\src{A \triangleright \src{r} \bigreds \src{v'} : \taint}$.

    \item[\Cref{tr:E-binop}]
    Then we have as IH: 
    \begin{align*}
        \trgS{A \triangleright \backtrfencec{\trgS{e_1}} \bigreds \backtrfencec{\trgS{n_1}} : \taint_1 } \\
        \trgS{A \triangleright \backtrfencec{\trgS{e_2}} \bigreds \backtrfencec{\trgS{n_1}} : \taint_2 }
    \end{align*}
    and $ \backtrfencec{\trgS{e_1 + e_2}} = \backtrfencec{\trgS{e_1}} + \backtrfencec{\trgS{e_2}}$.
    
    Follows from IH and \Cref{tr:E-binop}.
    
    \item[\Cref{tr:E-unop}]
    Then by IH we have $\trgS{A \triangleright \backtrfencec{\trgS{e'}} \bigreds \backtrfencec{\trgS{n}} : \taint }$ and $ \backtrfencec{\trgS{\ominus e'}} = \ominus \backtrfencec{\trgS{e'}}$.
    
    Follows from IH and \Cref{tr:E-unop}.
\end{description}
\end{proof}

\BREAK

\begin{proof}[Proof of \Thmref{thm:v4-back-sim-bts-lfence}]\proofref{}{v4-back-sim-bts-lfence}\hfill

 Since $\src{\Omega} \srel_{\src{\OB{f''}}}\ \trgS{\SigmaS}$ we know that $\src{p(A(\pc))} = \trgS{p(A(\pc))}$.
    We proceed by induction on $\trgS{i}$:
    
    \begin{description}
        \item[\textbf{skip}]
        The only change to the state is $\trgS{A'} = \trgS{A[pc \mapsto \succes(\pc)]}$ and we know $\src{\tau^\sigma} = \src{\epsilon}$.

        \begin{description}
            \item[$\src{\Omega \nsarrow{\tau^\sigma} \Omega'}$]
            
                Thus, we can use \Cref{tr:skip} to derive $\src{\Omega \nsarrow{\epsilon} \Omega'}$ with $\src{A'} = \src{A[pc \mapsto \succes(\pc)]}$.
                
            \item[$\src{\tau^\sigma} \tracerel \trgS{\tra{^{\taint}}}$]
            
                $\src{\varepsilon} \tracerel \trgS{\varepsilon}$ follows trivially.
                
            \item[$\src{\Omega'} \srelref_{\src{\OB{f''}}}\, \trgS{\SigmaS'}$]
            
                Follows from $\src{A'} \rrel \trgS{A'}$ which follows from $\src{A} \rrel \trgS{A}$.
            
        \end{description}

        \item[\textbf{spbarr}]
        
        Similar to the \textbf{skip} case above.
        
        \item[\textbf{assign}]
        The only change to the state is $\trgS{A'} = \trgS{A[pc \mapsto \succes(\pc), x \mapsto v : \taint]}$ where $\trgS{\exprEval{A}{e}{v : \taintS}}$.
        Furthermore, we know $\tau^\taint = \epsilon$. 

        \begin{description}
            \item[$\src{\Omega \nsarrow{\tau^\sigma} \Omega'}$]
            
                By \Thmref{thm:v4-back-sim-bte-lfence} we get $\src{\exprEval{A}{\backtrfencec{\trgS{e}}}{\backtrfencec{\trgS{v}} : \taint}}$ and $\src{\taint} \equiv \taintS$.
                
                Thus, we can use \Cref{tr:assign} to derive $\src{\Omega \trgS{\nsarrow{\epsilon}} \Omega'}$ with $\src{A'} = \src{A[pc \mapsto \succes(\pc), x \mapsto \backtrfencec{\trgS{v}} : \taint]}$.

            \item[$\src{\epsilon} \tracerel \trgS{\epsilon}$]
            
                $\src{\varepsilon} \tracerel \trgS{\varepsilon}$ follows trivially.
                
            \item[$\src{\Omega'} \srelref_{\src{\OB{f''}}}\, \trgS{\SigmaS'}$]
            
                Follows from $\src{A'} \rrel \trgS{A'}$ which follows from $\src{A} \rrel \trgS{A}$ and \Thmref{thm:v4-val-rel-bt-lfence} for 
                $\backtrfencec{\trgS{v}} \vrel \trgS{v}$.
            
        \end{description}

        \item[\textbf{load}]
        The only change to the state is $\trgS{A'} = \trgS{A[pc \mapsto \succes(\pc), x \mapsto M(n)]}$ where $\trgS{\exprEval{A}{e}{n : \taint}}$.
        Furthermore, we know $\trgS{\tau^\taint} = \trgS{\loadObs^{\taintpc \glb \taint'}}$.
        
        \begin{description}
            \item[$\src{\Omega \nsarrow{\tau^\sigma} \Omega'}$]
            By \Thmref{thm:v4-back-sim-bte-lfence} we get $\src{\exprEval{A}{\backtrfencec{\trgS{e}}}{\backtrfencec{\trgS{v}} : \taint}}$ and $\src{\taint} \equiv \taintS$.
            
            Thus, we can use \Cref{tr:load} to derive $\src{\Omega \nsarrow{\loadObs{n}^{\safeta}} \Omega'}$ with $\src{A'} = \src{A[pc \mapsto \succes(\pc), x \mapsto \backtrfencec{\trgS{n}} : \taint]}$.

            \item[$\src{\loadObs{\backtrfencec{\trgS{n}}}^{\taint}} \tracerel \trgS{\loadObs{n}^{\taint}}$]
            Equivalence of taints follows from $\srel$, which implies $\trgS{\taintpc} = \trgS{\safeta}$ and thus $\trgS{\taintpc} \glb \trgS{\taint'} = \safeta$.
            The remaining parts follow from \Cref{tr:ac-rel-rd} with \Thmref{thm:v4-val-rel-bt-lfence} for $\backtrfencec{\trgS{n}} \vrel \trgS{n}$.

            \item[$\src{\Omega'} \srelref_{\src{\OB{f''}}}\, \trgS{\SigmaS'}$]
            Follows from $\src{A'} \rrel \trgS{A'}$ which follows from $\src{A} \rrel \trgS{A}$ and $\src{M} \hrel \trgS{M}$.
        \end{description}

        \item[\textbf{load-prv}]
        This case cannot arise by \Thmref{def:atk}.
        
        \item[\textbf{store}]
        In this case \Cref{tr:v4-skip} is not applicable (because we only look at backtranslated statements we know that we are in the context), so we cannot speculate.

        The only change to the state is $\trgS{A'} = \trgS{A[pc \mapsto \succes(\pc)}$ and $\trgS{M'} = \trgS{M[n \mapsto A(x)]}$ where $\trgS{\exprEval{A}{e}{n : \taint}}$.

        Furthermore, we know $\trgS{\tau^\sigma} = \trgS{\storeObs^{\taintpc \glb \taint'}}$.
        
        \begin{description}
            \item[$\src{\Omega \nsarrow{\tau^\sigma} \Omega'}$]
                By \Thmref{thm:v4-back-sim-bte-lfence} we get $\src{\exprEval{A}{\backtrfencec{\trgS{e}}}{\backtrfencec{\trgS{n}} : \taint}}$ and $\src{\taint} \equiv \taintS$.
                
                Thus, we can use \Cref{tr:store} to derive $\src{\Omega \nsarrow{\storeObs{n}^{\at(x)}} \Omega'}$ with $\src{A'} = \src{A[pc \mapsto \succes(\pc)}$ and $\src{M'} = \src{M[n \mapsto A(x)]}$.

            \item[$\src{\storeObs{\backtrfencec{\trgS{n}}}^{\taint}} \tracerel \trgS{\storeObs{n}^{\taint}}$]
                Equivalence of taints follows from $\srel$, which implies $\trgS{\taintpc} = \trgS{\safeta}$ and thus $\trgS{\taintpc} \glb \trgS{\taint'} = \safeta$.
                
                The remaining parts follow from \Cref{tr:ac-rel-wr} with \Thmref{thm:v4-val-rel-bt-lfence} for $\backtrfencec{\trgS{n}} \vrel \trgS{n}$ and $\src{\at(x)} \equiv \trgS{\at(x)}$ follows from $\src{A} \rrel \trgS{A}$.

            \item[$\src{\Omega'} \srelref_{\src{\OB{f''}}}\, \trgS{\SigmaS'}$]
            
                Follows from $\src{A'} \rrel \trgS{A'}$ which follows from $\src{A} \rrel \trgS{A}$ and $\src{M} \hrel \trgS{M}$ and \Thmref{thm:v4-val-rel-bt-lfence} for $\backtrfencec{\trgS{n}} \vrel \trgS{n}$.
        \end{description}
        
        \item[\textbf{store-prv}]
        This case cannot arise by \Thmref{def:atk}.
        
        \item[\textbf{beqz-sat}]
        The only change to the state is $\trgS{A'} = \trgS{A[pc \mapsto l]}$ and $\trgS{A(x) = 0 : \taint}$.
        Furthermore, we know $\trgS{\tau^\taint} = \trgS{\pcObs{l}^{\taint}}$.

        \begin{description}
            \item[$\src{\Omega \nsarrow{\tau^\sigma} \Omega'}$]
            
            From  $\src{A} \rrel \trgS{A}$ we know that $\src{A(x)} = \src{0} : \src{\sigma}$ and $\src{\sigma} \equiv \trgS{\taint}$.
            
            Thus, we can use \Cref{tr:beqz-sat} to derive $\src{\Omega \nsarrow{\pcObs{l}^{\taint}} \Omega'}$ with $\src{A'} = \src{A[pc \mapsto l}$.

            \item[$\src{\pcObs{l}^{\taint}} \tracerel \trgS{\pcObs{l}^{\taintS'}}$]
            
            Follows from \Cref{tr:ac-rel-if} with \Thmref{thm:v4-val-rel-bt-lfence} for $\backtrfencec{\trgS{l}} \vrel \trgS{l}$ and $\src{\taint} \equiv \trgS{\taint}$.

            \item[$\src{\Omega'} \srelref_{\src{\OB{f''}}}\, \trgS{\SigmaS'}$]
            Follows from $\src{A'} \rrel \trgS{A'}$ which follows from $\src{A} \rrel \trgS{A}$.
        \end{description}

        \item[\textbf{beqz-unsat}]
        
        This case is analogous to case \textbf{beqz-sat} above.
        
        \item[\textbf{call}]
        
        Two cases arise:
        
        \begin{description}

            \item[$\trgS{f}$ is defined by the component]
                This case cannot arise as we do not step to a backtranslated instruction $\backtrfencec{\trgS{i}}$ in the target.
            
            \item[$\trgS{f}$ is defined by context]
                Compilation does not change existing functions in the function map $\mathcal{F}$. Thus, $\src{\mathcal{F}(f)} = \trgS{\mathcal{F}(f)}$.
                This means the register $\pc$ is updated in the same way. The rest follows from the relatedness of memories, registers and stacks.
        \end{description}

        \item[\textbf{ret}]
        
        Let $\ffun{\trgS{A(\pc)}} = \trgS{f}$.
        Two cases arise:

        \begin{description}

            \item[$\trgS{f}$ is defined by the component]
                This case cannot arise as we do not step to a backtranslated instruction $\backtrfencec{\trgS{i}}$ in the target.
            \item[$\trgS{f}$ is defined by context]
            Follows from the relatedness of stacks $\src{B} \brel \trgS{B}$. Thus, the $\pc$ register is updated to related values, which means we can show $\src{A'} \arel \trgS{A'}$.
        \end{description}
        
    \end{description}

\end{proof}

\BREAK

\begin{proof}[Proof of \Thmref{thm:v4-ini-state-rel}]\proofref{}{v4-ini-state-rel}\hfill

By definition of \Cref{tr:v4-init} we have $\initFuncS{\ctxS{}\hole{\complfenceS{\src{P}}}} = \trgS{\initFunc(M, \OB{F}, \OB{I}), \bot, \safeta}$
    Let us look at the requirements of $\srel$:

    \begin{description}
        \item[$\src{M} \hrel \trgS{M}$]
        By definition of \Cref{tr:plug-us} we know that the domains of the context heap $M_A$ and the program heap $M_C$ do not overlap. Thus, 
        $\src{M} = \backtrfencec{\trgS{M_A}} \cup \src{M_C}$ and $\trgS{M} = \trgS{M_A} \cup \complfenceS{M_C}$

        Next, for numbers all $z \in \mathbb{Z}$ not in the domain of $M$ we know by definition of $\initFunc{}$ (\Cref{tr:ini-us}) that the relatedness holds.

        We now tackle the remaining heap parts:
        \begin{description}
            \item[$\src{M_C} \hrel \complfenceS{M_C}$]
            This follows from \Thmref{thm:v4-heap-rel-comp-lfence}.

            \item[$\backtrfencec{\trgS{M_A}} \hrel \trgS{M_A}$]
            For the context heap we know by \Cref{def:atk} that its domain is the natural numbers. So this holds by \Thmref{thm:v4-heap-rel-bt-lfence}.
        \end{description}

        \item[$\src{A} \rrel \trgS{A}$]
        Follows from \Cref{tr:ini-us}. All register values are set to $0 : \safeta$ and the $\pc$ register is set to the label of the \textit{main} function. This follows from \Thmref{thm:v4-val-rel-comp-lfence} for the label of \textit{main}.
        We now trivially satisfy $\src{A} \rrel \trgS{A}$.

        \item[$\src{\OB{F}; p} \crel \trgS{\OB{F}; p}$]
        Follows by the definition of $\complfenceS{}$ and $\backtrfencec{}$. Note, that the compiler only adds new instructions (for the countermeasure) but does not delete any other instructions.

        \item[$\src{\OB{B}} \brel \trgS{\OB{B}}$]
        Follows by inspection of $\initFunc{}$ (since it sets the stacks to the empty set) and \Cref{tr:v4-brel-b}.
        
        \item[$\src{\OB{F}} \equiv \trgS{\OB{F}}$ and $\src{\OB{I}} \equiv \trgS{\OB{I}}$]
        Follows from the inspection of $\backtrfencec{}$ and $\complfenceS{}$.

        \item[target taint is $\trgS{\safeta}$]
        Follows from the definition of $\initFuncS$.
        
        \item[target window is $\trgS{\bot}$]
        Follows from the definition of $\initFuncS{}$.
    \end{description}  
\end{proof}

\BREAK

\begin{proof}[Proof of \Thmref{thm:v4-val-rel-comp-lfence}]\proofref{}{v4-val-rel-comp-lfence}\hfill

Trivial analysis of the compiler.
\end{proof}

\BREAK 

\begin{proof}[Proof of \Thmref{thm:v4-heap-rel-comp-lfence}]\proofref{}{v4-heap-rel-comp-lfence}\hfill

Trivial analysis of the compiler.
\end{proof}

\BREAK 

\begin{proof}[Proof of \Thmref{thm:v4-val-rel-bt-lfence}]\proofref{}{v4-val-rel-bt-lfence}\hfill

Trivial analysis of the backtranslation.
\end{proof}

\BREAK

\begin{proof}[Proof of \Thmref{thm:v4-heap-rel-bt-lfence}]\proofref{}{v4-heap-rel-bt-lfence}\hfill

Trivial analysis of the backtranslation.
\end{proof}

\BREAK

\begin{proof}[Proof of \Thmref{thm:v4-taint-rel-bt-lfence}]\proofref{}{v4-taint-rel-bt-lfence}\hfill

Trivial analysis of the backtranslation.
\end{proof}

\BREAK

\begin{proof}[Proof of \Thmref{thm:v4-val-uniqueR}]\proofref{}{v4-val-uniqueR}\hfill

Follows by inspection of $\vrelref$.
\end{proof}

\BREAK

\begin{proof}[Proof of \Thmref{thm:v4-val-uniqueL}]\proofref{}{v4-val-uniqueL}\hfill

Follows by inspection of $\vrelref$.
\end{proof}
 \section{$\Jv$: The Retpoline compiler}\label{comp:v2-retpoline}

Retpoline is a specific code sequence that replaces indirect jumps by a "return trampoline".

Conceptually retpoline looks like this:
\begin{center}
\begin{minipage}[b]{0.45\linewidth}
\begin{lstlisting}[basicstyle=\small,style=MUASMstyle,  escapechar=|, captionpos=t,]
jmp rax
\end{lstlisting}
\end{minipage}
\hfill
\begin{minipage}[b]{0.45\linewidth}
\begin{lstlisting}[basicstyle=\small,style=MUASMstyle, escapechar=|, captionpos=t,]
call retpoline_rax
|$\cdots$|
retpoline_rax:
l_0 : call LIND1   
LIND0:
l_1 : skip
l_2 : spbarr
l_3 : jmp l_1
LIND1
l_4 : modret rax
l_5 : ret
\end{lstlisting}
\end{minipage}
\captionof{lstlisting}{Code on the left and the compiled program with the retpoline countermeasure on the right.}    
\end{center}
We also assume that the labels used for this new thunk are not used in the program.
In $l_4$ we modify the return address to be the content of the $rax$ register, which was the initial jump goal.
Thus the $\pret$ instruction at the end returns to the target of the original indirect jump.

Notice, that this retpoline is constructed for an indirect jump targeting the $rax$ register. We need to create additional functions for each indirect jump. This is option $thunk$.

Thus, we assume a pass before the retpoline pass that for each indirect jump in the program, adds a new function with a fresh name \textit{retpo\textunderscore trg\textunderscore l} for the indirect jump at location $l$.
This function for $\pjmp{e}$ consists of:
\begin{align*}
	l' :& \pmodret \compretpJ{e} \\
    l'':& \pret \\
\end{align*}

where $l'$ and $l''$ are freshly chosen labels.

We are now ready to state the compiler:
\begin{align*}
    \compretpJ{ M ; \OB{F} ; \OB{I}} &= \trgJ{ \compretpJ{M} ; \compretpJ{\OB{F}} ; \compretpJ{\OB{I}}} \\\\
	\compretpJ{ \srce} &= \trgJ{\emptyset}
	\\
	\compretpJ{ \OB{I}\cdot f } &= \trgJ{ \compretpJ{ \OB{I} }\cdot \compretpJ{f} }
	\\\\
	\compretpJ{ M ; -n\mapsto v : \unta} &= \trgJ{\compretpJ{M} ; -\compretpJ{n} \mapsto \compretpJ{v} : \unta}
	\\\\
	\compretpJ{p_1;p_2} &= \trgJ{\compretpJ{p_1} ; \compretpJ{p_2}}
	\\
	\compretpJ{l : i} &= \trgJ{\compretpJ{l} : \compretpJ{i}} ~~ \text{if $i \neq \jmpC$}
	\\\\
	\compretpJ{\pskip} &= \trgJ{\pskip} 
	\\
	\compretpJ{\passign{x}{e}} &= \trgJ{\passign{x}{\compretpJ{e}}}
        \\
        \compretpJ{\pcondassign{x}{e}{e'}} &= \trgJ{\pcondassign{x}{\compretpJ{e}}{\compretpJ{e'}}}
        \\
        \compretpJ{\ploadprv{x}{e}} &= \trgJ{\ploadprv{x}{\compretpJ{e}}}
	\\
	\compretpJ{\pload{x}{e}} &= \trgJ{\pload{x}{\compretpJ{e}}}
	\\
        \compretpJ{\pstore{x}{e}} &= \trgJ{\pstore{x}{\compretpJ{e}}}
        \\
        \compretpJ{\pstoreprv{x}{e}} &= \trgJ{\pstoreprv{x}{\compretpJ{e}}}
        \\
	\compretpJ{\pjz{x}{l'}} &= \trgJ{\pjz{x}{\compretpJ{l'}}} 
	\\
	\compretpJ{\pcall{f}} &= \trgJ{\pcall{f}} ~ \text{Here $f$ is a function name}
	\\
	\compretpJ{\pret} &= \trgJ{\pret}
        \\\\
	\compretpJ{l : \pjmp{e}} &= \trgJ{l : \pjmp{\compretpJ{e}}} ~~ \text{if $\pjmp{e}$ not indirect}
        \\
	\compretpJ{l : \pjmp{e}} &= \trgJ{
                                         \begin{aligned}[t]
                                                    &
                                                    l': \pcall{\mathit{retpo\textunderscore trg\textunderscore l}}
                                                        \\
                                                        &\
                                                        l'_1: \pskip
                                                            \\
                                                            &\
                                                            l'_2 : \barrierKywd
                                                            \\
                                                            &\
                                                            l'_3 : \pjmp{l'_1}
                                            \end{aligned}
    } \text{ if $r \in \Reg$ and $r \in e$. This means its indirect. All labels are chosen fresh.}
\end{align*}

The Memory relation $\hrel$, register relation $\rrel$ and value relation $\vrel$ are the exact same as in \Cref{comp:v4-fence}.

\mytoprule{\text{Register relation} \rreldef  \text{ State relation} \sreldef}
\begin{center}

    \typerule{Stack - base }{}{
		\srce \brel \trgSe
	}{brel-b}
	\typerule{Stack - start }{
		\src{B} \brel \trgJ{B}
		&
		\src{\OB{n}} \brel \trgJ{\OB{n}}
	}{
		\src{B ; \OB{n}} \brel \trgJ{B ; \OB{n}}
	}{v2-brel-start}

 \typerule{Stack Region - ind }{
		\src{\OB{n}} \brel \trgJ{\OB{n}} 
		\\
		\src{z} \vrel \trgJ{z}
		&
		\src{\sigma} \equiv \trgJ{\taint}
        & 
        \src{v} \vrel \trgJ{v}
	}{
		\src{\OB{n} ; z \mapsto v:\sigma} \brel \trgJ{\OB{n} ; z\mapsto v:\taint}
	}{v2-brel-region}

	\typerule{ States }{
        \src{A} \rrel \trgJ{A}
        &
		\src{M} \hrel \trgJ{M}
		&
		\src{\OB{f}} \equiv \trgJ{\OB{f}}
        &
		\src{C} \crel_{\src{\OB{f''}}} \trgJ{C}
	}{
		\src{C,\OB{B}, \tup{p, M, A} \proc{s}{\OB{f}} } \srel_{\src{\OB{f''}}} \trgJ{C,\OB{B}, \tup{p, M, A}, \bot, \safeta }
	}{stats}

\typerule{ Programs }{
        \forall \src{l : i} \in \src{p} \text{ and } \ffun{\src{l}} = \src{f} \text{ and } \ffun{\compretpJ{\src{l}}} = \trgJ{f} 
        \\
		\text{ if } \src{f} \in \src{\OB{F}} \text{ and } \src{f} \in \src{\OB{f}}
		\text{ then } \trgJ{p}(\compretpJ{l}) = \compretpJ{i}
		\\
  \forall \trgJ{l : i} \in \trgJ{p} \ldotp \ffun{\trgJ{l}} = \trgJ{f} \text{ and } \trgJ{f} \notin \src{\OB{f}} \text{ and } \trgJ{f} \in \trgJ{\OB{F}}\text{ then }
    \src{p(l)} = \backtrfencec{\trgJ{i}}
		\\
	}{
		\src{\OB{F}; \src{p}} \crel_{\src{\OB{f}}} \trgJ{\OB{F}; \trgJ{p}} 
	}{v2-comps}

 \typerule{ States }{
        \src{A} \rrel \trgJ{A}
        &
		\src{\OB{F}; p} \crel \trgJ{\OB{F}; p}
		&
		\src{M} \hrel \trgJ{M}
		&
        \src{B} \brel \trgJ{B}
        &
        \src{\OB{I}} \equiv \trgJ{\OB{I}}
	}{
		\src{\OB{F};\OB{I},\OB{B}, \tup{p, M, A}} \srel_{\src{\OB{f''}}} \trgJ{\OB{F};\OB{I},\OB{B}, \tup{p, M, A}, \bot, \safeta }
	}{v2-states}
\end{center}
\botrule

\subsection{The Proofs}

We need to be especially careful with the trace relation $\tracerel$. Look at the possible trace generated before and after compilation:
\begin{tikzpicture}
    \node[circle, draw] (start) {jmp rax};
    \node[draw, above=0.5cm of start]{Original Program};

    \node[circle, draw, right= of start] (0) {$\cdots$};
    \draw[->] (start) edge [bend left] node[above] {} (0) ;

    \node[red, circle, draw, below=2cm of start] (1m) {call};
    \node[draw, above=0.8cm of 1m]{Compiled Program};

    \node[circle, draw, right= of 1m] (3m) {$\cdots$};
    \node[circle, draw, right= of 3m] (4m) {modret rax};
    \node[circle, draw, right= of 4m] (5m) {ret};
    \node[circle, draw, right= of 5m] (6m) {$\cdots$};

    \draw[->] (1m) edge [bend left] node[above] {call retpo} (4m);
    \draw[->] (4m) edge [bend left] node[above] {} (5m);
    \draw[->] (5m) edge [bend left] node[above] {ret!} (6m);
    
\end{tikzpicture}

Note that the retpoline compiler replaces all indirect $\jmpC$ instructions in the program. Thus, the compiled program cannot speculate anymore.
This simplifies the proofs.
\begin{theorem}[$\Jv$: The retpoline compiler is \rdss]\label{thm:v2-retpo-comp-rdss}
	\begin{align*}
		\contract{}{\Jv} \vdash \compretpJ{\cdot} : \rdss
	\end{align*}
\end{theorem}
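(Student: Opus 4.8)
The plan is to follow, almost verbatim, the template used for the \texttt{lfence} compiler in \Cref{thm:v4-lfence-comp-rdss}. Unfolding \Cref{def:rdss}, I would witness the existentially-quantified source context by the context-based backtranslation $\backtrfencec{\ctxJ{}}$ of the given target context, reducing the whole statement to a retpoline analogue of the backtranslation-correctness theorem \Cref{thm:v4-corr-bt-lfence}: whenever $\ctxJ{}\hole{\compretpJ{\src{P}}}$ produces a target trace under $\semj$, the program $\backtrfencec{\ctxJ{}}\hole{\src{P}}$ produces a source trace $\tras{^\sigma}$ related to it by $\rels$. Because the target syntax is shared across the compilers, the backtranslation and the value/memory/register relations $\vrel,\hrel,\rrel$ are reused unchanged; only the state/stack/program relations $\srel,\brel,\crel$ set up above for retpoline are needed. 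A useful observation is that the administrative instructions $\pmodret{e}$ and $\pret$ occur only inside compiled component code and never in attacker code (by \Cref{def:atk}), so the backtranslation never has to account for them.

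First I would establish that initial states are related (the analogue of \Cref{thm:v4-ini-state-rel}); this is routine given the relation definitions and the bookkeeping fact, in the spirit of \Cref{thm:v4-fmaps-related}, that compilation preserves all existing entries of the function map and merely adds fresh trampoline functions. I would then prove the backward simulation (the analogue of \Cref{thm:v4-bwd-sim-lfence}) by deriving it from a forward simulation together with determinism of the target semantics, splitting on whether execution currently sits in compiled component code or in backtranslated context code. The cross-boundary $\pcall{f}$/$\pret$ steps between component and context are discharged exactly as in the \texttt{lfence} development, transporting return addresses through $\brel$.

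A crucial simplification specific to retpoline is that the compiler eliminates \emph{every} indirect jump from the component: \Cref{tr:v2-spec} only fires when the jump target is a register, and \Cref{tr:v2-spec-att} never speculates inside attacker code, so no $\semj$ transaction can ever start in $\ctxJ{}\hole{\compretpJ{\src{P}}}$. Consequently every target trace coincides with its non-speculative projection and carries only $\safeta$-tainted actions, which discharges all speculative reasoning ``for free'' and makes the semantics deterministic along any actual run (so the forward-to-backward simulation argument goes through). This collapses the backward simulation to an essentially architectural correspondence.

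The hard part will be the forward-simulation case for the indirect jump, which is exactly why one must be ``especially careful with the trace relation''. Here a single source $\src{\pjmp{e}}$ step, emitting the control-flow observation $\pcObs{\src{l}}$, must be matched by the entire compiled trampoline, which executes an internal call $\pcall{f_{\mathit{tr}}}$ (emitting $\epsilon$, via \Cref{tr:call-internal}), then $\pmodret{e}$, then $\pret$ (an internal return, via \Cref{tr:ret-internal}). I would first show that the net state transformation restores $\srel$: the return address pushed by the internal call is overwritten by $\pmodret{e}$ with the intended target $\src{l}$ and then consumed by $\pret$, so $\brel$ is re-established and the program counter lands at $\src{l}$, precisely as after the source jump. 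The delicate step is then aligning the emitted observations under $\rels$: the source produces a single $\pcObs{\src{l}}$, whereas the trampoline realises the same transfer through $\pcall{}/\pmodret{}/\pret$, whose emitted observations differ in shape; I would match them by absorbing the internal $\epsilon$ steps as safe actions and relating the trampoline's effective control transfer to $\pcObs{\src{l}}$, checking throughout that every extra target action is tainted $\safeta$ (which holds because no speculation occurs). Once this case is in place, the remaining instruction cases are identical to the \texttt{lfence} forward-simulation lemma, and the final \rdss\ claim follows by instantiating the source context with $\backtrfencec{\ctxJ{}}$.
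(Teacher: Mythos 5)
Your proposal is correct in spirit about \emph{why} the theorem holds (no $\semj$ speculation survives compilation), but it takes a genuinely different and much heavier route than the paper. The paper does \emph{not} redo the backtranslation/simulation development for $\compretpJ{\cdot}$: its proof is a three-line contradiction argument. Suppose some target action cannot be related to any source action, not even to the empty one; inspecting the action relation, such an action must be tainted $\unta$; an $\unta$ taint can only arise inside a speculative transaction; but \Cref{tr:v2-spec} only fires on \emph{indirect} jumps inside the component, $\compretpJ{\cdot}$ replaces every one of them with a trampoline, and attacker code never speculates (\Cref{tr:v2-spec-att}), so no transaction ever starts --- contradiction. Your observation that the compiled program cannot speculate is exactly the paper's key insight, but the paper uses it to collapse the entire proof rather than to simplify a simulation.

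The simulation route you sketch also has a concrete gap at precisely the case you flag as delicate. In the source, $\src{\pjmp{e}}$ emits $\pcObs{l}$ by \Cref{tr:jmp}. In the compiled code the same transfer is realised by an \emph{internal} call to the trampoline (\Cref{tr:call-internal}, emitting $\epsilon$), a $\pmodret{e}$, and an \emph{internal} return (\Cref{tr:ret-internal}, emitting $\epsilon$): no $\pcObs{l}$ observation is ever produced on the target side. The trace relation $\rels$ only permits discarding \emph{target}-side safe actions (\Cref{tr:tr-rel-safe-a} and its siblings); there is no rule that lets a source action go unmatched by any target action. So ``relating the trampoline's effective control transfer to $\pcObs{l}$'' has nothing in the target trace to attach to, and the forward-simulation lemma as you state it (requiring $\src{\tau^{\sigma}}\rels\trgJ{\tra{^{\taint}}}$) fails on the indirect-jump case. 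Making your route go through would require weakening or extending $\rels$ to absorb the source-side $\pcObs{l}$ (or restructuring the correspondence so the jump observation never arises), which is extra machinery the paper's contradiction argument never needs.
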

\begin{proof}

This means there is an action $\trgJ{\tau^{\taint}}$ that is not related to a source action that can possibly be empty.
This means that this particular action is not in the action relation with the source.
Inspecting the action relation we can conclude that $\trgJ{\taint} = \unta$. As otherwise the action would be related to some source action (even the empty one).

However, this means that speculation needed to happen.
But we know that is not the case
\end{proof}

\begin{theorem}[$\Jv$: All retpo-compiled programs are \rss]\label{thm:v2-all-retpo-comp-are-rdss}
	\begin{align*}
		\forall\src{P}\ldotp \contract{}{\Jv} \vdash\compretpJ{P} : \rss
	\end{align*}
\end{theorem}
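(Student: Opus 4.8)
The plan is to mirror exactly the chaining argument used for the analogous fence-compiler result (\Cref{thm:v4-all-lfence-comp-are-rdss}). The overall strategy reduces the goal---that \emph{every} retpoline-compiled program is \rss{} under the jump-speculation semantics $\contract{}{\Jv}$---to two ingredients already established earlier in the excerpt: the triviality of \rss{} on source programs, and the property-free secure-compilation characterization $\rdss$ of the retpoline compiler. No new inductive or simulation reasoning is required at this level; the work is purely compositional.

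First I would invoke \Cref{thm:ss-sni-source} to obtain the hypothesis (call it HPS) that every source program is \rss{} under the non-speculative semantics, i.e.\ $\forall \src{P}.\ \contract{}{NS} \vdash \src{P} : \rss$; this holds because source programs never speculate and therefore emit only $\safeta$-tainted actions. Next I would take \Cref{thm:v2-retpo-comp-rdss}, namely $\contract{}{\Jv} \vdash \compretpJ{\cdot} : \rdss$, and apply \Cref{thm:rdss-eq-rdsp} (the equivalence of $\rdss$ and $\rdssp$) to obtain $\contract{}{\Jv} \vdash \compretpJ{\cdot} : \rdssp$. Finally, unfolding the definition of $\rdssp$ (\Cref{def:rdsspc}) gives
\[
\forall \src{P} \in \src{L}.\ \text{ if } \contract{}{NS} \vdash \src{P} : \rss \text{ then } \contract{}{\Jv} \vdash \compretpJ{\src{P}} : \rss,
\]
and instantiating this implication with HPS discharges the premise uniformly for all $\src{P}$, yielding the desired conclusion $\forall \src{P}.\ \contract{}{\Jv} \vdash \compretpJ{P} : \rss$.

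The main obstacle is \emph{not} located in this statement, which is a routine corollary of the above chaining: the genuine difficulty lives one level up, in \Cref{thm:v2-retpo-comp-rdss} (that the retpoline compiler is $\rdss$). Its argument crucially exploits that $\compretpJ{\cdot}$ replaces every indirect $\jmpC$ instruction with a return trampoline, so the compiled code contains no indirect jumps and hence can never fire the speculation rule \Cref{tr:v2-spec} of $\contract{}{\Jv}$. Consequently, any $\unta$-tainted action appearing on a compiled trace would require a speculative transaction that simply cannot occur---a contradiction---which is precisely the proof sketch already given for \Cref{thm:v2-retpo-comp-rdss}. Thus, once that upstream theorem and the equivalence \Cref{thm:rdss-eq-rdsp} are in hand, the present statement follows by a short, mechanical instantiation with no residual proof burden.
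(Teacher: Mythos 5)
Your proposal matches the paper's own proof exactly: it chains \Cref{thm:ss-sni-source}, \Cref{thm:v2-retpo-comp-rdss}, and \Cref{thm:rdss-eq-rdsp}, then unfolds \Cref{def:rdsspc} and instantiates with the source-level \rss{} hypothesis. Your observation that the real work lives in \Cref{thm:v2-retpo-comp-rdss} (no indirect jumps survive compilation, so \Cref{tr:v2-spec} can never fire) is also consistent with the paper.
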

\begin{proof}
	By \Thmref{thm:ss-sni-source} we know all source languages are SS: $\text{HPS:} \forall\src{P}\ldotp \contract{}{NS}\vdash P : \rss$.
    
    By \Thmref{thm:v2-retpo-comp-rdss} we have HPC: $\contractSpec{}{\Jv} \vdash \compretpJ{\cdot} : \rdss$.

    By \Thmref{thm:rdss-eq-rdsp} with HPC we have $\text{HPP:} \contractSpec{}{\Jv}\vdash \compretpJ{}{\cdot} : \rdssp$

    Unfolding \Thmref{def:rdsspc} of HPP we get 
    $\forall\src{P}\ldotp \text{ if } \contract{}{NS} \vdash\src{P} : \rss \text{ then } \contractSpec{}{\Jv}\vdash\comp{\src{P}} : \rss$. 
  
    Now, we instantiate it with HPS and can conclude that $\forall\src{P}\ldotp \contractSpec{}{\Jv}\vdash\compretpJ{P} : \rss$.
\end{proof}

 \section{$\Jv$: The Retpoline compiler with a Fence at the end}\label{comp:v2-retpolineF}

The original retpoline compiler (\Cref{comp:v2-retpoline}) adds return instructions to the code that can be misused by straightline-speculation.
We propose an extended retpoline that adds another fence instruction to the end of the retpoline sequence. Thus, stopping possible mispredictions of $\semsls$.

\begin{align*}
    \compretpJF{ M ; \OB{F} ; \OB{I}} &= \trgJ{ \compretpJF{M} ; \compretpJF{\OB{F}} ; \compretpJF{\OB{I}}} \\\\
	\compretpJF{ \srce} &= \trgJ{\emptyset}
	\\
	\compretpJF{ \OB{I}\cdot f } &= \trgJ{ \compretpJF{ \OB{I} }\cdot \compretpJF{f} }
	\\\\
	\compretpJF{ M ; -n\mapsto v : \unta} &= \trgJ{\compretpJF{M} ; -\compretpJF{n} \mapsto \compretpJF{v} : \unta}
	\\\\
	\compretpJF{p_1;p_2} &= \trgJ{\compretpJF{p_1} ; \compretpJF{p_2}}
	\\
	\compretpJF{l : i} &= \trgJ{\compretpJF{l} : \compretpJF{i}} ~~ \text{if $i \neq \jmpC$}
	\\\\
	\compretpJF{\pskip} &= \trgJ{\pskip} 
	\\
	\compretpJF{\passign{x}{e}} &= \trgJ{\passign{x}{\compretpJF{e}}}
        \\
        \compretpJF{\pcondassign{x}{e}{e'}} &= \trgJ{\pcondassign{x}{\compretpJF{e}}{\compretpJF{e'}}}
        \\
        \compretpJF{\ploadprv{x}{e}} &= \trgJ{\ploadprv{x}{\compretpJF{e}}}
	\\
	\compretpJF{\pload{x}{e}} &= \trgJ{\pload{x}{\compretpJF{e}}}
	\\
        \compretpJF{\pstore{x}{e}} &= \trgJ{\pstore{x}{\compretpJF{e}}}
        \\
        \compretpJF{\pstoreprv{x}{e}} &= \trgJ{\pstoreprv{x}{\compretpJF{e}}}
        \\
	\compretpJF{\pjz{x}{l'}} &= \trgJ{\pjz{x}{\compretpJF{l'}}} 
	\\
	\compretpJF{\pcall{f}} &= \trgJ{\pcall{f}} ~ \text{Here $f$ is a function name}
	\\
	\compretpJF{\pret} &= \trgJ{\pret}
        \\\\
	\compretpJF{l : \pjmp{e}} &= \trgJ{l : \pjmp{\compretpJF{e}}} ~~ \text{if $\pjmp{e}$ not indirect}
        \\
	\compretpJF{l : \pjmp{e}} &= \trgJ{
                                         \begin{aligned}[t]
                                                    &
                                                    l': \pcall{}
                                                        \\
                                                        &\
                                                        l'_1: \pskip
                                                            \\
                                                            &\
                                                            l'_2 : \barrierKywd
                                                            \\
                                                            &\
                                                            l'_3 : \pjmp{l'_1}
                                                            \\
                                                            &\
                                                            l'_4 : modret \compretpJF{e}
                                                            \\
                                                            &\
                                                            l'_5: \pret
                                                            \\
                                                            &\
                                                            l'_6 : \barrierKywd
                                            \end{aligned}
    } \text{ if $r \in \Reg$ and $r \in e$. This means its indirect.} 
    \\
    \text{ All labels are chosen fresh.}
\end{align*}

\subsection{The Proofs}
\begin{theorem}[$\Jv$: The retpoline fence compiler is \rdss]\label{thm:v2-retpof-comp-rdss}
	\begin{align*}
		\contract{}{\Jv} \vdash \compretpJF{\cdot} : \rdss
	\end{align*}
\end{theorem}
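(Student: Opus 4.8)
The plan is to mirror the proof of \Thmref{thm:v2-retpo-comp-rdss} for the plain retpoline compiler, since the only additions of $\compretpJF{\cdot}$ over $\compretpJ{\cdot}$ --- the trailing $\pbarrier$ after the trampoline's $\pret$, together with the $\pmodret{e}$/$\pret$ bookkeeping --- are completely inert under the $\Jv$ semantics. Concretely, I would reuse the shared context backtranslation $\backtrfencec{\cdot}$ together with the value, memory, register and state relations ($\vrel$, $\hrel$, $\rrel$, $\srel$) from \Cref{comp:v2-retpoline}, adapting only the program-relation clause \Cref{tr:v2-comps} to account for the slightly larger trampoline emitted for each indirect jump. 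Unfolding \Cref{def:rdss}, the goal is: for every target trace $\trgJ{\tra{^\taint}}$ of $\ctxJ{}\hole{\compretpJF{P}}$, to exhibit a source context $\backtrfencec{\ctxJ{}}$ and a source trace $\tras{^\sigma}$ with $\tras{^\sigma} \rels \trgJ{\tra{^\taint}}$.

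The crucial structural observation is that $\compretpJF{\cdot}$ removes every indirect jump from the component. By inspection of the compiler, each $\pjmp{e}$ whose target is held in a register is rewritten into a trampoline built solely from a $\pcall{}$, a $\pskip$, two $\pbarrier$ instructions, a \emph{direct} $\pjmp{l'_1}$ (to a label, hence non-indirect), a $\pmodret{e}$, and a $\pret$. Under the $\Jv$ semantics, speculation is triggered only by \Cref{tr:v2-spec}, which fires on an indirect $\pjmp{x}$ executed inside the component ($\trgJ{f} \notin \trgJ{\OB{I}}$), while the attacker case \Cref{tr:v2-spec-att} never speculates. Since the compiled component contains no indirect jumps, \Cref{tr:v2-spec} is never applicable to $\compretpJF{P}$, so no speculative transaction (no $\rollbackObsJ$) ever arises and every action emitted on $\trgJ{\tra{^\taint}}$ is tainted $\safeta$. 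Architecturally, I would also verify that the trampoline's control flow (call, then $\pmodret{e}$ setting the return address to the original target, then $\pret$) reaches exactly the source jump target $e$, so that non-speculatively the trampoline faithfully simulates the source indirect jump.

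With no speculation possible, the argument closes by contradiction as in \Thmref{thm:v2-retpo-comp-rdss}: were $\tras{^\sigma} \rels \trgJ{\tra{^\taint}}$ to fail, some target action $\trgJ{\tau^\taint}$ would have to be unrelatable to any source action (even the empty one $\epsilon$); inspecting the action relation $\arel$, this forces $\trgJ{\taint} = \unta$, and an unsafe taint can only be produced inside a speculative transaction --- which we have just ruled out. The remaining obligation is the backward simulation that actually produces $\tras{^\sigma}$; this is derived, exactly as in the lfence and plain-retpoline developments, from a forward simulation plus determinism of the semantics, with the backtranslation collapsing the whole trampoline back to the single source $\pjmp{e}$.

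I expect the main obstacle to be this backward-simulation step for the trampoline: I must show that the multi-instruction target sequence (call / $\pmodret$ / $\pret$, interleaved with the $\pskip$/$\pbarrier$/direct-$\pjmp$ block that is reachable only under speculation and therefore never taken architecturally) collapses to one source indirect-jump step while preserving $\srel$ --- in particular matching the return-address bookkeeping pushed and popped on the frame stack $\OB{B}$ against the source configuration. Checking that the extra trailing $\pbarrier$ and the administrative $\pmodret$/$\pret$ leave the state relation intact and contribute no observation beyond what the source step emits is the delicate accounting, but it becomes routine once the ``no $\Jv$-speculation'' invariant is established.
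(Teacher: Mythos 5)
Your proposal is correct and follows essentially the same route as the paper: the key observation in both is that $\compretpJF{\cdot}$ eliminates every indirect jump from the component, so \Cref{tr:v2-spec} never fires, no speculative transaction arises, and any unrelatable target action would have to carry taint $\unta$ --- a contradiction. The paper's own proof is simply ``analogous to \Cref{thm:v2-retpo-comp-rdss} since the additional fence is never executed and all indirect jumps are removed,'' so your more detailed account of the trampoline bookkeeping and the backward simulation is a faithful (and more explicit) elaboration of the same argument.
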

\begin{proof}

The proof is analogous to \Thmref{thm:v2-retpo-comp-rdss} since the additional fence is never executed and all indirect jumps are removed from the code.
\end{proof}

\begin{theorem}[$\Jv$: All retpo-fence-compiled programs are \rss]\label{thm:v2-all-retpof-comp-are-rdss}
	\begin{align*}
		\forall\src{P}\ldotp \contract{}{\Jv} \vdash\compretpJF{P} : \rss
	\end{align*}
\end{theorem}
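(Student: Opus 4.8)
The plan is to derive this statement as an immediate corollary of the secure-compilation result for $\compretpJF{\cdot}$, mirroring exactly the arguments already used for $\complfenceS{\cdot}$ in \Thmref{thm:v4-all-lfence-comp-are-rdss} and for $\compretpJ{\cdot}$ in \Thmref{thm:v2-all-retpo-comp-are-rdss}. The three ingredients are: source-level safety (\Thmref{thm:ss-sni-source}), the property-free compilation guarantee $\contract{}{\Jv}\vdash\compretpJF{\cdot} : \rdss$ (\Thmref{thm:v2-retpof-comp-rdss}), and the equivalence between the property-free characterisation and the preservation criterion (\Thmref{thm:rdss-eq-rdsp}).

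Concretely, first I would apply \Thmref{thm:ss-sni-source} to obtain that every source program is robustly speculatively safe under the non-speculative semantics, i.e. $\forall\src{P}\ldotp \contract{}{NS}\vdash\src{P} : \rss$. Next, from $\contract{}{\Jv}\vdash\compretpJF{\cdot} : \rdss$ and \Thmref{thm:rdss-eq-rdsp} I would conclude $\contract{}{\Jv}\vdash\compretpJF{\cdot} : \rdssp$. Unfolding the definition of $\rdssp$ (\Thmref{def:rdsspc}) then gives: for every $\src{P}$, if $\contract{}{NS}\vdash\src{P} : \rss$ then $\contract{}{\Jv}\vdash\compretpJF{\src{P}} : \rss$. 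Instantiating this implication with the source-level safety obtained above discharges the goal $\forall\src{P}\ldotp \contract{}{\Jv}\vdash\compretpJF{P} : \rss$.

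Since the chaining of these three results is purely mechanical, the real content lies in the lemma \Thmref{thm:v2-retpof-comp-rdss}, which is itself light: as noted in its proof sketch, the $\barrierKywd$ appended after the trampoline's $\pret$ is never executed architecturally, and the rewrite replaces every indirect $\jmpC$ by a return trampoline, so no indirect-jump speculation source survives compilation. Consequently I do not anticipate a genuine obstacle in the present statement — the only subtlety worth checking is that the instantiation step uses precisely the same quantification over valid attackers and source programs as in the definition of $\rdssp$, so that no side condition is silently dropped when passing from the compiler-level guarantee to the per-program statement.
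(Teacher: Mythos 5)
Your proposal is correct and follows exactly the paper's own route: the paper proves this theorem by declaring it analogous to \Thmref{thm:v2-all-retpo-comp-are-rdss}, whose proof is precisely your chain of \Thmref{thm:ss-sni-source}, the compiler's $\rdss$ result, and \Thmref{thm:rdss-eq-rdsp} followed by instantiation. Your added remark that the appended $\barrierKywd$ is never executed architecturally is also the exact justification the paper gives for why the fence-augmented case reduces to the plain retpoline case.
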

\begin{proof}
The proof is analogous to \Thmref{thm:v2-all-retpo-comp-are-rdss} since the additional fence is never executed.
\end{proof}

And we can now show that this compiler is independent w.r.t $\contract{}{\SLSv}$. The additional $\retC$ instructions inserted by the compiler are now guarded by a barrier instruction afterwards.

\section{$\Rvr$: The Lfence Compiler for \specr}\label{comp:v5-fence}

Here, fences are added after every call instruction. Thus, all misprediction sites caused by a misprediction of the RSB are guarded by a fence.
Since this speculation is related to the return stack buffer (RSB) which uses the return address of every call for speculation, this barrier will stop speculation. Note that our semantics does not speculate when the RSB is empty.

\begin{align*}
    \complfenceR{ M ; \OB{F} ; \OB{I}}{} &= \trgR{ \complfenceR{M}{} ; \complfenceR{\OB{F}}{} ; \complfenceR{\OB{I}}{}}
	\\\\
	\complfenceR{ \srce}{} &= \trgR{\emptyset}
	\\
	\complfenceR{ \OB{I}\cdot f }{} &= \trgR{ \complfenceR{ \OB{I} }\cdot \complfenceR{f}{} }
	\\\\
	\complfenceR{ M ; -n\mapsto v : \unta}{} &= \trgR{\complfenceR{M}{} ; -\complfenceR{n}{} \mapsto \complfenceR{v}{} : \unta}
	\\\\
	\complfenceR{p_1;p_2}{} &= \trgR{\complfenceR{p_1}{} ; \complfenceR{p_2}{}}
	\\
        \complfenceR{l : i} &= \trgR{\complfenceR{l} : \complfenceR{i}} ~~ \text{if $i \neq \callC$}
	\\\\
	\complfenceR{\pskip} &= \trgR{\pskip} 
	\\
	\complfenceR{\passign{x}{e}} &= \trgR{\passign{x}{\complfenceR{e}}}
        \\
        \complfenceR{\pcondassign{x}{e}{e'}} &= \trgR{\pcondassign{x}{\complfenceR{e}}{\complfenceR{e'}}}
        \\
        \complfenceR{\ploadprv{x}{e}} &= \trgR{\ploadprv{x}{\complfenceR{e}}}
	\\
	\complfenceR{\pload{x}{e}} &= \trgR{\pload{x}{\complfenceR{e}}}
	\\
        \complfenceR{\pstore{x}{e}} &= \trgR{\pstore{x}{\complfenceR{e}}}
        \\
        \complfenceR{\pstoreprv{x}{e}} &= \trgR{\pstoreprv{x}{\complfenceR{e}}}
        \\
	\complfenceR{\pjz{x}{l'}} &= \trgR{\pjz{x}{\complfenceR{l'}}} 
	\\
	\complfenceR{\pjmp{e}} &= \trgR{\pjmp{\complfenceR{e}}} 
	\\
	\complfenceR{\pret} &= \trgR{\pret}
	\\
	\complfenceR{l : \pcall{f}}{} &= \trgR{\complfenceR{l}{} : \pcall{f} ; \complfenceR{l'}{} : \pbarrier } ~ 
	\\ \end{align*}

This means that every entry in the RSB $\trgR{\Rsb}$ leads to a barrier instruction.

\subsection{Proofs}

Inserting a fence stops speculation. That is why these proofs are very similar to the ones in \Cref{comp:v4-fence}.

\begin{center}
	
\scalebox{0.7}{
\begin{forest}
for tree={
    align = left,
    font = \footnotesize,
    forked edge,
}
[\Thmref{thm:v5-all-lfence-comp-are-rdss}
        [\Thmref{thm:v5-lfence-comp-rdss}
                [\Thmref{thm:v5-corr-bt-lfence}
                    [\Thmref{thm:v5-ini-state-rel}
                        [\Cref{thm:v4-heap-rel-comp-lfence}]
                        [\Cref{thm:v4-heap-rel-bt-lfence}, name=heap-bt]
                    ]
                    [\Thmref{thm:v5-bwd-sim-lfence}
                            [\Thmref{thm:v5-bwd-sim-comp-steps-lfence}
                                    [\Thmref{thm:v5-fwd-sim-stm-lfence}
                                        [\Thmref{thm:v4-fwd-sim-exp-lfence}]
                                    ]
                            ]
                            [\Thmref{thm:v4-back-sim-bts-lfence}   
                                [\Thmref{thm:v4-back-sim-bte-lfence}, name=exp-bt]
                            ]
                    ]
                ]
        ]
]
\node (sniSource) at (-4,3) {\Thmref{thm:ss-sni-source}};
\node (rsscEq) [below=of sniSource] {\Thmref{thm:rdss-eq-rdsp}};
\draw[red, thick, dotted] ($(sniSource.north west)+(-0.3,0.3)$)  rectangle ($(rsscEq.south east)+(0.9,-0.6)$);
\end{forest}
	}
\end{center}

\begin{theorem}[$\Rv$: The lfence compiler is \rdss]\label{thm:v5-lfence-comp-rdss}(\showproof{v5-lfence-comp-rdss})
	\begin{align*}
		\contractSpec{}{\Rv} \vdash \complfenceR{\cdot} : \rdss
	\end{align*}
\end{theorem}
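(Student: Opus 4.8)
The plan is to follow exactly the blueprint already used for the store-fence compiler in \Thmref{thm:v4-lfence-comp-rdss}. Unfolding the property-free characterisation \Cref{def:rdss}, I must produce, for every source program \src{P}, every $\Rv$-context $\ctxR{}$, and every target trace $\trgR{\tra{^\taint}}$ with $\trgR{\amTracevR{\ctxR{}\hole{\complfenceR{P}}}{\tra{^\taint}}}$, a source context and a source trace \src{\tras{^\sigma}} with $\src{\Trace{\backtrfencec{\ctxR{}}\hole{P}}{\tras{^\sigma}}}$ and $\src{\tras{^\sigma}}\rels\trgR{\tra{^\taint}}$. I would instantiate the existential source context with the backtranslation $\backtrfencec{\ctxR{}}$ of the attacker context. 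The whole statement then follows directly from the correctness-of-backtranslation result \Thmref{thm:v5-corr-bt-lfence}, so at the top level the proof is a one-liner, mirroring the $\Sv$ case.

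The real content lives in \Thmref{thm:v5-corr-bt-lfence}, which I would establish from the same two ingredients as for $\Sv$: that the initial states of $\backtrfencec{\ctxR{}}\hole{P}$ and $\ctxR{}\hole{\complfenceR{P}}$ are in the state relation $\srel$ (\Thmref{thm:v5-ini-state-rel}), and a generalised backward simulation (\Thmref{thm:v5-bwd-sim-lfence}) showing that related states producing a target big-step $\bigspecarrowR{\tra{^\taint}}$ are matched by a source big-step whose trace is $\rels$-related. As usual I would derive the backward simulation from a forward simulation for compiled statements (\Thmref{thm:v5-fwd-sim-stm-lfence}) together with determinism of $\specarrowR{}$, reusing the expression lemma \Thmref{thm:v4-fwd-sim-exp-lfence} and the backtranslation lemmas \Thmref{thm:v4-back-sim-bts-lfence} and \Thmref{thm:v4-back-sim-bte-lfence} verbatim, since the target syntax for $\Rv$ coincides with that for $\Sv$.

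The main obstacle, and the only genuine departure from the $\Sv$ development, is the return-stack buffer $\Rsb$ carried by the $\Rv$ semantics. I would extend the state relation $\srel$ to relate the $\Rsb$ components and, crucially, thread through the simulation the invariant that \emph{every entry pushed onto $\Rsb$ by compiled code points to a $\pbarrier$ instruction}. This holds because $\complfenceR{\cdot}$ compiles each $\pcall{f}$ to $\pcall{f};\pbarrier$, so the return address $A(\pc)+1$ recorded at a call (\Cref{tr:v5-call}) is precisely the label of the injected barrier, and because speculation is only ever triggered internally (the $\Rsb$ holds only component addresses). The payoff is the return case of the forward simulation: when a compiled $\pret$ triggers RSB speculation (\Cref{tr:v5-spec}), the mispredicted instance has its $\pc$ set to an $\Rsb$ entry, hence to a barrier; the first speculative step is therefore \Cref{tr:v5-barr-spec}, which zeroes the speculation window and forces an immediate \Cref{tr:v5-rollback}.

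Consequently the speculative transaction emits only its (safe) start and $\rollbl{\Rv}$ actions and no data-dependent observation, all of which relate to \src{\epsilon} on the source side via the trace relation. Establishing and preserving this barrier invariant across calls, returns, and the cross-component boundary—where $\Rsb$ must be shown to remain well-formed and to contain only component labels—is where the bulk of the effort will go. The remaining instruction cases (loads, stores, assignments, branches, and the non-speculative call/return bookkeeping) are structurally identical to those already discharged for $\Sv$, so I would reuse their reasoning with only notational changes to account for the extra $\Rsb$ field in each speculative instance.
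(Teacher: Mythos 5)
Your proposal matches the paper's proof: the top level instantiates the source context with $\backtrfencec{\ctxR{}}$ and appeals to correctness of the backtranslation, which the paper likewise derives from relatedness of initial states plus a generalised backward simulation obtained from forward simulation and determinism, reusing the $\Sv$ backtranslation lemmas. Your key invariant — that every RSB entry recorded by compiled code is the address of an injected $\pbarrier$ (since $\pcall{f}$ compiles to $\pcall{f};\pbarrier$), so any misprediction immediately hits a barrier and rolls back with only safe observations — is exactly the argument the paper uses in the return case of \Thmref{thm:v5-fwd-sim-stm-lfence}.
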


\begin{theorem}[$\Rv$: All lfence-compiled programs are \rss)]\label{thm:v5-all-lfence-comp-are-rdss}(\showproof{v5-all-lfence-comp-are-rdss})
	\begin{align*}
		\forall\src{P}\ldotp \contractSpec{}{\Rv} \vdash\complfenceR{P} : \rss
	\end{align*}
\end{theorem}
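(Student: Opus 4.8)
The plan is to mirror the derivation of the corresponding store-fence result (\Thmref{thm:v4-all-lfence-comp-are-rdss}): the statement follows purely by chaining three already-established facts, turning a property of the \emph{compiler} into a property of \emph{every compiled program}. The guiding observation is that ``$\complfenceR{P}$ is \rss'' is precisely the conclusion of the compiler's preservation guarantee, once the source-level precondition has been discharged.

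First I would invoke \Thmref{thm:ss-sni-source} to obtain that every source program is robustly speculatively safe under the non-speculative semantics, i.e.\ $\forall\src{P}\ldotp\contract{}{NS}\vdash\src{P}:\rss$; this holds because source executions contain no speculation and hence emit only actions tagged $\safeta$. This is the fact I will ultimately feed into the compiler guarantee as its hypothesis.

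Next I would take the secure-compilation result for this compiler, \Thmref{thm:v5-lfence-comp-rdss}, namely $\contractSpec{}{\Rv}\vdash\complfenceR{\cdot}:\rdss$, and rewrite it in preservation form via the equivalence of \Thmref{thm:rdss-eq-rdsp} between $\rdss$ and $\rdssp$, obtaining $\contractSpec{}{\Rv}\vdash\complfenceR{\cdot}:\rdssp$. Unfolding \Thmref{def:rdsspc}, this reads: for every $\src{P}$, if $\contract{}{NS}\vdash\src{P}:\rss$ then $\contractSpec{}{\Rv}\vdash\complfenceR{P}:\rss$. Instantiating this implication with the source-level fact from the first step discharges the premise and yields $\forall\src{P}\ldotp\contractSpec{}{\Rv}\vdash\complfenceR{P}:\rss$, which is exactly the claim.

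Since all mathematical substance is encapsulated in the cited results, this final derivation is pure bookkeeping and I anticipate no obstacle at this level. The genuine difficulty lies upstream, in \Thmref{thm:v5-lfence-comp-rdss}, whose backward-simulation argument (\Thmref{thm:v5-bwd-sim-lfence}) reduces to the forward-simulation step for compiled statements (\Thmref{thm:v5-fwd-sim-stm-lfence}). The crux there is that inserting a $\pbarrier$ immediately after every $\callC$ forces every return-stack-buffer entry---which records the address $\pc+1$ of a call, i.e.\ exactly the location of the inserted barrier (\Cref{tr:v5-call})---to point at a fence; consequently any RSB-driven misprediction via \Cref{tr:v5-spec} lands on that barrier, which zeroes the speculation window (\Cref{tr:v5-barr-spec}) and triggers an immediate rollback (\Cref{tr:v5-rollback}) before any data-dependent action can appear inside the transaction. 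Establishing this invariant, together with relating the multi-step compiled sequence $\trgR{\pcall{f} ; \pbarrier}$ back to the single source $\callC$ through the trace relation, is the substantive work, but it is fully absorbed by the cited lemma and does not resurface in the present proof.
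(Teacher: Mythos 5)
Your proposal matches the paper's proof exactly: it chains \Thmref{thm:ss-sni-source}, \Thmref{thm:v5-lfence-comp-rdss}, and \Thmref{thm:rdss-eq-rdsp}, then unfolds \Thmref{def:rdsspc} and instantiates it with the source-level fact, which is precisely the derivation given in the paper. Your additional remarks correctly locate the substantive work in the upstream simulation lemmas, but the proof itself is the same bookkeeping argument.
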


The relation is the same as the one in \Cref{comp:v4-fence}.

\begin{theorem}[$\Rv$: Correctness of the Backtranslation]\label{thm:v5-corr-bt-lfence}(\showproof{v5-corr-bt-lfence})
	\begin{align*}
		\text{ if } 
			&\
			\trgR{ \amTracevR{\ctxR{}\hole{\complfenceR{P}}}{\tra{^{\taint}}} }
		\\
		\text{ then }
			&
			\src{ \Trace{\backtrfencec{\ctxR{}}\hole{P}}{ \tras{^{\sigma}}} }
		\\
		\text{ and }
			&\
			\tras{^{\sigma}} \rels \trgR{\tra{^{\taint}}}
	\end{align*}
\end{theorem}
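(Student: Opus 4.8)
The plan is to follow the structure of the store-bypass analog \Cref{thm:v4-corr-bt-lfence}, whose proof reduces backtranslation correctness to exactly two ingredients: relatedness of the initial states and a generalised backward simulation. Concretely, I would discharge the goal by composing \Cref{thm:v5-ini-state-rel} (initial states are related under the $\Rv$ state relation $\srel$) with the generalised backward simulation \Cref{thm:v5-bwd-sim-lfence}, which transports a target big-step $\bigspecarrowR{}$-run into a source $\nsbigarrow{}$-run whose emitted trace is related by $\rels$. The whole argument is essentially the $\Sv$ proof with $\Sv$ replaced by $\Rv$; the only genuinely new content lives inside the simulation lemmas, which handle the return-stack buffer.

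In more detail, I would first unfold the hypothesis $\trgR{\amTracevR{\ctxR{}\hole{\complfenceR{P}}}{\tra{^{\taint}}}}$ using \Cref{tr:v5-trace}, obtaining a terminating run $\trgR{\initFuncR{(\ctxR{}\hole{\complfenceR{P}})} \bigspecarrowR{\tra{^{\taint}}} \SigmaR'}$ with $\SigmaR'$ final. Second, I would instantiate \Cref{thm:v5-ini-state-rel} with $\src{\OB{f}} = \dom{\src{P}.\src{F}}$ to get $\initFunc{(\backtrfencec{\ctxR{}}\hole{P})} \srel_{\src{\OB{f}}} \initFuncR{(\ctxR{}\hole{\complfenceR{P}})}$. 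Third, feeding this relatedness and the target run into \Cref{thm:v5-bwd-sim-lfence} yields a source big-step $\src{\Omega \nsbigarrow{\tras{^{\sigma}}} \Omega'}$ with $\tras{^{\sigma}} \rels \trgR{\tra{^{\taint}}}$ and the final states still related; since $\srel$ carries a terminal target state to a terminal source state, \Cref{tr:ns-trace} packages this run as $\src{\Trace{\backtrfencec{\ctxR{}}\hole{P}}{\tras{^{\sigma}}}}$, closing the goal.

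At this level the reasoning is bookkeeping, so the main obstacle is internal to \Cref{thm:v5-bwd-sim-lfence} and its forward-simulation core \Cref{thm:v5-fwd-sim-stm-lfence}, where the $\Rv$ semantics departs from $\Sv$ because the speculative instances carry the buffer $\trgR{\Rsb}$. The crux is to extend $\srel$ to require related RSBs and then to show that the fence the compiler injects after every $\callC{}$ traps any RSB-driven return misprediction before it emits an unsafe observation: every address pushed onto $\trgR{\Rsb}$ now points immediately before a $\pbarrier$, so when \Cref{tr:v5-spec} fires on compiled code the speculatively chosen target is a barrier, the ensuing $\pbarrier$ step sets the window to $0$ and forces an immediate rollback via \Cref{tr:v5-rollback}, and the transaction contributes only $\startl{\Rv}$/$\rollbl{\Rv}$ markers and safe observations, all absorbed by $\rels$. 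Establishing this fence-trapping invariant for RSB speculation (in place of the store-skip invariant used for $\Sv$) is the one step requiring real care; the remaining backward-simulation cases, including cross-component calls and returns and the backtranslated-context steps, reuse the $\Sv$ lemmas \Cref{thm:v4-back-sim-bts-lfence} and \Cref{thm:v4-back-sim-bte-lfence} essentially verbatim.
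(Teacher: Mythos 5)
Your proposal matches the paper's proof exactly: the paper discharges \Cref{thm:v5-corr-bt-lfence} by composing \Cref{thm:v5-ini-state-rel} with the generalised backward simulation \Cref{thm:v5-bwd-sim-lfence}, just as you do, and your account of where the real work lies (the RSB-related cases of \Cref{thm:v5-fwd-sim-stm-lfence}, where every RSB entry points to an injected barrier so misprediction is trapped and rolled back with only safe observations) is also how the paper handles it. No gaps.
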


\begin{theorem}[$\Rv$: Generalised Backward Simulation]\label{thm:v5-bwd-sim-lfence}(\showproof{v5-bwd-sim-lfence})
	\begin{align*}
		\text{ if }
			&\
			\ffun{\src{A(\pc)}}\in\src{\OB{f''}} \text{ then } \trgR{A(\pc)} : \trgR{i} = \complfenceR{l : i} \text{ or } (\complfenceR{A(\pc) : i} = \trgR{A(\pc)} : \trgR{i} ; \trgR{p_c'}) 
			\\
		\text{ else } 
			&\
			\src{A(\pc)} : \src{i} = \backtrfencec{\trgR{A(\pc)}} : \backtrfencec{\trgR{i}} 
			\\
		\text{ and}
			&\
			\text{ if } \ffun{\src{A'(\pc)}}\in\src{\OB{f''}} \text{ then } \trgR{A'(\pc)} : \trgR{i'} = \complfenceR{A'(\pc) : i'} \text{ or } (\complfenceR{A'(\pc) : i'} = \trgR{A'(\pc)} : \trgR{i'} ; \trgR{p_c''}) 
			\\
		\text{ else }
			&\ 
			\src{A'(\pc)} : \src{i'} = \backtrfencec{\trgR{A'(\pc)}} : \backtrfencec{\trgR{i'}}
			\\
		\text{ and }
			&\
			\SigmaR = \trgR{(C, \OB{B}, \tup{p, M, A}, \bot,\safeta)}
		\\
		\text{ and }
			&\
			\SigmaRt{'}=\trgR{(C, \OB{B'}, \tup{p, M', A'},\bot,\safeta)}
		\\
		\text{ and }
			&\
			\trgR{ \SigmaR \bigspecarrowR{\tra{^{\taint}}} \SigmaR' }
		\\
		\text{ and }
			&\
			\src{\Omega} \srel_{\src{\OB{f''}}} \trgR{\SigmaR}
		\\
        \text{ and }
			&\
			\trgR{p(A(\pc))} = \trgR{i}  \text{ and } \trgR{p(A'(\pc))} = \trgR{i'}
		\\
        \text{ and }
			&\
			\src{p(A(\pc))} =  \src{i} \text{ and } \src{p(A'(\pc))} = \src{i'}
        \\
		\text{ then }
			&\
			\src{\Omega}=\src{C, \OB{B}, \tup{p , M, A}  \nsbigarrow{\tras{^\sigma}} C, \OB{B'}, \tup{p , M', A'} }=\src{\Omega'}
		\\
		\text{ and }
			&\
			\src{\tras{^\sigma}} \tracerel \trgR{\tra{^{\taint}}}
		\\
		\text{ and }
			&\
			\src{\Omega'} \srelref_{\src{\OB{f''}}}\, \trgR{\SigmaR{'}}
	\end{align*}
\end{theorem}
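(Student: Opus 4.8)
The plan is to prove \Thmref{thm:v5-bwd-sim-lfence} by induction on the derivation of the target big-step $\trgR{\SigmaR \bigspecarrowR{\tra{^{\taint}}} \SigmaR'}$, following exactly the structure of the analogous store-fence result \Thmref{thm:v4-bwd-sim-lfence}. There are three cases for the last rule applied. The reflexivity case \Cref{tr:v5-reflect} is immediate: the trace is empty, $\trgR{\SigmaR'} = \trgR{\SigmaR}$, and the source takes zero steps, so all three conclusions hold from the hypothesis $\src{\Omega} \srel_{\src{\OB{f''}}} \trgR{\SigmaR}$ together with \Cref{tr:tr-rel-empty}. In the silent case \Cref{tr:v5-silent} and the single-action case \Cref{tr:v5-single} I would first apply the induction hypothesis to the prefix derivation $\trgR{\SigmaR \bigspecarrowR{} \SigmaR''}$, obtaining a matching source big-step into an intermediate state $\src{\Omega''}$ with $\src{\Omega''} \srel_{\src{\OB{f''}}} \trgR{\SigmaR''}$ and related prefix traces, and then reconstruct the final small step.

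First I would carry out the case analysis that localises the final step. Writing $\src{f''}$ for the function owning $\src{A''(\pc)}$ and $\src{f'}$ for the function owning the next program counter, I split on whether each of $\src{f''},\src{f'}$ lies in $\src{\OB{f''}}$ (the compiled component) or not. When both endpoints stay inside the component I appeal to \Thmref{thm:v5-bwd-sim-comp-steps-lfence}; when both stay inside the context I appeal to \Thmref{thm:v4-back-sim-bts-lfence} (the back-translation shares the same syntax across all target languages, so this lemma is reused verbatim). The four remaining possibilities are the cross-boundary transitions: a call or a return, each in either direction. These I would handle explicitly as in the V4 proof, matching a target step derived from \Cref{tr:v5-call}, \Cref{tr:v5-call-att}, \Cref{tr:v5-retS} or \Cref{tr:v5-retE} (together with the underlying non-speculative call/return) against the corresponding source rule \Cref{tr:call}, \Cref{tr:callback}, \Cref{tr:ret} or \Cref{tr:retback}, and then re-establishing each component of the state relation $\srel$ from the induction hypothesis. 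The program-counter update agrees on both sides because compilation preserves the function map (\Thmref{thm:v4-fmaps-related}), and the stack, register-file and memory components follow from the stack, register and heap relations carried in $\srel$.

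The genuinely new ingredient relative to \Thmref{thm:v4-bwd-sim-lfence} is the return-stack buffer $\trgR{\Rsb}$. I would observe that the RSB is altered only by internal calls (\Cref{tr:v5-call}), whereas a call executed while in the context (\Cref{tr:v5-call-att}) and every cross-boundary call leave it untouched, so the RSB component of the state relation is preserved across boundaries for free. The crucial point is that the fence compiler emits a $\trgR{\pbarrier}$ immediately after every $\trgR{\pcall{f}}$, so each address pushed onto $\trgR{\Rsb}$ points to a barrier. Consequently, whenever a return triggers an RSB misprediction via \Cref{tr:v5-spec}, the freshly pushed speculative instance executes the barrier through \Cref{tr:v5-barr-spec}, forcing the window to $0$ and discarding the transaction by \Cref{tr:v5-rollback}; the only observation produced is $\trgR{\rollbackObsR}$. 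Such a sub-trace is absorbed by the trace relation through \Cref{tr:tr-rel-rollb} (and the safe-action rules), so it has no source counterpart and the reconstructed source trace remains related to $\trgR{\tra{^{\taint}}}$.

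The main obstacle I anticipate is not any single case but the bookkeeping of the cross-boundary transitions together with the RSB: one must check, in each of the four directions, that the extra $\trgR{\Rsb}$ field of $\srel$ is re-established and that the inserted barriers genuinely prevent any unsafe action from escaping a mispredicted return before the matching source step is taken. As in the V4 development, the heavy lifting is deferred to \Thmref{thm:v5-bwd-sim-comp-steps-lfence}, which is itself obtained from the forward simulation \Thmref{thm:v5-fwd-sim-stm-lfence} plus determinism of $\specarrowR{}$; thus the present theorem is mostly an orchestration of the induction and the boundary cases, with the substantive speculation reasoning isolated in those auxiliary lemmas.
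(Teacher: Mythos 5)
Your proposal follows essentially the same route as the paper's proof: induction on the big-step derivation, with the reflexivity case trivial and the silent/single cases handled by applying the induction hypothesis to the prefix and then splitting on whether the endpoints of the final step lie in the compiled component or the context, delegating component-internal steps to \Cref{thm:v5-bwd-sim-comp-steps-lfence}, context-internal steps to \Cref{thm:v4-back-sim-bts-lfence}, and treating the four cross-boundary call/return transitions explicitly as in the V4 development. Your observations about the RSB (only internal calls push to it, cross-boundary calls and returns leave it untouched and do not speculate, and mispredicted returns land on compiler-inserted barriers) match the paper's treatment, which likewise isolates that reasoning in the forward-simulation lemma rather than in this orchestrating theorem.
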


\begin{theorem}[$\Rv$: Backward Simulation for Compiled Steps]\label{thm:v5-bwd-sim-comp-steps-lfence}(\showproof{v5-bwd-sim-comp-steps-lfence})
	\begin{align*}
		\text{ if }
			&\
			\trgR{\SigmaR}=\trgR{(C, \OB{B}, \tup{p, M, A}, \bot, \safeta) }
		\\
		\text{ and }
			&\
			\trgR{\SigmaR'}=\trgR{(C, \OB{B'}, \tup{p, M', A'}, \bot, \safeta)}
		\\
		\text{ and }
			&\
			\trgR{\SigmaR \bigspecarrowR{\tra{^{\taint}}} \SigmaR'}
		\\
		\text{ and }
			&\
			\src{\Omega} \srelref_{\src{\OB{f''}}}\, \SigmaR
		\\
        \text{ and }
			&\
			\trgR{p(A(\pc))} = \complfenceR{l} : \complfenceR{i} \text{ and } \trgR{p(A'(\pc))} = \complfenceR{l'} : \complfenceR{i'}
		\\
         \text{ and }
            &\
            \src{p(A(\pc))} =  \src{i} \text{ and } \src{p(A'(\pc))} =  \src{i'}
            \\
        \text{ and }
			&\
			\trgR{A(\pc) : i} = \complfenceR{A(\pc) : i}   \text{ or } (\complfenceR{A(\pc) : i} = \trgR{A(\pc)} : \trgR{i} ; \trgR{p_c'})
            \\
            \text{ and }
                &\
                \trgR{A'(\pc) : i'} = \complfenceR{A'(\pc) : i'}  \text{ or } (\complfenceR{A'(\pc) : i'} = \trgR{A'(\pc)} : \trgR{i'} ; \trgR{p_c''})
		\\
		\text{ then }
			&\
			\src{\Omega}=\src{C, \OB{B}, \tup{p, M, A} \nsarrow{\tau^\sigma} C, \OB{B'}, \tup{p, M', A'}}= \src{\Omega'}
        \\
		\text{ and }
			&\
			\src{\tau^\sigma} \tracerel \trgR{\tra{^{\taint}}} \qquad \text{( using the trace relation!)}
		\\
		\text{ and }
			&\
			\src{\Omega'} \srelref_{\src{\OB{f''}}}\, \SigmaR'
	\end{align*}
\end{theorem}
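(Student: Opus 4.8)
The plan is to obtain this backward simulation as a corollary of the corresponding forward simulation lemma $\Thmref{thm:v5-fwd-sim-stm-lfence}$ together with determinism of the $\Rv$ speculative semantics, mirroring exactly the argument already carried out for the store-fence compiler in $\Thmref{thm:v4-bwd-sim-comp-steps-lfence}$. The hypotheses hand us a target big-step $\trgR{\SigmaR \bigspecarrowR{\tra{^{\taint}}} \SigmaR'}$ between related states $\src{\Omega} \srel_{\src{\OB{f''}}} \trgR{\SigmaR}$, where the program counter points into a compiled block $\complfenceR{l : i}$; from this I want to recover a single matching source step $\src{\Omega \nsarrow{\tau^{\sigma}} \Omega'}$ whose observation is trace-related to $\trgR{\tra{^{\taint}}}$ and whose endpoint satisfies $\src{\Omega'} \srel_{\src{\OB{f''}}} \trgR{\SigmaR'}$.

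First I would argue by contradiction, splitting into a trace-relation case and a state-relation case just as in the $\Sv$ development. In the trace case, suppose there is a source step $\src{\Omega \nsarrow{\tras{^{\sigma'}}} \Omega''}$ for which $\src{\tras{^{\sigma'}}} \rels \trgR{\tra{^{\taint}}}$ fails. Applying $\Thmref{thm:v5-fwd-sim-stm-lfence}$ to this step and to $\src{\Omega} \srel_{\src{\OB{f''}}} \trgR{\SigmaR}$ produces a target execution $\trgR{\SigmaR \bigspecarrowR{\tra{^{\taint'}}} \SigmaR''}$ with $\src{\tras{^{\sigma'}}} \rels \trgR{\tra{^{\taint'}}}$. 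Determinism of $\specarrowR{}$, hence of $\bigspecarrowR{}$, then forces $\SigmaR'' = \SigmaR'$ and $\trgR{\tra{^{\taint'}}} = \trgR{\tra{^{\taint}}}$, contradicting the assumed failure. The state case is identical: forward simulation yields a target state $\SigmaR''$ with $\src{\Omega''} \srel_{\src{\OB{f''}}} \trgR{\SigmaR''}$, and determinism collapses $\SigmaR'' = \SigmaR'$, again giving a contradiction.

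All of the genuinely $\Rv$-specific content lives inside the forward lemma, which the dependency graph confirms is the single leaf of this subproof and which I regard as the real obstacle. Concretely, that lemma must account for the fact that $\complfenceR{\cdot}$ rewrites each $\src{\pcall{f}}$ into $\trgR{\pcall{f} ; \pbarrier}$, so one source call corresponds to two target small steps: a call step (\Cref{tr:v5-call}), which pushes the return address onto the RSB $\trgR{\Rsb}$ but does not itself trigger speculation, followed by a barrier step that, under window $\bot$, behaves as a $\pskip$ (\Cref{tr:v5-barr}). Consequently the compiled block emits exactly the call observation and leaves the window at $\bot$, which matches the endpoint $\SigmaR'$ in the statement; every other instruction is compiled one-to-one and is handled as in the $\Sv$ case. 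For the present theorem itself, no new speculative reasoning is required.

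The one point I would make explicit is the determinism of $\specarrowR{}$. Since the always-mispredict rule deterministically selects the mispredicted target (the top of $\trgR{\Rsb}$ exactly when it differs from $\trgR{B(0)}$) and every remaining rule is fixed by the current instruction, the $\Rv$ semantics is deterministic by inspection of its rules. With determinism in hand the backward simulation is a thin wrapper around $\Thmref{thm:v5-fwd-sim-stm-lfence}$, and the proof closes precisely as in $\Thmref{thm:v4-bwd-sim-comp-steps-lfence}$.
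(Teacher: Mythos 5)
Your proposal is correct and follows essentially the same route as the paper: the paper proves this theorem by declaring it analogous to \Cshref{thm:v4-bwd-sim-comp-steps-lfence}, whose proof is exactly your contradiction argument combining the forward simulation lemma \Cshref{thm:v5-fwd-sim-stm-lfence} with determinism of the target semantics to collapse the hypothesised counterexample. Your additional remarks on where the $\Rv$-specific reasoning lives (inside the forward lemma) and on why $\specarrowR{}$ is deterministic are consistent with, and slightly more explicit than, what the paper records.
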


\begin{lemma}[$\Rv$: Forward Simulation for Compiled Statements]\label{thm:v5-fwd-sim-stm-lfence}(\showproof{v5-fwd-sim-stm-lfence})
	\begin{align*}
		\text{ if }
			&\
			\src{\Omega}=\src{C, \OB{B}, \tup{p, M, A} \nsarrow{\tau^\sigma} C, \OB{B'}, \tup{p, M', A'} } =\src{\Omega'}
		\\
		\text{ and }
			&\
			\src{\Omega} \srelref_{\src{\OB{f''}}} \trgR{\SigmaR}
		\\
        \text{ and }
			&\
			\src{p(A(\pc))} = \src{i} \text{ and } \src{p(A'(\pc))} = \src{i'}
		\\
		\text{ and }
			&\
			\trgR{\SigmaR}= \trgR{C, \OB{B}, \tup{p, M , A}, \bot, \safeta}, 
		\\
		\text{ and }
			&\
			\trgR{\SigmaR'}= \trgR{C, \OB{B}, \tup{p, M' , A'}, \bot, \safeta}
		\\
        \text{ and }
			&\
			\trgR{A(\pc) : i} = \complfenceR{A(\pc) : i}   \text{ or } (\complfenceR{A(\pc) : i} = \trgR{A(\pc)} : \trgR{i} ; \trgR{p_c'})
            \\
            \text{ and }
                &\
                \trgR{A'(\pc) : i'} = \complfenceR{A'(\pc) : i'}  \text{ or } (\complfenceR{A'(\pc) : i'} = \trgR{A'(\pc)} : \trgR{i'} ; \trgR{p_c''})
		\\
		\text{ then }
			&\
			\trgR{\SigmaR \bigspecarrowR{\tra{^{\taint}}} \SigmaR'}
		\\
		\text{ and }
			&\
			\src{\tau^\sigma} \tracerel \trgR{\tra{^{\taint}}} \qquad \text{( using the trace relation!)}
		\\
		\text{ and }
			&\
			\src{\Omega'} \srelref_{\src{\OB{f''}}}\ \trgR{\SigmaR'}
	\end{align*}
\end{lemma}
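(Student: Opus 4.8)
The plan is to follow the same skeleton as the $\Sv$ companion lemma \Thmref{thm:v4-fwd-sim-stm-lfence}: a case analysis on the derivation of the source step $\src{\Omega \nsarrow{\tau^\sigma} \Omega'}$. Since $\src{\Omega}\srel_{\src{\OB{f''}}}\trgR{\SigmaR}$ forces $\src{A(\pc)} = \trgR{A(\pc)}$, the top speculative instance of $\trgR{\SigmaR}$ sits at the compiled instruction, whose shape I can read off from the definition of $\complfenceR{\cdot}$. Throughout, the window and taint components of $\trgR{\SigmaR}$ stay at $\bot$ and $\safeta$, so the only genuinely new ingredient relative to $\Sv$ is the return-stack buffer carried by the $\Rv$ state.

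For every instruction that $\complfenceR{\cdot}$ treats homomorphically --- $\pskip$, $\passign{x}{e}$, $\pcondassign{x}{e}{e'}$, the (private) loads and stores, $\pjz{x}{l}$, and $\pjmp{e}$ --- the target matches the source with a single $\specarrowR{}$ step fired by \Cref{tr:v5-nospec-act} or \Cref{tr:v5-nospec-eps} on top of the corresponding $\nsarrow{}$ rule. Each such case reuses \Thmref{thm:v4-fwd-sim-exp-lfence} to move the expression evaluation across the compiler and \Thmref{thm:v4-val-rel-comp-lfence} together with \Thmref{thm:v4-heap-rel-comp-lfence} to re-establish the value and memory relations, threading the RSB through unchanged; these cases are verbatim transcriptions of the $\Sv$ argument.

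The two cases that carry the actual $\Rv$ content are call and ret, since these are exactly where $\complfenceR{\cdot}$ and the $\Rv$ semantics deviate. For a call, $\complfenceR{l:\pcall{f}} = \trgR{l:\pcall{f};\,l':\pbarrier}$; firing \Cref{tr:v5-call} runs the underlying call step (matched to the source \Cref{tr:call}, \Cref{tr:call-internal}, or \Cref{tr:callback} via \Cref{tr:ac-rel-cl} and \Thmref{thm:v4-fmaps-related} for $\src{\mathcal{F}(f)} = \trgR{\mathcal{F}(f)}$) and pushes the return address $\trgR{A(\pc)+1}=\trgR{l'}$ onto the RSB. The key fact I would record as a state-relation invariant is that $\trgR{l'}$ is precisely the location of the injected $\pbarrier$ and that the RSB mirrors the frame stack $\trgR{B}$; the barrier is the return target and is not executed during the call, so this remains a single target step. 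For a ret, the invariant determines the behaviour: when the RSB top agrees with $\trgR{B(0)}$ the return fires \Cref{tr:v5-retS} and mirrors the source $\pret$ in one step; if a misprediction is forced, \Cref{tr:v5-spec} pushes a speculative instance whose $\pc$ is the RSB top, i.e. a barrier, so the next step is \Cref{tr:v5-barr-spec} (zeroing the window) followed by \Cref{tr:v5-rollback}. The emitted target fragment is then the return observation followed by $\rollbackObsR$ with no intervening data-dependent action, which relates to the single source return observation by \Cref{tr:ac-rel-rt} and by matching the rollback against $\src{\epsilon}$ through \Cref{tr:tr-rel-rollb}.

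The step I expect to be the main obstacle is the return-stack-buffer bookkeeping. I must carry, as part of the state relation inherited from \Cref{comp:v4-fence}, the invariant that the RSB faithfully mirrors the frame stack and that each of its entries addresses an injected $\pbarrier$. Establishing this at call sites and, above all, preserving it across the component/attacker boundary rules --- \Cref{tr:v5-call-att} and the cross-component \Cref{tr:call}, \Cref{tr:callback}, and \Cref{tr:retback}, where RSB and frame evolve under different rules --- is the delicate part. It is exactly this invariant that guarantees that any forced return misprediction lands on a fence and therefore contributes only a safe, immediately-rolled-back transaction whose trace is $\rels$-related to the shorter source trace.
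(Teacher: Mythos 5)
Your proposal matches the paper's proof essentially step for step: a case analysis on the source step, delegating the homomorphic instructions to the $\Sv$/$\SLSv$ arguments, and resolving the call/ret cases via the observation that every RSB entry is the address immediately after a component call and therefore points to an injected $\pbarrier$, so a mispredicted return emits only the return observation and $\rollbackObsR$, related to the source trace by \Cref{tr:ac-rel-rlb}. The only difference is presentational: the paper treats the ``RSB entries point to barriers'' fact as a one-line consequence of \Cref{tr:v5-call} plus the compiler definition rather than as a full invariant mirroring the frame stack, which is slightly lighter than what you propose but amounts to the same argument.
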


\begin{lemma}[Initial States are Related]\label{thm:v5-ini-state-rel}(\showproof{v5-ini-state-rel})
	\begin{align*}
		&
		\forall \src{P}, \forall \src{\OB{f}}=\dom{\src{P}.\src{F}}, \forall \ctxR{}
		\\
		&
		\initFunc{\backtrfencec{\ctxR{}}\src{\hole{P}}} \srelref_{\src{\OB{f}}}\, \initFuncR{\ctxR{\hole{\complfenceR{\src{P}}}}}
	\end{align*}
\end{lemma}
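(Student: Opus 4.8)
The plan is to mirror the proof of \Thmref{thm:v4-ini-state-rel} almost verbatim, since the only structural difference between the $\Sv$ and $\Rv$ settings is the presence of the return-stack buffer $\Rsb$ in the speculative instances of $\semr$. First I would unfold both sides. By \Cref{tr:v5-init} the target initial state is $\initFuncR{\ctxR{\hole{\complfenceR{\src{P}}}}} = \trgR{\initFunc(M, \OB{F}, \OB{I}), \emptyset, \bot, \safeta}$, i.e.\ a non-speculative configuration wrapped with an \emph{empty} RSB, window $\bot$, and taint $\safeta$; by \Cref{tr:ini-us} the source initial state $\initFunc{\backtrfencec{\ctxR{}}\src{\hole{P}}}$ has its memory and register file fixed to the canonical pre-allocated values. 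Crucially, plugging (\Cref{tr:plug-us}) guarantees that the context memory and the program memory occupy disjoint address ranges, which is what lets me treat the two halves of the heap independently.

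Next I would discharge each conjunct of the state relation $\srel$ (\Cref{tr:v4-states}, which \Cref{comp:v5-fence} reuses for $\Rv$), following the $\Sv$ argument step by step. For the memory relation $\hrel$ I would split $M$ into the backtranslated context part and the program part: by \Cref{def:atk} the context heap lives in the naturals and is related to its backtranslation via \Thmref{thm:v4-heap-rel-bt-lfence}, while the program heap is related to its compilation via \Thmref{thm:v4-heap-rel-comp-lfence}; the remaining default-initialized locations ($0\!:\!\safeta$ on the public side, $0\!:\!\unta$ on the private side) are related by construction of $\initFunc{}$. For the register relation $\rrel$, every register is $0\!:\!\safeta$ and $\pc$ holds the label of $\funname{main}$, which is related by the value relation $\vrel$ for labels. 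The program relation $\crel_{\src{\OB{f}}}$ follows by inspecting $\complfenceR{\cdot}$ and $\backtrfencec{\cdot}$ (neither deletes instructions, so compiled functions map to compiled code and context functions to backtranslated code). The stack relation $\brel$ holds by the base case since both frame stacks are empty, and interface equality $\src{\OB{I}} \equiv \trgR{\OB{I}}$, the taint $\safeta$, and the window $\bot$ all follow directly from the two \textbf{init} rules.

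The only genuinely new obligation relative to \Thmref{thm:v4-ini-state-rel} is the RSB component, and I expect this to be the easiest rather than the hardest part: since \Cref{tr:v5-init} initializes $\Rsb$ to $\emptyset$, the extended relation (the $\Rv$ refinement of $\approx$ covering $\Rsb$) is satisfied trivially, because there is nothing on the source side to relate the buffer against and the empty buffer is vacuously related to itself. Consequently the real subtlety, shared with the $\Sv$ case, remains the memory conjunct: one must carefully account for the disjointness of context and program address ranges together with the infinite pre-allocation of default values, so that the case split into \Thmref{thm:v4-heap-rel-comp-lfence} and \Thmref{thm:v4-heap-rel-bt-lfence} covers exactly the domain of $M$ and the defaults cover its complement. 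Once that bookkeeping is in place, the lemma closes immediately.
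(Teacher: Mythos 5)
Your proposal is correct and matches the paper's approach exactly: the paper simply states that the proof is analogous to the $\Sv$ case (\Cref{thm:v4-ini-state-rel}), and you carry out that analogy conjunct by conjunct, correctly identifying that the only new obligation is the RSB, which is trivial since \Cref{tr:v5-init} sets it to $\emptyset$.
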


\begin{proof}[Proof of \Thmref{thm:v5-lfence-comp-rdss}]\proofref{}{v5-lfence-comp-rdss}\hfill

    Instantiate $\src{A}$ with $\backtrfencec{\trgR{A}}$.
    This holds by \Thmref{thm:v5-corr-bt-lfence}.
\end{proof}

\BREAK

\begin{proof}[Proof of \Thmref{thm:v5-all-lfence-comp-are-rdss}]\proofref{}{v5-all-lfence-comp-are-rdss}\hfill

    By \Thmref{thm:ss-sni-source} we know all source languages are SS: $\text{HPS:} \forall\src{P}\contract{}{NS}\ldotp\vdash P : \rss$.
    
    Since \Thmref{thm:v5-lfence-comp-rdss} we have HPC: $\contractSpec{}{\Rv}\vdash \complfenceR{\cdot} : \rdss$.

    By \Thmref{thm:rdss-eq-rdsp}  with HPC we have $\text{HPP} \contractSpec{}{\Rv}\vdash \complfenceR{\cdot} : \rdssp$

    Unfolding \Thmref{def:rdsspc} of HPP we get 
    $\forall\src{P}\ldotp \text{ if } \contract{}{NS}\vdash\src{P} : \rss \text{ then } \contractSpec{}{\Rv} \vdash\comp{\src{P}} : \rss$. 
  
    Now, we instantiate it with HPS and can conclude that $\forall\src{P}\ldotp \contractSpec{}{\Rv}\vdash\complfenceR{P} : \rss$.
    
\end{proof}

\BREAK

\begin{proof}[Proof of \Thmref{thm:v5-ini-state-rel}]\proofref{}{v5-ini-state-rel}\hfill

Analogous to \Thmref{thm:v4-ini-state-rel},
\end{proof}

\BREAK

\begin{proof}[Proof of \Thmref{thm:v5-corr-bt-lfence}]\proofref{}{v5-corr-bt-lfence}\hfill

This holds by \Thmref{thm:v5-ini-state-rel} and \Thmref{thm:v5-bwd-sim-lfence}.
\end{proof}

\BREAK

\begin{proof}[Proof of \Thmref{thm:v5-bwd-sim-lfence}]\proofref{}{v5-bwd-sim-lfence}\hfill

We proceed by induction on $\SigmaR \bigspecarrowR{\tra{^{\taint}}} \SigmaR '$:
\begin{description}
    \item[\Cref{tr:v5-reflect}]
    Trivial.
    \item[\Cref{tr:v5-single}]
    We have 
    \begin{align*}
        \trgR{\SigmaR \bigspecarrowR{\OB{\tau^{\taint}}} \SigmaR ''} \\
        \trgR{\SigmaR '' \specarrowR{\tau^{\taint}} \SigmaR '} \\ 
        \trgR{\SigmaR ''} = \trgR{\OB{C''}, \OB{B''}, \tup{p ,M'', A''}, \bot, \safeta}
    \end{align*}

    Applying IH on $\trgR{\SigmaR  \bigspecarrowR{\OB{\tau^{\taint''}}} \SigmaR ''}$ we get: 
    \begin{enumerate}
        \item $\src{\Omega  \nsbigarrow{\tras{^{\sigma''}}} \Omega''}$ and
        \item $\src{\tras{^{\sigma''}}} \tracerel \trgR{\tra{^{\taint''}}}$ and
        \item $\src{\Omega''} \srelref_{\src{\OB{f''}}}\, \trgR{\SigmaR {''}}$
    \end{enumerate}
    Let $\ffun{\trgR{l''}} = \trgR{f''}$ and let $\ffun{\trgR{l'}} = \trgR{f'}$ (where $\trgR{\SigmaR ' . A(\pc)} = \trgR{l'}$).
    We proceed by case analysis on $\src{f''} \in \src{\OB{f''}}$:
    \begin{description}
        \item[$\src{f''} \in \src{\OB{f''}}$ (\textbf{in the compiled component})]
            By case analysis on $\src{f'} \in \src{\OB{f''}}$:
            \begin{description}
                \item[$\src{f'} \in \src{\OB{f''}}$ (\textbf{in the compiled component})]

                This holds by \Thmref{thm:v5-bwd-sim-comp-steps-lfence}.
                
                \item[$\src{f'} \notin \src{\OB{f''}}$ (\textbf{in the context})]
                This switch from component to context arises in two cases:
                \begin{description}
                    \item[\textbf{call}]
                    This is a call from a compiled function to a context function.
                    
                    In this case we know that $\trgR{\SigmaR '' \specarrowR{\tau^{\taint}} \SigmaR '}$ was derived by \Cref{tr:v5-nospec-act} combined with \trgR{\Cref{tr:call}} and we know $\trgR{\tau^{\taint}} = \trgR{\clh{}^{\safeta}}$.
                    
                    This case is analogous to the corresponding case in  \Thmref{thm:v4-bwd-sim-lfence}.

                    \item[\textbf{return}]
                    This is a return from a compiled function to a context function.
                    This is the dual of the case below for return from context to component.

                    Note that there is no speculation here since it is a cross-component return.
                \end{description}
            \end{description}
        
        \item[$\src{f''} \notin \src{\OB{f''}}$ (\textbf{in the context})]

            By case analysis on $\src{f'} \in \src{\OB{f''}}$:
            \begin{description}
                \item[$\src{f'} \in \src{\OB{f''}}$ (\textbf{in the compiled component})]
                
                This switch from context to component arises in two cases:
            
                \begin{description}
                    \item[\textbf{call}]
                    
                    This is a call from a context function to a compiled function.
                    This is the dual of the case for call above.

                    \item[\textbf{return}]
                    
                    This is a return from a context function to a compiled function.
                    In this case we know that $\trgR{\SigmaR '' \specarrowR{\tau^{\taint}} \SigmaR '}$ was derived by \Cref{tr:v5-nospec-act} combined with \trgR{\Cref{tr:ret}} and we know $\trgR{\tau^{\taint}} = \trgR{\retObs{}!^{\safeta}}$.
                    So $\trgR{\SigmaR '} = \trgR{(\OB{C''}, \OB{B'}, \tup{p, M', A'}, \bot, \safeta)}$, where 
                    $\trgR{\OB{B''}} = \trgR{\OB{B'} \cdot 0 \mapsto l : \safeta}$ and
                    $\trgR{M'} = \trgR{M''}$ and $\trgR{A'} = \trgR{A''[\pc \mapsto l]}$.

                    This case is analogous to the corresponding case in  \Thmref{thm:v4-bwd-sim-lfence}.
                    Note that here is no speculation since the $\pret$ instruction is part of the context.
                \end{description}

            \item[$\src{f'} \notin \src{\OB{f''}}$ (\textbf{in the context})]
            
            This holds by \Thmref{thm:v4-back-sim-bts-lfence}.
            
        \end{description}
        
    \end{description}
    \item[\Cref{tr:v5-silent}]
    Analogous to the corresponding case in \Thmref{thm:v4-bwd-sim-lfence}.
\end{description}

\end{proof}

\BREAK

\begin{proof}[Proof of \Thmref{thm:v5-fwd-sim-stm-lfence}]\proofref{}{v5-fwd-sim-stm-lfence}\hfill

    Since $\src{\Omega} \srel_{\src{\OB{f''}}}\ \trgR{\SigmaR}$ we know that $\src{p(A(\pc))} = \trgR{p(A(\pc))}$.
    We proceed by induction on $\nsarrow{}$:
    \begin{description}
        \item[\Cref{tr:skip}, \Cref{tr:assign}, \Cref{tr:load}, \Cref{tr:load-prv}, \Cref{tr:beqz-sat}, \Cref{tr:beqz-unsat}, \Cref{tr:callback}]

        Analogous to the corresponding cases in \Thmref{thm:v4-fwd-sim-stm-lfence}.

        \item[\Cref{tr:store}, \Cref{tr:store-prv}] 
    
        The only change to the state is $\src{A'} = \src{A[pc \mapsto \succes(\pc)}$ and $\src{M'} = \src{M[n \mapsto A(x)]}$ where $\src{\exprEval{A}{e}{n : \taint}}$.
    
        Furthermore, we know $\src{\tau^\sigma} = \src{\storeObs^{\safeta}}$.
        This case is analogous to the corresponding case in \Thmref{thm:sls-fwd-sim-stm-lfence}.
       
        \item[\Cref{tr:call-internal}]

            The only change to the state is $\src{A'} = \src{A[pc \mapsto \mathcal{F}(f)]}$ and $\src{B'} = \src{B \cdot (\Omega(\pc) + 1)}$.
       
            By \Thmref{thm:v4-fmaps-related} we have $\src{\mathcal{F}(f)} = \trgR{\mathcal{F}(f)}$.
            Furthermore, we know $\src{\tau^\safeta} = \varepsilon$.
            \begin{description}
                \item[$\trgR{\SigmaR \bigspecarrowR{\tra{^{\taint}}} \SigmaR'}$]
                
                Thus, we can use \Cref{tr:call-internal} to derive $\OmegaR \trgR{\nsarrow{\varepsilon}} \OmegaR'$ with $\trgR{A'} = \trgR{A[pc \mapsto \mathcal{F}{f}]}$ and $\trgR{B'} = \trgR{B \cdot (\OmegaR(\pc) +1)}$.
    
                Let $\Rsb' = \trgR{\Rsb \cdot (\OmegaR(\pc) +1)}$. We can now apply \Cref{tr:v5-call} on $\trgR{\SigmaR}$ and are finished.
                \item[$\src{\varepsilon} \tracerel \trgR{\varepsilon}$]
                Trivial.
    
                \item[$\src{\Omega'} \srelref_{\src{\OB{f''}}}\, \trgR{\SigmaR'}$]
                Follows from $\src{A'} \rrel \trgR{A'}$ which follows from $\src{A} \rrel \trgR{A}$.
                Furthermore, $\src{B'} \brel \trgR{B'}$ follows from $\src{B} \brel \trgR{B}$ and $\src{\Omega}  \srelref_{\src{\OB{f''}}} \trgR{\SigmaR}$.
                
            \end{description}   
        \end{description}

        \item[\Cref{tr:call}]
    
        Then we can use \Cref{tr:v5-call-att} since this is a cross-component call.

        The rest is analogous to the corresponding case in \Thmref{thm:v4-fwd-sim-stm-lfence}.

        \item[\Cref{tr:callback}]
        
        This case cannot arise as we do not step to a compiled instructions $\complfenceR{i'}$ in the target.
        
        \item[\Cref{tr:ret-internal}]
        
        The only change to the state is $\src{A'} = \src{A[pc \mapsto l]}$ and the removal of the return location $\src{l}$ from $\src{B}$.

        Furthermore, we know $\src{\tau^\sigma} = \src{\retObs^{\safeta}}$.
        
        We first do a case distinction if the $\trgR{\Rsb}$ is empty or not, followed by a case distinction based on the top value of the $\Rsb$ and the stored return address $B = B' \cdot l'$:
        \begin{description}
            \item[$\trgR{\Rsb} = \trgR{\emptyset}$]
           
                 \begin{description}
                    \item[$\trgR{\SigmaR \bigspecarrowR{\tra{^{\taint}}} \SigmaR'}$]
                        
                        Thus, we can use \Cref{tr:ret} to derive $\OmegaR \trgR{\nsarrow{\pret^{\safeta}}} \OmegaR'$ with $\trgR{A'} = \trgR{A[pc \mapsto l]}$ where $\trgR{B} = \trgR{B' \cdot l}$.
    
                        Since $\trgR{\Rsb} = \trgR{\emptyset}$, we can now apply \Cref{tr:v5-retE} on $\SigmaR$ and are finished.

                    \item[$\src{\retObs^{\safeta}} \tracerel \trgR{\retObs^{\taintpc \glb \safeta}}$]
                        From $\srel$ we have $\trgR{\taintpc} = \trgR{\safeta}$ and thus $\trgR{\taintpc \glb \safeta} = \trgR{\safeta}$.

                    \item[$\src{\Omega'} \srelref_{\src{\OB{f''}}}\, \trgR{\SigmaR'}$]
                        Follows from $\src{A'} \rrel \trgR{A'}$ which follows from $\src{A} \rrel \trgR{A}$ and $\src{B} \brel \trgR{B}$.
        
                        Furthermore, $\src{B'} \brel \trgR{B'}$ which follows from $\src{B} \brel \trgR{B}$.
                \end{description}

            \item[$\trgR{\Rsb} = \trgR{\Rsb' \cdot l}$ and $\trgR{l} = \trgR{l'}$]
                 \begin{description}
                    \item[$\trgR{\SigmaR \bigspecarrowR{\tra{^{\taint}}} \SigmaR'}$]
                        
                        Thus, we can use \Cref{tr:ret} to derive $\OmegaR \trgR{\nsarrow{\pret^{\safeta}}} \OmegaR'$ with $\trgR{A'} = \trgR{A[pc \mapsto l]}$ where $\trgR{B} = \trgR{B' \cdot l}$.
    
                        Since $\trgR{\Rsb} = \trgR{\Rsb' \cdot l}$ and $l = l'$ we can now apply \Cref{tr:v5-retS} on $\SigmaR$ and are finished.

                    \item[$\src{\retObs^{\safeta}} \tracerel \trgR{\retObs^{\taintpc \glb \safeta}}$]
                        From $\srel$ we have $\trgR{\taintpc} = \trgR{\safeta}$ and thus $\trgR{\taintpc \glb \safeta} = \trgR{\safeta}$.

                    \item[$\src{\Omega'} \srelref_{\src{\OB{f''}}}\, \trgR{\SigmaR'}$]
                        Follows from $\src{A'} \rrel \trgR{A'}$ which follows from $\src{A} \rrel \trgR{A}$ and $\src{B} \brel \trgR{B}$.
        
                        Furthermore, $\src{B'} \brel \trgR{B'}$ which follows from $\src{B} \brel \trgR{B}$.
                \end{description}

            \item[$\trgR{\Rsb} = \trgR{\Rsb' \cdot l}$ and $\trgR{l} \neq \trgR{l'}$] 

                This is the case of speculation and where we need to account for multiple steps in the target.
                
                Importantly, all targets in the $\Rsb$ are the instruction numbers after a $\callC$ instruction. Note that the $\complfenceR{\cdot}$ compiler adds an $\trgR{\pbarrier}$ instruction after every $\callC$ instruction and actually does not modify the $\retC$ instruction:
                    \begin{align*}
                         \complfenceR{l : \pcall{f}} &= \trgR{
                                                        \begin{aligned}[t]
                                                                &
                                                                l : \pcall{f} 
                                                                    \\
                                                                    &\
                                                                   l' :  \pbarrier
                                                                   ~\text{\Cref{tr:v5-barr-spec}}
                                                        \end{aligned}
                                                        }
                    \end{align*}
           Thus, every entry in the RSB $\trgR{\Rsb}$ points to a barrier instruction.

             \begin{description}
                \item[$\trgR{\SigmaR \bigspecarrowR{\tra{^{\taint}}} \SigmaR'}$]
                    
                    Thus, we can use \Cref{tr:ret} to derive $\OmegaR \trgR{\nsarrow{\pret^{\safeta}}} \OmegaR'$ with $\trgR{A'} = \trgR{A[pc \mapsto l]}$ where $\trgR{B} = \trgR{B' \cdot l}$.
    
                    This is used in the step \Cref{tr:v5-spec}. 

                    Since we mispredict and return to the address stored in the $\RSB$ and we know that this points to a barrier instruction.
                    
                    Thus, the next instruction is the $\pbarrier$ and after \Cref{tr:v5-barr}, a rollback is triggered via \Cref{tr:v5-rollback}.
    
                    Thus, we have the execution  $\SigmaR \bigspecarrowR{\retObs^{\taintpc \glb \safeta} \cdot \rollbackObsR} \SigmaR'$, where $\SigmaR' = \trgR{\tup{\OB{C}, \OB{B'}, \tup{p, M, A'}, \bot, \safeta}}$

                \item[$\src{\retObs^{\safeta}} \tracerel \trgR{\retObs^{\taintpc \glb \safeta} \cdot \rollbackObsR }$]
                    From $\srel$ we have $\trgR{\taintpc} = \trgR{\safeta}$ and thus $\trgR{\taintpc \glb \safeta} = \trgR{\safeta}$.

                    Next, from \Cref{tr:ac-rel-rlb} we get $\src{\epsilon} \arel \rollbackObsR$  and are finished.
    
                \item[$\src{\Omega'} \srelref_{\src{\OB{f''}}}\, \trgR{\SigmaR'}$]
                    Follows from $\src{A'} \rrel \trgR{A'}$ which follows from $\src{A} \rrel \trgR{A}$ and $\src{B} \brel \trgR{B}$.
    
                    Furthermore, $\src{B'} \brel \trgR{B'}$ which follows from $\src{B} \brel \trgR{B}$.
            \end{description}

        \item[\Cref{tr:ret}]
        Thus, there is a context switch on which we do not speculate.
        This case is analogous to the corresponding case in \Thmref{thm:v4-fwd-sim-stm-lfence}.

        \item[\Cref{tr:retback}]
        This case cannot arise as we do not step to a compiled instructions $\complfenceR{i'}$ in the target.
    \end{description}
    
\end{proof}

\BREAK

\begin{proof}[Proof of \Thmref{thm:v5-bwd-sim-comp-steps-lfence}]\proofref{}{v5-bwd-sim-comp-steps-lfence}\hfill

Analogous to \Thmref{thm:v4-bwd-sim-comp-steps-lfence} using \Thmref{thm:v5-fwd-sim-stm-lfence}.
\end{proof}
 \section{$\Rv$: THE RETPOLINE COMPILER FOR \specr}\label{comp:v5-retpoline}

The modified retpoline of \citet{ret2spec} works in a similar way to the original retpoline \Cref{comp:v2-retpoline}.
Instead of targeting indirect jumps as in \Cref{comp:v2-retpoline}, here the $\pret$ instructions are the target:

\begin{center}
\begin{minipage}[b]{0.45\linewidth}
\begin{lstlisting}[basicstyle=\small,style=MUASMstyle,  escapechar=|, captionpos=t,]
ret
\end{lstlisting}
\end{minipage}
\hfill
\begin{minipage}[b]{0.45\linewidth}
\begin{lstlisting}[basicstyle=\small,style=MUASMstyle, escapechar=|, captionpos=t,]
l_0 : call Retpo   
l_1 : skip
l_2 : spbarr
l_3 : jmp l_1
Retpo:
l_4 : popret
l_5 : ret
\end{lstlisting}
\end{minipage}
\captionof{lstlisting}{Code on the left and the compiled program with the modified retpoline countermeasure on the right.}    
\end{center}
Here $\popretC$ is used to recover the original return address, by popping the return address of the call $Retpo$ from the stack. Thus, the last return will use the original return address.
Here, we only have one retpoline function for all returns, since there are no different register that could be used for indirect jumps as for \Cref{comp:v2-retpoline}.

\begin{align*}
    \compretpR{ M ; \OB{F} ; \OB{I}} &= \trgR{ \compretpR{M} ; \compretpR{\OB{F}} \cdot (Retpo, l'_4) ; \compretpR{\OB{I}}}
	\\\\
	\compretpR{ \srce} &= \trgR{\emptyset}
	\\
	\compretpR{ \OB{I}\cdot f } &= \trgR{ \compretpR{ \OB{I} }\cdot \compretpR{f} }
	\\\\
	\compretpR{ M ; -n\mapsto v : \unta} &= \trgR{\compretpR{M} ; -\compretpR{n} \mapsto \compretpR{v} : \unta}
	\\\\
	\compretpR{p_1;p_2} &= \trgR{\compretpR{p_1} ; \compretpR{p_2}}
	\\
        \compretpR{l : i} &= \trgR{\compretpR{l} : \compretpR{i}} ~~ \text{if $i \neq \retC$}
	\\\\
	\compretpR{\pskip} &= \trgR{\pskip} 
	\\
	\compretpR{\passign{x}{e}} &= \trgR{\passign{x}{e}}
	\\
        \compretpR{\pcondassign{x}{e}{e'}} &= \trgR{\pcondassign{x}{\compretpR{e}}{\compretpR{e'}}}
        \\
        \compretpR{\ploadprv{x}{e}} &= \trgR{\ploadprv{x}{\compretpR{e}}}
	\\
	\compretpR{\pload{x}{e}} &= \trgR{\pload{x}{\compretpR{e}}}
	\\
        \compretpR{\pstore{x}{e}} &= \trgR{\pstore{x}{\compretpR{e}}}
        \\
        \compretpR{\pstoreprv{x}{e}} &= \trgR{\pstoreprv{x}{\compretpR{e}}}
	\\
	\compretpR{\pjmp{e}} &= \trgR{\pjmp{\compretpR{e}}}
	\\
	\compretpR{\pjz{x}{l}} &= \trgR{\pjz{x}{l}}
	\\
	\compretpR{\pcall{f}} &= \trgR{\pcall{f}}
	\\
	\compretpR{l : \pret} &= \trgR{
                                         \begin{aligned}[t]
                                                    &
                                                    l': \pcall{Retpo}
                                                        \\
                                                        &\
                                                        l'_1: \pskip
                                                            \\
                                                            &\
                                                            l'_2 : \barrierKywd
                                                            \\
                                                            &\
                                                            l'_3 : \pjmp{l'_1}
                                                            \\
                                                            &\
                                                            l'_4 : \ppopret
                                                            \\
                                                            &\
                                                            l'_5: \pret
                                            \end{aligned}
    }
\end{align*}

We model the thunk-inline version of the countermeasure, which inlines the biggest part of the retpoline and adds one call.
We do not model the thunk version because that uses a jump instruction to jump to the retpoline construct. However, our jumps are not allowed to jump out of the function.

Notice that we add the Retpo function to $\OB{F}$ during compilation.

\subsection{The Proofs}

We need to be especially careful with the trace relation $\tracerel$. Look at the possible trace generated before and after compilation for a return instruction:

\begin{tikzpicture}
    \node[circle, draw] (start) {call};
    \node[draw, above=0.5cm of start]{Original Program};

    \node[circle, draw, right= of start] (0) {$\cdots$};
    \node[circle, draw, right= of 0] (1) {ret};
    \node[circle, draw, right= of 1] (2) {$\cdots$};

    \draw[->] (start) edge [bend left] node[above] {call! origfunc} (0) ;
    \draw[->] (0) edge [bend left] node[above] {$\tau$} (1) ;
    \draw[->] (1) edge [bend left] node[above] {ret} (2) ;

    \node[circle, draw, below=2cm of start](mod){call};
    \node[draw, above=0.5cm of mod]{Compiled Program};

    \node[circle, draw, right= of mod] (0m) {$\cdots$};
    \node[red, circle, draw, right= of 0m] (1m) {call};
    \node[circle, draw, right= of 1m] (3m) {$\cdots$};
    \node[circle, draw, right= of 3m] (4m) {popret};
    \node[circle, draw, right= of 4m] (5m) {ret};
    \node[circle, draw, right= of 5m] (6m) {$\cdots$};

    \draw[->] (mod) edge [bend left] node[above] {call! origfunc} (0m) ;
    \draw[->] (0m) edge [bend left] node[above] {$\tau$} (1m) ;
    \draw[->] (1m) edge [bend left] node[above] {call retpo} (3m);
    \draw[->] (3m) edge [bend left] node[above] {} (4m);
    \draw[->] (4m) edge [bend left] node[above] {} (5m);
    \draw[->] (5m) edge [bend left] node[above] {ret} (6m);
    
\end{tikzpicture}

Note that only components are compiled. This means that \textit{call retpo} is an internal function calls which are not visible on the trace. Thus, we can match the traces again.

\begin{center}
	
	\scalebox{0.7}{
\begin{forest}
for tree={
    align = left,
    font = \footnotesize,
    forked edge,
}
[\Thmref{thm:v5-all-retpo-comp-are-rdss}
        [\Thmref{thm:v5-retpo-comp-rdss}
                [\Thmref{thm:v5-corr-bt-retpo}
                    [\Thmref{thm:v5-ini-state-rel}
                        [\Cref{thm:v4-heap-rel-comp-lfence}]
                        [\Cref{thm:v4-heap-rel-bt-lfence}, name=heap-bt]
                    ]
                    [\Thmref{thm:v5-bwd-sim-retpo}
                            [\Thmref{thm:v5-bwd-sim-comp-steps-retpo}
                                    [\Thmref{thm:v5-fwd-sim-stm-retpo}
                                        [\Thmref{thm:v4-fwd-sim-exp-lfence}]
                                    ]
                            ]
                            [\Thmref{thm:v4-back-sim-bts-lfence}   
                                [\Thmref{thm:v4-back-sim-bte-lfence}, name=exp-bt]
                            ]
                    ]
                ]
        ]
]
\node (sniSource) at (-4,3) {\Thmref{thm:ss-sni-source}};
\node (rsscEq) [below=of sniSource] {\Thmref{thm:rdss-eq-rdsp}};
\draw[red, thick, dotted] ($(sniSource.north west)+(-0.3,0.3)$)  rectangle ($(rsscEq.south east)+(0.9,-0.6)$);
\end{forest}
	}
\end{center}

\begin{theorem}[$\Rv$: The retpoline compiler is \rdss]\label{thm:v5-retpo-comp-rdss}(\showproof{v5-retpo-comp-rdss})
	\begin{align*}
		\contractSpec{}{\Rv} \vdash \compretpR{\cdot} : \rssp
	\end{align*}
\end{theorem}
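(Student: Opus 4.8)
The plan is to mirror the proof of the $\Rv$ lfence compiler (\Thmref{thm:v5-lfence-comp-rdss}): reduce the stated preservation claim to the property-free, trace-based characterization of \Cref{def:rdss} and discharge the latter with the backtranslated context. Since that characterization and RSSP preservation are equivalent by \Thmref{thm:rdss-eq-rdsp}, it suffices to show that $\compretpR{\cdot}$ satisfies \Cref{def:rdss} with respect to $\contractSpec{}{\Rv}$, and then convert the result to $\rssp$ at the very end. Unfolding \Cref{def:rdss}, I would fix an arbitrary source program $\src{P}$, a target context $\ctxR{}$, and a target trace with $\trgR{\amTracevR{\ctxR{}\hole{\compretpR{P}}}{\tra{^{\taint}}}}$, and must exhibit a source context together with a $\rels$-related source trace.

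First I would instantiate the existentially quantified source context with the backtranslation $\backtrfencec{\ctxR{}}$, exactly as in the lfence case; recall that the backtranslation is shared across all $\Rv$ target compilers because their syntax coincides. The two remaining obligations---that $\backtrfencec{\ctxR{}}\hole{P}$ emits some source trace $\tras{^{\sigma}}$ and that $\tras{^{\sigma}} \rels \trgR{\tra{^{\taint}}}$---are then precisely the conclusion of the correctness-of-backtranslation lemma \Thmref{thm:v5-corr-bt-retpo}. With that lemma available from the dependency structure, the proof of the theorem itself is a one-line instantiation, identical in shape to the proof of \Thmref{thm:v5-lfence-comp-rdss}, followed by an appeal to \Thmref{thm:rdss-eq-rdsp} to recover the stated $\rssp$ claim.

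The substance of the argument therefore lives in \Thmref{thm:v5-corr-bt-retpo} and, beneath it, the forward-simulation lemma \Thmref{thm:v5-fwd-sim-stm-retpo}, which I expect to be the main obstacle. The delicate case is the compilation of $\pret$ into the trampoline $\pcall{Retpo};\ \pskip;\ \barrierKywd;\ \pjmp{l'_1};\ \ppopret;\ \pret$, where a single source $\pret$ step must be simulated by several target steps while emitting only safe speculative actions. I would argue that the $\pcall{Retpo}$ pushes the trampoline's own return address---the $\pskip$ at label $l'_1$, i.e.\ the head of the speculation-trapping loop---onto the RSB, so that when the trampoline's final $\pret$ is reached, $\Rv$ return-speculation (\Cref{tr:v5-spec}) lands on that loop head, whose guarding $\barrierKywd$ (\Cref{tr:v5-barr}) forces an immediate rollback (\Cref{tr:v5-rollback}); hence the only speculative observation produced is a $\rollbackObsR$ action, which is safe and thus $\arel$-related to the empty source action. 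Architecturally, $\ppopret$ discards the $Retpo$ frame so that the final $\pret$ returns to the original target, faithfully reproducing the source $\pret$ step, and the internal call to $Retpo$ stays invisible on the trace, so the observable projections on the two sides agree. Establishing these facts at single-source-step granularity is the intricate part, but it is confined to \Thmref{thm:v5-fwd-sim-stm-retpo}; once that lemma is in place, the backward simulation \Thmref{thm:v5-bwd-sim-retpo} follows by the usual determinism argument, \Thmref{thm:v5-corr-bt-retpo} follows together with \Thmref{thm:v5-ini-state-rel}, and the present theorem is immediate.
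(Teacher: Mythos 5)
Your proposal matches the paper's proof: the paper likewise discharges the theorem by instantiating the source context with the backtranslation $\backtrfencec{\cdot}$ and appealing to \Thmref{thm:v5-corr-bt-retpo}, with the real work pushed into \Thmref{thm:v5-fwd-sim-stm-retpo}, whose speculation case (RSB top $\neq$ stored return address) unfolds the trampoline exactly as you describe --- the $\pcall{Retpo}$ puts $l'_1$ on the RSB, the final $\pret$ mispredicts into the $\pskip$/$\barrierKywd$ loop, and the only speculative observation is a $\rollbackObsR$ related to $\src{\epsilon}$ via \Cref{tr:ac-rel-rlb}. The extra detour through \Thmref{thm:rdss-eq-rdsp} is harmless bookkeeping and does not change the argument.
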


\begin{theorem}[$\Rv$: All retpoline-compiled programs are \rss)]\label{thm:v5-all-retpo-comp-are-rdss}(\showproof{v5-all-retpo-comp-are-rdss})
	\begin{align*}
		\forall\src{P}\ldotp \contractSpec{}{\Rv} \vdash\compretpR{P} : \rss
	\end{align*}
\end{theorem}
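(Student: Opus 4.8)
The plan is to obtain this statement as an immediate corollary of three previously established facts, following exactly the reasoning skeleton already used for the analogous compilers (compare \Thmref{thm:v4-all-lfence-comp-are-rdss} and \Thmref{thm:v5-all-lfence-comp-are-rdss}). The key observation is that all the genuine work sits upstream, in the compiler-correctness result \Thmref{thm:v5-retpo-comp-rdss}; the present statement merely combines that with the triviality of source-level safety. No new simulation or backtranslation argument is needed here.

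Concretely, I would proceed in four short steps. First, I would invoke \Thmref{thm:ss-sni-source} to record that every source program $\src{P}$ satisfies $\contract{}{NS} \vdash \src{P} : \rss$; this holds because the source language has no speculation, so every emitted action is tainted $\safeta$. Second, I would take \Thmref{thm:v5-retpo-comp-rdss}, which states $\contractSpec{}{\Rv} \vdash \compretpR{\cdot} : \rssp$ (equivalently \rdss, via \Thmref{thm:rdss-eq-rdsp}). Third, I would unfold the definition of \rssp (\Cref{def:rdsspc}), obtaining the implication: for every $\src{P}$, if $\contract{}{NS} \vdash \src{P} : \rss$ then $\contractSpec{}{\Rv} \vdash \compretpR{\src{P}} : \rss$. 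Fourth, I would instantiate this implication with the universally quantified premise from the first step, discharging the hypothesis and concluding $\forall \src{P}.\ \contractSpec{}{\Rv} \vdash \compretpR{\src{P}} : \rss$.

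The hard part is not in this composition but entirely in \Thmref{thm:v5-retpo-comp-rdss}, which rests on the generalised backward simulation \Thmref{thm:v5-bwd-sim-retpo} and, underneath it (using determinism of the target semantics), on the forward simulation for compiled statements \Thmref{thm:v5-fwd-sim-stm-retpo}. The delicate case there is the compilation of $\pret$ into the call/$\ppopret$/$\pret$ trampoline: one must show that an RSB misprediction lands on the injected $\pskip$; $\pbarrier$; $\pjmp$ loop so that the speculative transaction is trapped and emits only safe observations (terminated by a rollback after the barrier), while the architectural path recovers the correct return address through $\ppopret$. Once that simulation is in hand, the trace relation $\rels$ guarantees that every compiled program produces no unsafe action, and so the present corollary follows with essentially no further reasoning.
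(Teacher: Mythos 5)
Your proposal is correct and follows exactly the paper's own proof: invoke \Thmref{thm:ss-sni-source} for source-level \rss{} under $\contract{}{NS}$, take $\contractSpec{}{\Rv} \vdash \compretpR{\cdot} : \rdss$ from \Thmref{thm:v5-retpo-comp-rdss}, convert to \rdssp{} via \Thmref{thm:rdss-eq-rdsp}, and instantiate the resulting implication. Your remarks about where the real work lies (the backward/forward simulation and the trampoline case) are accurate but orthogonal to this corollary.
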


The relation is the same as the one in \Cref{comp:v5-fence}.

\begin{theorem}[$\Rv$: Correctness of the Backtranslation]\label{thm:v5-corr-bt-retpo}(\showproof{v5-corr-bt-retpo})
	\begin{align*}
		\text{ if } 
			&\
			\trgR{ \amTracevR{\ctxR{}\hole{\compretpR{P}}}{\tra{^{\taint}}} }
		\\
		\text{ then }
			&
			\src{ \Trace{\backtrfencec{\ctxR{}}\hole{P}}{ \tras{^{\sigma}}} }
		\\
		\text{ and }
			&\
			\tras{^{\sigma}} \rels \trgR{\tra{^{\taint}}}
	\end{align*}
\end{theorem}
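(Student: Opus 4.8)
The plan is to prove \Thmref{thm:v5-corr-bt-retpo} exactly as its siblings \Thmref{thm:v4-corr-bt-lfence} and \Thmref{thm:v5-corr-bt-lfence}, namely as an immediate corollary of two ingredients: that initial states are related, and a generalized backward simulation for $\compretpR{\cdot}$. Concretely, I would first unfold the hypothesis $\trgR{\amTracevR{\ctxR{}\hole{\compretpR{P}}}{\tra{^{\taint}}}}$ using \Cref{tr:v5-trace}: this yields a terminal-ending state with $\finTypef{\trgR{\SigmaR'}}$ such that $\trgR{\initFuncR{(\ctxR{}\hole{\compretpR{P}})} \bigspecarrowR{\tra{^{\taint}}} \SigmaR'}$. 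By \Thmref{thm:v5-ini-state-rel}, instantiated at $\src{\OB{f}} = \dom{\src{P}.\src{F}}$, the source initial state $\initFunc{\backtrfencec{\ctxR{}}\hole{P}}$ is $\srel$-related to the target initial state. Feeding the target execution together with this relatedness into the backward-simulation lemma then produces a source run $\src{\Trace{\backtrfencec{\ctxR{}}\hole{P}}{\tras{^{\sigma}}}}$ with $\tras{^{\sigma}} \rels \trgR{\tra{^{\taint}}}$, which is precisely the conclusion.

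The substantive step is therefore the generalized backward-simulation lemma for $\compretpR{\cdot}$, the analogue of \Thmref{thm:v5-bwd-sim-lfence}, which I would establish next. As in the lfence developments, I would obtain backward simulation from forward simulation using determinism of $\specarrowR{}$, proceeding by induction on $\bigspecarrowR{}$. The inductive step splits on whether the current $\pc$ lies in compiled component code or in backtranslated context code, and, at context switches, on whether we perform a cross-component $\pcall{}$ or $\pret$. The context cases are reusable verbatim: since the backtranslation $\backtrfencec{\cdot}$ and the target syntax are shared across the $\Rv$ compilers, backtranslated steps are dispatched by \Thmref{thm:v4-back-sim-bts-lfence}, and the cross-boundary call/return cases mirror those of \Thmref{thm:v5-bwd-sim-lfence} (in particular, cross-component returns do not speculate, so no trampoline reasoning arises there).

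The main obstacle is the forward-simulation case for a compiled $\pret$, where a single source return step must be matched by the entire trampoline $\trgR{\pcall{Retpo}}; \ldots; \trgR{\ppopret}; \trgR{\pret}$ together with the speculation it traps. The key facts are: the inserted $\trgR{\pcall{Retpo}}$ is an internal call and hence invisible on the trace (\Cref{tr:v5-call}); the $\trgR{\ppopret}$ restores the architectural return address via \Cref{tr:popret} while leaving the return-stack-buffer top equal to the address pushed by $\trgR{\pcall{Retpo}}$; and the final $\trgR{\pret}$ therefore mispredicts through \Cref{tr:v5-spec}, landing on the trampoline body $\trgR{\pskip}; \trgR{\pbarrier}$, whose barrier forces an immediate rollback (\Cref{tr:v5-barr} followed by \Cref{tr:v5-rollback}). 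The resulting target subtrace is the safe return observation followed by $\rollbackObsR$, with the speculative window carrying only $\src{\epsilon}$/$\pc$ observations; this is $\rels$-related to the single source return action because the rollback relates to $\src{\epsilon}$ by \Cref{tr:ac-rel-rlb} and the trapped window exposes no data-dependent leak. I would additionally dispatch the two non-speculating sub-cases (empty buffer via \Cref{tr:v5-retE}, matching buffer top via \Cref{tr:v5-retS}) analogously, and then thread the state relation through the $\trgR{\ppopret}$/$\trgR{\pret}$ pair to re-establish $\srel$ at the state reached after the trampoline. Once this forward-simulation lemma is in place, backward simulation, and hence \Thmref{thm:v5-corr-bt-retpo}, follows mechanically.
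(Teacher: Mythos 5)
Your proposal matches the paper's proof: the theorem is discharged exactly as a corollary of \Thmref{thm:v5-ini-state-rel} and the generalized backward simulation \Thmref{thm:v5-bwd-sim-retpo}, with the latter reduced to forward simulation plus determinism, context steps dispatched via \Thmref{thm:v4-back-sim-bts-lfence}, and the substantive work concentrated in the compiled-$\pret$ case of \Thmref{thm:v5-fwd-sim-stm-retpo}, where the trampoline's mispredicted return lands on the $\pskip$/$\pbarrier$ loop and is rolled back, yielding a trace related to the single source return by \Cref{tr:ac-rel-rlb}. No gaps; this is essentially the same argument.
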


\begin{theorem}[$\Rv$: Generalised Backward Simulation]\label{thm:v5-bwd-sim-retpo}(\showproof{v5-bwd-sim-retpo})
	\begin{align*}
		\text{ if }
			&\
			\ffun{\src{A(\pc)}}\in\src{\OB{f''}} \text{ then } \trgR{A(\pc)} : \trgR{i} = \compretpR{l : i} \text{ or } (\compretpR{A(\pc) : i} = \trgR{A(\pc)} : \trgR{i} ; \trgR{p_c'}) 
			\\
			\text{ else } 
			&\ 
			\src{A(\pc)} : \src{i} = \backtrfencec{\trgR{A(\pc)}} : \backtrfencec{\trgR{i}} 
			\\
		\text{ and}
			&\
			\text{ if } \ffun{\src{A'(\pc)}}\in\src{\OB{f''}} \text{ then } \trgR{A'(\pc)} : \trgR{i'} = \compretpR{A'(\pc) : i'}
			\\
			\text{ or }
			&\  
			(\compretpR{A'(\pc) : i'} = \trgR{A'(\pc)} : \trgR{i'} ; \trgR{p_c''})
			\text{ else } \src{A'(\pc)} : \src{i'} = \backtrfencec{\trgR{A'(\pc)}} : \backtrfencec{\trgR{i'}}
			\\
		\text{ and }
			&\
			\SigmaR = \trgR{(C, \OB{B}, \tup{p, M, A}, \bot,\safeta)}
		\\
		\text{ and }
			&\
			\SigmaRt{'}=\trgR{(C, \OB{B'}, \tup{p, M', A'},\bot,\safeta)}
		\\
		\text{ and }
			&\
			\trgR{ \SigmaR \bigspecarrowR{\tra{^{\taint}}} \SigmaR' }
		\\
		\text{ and }
			&\
			\src{\Omega} \srel_{\src{\OB{f''}}} \trgR{\SigmaR}
		\\
        \text{ and }
			&\
			\trgR{p(A(\pc))} = \trgR{i}  \text{ and } \trgR{p(A'(\pc))} = \trgR{i'}
		\\
        \text{ and }
			&\
			\src{p(A(\pc))} =  \src{i} \text{ and } \src{p(A'(\pc))} = \src{i'}
        \\
		\text{ then }
			&\
			\src{\Omega}=\src{C, \OB{B}, \tup{p , M, A}  \nsbigarrow{\tras{^\sigma}} C, \OB{B'}, \tup{p , M', A'} }=\src{\Omega'}
		\\
		\text{ and }
			&\
			\src{\tras{^\sigma}} \tracerel \trgR{\tra{^{\taint}}}
		\\
		\text{ and }
			&\
			\src{\Omega'} \srelref_{\src{\OB{f''}}}\, \trgR{\SigmaR{'}}
	\end{align*}
\end{theorem}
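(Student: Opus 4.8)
The plan is to mirror the structure of the backward-simulation proof for the $\Rv$ fence compiler (\Thmref{thm:v5-bwd-sim-lfence}), replacing its appeals to the fence-specific compiled-step lemma with the retpoline analogue \Thmref{thm:v5-bwd-sim-comp-steps-retpo}, while reusing the shared context-step lemma \Thmref{thm:v4-back-sim-bts-lfence} verbatim (the backtranslation is common to all our $\Rv$-targeting compilers). Concretely, I would proceed by induction on the derivation of the target big step $\trgR{\SigmaR \bigspecarrowR{\tra{^{\taint}}} \SigmaR'}$. The base case \Cref{tr:v5-reflect} is immediate from the hypothesis $\src{\Omega} \srel_{\src{\OB{f''}}} \trgR{\SigmaR}$. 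For the two inductive cases \Cref{tr:v5-single} and \Cref{tr:v5-silent}, I would apply the induction hypothesis to the prefix, obtaining a matching source big step $\src{\Omega \nsbigarrow{} \Omega''}$ with related intermediate states, and then case-split---exactly as in \Thmref{thm:v4-bwd-sim-lfence}---on whether the currently executing function and the successor function lie in the compiled component $\src{\OB{f''}}$ or in the attacker. Component-internal and attacker-internal single steps are discharged by \Thmref{thm:v5-bwd-sim-comp-steps-retpo} and \Thmref{thm:v4-back-sim-bts-lfence}, respectively, and the four cross-boundary transitions (call and return, in each direction) are handled explicitly.

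The point that makes the trace-matching go through is that the retpoline trampoline is installed entirely inside the compiled component: the inserted $\pcall{Retpo}$, $\ppopret$, and inner loop $\pskip;\ \barrierKywd;\ \pjmp{}$ are all internal to the component, so by the single-step semantics they emit no visible call/callback action, and the only externally observable action generated by the compiled return sequence is the final $\pret$ (or the matching cross-boundary return/returnback), which coincides with the action of the source $\pret$ it originated from. Thus the multi-step target trampoline collapses, at the level of traces, to the single source return step, and the resulting traces remain related. For the cross-boundary cases I would reuse, almost verbatim, the corresponding arguments from \Thmref{thm:v4-bwd-sim-lfence}, since the treatment of the stack relation $\brel$, the function-map invariance (the compiler $\compretpR{\cdot}$ does not rename existing functions, only adds $Retpo$), and the reconstruction of $\srel$ after a context switch are identical.

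The main obstacle is concentrated in the compiled-step lemma \Thmref{thm:v5-bwd-sim-comp-steps-retpo}, which I would derive from the forward-simulation lemma \Thmref{thm:v5-fwd-sim-stm-retpo} together with determinism of $\specarrowR{}$, using the same contradiction argument that underlies \Thmref{thm:v4-bwd-sim-comp-steps-lfence}. The delicate subcase there is the compiled return: a single source $\pret$ corresponds to several target steps, one of which is a $\pret$ whose RSB top (the trampoline-loop label $l'_1$ pushed by the internal $\pcall{Retpo}$) differs from the genuine return target recovered by $\ppopret$. This mismatch fires $\Rv$-speculation via \Cref{tr:v5-spec}, speculatively landing execution at $l'_1$. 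I would argue that this speculative transaction immediately hits the $\barrierKywd$ at $l'_2$ (\Cref{tr:v5-barr}), which zeroes the speculation window and triggers a rollback (\Cref{tr:v5-rollback}); hence the only extra observations produced are a $\rollbackObsR$ together with safe microarchitectural actions, each of which relates to the empty source observation through \Cref{tr:ac-rel-rlb}. Establishing precisely that the RSB top equals the loop label and that the barrier truncates the speculative window before any unsafe action is emitted is the crux; everything else follows the fence-compiler template.
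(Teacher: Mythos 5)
Your proposal follows essentially the same route as the paper's proof: induction on the target big-step derivation, with the reflexivity case discharged by the state relation, internal component steps delegated to the retpoline compiled-step lemma (itself obtained from forward simulation plus determinism by contradiction), attacker steps to the shared backtranslation lemma, and cross-boundary calls/returns handled as in the fence-compiler proof. Your analysis of the crux — the RSB top $\trgR{l'_1}$ mismatching the address recovered by $\ppopret$, firing \Cref{tr:v5-spec} into the trampoline loop where the barrier forces a rollback related to the empty source observation — is exactly how the paper resolves it inside \Cref{thm:v5-fwd-sim-stm-retpo}.
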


\begin{theorem}[$\Rv$: Backward Simulation for Compiled Steps]\label{thm:v5-bwd-sim-comp-steps-retpo}(\showproof{v5-bwd-sim-comp-steps-retpo})
	\begin{align*}
		\text{ if }
			&\
			\trgR{\SigmaR}=\trgR{(C, \OB{B}, \tup{p, M, A}, \bot, \safeta) }
		\\
		\text{ and }
			&\
			\trgR{\SigmaR'}=\trgR{(C, \OB{B'}, \tup{p, M', A'}, \bot, \safeta)}
		\\
		\text{ and }
			&\
			\trgR{\SigmaR \bigspecarrowR{\tra{^{\taint}}} \SigmaR'}
		\\
		\text{ and }
			&\
			\src{\Omega} \srelref_{\src{\OB{f''}}}\, \SigmaR
		\\
        \text{ and }
			&\
			\trgR{p(A(\pc))} = \compretpR{l} : \compretpR{i} \text{ and } \trgR{p(A'(\pc))} = \compretpR{l'} : \compretpR{i'}
		\\
         \text{ and }
            &\
            \src{p(A(\pc))} =  \src{i} \text{ and } \src{p(A'(\pc))} =  \src{i'}
            \\
        \text{ and }
			&\
			\trgR{A(\pc) : i} = \compretpR{A(\pc) : i}   \text{ or } (\compretpR{A(\pc) : i} = \trgR{A(\pc)} : \trgR{i} ; \trgR{p_c'})
            \\
            \text{ and }
                &\
                \trgR{A'(\pc) : i'} = \compretpR{A'(\pc) : i'}  \text{ or } (\compretpR{A'(\pc) : i'} = \trgR{A'(\pc)} : \trgR{i'} ; \trgR{p_c''})
		\\
		\text{ then }
			&\
			\src{\Omega}=\src{C, \OB{B}, \tup{p, M, A} \nsarrow{\tau^\sigma} C, \OB{B'}, \tup{p, M', A'}}= \src{\Omega'}
        \\
		\text{ and }
			&\
			\src{\tau^\sigma} \tracerel \trgR{\tra{^{\taint}}} \qquad \text{( using the trace relation!)}
		\\
		\text{ and }
			&\
			\src{\Omega'} \srelref_{\src{\OB{f''}}}\, \SigmaR'
	\end{align*}
\end{theorem}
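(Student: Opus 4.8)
The plan is to obtain this backward simulation as a direct corollary of the corresponding forward simulation, exactly as \Cref{thm:v4-bwd-sim-comp-steps-lfence} is derived from \Cref{thm:v4-fwd-sim-stm-lfence} for the $\Rv$ fence compiler. The only two ingredients needed are the forward simulation lemma for retpoline-compiled statements, \Cref{thm:v5-fwd-sim-stm-retpo}, and the determinism of the target speculative semantics $\Rv$ (the same fact already invoked to close \Cref{thm:v5-ss-impl-sni}), which lifts from the small step $\specarrowR{ }$ to the big step $\bigspecarrowR{ }$ once an endpoint is pinned. Keeping this structure means that all the genuinely retpoline-specific reasoning is discharged inside the forward simulation lemma, and the present lemma stays short.

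Concretely, I would first observe that, since $\src{p(A(\pc))} = \src{i}$ is a non-terminal source instruction and $\src{\Omega} \srelref_{\src{\OB{f''}}} \trgR{\SigmaR}$, the source configuration admits a (unique) step $\src{\Omega \nsarrow{\tau^{\sigma'}} \Omega''}$ for some label $\src{\tau^{\sigma'}}$ and some $\src{\Omega''} = \src{C, \OB{B''}, \tup{p, M'', A''}}$, where source determinism together with the hypotheses on the program counters forces $\src{A''(\pc)}$ to point at $\src{i'}$. I would then feed this source step, the state relation, and the two compiled-instruction hypotheses into \Cref{thm:v5-fwd-sim-stm-retpo}, obtaining a target big step $\trgR{\SigmaR \bigspecarrowR{\tra{^{\taint'}}} \SigmaR''}$ together with $\src{\tau^{\sigma'}} \tracerel \trgR{\tra{^{\taint'}}}$ and $\src{\Omega''} \srelref_{\src{\OB{f''}}} \trgR{\SigmaR''}$, where $\trgR{\SigmaR''}$ is a non-speculative state (window $\bot$, taint $\safeta$) sitting at the compiled image of $\src{i'}$, i.e.\ at program counter $A'(\pc)$.

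The crux is then to match this forward-simulation run against the big step $\trgR{\SigmaR \bigspecarrowR{\tra{^{\taint}}} \SigmaR'}$ supplied in the hypotheses. Both runs start from the identical state $\trgR{\SigmaR}$ and both terminate at a non-speculative checkpoint whose program counter is $A'(\pc)$; by determinism of $\specarrowR{ }$ lifted to $\bigspecarrowR{ }$, the two runs coincide, so $\trgR{\SigmaR'} = \trgR{\SigmaR''}$ and $\trgR{\tra{^{\taint}}} = \trgR{\tra{^{\taint'}}}$. Substituting back yields all three conclusions simultaneously: the source step $\src{\Omega \nsarrow{\tau^{\sigma'}} \Omega'}$ with $\src{\Omega'} \equiv \src{\Omega''}$, the trace relation $\src{\tau^{\sigma'}} \tracerel \trgR{\tra{^{\taint}}}$, and the state relation $\src{\Omega'} \srelref_{\src{\OB{f''}}} \trgR{\SigmaR'}$. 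As in the fence proof, I would phrase this final step by contradiction (assume either the trace- or the state-relation conclusion fails, produce the forward-simulation witness, and let determinism force a contradiction), which sidesteps having to reason about endpoints directly.

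The main obstacle I anticipate is precisely the endpoint-matching argument, which is where the retpoline's peculiarity bites: a single source $\pret$ is compiled into the whole return trampoline $\pcall{\mathit{Retpo}};\, \pskip;\, \pbarrier;\, \pjmp{l_1};\, \ppopret;\, \pret$, so the target big step $\trgR{\tra{^{\taint}}}$ contains a genuine speculative transaction — the RSB mispredicts into the trampoline body and loops at the $\pbarrier$ until rollback, while the architectural path uses $\ppopret$ to restore the original return address and returns there. I must therefore be sure that the forward-simulation big step and the hypothesis big step agree not merely on their final program counter but on the entire (speculative and non-speculative) trace they emit; this is exactly what determinism of $\specarrowR{ }$ delivers, together with the fact that the trampoline always traps and rolls back its speculation so that the continuation at $A'(\pc)$ is forced rather than chosen. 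The substantive claim that the nested speculation emits only safe observations, which the trace relation absorbs via \Cref{tr:tr-rel-rollb} and \Cref{tr:tr-rel-safe-h}, lives entirely inside \Cref{thm:v5-fwd-sim-stm-retpo}, so it does not need to be reproved here.
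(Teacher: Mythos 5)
Your proposal is correct and follows essentially the same route as the paper: the paper's proof is literally ``Analogous to \Cref{thm:v4-bwd-sim-comp-steps-lfence} using \Cref{thm:v5-fwd-sim-stm-retpo}'', i.e.\ backward simulation obtained from the forward simulation lemma plus determinism of the target semantics, phrased as a proof by contradiction. Your additional discussion of why the trampoline's speculative transaction is handled entirely inside the forward simulation lemma matches the paper's intent, so nothing further is needed here.
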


\begin{lemma}[$\Rv$: Forward Simulation for Compiled Statements]\label{thm:v5-fwd-sim-stm-retpo}(\showproof{v5-fwd-sim-stm-retpo})
	\begin{align*}
		\text{ if }
			&\
			\src{\Omega}=\src{C, \OB{B}, \tup{p, M, A} \nsarrow{\tau^\sigma} C, \OB{B'}, \tup{p, M', A'} } =\src{\Omega'}
		\\
		\text{ and }
			&\
			\src{\Omega} \srelref_{\src{\OB{f''}}} \trgR{\SigmaR}
		\\
        \text{ and }
			&\
			\src{p(A(\pc))} = \src{i} \text{ and } \src{p(A'(\pc))} = \src{i'}
		\\
		\text{ and }
			&\
			\trgR{\SigmaR}= \trgR{C, \OB{B}, \tup{p, M , A}, \bot, \safeta}, 
		\\
		\text{ and }
			&\
			\trgR{\SigmaR'}= \trgR{C, \OB{B}, \tup{p, M' , A'}, \bot, \safeta}
		\\
        \text{ and }
			&\
			\trgR{A(\pc) : i} = \compretpR{A(\pc) : i}   \text{ or } (\compretpR{A(\pc) : i} = \trgR{A(\pc)} : \trgR{i} ; \trgR{p_c'})
            \\
            \text{ and }
                &\
                \trgR{A'(\pc) : i'} = \compretpR{A'(\pc) : i'}  \text{ or } (\compretpR{A'(\pc) : i'} = \trgR{A'(\pc)} : \trgR{i'} ; \trgR{p_c''})
		\\
		\text{ then }
			&\
			\trgR{\SigmaR \bigspecarrowR{\tra{^{\taint}}} \SigmaR'}
		\\
		\text{ and }
			&\
			\src{\tau^\sigma} \tracerel \trgR{\tra{^{\taint}}} \qquad \text{( using the trace relation!)}
		\\
		\text{ and }
			&\
			\src{\Omega'} \srelref_{\src{\OB{f''}}}\ \trgR{\SigmaR'}
	\end{align*}
\end{lemma}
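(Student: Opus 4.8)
The plan is to proceed by case analysis on the derivation of the source step $\src{\Omega} \nsarrow{\tau^\sigma} \src{\Omega'}$, mirroring the structure of the fence-compiler lemmas \Thmref{thm:v5-fwd-sim-stm-lfence} and \Thmref{thm:v4-fwd-sim-stm-lfence}. Since $\src{\Omega} \srel_{\src{\OB{f''}}} \trgR{\SigmaR}$ forces $\src{A(\pc)} = \trgR{A(\pc)}$, source and target agree on the current label, and for every instruction other than $\pret$ the compiler $\compretpR{\cdot}$ acts homomorphically (it preserves $\pskip$, $\passign{}{}$, $\pload{}{}$, $\pstore{}{}$, $\pjz{}{}$, non-indirect $\pjmp{}$, and $\pcall{}$ unchanged). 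Each of these cases is discharged exactly as in the $\Rv$ fence proof: I would apply \Thmref{thm:v4-fwd-sim-exp-lfence} to transport the expression evaluation, pick the single corresponding target rule (e.g.\ \Cref{tr:v5-nospec-act}, \Cref{tr:v5-call}, or \Cref{tr:v5-call-att} for calls, which additionally push onto the RSB), and re-establish $\src{\Omega'} \srel \trgR{\SigmaR'}$ from $\src{A}\rrel\trgR{A}$, $\src{M}\hrel\trgR{M}$, and $\src{B}\brel\trgR{B}$, with the emitted label related by the action relation.

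The only genuinely new case is $\pret$, where a single source return must be matched by a full execution of the inlined trampoline $\trgR{l'\!:\pcall{Retpo};\, l'_1\!:\pskip;\, l'_2\!:\pbarrier;\, l'_3\!:\pjmp{l'_1};\, l'_4\!:\ppopret;\, l'_5\!:\pret}$. I would execute this sequence step by step in $\semr$: the internal $\pcall{Retpo}$ (silent, since $Retpo$ is a compiler-added component function) pushes $l'_1$ onto both the frame stack $B$ and the return-stack buffer $\trgR{\Rsb}$ via \Cref{tr:v5-call}; the subsequent $\ppopret$ at $l'_4$ (\Cref{tr:popret}) removes that $l'_1$ from $B$, so that the architectural top of $B$ is again the original source return address $l$; and the final $\retC$ at $l'_5$ then returns architecturally to $l$, matching the source $\retObs$ (or $\epsilon$, in the internal case). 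The frame-stack and RSB bookkeeping is the first delicate point: I must verify that after the pop the architectural return target is exactly the source's, which relies on $\src{B}\brel\trgR{B}$ and the invariant that the trampoline frame was the only thing added.

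The main obstacle is the speculation triggered by that final $\retC$. Because the RSB top is $l'_1$ while the architectural frame top is $l$ (and these differ), \Cref{tr:v5-spec} fires: the always-mispredict strategy first explores the speculative branch returning to $l'_1$, which executes $\pskip$ (producing no observation) and then hits $\pbarrier$ at $l'_2$, forcing a rollback via \Cref{tr:v5-barr} and \Cref{tr:v5-rollback}. The crux is arguing that this entire speculative excursion is \emph{trapped} by the barrier and emits only a $\rollbackObsR$ plus safe observations, so that the resulting target trace $\retObs^{\safeta}\cdot\rollbackObsR$ (possibly with the rollback interleaved) is related to the bare source $\retObs^{\safeta}$ through the rollback-absorbing and safe-absorbing clauses of the trace relation (\Cref{tr:tr-rel-rollb}, \Cref{tr:tr-rel-safe-a}, \Cref{tr:tr-rel-safe-h}). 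I expect this to require a small auxiliary argument that the label $l'_1$ lands on the skip/barrier loop by construction (the labels are chosen fresh and the barrier is reachable in one step), ensuring no data-dependent observation escapes. Once the trace relation and $\srel$ are re-established for the post-return state—using $\src{B}\brel\trgR{B}$ to recover $\src{l}\vrel\trgR{l}$ for the updated program counter and the unchanged memory/register relations—the case, and hence the lemma, closes.
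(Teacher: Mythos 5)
Your proposal is correct and follows essentially the same route as the paper's proof: homomorphic cases are discharged exactly as in the $\Rv$ fence lemma, and the $\pret$ case walks the trampoline in the execution order $l', l'_4, l'_5, l'_1, l'_2, l'_3$, with the RSB/frame-stack bookkeeping (call pushes $l'_1$, $\ppopret$ restores $B$), the misprediction to $l'_1$ trapped by the barrier, and the trace $\retObs^{\safeta}\cdot\rollbackObsR$ related to the source $\retObs^{\safeta}$ via \Cref{tr:ac-rel-rlb}. The only nit is that $\compretpR{\cdot}$ compiles \emph{all} jumps (not just direct ones) homomorphically, but this does not affect the argument since $\semr$ does not speculate on jumps.
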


 \begin{proof}[Proof of \Thmref{thm:v5-retpo-comp-rdss}]\proofref{}{v5-retpo-comp-rdss}\hfill

    Instantiate $\src{A}$ with $\backtrfencec{\trgR{A}}$.
    This holds by \Thmref{thm:v5-corr-bt-retpo}.
\end{proof}

\BREAK

\begin{proof}[Proof of \Thmref{thm:v5-all-retpo-comp-are-rdss}]\proofref{}{v5-all-retpo-comp-are-rdss}\hfill

    By \Thmref{thm:ss-sni-source} we know all source languages are SS: $\text{HPS:} \forall\src{P}\contract{}{NS}\ldotp\vdash P : \rss(\SR)$.
    
    Since \Thmref{thm:v5-retpo-comp-rdss} we have HPC: $\contractSpec{}{\Rv}\vdash \compretpR{\cdot} : \rdss$.

    By \Thmref{thm:rdss-eq-rdsp}  with HPC we have $\text{HPP} \contractSpec{}{\Rv}\vdash \compretpR{\cdot} : \rdssp$

    Unfolding \Thmref{def:rdsspc} of HPP we get 
    $\forall\src{P}\ldotp \text{ if } \contract{}{NS}\vdash\src{P} : \rss \text{ then } \contractSpec{}{\Rv} \vdash\comp{\src{P}} : \rss$. 
  
    Now, we instantiate it with HPS and can conclude that $\forall\src{P}\ldotp \contractSpec{}{\Rv}\vdash\compretpR{P} : \rss$.
    
\end{proof}

\BREAK 

\begin{proof}[Proof of \Thmref{thm:v5-corr-bt-retpo}]\proofref{}{v5-corr-bt-retpo}\hfill

This holds by \Thmref{thm:v5-ini-state-rel} and \Thmref{thm:v5-bwd-sim-retpo}.
\end{proof}

\BREAK

\begin{proof}[Proof of \Thmref{thm:v5-bwd-sim-retpo}]\proofref{}{v5-bwd-sim-retpo}\hfill

We proceed by induction on $\SigmaR \bigspecarrowR{\tra{^{\taint}}} \SigmaR '$:
\begin{description}
    \item[\Cref{tr:v5-reflect}]
    Trivial.
    \item[\Cref{tr:v5-single}]
    We have 
    \begin{align*}
        \trgR{\SigmaR \bigspecarrowR{\OB{\tau^{\taint}}} \SigmaR ''} \\
        \trgR{\SigmaR '' \specarrowR{\tau^{\taint}} \SigmaR '} \\ 
        \trgR{\SigmaR ''} = \trgR{\OB{C''}, \OB{B''}, \tup{p ,M'', A''}, \bot, \safeta}
    \end{align*}

    Applying IH on $\trgR{\SigmaR  \bigspecarrowR{\OB{\tau^{\taint''}}} \SigmaR ''}$ we get: 
    \begin{enumerate}
        \item $\src{\Omega  \nsbigarrow{\tras{^{\sigma''}}} \Omega''}$ and
        \item $\src{\tras{^{\sigma''}}} \tracerel \trgR{\tra{^{\taint''}}}$ and
        \item $\src{\Omega''} \srelref_{\src{\OB{f''}}}\, \trgR{\SigmaR {''}}$
    \end{enumerate}
    Let $\ffun{\trgR{l''}} = \trgR{f''}$ and let $\ffun{\trgR{l'}} = \trgR{f'}$ (where $\trgR{\SigmaR ' . A(\pc)} = \trgR{l'}$).
    We proceed by case analysis on $\src{f''} \in \src{\OB{f''}}$:
    \begin{description}
        \item[$\src{f''} \in \src{\OB{f''}}$ (\textbf{in the compiled component})]
            By case analysis on $\src{f'} \in \src{\OB{f''}}$:
            \begin{description}
                \item[$\src{f'} \in \src{\OB{f''}}$ (\textbf{in the compiled component})]

                This holds by \Thmref{thm:v5-bwd-sim-comp-steps-lfence}.
                
                \item[$\src{f'} \notin \src{\OB{f''}}$ (\textbf{in the context})]
                This switch from component to context arises in two cases:
                \begin{description}
                    \item[\textbf{call}]
                    Analogous to the corresponding case in \Thmref{thm:v5-bwd-sim-lfence}.

                    \item[\textbf{return}]
                    This is a return from a compiled function to a context function.
                    This is the dual of the case below for return from context to component.
                    
                    Note that there is no speculation here since it is a cross-component return.
                \end{description}
            \end{description}
        
        \item[$\src{f''} \notin \src{\OB{f''}}$ (\textbf{in the context})]

            By case analysis on $\src{f'} \in \src{\OB{f''}}$:
            \begin{description}
                \item[$\src{f'} \in \src{\OB{f''}}$ (\textbf{in the compiled component})]
                
                This switch from context to component arises in two cases:
            
                \begin{description}
                    \item[\textbf{call}]
                    
                    This is a call from a context function to a compiled function.
                    This is the dual of the case for call above.

                    \item[\textbf{return}]
                    
                    Note that there is no speculation since the $\pret$ instruction is part of the context.

                    Analogous to the corresponding case in \Thmref{thm:v5-bwd-sim-lfence}.
                \end{description}

            \item[$\src{f'} \notin \src{\OB{f''}}$ (\textbf{in the context})]
            
            This holds by \Thmref{thm:v4-back-sim-bts-lfence}.
            
        \end{description}
        
    \end{description}
    \item[\Cref{tr:v5-silent}]
    Analogous to the corresponding case in \Thmref{thm:v4-bwd-sim-lfence}.
\end{description}

\end{proof}

\BREAK

\begin{proof}[Proof of \Thmref{thm:v5-fwd-sim-stm-retpo}]\proofref{}{v5-fwd-sim-stm-retpo}\hfill

    Since $\src{\Omega} \srel_{\src{\OB{f''}}}\ \trgR{\SigmaR}$ we know that $\src{p(A(\pc))} = \trgR{p(A(\pc))}$.
    We proceed by induction on $\nsarrow{}$:
    \begin{description}
        \item[\Cref{tr:skip}, \Cref{tr:assign}, \Cref{tr:load}, \Cref{tr:store}, \Cref{tr:store-prv}, \Cref{tr:load-prv}, \Cref{tr:beqz-sat}, \Cref{tr:beqz-unsat}, \Cref{tr:callback}, \Cref{tr:call}, \Cref{tr:call-internal}]
        Analogous to the corresponding cases in \Thmref{thm:v5-fwd-sim-stm-lfence}.
        
        \item[\Cref{tr:ret}] 

        The only change to the state is $\src{A'} = \src{A[pc \mapsto l]}$ and the removal of the return location $\src{l}$ from $\src{B}$.

        Furthermore, we know $\src{\tau^\sigma} = \src{\retObs^{\safeta}}$.
        
        We first do a case distinction if the $\trgR{\Rsb}$ is empty or not, followed by a case distinction based on the top value of the $\Rsb$ and the stored return address $B = B' \cdot l'$:
        \begin{description}
            \item[$\trgR{\Rsb} = \trgR{\emptyset}$]
                Analogous to the corresponding case in \Thmref{thm:v5-fwd-sim-stm-lfence}

            \item[$\trgR{\Rsb} = \trgR{\Rsb' \cdot l}$ and $\trgR{l} = \trgR{l'}$]
                 Analogous to the corresponding case in \Thmref{thm:v5-fwd-sim-stm-lfence}

            \item[$\trgR{\Rsb} = \trgR{\Rsb' \cdot l}$ and $\trgR{l} \neq \trgR{l'}$] 
                This is the case of speculation and where we need to account for multiple steps in the target.
                
                Let's look at the code sequence that is generated by our retpoline compiler $\compretpR{\cdot}$:
                \begin{align*}
                    \trgR{
                        \begin{aligned}[t]
                            &
                            l': \pcall{Retpo}    ~~\text{\Cref{tr:v5-call}}
                            \\
                            &\
                            l'_1: \pskip ~~\text{\Cref{tr:v5-barr-spec}}
                            \\
                            &\
                            l'_2 : \barrierKywd ~~\text{\Cref{tr:v5-rollback}}
                            \\
                            &\
                            l'_3 : \pjmp{l'_1} ~~\text{\Cref{tr:v5-rollback}}
                            \\
                            &\
                            Retpo: 
                            \\
                            &\
                            l'_4 : \ppopret ~~\text{\Cref{tr:v5-nospec-eps}}
                            \\
                            &\
                            l'_5: \pret ~~\text{\Cref{tr:v5-spec}}
                        \end{aligned}
                        }
                \end{align*}
            We annotated each line with the rule that is used. Not the non-linear control flow here.
            Instructions executed at the labels are in order: $l'$, $l'_4$, $l'_5$, $l'_1$, $l'_2$ and $l'_3$.

             \begin{description}
                \item[$\trgR{\SigmaR \bigspecarrowR{\tra{^{\taint}}} \SigmaR'}$]
                    
                    Thus, we can use \Cref{tr:ret} to derive $\OmegaR \trgR{\nsarrow{\pret^{\safeta}}} \OmegaR'$ with $\trgR{A'} = \trgR{A[pc \mapsto l]}$ where $\trgR{B} = \trgR{B' \cdot l}$.
    
                    This is used in the step \Cref{tr:v5-spec} as well. 

                    And at $l'_3$ a rollback is triggered because of the previous barrier.
    
                    Thus, we have the execution  $\SigmaR \bigspecarrowR{\retObs^{\taintpc \glb \safeta} \cdot \rollbackObsR} \SigmaR'$, where $\SigmaR' = \trgR{\tup{\OB{C}, \OB{B'}, \tup{p, M, A'}, \bot, \safeta}}$

                \item[$\src{\retObs^{\safeta}} \tracerel \trgR{\retObs^{\taintpc \glb \safeta} \cdot \rollbackObsR }$]
                    From $\srel$ we have $\trgR{\taintpc} = \trgR{\safeta}$ and thus $\trgR{\taintpc \glb \safeta} = \trgR{\safeta}$.

                    Next, from \Cref{tr:ac-rel-rlb} we get $\src{\epsilon} \arel \rollbackObsR$  and are finished.
    
                \item[$\src{\Omega'} \srelref_{\src{\OB{f''}}}\, \trgR{\SigmaR'}$]
                    Follows from $\src{A'} \rrel \trgR{A'}$ which follows from $\src{A} \rrel \trgR{A}$ and $\src{B} \brel \trgR{B}$.
    
                    Furthermore, $\src{B'} \brel \trgR{B'}$ which follows from $\src{B} \brel \trgR{B}$.
            \end{description}

        \end{description}
        
        \item[\Cref{tr:ret-internal}]
        This case is analogous to case \textbf{ret} above.

        \item[\Cref{tr:retback}]
        This case cannot arise as we do not step to a compiled instructions $\complfenceR{i'}$ in the target.

        \end{description}
\end{proof}

\BREAK

\begin{proof}[Proof of \Thmref{thm:v5-bwd-sim-comp-steps-retpo}]\proofref{}{v5-bwd-sim-comp-steps-retpo}\hfill

Analogous to \Thmref{thm:v4-bwd-sim-comp-steps-lfence} using \Thmref{thm:v5-fwd-sim-stm-retpo}.
\end{proof}
 \section{$\SLSv$: THE SLS Compiler for Straighline-Speculation}

A compiler against Straightline speculation inserts an int3 instruction, which is an interrupt and stops speculation. We model this by fence instruction. The interrupt instruction is used because it is only 1 byte and not 3 like for the fence.
These instructions are architecturally never executed, because the control flow diverted before, so they have no additional cost except for an increase in binary size.

Most interestingly, this kind of speculation depends on the program's layout because it executes the instructions following a return.

Again, this is a vulnerability only made visible by our assembly-like semantics. In higher-level language models, there is nothing after a return. So it is hard to model the vulnerability and the countermeasure.

\begin{align*}
    \complfenceSLS{ M ; \OB{F} ; \OB{I}} &= \trgSLS{ \complfenceSLS{M} ; \complfenceSLS{\OB{F}} ; \complfenceSLS{\OB{I}}}
	\\\\
	\complfenceSLS{ \srce} &= \trgSLS{\emptyset}
	\\
	\complfenceSLS{ \OB{I}\cdot f } &= \trgSLS{ \complfenceSLS{ \OB{I} }\cdot \complfenceSLS{f} }
	\\\\
	\complfenceSLS{ M ; -n\mapsto v : \unta} &= \trgSLS{\complfenceSLS{M} ; -\complfenceSLS{n} \mapsto \complfenceSLS{v} : \unta}
	\\\\
	\complfenceSLS{p_1;p_2} &= \trgSLS{\complfenceSLS{p_1} ; \complfenceSLS{p_2}}
	\\
	\complfenceSLS{l : i} &= \trgSLS{\complfenceSLS{l} : \complfenceSLS{i}} ~~ \text{if $i \neq \retC$}
	\\\\
	\complfenceSLS{\pskip} &= \trgSLS{\pskip} 
	\\
	\complfenceSLS{\passign{x}{e}} &= \trgSLS{\passign{x}{\complfenceSLS{e}}}
        \\
        \complfenceSLS{\pcondassign{x}{e}{e'}} &= \trgSLS{\pcondassign{x}{\complfenceSLS{e}}{\complfenceSLS{e'}}}
        \\
        \complfenceSLS{\ploadprv{x}{e}} &= \trgSLS{\ploadprv{x}{\complfenceSLS{e}}}
	\\
	\complfenceSLS{\pload{x}{e}} &= \trgSLS{\pload{x}{\complfenceSLS{e}}}
	\\
        \complfenceSLS{\pstore{x}{e}} &= \trgSLS{\pstore{x}{\complfenceSLS{e}}}
        \\
        \complfenceSLS{\pstoreprv{x}{e}} &= \trgSLS{\pstoreprv{x}{\complfenceSLS{e}}}
        \\
	\complfenceSLS{\pjz{x}{l'}} &= \trgSLS{\pjz{x}{\complfenceSLS{l'}}} 
	\\
	\complfenceSLS{\pjmp{e}} &= \trgSLS{\pjmp{\complfenceSLS{e}}} 
	\\
	\complfenceSLS{\pcall{f}} &= \trgSLS{\pcall{f}} ~ \text{Here $f$ is a function name}
	\\\\
	\complfenceSLS{l : \pret} &= \trgSLS{l : \pret}  ; l_{new} : \trgSLS{\pbarrier} ~ \text{ where $\l \subseteq l_{new}$ and $l_{new} \leq next(l)$} 
\end{align*}

\subsection{The proofs}

These proofs are very similar to \Cref{comp:v4-fence}.

\begin{center}
	
	\scalebox{0.7}{
\begin{forest}
for tree={
    align = left,
    font = \footnotesize,
    forked edge,
}
[\Thmref{thm:sls-all-lfence-comp-are-rdss}
        [\Thmref{thm:sls-lfence-comp-rdss}
                [\Thmref{thm:sls-corr-bt-lfence}
                    [\Thmref{thm:sls-ini-state-rel}
                        [\Cref{thm:v4-heap-rel-comp-lfence}]
                        [\Cref{thm:v4-heap-rel-bt-lfence}, name=heap-bt]
                    ]
                    [\Thmref{thm:sls-bwd-sim-lfence}
                            [\Thmref{thm:sls-bwd-sim-comp-steps-lfence}
                                    [\Thmref{thm:sls-fwd-sim-stm-lfence}
                                        [\Thmref{thm:v4-fwd-sim-exp-lfence}]
                                    ]
                            ]
                            [\Thmref{thm:v4-back-sim-bts-lfence}   
                                [\Thmref{thm:v4-back-sim-bte-lfence}, name=exp-bt]
                            ]
                    ]
                ]
        ]
]
\node (sniSource) at (-4,3) {\Thmref{thm:ss-sni-source}};
\node (rsscEq) [below=of sniSource] {\Thmref{thm:rdss-eq-rdsp}};
\draw[red, thick, dotted] ($(sniSource.north west)+(-0.3,0.3)$)  rectangle ($(rsscEq.south east)+(0.9,-0.6)$);
\end{forest}
	}
\end{center}

\begin{theorem}[$\SLSv$: The Fence compiler is \rdss]\label{thm:sls-lfence-comp-rdss}(\showproof{sls-lfence-comp-rdss})
	\begin{align*}
		\contractSpec{}{\SLSv} \vdash \complfenceSLS{\cdot} : \rdss
	\end{align*}
\end{theorem}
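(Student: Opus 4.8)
The plan is to mirror the proof of the store-fence compiler $\complfenceS{\cdot}$ (\Thmref{thm:v4-lfence-comp-rdss}) almost verbatim, since $\complfenceSLS{\cdot}$ differs only in which instruction it guards ($\pret$ instead of $\storeC$) and which speculation it defuses. At the top level I would unfold the property-free criterion \rdss and, given an arbitrary source program $\src{P}$, a target attacker $\ctxSLS{}$, and a target trace $\tra{^{\taint}}$ with $\amTracevSLS{\ctxSLS{}\hole{\complfenceSLS{\src{P}}}}{\tra{^{\taint}}}$, instantiate the existentially-quantified source context with the backtranslation $\backtrfencec{\ctxSLS{}}$. The remaining obligation is then exactly the Correctness of Backtranslation statement (\Thmref{thm:sls-corr-bt-lfence}): that $\backtrfencec{\ctxSLS{}}$ linked with $\src{P}$ produces a source trace $\tras{^{\sigma}}$ with $\tras{^{\sigma}} \rels \tra{^{\taint}}$. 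So the one-line proof reads ``instantiate $A_s$ with $\backtrfencec{\ctxSLS{}}$ and apply \Thmref{thm:sls-corr-bt-lfence}'', and all the content lives in the supporting lemmas.

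Correctness of Backtranslation I would establish by composing two ingredients in the standard way. First, that initial states are related (\Thmref{thm:sls-ini-state-rel}); this is immediate and analogous to \Thmref{thm:v4-low-equivalent-programs-low-equivalent-states}, because the $\SLSv$ speculative state carries no auxiliary data (unlike the return-stack buffer of $\Rv$), so the quiescent initial state is just a single instance with window $\bot$ and taint $\safeta$. Second, a generalized backward simulation (\Thmref{thm:sls-bwd-sim-lfence}), which handles the cross-component call/return bookkeeping and otherwise delegates to compiled or backtranslated single steps. The backward simulation is obtained from a forward simulation plus determinism of $\SLSv$, as in the $\Sv$ development. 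For every instruction other than $\pret$ the compiled code is syntactically unchanged up to label renaming, so those cases of the forward-simulation lemma (\Thmref{thm:sls-fwd-sim-stm-lfence}) reduce to the corresponding cases of \Thmref{thm:v4-fwd-sim-stm-lfence}, appealing to expression correctness (\Thmref{thm:v4-fwd-sim-exp-lfence}) and to the value, memory, and register relations.

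The crux---and the step I expect to require genuine care---is the $\pret$ case of the forward simulation. Here the compiler emits $\trgSLS{l : \pret}\,;\,\trgSLS{l_{new} : \pbarrier}$, so the single source return must be matched by several target steps. Executing the return under $\SLSv$ fires \Cref{tr:sls-spec}: the architectural return takes place (emitting the return observation, which is $\safeta$) while a speculative instance is pushed whose program counter points at the freshly inserted barrier. The essential observation is that the first speculatively-executed instruction is this $\pbarrier$; with a nonzero speculative window it fires \Cref{tr:sls-barr-spec}, setting the window to $0$ and immediately triggering \Cref{tr:sls-rollback}. Hence the only speculative contribution is the silent barrier step and the closing $\rollbackObsSLS$, so the target trace is the safe return observation followed by $\rollbackObsSLS$. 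I would then relate this to the single source observation using \Cref{tr:tr-rel-rollb} (which absorbs the trailing rollback), and re-establish $\srel$: after the rollback the top instance is the non-speculative continuation $\tup{\src{\Omega'}, \bot}$, which is exactly the quiescent shape the state relation demands.

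The main risk is entirely localized to that $\pret$ case: I must track the window convention carefully (recall $\bot$ matches $n+1$ and the new speculative window is $j = min(\omega, n)$) to guarantee the barrier is reached with a nonzero window---so that \Cref{tr:sls-barr-spec} rather than \Cref{tr:sls-barr} applies---and to confirm that no action can be emitted between the onset of speculation and the barrier, which is what keeps the intervening trace safe. With \Thmref{thm:sls-lfence-comp-rdss} in hand, the companion result that all compiled programs are \rss (\Thmref{thm:sls-all-lfence-comp-are-rdss}) follows by the same chain as for $\Sv$: \Thmref{thm:ss-sni-source} supplies source \rss under $\contract{}{NS}$, \Thmref{thm:rdss-eq-rdsp} converts \rdss to \rssp, and instantiating the latter discharges the goal.
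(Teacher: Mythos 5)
Your proposal matches the paper's proof essentially step for step: the one-line top-level argument (instantiate the source context with the backtranslation and invoke correctness of backtranslation), the decomposition into related initial states plus a generalized backward simulation derived from forward simulation and determinism, and the identification of the $\pret$ case as the only non-trivial one, where \Cref{tr:sls-spec} pushes a speculative instance whose first instruction is the injected barrier, forcing an immediate window-zeroing and rollback so that the speculative segment contributes only the $\rollbackObsSLS$ absorbed by the trace relation. Your attention to the window convention in the barrier step is, if anything, slightly more careful than the paper's own prose.
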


\begin{theorem}[$\SLSv$: All SLS-compiled programs are \rss]\label{thm:sls-all-lfence-comp-are-rdss}(\showproof{sls-all-lfence-comp-are-rdss})
	\begin{align*}
		\forall\src{P}\ldotp \contractSpec{}{\SLSv}\vdash\complfenceSLS{P} : \rss
	\end{align*}
\end{theorem}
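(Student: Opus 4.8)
The plan is to reuse, for the straight-line speculation semantics $\contractSpec{}{\SLSv}$, the same three-step composition argument already applied to the other fence-based compilers (cf.\ the proofs of \Thmref{thm:v4-all-lfence-comp-are-rdss} and \Thmref{thm:v5-all-lfence-comp-are-rdss}). Nothing about $\contractSpec{}{\SLSv}$ or $\complfenceSLS{\cdot}$ changes the shape of the argument; only the instantiated semantics and compiler differ.

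Concretely, first I would record that every source program is robustly speculatively safe under the non-speculative semantics, which is exactly \Thmref{thm:ss-sni-source}: since source executions never speculate, every emitted action is tainted $\src{\safeta}$, so $\forall \src{P}.\ \contract{}{NS} \vdash \src{P} : \rss$ holds trivially. Second, I would invoke the compiler-security result \Thmref{thm:sls-lfence-comp-rdss}, namely $\contractSpec{}{\SLSv} \vdash \complfenceSLS{\cdot} : \rdss$, and convert it through the equivalence \Thmref{thm:rdss-eq-rdsp} into $\contractSpec{}{\SLSv} \vdash \complfenceSLS{\cdot} : \rdssp$. Third, I would unfold \Cref{def:rdsspc}, which states that for all $\src{P}$, if $\contract{}{NS} \vdash \src{P} : \rss$ then $\contractSpec{}{\SLSv} \vdash \complfenceSLS{\src{P}} : \rss$; instantiating the antecedent with the source-safety fact from the first step discharges it for every $\src{P}$ and yields the goal $\forall \src{P}.\ \contractSpec{}{\SLSv} \vdash \complfenceSLS{P} : \rss$.

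The substance of the theorem is therefore entirely delegated to \Thmref{thm:sls-lfence-comp-rdss}, which this argument treats as a black box and which the excerpt proves by back-translation plus forward/backward simulation. The one genuinely interesting case there is the return instruction: $\complfenceSLS{\cdot}$ appends a $\trgSLS{\pbarrier}$ immediately after every $\retC$, so whenever $\contractSpec{}{\SLSv}$ speculatively bypasses the return (\Cref{tr:sls-spec}), the next instruction is the barrier, which sets the speculation window to $0$ (\Cref{tr:sls-barr-spec}) and forces an immediate rollback (\Cref{tr:sls-rollback}); hence the speculative transaction emits no data-dependent action and the target trace relates to a safe source trace. Granting \Thmref{thm:sls-lfence-comp-rdss}, the present statement is a routine composition, and the only thing to watch is keeping the instantiations of the generic security criteria consistent.
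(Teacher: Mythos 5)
Your proposal matches the paper's own proof exactly: it cites \Thmref{thm:ss-sni-source} for source-level \rss{}, converts \Thmref{thm:sls-lfence-comp-rdss} into \rdssp{} via \Thmref{thm:rdss-eq-rdsp}, and discharges the antecedent of \Cref{def:rdsspc}. The additional commentary on where the real work lives (the barrier after $\retC$ forcing an immediate rollback in \Thmref{thm:sls-lfence-comp-rdss}) is accurate but not needed for this composition step.
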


The relation is the same as the one in \Cref{comp:v4-fence}.

\begin{theorem}[$\SLSv$: Correctness of the Backtranslation]\label{thm:sls-corr-bt-lfence}(\showproof{sls-corr-bt-lfence})
	\begin{align*}
		\text{ if } 
			&\
			\trgSLS{ \amTracevSLS{\ctxSLS{}\hole{\complfenceSLS{P}}}{\tra{^{\taint}}} }
		\\
		\text{ then }
			&
			\src{ \Trace{\backtrfencec{\ctxSLS{}}\hole{P}}{ \tras{^{\sigma}}} }
		\\
		\text{ and }
			&\
			\tras{^{\sigma}} \rels \trgSLS{\tra{^{\taint}}}
	\end{align*}
\end{theorem}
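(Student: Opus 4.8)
The plan is to follow exactly the same two-ingredient structure used for the analogous backtranslation-correctness results \Thmref{thm:v4-corr-bt-lfence} and \Thmref{thm:v5-corr-bt-lfence}: the statement reduces to combining the relatedness of initial states (\Thmref{thm:sls-ini-state-rel}) with the generalized backward simulation (\Thmref{thm:sls-bwd-sim-lfence}). Concretely, given a target run producing $\trgSLS{\tra{^{\taint}}}$, I would transport that run down to a non-speculative source run of the backtranslated context plugged with $P$, read off a $\rels$-related source trace, and then repackage it as a source behaviour.

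First I would unfold the hypothesis $\trgSLS{\amTracevSLS{\ctxSLS{}\hole{\complfenceSLS{P}}}{\tra{^{\taint}}}}$ via the trace rule \Cref{tr:sls-trace}, obtaining a terminating speculative derivation $\trgSLS{\initFuncSLS{(\ctxSLS{}\hole{\complfenceSLS{P}})} \bigspecarrowSLS{\tra{^{\taint}}} \SigmaSLS'}$ with $\finTypef{\trgSLS{\SigmaSLS'}}$. Next I would instantiate \Thmref{thm:sls-ini-state-rel} with the set of compiled-function names $\src{\OB{f}} = \dom{\src{P}.\src{F}}$ to get that the two initial states are safe-equivalent, $\initFunc{\backtrfencec{\ctxSLS{}}\hole{P}} \srel_{\src{\OB{f}}} \initFuncSLS{(\ctxSLS{}\hole{\complfenceSLS{P}})}$. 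Feeding both facts into \Thmref{thm:sls-bwd-sim-lfence} yields a source big-step $\src{\initFunc{\backtrfencec{\ctxSLS{}}\hole{P}} \nsbigarrow{\tras{^{\sigma}}} \Omega'}$ together with $\tras{^{\sigma}} \rels \trgSLS{\tra{^{\taint}}}$ and $\src{\Omega'} \srel_{\src{\OB{f}}} \trgSLS{\SigmaSLS'}$. Finally I would conclude $\src{\Trace{\backtrfencec{\ctxSLS{}}\hole{P}}{\tras{^{\sigma}}}}$ through \Cref{tr:ns-trace}, where the required source termination $\finType{\src{\Omega'}}$ follows from $\finTypef{\trgSLS{\SigmaSLS'}}$ via the state relation $\srel$.

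For this theorem in isolation the argument is essentially the composition of the two sub-lemmas, so the main care points are bookkeeping rather than new mathematics: I must check that the instruction-shape side conditions demanded by \Thmref{thm:sls-bwd-sim-lfence} (each current instruction being either a compiled or a backtranslated one, depending on whether its function lies in $\src{\OB{f}}$) already hold at the initial state and are preserved along the run, and that cross-boundary calls and returns are matched as in the analogous proofs. The genuine difficulty lives one level down, in the forward/backward simulation \Thmref{thm:sls-fwd-sim-stm-lfence} and \Thmref{thm:sls-bwd-sim-lfence}, which must reconcile the SLS-specific speculation on $\pret$: the fence that $\complfenceSLS{\cdot}$ inserts after each return makes straight-line speculation (\Cref{tr:sls-spec}) immediately hit the barrier (\Cref{tr:sls-barr}) and roll back (\Cref{tr:sls-rollback}), so a single source return step corresponds to several target steps whose only speculative contribution is a safe $\rollbackObsSLS$. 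Since those supporting lemmas may be assumed here, the expected obstacle in the present proof is merely aligning these multi-step target fragments with the source trace under $\rels$ and ensuring the state relation is re-established after each such fragment.
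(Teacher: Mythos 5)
Your proposal matches the paper's proof exactly: the paper discharges this theorem by combining \Thmref{thm:sls-ini-state-rel} with \Thmref{thm:sls-bwd-sim-lfence}, which is precisely your two-ingredient structure (the paper's proof is in fact just that one-line citation). Your additional bookkeeping about unfolding \Cref{tr:sls-trace} and re-establishing termination is a faithful elaboration of the same argument.
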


\begin{theorem}[$\SLSv$:Generalised Backward Simulation]\label{thm:sls-bwd-sim-lfence}(\showproof{sls-bwd-sim-lfence})
	\begin{align*}
		\text{ if }
			&\
			\ffun{\src{A(\pc)}}\in\src{\OB{f''}} \text{ then } \trgSLS{A(\pc)} : \trgSLS{i} = \complfenceSLS{l : i} \text{ or } (\complfenceSLS{A(\pc) : i} = \trgSLS{A(\pc)} : \trgSLS{i} ; \trgSLS{p_c'}) 
			\\
		\text{ else }
			&\ 
			\src{A(\pc)} : \src{i} = \backtrfencec{\trgSLS{A(\pc)}} : \backtrfencec{\trgSLS{i}} 
			\\
		\text{ and}
			&\
			\text{ if } \ffun{\src{A'(\pc)}}\in\src{\OB{f''}} \text{ then } \trgSLS{A'(\pc)} : \trgSLS{i'} = \complfenceSLS{A'(\pc) : i'} \text{ or } (\complfenceSLS{A'(\pc) : i'} = \trgSLS{A'(\pc)} : \trgSLS{i'} ; \trgSLS{p_c''}) 
			\\
		\text{ else }
			&\
			\src{A'(\pc)} : \src{i'} = \backtrfencec{\trgSLS{A'(\pc)}} : \backtrfencec{\trgSLS{i'}}
		\\
		\text{ and }
			&\
			\trgSLS{\SigmaSLS} = \trgSLS{(C, \OB{B}, \tup{p, M, A}, \bot,\safeta)}
		\\
		\text{ and }
			&\
			\SigmaSLSt{'}=\trgSLS{(C, \OB{B'}, \tup{p, M', A'},\bot,\safeta)}
		\\
		\text{ and }
			&\
			\trg{ \SigmaSLS \bigspecarrowSLS{\tra{^{\taint}}} \SigmaSLS' }
		\\
		\text{ and }
			&\
			\src{\Omega} \srel_{\src{\OB{f''}}} \trgSLS{\SigmaSLS}
		\\
        \text{ and }
			&\
			\trgSLS{p(A(\pc))} = \trgSLS{i}  \text{ and } \trgSLS{p(A'(\pc))} = \trgSLS{i'}
		\\
     \text{ and }
            &\
            \src{p(A(\pc))} =  \src{i} \text{ and } \src{p(A'(\pc))} = \src{i'}
        \\
		\text{ then }
			&\
			\src{\Omega}=\src{C, \OB{B}, \tup{p , M, A} \nsbigarrow{\tras{^\sigma}} C, \OB{B'}, \tup{p , M', A'} }=\src{\Omega'}
		\\
		\text{ and }
			&\
			\src{\tras{^\sigma}} \tracerel \trgSLS{\tra{^{\taint}}}
		\\
		\text{ and }
			&\
			\src{\Omega'} \srelref_{\src{\OB{f''}}}\, \trgSLS{\SigmaSLS{'}}
	\end{align*}
\end{theorem}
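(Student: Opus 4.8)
The plan is to prove \Thmref{thm:sls-bwd-sim-lfence} by induction on the big-step derivation $\trgSLS{\SigmaSLS \bigspecarrowSLS{\tra{^{\taint}}} \SigmaSLS'}$, reusing the skeleton of the analogous $\Sv$ result \Thmref{thm:v4-bwd-sim-lfence}. The base case \Cref{tr:sls-reflect} is immediate: the trace is empty and the state is unchanged, so reflexivity of $\nsbigarrow{}$ together with the hypothesis $\src{\Omega} \srelref_{\src{\OB{f''}}} \trgSLS{\SigmaSLS}$ discharges all three conclusions. For the inductive cases \Cref{tr:sls-single} and \Cref{tr:sls-silent}, I first apply the induction hypothesis to the prefix $\trgSLS{\SigmaSLS \bigspecarrowSLS{\OB{\tau^{\taint}}} \SigmaSLS''}$, obtaining a source run $\src{\Omega \nsbigarrow{\tras{^{\sigma''}}} \Omega''}$ with $\src{\tras{^{\sigma''}}} \tracerel \trgSLS{\tra{^{\taint''}}}$ and $\src{\Omega''} \srelref_{\src{\OB{f''}}} \trgSLS{\SigmaSLS''}$, and then simulate the single remaining step $\trgSLS{\SigmaSLS'' \specarrowSLS{\tau^{\taint}} \SigmaSLS'}$.

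The single-step simulation proceeds by case analysis on whether the executing function and the next function lie in the compiled component $\src{\OB{f''}}$ or in the back-translated context. When both are in the compiled component the step is discharged by \Thmref{thm:sls-bwd-sim-comp-steps-lfence}; when both are in the context it is discharged by the shared backtranslation lemma \Thmref{thm:v4-back-sim-bts-lfence}, which applies verbatim since the target syntax is common to $\Bv, \Jv, \Sv, \Rv$ and $\SLSv$. The four boundary subcases — a component-to-context call, its dual context-to-component call, and the two analogous returns — are handled exactly as in \Thmref{thm:v4-bwd-sim-lfence}: each cross-component transition emits a matching call or return action, the function maps coincide by \Thmref{thm:v4-fmaps-related}, and relatedness of register files, memories and frame stacks ($\rrel$, $\hrel$, $\brel$) is preserved, yielding both $\src{\Omega'} \srelref_{\src{\OB{f''}}} \trgSLS{\SigmaSLS'}$ and the required action relation via $\arel$.

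The genuinely $\SLSv$-specific point — and where I expect the work to concentrate — is the component-internal $\pret$. Under $\complfenceSLS{\cdot}$ a return compiles to $\trgSLS{\pret}$ immediately followed by an injected $\trgSLS{\pbarrier}$. The $\SLSv$ semantics speculatively bypasses the return (\Cref{tr:sls-spec-paper}), continuing at the next instruction, which is precisely that barrier; the barrier then sets the window to $0$ (\Cref{tr:sls-barr-spec-paper}) and forces an immediate rollback (\Cref{tr:sls-rollback-paper}). Consequently the only observation produced inside this speculative transaction is the $\rollbackObsSLS$ event, which relates to the empty source observation by \Cref{tr:ac-rel-rlb}, while the matching source $\src{\pret}$ emits only its architectural return action. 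Importantly I do not need to reason about this window directly: the multi-instruction expansion is already absorbed into \Thmref{thm:sls-bwd-sim-comp-steps-lfence}, which is itself derived from the forward simulation \Thmref{thm:sls-fwd-sim-stm-lfence} and determinism of $\specarrowSLS{}$. The standing hypotheses that both endpoints $\trgSLS{\SigmaSLS}$ and $\trgSLS{\SigmaSLS'}$ are non-speculative (window $\bot$, taint $\safeta$) guarantee that every such internal return is fully consumed — speculation begun and rolled back — within the big step, so that no dangling speculative instance survives to break the state relation. Assembling the subcases then gives $\src{\Omega'} \srelref_{\src{\OB{f''}}} \trgSLS{\SigmaSLS'}$ and $\src{\tras{^\sigma}} \tracerel \trgSLS{\tra{^{\taint}}}$ as required.
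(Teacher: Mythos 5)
Your proposal matches the paper's proof essentially step for step: induction on the big-step derivation, IH on the prefix, a four-way component/context case split that delegates compiled-to-compiled steps to \Cref{thm:sls-bwd-sim-comp-steps-lfence} and context-to-context steps to \Cref{thm:v4-back-sim-bts-lfence}, and the ret-then-barrier trapping argument (bypass, barrier, rollback related to $\src{\epsilon}$) for the $\SLSv$-specific speculation. The one caveat is that the speculation-on-return reasoning you confine to component-internal returns also fires on a component-to-context return (the $\SLSv$ rule speculates whenever the executing function is not attacker-defined, irrespective of the return target), so that boundary subcase is \emph{not} handled ``exactly as in'' the $\Sv$ proof --- the paper explicitly reuses the ret case of \Cref{thm:sls-fwd-sim-stm-lfence} there, which is precisely the argument you already give.
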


\begin{theorem}[$\SLSv$: Backward Simulation for Compiled Steps]\label{thm:sls-bwd-sim-comp-steps-lfence}(\showproof{sls-bwd-sim-comp-steps-lfence})
	\begin{align*}
		\text{ if }
			&\
			\trgSLS{\SigmaSLS}=\trgSLS{(C, \OB{B}, \tup{p, M, A}, \bot, \safeta) }
		\\
		\text{ and }
			&\
			\trgSLS{\SigmaSLS'}=\trgSLS{(C, \OB{B'}, \tup{p, M', A'}, \bot, \safeta)}
		\\
		\text{ and }
			&\
			\trgSLS{\SigmaSLS \bigspecarrowSLS{\tra{^{\taint}}} \SigmaSLS'}
		\\
		\text{ and }
			&\
			\src{\Omega} \srelref_{\src{\OB{f''}}}\, \trgSLS{\SigmaSLS}
		\\
             \text{ and }
			&\
			\src{p(A(\pc))} =  \src{i} \text{ and } \src{p(A'(\pc))} =  \src{i'}
		\\
            \text{ and }
			&\
			\trgSLS{A(\pc) : i} = \complfenceSLS{A(\pc) : i}   \text{ or } (\complfenceSLS{A(\pc) : i} = \trgSLS{A(\pc)} : \trgSLS{i} ; \trgSLS{p_c'})
            \\
            \text{ and }
                &\
                \trgSLS{A'(\pc) : i'} = \complfenceSLS{A'(\pc) : i'}  \text{ or } (\complfenceSLS{A'(\pc) : i'} = \trgSLS{A'(\pc)} : \trgSLS{i'} ; \trgSLS{p_c''})
		\\
		\text{ then }
			&\
			\src{\Omega}=\src{C, \OB{B}, \tup{p, M, A} \nsarrow{\tau^\sigma} C, \OB{B'}, \tup{p, M', A'}}= \src{\Omega'}
            \\
		\text{ and }
			&\
			\src{\tau^\sigma} \tracerel \trgSLS{\tra{^{\taint}}} \qquad \text{( using the trace relation!)}
		\\
		\text{ and }
			&\
			\src{\Omega'} \srelref_{\src{\OB{f''}}}\, \SigmaSLS'
	\end{align*}
\end{theorem}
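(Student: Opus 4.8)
The plan is to obtain this backward simulation as a routine consequence of the forward simulation lemma \Thmref{thm:sls-fwd-sim-stm-lfence} together with determinism of the target semantics, following almost verbatim the argument already carried out for the $\Sv$ compiler in \Thmref{thm:v4-bwd-sim-comp-steps-lfence} (and mirrored for $\Rv$). The only substantive change is that every appeal to the $\Sv$ forward-simulation lemma is replaced by an appeal to its $\SLSv$ counterpart \Thmref{thm:sls-fwd-sim-stm-lfence}, which already absorbs all of the reasoning specific to straight-line speculation; at the level of this theorem nothing new about speculation needs to be said.

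Concretely, I would argue by contradiction, separately for each of the two conclusions (trace-relatedness and state-relatedness). First observe that since $\src{p(A(\pc))} = \src{i}$ is defined, the source configuration $\src{\Omega}$ can always take a step $\src{\Omega \nsarrow{\tau^\sigma} \Omega'}$ (the source small-step relation always makes progress, with even a missing instruction handled by \Cref{tr:terminate}). Suppose the trace conclusion fails: then there is a source step whose emitted label is \emph{not} $\tracerel$-related to $\trgSLS{\tra{^{\taint}}}$. Feeding this source step into \Thmref{thm:sls-fwd-sim-stm-lfence} yields a target big-step $\trgSLS{\SigmaSLS \bigspecarrowSLS{\tra{^{\taint'}}} \SigmaSLS''}$ with $\src{\tau^\sigma} \tracerel \trgSLS{\tra{^{\taint'}}}$ and $\src{\Omega'} \srel_{\src{\OB{f''}}} \SigmaSLS''$. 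Because $\bigspecarrowSLS{}$ is deterministic --- both big-steps start from the same $\SigmaSLS$ and run exactly through the compiled block of a single source instruction until the next non-speculative configuration (window $\bot$) at $A'(\pc)$ --- I obtain $\SigmaSLS'' = \SigmaSLS'$ and $\trgSLS{\tra{^{\taint'}}} = \trgSLS{\tra{^{\taint}}}$, contradicting the assumption. The state conclusion is discharged identically, using $\src{\Omega'} \srel_{\src{\OB{f''}}} \SigmaSLS''$ in place of the trace relation.

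The genuine content of the result lives in the forward-simulation lemma rather than here; for this theorem the one delicate point is the determinism step, and in particular checking that the hypothesized run and the forward-simulation run really stop at the same next-non-speculative state so that the traces can be collapsed. The interesting instruction is $\pret$: the compiler emits $\pret$ followed by an inserted $\pbarrier$, so a single source return is matched in the target by a multi-step computation in which straight-line speculation fires on $\pret$ (\Cref{tr:sls-spec-paper}), is immediately trapped by the barrier, and is rolled back, producing a $\rollbackObsSLS$ observation on the target trace that the trace relation must tolerate. I expect this alignment to be the main (though modest) obstacle, and it is exactly what \Thmref{thm:sls-fwd-sim-stm-lfence} is designed to guarantee, so the present proof reduces to the determinism-plus-forward-simulation schema without any further speculation-specific case analysis.
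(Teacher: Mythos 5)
Your proposal matches the paper's own proof: the paper discharges this theorem exactly by contradiction on each of the two conclusions, invoking the forward-simulation lemma \Thmref{thm:sls-fwd-sim-stm-lfence} and determinism of the target semantics to identify the hypothesized run with the one produced by forward simulation (mirroring the argument given for \Thmref{thm:v4-bwd-sim-comp-steps-lfence}). Your additional remarks about the $\pret$/$\pbarrier$ alignment and the $\rollbackObsSLS$ observation correctly locate where the real work lives, namely in the forward-simulation lemma rather than here.
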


\begin{lemma}[Forward Simulation for Compiled Statements in SLS]\label{thm:sls-fwd-sim-stm-lfence}(\showproof{sls-fwd-sim-stm-lfence})
	\begin{align*}
		\text{ if }
			&\
			\src{\Omega}=\src{C, \OB{B}, \tup{p, M, A} \nsarrow{\tau^\sigma} C, \OB{B'}, \tup{p, M', A'}} =\src{\Omega'}
		\\
		\text{ and }
			&\
			\src{\Omega} \srelref_{\src{\OB{f''}}} \trgSLS{\SigmaSLS}
		\\
        \text{ and }
			&\
			\src{p(A(\pc))} = \src{i} \text{ and } \src{p(A'(\pc))} =  \src{i'}
		\\
		\text{ and }
			&\
			\trgSLS{\SigmaSLS}= \trgSLS{C, \OB{B}, \tup{p, M , A}, \bot, \safeta \OB{f}}, 
		\\
		\text{ and }
			&\
			\trgSLS{\SigmaSLS'}= \trgSLS{C, \OB{B}, \tup{p, M' , A'}, \bot, \safeta} \OB{f'}
		\\
            \text{ and }
                &\
                \trgSLS{A(\pc) : i} = \complfenceSLS{A(\pc) : i} \text{ or } (\complfenceSLS{A(\pc) : i} = \trgSLS{A(\pc)} : \trgSLS{i} ; \trgSLS{p_c'})
            \\
            \text{ and } 
                &\
                \trgSLS{A'(\pc) : i'} = \complfenceSLS{A'(\pc) : i'} \text{ or } (\complfenceSLS{A'(\pc) : i'} = \trgSLS{A'(\pc)} : \trgSLS{i'} ; \trgSLS{p_c''})
		\\
		\text{ then }
			&\
			\trgSLS{\SigmaSLS \bigspecarrowSLS{\tra{^{\taint}}} \SigmaSLS'}
		\\
		\text{ and }
			&\
			\src{\tau^\sigma} \tracerel \trgSLS{\tra{^{\taint}}} \qquad \text{( using the trace relation!)}
		\\
		\text{ and }
			&\
			\src{\Omega'} \srelref_{\src{\OB{f''}}}\ \trgSLS{\SigmaSLS'}
	\end{align*}
\end{lemma}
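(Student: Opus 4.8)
The plan is to mirror the forward-simulation arguments already carried out for the store-fence compiler (\Thmref{thm:v4-fwd-sim-stm-lfence}) and the return-fence compiler (\Thmref{thm:v5-fwd-sim-stm-lfence}), proceeding by case analysis on the derivation of the single source step $\src{\Omega \nsarrow{\tau^\sigma} \Omega'}$. First I would unpack the hypothesis $\src{\Omega} \srelref_{\src{\OB{f''}}} \trgSLS{\SigmaSLS}$, which yields agreement of program counters ($\src{A(\pc)} = \trgSLS{A(\pc)}$, so both sides execute the same compiled instruction), relatedness of register files $\src{A} \rrel \trgSLS{A}$, memories $\src{M} \hrel \trgSLS{M}$ and frame stacks $\src{B} \brel \trgSLS{B}$, together with the fact that the top speculation window is $\trgSLS{\bot}$ and $\taintpc = \safeta$.

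For every instruction that $\complfenceSLS{\cdot}$ translates homomorphically --- $\pskip$, $\passign{x}{e}$, $\pload{x}{e}$, $\pstore{x}{e}$, $\ploadprv{x}{e}$, $\pstoreprv{x}{e}$, $\pjz{x}{l}$, $\pjmp{e}$, and (internal) $\pcall{f}$ --- the target takes exactly one speculative step, via \Cref{tr:sls-nospec-act} for observation-producing instructions and \Cref{tr:sls-nospec-eps} for silent ones. These cases go through verbatim as in \Thmref{thm:v4-fwd-sim-stm-lfence}: I transport the expression evaluation across the compiler with \Thmref{thm:v4-fwd-sim-exp-lfence}, and re-establish the state relation using \Thmref{thm:v4-val-rel-comp-lfence}, \Thmref{thm:v4-heap-rel-comp-lfence} and \Thmref{thm:v4-fmaps-related}. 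Note that, unlike in the V4 setting, a $\pstore{x}{e}$ triggers no speculation here (the $\SLSv$ semantics speculates only on $\retC$), so it collapses to the same single-step argument as a load. The callback and retback cases cannot arise, since by hypothesis we are stepping a compiled instruction, hence the return/call lands on component code.

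The one genuinely new case, and the main obstacle, is the return (\Cref{tr:ret} and \Cref{tr:ret-internal}): this is the only instruction after which $\complfenceSLS{\cdot}$ inserts a $\pbarrier$, and the only instruction on which $\SLSv$ speculates. On the source side this is a single architectural return step emitting a safe action $\src{\tau^\sigma}$. On the target side I must assemble several steps. Since the returning function is in the component, straight-line speculation fires via \Cref{tr:sls-spec}: it emits the matching return action $\trgSLS{\tau^\taint}$ and pushes a speculative instance whose $\pc$ is set to the instruction after the return --- which, by construction of the compiler, is exactly the inserted barrier. Executing that barrier via \Cref{tr:sls-barr-spec} forces the window to $0$, so \Cref{tr:sls-rollback} immediately discards the speculative instance, leaving the architectural post-return state on top. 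The resulting target big-step trace is therefore $\trgSLS{\tau^\taint \cdot \rollbackObsSLS}$.

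It then remains to discharge the three conclusions. For the trace relation, $\src{\tau^\sigma} \tracerel \trgSLS{\tau^\taint \cdot \rollbackObsSLS}$ holds because $\taintpc = \safeta$ (we are not speculating at the top of the stack), so the return-action taints agree, and the trailing $\rollbackObsSLS$ is absorbed against the source $\epsilon$ through \Cref{tr:ac-rel-rlb}. For the post-state relation, the crucial observation is that the barrier is the first and only instruction executed inside the speculative transaction, so no $\loadObsKywd$, $\storeObsKywd$ or $\pcObsKywd$ action can leak before the rollback; the architectural state is thus unaffected by the discarded transaction, and $\src{\Omega'} \srelref_{\src{\OB{f''}}} \trgSLS{\SigmaSLS'}$ follows from $\src{B} \brel \trgSLS{B}$ (the $\pc$ is restored to related return addresses) together with the unchanged related memories and registers, exactly as in the speculation subcase of \Thmref{thm:v5-fwd-sim-stm-lfence}. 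I expect the cross-boundary return (\Cref{tr:ret}) and the internal return (\Cref{tr:ret-internal}) to be identical up to the decorator on the emitted return action, which is matched on both sides.
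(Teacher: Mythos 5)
Your proposal is correct and follows essentially the same route as the paper's own proof: the homomorphic cases are discharged by appeal to the store-fence argument (with the correct observation that $\pstore{x}{e}$ is now a single non-speculating step), and the $\pret$ case assembles the target execution as \Cref{tr:sls-spec} followed by the inserted barrier and an immediate rollback, yielding the trace $\trgSLS{\tau^{\taint}\cdot\rollbackObsSLS}$ which is related to the source return action via \Cref{tr:ac-rel-rlb}, with the post-state relation recovered from $\src{B}\brel\trgSLS{B}$. The paper handles the internal/cross-boundary returns and the impossibility of the retback case exactly as you anticipate.
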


\begin{lemma}[Initial States are Related]\label{thm:sls-ini-state-rel}(\showproof{sls-ini-state-rel})
	\begin{align*}
		&
		\forall \src{P}, \forall \src{\OB{f}}=\dom{\src{P}.\src{F}}, \forall \ctxSLS{}
		\\
		&
		\initFunc{\backtrfencec{\ctxSLS{}}\hole{P}} \srelref_{\src{\OB{f}}}\, \initFuncSLS{\ctxSLS{}\hole{\complfenceSLS{\src{P}}}}
	\end{align*}
\end{lemma}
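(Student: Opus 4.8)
The plan is to mirror the proof of \Thmref{thm:v4-ini-state-rel}, since the SLS compiler $\complfenceSLS{\cdot}$ and the fence compiler $\complfenceS{\cdot}$ act identically on memories, register files, and every instruction except the one they guard (returns for $\complfenceSLS{\cdot}$, stores for $\complfenceS{\cdot}$), and since the state relation $\srel_{\src{\OB{f}}}$ for SLS is exactly the one from \Cref{comp:v4-fence}. First I would unfold both sides: by \Cref{tr:sls-init} the target initial state is $\initFuncSLS{\ctxSLS{}\hole{\complfenceSLS{\src{P}}}} = \trgSLS{\initFunc(M, \OB{F}, \OB{I}), \bot, \safeta}$, and by \Cref{tr:ini-us} the source state $\initFunc{\backtrfencec{\ctxSLS{}}\hole{P}}$ has window $\bot$ with all cells and registers tainted $\safeta$. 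It then suffices to discharge each conjunct of $\srel_{\src{\OB{f}}}$ in turn.

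For the memory relation $\src{M} \hrel \trgSLS{M}$, I would use \Cref{tr:plug-us} to note that the context heap $M_A$ and program heap $M_C$ have disjoint domains, so $\src{M} = \backtrfencec{\trgSLS{M_A}} \cup \src{M_C}$ and $\trgSLS{M} = \trgSLS{M_A} \cup \complfenceSLS{M_C}$. The cells outside both domains are set to $0 : \safeta$ (positive addresses) or $0 : \unta$ (negative addresses) identically on each side by $\initFunc{}$, hence related; the program part is related by \Thmref{thm:v4-heap-rel-comp-lfence}, and the (natural-number-domain) context part by \Thmref{thm:v4-heap-rel-bt-lfence}. For the register file $\src{A} \rrel \trgSLS{A}$, every register is uniformly $0 : \safeta$ and $\pc$ holds the label of \emph{main}, related to itself by the label case of $\vrel$. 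The stack relation $\src{\OB{B}} \brel \trgSLS{\OB{B}}$ holds since both frames are empty; $\src{\OB{F}} \equiv \trgSLS{\OB{F}}$ and $\src{\OB{I}} \equiv \trgSLS{\OB{I}}$ follow by inspecting $\complfenceSLS{\cdot}$ and $\backtrfencec{\cdot}$; and the target window and taint are $\bot$ and $\safeta$ directly from $\initFuncSLS{}$.

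The only conjunct warranting care is the program relation $\src{\OB{F}; p} \crel_{\src{\OB{f}}} \trgSLS{\OB{F}; p}$, since $\complfenceSLS{\cdot}$ is the component that alters the code layout by inserting an $\trgSLS{\pbarrier}$ after each $\retC$. The relation $\crel$, however, only requires that functions named in $\src{\OB{f}}$ (the compiled component) match their compilation and that functions outside $\src{\OB{f}}$ (the back-translated context) match their back-translation; since $\complfenceSLS{\cdot}$ inserts instructions but never deletes a source instruction nor relabels an existing one, and $\backtrfencec{\cdot}$ inverts the syntactic mapping on context code, both halves hold by routine inspection of the two mappings. I expect no genuine obstacle: the argument is structurally identical to the fence case, and the move from guarding stores to guarding returns does not touch the initial-state relation, which is fixed before any instruction executes.
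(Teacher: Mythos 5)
Your proposal is correct and matches the paper's proof, which simply states that the argument is analogous to \Cref{thm:v4-ini-state-rel}: you unfold the initial states, split the memory via \Cref{tr:plug-us}, and discharge each conjunct of $\srel_{\src{\OB{f}}}$ using \Cref{thm:v4-heap-rel-comp-lfence} and \Cref{thm:v4-heap-rel-bt-lfence}, exactly as in the fence-compiler case. Your extra observation that the program relation $\crel$ is unaffected because $\complfenceSLS{\cdot}$ only inserts barriers and never deletes or relabels source instructions is precisely the remark the paper makes in the proof of \Cref{thm:v4-ini-state-rel}, so there is nothing to add.
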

 
\begin{proof}[Proof of \Thmref{thm:sls-lfence-comp-rdss}]\proofref{}{sls-lfence-comp-rdss}\hfill

    Instantiate $\src{A}$ with $\backtrfencec{\trgSLS{A}}$.
    This holds by \Thmref{thm:sls-corr-bt-lfence}.
\end{proof}

\BREAK

\begin{proof}[Proof of \Thmref{thm:sls-all-lfence-comp-are-rdss}]\proofref{}{sls-all-lfence-comp-are-rdss}\hfill

    By \Thmref{thm:ss-sni-source} we know all source languages are SS: $\text{HPS:} \forall\src{P}\ldotp \contract{}{NS}\vdash P : \rss$.
    
    Since \Thmref{thm:sls-lfence-comp-rdss} we have HPC: $\contractSpec{}{\SLSv}\vdash \complfenceSLS{\cdot} : \rdss$.

    By \Thmref{thm:rdss-eq-rdsp} we have $\text{HPP:} \contractSpec{}{\SLSv}\vdash \complfenceSLS{\cdot} :\rdssp$

    Unfolding \Thmref{def:rdsspc} of HPP we get 
    $\forall\src{P}\ldotp \text{ if } \contract{}{NS}\vdash\src{P} : \rss \text{ then } \contractSpec{}{\SLSv}\vdash\comp{\src{P}} : \rss$. 
  
    Now, we instantiate it with HPS and can conclude that $\forall\src{P}\ldotp \contractSpec{}{\SLSv}\vdash\complfenceSLS{P} : \rss$.
    
\end{proof}

\BREAK

\begin{proof}[Proof of \Thmref{thm:sls-ini-state-rel}]\proofref{}{sls-ini-state-rel}\hfill

Analogous to \Thmref{thm:v4-ini-state-rel},
\end{proof}

\BREAK

\begin{proof}[Proof of \Thmref{thm:sls-corr-bt-lfence}]\proofref{}{sls-corr-bt-lfence}\hfill

    This holds by \Thmref{thm:sls-ini-state-rel} and \Thmref{thm:sls-bwd-sim-lfence}.
\end{proof}

\BREAK

\begin{proof}[Proof of \Thmref{thm:sls-bwd-sim-lfence}]\proofref{}{sls-bwd-sim-lfence}\hfill

We proceed by induction on $\trgSLS{\SigmaSLS \bigspecarrowSLS{\tra{^{\taint}}} \SigmaSLS'}$:
\begin{description}
    \item[\Cref{tr:sls-reflect}]
    Trivial.
    \item[\Cref{tr:sls-single}]
    We have 
    \begin{align*}
        \trgSLS{\SigmaSLS \bigspecarrowSLS{\OB{\tau^{\taint}}} \SigmaSLS ''} \\
        \trgSLS{\SigmaSLS '' \specarrowSLS{\tau^{\taint}} \SigmaSLS '} \\ 
        \trgSLS{\SigmaSLS ''} = \trgSLS{\OB{C''}, \OB{B''}, \tup{p ,M'', A''}, \bot, \safeta}
    \end{align*}

    Applying IH on $\trgSLS{\SigmaSLS  \bigspecarrowSLS{\OB{\tau^{\taint''}}} \SigmaSLS ''}$ we get: 
    \begin{enumerate}
        \item $\src{\Omega  \nsbigarrow{\tras{^{\sigma''}}} \Omega''}$ and
        \item $\src{\tras{^{\sigma''}}} \tracerel \trgSLS{\tra{^{\taint''}}}$ and
        \item $\src{\Omega''} \srelref_{\src{\OB{f''}}}\, \trgSLS{\SigmaSLS {''}}$
        \item $\src{p(A''(\pc))} = \src{l''} : \src{i''}$ and $\trgSLS{p(A''(\pc))} = \trgSLS{l''} : \trgSLS{i''}$
    \end{enumerate}

     Let $\ffun{\src{A''(\pc)}} = \src{f''}$ and let $\ffun{\trgSLS{l'}} = \trgSLS{f'}$ (where $\trgSLS{\SigmaSLS' . A(\pc)} = \trgSLS{l'}$).

    From  $\src{\Omega''} \srelref_{\src{\OB{f''}}}\, \trgSLS{\SigmaSLS{''}}$ we have $\src{\OB{F}} \equiv \trgSLS{\OB{F}}$. Thus, there is $\src{f'}$.
    
    We proceed by case analysis on $\src{f''} \in \src{\OB{f''}}$:
    \begin{description}
        \item[$\src{f''} \in \src{\OB{f''}}$ (\textbf{in the compiled component})]
            By case analysis on $\src{f'} \in \src{\OB{f''}}$:
            \begin{description}
                \item[$\src{f'} \in \src{\OB{f''}}$ (\textbf{in the compiled component})]
                This holds by \Thmref{thm:v4-bwd-sim-comp-steps-lfence} together with .
                
                \item[$\src{f'} \notin \src{\OB{f''}}$ (\textbf{in the context})]
                This switch from component to context arises in two cases:
                \begin{description}
                    \item[\textbf{call}]
                    This is a call from a compiled function to a context function.
                    In this case we know that $\trgSLS{\SigmaSLS '' \specarrowSLS{\tau^{\taint}} \SigmaSLS '}$ was derived by \Cref{tr:sls-nospec-act} combined with \trgSLS{\Cref{tr:call}} and we know $\trgSLS{\tau^{\taint}} = \trgSLS{\clh{}^{\safeta}}$.
                    
                    This is analogous to the corresponding case in \Thmref{thm:v4-bwd-sim-lfence}.

                    \item[\textbf{return}]
                    This is a return from a compiled function to a context function.
                    Since speculation happens here and we are in the component, we can reuse the reasoning of \Thmref{thm:sls-fwd-sim-stm-lfence} in the ret case.

                    The rest of the case is analogous to the corresponding case in \Thmref{thm:v4-bwd-sim-lfence}.
                \end{description}
            \end{description}
        
        \item[$\src{f''} \notin \src{\OB{f''}}$ (\textbf{in the context})]

            By case analysis on $\src{f'} \in \src{\OB{f''}}$:
            \begin{description}
                \item[$\src{f'} \in \src{\OB{f''}}$ (\textbf{in the compiled component})]
                    
                    This switch from context to component arises in two cases:
                    
                    \begin{description}
                        \item[\textbf{call}]
                        
                        This is a call from a context function to a compiled function.
                        This is the dual of the case for call above.

                        \item[\textbf{return}]
                        This is a return from a context function to a compiled function.
                        In this case we know that $\trgSLS{\SigmaSLS '' \specarrowSLS{\tau^{\taint}} \SigmaSLS '}$ was derived by \Cref{tr:sls-nospec-act} combined with \trgSLS{\Cref{tr:ret}} and we know $\trgSLS{\tau^{\taint}} = \trgSLS{\retObs{}!^{\safeta}}$.

                        This is analogous to the corresponding case in \Thmref{thm:v4-bwd-sim-lfence}.
                        Note that no speculation can happen here because the $\pret$ instruction is part of the context.
                            
                \end{description}

                \item[$\src{f'} \notin \src{\OB{f''}}$ (\textbf{in the context})]
                This holds by \Thmref{thm:v4-back-sim-bts-lfence}.
                
            \end{description}
        
    \end{description}
    
    \item[\Cref{tr:sls-silent}]
    
    Analogous to the corresponding case in \Thmref{thm:v4-bwd-sim-lfence}.
\end{description}
\end{proof}

\BREAK

\begin{proof}[Proof of \Thmref{thm:sls-bwd-sim-comp-steps-lfence}]\proofref{}{sls-bwd-sim-comp-steps-lfence}\hfill

Analogous to \Thmref{thm:v4-bwd-sim-comp-steps-lfence} using  \Thmref{thm:sls-fwd-sim-stm-lfence}.
\end{proof}

\BREAK

\begin{proof}[Proof of \Thmref{thm:sls-fwd-sim-stm-lfence}]\proofref{}{sls-fwd-sim-stm-lfence}\hfill

    Since $\src{\Omega} \srel_{\src{\OB{f''}}}\ \trgSLS{\SigmaSLS}$ we know that $\src{p(A(\pc))} = \trgSLS{p(A(\pc))}$.
    We proceed by induction on $\nsarrow{}$:
    \begin{description}
        \item[\Cref{tr:skip}, \Cref{tr:assign}, \Cref{tr:load}, \Cref{tr:load-prv}, \Cref{tr:beqz-sat}, \Cref{tr:beqz-unsat}, \Cref{tr:call}, \Cref{tr:call-internal}, \Cref{tr:callback}]

        Analogous to the corresponding cases in \Thmref{thm:v4-fwd-sim-stm-lfence}.

        \item[\Cref{tr:store}, \Cref{tr:store-prv}] 

        The only change to the state is $\src{A'} = \src{A[pc \mapsto \succes(\pc)}$ and $\src{M'} = \src{M[n \mapsto A(x)]}$ where $\src{\exprEval{A}{e}{n : \taint}}$.

        Furthermore, we know $\src{\tau^\sigma} = \src{\storeObs^{\safeta}}$.
         
        \begin{description}
            \item[$\trgSLS{\SigmaSLS \bigspecarrowSLS{\tra{^{\taint}}} \SigmaSLS'}$]
            By \Thmref{thm:v4-fwd-sim-exp-lfence} we get $\trgSLS{\exprEval{A}{\complfenceSLS{e}}{\complfenceSLS{n} : \taintSLS}}$ and $\src{\taint} \equiv \taintSLS$.
            
            Thus, we can use \Cref{tr:store} to derive $\OmegaSLS \trgSLS{\nsarrow{\storeObs{\complfenceSLS{n}}^{\taint'}}} \OmegaSLS'$ with  $\trgSLS{A'} = \trgSLS{A[pc \mapsto \succes(\pc)}$ and $\trgSLS{M'} = \trgSLS{M[\complfenceSLS{n} \mapsto A(x)]}$.

            The rest follows trivially.
            
            \item[$\src{\storeObs{n}^{\safeta}} \tracerel \trgSLS{\storeObs{\complfenceSLS{n}}^{\taintpc \glb \taint'}}$]
            From $\srel$ we have $\trgSLS{\taintpc} = \trgSLS{\safeta}$ and thus $\trgSLS{\taintpc \glb \taint'} = \trgSLS{\safeta}$.
            
            The rest follows from \Cref{tr:ac-rel-rd} with \Thmref{thm:v4-val-rel-comp-lfence} for $\src{n} \vrel \complfenceSLS{n}$.

            \item[$\src{\Omega'} \srelref_{\src{\OB{f''}}}\, \trgSLS{\SigmaSLS'}$]
            Follows from $\src{A'} \rrel \trgSLS{A'}$ and $\src{M'} \rrel \trgSLS{M'}$ which follow from $\src{A} \rrel \trgSLS{A}$ and $\src{M} \hrel \trgSLS{M}$.
        \end{description}
        
        \item[\Cref{tr:ret}]
        This is the case where we need to account for multiple steps in the target, since the $\complfenceSLS{\cdot}$ adds an $\trgSLS{\pbarrier}$ instruction after the $\pret$ instruction:
        \begin{align*}
             \complfenceSLS{l : \pret} &= \trgSLS{
                                            \begin{aligned}[t]
                                                    &
                                                    l : \pret 
                                                    ~\text{\Cref{tr:sls-spec}}
                                                        \\
                                                        &\
                                                       l' :  \pbarrier
                                                       ~\text{\Cref{tr:sls-barr-spec}}
                                            \end{aligned}
                                            }
        \end{align*}
        The only change to the state is $\src{A'} = \src{A[pc \mapsto l]}$ and the removal of the return location $\src{l}$ from $\src{B}$.

        Furthermore, we know $\tau^\sigma = \retObs^{\safeta}$.
         \begin{description}
            \item[$\trgSLS{\SigmaSLS \bigspecarrowSLS{\tra{^{\taint}}} \SigmaSLS'}$]
                
                Thus, we can use \Cref{tr:ret} to derive $\OmegaSLS \trgSLS{\nsarrow{\pret^{\safeta}}} \OmegaSLS'$ with $\trgSLS{A'} = \trgSLS{A[pc \mapsto l]}$ where $\trgSLS{\OB{B}} = \trgSLS{\OB{B'} \cdot l}$.

                This is used in the step \Cref{tr:sls-spec}. The next instruction is the $\pbarrier$ and after \Cref{tr:sls-barr}, a rollback is triggered via \Cref{tr:sls-rollback}.

                Thus, we have the execution  $\SigmaSLS \bigspecarrowSLS{\retObs^{\taintpc \glb \safeta} \cdot \rollbackObsSLS} \SigmaSLS'$, where $\SigmaSLS' = \trgSLS{\tup{\OB{C}, \OB{B'}, \tup{p, M, A'}, \bot, \safeta}}$

            \item[$\src{\retObs^{\safeta}} \tracerel \trgSLS{\retObs^{\taintpc \glb \safeta} \cdot \rollbackObsSLS }$]
                From $\srel$ we have $\trgSLS{\taintpc} = \trgSLS{\safeta}$ and thus $\trgSLS{\taintpc \glb \safeta} = \trgSLS{\safeta}$.

                Next, from \Cref{tr:ac-rel-rlb} we get $\src{\epsilon} \arel \rollbackObsSLS$  and are finished.

            \item[$\src{\Omega'} \srelref_{\src{\OB{f''}}}\, \trgSLS{\SigmaSLS'}$]
                Follows from $\src{A'} \rrel \trgSLS{A'}$ which follows from $\src{A} \rrel \trgSLS{A}$ and $\src{B} \brel \trgSLS{B}$.

                Furthermore, $\src{B'} \brel \trgSLS{B'}$ which follows from $\src{B} \brel \trgSLS{B}$.
        \end{description}

        \item[\Cref{tr:ret-internal}]
        This case is analogous to case \textbf{ret} above.

        \item[\Cref{tr:retback}]
        This case cannot arise as we do not step to a compiled instructions $\complfenceSLS{i'}$ in the target.
    \end{description}
    
\end{proof}

\section{The USLH Compiler}

Here we describe ultimate speculative-load-hardening (USLH) by \citet{uslh}.
The authors show that variable time instructions can be used to leak a secret.

This is not captured in the model of \citet{S_sec_comp}, because variable time latency is not captured in the attacker model that \citet*{S_sec_comp} use.

While the authors of \cite{uslh} have a proof that their uslh is secure (w.r.t. a notion of speculative constant time), their proof is not robust in our sense.

To model variable time latency instructions, we extend our model of \muasm{} with a second assignment operation for variable time instructions. This new assignment operations leaks the operands of the operation.

We note that the base for this compiler is the strong-slh compiler $\compsslhB{\cdot}$ of \cite{S_sec_comp} without the additional rule 
for variable time latency instructions. 
Thus, the compiler $\compuslhB{\cdot}$ and $\compsslhB{\cdot}$ only differ for variable time latency instructions and we do not
present $\compsslhB{\cdot}$ separately.

\subsection{Extensions to \muasm}

We extend \muasm{} by adding a new assignment instruction 
\begin{figure}[!h]
\hfsetfillcolor{red!10}
\hfsetbordercolor{red} \centering
    \begin{alignat*}{3}
    \textit{(Instructions)} \quad  &i  &\coloneqq \quad & \cdots \mid \tikzmarkin{last} \vassign{x}{y}{z}
    \tikzmarkend{last}
    \end{alignat*}
\end{figure}

Furthermore, we need to extend the trace model to include this new form of leakage:

\begin{align*}
	\mi{Heap\&Pc\ Act.s}~\src{\delta} \bnfdef&\ \cdots \mid \opObs{x}{y}
\end{align*}

\begin{center}\small
\mytoprule{\sigma \nsarrow{\tau} \sigma}

\typerule{VAssign}
{
\select{p}{\av(\pc)} = \vassign{x}{y}{z} & x \neq \pc & \exprEval{\av}{y}{n1} & \exprEval{a}{y}{n2}
}
{
\tup{p,\mv, \av} \nsarrow{\opObs{n1}{n2}} \tup{p, \mv, \av[\pc \mapsto \av(\pc)+1,x \mapsto n1 \binop n2]}
}{vassign}

\typerule{T-VAssign}
{
\select{p}{a(\pc)} = \vassign{x}{y}{z} & x \neq \pc & \exprEval{a}{y}{n1 : \taint} & \exprEval{a}{y}{\taint'}
}
{
\taintpc; C;\OB{B};\tup{p,\mta,a} \nsarrow{\taint \lub \taint'} \taintpc; C;\OB{B}; \tup{p, \mta, a[\pc \mapsto a(\pc)+1,x \mapsto n1 \binop n2 : \taint \lub \taint']}
}{t-vassign}
\end{center}

And we extend the Action relation $\arel$ as well:
\begin{center}
    \mytoprule{\text{Action relation} \areldef }

    \typerule{Action Relation - vassign}{
		\src{v} \vrel \trgB{v}
		&
            \src{v'} \vrel \trgB{v'}
            &
		\src{\taint}\equiv\taint'
	}{
		\src{\opObs{v}{v'}^\taint} \arel \trgB{\opObs{v}{v'}^{\taint'}}
	}{ac-rel-vassign}
 
\end{center}

\subsection{The Problem}

We now explain why the model of \citet{S_sec_comp} does not capture this vulnerability.
Here is the program taken from \cite[Listing 5]{uslh}  translated to \muasm{}:
\begin{lstlisting}[style=MUASMstyle, caption={Example Program of \cite{uslh} translated (roughly) to \muasm{}}, label={lst:uslh-example}, escapeinside=!!]
Main:
    x !$\leftarrow{}$! isPublic
    !$\pjz{x}{\bot}$!
    !$\vassign{v}{v}{v}$!
    !$\vdots$!
\end{lstlisting}
Depending on the value, which can be slow or fast. The memory access is cached or not.

The Compiler combines SSLH of \citet{S_sec_comp} and the countermeasure proposed by \cite{uslh} into one compiler. We follow the approach by \cite{uslh} in Figure 5.

To protect across function calls, the high bits of the stack pointer $rsp$ are used.
We however use another register $\rslhC$ that we use.
Before every function call and before every return we store the speculation flag in the register $\rslh$.
Next, we change the prologue of every function to contain an assignment that recovers the speculation from $\rslhC$.
This more closely models the current implementation of SLH in compilers.

{\allowdisplaybreaks
\begin{align*}
    \compuslhB{ M ; \OB{F} ; \OB{I}} &= \trgB{ \compuslhB{M} ; \compuslhB{\OB{F}} ; \compuslhB{\OB{I}}}
	\\\\
	\compuslhB{ \srce} &= \trgB{\emptyset}
	\\
	\compuslhB{ \OB{I}\cdot f } &= \trgB{ \compuslhB{ \OB{I} }\cdot \compuslhB{f} }
	\\\\
	\compuslhB{ M ; -n\mapsto v : \unta} &= \trgB{\compuslhB{M} ; -\compuslhB{n} \mapsto \compuslhB{v} : \unta}
	\\\\
	\compuslhB{fname, p_1;p_2} &= \trgB{fname, l_{new} : \passign{\rslh}{\rslhC}; \compuslhB{p_1;p_2}}
	\\
	\compuslhB{p_1;p_2} &= \trgB{\compuslhB{p_1} ; \compuslhB{p_2}}
	\\\\
	\compuslhB{l : \pskip} &= \trgB{l : \pskip} 
	\\
        \compuslhB{\pcondassign{x}{e}{e'}} &= \trgB{
                                            \begin{aligned}[t]
                                                    &
                                                    l :\passign{\rscr}{\compuslhB{e}}
                                                        \\
                                                        &\
                                                        l_1 :\passign{\rscr'}{\compuslhB{e'}}
                                                            \\
                                                            &\
                                                            l_2 :\pcondassign{\rscr}{0}{\rslh}
                                                                \\
                                                                &\
                                                                l_3 :\pcondassign{\rscr'}{0}{\rslh}
                                                                    \\
                                                                    &\
                                                                    l_4 :\pcondassign{x}{\rscr}{\rscr'}
                                            \end{aligned}
                                            }
        \\
        \compuslhB{l : \pload{x}{e}} &= \trgB{
                                            \begin{aligned}[t]
                                                    &
                                                    l :\passign{\rscr}{\compuslhB{e}}
                                                        \\
                                                        &\
                                                        l' :\pcondassign{\rscr}{0}{\rslh}
                                                            \\
                                                            &\
                                                            l'' :\pload{x}{\rscr}
                                            \end{aligned}
                                        }
        \\
        \compuslhB{l : \ploadprv{x}{e}} &= \trgB{
                                            \begin{aligned}[t]
                                                    &
                                                    l :\passign{\rscr}{\compuslhB{e}}
                                                        \\
                                                        &\
                                                        l' :\pcondassign{\rscr}{0}{\rslh}
                                                            \\
                                                            &\
                                                            l'' :\ploadprv{x}{\rscr}
                                            \end{aligned}
                                        }
        \\
        \compuslhB{l : \pstore{x}{e}} &= \trgB{
                                            \begin{aligned}[t]
                                                    &
                                                    l :\passign{\rscr}{\compuslhB{e}}
                                                        \\
                                                        &\
                                                        l' :\pcondassign{\rscr}{0}{\rslh}
                                                            \\
                                                            &\
                                                            l'' :\pstore{x}{\rscr}
                                            \end{aligned}
                                        }
        \\
        \compuslhB{l : \pstoreprv{x}{e}} &= \trgB{
                                            \begin{aligned}[t]
                                                    &
                                                    l :\passign{\rscr}{\compuslhB{e}}
                                                        \\
                                                        &\
                                                        l' :\pcondassign{\rscr}{0}{\rslh}
                                                            \\
                                                            &\
                                                            l'' :\pstoreprv{x}{\rscr}
                                            \end{aligned}
                                        }
        \\
	\compuslhB{l : \passign{x}{e}} &= \trgB{
                                            \begin{aligned}[t]
                                                    &
                                                    l : \passign{\rscr}{\compuslhB{e}}
                                                        \\
                                                        &\
                                                        l' :\pcondassign{\rscr}{0}{\rslh}
                                                            \\
                                                            &\
                                                            l'' :\passign{x}{\rscr}
                                            \end{aligned}
                                        }
        \\
        \compuslhB{l : \vassign{x}{e}{e'}} &= \trgB{
                                            \begin{aligned}[t]
                                                    &
                                                    l :\passign{\rscr}{\compuslhB{e}}
                                                        \\
                                                        &\
                                                        l_1 :\passign{\rscr'}{\compuslhB{e'}}
                                                            \\
                                                            &\
                                                            l_2 :\pcondassign{\rscr}{0}{\rslh}
                                                                \\
                                                                &\
                                                                l_3 :\pcondassign{\rscr'}{0}{\rslh}
                                                                    \\
                                                                    &\
                                                                    l_4 :\vassign{x}{\rscr}{\rscr'}
                                            \end{aligned}
                                        }
        \\
        \compuslhB{l : \pjmp{e}} &= \trgB{
                                            \begin{aligned}[t]
                                                    &
                                                    l : \passign{\rscr}{\compuslhB{e}}
                                                        \\
                                                        &\
                                                        l' : \pcondassign{\rscr}{\bot}{\rslh}
                                                            \\
                                                            &\
                                                            l'' :\pjmp{\rscr}
                                            \end{aligned}
                                        }
	\\
        \\
	\compuslhB{l : \pcall{f}} &= \trgB{
										\begin{aligned}[t]
												&
												l :\passign{\rslhC}{\rslh} \\
													&\
													l' :\pcall{f}
														\\
														&\ l'' : \passign{\rslh}{\rslhC}
										\end{aligned}
									} ~ \text{Here $f$ is a function name}
	\\
	\compuslhB{l : \pret} &= \trgB{
									\begin{aligned}[t]
											&
											l :\passign{\rslhC}{\rslh} \\
												&\
												l' :\pret
									\end{aligned}
								}
	\\
	\compuslhB{l : \pjz{x}{l'}} &= \trgB{
                                            \begin{aligned}[t]
                                                    &
                                                    l : \passign{\rscr}{x}
                                                        \\
                                                        &\
                                                        l_1' : \pcondassign{\rscr}{0}{\rslh}
                                                            \\
                                                            &\
                                                           l_1'' :  \pjz{\rscr}{l_{new}} \\
                                                                &\
                                                                l_2 : \passign{\rslh}{\rslh \lor \rscr} \\
                                                                    &\
                                                                    l_2' : \pjmp{next(l)}
                                                                        \\
                                                                        &\ 
                                                                        l_{new} : \passign{\rslh}{\rslh \lor \neg \rscr}
                                                                            \\ 
                                                                            &\ 
                                                                            l_{new}' : \pjmp(l')
                                            \end{aligned}
                                        }
	\\\\
\end{align*}
}
The compiler uses a scratch registers to save the immediate values and we reserve a register $\rslh$ as a speculation flag. This register is initially set to $\mi{false}$ and is set to $\mi{true}$ when speculation starts. Note, that we assume that this register is not used by the program and only inserted by the compiler.

We need to be careful when compiling the $\jzC{}$ instruction. Because we operate on the \muasm{} level we actually do not have access to both branches when just looking at the instruction. We know that the instructions after the branch instruction are the ones of the branch is not taken. However, we do not know where the label $l$ in the program is for the instruction $\pjz{x}{l}$. However, we need the code at this position as well to correctly update the predicate bit $\rslh$ (Note that the same would have to be done for a fence compiler since you need to protect both branches).
In general, this means that the compilation of a $\jzC{}$ instruction is a non-local transformation.
The idea is to \textit{redirect} the original branch target. That is achieved by the label $l_{new}$ which first updates the register $\rslh$ and then jumps to the original target $l'$ when the branch is taken.
If the branch is not taken then we go to $l_2$ and update $\rslh$. A jump to $\mi{next}(\src{l})$ allows us to access the original "else" branch of the conditional. All in all, this way we achieved to update $\rslh$ in both parts of the branches without needing to modify other parts of the program. This is a local transformation for the branch.

\subsection{Relations and Proofs}

\mytoprule{\text{Memory relation} \hreldef \text{Value relation} \vreldef }
\begin{center}
	\typerule{Memory - base }{}{
		\srce \hrel \trgBe
	}{hrel-b}  
	\typerule{Memory - ind }{
		\src{M} \hrel \trgB{M}
		\\
		\src{z} \vrel \trgB{z}
		&
		\src{v} \vrel \trgB{v}
		&
		\src{\sigma} \equiv \taintB
	}{
		\src{M ; z\mapsto v:\taint} \hrel \trgB{M; z\mapsto v:\taint}
	}{hrel-i}
	\typerule{Memory - start }{
		\src{M} \hrel \trgB{M}
		&
		\src{M'} \hrel \trgB{M'}
		\\
		\src{v} \vrel \trgB{v}
		&
		\src{\sigma} \equiv \taint
	}{
		\src{M ; 0\mapsto v:\sigma ; M'} \hrel \trgB{M ;0\mapsto v:\taint ; M'}
	}{hrel-start}

    \typerule{ Register File }{
		\forall x \in \Reg \setminus \{\rscr, \rscr', \rslh, \rslhC \}, \src{A}(x) = \src{v} : \taint
		&
		\trgB{A}(x) = \trgB{v} : \taintB
		\\
		\src{v} \vrel \trgB{v}
		& 
		\taint \equiv \taintB
            & 
            \trgB{A(\rslh)} = \trgB{false}
	}{
		\src{A} \rrel \trgB{A}
	}{rsrel}
 
\end{center}
\botrule

Since we use scratch registers $\rscr$ and $\rscr'$ in our compiler to compute the values of expression, we need to exclude them in the relation $\rrel$. These scratch registers are only used in the compiled target program.
Normally, we would parameterize the relation $\rrel$ by a set containing registers used as scratch registers. However, since there are only two we decided to just 'hard code' them in the relation.

The Stack relation $\brel$ is indexed by the current function. 
Since we added an instruction ot the beginning of each compiled function we changed teh return address
between compiled and source functions in the component. They are offsetted by 1.
That is why we track the current function $f$ and the component functions $\OB{F}$, now on $\brel_f$. This way we can relate the correct 
return addresses in component and context. 
\begin{center}
	\mytoprule{\text{Register relation} \rreldef \text{ Program relation} \creldef \text{Stack Relation} \brel \text{ State relation} \ssreldef}

    \typerule{Stack - base }{}{
		\OB{F}, \srce \brel_{f} \OB{F}, \trgBe
	}{v1-brel-b}
 \typerule{Stack Region - start }{
		\src{\OB{n}} \brel \trgB{\OB{n}} & f \notin \OB{F}
		\\
		\src{\sigma} \equiv \taintB
        & 
        \src{n} \vrel \trgB{n}
	}{
		\src{\OB{F}, \OB{n} ; l :\sigma} \brel_{f} \trgB{\OB{F}, \OB{n} ; l:\taintB}
	}{v1-brel-region-cont}
 
	\typerule{Stack Region - start-comp }{
		\src{\OB{n}} \brel \trgB{\OB{n}} & f \in \OB{F}
		\\
		\src{\sigma} \equiv \taintB
        & 
        \src{l} \vrel \trgB{l + 1}
	}{
		\src{\OB{F}, \OB{n} ; l:\sigma} \brel_f \trgB{\OB{F}, \OB{n} ;l :\taintB}
	}{v1-brel-region-comp}

\typerule{ Programs }{
        \forall \src{l : i} \in \src{p} \text{ and } \ffun{\src{l}} = \src{f} \text{ and } \ffun{\compuslhB{\src{l}}} = \trgB{f} 
        \\
		\text{ if } \src{f} \in \src{\OB{F}} \text{ and } \src{f} \in \src{\OB{f}}
		\text{ then } \trgB{p}(\compuslhB{l}) = \compuslhB{i}
		\\
  \forall \trgB{l : i} \in \trgB{p} \ldotp \ffun{\trgB{l}} = \trgB{f} \text{ and } \trgB{f} \notin \src{\OB{f}} \text{ and } \trgB{f} \in \trgB{\OB{F}}\text{ then }
    \src{p(l)} = \backtrfencec{\trgB{i}}
		\\
	}{
		\src{\OB{F}; \src{p}} \crel_{\src{\OB{f}}} \trgB{\OB{F}; \trgB{p}} 
	}{v1-uslh-comps}

 \typerule{ States }{
       \src{\Omega} \srel_{\src{\OB{f''}}}  \trgB{\Omega, \bot, \safeta}
        &
        \forall \trgB{\Omega_S} \in \trgB{\OB{\Omega}}, \src{\Omega} \ssrel \trgB{\Omega_S}
	}{
		\src{\Omega } \ssrel_{\src{\OB{f''}}} \trgB{ (\Omega, \bot, \safeta) \cdot \OB{(\Omega, n, \unta)} }
	}{v1-states}

 \typerule{ Base States }{
        \src{A} \rrel \trgB{A}
        &
		\src{\OB{F}; p} \crel_{\src{\OB{f''}}} \trgB{\OB{F}; p}
		&
		\src{M} \hrel \trgB{M}
		&
        \src{B} \brel \trgB{B}
        &
        \src{\OB{I}} \equiv \trgB{\OB{I}}
	}{
		\src{\OB{F};\OB{I},\OB{B}, \tup{p, M, A}} \srel_{\src{\OB{f''}}} \trgB{\OB{F};\OB{I},\OB{B}, \tup{p, M, A}, \bot, \safeta }
	}{v1-base-states}

 \typerule{ Single States }{
        \src{\OB{F}; p} \crel_{\src{\OB{f''}}} \trgB{\OB{F}; p}
        &
        \src{\OB{I}} \equiv \trgB{\OB{I}}
        &
        \vdash \trgB{M} : \shrel & \vdash \trgB{A}: \shrel
	}{
		\src{\OB{F};\OB{I},\OB{B}, \tup{p, M, A}} \ssrel_{\src{\OB{f''}}} \trgB{\OB{F};\OB{I},\OB{B}, \tup{p, M, A}, \bot, \safeta }
	}{v1-single-states}

 \typerule{Memory - relation same }{
            \forall \trgB{n \mapsto v : \taint} \in \trgB{M} 
            \text{ if } \trgB{n} \geq \trgB{0} \text{ then } \trgB{\taint} = \trgB{\safeta}
	}{
		\vdash \trgB{M}: \shrel
	}{shrel-same}
 \typerule{Register - relation same }{
            \forall \trgB{x} \in \trgB{\Reg}  &
            \trgB{A(x)} = \_ : \trgB{\safeta}
            &
            \trgB{A(\rslh)} = \trgB{true} : \trgB{\safeta}
	}{
		\vdash \trgB{A}: \shrel
	}{shrel-same-register}
 
\end{center}
\botrule

\begin{center}
	
	\scalebox{0.7}{
\begin{forest}
	for tree={
		align = left,
		font = \footnotesize,
		forked edge,
	}
	[\Thmref{thm:v1-uslh-comp-rdss}
			[\Thmref{thm:v1-corr-bt-uslh}
				[\Thmref{thm:v1-ini-state-rel}
					[\Cref{thm:v4-heap-rel-comp-lfence}]
					[\Cref{thm:v4-heap-rel-bt-lfence}, name=heap-bt]
				]
				[\Thmref{thm:v1-bwd-sim-uslh}
						[\Thmref{thm:v1-bwd-sim-comp-steps-uslh}
								[\textbf{\Thmref{thm:v1-fwd-sim-stm-uslh}}
									[
										[\Thmref{thm:v1-fwd-sim-exp-uslh}]
										[\Thmref{thm:spec-rel-satte-safe}]
									]
								]
						]
						[\Thmref{thm:v1-back-sim-bts-uslh}   
							[\Thmref{thm:v4-back-sim-bte-lfence}, name=exp-bt]
						]
				]
			]
	]
	\node (sniSource) at (-4,3) {\Thmref{thm:ss-sni-source}};
	\node (rsscEq) [below=of sniSource] {\Thmref{thm:rdss-eq-rdsp}};
	\draw[red, thick, dotted] ($(sniSource.north west)+(-0.3,0.3)$)  rectangle ($(rsscEq.south east)+(0.9,-0.3)$);
\end{forest}
	}
\end{center}

\begin{theorem}[The uslh compiler is \rdss]\label{thm:v1-uslh-comp-rdss}(\showproof{v1-uslh-comp-rdss})
	\begin{align*}
		\contract{ct + vl}{\Bv}\vdash \compuslhB{\cdot} : \rdss
	\end{align*}
\end{theorem}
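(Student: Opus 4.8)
The plan is to follow exactly the top-level recipe used for the fence and retpoline compilers (e.g.\ \Thmref{thm:v4-lfence-comp-rdss} and \Thmref{thm:v5-retpo-comp-rdss}). Unfolding \Thmref{def:rdss}, I would instantiate the existentially-quantified source context $\src{A}$ with the shared context backtranslation $\backtrfencec{\ctxB{}}$, and then discharge the remaining obligation --- that every target trace of $\ctxB{}\hole{\compuslhB{\src{P}}}$ is $\rels$-related to a source trace of $\backtrfencec{\ctxB{}}\hole{\src{P}}$ --- by appealing to the correctness-of-backtranslation result \Thmref{thm:v1-corr-bt-uslh}. This reduces the theorem to a single application once backtranslation correctness is in place; all the real work lives below it.

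Correctness of the backtranslation decomposes, as for the other compilers, into (i) initial-state relatedness \Thmref{thm:v1-ini-state-rel} and (ii) a generalized backward simulation \Thmref{thm:v1-bwd-sim-uslh}. Backward simulation is obtained from forward simulation plus determinism of $\specarrowB{}$ in the standard way (cf.\ the contradiction argument of \Thmref{thm:v4-bwd-sim-comp-steps-lfence}), and forward simulation splits into the compiled-code case \Thmref{thm:v1-fwd-sim-stm-uslh} and the backtranslated-context case \Thmref{thm:v1-back-sim-bts-uslh}, the latter being essentially identical to \Thmref{thm:v4-back-sim-bts-lfence} since backtranslated code never speculates. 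The crucial design choice here is the state relation: because USLH \emph{masks} values rather than trapping speculation, $\ssrel$ must pair the non-speculative top instance (related by the base relation $\srel$) with speculative instances that each satisfy the invariant $\shrel$, namely that the flag $\trgB{\rslh}$ holds $\trgB{true}$ and that \emph{every} public memory cell and register is tainted $\trgB{\safeta}$.

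The main obstacle is the forward simulation for compiled statements \Thmref{thm:v1-fwd-sim-stm-uslh}, and within it the branch and variable-latency cases. For the branch, I must cope with the non-local transformation that redirects the taken target through the fresh label $l_{new}$ (first updating $\trgB{\rslh}$, then jumping to $\src{l'}$) while the fall-through path updates $\trgB{\rslh}$ at $l_2$ and jumps to $\mathit{next}(\src{l})$; the key fact is that upon misprediction the always-mispredict rule (\Cref{tr:v1-spec}) pushes a speculative instance in which $\trgB{\rslh}$ is set to $\trgB{true}$, re-establishing $\shrel$ for the new instance. I would then prove the supporting lemma \Thmref{thm:spec-rel-satte-safe}, showing that any speculative instance satisfying $\shrel$ both preserves $\shrel$ under $\specarrowB{}$ and emits only $\safeta$ actions: once $\trgB{\rslh}=\trgB{true}$, each conditional mask $\pcondassign{\rscr}{0}{\rslh}$ forces the address/operand scratch registers to $\trgB{0}$, so every $\loadObsKywd$, $\storeObsKywd$, and $\pcObsKywd$ arises from masked (hence safe) values. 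The genuinely new case is the variable-latency instruction: after compilation both operands flow through $\rscr,\rscr'$ masked by $\rslh$, so during speculation the emitted observation is $\opObs{0}{0}$, which is $\safeta$ and therefore $\rels$-related to the absent source action via \Cref{tr:ac-rel-vassign} and \Cref{tr:ac-rel-ep-hp}. Expression forward simulation \Thmref{thm:v1-fwd-sim-exp-uslh} together with the value/memory relatedness lemmas reused from the fence development (\Cref{thm:v4-heap-rel-comp-lfence}, \Cref{thm:v4-heap-rel-bt-lfence}) then close the remaining routine cases.
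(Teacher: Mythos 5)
Your proposal is correct and follows essentially the same route as the paper: the theorem itself is discharged exactly as you describe, by instantiating the source context with the backtranslation and invoking \Thmref{thm:v1-corr-bt-uslh}, and your account of the supporting structure (initial-state relatedness, backward simulation via forward simulation plus determinism, the $\ssrel$/$\shrel$ state relation with $\trgB{\rslh}=\trgB{true}$ and all-safe taints in speculative instances, the branch redirection through $l_{new}$, \Thmref{thm:spec-rel-satte-safe} for safety of speculation, and the masked $\opObs{0}{0}$ observation for variable-latency instructions) matches the paper's lemma decomposition.
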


\begin{theorem}[Correctness of the Backtranslation for uslh]\label{thm:v1-corr-bt-uslh}(\showproof{v1-corr-bt-uslh})
	\begin{align*}
		\text{ if } 
			&\
			\trgB{ \amTracevB{\ctxB{}\hole{\compuslhB{P}}}{\tra{^{\taint}}} }
		\\
		\text{ then }
			&
			\src{ \Trace{\backtrfencec{\ctxB{}}\hole{P}}{ \tras{^{\sigma}}} }
		\\
		\text{ and }
			&\
			\tras{^{\sigma}} \rels \trgB{\tra{^{\taint}}}
	\end{align*}
\end{theorem}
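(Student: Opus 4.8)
The plan is to follow exactly the compositional recipe the paper uses for every other backtranslation-correctness result (cf.\ \Thmref{thm:v4-corr-bt-lfence} and \Thmref{thm:v5-corr-bt-lfence}): reduce the statement to the conjunction of an \emph{initial-state relatedness} lemma and a \emph{generalised backward simulation} lemma, both specialised to $\compuslhB{\cdot}$. Concretely, I would first unfold the hypothesis $\trgB{\amTracevB{\ctxB{}\hole{\compuslhB{P}}}{\tra{^{\taint}}}}$ using \Cref{tr:v1-trace}, obtaining a terminating speculative big-step $\trgB{\initFuncB{(\ctxB{}\hole{\compuslhB{P}})} \bigspecarrowB{\tra{^{\taint}}} \SigmaB'}$ with $\finTypef{\trgB{\SigmaB'}}$.

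Then I would instantiate \Thmref{thm:v1-ini-state-rel} with $\src{\OB{f}} = \dom{\src{P}.\src{F}}$ to obtain that the initial states are related, $\initFunc{\backtrfencec{\ctxB{}}\hole{P}} \ssrel_{\src{\OB{f}}} \initFuncB{(\ctxB{}\hole{\compuslhB{P}})}$. Feeding this relatedness together with the speculative big-step into \Thmref{thm:v1-bwd-sim-uslh} yields a source big-step $\src{\backtrfencec{\ctxB{}}\hole{P}} \nsbigarrow{\tras{^{\sigma}}} \src{\Omega'}$ and the trace relation $\tras{^{\sigma}} \rels \trgB{\tra{^{\taint}}}$. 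Packaging this source run via \Cref{tr:ns-trace} (whose termination side-condition holds because the target run terminated and backtranslated contexts only strip compiler-inserted instructions) gives $\src{\Trace{\backtrfencec{\ctxB{}}\hole{P}}{\tras{^{\sigma}}}}$, which is exactly the conclusion.

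The genuine difficulty is entirely deferred into the backward-simulation lemma, and there into its forward-simulation core \Thmref{thm:v1-fwd-sim-stm-uslh}. Unlike the fence- and retpoline-based compilers, $\compuslhB{\cdot}$ does not stop speculation but \emph{masks} values, so the simulation cannot collapse a speculative transaction into an immediate rollback; instead it must step through each masked instruction and argue that every action it emits is tainted $\safeta$, so that it matches the source via the safe-action, safe-heap, and rollback cases of $\rels$ (\Cref{tr:tr-rel-safe-a,tr:tr-rel-safe-h,tr:tr-rel-rollb}). This in turn requires maintaining, inside every nested speculative instance, the invariant captured by the single-state relation $\ssrel$ and by $\shrel$: once speculation starts, $\rslh$ holds $\trgB{\mi{true}}$ and every public memory cell and register is safe, which is precisely what makes the conditional-select masking $\pcondassign{\rscr}{0}{\rslh}$ force loaded, stored, and computed values (and the operands of $\vassign{x}{y}{z}$, whose leakage is the whole point of USLH) to $0 : \safeta$. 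I expect the most delicate bookkeeping to be (i) showing this invariant is established at each branch (the compiled $\jzC{}$ sets $\rslh$ on both the taken and the redirected path) and preserved across compiled calls and returns through the saved flag $\rslhC$, and (ii) the return-address offset-by-one introduced by the per-function prologue $\passign{\rslh}{\rslhC}$, which is tracked by the function-indexed stack relation $\brel_f$ and must be respected at every cross-boundary transition. The lemma \Thmref{thm:spec-rel-satte-safe} is what discharges the ``all speculative actions are safe'' obligation that these cases ultimately rely on.
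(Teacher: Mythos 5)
Your proposal matches the paper's proof exactly: the paper discharges \Cref{thm:v1-corr-bt-uslh} by combining \Thmref{thm:v1-ini-state-rel} with \Thmref{thm:v1-bwd-sim-uslh}, precisely the decomposition you describe, and the hard work is indeed deferred to the forward-simulation core and the speculation-safety lemma \Thmref{thm:spec-rel-satte-safe} as you anticipate. No gaps; your account of where the USLH-specific difficulty lives is an accurate reading of the paper's proof structure.
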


\begin{theorem}[Generalised Backward Simulation for uslh]\label{thm:v1-bwd-sim-uslh}(\showproof{v1-bwd-sim-uslh})
	\begin{align*}
		\text{ if }
			&\
			\ffun{\src{l}}\in\src{\OB{f''}} \text{ then } \trgB{l} : \trgB{i} = \compuslhB{l} : \compuslhB{i} \text{ else } \src{l} : \src{i} = \backtrfencec{\trgB{l}} : \backtrfencec{\trgB{i}} 
			\\
		\text{ and}
			&\
			\text{ if } \ffun{\src{l'}}\in\src{\OB{f''}} \text{ then } \trgB{l'} : \trgB{i'} = \compuslhB{l'} : \compuslhB{i'} \text{ else } \src{l'} : \src{i'} = \backtrfencec{\trgB{l'}} : \backtrfencec{\trgB{i'}}
		\\
		\text{ and }
			&\
			\SigmaB = \trgB{(C, \OB{B}, \tup{p, M, A}, \bot,\safeta)}
		\\
		\text{ and }
			&\
			\SigmaB' =\trgB{(C, \OB{B'}, \tup{p, M', A'}, \bot,\safeta)}
		\\
		\text{ and }
			&\
			\trgB{ \SigmaB \bigspecarrowB{\tra{^{\taint}}} \SigmaB' }
		\\
		\text{ and }
			&\
			\src{\Omega} \ssrel_{\src{\OB{f''}}} \trgB{\SigmaB}
		\\
        \text{ and }
			&\
			\trgB{p(A(\pc))} = \trgB{l} : \trgB{i}  \text{ and } \trgB{p(A'(\pc))} = \trgB{l'} : \trgB{i'}
		\\
		\text{ then }
			&\
			\src{\Omega}=\src{C, \OB{B}, \tup{p , M, A} \nsbigarrow{\tras{^\sigma}} C, \OB{B'}, \tup{p , M', A'}}=\src{\Omega'}
		\\
        \text{ and }
			&\
			\src{p(A(\pc))} = \src{l} : \src{i} \text{ and } \src{p(A'(\pc))} = \src{l'} : \src{i'}
        \\
		\text{ and }
			&\
			\src{\tras{^\sigma}} \tracerel \trgB{\tra{^{\taint}}}
		\\
		\text{ and }
			&\
			\src{\Omega'} \ssrelref_{\src{\OB{f''}}}\, \trgB{\SigmaB{'}}
	\end{align*}
\end{theorem}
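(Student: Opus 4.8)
The plan is to mirror the structure of the backward-simulation proofs for the fence-based compilers (e.g.\ \Thmref{thm:v4-bwd-sim-lfence} and \Thmref{thm:sls-bwd-sim-lfence}), proceeding by induction on the big-step speculative derivation $\trgB{\SigmaB \bigspecarrowB{\tra{^{\taint}}} \SigmaB'}$. The reflection case (\Cref{tr:v1-reflect}) is immediate: the trace is empty and $\trgB{\SigmaB} = \trgB{\SigmaB'}$, so the source takes zero steps and every conclusion follows from the hypothesis $\src{\Omega} \ssrel_{\src{\OB{f''}}} \trgB{\SigmaB}$. For the single (\Cref{tr:v1-single}) and silent (\Cref{tr:v1-silent}) cases, I would first apply the induction hypothesis to the prefix, recovering a matching source run $\src{\Omega} \nsbigarrow{\tras{^{\sigma''}}} \src{\Omega''}$ with $\src{\Omega''} \ssrel_{\src{\OB{f''}}} \trgB{\SigmaB''}$, and then extend it by the final target step.

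The core of the single/silent cases is the same two-level case analysis used in the fence proofs: first on whether the currently executing function $\src{f''}$ (with $\ffun{\src{A''(\pc)}} = \src{f''}$) lies in the compiled component $\src{\OB{f''}}$ or in the context, and then on whether the step lands in the component or the context. When both source and target remain inside compiled code, I would discharge the step with the compiled-step backward simulation \Thmref{thm:v1-bwd-sim-comp-steps-uslh}; when both remain in context code, with \Thmref{thm:v1-back-sim-bts-uslh}. The four cross-boundary situations (calls and returns between component and context) I would handle inline exactly as in \Thmref{thm:v4-bwd-sim-lfence}, using the fact that cross-component transfers emit a matching call/return action and cannot trigger speculation, so that $\src{A'} \rrel \trgB{A'}$, $\src{B'} \brel \trgB{B'}$, and the action relation all follow from the induction hypothesis. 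The one USLH-specific wrinkle is that $\compuslhB{\cdot}$ prepends $\trgB{\passign{\rslh}{\rslhC}}$ to every compiled body and saves/restores $\rslh$ around $\pcall{}$ and $\pret$, offsetting return addresses by one; this is precisely what the $f$-indexed stack relation $\brel_f$ (\Cref{tr:v1-brel-region-comp}) is designed to absorb.

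The genuinely new difficulty, relative to the fence compilers, is that USLH does \emph{not} stop speculation: compiled code speculates, so $\trgB{\SigmaB}$ may carry nested speculative instances $\trgB{\OB{(\Omega, n, \unta)}}$ that actually emit observations. This is why the state relation $\ssrel$ (\Cref{tr:v1-states}) pairs the source state with a base instance related by $\srel$ \emph{and} a list of speculative instances each related by the single-state relation of \Cref{tr:v1-single-states}, which carries the SLH invariant $\vdash \trgB{M} : \shrel$ and $\vdash \trgB{A} : \shrel$ (public memory is safe and $\trgB{\rslh} = \trgB{true}$). I expect the main obstacle to be showing that this invariant is preserved by each speculative step and that, consequently, every observation produced under speculation is tainted $\trgB{\safeta}$, hence matches an empty or safe source action under $\tracerel$.

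All of this substantive reasoning about masking of loads, stores, and variable-latency assignments, and about the correct updating of $\rslh$ at each branch, is delegated to the forward-simulation lemma \Thmref{thm:v1-fwd-sim-stm-uslh} (which in turn invokes \Thmref{thm:spec-rel-satte-safe}); the backward-simulation statement here then follows by the standard determinism argument already used in \Thmref{thm:v4-bwd-sim-comp-steps-lfence}, since the target semantics is deterministic and forward simulation pins down the unique target run matching each source step.
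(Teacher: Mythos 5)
Your proposal is correct and follows essentially the same route as the paper: the paper's proof is a three-way case split on the kind of reduction, delegating compiled-to-compiled steps to \Thmref{thm:v1-bwd-sim-comp-steps-uslh}, context-to-context steps to \Thmref{thm:v1-back-sim-bts-uslh}, and handling cross-boundary calls/returns as in \Thmref{thm:v4-bwd-sim-lfence}, which is exactly the induction-on-the-big-step-derivation structure you describe. Your additional remarks about the $\rslh$ prologue, the $f$-indexed stack relation, and the delegation of speculative safety to the forward-simulation and \Thmref{thm:spec-rel-satte-safe} lemmas correctly spell out details the paper leaves implicit.
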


\begin{theorem}[Backward Simulation for Compiled Steps in uslh]\label{thm:v1-bwd-sim-comp-steps-uslh}(\showproof{v1-bwd-sim-comp-steps-uslh})
	\begin{align*}
		\text{ if }
			&\
			\trgB{\SigmaB}=\trgB{(C, \OB{B}, \tup{p, M, A}, \bot, \safeta) }
		\\
		\text{ and }
			&\
			\trgB{\SigmaB'}=\trgB{(C, \OB{B'}, \tup{p, M', A'}, \bot, \safeta)}
		\\
		\text{ and }
			&\
			\trgB{\SigmaB \bigspecarrowB{\tra{^{\taint}}} \SigmaB'}
		\\
		\text{ and }
			&\
			\src{\Omega} \ssrelref_{\src{\OB{f''}}}\, \SigmaB
		\\
        \text{ and }
			&\
			\trgB{p(A(\pc))} = \compuslhB{l} : \compuslhB{i} \text{ and } \trgB{p(A'(\pc))} = \compuslhB{l'} : \compuslhB{i'}
		\\
		\text{ then }
			&\
			\src{\Omega}=\src{C, \OB{B}, \tup{p, M, A} \OB{f} \nsarrow{\tau^\sigma} C, \OB{B'}, \tup{p, M', A'} \OB{f'}}= \src{\Omega'}
        \\
        \text{ and }
			&\
			\src{p(A(\pc))} = \src{l} : \src{i} \text{ and } \src{p(A'(\pc))} = \src{l'} : \src{i'}
		\\
		\text{ and }
			&\
			\src{\tau^\sigma} \tracerel \trgB{\tra{^{\taint}}} \qquad \text{( using the trace relation!)}
		\\
		\text{ and }
			&\
			\src{\Omega'} \ssrelref_{\src{\OB{f''}}}\, \SigmaB'
	\end{align*}
\end{theorem}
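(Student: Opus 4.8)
The plan is to derive this backward simulation as a direct consequence of the forward simulation lemma for compiled USLH statements (\Thmref{thm:v1-fwd-sim-stm-uslh}) together with determinism of the target speculative semantics $\contract{ct+vl}{\Bv}$, mirroring the argument already used for the lfence compiler in \Thmref{thm:v4-bwd-sim-comp-steps-lfence}. First I would unfold the hypothesis $\src{\Omega} \ssrel_{\src{\OB{f''}}} \SigmaB$. Since the target program counter $\trgB{A(\pc)}$ sits at the head of the compiled block $\compuslhB{l} : \compuslhB{i}$ and the programs are related by $\crel$ inside the $\ssrel$ relation (\Cref{tr:v1-uslh-comps}), the source instruction at the corresponding label must be $\src{l} : \src{i}$, and symmetrically $\src{l'} : \src{i'}$ for the post-state; this discharges the two conclusion lines about source instructions. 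I would take care here with the one-instruction label offset introduced by the USLH prologue $\passign{\rslh}{\rslhC}$, which is exactly what the program relation and the stack relation $\brel_{f}$ are designed to absorb.

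Next I would argue that the source takes a matching step. Because the source instruction $\src{i}$ is well-defined and the nonspeculative source semantics is deterministic and steps on every non-terminal configuration, there is a step $\src{\Omega \nsarrow{\tau^\sigma} \Omega''}$. Applying \Thmref{thm:v1-fwd-sim-stm-uslh} to this step yields a target big-step $\trgB{\SigmaB \bigspecarrowB{\tra{^{\taint'}}} \SigmaB''}$ with $\src{\tau^\sigma} \tracerel \trgB{\tra{^{\taint'}}}$ and $\src{\Omega''} \ssrel_{\src{\OB{f''}}} \SigmaB''$. Since the target semantics is deterministic, this execution must coincide with the one assumed in the hypothesis, so $\SigmaB'' = \SigmaB'$ and $\trgB{\tra{^{\taint'}}} = \trgB{\tra{^{\taint}}}$. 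Taking $\Omega' = \Omega''$ then yields all three remaining conclusions — the source step, the trace relation, and the preserved state relation — simultaneously. As in the lfence case, phrasing this as a proof by contradiction (assuming the trace relation or the state relation fails for the genuine source successor, then invoking forward simulation and determinism to force a clash) makes the determinism appeal fully explicit and keeps the two conclusions modular.

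The genuinely hard work does not live in this lemma but is packaged inside \Thmref{thm:v1-fwd-sim-stm-uslh}: that lemma must show that each compiled USLH block preserves the speculation-flag invariant captured by the single-state relation $\ssrel$ — namely that $\rslh = \mathit{true}$ and all public locations are tainted safe inside every speculative instance (\Cref{tr:shrel-same}, \Cref{tr:shrel-same-register}) — that masking of loads, stores and ordinary assignments through $\rscr$ produces only safe observations under misprediction, and crucially that the new variable-latency instruction, whose operands are masked to $0$ during speculation, emits a safe operand action $\opObs{\cdot}{\cdot}$ related to the source one by \Cref{tr:ac-rel-vassign}. Consequently, the main obstacle I anticipate for the present lemma is purely bookkeeping: threading the $\ssrel$ relation (a single source state against a stack of target speculative instances, one nonspeculative and the rest unsafe-tainted) correctly through the forward-simulation/determinism argument, and ensuring the prologue-induced label offset is handled consistently so that the recovered source instructions and return addresses line up.
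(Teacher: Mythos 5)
Your proposal matches the paper's proof: the paper discharges this lemma exactly by appeal to the forward simulation lemma for compiled USLH statements (\Thmref{thm:v1-fwd-sim-stm-uslh}) combined with determinism of the target semantics, phrased as the same proof-by-contradiction used for the lfence compiler in \Thmref{thm:v4-bwd-sim-comp-steps-lfence}. Your additional remarks about the program relation yielding the source-instruction correspondence and about the prologue-induced label offset are consistent with where the paper locates that bookkeeping (in $\crel$, $\brel_{f}$, and the forward simulation lemma), so no further comparison is needed.
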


\begin{lemma}[Forward Simulation for Compiled Statements in uslh]\label{thm:v1-fwd-sim-stm-uslh}(\showproof{v1-fwd-sim-stm-uslh})
	\begin{align*}
		\text{ if }
			&\
			\src{\Omega}=\src{C, \OB{B}, \tup{p, M, A} \OB{f} \nsarrow{\tau^\sigma} C, \OB{B'}, \tup{p, M', A'} \OB{f'}} =\src{\Omega'}
		\\
		\text{ and }
			&\
			\src{\Omega} \ssrelref_{\src{\OB{f''}}} \trgB{\SigmaB}
		\\
        \text{ and }
			&\
			\src{p(A(\pc))} = \src{l} : \src{i} \text{ and } \src{p(A'(\pc))} = \src{l'} : \src{i'}
		\\
		\text{ and }
			&\
			\trgB{\SigmaB}= \trgB{C, \OB{B''}, \tup{p, M , A}, \bot, \safeta}
		\\
		\text{ and }
			&\
			\trgB{\SigmaB'}= \trgB{C, \OB{B'''}, \tup{p, M' , A'}, \bot, \safeta}
		\\
        \text{ and }
			&\
			\trgB{p(A(\pc))} = \compuslhB{l} : \compuslhB{i} \text{ and } \trgB{p(A'(\pc))} = \compuslhB{l'} : \compuslhB{i'}
		\\
		\text{ then }
			&\
			\trgB{\SigmaB \bigspecarrowB{\tra{^{\taint}}} \SigmaB'}
		\\
		\text{ and }
			&\
			\src{\tau^\sigma} \tracerel \trgB{\tra{^{\taint}}} \qquad \text{( using the trace relation!)}
		\\
		\text{ and }
			&\
			\src{\Omega'} \ssrelref_{\src{\OB{f''}}}\ \trgB{\SigmaB'}
	\end{align*}
\end{lemma}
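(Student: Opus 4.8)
The plan is to prove this forward-simulation lemma by case analysis on the source step $\src{\Omega} \nsarrow{\tau^\sigma} \src{\Omega'}$, mirroring the structure of the fence-compiler proof \Cref{thm:v4-fwd-sim-stm-lfence} but accounting for the three complications specific to USLH: the compiler emits a \emph{block} of instructions per source instruction, it threads a speculation flag through the reserved register $\rslh$ (spilled across calls via $\rslhC$), and a target state related by $\ssrel$ carries not merely a base instance but an entire stack of speculative instances $\trgB{\OB{(\Omega, n, \unta)}}$. For each source instruction I would first unfold its compilation, then show the base (non-speculative) instance of $\trgB{\SigmaB}$ executes that block in a big-step $\bigspecarrowB{}$. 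The observation that makes the base case discharge cleanly is that in non-speculative execution $\trgB{A(\rslh)} = \trgB{false}$ (maintained by $\rrel$ and the prologue assignment $\passign{\rslh}{\rslhC}$), so every inserted masking step $\pcondassign{\rscr}{0}{\rslh}$ behaves as the identity and the scratch registers $\rscr, \rscr'$ stay irrelevant to $\rrel$; expression evaluation on the compiled side then follows from \Cref{thm:v1-fwd-sim-exp-uslh}, and the emitted architectural observation coincides with the source one, with the surrounding $\pskip$/assignment steps being silent.

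For the trace relation I would argue that every non-base observation is produced inside a speculative instance and is therefore tainted $\safeta$: this is exactly the content of \Cref{thm:spec-rel-satte-safe}, which exploits the $\shrel$ invariant baked into the single-state relation \Cref{tr:v1-single-states} (all public memory safe, all registers safe, and crucially $\trgB{A(\rslh)} = \trgB{true}$). With the flag pinned to $\trgB{true}$ throughout speculation, each masked load/store/assignment operates on a zeroed — hence safe — value, and the variable-latency $\vassign{x}{y}{z}$ leaks only safe operands through its $\opObs{y}{z}$ action. Such safe actions, together with the closing $\rollbackObsB$, relate to $\src{\epsilon}$ via \Cref{tr:ac-rel-ep-hp,tr:ac-rel-rlb}, so the nested speculative transaction drops out of the trace relation entirely. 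I would then re-establish $\src{\Omega'} \ssrel_{\src{\OB{f''}}} \trgB{\SigmaB'}$ by checking the base component through $\srel$ (registers, memory, and stack related, with $\brel$ accounting for the one-instruction return-address offset introduced by the prologue) and every speculative instance through the $\shrel$-based relation \Cref{tr:v1-single-states}.

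The genuinely hard case — and the crux of the lemma — is the branch $\pjz{x}{l}$, where new speculation is born. Its compilation is a non-local transformation: it redirects the taken branch through a fresh label $l_{new}$ and updates the flag to $\rslh \lor \rscr$ on the fall-through path and $\rslh \lor \neg\rscr$ on the taken path. I would have to (i) trace the inserted control flow to confirm that the architectural (correct-path) execution reproduces the source branch while leaving $\rslh = false$, and (ii) show that when the always-mispredict rule \Cref{tr:v1-spec} pushes a fresh speculative instance onto the wrong path, the flag-update code has already set $\trgB{A(\rslh)} = \trgB{true}$ before any leak can occur, thereby establishing the $\shrel$ invariant \emph{at the moment the instance is created}. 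Establishing that invariant at creation and then invoking \Cref{thm:spec-rel-satte-safe} to propagate it for the remainder of the speculation window is where the bulk of the work concentrates; by contrast, the masking-instruction cases and the save/restore of the flag around $\pcall{f}$ and $\pret$ via $\rslhC$ become routine once this invariant is secured.
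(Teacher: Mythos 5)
Your proposal is correct and follows essentially the same route as the paper's proof: a case analysis on the source instruction, discharging the non-speculative block execution using $\rslh = \mi{false}$ (so masking is the identity and scratch registers are invisible to $\rrel$) together with \Cref{thm:v1-fwd-sim-exp-uslh}, and concentrating the real work in the branch case, where the mispredicted instance's first (silent) instruction pins $\rslh$ to $\mi{true}$ so that \Cref{thm:spec-rel-satte-safe} can propagate the $\shrel$ invariant and relate the whole speculative window plus rollback to $\src{\epsilon}$. Your treatment of the call/return flag spilling via $\rslhC$ and the return-address offset via the indexed $\brel$ also matches the paper.
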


\begin{lemma}[Forward Simulation for Expressions in uslh]\label{thm:v1-fwd-sim-exp-uslh}(\showproof{v1-fwd-sim-exp-uslh})
	\begin{align*}
		\text{ if }
			&\ 
			\src{A\triangleright e \bigreds v : \sigma}
		\\
		\text{ and }
			&\
			\src{A}\rrelref\trgB{A}
		\\
		\text{ and }
			&\
			\src{\sigma}\equiv\taintB
		\\
		\text{ then }
			&\
			\trgB{A \triangleright \compuslhB{e}\bigreds \compuslhB{v} : \taintB}
	\end{align*}
\end{lemma}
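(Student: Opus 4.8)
The plan is to prove \Cref{thm:v1-fwd-sim-exp-uslh} by structural induction on the derivation of $\src{\exprEval{A}{e}{v : \sigma}}$, following closely the structure of the analogous lemma for the lfence compiler, \Thmref{thm:v4-fwd-sim-exp-lfence}. The four cases correspond to the expression-evaluation rules \Cref{tr:E-val}, \Cref{tr:E-lookup}, \Cref{tr:E-binop}, and \Cref{tr:E-unop}. The crucial structural observation is that the USLH compiler $\compuslhB{\cdot}$ acts homomorphically on expressions: it recurses through unary and binary operators and is the identity on values and register names, exactly as $\complfenceS{\cdot}$ does. Consequently, each case reduces to re-applying the matching target evaluation rule to the compiled (sub)expressions.

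For \Cref{tr:E-val} we have $\src{e} = \src{v}$ and $\src{\sigma} = \src{\safeta}$; since $\compuslhB{\cdot}$ is the identity on values, $\src{v} \vrel \compuslhB{v}$ holds trivially, and \Cref{tr:E-val} in the target immediately yields $\trgB{\exprEval{A}{\compuslhB{v}}{\compuslhB{v} : \safeta}}$. For the inductive cases \Cref{tr:E-binop} and \Cref{tr:E-unop}, I would apply the induction hypothesis to each subexpression to obtain matching target evaluations with related values and equal taints, then recombine via the same target rule, using that the compiler distributes over the operators. For the taint join in the binary case I would appeal to \Thmref{lemma:bounds-preserve-relation} to conclude that $\src{\taint_1 \lub \taint_2}$ and $\trgB{\taint_1 \lub \taint_2}$ remain related.

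The delicate case, and the main obstacle, is \Cref{tr:E-lookup}, where $\src{e} = \src{r}$ for some register $\src{r}$ and $\src{A(r)} = \src{v} : \src{\sigma}$. The register relation $\rrel$ used by USLH only constrains registers in $\Reg \setminus \{\rscr, \rscr', \rslh, \rslhC\}$, because the compiler reserves the scratch registers $\rscr, \rscr'$ and the flag registers $\rslh, \rslhC$, which it inserts only during compilation. To transport the looked-up value and taint to the target via $\src{A} \rrel \trgB{A}$, I must argue that any register mentioned in a source expression lies outside this reserved set. This follows from the standing well-formedness assumption (stated in the compiler description) that source programs never reference the reserved registers; I would record it as a side condition on admissible source programs and use it to exclude $\src{r} \in \{\rscr, \rscr', \rslh, \rslhC\}$. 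After this exclusion, $\rrel$ supplies $\trgB{A(r)} = \trgB{v'} : \taintB$ with $\src{v} \vrel \trgB{v'}$ and $\src{\sigma} \equiv \taintB$; since $\vrel$ on values equates identical numbers and labels and $\compuslhB{\cdot}$ is the identity on values, $\trgB{v'} = \compuslhB{v}$, so \Cref{tr:E-lookup} in the target establishes the goal.
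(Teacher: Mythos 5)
Your proposal is correct and matches the paper's approach: the paper's proof of this lemma is literally ``Analogous to \Thmref{thm:v4-fwd-sim-exp-lfence}'', i.e.\ the same structural induction over \Cref{tr:E-val}, \Cref{tr:E-lookup}, \Cref{tr:E-binop}, and \Cref{tr:E-unop} that you describe, relying on the compiler being homomorphic on expressions and on \Thmref{lemma:bounds-preserve-relation} for the taint join. Your explicit handling of the \Cref{tr:E-lookup} case --- invoking the standing assumption that source programs never mention the reserved registers $\rscr, \rscr', \rslh, \rslhC$ so that $\rrel$ actually applies --- is a detail the paper glosses over but which is indeed required for the USLH register relation, so it is a welcome refinement rather than a divergence.
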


\begin{lemma}[Backward Simulation for Backtranslated Statements]\label{thm:v1-back-sim-bts-uslh}(\showproof{v1-back-sim-bts-uslh})
	\begin{align*}
		\text{ if }
			&\
			\trgB{\SigmaB}=\trgB{(C, \OB{B}, \tup{p, M, A},\bot, \safeta) }
		\\
		\text{ and }
			&\
			\trgB{\SigmaB'}=\trgB{(C, \OB{B'}, \tup{p, M', A'}, \bot, \safeta)}
		\\
		\text{ and }
			&\
			\trgB{\SigmaB \specarrowB{\tau{^{\taint}}} \SigmaB'}
		\\
		\text{ and }
			&\
			\src{\Omega} \ssrel_{\src{\OB{f''}}} \trgB{\SigmaB}
		\\
        \text{ and }
			&\
			\trgB{p(A(\pc))} = \trgB{l} : \trgB{i} \text{ and } \trgB{p(A'(\pc))} = \trgB{l'} : \trgB{i'}
		\\
		\text{ then }
			&\
			\src{\Omega}=\src{C, \OB{B}, \tup{p, M, A} \nsarrow{\tau^\sigma} C, \OB{B'},\tup{p, M', A'}} = \src{\Omega'}
		\\
        \text{ and }
			&\
			\src{p(A(\pc))} = \backtrfencec{\trgB{l}} : \backtrfencec{\trgB{i}} \text{ and } \src{p(A'(\pc))} = \backtrfencec{\trgB{l'}} : \backtrfencec{\trgB{i'}}
		\\
		\text{ and }
			&\
			\src{\tau^\sigma} \arelref \trgB{\tau^\taint} \qquad \text{( using the action relation!)}
		\\
		\text{ and }
			&\
			\src{\Omega'} \ssrel_{\src{\OB{f''}}} \trgB{\SigmaB'}
	\end{align*}
\end{lemma}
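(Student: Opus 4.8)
The plan is to follow the structure of the corresponding lfence result \Thmref{thm:v4-back-sim-bts-lfence}, adapting it to the extended language and the USLH backtranslation. The crucial preliminary observation is that the hypothesis $\trgB{p(A(\pc))} = \trgB{l} : \trgB{i}$ together with $\src{\Omega} \ssrel_{\src{\OB{f''}}} \trgB{\SigmaB}$ places us in \emph{context} code, i.e. the currently executing function lies in $\trgB{\OB{I}}$. Since every speculation-triggering rule (\Cref{tr:v1-spec}) requires the function to be outside the imports, branch instructions in the context fire through \Cref{tr:v1-skip-att} and no new speculative instance is pushed. Consequently the single step $\trgB{\SigmaB \specarrowB{\tau^\taint} \SigmaB'}$ is an ordinary non-speculative step realised by \Cref{tr:v1-nospec-act}, \Cref{tr:v1-nospec-eps} or \Cref{tr:v1-barr}, the speculation window stays $\trgB{\bot}$, and the flag stays $\trgB{\safeta}$; hence $\ssrel_{\src{\OB{f''}}}$ collapses to its base component $\srel_{\src{\OB{f''}}}$, giving register, memory, stack and program relations between $\src{\Omega}$ and $\trgB{\SigmaB}$.

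I would then proceed by inversion on the target instruction $\trgB{i}$, reusing almost verbatim the cases of \Thmref{thm:v4-back-sim-bts-lfence}. For each data-manipulating instruction the backtranslation is homomorphic, so $\src{p(A(\pc))} = \backtrfencec{\trgB{l}} : \backtrfencec{\trgB{i}}$ is the matching source instruction; I apply \Thmref{thm:v4-back-sim-bte-lfence} to transport the evaluated expression across $\vrel$, fire the corresponding source rule, and discharge the action relation through the relevant rule (\Cref{tr:ac-rel-rd}, \Cref{tr:ac-rel-wr}, \Cref{tr:ac-rel-if}). Two points differ from the lfence case. First, the new instruction $\vassign{x}{y}{z}$ yields a fresh case handled exactly like the assignment case $\passign{x}{e}$, but emitting $\opObs{n_1}{n_2}$; its action is related to the source action via \Cref{tr:ac-rel-vassign}, again using \Thmref{thm:v4-back-sim-bte-lfence} on both operands. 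Second, $\backtrfencec{\trgB{\pbarrier}} = \src{\pskip}$ handles the barrier as a skip. The private-memory cases $\ploadprv{x}{e}$ and $\pstoreprv{x}{e}$ cannot arise by attacker validity (\Thmref{def:atk}), and the context-to-component call/return steps do not occur here since we never land on a compiled instruction; those are dispatched by the calling theorem \Thmref{thm:v1-bwd-sim-uslh}.

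The step I expect to require the most care is preserving the register-relation side conditions specific to USLH across a context step, namely that the reserved flag register satisfies $\trgB{A(\rslh)} = \trgB{false}$ and that the scratch registers $\trgB{\rscr}, \trgB{\rscr'}, \trgB{\rslhC}$ are excluded from $\rrel$. Because backtranslated context code is plain source code that, by the reserved-register convention, does not touch these registers, a non-speculative context step leaves $\trgB{\rslh}$ untouched and the invariant is maintained; I would make this convention explicit as a side hypothesis if the attacker model does not already guarantee it. The remaining bookkeeping obstacle is the function-indexed stack relation $\brel$, whose return addresses are offset by one for compiled component frames (because the USLH prologue prepends $\passign{\rslh}{\rslhC}$). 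For the instructions treated in this lemma the return stack is unchanged, so $\brel$ is preserved trivially; the genuine offset reasoning only surfaces at call/return boundaries, which lie outside this lemma's scope.
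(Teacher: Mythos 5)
Your proposal is correct and follows essentially the same route as the paper, whose entire proof of this lemma is the one-liner ``Analogous to \Thmref{thm:v4-back-sim-bts-lfence} using \Thmref{thm:v4-back-sim-bte-lfence}''; your elaboration (context code cannot speculate, case analysis on $\trgB{i}$, the new $\vassign{x}{y}{z}$ case via \Cref{tr:ac-rel-vassign}, exclusion of private memory instructions by \Thmref{def:atk}, and preservation of the $\rslh$/scratch-register invariant) is exactly the intended argument. Your remark that the attacker model should explicitly forbid the context from touching $\rslh$ is a fair observation about an implicit assumption the paper itself glosses over, and the only slight imprecision is that context-internal calls and returns do fall within this lemma's case analysis, though for those frames the non-offset clause of $\brel$ applies and the reasoning goes through as you describe.
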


\begin{lemma}[Initial States are Related]\label{thm:v1-ini-state-rel}(\showproof{v1-ini-state-rel})
	\begin{align*}
		&
		\forall \src{P}, \forall \src{\OB{f}}=\dom{\src{P}.\src{F}}, \forall \ctxB{}
		\\
		&
		\initFunc{\backtrfencec{\ctxB{}}\hole{P}} \ssrelref_{\src{\OB{f}}}\, \initFuncB{\ctxB{}\hole{\compuslhB{\src{P}}}}
	\end{align*}
\end{lemma}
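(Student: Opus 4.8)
The plan is to treat \Thmref{thm:v1-ini-state-rel} as the USLH analogue of \Thmref{thm:v4-ini-state-rel}, reusing its overall shape. First I would unfold the two initialisation functions: the source side gives $\initFunc{\backtrfencec{\ctxB{}}\hole{P}}$ via \Cref{tr:ini-us}, and the target side gives $\initFuncB{\ctxB{}\hole{\compuslhB{\src{P}}}} = \trgB{\initFunc(M,\OB{F},\OB{I}), \bot, \safeta}$ via \Cref{tr:v1-init}. The key structural observation is that the initial target speculative state consists of a single base instance with window $\bot$ and empty nested speculative stack. Consequently the relation $\ssrel_{\src{\OB{f}}}$ (\Cref{tr:v1-states}) collapses: its second premise, the universal quantification over nested speculative instances $\trgB{\OB{\Omega}}$, is vacuously satisfied, so it suffices to establish the base relation $\src{\Omega} \srel_{\src{\OB{f}}} \trgB{\Omega, \bot, \safeta}$ (\Cref{tr:v1-base-states}) on the top instance.

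Next I would discharge each conjunct of $\srel$ essentially as in the lfence proof, since USLH alters neither the memory layout nor any existing instruction's effect. For the memory relation $\src{M} \hrel \trgB{M}$ I would split the heap along the disjoint context/component domains guaranteed by \Cref{tr:plug-us}, then appeal to \Thmref{thm:v4-heap-rel-comp-lfence} for the compiled part and \Thmref{thm:v4-heap-rel-bt-lfence} for the backtranslated attacker part (these hold verbatim because $\compuslhB{\cdot}$ only inserts instructions and uses registers, never memory cells). The program relation $\src{\OB{F}; p} \crel_{\src{\OB{f}}} \trgB{\OB{F}; p}$ follows by inspection of the compiler and backtranslation, using that compilation inserts but never deletes instructions; interface equality $\src{\OB{I}} \equiv \trgB{\OB{I}}$ and the target taint and window annotations ($\safeta$ and $\bot$) come directly from the initialisation rules; and the frame relation holds trivially by the base case \Cref{tr:v1-brel-b} because $\initFunc$ starts with empty frames.

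The one genuinely new point is the register relation $\rrel$ (\Cref{tr:rsrel}), which for USLH excludes the scratch registers $\rscr, \rscr', \rslh, \rslhC$ and additionally demands that the speculation flag satisfy $\trgB{A(\rslh)} = \trgB{false}$. Here I would use that $\initFunc$ sets every register to $\src{0} : \safeta$: for the non-scratch registers this yields related values with matching safe taints immediately; the scratch registers $\rscr, \rscr', \rslhC$ are irrelevant since they are quantified out of the relation; and $\rslh$ is likewise initialised to $\src{0}$, which is exactly the encoding of $\mi{false}$ required by the premise.

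The main obstacle I anticipate is reconciling the $\pc$ component with the label shift that $\compuslhB{\cdot}$ introduces. Because the compiler prepends a prologue $\passign{\rslh}{\rslhC}$ to every function, compiled-component function start labels differ from their source counterparts — this is precisely the offset baked into \Cref{tr:v1-brel-region-comp}, where $\src{l} \vrel \trgB{l+1}$. I would argue this offset is never exercised at initialisation: the frames are empty, so \Cref{tr:v1-brel-region-comp} does not fire, and by \Cref{tr:plug-us} the entry point $\src{main}$ is an attacker function, whose code is obtained by the label-preserving backtranslation rather than by compilation. Hence $\src{A(\pc)} = \ffun{main} = \trgB{A(\pc)}$ with no shift, so the $\pc$ conjunct of $\rrel$ holds. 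The delicate part of the write-up is therefore confirming that the prologue-induced relabelling stays confined to compiled component functions and is not observable from the initial configuration, which is what keeps this lemma a routine adaptation of \Thmref{thm:v4-ini-state-rel} rather than a separate argument.
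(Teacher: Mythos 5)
Your proof is correct and takes essentially the same route as the paper, whose own proof is a one-line deferral ("analogous to \Thmref{thm:v4-ini-state-rel}"); your added detail — the collapse of $\ssrel$ to the base relation because the initial speculative stack is a single instance, the $\rslh$ flag being initialised to its non-speculating value, and the prologue-induced label shift never being exercised since the frames are empty and $\src{main}$ is a backtranslated context function — is precisely what "analogous" must unpack to. The only cosmetic difference is that the paper cites the USLH-specific auxiliary lemmas (\Thmref{thm:v1-heap-rel-comp-uslh}, \Thmref{thm:v1-heap-rel-bt-uslh}, \Thmref{thm:v1-val-rel-comp-uslh}) where you reuse the lfence versions, but their content is identical since $\compuslhB{\cdot}$ only inserts register-manipulating instructions and never alters memory cells.
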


\begin{lemma}[A Value is Related to its Compilation for uslh]\label{thm:v1-val-rel-comp-uslh}(\showproof{v1-val-rel-comp-uslh})
	\begin{align*}
		{\src{v}}\vrel\compuslhB{v}
	\end{align*}
\end{lemma}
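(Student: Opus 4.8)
The final statement to prove is \Thmref{thm:v1-val-rel-comp-uslh}, namely that $\src{v}\vrel\compuslhB{v}$ for every source value $\src{v}$.

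The plan is to argue by a direct case analysis on the syntactic form of the source value $\src{v}$, following exactly the structure of the value relation $\vrelref$ and the definition of the USLH compiler $\compuslhB{\cdot}$ on values. Recall from the language definition that values come from the set $\Val = \mi{Nats} \cup \labelset$, so $\src{v}$ is either a natural number $\src{n}$ (including the designated $\bot$) or a label $\src{l}$. The value relation $\vrelref$ has exactly two introduction rules, \Cref{tr:vr}: one relating numbers $\src{z}\vrel\trgB{z}$ when $\src{z}\equiv\trgB{z}$ and $\com{z}\in\mb{Z}$, and one relating labels $\src{l}\vrel\trgB{l}$ when $\src{l}\equiv\trgB{l}$ and $\com{l}\in\labelset$.

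First I would inspect the compiler's action on values, observing that $\compuslhB{\cdot}$ is the identity on both numbers and labels: by the clauses $\compuslhB{n} = \trgB{n}$ and $\compuslhB{l} = \trgB{l}$ (inherited from the compiler's definition on the syntactic categories of numbers and labels, just as in $\complfenceS{\cdot}$ where \Cref{thm:v4-val-rel-comp-lfence} holds by the same reasoning). Then, in the case where $\src{v}$ is a number $\src{n}$, the compiled value is $\trgB{n}$ with $\src{n}\equiv\trgB{n}$, so \Cref{tr:vr} for numbers applies directly and yields $\src{n}\vrel\trgB{n}$. Dually, in the case where $\src{v}$ is a label $\src{l}$, the compiled value is $\trgB{l}$ with $\src{l}\equiv\trgB{l}$, so the label rule of \Cref{tr:vr} applies and yields $\src{l}\vrel\trgB{l}$.

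This is entirely routine: there is no main obstacle, since the USLH compiler never rewrites values themselves (it only inserts masking and scratch-register instructions around value-producing operations, leaving the literal constants and labels untouched). The only mild subtlety is to confirm that the value-level clauses of $\compuslhB{\cdot}$ genuinely coincide with those of the earlier fence compiler, so that the proof transfers verbatim from \Cref{thm:v4-val-rel-comp-lfence}; this is immediate by inspection of the compiler definition. I would therefore simply discharge the lemma by stating that it follows from a trivial analysis of the compiler, mirroring the proof of \Cref{thm:v4-val-rel-comp-lfence}.
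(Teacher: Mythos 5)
Your proof is correct and matches the paper's approach: the paper discharges this lemma with exactly the same observation, namely a trivial inspection of the compiler showing that $\compuslhB{\cdot}$ is the identity on values, after which the value relation applies immediately. Your more explicit case split on numbers versus labels is just a spelled-out version of the same argument.
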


\begin{lemma}[A Heap is Related to its Compilation for uslh]\label{thm:v1-heap-rel-comp-uslh}(\showproof{v1-heap-rel-comp-uslh})
	\begin{align*}
		{\src{H}}\hrel\compuslhB{H}
	\end{align*}
\end{lemma}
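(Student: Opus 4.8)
The plan is to prove this by structural induction on the memory $\src{H}$, following exactly the pattern of the analogous statement for the lfence compiler (\Thmref{thm:v4-heap-rel-comp-lfence}). The essential observation is that $\compuslhB{\cdot}$ treats a memory homomorphically: it compiles each cell $\src{z \mapsto v : \taint}$ to $\trgB{\compuslhB{z} \mapsto \compuslhB{v} : \taint}$, reflecting the address $\src{z}$ and the value $\src{v}$ through the (identity) value compilation while leaving the taint $\src{\taint}$ untouched. Since all the genuinely interesting work of USLH---inserting the masking instructions, using the scratch registers $\rscr$, $\rscr'$, and maintaining the speculation flag $\rslh$/$\rslhC$---happens on code and the register file, the heap is passed through essentially unchanged.

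First I would handle the base case $\src{\emptyset} \hrel \trgB{\emptyset}$ directly via \Cref{tr:hrel-b}. For the inductive step on a cell $\src{z \mapsto v : \taint}$, I would apply \Thmref{thm:v1-val-rel-comp-uslh} once to the address and once to the value to obtain $\src{z} \vrel \compuslhB{z}$ and $\src{v} \vrel \compuslhB{v}$, observe that the taint is copied verbatim so $\src{\taint} \equiv \compuslhB{\taint}$, and appeal to the induction hypothesis for the tail to get $\src{M} \hrel \compuslhB{M}$. Combining these premises through \Cref{tr:hrel-i} yields the relation for the extended memory. The distinguished public cell at address $\src{0}$ is dispatched identically using \Cref{tr:hrel-start}.

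I do not expect any real obstacle here. Unlike the forward-simulation lemmas for statements (\Thmref{thm:v1-fwd-sim-stm-uslh}), which must reason about the speculation flag and the nested masking code emitted by the compiler, this lemma only requires that the compiler's action on memory is a pointwise, taint-preserving map. The argument is therefore the same ``trivial analysis of the compiler'' used for \Thmref{thm:v4-heap-rel-comp-lfence}, and its only nontrivial reliance is on the fact that value compilation preserves the value relation, i.e.\ \Thmref{thm:v1-val-rel-comp-uslh}.
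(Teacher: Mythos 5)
Your proposal is correct and matches the paper, whose entire proof of this lemma is ``Trivial analysis of the compiler''; your structural induction with \Thmref{thm:v1-val-rel-comp-uslh} for the cells and the \hrel{} rules for the base and inductive cases is exactly the routine expansion of that remark.
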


\begin{lemma}[A Value is Related to its Backtranslation]\label{thm:v1-val-rel-bt-uslh}(\showproof{v1-val-rel-bt-uslh})
	\begin{align*}
		\backtrfencec{\trgB{v}}\vrel\trgB{v}
	\end{align*}
\end{lemma}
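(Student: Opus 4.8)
The plan is to discharge this by a direct structural case analysis on the target value $\trgB{v}$, exactly as was done for the analogous fence-compiler lemma \Thmref{thm:v4-val-rel-bt-lfence}, which the paper dispatches as ``trivial analysis of the backtranslation.'' The key observation is that the backtranslation $\backtrfencec{\cdot}$ is purely syntactic and shared across all target languages; in particular its action on values is completely independent of the USLH-specific machinery (the extra $\vassign{x}{y}{z}$ instruction, the scratch registers $\rscr,\rscr'$, and the flag $\rslh$), and values themselves are never rewritten by the backtranslation. Hence no new value forms arise here and the reasoning carries over verbatim.

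First I would recall that values range over $\Val = \mathit{Nats} \cup \labelset$, so every $\trgB{v}$ is either a natural number (including the designated $\bot$) or a label. Next I would unfold the definition of $\backtrfencec{\cdot}$ on values and observe that it behaves as the identity: $\backtrfencec{\trgB{n}} = \src{n}$ and $\backtrfencec{\trgB{l}} = \src{l}$.

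The proof then splits into two cases. If $\trgB{v}$ is a number $\trgB{n}$, then $\backtrfencec{\trgB{n}} = \src{n}$ with $\src{n} \equiv \trgB{n}$, so the rule \emph{Value-num} of $\vrelref$ immediately gives $\src{n} \vrel \trgB{n}$, i.e.\ $\backtrfencec{\trgB{v}} \vrel \trgB{v}$. If instead $\trgB{v}$ is a label $\trgB{l}$, then $\backtrfencec{\trgB{l}} = \src{l}$ with $\src{l} \equiv \trgB{l}$, and the rule \emph{Value-label} yields $\src{l} \vrel \trgB{l}$. Since both rules of $\vrelref$ relate only syntactically equal values, and the backtranslation produces a source value identical to its target argument, both cases close at once.

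There is essentially no obstacle to this lemma: the only point worth checking is that the value clauses of $\backtrfencec{\cdot}$ cover every syntactic form of value and that $\bot$ is subsumed under the numeric case (so that the \emph{Value-num} rule applies uniformly). This is precisely why the companion value-relation lemmas throughout the development are all stated and then proved by routine inspection rather than by any substantive argument.
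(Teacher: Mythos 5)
Your proof is correct and takes the same route as the paper, which dispatches this lemma as a trivial analysis of the backtranslation; your case split on numbers versus labels together with the observation that $\backtrfencec{\cdot}$ is the identity on values is exactly that analysis spelled out. No issues.
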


\begin{lemma}[A Memory is Related to its Backtranslation]\label{thm:v1-heap-rel-bt-uslh}(\showproof{v1-heap-rel-bt-uslh})
	\begin{align*}
		\backtrfencec{\trgB{M}}\hrel\trgB{M}
	\end{align*}
\end{lemma}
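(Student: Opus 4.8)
The plan is to prove $\backtrfencec{\trgB{M}}\hrel\trgB{M}$ by structural induction on the target memory $\trgB{M}$, mirroring the trivial argument already used for the analogous statements \Thmref{thm:v4-heap-rel-bt-lfence} (for $\Sv$) and its $\Rv$ counterpart. The key observation is that the memory relation $\hrel$ is defined compositionally, cell by cell (rules \Cref{tr:hrel-b}, \Cref{tr:hrel-i}, \Cref{tr:hrel-start}), while the backtranslation $\backtrfencec{\cdot}$ acts homomorphically on memories: for a cell it yields $\backtrfencec{\trgB{n}}\mapsto\backtrfencec{\trgB{v}}:\backtrfencec{\trgB{\taint}}$, and by definition $\backtrfencec{\trgB{\taint}}=\src{\taint}$, so taints are preserved verbatim. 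Thus the induction on $\trgB{M}$ lines up one-for-one with the inductive structure of $\hrel$.

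First I would dispatch the base case $\trgB{M}=\trgBe$: here $\backtrfencec{\trgBe}=\srce$, and $\srce\hrel\trgBe$ holds immediately by \Cref{tr:hrel-b}. For the inductive step on a cell $\trgB{M ; n\mapsto v:\taint}$, I would invoke the induction hypothesis to obtain $\backtrfencec{\trgB{M}}\hrel\trgB{M}$, then appeal to the value-backtranslation lemma \Thmref{thm:v1-val-rel-bt-uslh} twice to discharge $\backtrfencec{\trgB{n}}\vrel\trgB{n}$ (for the address) and $\backtrfencec{\trgB{v}}\vrel\trgB{v}$ (for the stored value), and note that taint equivalence $\backtrfencec{\trgB{\taint}}\equiv\trgB{\taint}$ holds by definition of the backtranslation on taints. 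These are precisely the premises of \Cref{tr:hrel-i}, which yields relatedness of the extended memory. The distinguished cell at address $0$ is handled in exactly the same way using \Cref{tr:hrel-start} instead.

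There is no real obstacle here; the single point requiring care is to confirm that the backtranslation is genuinely cell-preserving, i.e.\ that it neither inserts, deletes, nor reorders memory locations, so that the structural recursion on $\trgB{M}$ remains in lockstep with the constructors of $\hrel$. Since the value relation and the taint-preservation identity are already available, the remainder is purely mechanical, which is why the corresponding statements in the $\Sv$ and $\Rv$ developments are discharged simply by ``trivial analysis of the backtranslation.''
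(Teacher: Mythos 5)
Your proposal is correct and matches the paper's approach: the paper discharges this lemma with the one-line justification ``trivial analysis of the backtranslation,'' and your structural induction on $\trgB{M}$ using the cell-wise rules for $\hrel$, the value-backtranslation lemma, and the identity $\backtrfencec{\trgB{\taint}}=\src{\taint}$ is exactly the mechanical unfolding that justification elides.
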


\subsection{Speculation proceeds safely}
Since SLH does not stop speculation but masks all vulnerable accesses we need to make sure that speculation is safe during the execution. The n

First, we state that speculation terminates, which follows form the fact that the $\minWindow{}$ of the state is decreased with each step.

Furthermore, by construction there is an upper bound of the number of speculative steps. Even if in each step a new speculative transaction is started, we still have an upper bound n, 
While the safeness comes from the countermeasure uslh.

\begin{forest}
	for tree={
		align = left,
		font = \footnotesize,
		forked edge,
	}
	[\Thmref{thm:spec-rel-satte-safe}
		[
			\Thmref{thm:spec-most-omega}
				[
					[\Thmref{thm:ctx-spec-most-omega}
						[
							[\Thmref{thm:ctx-spec-sing-safe}]
						]
					]
					[\Thmref{thm:comp-spec-most-omega}
						[
							[\Thmref{thm:comp-spec-sing-safe}]
						]
					]
				]
		]
	]
\end{forest}
\begin{lemma}[Any Speculation from related states is safe]\label{thm:spec-rel-satte-safe}(\showproof{spec-rel-satte-safe})
	\begin{align*}
		\text{ if }
		&\
			\src{\Sigma} \ssrel_{\src{\OB{f_c}}} \SigmaB
		\\
		\text{ and }
		&\
			\SigmaB = \trgB{ (\Omega_b, \bot, \safeta) \cdot \OB{(\Omega', n', \unta)} \cdot (\Omega, n, \unta)}
		\\
		\text{ and }
		&\ 
			\SigmaB = \trgB{ (\Omega_b, \bot, \safeta)}
		\\
		\text{ then }
		&\ 
			\SigmaB \bigspecarrowB{\tra{^{\taint}}} \SigmaB'
		\\ 
		\text{ and }
		&\
			\src{\empTr} \tracerel \trgB{\tra{^{\taint}}}
		\\ 
		\text{ and }
		&\
		\src{\Sigma} \ssrel_{\src{\OB{f_c}}} \SigmaB'
	\end{align*}
\end{lemma}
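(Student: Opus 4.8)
The statement $\thmref{thm:spec-rel-satte-safe}$ asserts that once we start a speculative transaction from a base speculative instance $\trgB{(\Omega_b,\bot,\safeta)}$ that is $\ssrel$-related to a source state $\src{\Sigma}$, the entire speculative sub-execution (1) terminates, (2) produces only safe actions (so that $\src{\empTr}\tracerel\trgB{\tra{^{\taint}}}$), and (3) lands back in a state that is still $\ssrel$-related to $\src{\Sigma}$. Unlike the fence-based and retpoline compilers, USLH does \emph{not} stop speculation; it masks vulnerable values. So the proof cannot appeal to immediate rollback. Instead, the heart of the argument is an invariant on \emph{speculating} instances: whenever $\trgB{\rslh}$ is correctly set to $\trgB{true}$ during speculation, every value-dependent action (loads, stores, jumps, branches, variable-latency ops) is masked to a constant and hence tainted $\trgB{\safeta}$. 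This is exactly what the single-state relation $\ssrel$ (rule $\Cref{tr:v1-single-states}$), together with the $\shrel$ predicates $\Cref{tr:shrel-same}$ and $\Cref{tr:shrel-same-register}$ (which require public memory safe and $\trgB{A(\rslh)}=\trgB{true}:\trgB{\safeta}$), is designed to capture.

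\textbf{Key steps in order.} First I would prove termination via the supporting lemma $\thmref{thm:spec-most-omega}$: the $\minWindow{}$ of the top speculative instance strictly decreases on every non-speculation-starting step and each newly pushed instance has window $\trgB{min(\omega, n)} \leq n$, so a standard well-founded-measure argument on the multiset of windows bounds the total number of speculative steps by a function of $\omega$. Second, I would establish the safety-and-relation preservation for a \emph{single} speculative step by splitting on whether the step occurs in compiled code or in context code, mirroring the two supporting lemmas $\thmref{thm:comp-spec-sing-safe}$ and $\thmref{thm:ctx-spec-sing-safe}$. For the compiled-code case, I would inspect each USLH instruction pattern and show that with $\trgB{A(\rslh)}=\trgB{true}$ the masking conditional assignments (the $\pcondassign{\rscr}{0}{\rslh}$ lines) force the operand fed to $\loadC$/$\storeC$/$\jmpC$/$\vassign{}{}{}$ to $\trgB{0}$ (resp.\ $\trgB{\bot}$), so the emitted action carries a constant address/target and is tainted $\trgB{\safeta}$; I would also check that these steps preserve $\shrel$ on memory and registers. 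For the context case, $\thmref{thm:ctx-spec-sing-safe}$ should give safety because the attacker cannot touch private memory (by $\Cref{def:atk}$) and public memory is already safe under $\shrel$. Third, I would lift the single-step result to the whole transaction by induction on $\bigspecarrowB{}$, using the termination bound from step one as the induction measure, and assemble $\src{\empTr}\tracerel\trgB{\tra{^{\taint}}}$ from the action-relation rules $\Cref{tr:ac-rel-ep-hp}$, $\Cref{tr:ac-rel-ep-al}$, and $\Cref{tr:ac-rel-rlb}$ (all of which relate $\src{\epsilon}$ to a $\trgB{\safeta}$-tagged target action/rollback). Finally, I would argue that upon rollback ($\Cref{tr:v1-rollback}$) the top speculating instance is discarded and we return to a state still satisfying $\ssrel_{\src{\OB{f_c}}}$, since the base instance was never modified during speculation.

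\textbf{Main obstacle.} The crux is maintaining the invariant that $\trgB{\rslh}$ stays set to $\trgB{true}$ and tainted $\trgB{\safeta}$ \emph{throughout} nested speculation, and that this suffices to mask \emph{all} leaking actions. The delicate cases are (i) the compiled branch, where the $\pjz{\rscr}{l_{new}}$ splits into two arms each of which updates $\trgB{\rslh}$ via $\trgB{\rslh \lor \rscr}$ / $\trgB{\rslh \lor \neg\rscr}$ — I must show both updates keep $\trgB{\rslh}=\trgB{true}$ once speculation has begun, so the flag is monotone and never cleared mid-transaction; and (ii) the variable-latency instruction $\vassign{x}{\rscr}{\rscr'}$, where \emph{both} operands must be masked to constants so that the new $\opObs{}{}$ observation $\Cref{tr:ac-rel-vassign}$ is data-independent and hence $\trgB{\safeta}$ — this is precisely the case that plain SLH misses and that motivates USLH. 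I expect the branch-flag monotonicity and the two-operand masking argument for $\vassign{}{}{}$ to require the most careful bookkeeping; everything else reduces to the already-established patterns in the fence proofs (e.g.\ $\thmref{lemma:non-spec-preserves-safe}$) plus routine inspection of the compiler.
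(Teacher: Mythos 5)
Your proposal follows essentially the same route as the paper: the paper's proof is an induction on the stack of speculative instances that delegates to \Thmref{thm:spec-most-omega} (window-measure termination), which in turn splits into the compiled-code and context cases via the single-step lemmas \Thmref{thm:comp-spec-sing-safe} and \Thmref{thm:ctx-spec-sing-safe}, closes each transaction with a rollback related by \Cref{tr:ac-rel-rlb}, and rests on exactly the $\trgB{\rslh}=\trgB{true}$ invariant encoded in \Cref{tr:shrel-same-register} that you identify. The delicate cases you flag (branch-flag monotonicity under $\trgB{\rslh \lor \rscr}$ / $\trgB{\rslh \lor \neg\rscr}$, and the two-operand masking for $\vassign{x}{\rscr}{\rscr'}$) are precisely the cases worked out in the paper's single-step lemma for compiled code.
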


\begin{lemma}[Speculation is safe and lasts at most omega]\label{thm:spec-most-omega}(\showproof{spec-most-omega})
	\begin{align*}
		\text{ if }
		&\
			\src{\Omega} \ssrel_{\src{\OB{f_c}}} \SigmaB
		\\
		\text{ and }
		&\
			\SigmaB = \trgB{ (\Omega_b, \bot, \safeta) \cdot \OB{(\Omega', n', \unta)} \cdot (\Omega, n, \unta)}
		\\
		\text{ and }
		&\ 
			\SigmaB' = \trgB{ (\Omega_b, \bot, \safeta)  \cdot \OB{(\Omega', n', \unta)} \cdot (\Omega'', 0, \unta)}
		\\
		\text{ then }
		&\ 
			\SigmaB \bigspecarrowB{\tra{^{\taint}}} \SigmaB'
		\\ 
		\text{ and }
		&\
			\src{\empTr} \tracerel \trgB{\tra{^{\taint}}}
		\\ 
		\text{ and }
		&\
		\src{\Omega} \ssrel_{\src{\OB{f_c}}} \SigmaB'
	\end{align*}
\end{lemma}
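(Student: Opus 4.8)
The plan is to prove \Thmref{thm:spec-most-omega} by strong induction on the speculation window $n$ of the topmost speculative instance, letting the two region-specific helpers \Thmref{thm:comp-spec-most-omega} and \Thmref{thm:ctx-spec-most-omega} carry out the actual single-step work. The first move is to extract the masking invariant that the relation carries: since the top instance of $\SigmaB$ is tainted $\unta$, unfolding \Cref{tr:v1-states} together with \Cref{tr:v1-single-states} yields that its configuration satisfies $\shrel$ (\Cref{tr:shrel-same}, \Cref{tr:shrel-same-register})---in particular $\trgB{A(\rslh)} = \trgB{true}$ and every public memory cell and every register is tainted $\safeta$. This invariant is exactly what the USLH masking relies on, and it is the property I will thread through every step. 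The base case $n = 0$ is immediate: $\SigmaB = \SigmaB'$ and the empty big-step suffices.

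Next I would dispatch on whether the program counter of the top instance lands in compiled component code ($\ffun{\trgB{A(\pc)}} \in \src{\OB{f_c}}$) or in context code, and invoke \Thmref{thm:comp-spec-most-omega} or \Thmref{thm:ctx-spec-most-omega} accordingly. Each of these drives the top instance forward one step at a time via the single-step safety lemmas \Thmref{thm:comp-spec-sing-safe} and \Thmref{thm:ctx-spec-sing-safe}, each of which (i) emits only a safe observation and (ii) re-establishes $\shrel$. The interesting event is a branch: the rule \Cref{tr:v1-spec} fired at window $n+1$ leaves the current instance at window $n$ and pushes a \emph{nested} instance at window $j = \min(\omega, n) \le n < n+1$. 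Because this nested window is strictly below $n+1$, the induction hypothesis applies to it: the nested instance (itself satisfying $\shrel$, hence related) is reduced to window $0$ and rolled back by \Cref{tr:v1-rollback}, producing only safe observations plus a droppable $\rollbackObsB$; execution then resumes in the current instance at window $n$, to which the induction hypothesis applies again.

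Finally I would assemble the trace and re-establish the relation. Every step produces either $\epsilon$, a safe action, a safe microarchitectural observation, or a rollback marker, so repeated use of \Cref{tr:tr-rel-safe-a}, \Cref{tr:tr-rel-safe-h} and \Cref{tr:tr-rel-rollb} gives $\src{\empTr} \rels \trgB{\tra{^{\taint}}}$; and since $\shrel$ is preserved at each step, the single-state relation holds for the final top instance, yielding $\src{\Omega} \ssrel_{\src{\OB{f_c}}} \SigmaB'$ via \Cref{tr:v1-states}. The main obstacle is the tight coupling of termination and safety under nesting: the window measure must be genuinely well-founded across nested transactions, which hinges precisely on $\min(\omega, n) < n+1$, so that every nested branch spawns a strictly smaller window and strong induction closes. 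The subtler, invariant-preservation burden---discharged inside the single-step lemmas---is showing that USLH's masking actually forces safety: when $\trgB{A(\rslh)} = \trgB{true}$, the inserted $\pcondassign{\rscr}{0}{\rslh}$ zeroes the address or operand feeding each $\loadObs{}$, $\storeObs{}$ or $\opObs{}{}$, so the emitted observation ranges over the constant $0$ independently of any secret and is therefore $\safeta$, while the branch-redirection code updating $\rslh$ with $\lor$ keeps the flag set to $\trgB{true}$ throughout the speculative window.
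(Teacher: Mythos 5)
Your overall strategy matches the paper's: the paper also discharges this lemma by handing the work to \Thmref{thm:comp-spec-most-omega} and \Thmref{thm:ctx-spec-most-omega}, which in turn iterate the single-step safety lemmas \Thmref{thm:comp-spec-sing-safe} and \Thmref{thm:ctx-spec-sing-safe} by induction on the window, and the well-foundedness point you isolate ($\min(\omega,n)\le n < n+1$ for nested transactions) is exactly the right one. Your front-loading of the $\shrel$ invariant and of the branch/nesting discussion is also fine; in the paper that material lives inside the single-step lemmas rather than at this level.

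There is, however, one genuine gap: your case split is on where the program counter of the top instance \emph{currently} sits, and you then invoke one of the two region-specific multi-step lemmas. But both \Thmref{thm:comp-spec-most-omega} and \Thmref{thm:ctx-spec-most-omega} (and their single-step versions) are stated with hypotheses on \emph{both} endpoints --- $\ffun{\trgB{l}}$ and $\ffun{\trgB{l''}}$ must lie on the same side of $\src{\OB{f_c}}$ --- so neither covers a speculative window during which control crosses the component/context boundary via a $\callC$ or $\retC$. The paper's proof of this lemma is actually a four-way case analysis on the functions of the start and end configurations, and the two mixed cases (compiled-to-context and context-to-compiled) are handled separately by an inner induction on $n$ whose step unfolds \Cref{tr:v1-nospec-act} with \Cref{tr:call} or \Cref{tr:ret}, observing that call and return actions are always tagged $\safeta$ and only update the program counter and frame stack with safe values, so both the trace relation (via \Cref{tr:ac-rel-cl} and \Cref{tr:ac-rel-rt}) and the state relation are preserved. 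Your proposal as written would fail on any speculative transaction that, e.g., performs a callback into attacker code before its window expires; you need either these two extra cases or a restatement of the helper lemmas at single-step granularity with the boundary-crossing step added.
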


\begin{lemma}[Context Speculation is safe and lasts at most n]\label{thm:ctx-spec-most-omega}(\showproof{ctx-spec-most-omega})
	\begin{align*}
		\text{ if }
		&\
			\src{\Sigma} \ssrel_{\src{\OB{f_c}}} \SigmaB
		\\
		\text{ and }
		&\
			\SigmaB = \trgB{ (\Omega_b, \bot, \safeta) \cdot \OB{(\Omega', n', \unta)} \cdot (\Omega, n, \unta)}
		\\
		\text{ and }
		&\ 
			\SigmaB = \trgB{ (\Omega_b, \bot, \safeta) \cdot \OB{(\Omega', n', \unta)} \cdot (\Omega'', 0, \unta)}
		\\
		\text{ and }
		&\
			\trgB{p(\Omega(\pc))} = \trgB{l} : \trgB{i} \text{ and } \trgB{p(\Omega''(\pc))} = \trgB{l''} : \src{i''}
		\\
		\text{ and }
		&\
			\ffun{\trgB{l}}\notin\src{\OB{f_c}} \text{ and } \ffun{\trgB{l''}} \notin\src{\OB{f_c}}
		\\
		\text{ then }
		&\ 
			\SigmaB \bigspecarrowB{\tra{^{\taint}}} \SigmaB'
		\\ 
		\text{ and }
		&\
			\src{\empTr} \tracerel \trgB{\tra{^{\taint}}}
		\\ 
		\text{ and }
		&\
			\src{\Sigma} \ssrel_{\src{\OB{f_c}}} \SigmaB'
	\end{align*}
	
\end{lemma}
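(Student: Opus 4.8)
\textbf{Proof proposal for \Thmref{thm:ctx-spec-most-omega}.}

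The plan is to proceed by induction on the remaining speculation window $\trgB{n}$ of the topmost speculative instance, mirroring the structure of the safe-equivalence preservation lemmas (\Thmref{lemma:v4-bigspecarrow-preserves-safe}) but now tracking the key invariant that the speculation flag $\trgB{\rslh}$ has already been set to $\trgB{\mi{true}}$ in any ongoing speculative transaction. The essential observation---recorded in the single-state relation $\ssrel$ via the side condition $\vdash \trgB{M} : \shrel$ and $\vdash \trgB{A} : \shrel$ (\Cref{tr:shrel-same} and \Cref{tr:shrel-same-register})---is that once $\trgB{\rslh} = \trgB{\mi{true}}$, every compiled load/store/assign masks its address argument to $\trgB{0}$ through the $\pcondassign{\rscr}{0}{\rslh}$ instruction, so that all public memory stays tainted $\trgB{\safeta}$ and all registers become $\trgB{\safeta}$. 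Since we are in the context (both $\ffun{\trgB{l}}\notin\src{\OB{f_c}}$ and $\ffun{\trgB{l''}}\notin\src{\OB{f_c}}$), the executing instructions are \emph{backtranslated} context code rather than compiled code, and here the real leverage comes from the fact that the context cannot perform private accesses (\Cref{def:atk}), so any action it produces during speculation is inherently safe.

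First I would establish the base case $\trgB{n} = \trgB{0}$: the only applicable step is \Cref{tr:v1-rollback}, which emits $\trgB{\rollbackObsB}$; by \Cref{tr:ac-rel-rlb} this relates to $\src{\epsilon}$, preserving both $\src{\empTr} \tracerel \trgB{\tra{^{\taint}}}$ and the state relation trivially since the top instance is discarded. For the inductive step, I would case-analyze the context instruction being executed under the speculative semantics $\specarrowB{}$, appealing to the single-step context lemma (which I label \Thmref{thm:ctx-spec-sing-safe} following the dependency tree) to show that one $\specarrowB{}$ step from a $\shrel$-consistent state again lands in a $\shrel$-consistent state and emits only a $\trgB{\safeta}$-tainted action or $\epsilon$. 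The crucial point is that the $\shrel$ invariant is self-sustaining: a context load reads from an arbitrary address, but because $\trgB{\rslh} = \trgB{\mi{true}}$, any value loaded into a register is forced to be tainted $\trgB{\safeta}$ (public memory is safe by $\shrel$), and context stores only touch public memory (by validity and \Cref{def:atk}), which $\shrel$ permits to remain $\trgB{\safeta}$.

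The main obstacle I anticipate is the interaction between branch instructions in the context and the speculation window. Because the context's $\pjz{}$ instructions are \emph{not} compiled (they are backtranslated and hence carry no $\rslh$-update logic), a nested branch misprediction under $\contract{}{\Bv}$ could spawn a further speculative instance inside the already-speculating context. I would need to show that this nested transaction also maintains $\shrel$ and terminates within the global bound $\omega$, which is where the bookkeeping on the speculation window matters: the window of the nested instance is $\trgB{\min(\omega, n)}$, strictly bounded by the parent, so the induction measure (total remaining speculative steps across the stack, bounded by $\omega$ times stack depth) still decreases. Establishing that the nested context speculation stays safe reduces again to the $\shrel$ invariant plus \Cref{def:atk}, so no genuinely new reasoning is required---only careful management of the induction measure to ensure it is well-founded across stack growth. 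Once the context single-step safety and the window-bound argument are in place, the conclusion $\SigmaB \bigspecarrowB{\tra{^{\taint}}} \SigmaB'$ with $\src{\empTr} \tracerel \trgB{\tra{^{\taint}}}$ and $\src{\Sigma} \ssrel_{\src{\OB{f_c}}} \SigmaB'$ follows by chaining the single steps through \Cref{tr:v1-single} and \Cref{tr:v1-silent}.
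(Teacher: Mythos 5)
Your overall strategy matches the paper's: induction on the speculation window $\trgB{n}$, discharging each step with the single-step context lemma (\Thmref{thm:ctx-spec-sing-safe}) and chaining the results via \Cref{tr:v1-single} and \Cref{tr:v1-silent}. Two points diverge, and both are worth fixing. First, your base case is off: at $\trgB{n}=\trgB{0}$ the lemma's target state $\SigmaB'$ still carries the instance $\trgB{(\Omega'', 0, \unta)}$ on the stack, so the correct move is \Cref{tr:v1-reflect} with an empty trace (i.e.\ $\SigmaB'=\SigmaB$); the rollback you invoke is deliberately deferred to the caller, \Thmref{thm:spec-rel-satte-safe}, which applies \Cref{tr:v1-rollback} only after this lemma has driven the window to zero. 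Performing the rollback here would discard the top instance and land you in a state different from the one the lemma asserts.

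Second, the ``main obstacle'' you anticipate --- nested branch misprediction spawned by context branches --- does not arise in this semantics. The rule that pushes a new speculative instance, \Cref{tr:v1-spec}, carries the side condition $\trgB{f} \notin \trgB{\OB{I}}$, i.e.\ speculation is only started for component code; branches in attacker-defined functions are handled by \Cref{tr:v1-skip-att} as an ordinary single step. Since both endpoints of this lemma lie in the context ($\ffun{\trgB{l}}, \ffun{\trgB{l''}} \notin \src{\OB{f_c}}$), the stack depth never grows during the execution you are analysing, so the plain induction on $\trgB{n}$ is already well-founded and the compound measure (window times stack depth) you propose is unnecessary machinery. With those two corrections your argument coincides with the paper's proof.
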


\begin{lemma}[Single Context Speculation is safe]\label{thm:ctx-spec-sing-safe}(\showproof{ctx-spec-sing-safe})
    \begin{align*}
        \text{ if }
        &\
            \src{\Omega} \ssrel_{\src{\OB{f_c}}} \SigmaB
        \\
        \text{ and }
        &\
            \SigmaB = \trgB{ (\Omega_b, \bot, \safeta) \cdot \OB{(\Omega', n', \unta)} \cdot (\Omega, n, \unta)}
        \\
        \text{ and }
        &\ 
            \SigmaB = \trgB{ (\Omega_b, \bot, \safeta) \cdot \OB{(\Omega', n', \unta)} \cdot (\Omega'', n - 1, \unta)}
        \\
        \text{ and }
        &\
            \trgB{p(\Omega(\pc))} = \trgB{l} : \trgB{i} \text{ and } \trgB{p(\Omega''(\pc))} = \trgB{l''} : \trgB{i''}
        \\
        \text{ and }
        &\
            \ffun{\trgB{l}}\notin\src{\OB{f_c}} \text{ and } \ffun{\trgB{l''}} \notin\src{\OB{f_c}}
        \\
        \text{ then }
        &\ 
            \SigmaB \specarrowB{\tau^{\taint}} \SigmaB'
        \\ 
        \text{ and }
        &\
            \src{\empTr} \tracerel \trgB{\tau^{\taint}}
        \\ 
        \text{ and }
        &\
            \src{\Sigma} \ssrel_{\src{\OB{f_c}}} \SigmaB'
    \end{align*}
\end{lemma}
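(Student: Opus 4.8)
The plan is to unfold the single-state invariant and then perform a case analysis on the context instruction that is speculatively executed. Since $\src{\Omega} \ssrel_{\src{\OB{f_c}}} \SigmaB$ holds and the top speculative instance of $\SigmaB$ is $\trgB{(\Omega, n, \unta)}$, \Cref{tr:v1-states} tells me that this top instance is single-state related to the source in the sense of \Cref{tr:v1-single-states}. Unfolding \Cref{tr:v1-single-states} yields the two facts I rely on throughout: for the memory $\trgB{M}$ and register file $\trgB{A}$ of the top instance, $\vdash \trgB{M} : \shrel$ (so every public cell $\trgB{n} \geq \trgB{0}$ is tainted $\trgB{\safeta}$ by \Cref{tr:shrel-same}) and $\vdash \trgB{A} : \shrel$ (so every register is tainted $\trgB{\safeta}$ and $\trgB{A(\rslh)} = \trgB{true}$ by \Cref{tr:shrel-same-register}). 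Because $\ffun{\trgB{l}} \notin \src{\OB{f_c}}$, the instruction $\trgB{i}$ lives in context (backtranslated attacker) code, which by \Cref{def:atk} contains no $\ploadprv{x}{e}$, no $\pstoreprv{x}{e}$, no return-manipulation instructions, and no compiler-inserted masking.

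First I would establish that the step exists and spawns no nested transaction. Context functions belong to the imports $\trgB{\OB{I}}$, so the branch-speculation rule \Cref{tr:v1-spec} (which requires $\trgB{f} \notin \trgB{\OB{I}}$) is disabled; a branch in context therefore fires only \Cref{tr:v1-skip-att}, and every other instruction fires a no-speculation rule (\Cref{tr:v1-nospec-act} or \Cref{tr:v1-nospec-eps}). In all cases the top instance performs exactly one underlying non-speculative step and its window drops from $n$ to $n-1$, matching the target state in the hypothesis. I then case on $\trgB{i}$ to show the emitted action carries taint $\trgB{\safeta}$: for $\loadKywd$ and $\storeKywd$ the accessed address is obtained by evaluating an expression over registers, all of which are $\trgB{\safeta}$, so the taint expression rules yield a $\trgB{\safeta}$ address and \Cref{tr:t-load}/\Cref{tr:t-store} produce a $\trgB{\safeta}$-tainted $\loadObsKywd$/$\storeObsKywd$ action; analogously, for $\jzKywd$ and $\jmpKywd$ the pc-observation taint is the (safe) taint of the tested register, and for $\vassign{x}{y}{z}$ the $\opObs{y}{z}$ operands are safe register values. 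Crucially, plain loads and stores target only public memory, which is $\trgB{\safeta}$ by the invariant, so no unsafe value can ever enter a register.

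Preservation of the $\shrel$ invariant falls out of the same case analysis: a load writes a safe public value into a register (registers stay $\trgB{\safeta}$), a store writes a safe register value into public memory (public memory stays $\trgB{\safeta}$), and arithmetic keeps taints safe because the least upper bound of safe taints is safe (\Cref{lemma:bounds-preserve-relation}); private memory is never touched, and the reserved flag $\rslh$ is untouched by context code, so $\trgB{A(\rslh)} = \trgB{true}$ is maintained. Reassembling the updated top instance with the unchanged bottom instance and the other nested instances via \Cref{tr:v1-states} gives $\src{\Omega} \ssrel_{\src{\OB{f_c}}} \SigmaB'$, and since the single emitted action is $\trgB{\safeta}$-tainted, $\src{\empTr} \tracerel \trgB{\tau^{\taint}}$ follows from the safe-action cases of the relation (\Cref{tr:ac-rel-ep-al} and \Cref{tr:ac-rel-ep-hp}). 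The main obstacle I anticipate is the $\trgB{A(\rslh)} = \trgB{true}$ conjunct of \Cref{tr:shrel-same-register}: showing it is preserved requires that the attacker cannot overwrite $\rslh$ (nor the scratch registers $\rscr$, $\rscr'$, $\rslhC$), which must be discharged from the reserved-register convention extending to context code. Without it, a context write to $\rslh$ with a safe-but-$\mathit{false}$ value would keep all registers $\trgB{\safeta}$ yet break the literal $\rslh = true$ requirement, so this is the delicate point that ties the proof to the well-formedness assumptions on attackers.
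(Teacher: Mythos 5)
Your proposal is correct and follows essentially the same route as the paper's own proof, which is a terse case analysis on the context instruction using \Cref{thm:exp-red-safe} (all expressions over safe registers evaluate to $\safeta$), the fact that the context cannot issue $\ploadprv{x}{e}$ or $\pstoreprv{x}{e}$, and the safe-action cases of the trace relation to match the emitted action against $\src{\empTr}$. The one point you raise that the paper's proof leaves entirely implicit is the preservation of $\trgB{A(\rslh)} = \trgB{true}$ across context steps; your observation that this hinges on the reserved-register convention extending to attacker code is well-founded, since \Cref{def:atk} alone does not forbid the context from clobbering $\rslh$.
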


\begin{lemma}[Expression reductions with safe registers is safe]\label{thm:exp-red-safe}(\showproof{exp-red-safe})
    \begin{align*}
        \text{ if }
            &\ 
            \vdash \trgB{A}: \shrel
        \\
        \text{ then }
            &\
            \trgB{A \triangleright e \bigreds v : \safeta}
    \end{align*}
\end{lemma}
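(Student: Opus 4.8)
The statement is immediate once the combined expression judgement is unfolded, so the plan is a short structural induction. First I would apply \Cref{tr:Combine}: to establish $\trgB{A \triangleright e \bigreds v : \safeta}$ it suffices to exhibit a value derivation $\exprEval{\trgB{A^{v}}}{\trgB{e}}{\trgB{v}}$ together with a taint derivation $\exprEval{\trgB{A^{t}}}{\trgB{e}}{\trgB{\safeta}}$. The value part is routine, since expression evaluation is total over a register file in which every register is defined---which holds here because the premise $\vdash \trgB{A} : \shrel$ (\Cref{tr:shrel-same-register}) assigns a value to every $\trgB{x} \in \trgB{\Reg}$. Thus the whole content of the lemma lies in showing that the taint evaluates to $\safeta$.

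For the taint part I would induct on the structure of $\trgB{e}$ using the taint-evaluation rules. In the base case of a literal value, \Cref{tr:T-val} directly yields taint $\safeta$. In the base case of a register $\trgB{r}$, \Cref{tr:T-lookup} returns the taint stored for $\trgB{r}$ in $\trgB{A^{t}}$, which is $\safeta$ by the premise $\vdash \trgB{A} : \shrel$ (which stipulates $\trgB{A(x)} = \_ : \safeta$ for all registers). For a binary operation $\trgB{e_1 \otimes e_2}$, the induction hypothesis gives that both operands evaluate to $\safeta$, and \Cref{tr:T-binop} combines them with the join $\lub$; since $\safeta$ is the bottom of the taint lattice, $\safeta \lub \safeta = \safeta$ and the result is safe. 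The unary case $\trgB{\ominus e'}$ is immediate from the induction hypothesis, as \Cref{tr:T-unnop} propagates the operand's taint unchanged.

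I do not anticipate any genuine obstacle: the lemma is exactly the observation that $\safeta$ is preserved under every expression former once all leaves (registers and literals) are safe, which is the defining property of the join-semilattice on taints with $\safeta$ at the bottom. The only point requiring slight care is the bookkeeping of \Cref{tr:Combine}---separating the value and taint evaluations and noting that the value derivation always exists so the combined judgement is well formed---which is standard and mirrors the structure of \Cref{thm:v4-fwd-sim-exp-lfence}.
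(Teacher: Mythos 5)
Your proof is correct and follows exactly the paper's approach: the paper also proves this by structural induction on $\trgB{e}$, with literals trivially safe, registers safe by the premise $\vdash \trgB{A} : \shrel$, and the operator cases following from the induction hypothesis. The extra bookkeeping you include about splitting the combined judgement into value and taint derivations is fine but not needed beyond what the paper's one-line case analysis already conveys.
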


\begin{lemma}[Compiled Speculation is safe and lasts at most n]\label{thm:comp-spec-most-omega}(\showproof{comp-spec-most-omega})
	\begin{align*}
		\text{ if }
		&\
			\src{\Sigma} \ssrel_{\src{\OB{f_c}}} \SigmaB
		\\
		\text{ and }
		&\
			\SigmaB = \trgB{ (\Omega_b, \bot, \safeta) \cdot \OB{(\Omega', n', \unta)} \cdot (\Omega, n, \unta)}
		\\
		\text{ and }
		&\ 
			\SigmaB = \trgB{ (\Omega_b, \bot, \safeta) \cdot \OB{(\Omega', n', \unta)} \cdot (\Omega'', 0, \unta)}
		\\
		\text{ and }
		&\
			\trgB{p(\Omega(\pc))} = \compuslhB{l} : \compuslhB{i} \text{ and } \trgB{p(\Omega''(\pc))} = \compuslhB{l''} : \compuslhB{i''}
		\\
		\text{ and }
		&\
			\ffun{\trgB{l}}\in\src{\OB{f_c}} \text{ and } \ffun{\trgB{l''}} \in\src{\OB{f_c}}
		\\
		\text{ then }
		&\ 
			\SigmaB \bigspecarrowB{\tra{^{\taint}}} \SigmaB'
		\\ 
		\text{ and }
		&\
			\src{\empTr} \tracerel \trgB{\tra{^{\taint}}}
		\\ 
		\text{ and }
		&\
			\src{\Sigma} \ssrel_{\src{\OB{f_c}}} \SigmaB'
	\end{align*}
	\end{lemma}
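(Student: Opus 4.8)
The plan is to prove the statement (Compiled Speculation is safe and lasts at most $n$) by strong induction on the speculation window $n$ of the topmost speculative instance, using the single-step safety lemma \Cref{thm:comp-spec-sing-safe} as the workhorse for individual reductions. The central invariant I would carry is the single-state relation $\ssrel$ of \Cref{tr:v1-single-states}: during speculation of compiled code, $\vdash \trgB{M} : \shrel$ and $\vdash \trgB{A} : \shrel$, i.e.\ the public memory and every register carry the taint $\safeta$ and, crucially, the speculation flag $\rslh$ is set to $\mi{true}$. Since USLH never halts speculation but instead masks every potentially-leaking value using $\rslh$, this invariant is exactly what guarantees that each emitted observation is $\safeta$, which in turn yields the required $\src{\empTr} \tracerel \trgB{\tra{^{\taint}}}$.

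For the base case $n = 0$ the top instance is already $\trgB{(\Omega'', 0, \unta)}$, so $\trgB{\Omega} = \trgB{\Omega''}$ and I would take zero steps by reflexivity (\Cref{tr:v1-reflect}), giving $\trgB{\SigmaB \bigspecarrowB{\empTr} \SigmaB}$; the empty trace is trivially related to $\src{\empTr}$ and $\src{\Sigma} \ssrel_{\src{\OB{f_c}}} \SigmaB'$ holds by assumption. For the inductive step $n = m+1$ I would case-split on the compiled instruction $\compuslhB{i}$ at $\trgB{\Omega(\pc)}$. The non-branching cases (skip, assignment, load/store and their private variants, and the variable-latency $\vassign{x}{y}{z}$) each decrement the window by one and are discharged by \Cref{thm:comp-spec-sing-safe}: using \Cref{thm:exp-red-safe}, because all registers are $\safeta$ every compiled expression evaluates to a safe value, and the masking sequence $\pcondassign{\rscr}{0}{\rslh}$ forces each address, and each operand of a variable-latency operation, to $0$, so the corresponding memory or variable-latency observation is tainted $\safeta$. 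After that single safe step I would apply the induction hypothesis at window $m < n$ to finish reducing the top instance to $\trgB{(\Omega'', 0, \unta)}$.

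The branch case is the main obstacle and the reason strong (rather than simple) induction is needed. Here the compiled conditional $\pjz{x}{l}$ does not merely decrement the window: under \Cref{tr:v1-spec} it pushes a \emph{nested} speculative instance with window $j = \min(\omega, m)$, temporarily producing a stack $\trgB{\cdots \cdot (\Omega', m, \unta) \cdot (\Omega'', j, \unta)}$. Since $j \le m < n$, I would apply the induction hypothesis to this nested instance — treating the outer instance as part of the frozen lower stack $\trgB{\OB{(\Omega', n', \unta)} \cdot (\Omega', m, \unta)}$ — to speculate it fully and roll it back, returning control to the outer instance without disturbing the deeper stack. A further induction-hypothesis application at window $m$ then handles the remainder of the outer transaction, matching the conclusion's shape in which the lower stack $\trgB{\OB{(\Omega', n', \unta)}}$ is unchanged.

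The delicate point throughout is verifying that the single-state invariant, and in particular $\rslh = \mi{true}$, is preserved by every compiled fragment. This is immediate for the masking-based fragments, but for the branch fragment it requires observing that the two update instructions $\passign{\rslh}{\rslh \lor \rscr}$ and $\passign{\rslh}{\rslh \lor \neg\rscr}$ can only \emph{keep} $\rslh$ true, never clear it, and that the non-local control transfer to the redirection label $l_{\mi{new}}$ and back leaves $\vdash \trgB{A} : \shrel$ and $\vdash \trgB{M} : \shrel$ intact. Establishing this flag-propagation argument across the branch fragment — and confirming it survives the nested speculative transaction — is where I expect most of the work to concentrate; once it is in place, the conclusions $\src{\empTr} \tracerel \trgB{\tra{^{\taint}}}$ and $\src{\Sigma} \ssrel_{\src{\OB{f_c}}} \SigmaB'$ follow directly from the preserved invariant.
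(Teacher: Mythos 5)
Your proposal is correct and follows essentially the same route as the paper: induction on the speculation window, discharging each step with \Cref{thm:comp-spec-sing-safe} and closing with the induction hypothesis, with the $\rslh=\mi{true}$ invariant of $\ssrel$ doing the real work. The only difference is organizational --- the paper keeps this proof to a bare ``apply the single-step lemma, apply the IH, compose via \Cref{tr:v1-single}'' and delegates the entire instruction case analysis (including the branch fragment's nested transaction and the flag-propagation argument) to the proof of \Cref{thm:comp-spec-sing-safe}, whereas you partially inline that case split here.
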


\begin{lemma}[Single Compiled Speculation is safe]\label{thm:comp-spec-sing-safe}(\showproof{comp-spec-sing-safe})
		\begin{align*}
			\text{ if }
			&\
				\src{\Sigma} \ssrel_{\src{\OB{f_c}}} \SigmaB
			\\
			\text{ and }
			&\
				\SigmaB = \trgB{ (\Omega_b, \bot, \safeta) \cdot \OB{(\Omega', n', \unta)} \cdot (\Omega, n, \unta)}
			\\
			\text{ and }
			&\ 
				\SigmaB''' = \trgB{ (\Omega_b, \bot, \safeta) \cdot \OB{(\Omega', n', \unta)} \cdot (\Omega''', n''', \unta)}
			\\
			\text{ and }
			&\
				\trgB{p(\Omega(\pc))} = \compuslhB{l} : \compuslhB{i} \text{ and } \trgB{p(\Omega''(\pc))} = \compuslhB{l''} : \compuslhB{i''}
			\\
			\text{ and }
			&\
				\ffun{\trgB{l}}\in\src{\OB{f_c}} \text{ and } \ffun{\trgB{l''}} \in\src{\OB{f_c}}
			\\
			\text{ then }
			&\ 
				\SigmaB \bigspecarrowB{\tra{^\taint}} \SigmaB'''
			\\ 
			\text{ and }
			&\
				\src{\empTr} \tracerel \trgB{\tra{^\taint}}
			\\ 
			\text{ and }
			&\
				\trgB{n''} \leq \trgB{n} 
			\\ 
			\text{ and }
			&\
				\src{\Sigma} \ssrel_{\src{\OB{f_c}}} \SigmaB'''
		\end{align*}
\end{lemma}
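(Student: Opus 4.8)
The plan is to prove \Cref{thm:comp-spec-sing-safe} by inverting the state relation and then performing a case analysis on the single compiled (source-level) instruction executed speculatively at the top of the stack. First I would invert $\src{\Sigma} \ssrel_{\src{\OB{f_c}}} \SigmaB$ via \Cref{tr:v1-states}: this shows that the top speculative instance $\trgB{(\Omega, n, \unta)}$ is related to the (fixed) source state through \Cref{tr:v1-single-states}, and in particular that $\vdash \trgB{M} : \shrel$ and $\vdash \trgB{A} : \shrel$ hold. From \Cref{tr:shrel-same-register} I extract the crucial invariant that the speculation flag satisfies $\trgB{A(\rslh)} = \trgB{true} : \trgB{\safeta}$ and that every register carries a $\trgB{\safeta}$ taint, while \Cref{tr:shrel-same} gives that all public memory locations are $\trgB{\safeta}$. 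Since the source state does not change in the conclusion ($\src{\Sigma} \ssrel_{\src{\OB{f_c}}} \SigmaB'''$), the entire burden is to show that the target takes a safe big-step $\bigspecarrowB{}$ that preserves this per-instance invariant $\shrel$; this is exactly a ``self-speculation is safe'' argument for the $\Bv$ semantics.

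The core of the argument is a case analysis on $\compuslhB{i}$, using the program relation $\crel_{\src{\OB{f_c}}}$ from \Cref{tr:v1-uslh-comps} together with $\ffun{\trgB{l}} \in \src{\OB{f_c}}$ to conclude that the code at the speculative $\pc$ is the compilation of some source instruction, and then playing out the corresponding sequence of target steps. The observation common to loads, stores, (conditional) assignments, indirect jumps and variable-latency operations is that every such expansion first computes its operand into a scratch register and then executes $\pcondassign{\rscr}{0}{\rslh}$; because $\trgB{A(\rslh)} = \trgB{true}$ by the invariant, this masks the scratch register to the safe constant $0$. Hence the address/target/operand feeding any leak-producing instruction is $\trgB{\safeta}$-tainted, and since we are speculating ($\taintpc = \unta$) the emitted action's taint is $\unta \glb \safeta = \safeta$, so each of $\loadObs{}$, $\storeObs{}$, $\pcObs{}$ and $\opObs{}{}$ is safe; the expression taints are discharged uniformly by \Thmref{thm:exp-red-safe}. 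Relating these safe actions (and any rollback observations) to the empty source trace through the safe-action and rollback cases of the trace relation gives $\src{\empTr} \tracerel \trgB{\tra{^\taint}}$. Re-establishing $\shrel$ is then bookkeeping: masked writes store $\trgB{\safeta}$ values into registers, public stores write the (safe) contents of registers into the masked location $0$, and the flag is only ever updated by the branch expansion via $\rslh \lor \rscr$ and $\rslh \lor \neg\rscr$, which leaves $\rslh = \trgB{true}$. The window bound $\trgB{n'''} \leq \trgB{n}$ follows because each sub-step decrements the window by one and any nested transaction pushed is capped at $\min(\omega, \cdot)$.

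The main obstacle will be the branch instruction together with the two ``leaky value'' instructions. The compiled $\pjz{\rscr}{l_{new}}$ is itself a branch under $\Bv$, so executing it during speculation pushes a \emph{nested} speculative instance; I must show that this new instance again satisfies \Cref{tr:v1-single-states} (the flag is still $\trgB{true}$, and the registers and public memory are still safe), so that the per-instance invariant is preserved for the whole stack, and that the flag updates keep $\rslh = \trgB{true}$ along both directions of the branch. The genuinely delicate point, however, is the taint bookkeeping for private loads $\ploadprv{x}{\cdot}$ and for variable-latency operations: one must check carefully that masking the address to $0$ both neutralizes the leak in the emitted action and is compatible with the memory split $\trgB{M}^{\low}/\trgB{M}^{\high}$ and the forced taints in the taint semantics, so that no $\unta$-tainted action escapes and the resulting register taints do not violate $\shrel$. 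I expect most cases to be routine once the invariant is in hand, with these interaction cases carrying essentially all of the real work.
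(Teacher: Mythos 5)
Your overall strategy matches the paper's: invert the state relation to extract the per-instance invariant ($\vdash \trgB{A}:\shrel$, hence $\trgB{A(\rslh)} = \trgB{true}$ and all registers/public memory $\trgB{\safeta}$), case-analyze the compiled instruction, play out the masking sequence so that every emitted $\loadObs{}$, $\storeObs{}$, $\opObs{}{}$, $\pcObs{}$ carries operand $0$ and is therefore safe, relate the resulting actions to $\src{\empTr}$ via the safe-action and rollback cases of the trace relation, and re-establish $\shrel$. That is exactly the paper's argument for the non-branch cases.

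There is, however, one structural gap: the paper's proof is an \emph{induction on the speculation window} $\trgB{n}$, and your plan is a flat case analysis. The induction is not decoration — it is what discharges the $\jzC{}$ case. When the compiled branch executes under speculation, \Cref{tr:v1-spec} pushes a fresh nested speculative instance, and the big-step $\bigspecarrowB{}$ in the conclusion must then carry that nested transaction all the way to window exhaustion and its $\rollbackObsB$ before the outer expansion finishes. Showing that the new instance ``again satisfies \Cref{tr:v1-single-states}'' is necessary but not sufficient: you must actually execute the nested transaction and show every action it emits is safe, and the only handle on that within this lemma is the inductive hypothesis applied at the strictly smaller window, combined with a termination argument (the paper uses \Thmref{thm:spec-ends} to reach the state where the nested window is $0$ and rollback fires). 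Without the induction your branch case cannot close. A smaller point: you locate the ``genuinely delicate'' work in $\ploadprv{x}{\cdot}$ and the variable-latency case, but the paper dispatches those as routine instances of the masking argument (the address is forced to $0$, so the action is safe regardless of what $\trgB{M}^{\high}$ contains); essentially all of the real work in the paper's proof sits in the $\jzC{}$ case you partially flagged.
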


\subsection{Another (faulty) USLH Compiler}
We leave it here because we think it is insightful.
Previously, we used this version of the compiler 

\begin{align*}
        \compuslhB{l : i} &= \trgB{
                                    \begin{aligned}[t]
                                    &
                                        l : \passign{\rslh}{\rslh \lor \neg \rscr}
                                            \\
                                            &\ 
                                            \compuslhB{l' : i}
                                    \end{aligned}
                            } ~~ \text{if $l \in \branchlabels$}
        \\
        \compuslhB{\passign{x}{e}} &= \trgB{
                                            \begin{aligned}[t]
                                                    &
                                                    \passign{\rscr}{\compuslhB{e}}
                                                        \\
                                                        &\
                                                        \pcondassign{\rscr}{0}{\rslh}
                                                            \\
                                                            &\
                                                            \passign{x}{\rscr}
                                            \end{aligned}
                                        }
        \\
	\compuslhB{\pjz{x}{l'}} &= \trgB{
                                            \begin{aligned}[t]
                                                    &
                                                    \passign{\rscr}{x}
                                                        \\
                                                        &\
                                                        \pcondassign{\rscr}{0}{\rslh}
                                                            \\
                                                            &\
                                                            \pjz{\rscr}{l'} \\
                                                            &\
                                                            l_1 : \passign{\rslh}{\rslh \lor \rscr} \end{aligned}
                                        }
\end{align*}

We define the branch label set as $\branchlabels \bnfdef \{\ l \mid \forall i \in p, i = \pjz{x}{l} \}$.

However, this has certain problems. Consider a do while loop.

\begin{minipage}{.45\textwidth}
\begin{lstlisting}[style=MUASMstyle, caption={}, escapeinside=!!]
    !$\passign{z}{10}$!
    !$l_1 : \passign{x}{5}$!
    !$l_2 : \passign{y}{2}$!
    !$l_3 :\passign{z}{z - 1}$!
    !$l_4 :\pjz{z}{l_3}$!
    \end{lstlisting}
\end{minipage}\hfill 
\begin{minipage}{.45\textwidth}
    \begin{lstlisting}[style=MUASMstyle, escapeinside=!!]
    !$\vdots$!
    !$l_2' : \passign{\rscr}{2}$!
    !$l_2'' : \pcondassign{\rscr}{0}{\rslh}$!
    !$l_2''' : \passign{y}{\rscr}$!
    !$l_3 : \passign{\rslh}{\rslh \lor \neg \rscr}$!
    !$l_3' : \passign{\rscr}{z - 1}$!
    !$l_3'' : \pcondassign{\rscr}{0}{\rslh}$!
    !$l_3''' : \passign{z}{\rscr}$!
    !$\vdots$!
    \end{lstlisting}
\end{minipage}

Since $\rslh = \mi{false}$ but $\rscr = 5 = \mi{false}$ we get that $\rslh = \mi{false} \lor \neg \mi{false} = \mi{true}$ when executing $\l_3$ which is the countermeasure introduced because of the branch later on.
Even though no speculation is happening!
The problem is that the use of $\rscr$ is not guarded which is normally not a problem because every instruction overwrites the value in the compiler. However, in the case for the branch target, we do not know if the branch instruction is executed right before and as we can see, it is not the only way to reach the branch target in a program.

So this branch is reachable from two independent positions in the code, i.e. the start of the do while loop and then when the branch is executed.
That is why we need to reset the value of the scratch register to 0 (which equals true) so that we do not accidentally pull up the value of $\rslh$. The fix looks like this:

\begin{minipage}{.45\textwidth}
    \begin{lstlisting}[style=MUASMstyle, escapeinside=!!]
    !$\vdots$!
    !$l_2' : \passign{\rscr}{2}$!
    !$l_2'' : \pcondassign{\rscr}{0}{\rslh}$!
    !$l_2''' : \passign{y}{\rscr}$!
    !$l_2'''' : \passign{\rscr}{0}$!
    !$l_3 : \passign{\rslh}{\rslh \lor \neg \rscr}$!
    !$l_3' : \passign{\rscr}{z - 1}$!
    !$l_3'' : \pcondassign{\rscr}{0}{\rslh}$!
    !$l_3''' : \passign{z}{\rscr}$!
    !$l_3'''' : \passign{\rscr}{0}$!
    !$\vdots$!
    \end{lstlisting}
\end{minipage}

Again the problem is that one of the branches is not related to the branch instruction itself and can be reached before the branch is executed. The branch construct is stronger than a normal if, which makes sense.
 
\begin{proof}[Proof of \Thmref{thm:v1-uslh-comp-rdss}]\proofref{}{v1-uslh-comp-rdss}\hfill

    Instantiate $\src{A}$ with $\backtrfencec{\trgB{A}}$.
    This holds by \Thmref{thm:v1-corr-bt-uslh}.
\end{proof}

\BREAK

\begin{proof}[Proof of \Thmref{thm:v1-corr-bt-uslh}]\proofref{}{v1-corr-bt-uslh}\hfill

    This holds by \Thmref{thm:v1-ini-state-rel} and \Thmref{thm:v1-bwd-sim-uslh}.
\end{proof}

\BREAK

\begin{proof}[Proof of \Thmref{thm:v1-bwd-sim-uslh}]\proofref{}{v1-bwd-sim-uslh}\hfill

    There are three kinds of reductions:
    \begin{description}
        \item[compiled-to-compiled code]
        Follows from \Thmref{thm:v1-bwd-sim-comp-steps-uslh}
        \item[backtranslated-to-backtranslated]
        Follows from \Thmref{thm:v1-back-sim-bts-uslh}
        \item[compiled-to-backtranslated or backtranslated-to-compiled code]
        Analogous to the corresponding cases in \Thmref{thm:v4-bwd-sim-lfence}.
        
    \end{description}

\end{proof}

\BREAK

\begin{proof}[Proof of \Thmref{thm:v1-bwd-sim-comp-steps-uslh}]\proofref{}{v1-bwd-sim-comp-steps-uslh}\hfill

Analogous to \Thmref{thm:v4-bwd-sim-comp-steps-lfence} using \Thmref{thm:v1-fwd-sim-stm-uslh}.
\end{proof}

\BREAK

\begin{proof}[Proof of \Thmref{thm:v1-fwd-sim-stm-uslh}]\proofref{}{v1-fwd-sim-stm-uslh}\hfill

The proof proceeds by induction on $\src{i}$:
\begin{description}
        \item[skip]
        Analogous to \Thmref{thm:v4-fwd-sim-stm-lfence} together with \Thmref{thm:v1-fwd-sim-exp-uslh}.

        \item[assign-op (vassign)] 
        Then we have this source reduction: 
        $\src{\Omega \nsarrow{\opObs{v}{v'}}^\sigma } \src{\Omega'}$ by \Cref{tr:vassign}, where $\src{\exprEval{A}{e}{v : \taint}}$ and $\src{\exprEval{A}{e'}{v' : \taint'}}$ and $\sigma = \taint \lub \taint'$.
        \begin{insight}
            In this proof, we start in a state that has no speculation. The interesting case for these instructions is when speculation is ongoing and is handled in \Thmref{thm:comp-spec-sing-safe}. That is because then the predicate bit $\rslh$ is true.
        \end{insight}
        Because of $\src{\Omega} \ssrelref_{\src{\OB{f''}}} \SigmaB$, we know that $\rslh = false$ and $\src{A} \rrelref \trgB{A}$.

        By \Thmref{thm:v1-fwd-sim-exp-uslh} we get $\trgB{\exprEval{A}{\compuslhB{e}}{\compuslhB{v}: \taint}}$ and
        $\trgB{\exprEval{A}{\compuslhB{e'}}{\compuslhB{v'} : \taint'}}$ with $\src{\taint} \equiv \trgB{\taint}$ and 
        $\src{\taint'} \equiv \trgB{\taint'}$

        Let us look at the reduction generated for the compiled instruction. We annotate each instruction with the rule and show the trace that is generated at the end.
        \begin{align*}
             \compuslhB{\vassign{x}{e}{e'}} &= \trgB{
                                            \begin{aligned}[t]
                                                    &
                                                    \passign{\rscr}{\compuslhB{e}} 
                                                    ~\text{\Cref{tr:assign} uses $\trgB{\exprEval{A}{\compuslhB{e}}{\compuslhB{v}: \taint}}$}
                                                        \\
                                                        &\
                                                        \passign{\rscr'}{\compuslhB{e'}} 
                                                        ~\text{\Cref{tr:assign} uses $\trgB{\exprEval{A}{\compuslhB{e'}}{\compuslhB{v'} : \taint'}}$}
                                                            \\
                                                            &\
                                                            \pcondassign{\rscr}{0}{\rslh} 
                                                            ~\text{\Cref{tr:condup-unsat}}
                                                                \\
                                                                &\
                                                                \pcondassign{\rscr'}{0}{\rslh} 
                                                                ~\text{\Cref{tr:condup-unsat}}
                                                                    \\
                                                                    &\
                                                                    \vassign{x}{\rscr}{\rscr'} 
                                                                    ~\text{\Cref{tr:vassign}}
                                            \end{aligned}
                                        }
        \end{align*}
        
        Call the state after executing these instructions $\trgB{\SigmaB'}$.
        The generated trace is: $\tauStack = \opObs{\compuslhB{v}}{\compuslhB{v'}}^{\trgB{\taint \lub \taint'}}$, 

        We now need to show that the traces and the states are related:
        \begin{description}
        
            \item[$\src{{\opObs{v}{v'}}^\sigma} \tracerel \trgB{\opObs{\compuslhB{v}}{\compuslhB{v'}}^{\trgB{\taint \lub \taint'}}}$]

            This follows from \Cref{tr:ac-rel-vassign} together with \Thmref{thm:v1-val-rel-comp-uslh} for $\src{v} \vrel \compuslhB{v}$ and $\src{v'} \vrel \compuslhB{v'}$, while $\src{\taint \lub \taint'} \equiv \trgB{\taint \lub \taint'}$ follows from \Thmref{lemma:bounds-preserve-relation}.
            
            \item[$\src{\Omega'} \ssrelref_{\src{\OB{f''}}}\ \trgB{\SigmaB'}$] 

            Since the only changes to the state are the registers it remains to show that $\src{A'} \rrelref \trgB{A'}$:
            
            We know $\src{A'} = A[x \mapsto v \binop v' : \sigma]$ and $\trgB{A'} = \trgB{A[x \mapsto \compuslhB{v} \binop \compuslhB{v'} : \sigma, \rscr \mapsto \compuslhB{v} : \taint, \rscr' \mapsto \compuslhB{v'} : \taint']}$ are the changes to the state.
            
            Using \Thmref{thm:v1-val-rel-comp-uslh} we can derive $\src{A'} \rrel \trgB{A'}$, since we can ignore $\rscr$ and  $\rscr'$.
        \end{description}

        \item[assign, load, load-prv, store, store-prv]
        
        Analogous to the vassign case above

        \item[call, call-internal, call-callback] 

        Let us look at the compilation: 
        $\compuslhB{l : \pcall{f}} = \trgB{
										\begin{aligned}[t]
												&
												l :\passign{\rslhC}{\rslh} \\
													&\
													l' :\pcall{f}
														\\
														&\ l'' : \passign{\rslh}{\rslhC}
										\end{aligned}
									} ~ \text{Here $f$ is a function name}$
        Since, the compilation changed the prologue of every function we additionally have
        $l_{new} : \passign{\rslh}{\rslhC}$ at the beginning of the called function $f$.
        This only changes $\rslh$ and the program counter is increased.
        
        By definition of the compiler the $\pc$ now points to original beginning of the function.

        Furthermore, we have $\trgB{\OB{B'''} = \OB{B''}; l''}$ and $\src{\OB{B'} = \OB{B}; next(l)}$
        Note that $next(l'') = next(l)$ because of how the compiler works.
        Thus, we can relate $\trgB{\OB{B'''}} \brel_f \src{\OB{B'}}$ via \Cref{tr:v1-brel-region-comp}, since $f \in \OB{F}$ (we are only in compiled code). 
        We thus have $\src{\Omega'} \ssrelref_{\src{\OB{f''}}} \SigmaB'$.

        \item[ret, ret-internal]
        Because of $\src{\Omega} \ssrelref_{\src{\OB{f''}}} \SigmaB$, we know that $\rslh = false$ and $\src{A} \rrelref \trgB{A}$.

        Let us look at the compiled return:
        \trgB{
            \begin{aligned}[t]
                    &
                    l :\passign{\rslhC}{\rslh} \\
                        &\
                        l' :\pret
            \end{aligned}
        }
        Next, we know that according to the compilation of call instruction (See above for details) that the speculation flag is assigned after the call instruction with $l'' : \passign{\rslh}{\rslhC}$. That is where the $\pc$ is pointing to.

        Thus, the $\pc$ register is updated in the same way which means we can show $A' \arel \trgB{A'}$. Since only $\rslhC$ and $\rslh$ were updated apart from $\pc$.

        Now by definition of the compiler  we have that the $\pc$ points to the same instruction after the return as for the initial source program.

        Next, we have $\trgB{\OB{B''} = \OB{B'''}; n}$ and $\src{\OB{B} = \OB{B'}; n}$.

        Thus, we have $\trgB{\OB{B'''}} \brel_f \src{\OB{B'}}$ from $\src{\Omega} \ssrelref_{\src{\OB{f''}}} \SigmaB$.

        Since these are the only changes to the state we have.
        $\src{\Omega'} \ssrelref_{\src{\OB{f''}}} \SigmaB'$

        \item[beqz]

            W.l.o.g assume that $\src{A(x) = 0 : \sigma}$.
            Then we have this source reduction: 
            $\src{\Omega \nsarrow{\pcObs{l'}^\sigma }} \src{\Omega'}$ by \Cref{tr:beqz-sat}.
    
            By $\src{A} \rrel \trgB{A}$ from $\src{\Omega} \ssrelref_{\src{\OB{f''}}}\ \SigmaB$ we have $\trgB{A(x) = 0 : \sigma}$ with $\src{\sigma} \equiv \trgB{\sigma}$ and $\rslh = \mi{false}$.

            The compiled statement looks like this:
            \begin{align*}
               \compuslhB{l : \pjz{x}{l'}} &= \trgB{
                                            \begin{aligned}[t]
                                                    &
                                                    l_1 : \passign{\rscr}{x}
                                                     ~\text{\Cref{tr:assign} uses $\trgB{\exprEval{A}{\compuslhB{x}}{\compuslhB{0}: \sigma}}$}
                                                        \\
                                                        &\
                                                        l_1' : \pcondassign{\rscr}{0}{\rslh}
                                                        ~\text{\Cref{tr:condup-unsat}}
                                                            \\
                                                            &\
                                                           l_1'' :  \pjz{\rscr}{l_{new}} ~\text{\Cref{tr:v1-spec}}
                                                                \\
                                                                &\
                                                                l_2 : \passign{\rslh}{\rslh \lor \rscr} ~\text{\Cref{tr:assign}}
                                                                    \\
                                                                    &\
                                                                    l_2' : \pjmp{next(l)}
                                                                    ~\text{\Cref{tr:jmp}}
                                                                        \\
                                                                        &\ 
                                                                        l_{new} : \passign{\rslh}{\rslh \lor \neg \rscr}
                                                                         ~\text{\Cref{tr:assign}}
                                                                            \\ 
                                                                            &\ 
                                                                            l_{new}' : \pjmp(l')
                                                                             ~\text{\Cref{tr:jmp}}
                                            \end{aligned}
                                        }
            \end{align*}
            Because $\rscr = 0$ since $\rslh = \mi{false}$ and $\trgB{A(x) = 0 : \sigma}$ the speculation continues with $l_2$.

            Furthermore, we know $\trgB{\exprEval{A}{\compuslhB{\rslh \lor \rscr}}{\compuslhB{\mi{true}}: \taint}}$.
            Next the $\pjmp{\mi{next}(l)}$ is executed.

            Thus, after execution of $l_2 : \passign{\rslh}{\rslh \lor \rscr}$ and $l_2' : \pjmp{\mi{next}(l)}$we have the state 
            $\SigmaB'$ with $\trgB{A'} = A[\pc \mapsto \mi{next}(l) : \safeta, \rslh = \mi{true} : \safeta, \rscr \mapsto 0 : \sigma]$ and the trace $\trgB{\tauStack} = \trgB{\pcObs{l_{new}}^{\sigma} \cdot \pcObs{\src{\mi{next}(l)}}^{\safeta} }$.

            It is easy to show that $\src{\Omega'} \ssrelref_{\src{\OB{f''}}}\ \trgB{\SigmaB'}$ since the first states are related at $\ssrelref_{\src{\OB{f''}}}$ by assumption and the speculating states in $\trgB{\SigmaB'}$ only had safe register updates.

            By \Thmref{thm:spec-rel-satte-safe} we know:
            \begin{enumerate}
                \item $\trgB{\SigmaB' \bigspecarrowB{\tra{^\taint}} \SigmaB''}$ and 
                \item $\src{\empTr} \tracerel \trgB{\tra{^{\taint}}}$ and
                \item $\src{\Sigma} \ssrel_{\src{\OB{f_c}}} \SigmaB''$
            \end{enumerate}
            $\trgB{\SigmaB''}$ is the state after the speculation was finished.
            
            This means that it is the state where the branch is taken so $\trgB{A''(\pc)} = \trgB{l_{new}}$
            and execution continues with these two instructions.
            
            \begin{align*}
                &
                l_{new} : \passign{\rslh}{\rslh \lor \neg \rscr} ~\text{\Cref{tr:assign}}
                \\ 
                &\ 
                l_{new}' : \pjmp(l') ~\text{\Cref{tr:jmp}}
            \end{align*}
            Remember that $\rscr = 0$ and $\rslh = \mi{false}$.

            Thus, we know $\trgB{\exprEval{A}{\compuslhB{\rslh \lor \neg \rscr}}{\compuslhB{\mi{false}}: \taint}}$.
            Next, the $\pjmp{l'}$ is executed.
            
            Call this state $\SigmaB'''$ which has $\trgB{A'''} = \trgB{A''[\pc \mapsto l' : \safeta, \rslh \mapsto \mi{false} : \safeta]}$ and the trace generated $\trgB{\tauStack'} = \trgB{\pcObs{l'}^{\safeta}}$.
             We now need to show that the traces and the states are related:
             \begin{description}
        
                \item[$\src{{\pcObs{l'}^\sigma}} \tracerel \trgB{\tauStack \cdot \tra{^\taint} \cdot \tauStack'}$]

                We can relate the different parts of the trace:
                \begin{enumerate}
                    
                    \item[$\src{\empTr} \tracerel \trgB{\tauStack}$]  
                    Since $\trgB{\tauStack} = \trgB{\pcObs{l_{new}}^{\sigma} \cdot \pcObs{\src{l + 1}}^{\safeta} }$ and $\src{\sigma} \equiv \trgB{\sigma}$ and $\src{\sigma} = \safeta$ we can relate by \Cref{tr:tr-rel-safe-a}.
                    Since $\trgB{\tauStack} = \trgB{\pcObs{l_{new}}^{\sigma} \cdot \pcObs{\src{l + 1}}^{\safeta} }$.
                    \item[$\src{\empTr} \tracerel \trgB{\tra{^\taint}}$]
                    from $\src{\empTr} \tracerel \trgB{\tra{^{\taint}}}$ above
                    \item[$\src{\empTr} \tracerel \trgB{\tauStack'}$]
                    Since $\trgB{\tauStack'} = \trgB{\pcObs{l'}^{\safeta}}$ we can relate by \Cref{tr:tr-rel-safe-a}.
                \end{enumerate}
                
                \item[$\src{\Omega'} \ssrelref_{\src{\OB{f''}}}\ \trgB{\SigmaB'''}$] 
                From above we have $\src{\Omega} \ssrel_{\src{\OB{f_c}}} \SigmaB''$. The only update to $\trgB{\SigmaB'''}$ is $\trgB{A'''} = \trgB{A''[\pc \mapsto l' : \safeta, \rslh \mapsto \mi{false} : \safeta]}$.
                The update to $\rslh$ is ignored by $\rrel$ and the change to $\src{\Omega'}$ is $\src{A'} = \src{A[\pc \mapsto l']}$ by \Cref{tr:beqz-sat}. 

                Thus, we have $\src{A'} \rrel \trgB{A'''}$ and thus $\src{\Omega'} \ssrelref_{\src{\OB{f''}}}\ \trgB{\SigmaB'''}$.
                
            \end{description}

\end{description}
\end{proof}

\begin{proof}[Proof of \Thmref{thm:v1-fwd-sim-exp-uslh}]\proofref{}{v1-fwd-sim-exp-uslh}\hfill

Analogous to \Thmref{thm:v4-fwd-sim-exp-lfence}.

\end{proof}

\BREAK

\begin{proof}[Proof of \Thmref{thm:v1-back-sim-bts-uslh}]\proofref{}{v1-back-sim-bts-uslh}\hfill

Analogous to \Thmref{thm:v4-back-sim-bts-lfence} using \Thmref{thm:v4-back-sim-bte-lfence}.
\end{proof}

\BREAK

\begin{proof}[Proof of \Thmref{thm:v1-ini-state-rel}]\proofref{}{v1-ini-state-rel}\hfill

Analogous to \Thmref{thm:v4-ini-state-rel} but with \Thmref{thm:v1-heap-rel-comp-uslh}, \Thmref{thm:v1-heap-rel-bt-uslh} and \Thmref{thm:v1-val-rel-comp-uslh}
\end{proof}

\BREAK

\begin{proof}[Proof of \Thmref{thm:v1-val-rel-comp-uslh}]\proofref{}{v1-val-rel-comp-uslh}\hfill

Trivial analysis of the compiler.
\end{proof}

\BREAK 

\begin{proof}[Proof of \Thmref{thm:v1-heap-rel-comp-uslh}]\proofref{}{v1-heap-rel-comp-uslh}\hfill

Trivial analysis of the compiler.
\end{proof}

\BREAK 

\begin{proof}[Proof of \Thmref{thm:v1-val-rel-bt-uslh}]\proofref{}{v1-val-rel-bt-uslh}\hfill

Trivial analysis of the backtranslation.
\end{proof}

\BREAK

\begin{proof}[Proof of \Thmref{thm:v1-heap-rel-bt-uslh}]\proofref{}{v1-heap-rel-bt-uslh}\hfill

Trivial analysis of the backtranslation.
\end{proof}

\BREAK

\begin{proof}[Proof of \Thmref{thm:spec-rel-satte-safe}]\proofref{}{spec-rel-satte-safe}\hfill

    The proof proceeds by induction on the stack of configurations:
    \begin{description}
        \item[Empty Stack]
        By \Thmref{thm:spec-most-omega} we get:
        \begin{enumerate}
            \item $\SigmaB \bigspecarrowB{\tra{^{\taint}}} \SigmaB''$ and
            \item $\src{\empTr} \tracerel \trgB{\tra{^{\taint}}}$ and
            \item $ \src{\Sigma} \ssrel_{\src{\OB{f_c}}} \SigmaB''$
        \end{enumerate}
        Furthermore, we know that $\SigmaB'' =  \trgB{ (\Omega_b, \bot, \safeta) \cdot \OB{(\Omega', n', \unta)} \cdot (\Omega'', 0, \unta)}$ and thus a rollback reduction applies.
        This reduction can then be related by \Cref{tr:ac-rel-rlb}
        \item[Non-Empty Stack] 
        This holds by IH and the same reasoning as in the base case.
    \end{description}
\end{proof}

\BREAK

\begin{proof}[Proof of \Thmref{thm:spec-most-omega}]\proofref{}{spec-most-omega}\hfill

     We proceed by case analysis on $\trgB{f}$ and $\trgB{f''}$::
    \begin{description}
        \item[$\trgB{f}, \trgB{f''} \in \src{\OB{f''}}$ (\textbf{in the compiled component})]
        This holds by \Thmref{thm:comp-spec-most-omega}.

        \item[$\trgB{f}, \trgB{f''} \notin \src{\OB{f''}}$ (\textbf{in the context})]
         This holds by \Thmref{thm:ctx-spec-most-omega}.

        \item[$\trgB{f} \in \src{\OB{f''}}$ and $\trgB{f''} \notin \src{\OB{f''}}$ compiled to context]

            We proceed by induction on $\trgB{n}$:
            \begin{description}
                \item[$\trgB{n} = 0$]
                Trivial by \Cref{tr:v1-init}.
                \item[$\trgB{n} = \trgB{n_1 + 1}$] 

                    This switch from component to context arises in two cases:
                    \begin{description}
                        
                        \item[\textbf{call}]
            
                            We derive $\trgB{\SigmaB \bigspecarrowB{\tra{^{\taint}}} \SigmaB'}$ in combination by \Cref{tr:v1-single} together with \Cref{tr:v1-nospec-act} and \Cref{tr:call} for $\src{\Omega \nsarrow{\clh{}^{\safeta}} \Omega'}$.
                            \begin{description}
                                \item[$\src{\empTr} \tracerel \trgB{\clh{}^{\safeta}}$] 
                                    Holds trivially by \Cref{tr:ac-rel-cl}. Because call actions are always safe (there is no argument to be evaluated).
                                \item[$\src{\Omega} \ssrel_{\src{\OB{f_c}}} \SigmaB'$] 
                                A call only updates the program counter with a safe value. Thus, $\vdash \trgB{A'}: \shrel$ trivially holds and we have $\src{\Omega} \ssrel_{\src{\OB{f_c}}} \SigmaB'$.
                            \end{description}
                                        
                        \item[\textbf{return}]
                            Analogous to the call case above together with \Cref{tr:ac-rel-rt}. Because return actions are always safe.
                    \end{description}
            \end{description}
                 
        \item[$\trgB{f} \notin \src{\OB{f''}}$ and $\trgB{f''} \in \src{\OB{f''}}$ context to compiled]

            We proceed by induction on $\trgB{n}$:
            \begin{description}
                \item[$\trgB{n} = 0$]
                Trivial by \Cref{tr:v1-init}.

                \item[$\trgB{n} = \trgB{n_1 + 1}$] 
                This switch from context to component arises in two cases:
                        
                    \begin{description}
                        \item[\textbf{call}]
                        
                        This is a call from a context function to a compiled function.
                        Analogous to the call case above.

                        \item[\textbf{return}]
                        Analogous to the call case above
                    \end{description}
            \end{description}

    \end{description}
\end{proof}

\BREAK

\begin{proof}[Proof of \Thmref{thm:ctx-spec-most-omega}]\proofref{}{ctx-spec-most-omega}\hfill

By induction on the speculation window $n$.
\begin{description}
    \item[$n = 0$]
    Trivial by using \Cref{tr:v1-reflect}, since $\SigmaB' = \SigmaB$.
    
    \item[$n = n' + 1$] 
    Using \Thmref{thm:ctx-spec-sing-safe} we get: 
    \begin{enumerate}
        \item $\SigmaB \specarrowB{\tau^{\taint}} \SigmaB''$ and 
        \item $\src{\empTr} \tracerel \trgB{\tau^{\taint}}$ and
        \item $\src{\Sigma} \ssrel_{\src{\OB{f_c}}} \SigmaB''$
    \end{enumerate}
    We can now apply the IH on $\Sigma''$ and get:
    \begin{enumerate}
        \item $\SigmaB'' \bigspecarrowB{\tra{^{\taint}}} \SigmaB'$ and
        \item $\src{\empTr} \tracerel \trgB{\tra{^{\taint'}}}$ and
        \item $\src{\Sigma} \ssrel_{\src{\OB{f_c}}} \SigmaB''$
    \end{enumerate}
    
    We now use \Cref{tr:v1-single} on$\SigmaB \specarrowB{\tau^{\taint}} \SigmaB''$ and $\SigmaB'' \bigspecarrowB{\tra{^{\taint}}} \SigmaB'$ and are finished. 
\end{description}
\end{proof}

\BREAK

\begin{proof}[Proof of \Thmref{thm:ctx-spec-sing-safe}]\proofref{}{ctx-spec-sing-safe}\hfill

    By induction on the current instruction $\trgB{i}$:
    All cases are trivial.
    All expressions evaluate to $\safeta$ due to \Thmref{thm:exp-red-safe}.
    No reduction can load $\unta$ values because the context cannot use $\pstoreprv{x}{e}$ or $\ploadprv{x}{e}$.
    All actions are tagged $\safeta$ and thus they can be related to $\src{\empTr}$ by \Cref{tr:tr-rel-safe-a}
\end{proof}

\BREAK

\begin{proof}[Proof of \Thmref{thm:exp-red-safe}]\proofref{}{exp-red-safe}\hfill

    By induction on $\trgB{e}$:
    \begin{description}
        \item[$\trgB{e} = \trgB{v}$]
        Trivial since values are safe.
        \item[$\trgB{e} =\trgB{e_1} \otimes \trgB{e_2}$, $\trgB{e} = \ominus \trgB{e_2}$ ]
        Trivial application of the IH.
        \item[$\trgB{e} = \trgB{r}$]
        Follows from $\vdash \trgB{A}: \shrel$.
    \end{description}
\end{proof}

\BREAK

\begin{proof}[Proof of \Thmref{thm:comp-spec-most-omega}]\proofref{}{comp-spec-most-omega}\hfill

By induction on the speculation window $n$.
\begin{description}
    \item[$n = 0$]
    Trivial by using \Cref{tr:v1-reflect}, since $\SigmaB' = \SigmaB$.
    
    \item[$n = n' + 1$] 
    Using \Thmref{thm:comp-spec-sing-safe} on $\SigmaB$ we get: 
    \begin{enumerate}
        \item $\SigmaB \specarrowB{\tau^{\taint}} \SigmaB''$ and 
        \item $\src{\empTr} \tracerel \trgB{\tau^{\taint}}$ and
        \item $\src{\Sigma} \ssrel_{\src{\OB{f_c}}} \SigmaB''$
    \end{enumerate}
    We can now apply the IH on $\Sigma''$ and get:
    \begin{enumerate}
        \item $\SigmaB'' \bigspecarrowB{\tra{^{\taint}}} \SigmaB'$ and
        \item $\src{\empTr} \tracerel \trgB{\tra{^{\taint'}}}$ and
        \item $\src{\Sigma} \ssrel_{\src{\OB{f_c}}} \SigmaB''$
    \end{enumerate}
    
    We now use \Cref{tr:v1-single} on $\SigmaB \specarrowB{\tau^{\taint}} \SigmaB''$ and $\SigmaB'' \bigspecarrowB{\tra{^{\taint}}} \SigmaB'$ and are finished. 
\end{description}

\end{proof}

\BREAK

\begin{proof}[Proof of \Thmref{thm:comp-spec-sing-safe}]\proofref{}{comp-spec-sing-safe}\hfill

    We do an induction on the speculation window $\trgB{n}$:

    \begin{description}
        \item[$n = 0$]
        Trivial by \Cref{tr:v1-reflect}.

        \item[$n = n' + 1$]  
        By case analysis on the current instruction $\trgB{\compuslhB{i}}$:
        \begin{description}
            \item[skip]
            Trivial
    
            \item[assign-op (vassign)] 
    
            Because of $\src{\Omega} \ssrelref_{\src{\OB{f''}}} \SigmaB$, we know that $\rslh = \mi{true}$ and $\vdash \trgB{A}: \shrel$.
            
            Let us look at the reduction generated for the compiled instruction. We annotate each instruction with the rule and show the trace generated at the end. We only show the non-speculative rule used that is then used in either \Cref{tr:v1-nospec-eps} or \Cref{tr:v1-nospec-act}.
            \begin{align*}
                 \compuslhB{\vassign{x}{e}{e'}} &= \trgB{
                                                \begin{aligned}[t]
                                                        &
                                                        \passign{\rscr}{\compuslhB{e}} 
                                                        ~\text{\Cref{tr:assign} uses $\trgB{\exprEval{A}{\compuslhB{e}}{ v: \taint}}$}
                                                            \\
                                                            &\
                                                            \passign{\rscr'}{\compuslhB{e'}} 
                                                            ~\text{\Cref{tr:assign} uses $\trgB{\exprEval{A}{\compuslhB{e'}}{v' : \taint'}}$}
                                                                \\
                                                                &\
                                                                \pcondassign{\rscr}{0}{\rslh} 
                                                                ~\text{\Cref{tr:condup-sat} since $\rslh = \mi{true}$}
                                                                    \\
                                                                    &\
                                                                    \pcondassign{\rscr'}{0}{\rslh} 
                                                                    ~\text{\Cref{tr:condup-sat} since $\rslh = \mi{true}$}
                                                                        \\
                                                                        &\
                                                                        \vassign{x}{\rscr}{\rscr'} 
                                                                        ~\text{\Cref{tr:vassign}}
                                                \end{aligned}
                                            }
            \end{align*}
            
            Call the state after executing these instructions $\trgB{\SigmaB'}$.
            \begin{insight}
                Note that the interesting case is if the speculation window is big enough to account for these 5 steps. This happens if the $n'$ is chosen big enough. If it is not, then an early rollback will be triggered and it is still fine. We only show the interesting case when the speculation window $n$ is big enough. It is easy to see that if it works when all reductions go through that it works when it the reductions are aborted earlier since then we have a subtrace of the original safe trace.
                That is why slh protects. The potentially dangerous values are replaced by safe values before they can be used.
            \end{insight}
            The generated trace is: $\tauStack = \opObs{0}{0}^{\safeta}$, 
    
            We now need to show that the traces and the states are related:
            \begin{description}
            
                \item[$\src{\empTr} \tracerel \trgB{\opObs{0}{0}^{\trgB{\safeta}}}$]
    
                This follows from \Cref{tr:tr-rel-safe-a} since the action is safe because only the constants $0$ are evaluated, which is safe.
                
                \item[$\src{\Omega'} \ssrelref_{\src{\OB{f''}}}\ \trgB{\SigmaB'}$] 
                Since the only changes to the state are the registers it remains to show that $\vdash \trgB{A'}: \shrel$:
                
                We know  $\trgB{A'} = \trgB{A[x \mapsto 0 \binop 0 : \safeta, \rscr \mapsto 0 : \safeta, \rscr' \mapsto 0 : \safeta]}$ are the changes to the state.
    
                This means we trivially satisfy $\vdash \trgB{A'}: \shrel$ and are done.
    
                \item[$\trgB{n''} \leq \trgB{n} $]
                Each step reduces the speculation window by 1. This means that $\trgB{n''} = n - 5$ and we are finished.

            \end{description}
            
            \item[assign, load, load-prv, store, store-prv, jmp]
            
            Analogous to the vassign case above

            \item[call, call-internal, call-callback] 
            
            Because of $\src{\Omega} \ssrelref_{\src{\OB{f''}}} \SigmaB$, we know that $\rslh = \mi{true}$ and $\vdash \trgB{A}: \shrel$.
            
            The compiled statement looks like this:
            $\compuslhB{l : \pcall{f}} = \trgB{
                \begin{aligned}[t]
                        &
                        l :\passign{\rslhC}{\rslh} ~\text{\Cref{tr:assign}}\\
                            &\
                            l' :\pcall{f}
                                \\
                                &\ l'' : \passign{\rslh}{\rslhC}
                \end{aligned}
            } ~ \text{Here $f$ is a function name}$
            
            Since the compiler changes the function prologue of every function an additional assign: $l''' :\passign{\rslh}{\rslhC}$ is executed at the start of $f$.

            Lets call the state after executing these instructions $\trgB{\SigmaB'}$.
            We now need to show that the traces and the states are related:
            \begin{description}
            
                \item[$\src{\empTr} \tracerel \trgB{\callObs{f}^{\trgB{\safeta}}}$]
    
                This follows from \Cref{tr:tr-rel-safe-a} since the action is safe.
                
                \item[$\src{\Omega'} \ssrelref_{\src{\OB{f''}}}\ \trgB{\SigmaB'}$] 
                Since the only changes to the state are the registers it remains to show that $\vdash \trgB{A'}: \shrel$:

                We know  that the only changes are for the registers $\rslh$, $\rslhC$ which we can ignore and the $\pc$ register which is always safe.
                
                Since $\rslh = \mi{true}$ we have that $\rslhC = \mi{true}$.
                Thus, when updating $\rslh$ at $l'''$, we have $\rslh = \mi{true}$.

                This means we satisfy $\vdash \trgB{A'}: \shrel$ and are done.
    
                \item[$\trgB{n''} \leq \trgB{n} $]
                Each step reduces the speculation window by 1. This means that $\trgB{n''} = n - 3$ and we are finished.

            \end{description}

            \item[ret, ret-internal, ret-retback]
            
            Because of $\src{\Omega} \ssrelref_{\src{\OB{f''}}} \SigmaB$, we know that $\rslh = \mi{true}$ and $\vdash \trgB{A}: \shrel$.
            
            The compiled statement looks like this:
            $\compuslhB{l : \pret} = \trgB{
                \begin{aligned}[t]
                        &
                        l :\passign{\rslhC}{\rslh} ~\text{\Cref{tr:assign}}  \\
                            &\
                            l' :\pret
                \end{aligned}
            }$
            Furthermore, we execute the additional instruction after the call (See above) $ l'' : \passign{\rslh}{\rslhC}$.

            Lets call the state after executing these instructions $\trgB{\SigmaB'}$.
            We now need to show that the traces and the states are related:
            \begin{description}
            
                \item[$\src{\empTr} \tracerel \trgB{\retObs{}^{\trgB{\safeta}}}$]
    
                This follows from \Cref{tr:tr-rel-safe-a} since the action is safe.
                
                \item[$\src{\Omega'} \ssrelref_{\src{\OB{f''}}}\ \trgB{\SigmaB'}$] 
                Since the only changes to the state are the registers it remains to show that $\vdash \trgB{A'}: \shrel$:

                We know  that the only changes are for the registers $\rslh$, $\rslhC$ which we can ignore and the $\pc$ register which is always safe.

                Since $\rslh = \mi{true}$ we have that $\rslhC = \mi{true}$.
                Thus, when updating $\rslh$ at $l''$, we have $\rslh = \mi{true}$.
    
                This means we satisfy $\vdash \trgB{A'}: \shrel$ and are done.
    
                \item[$\trgB{n''} \leq \trgB{n} $]
                Each step reduces the speculation window by 1. This means that $\trgB{n''} = n - 3$ and we are finished.

            \end{description}
                    
        \item[beqz]

            Because of $\src{\Omega} \ssrelref_{\src{\OB{f''}}} \SigmaB$, we know that $\rslh = \mi{true}$ and $\vdash \trgB{A}: \shrel$.
            
            The compiled statement looks like this:
            \begin{align*}
               \compuslhB{l : \pjz{x}{l'}} &= \trgB{
                                            \begin{aligned}[t]
                                                    &
                                                    l_1 : \passign{\rscr}{x}
                                                     ~\text{\Cref{tr:assign} uses $\trgB{\exprEval{A}{\compuslhB{x}}{v: \taint}}$}
                                                        \\
                                                        &\
                                                        l_1' : \pcondassign{\rscr}{0}{\rslh}
                                                        ~\text{\Cref{tr:condup-sat}}
                                                            \\
                                                            &\
                                                           l_1'' :  \pjz{\rscr}{l_{new}} ~\text{\Cref{tr:v1-spec}}
                                                                \\
                                                                &\
                                                                l_2 : \passign{\rslh}{\rslh \lor \rscr} ~\text{\Cref{tr:assign}}
                                                                    \\
                                                                    &\
                                                                    l_2' : \pjmp{next(l)}
                                                                    ~\text{\Cref{tr:jmp}}
                                                                        \\
                                                                        &\ 
                                                                        l_{new} : \passign{\rslh}{\rslh \lor \neg \rscr}
                                                                         ~\text{\Cref{tr:assign}}
                                                                            \\ 
                                                                            &\ 
                                                                            l_{new}' : \pjmp(l')
                                                                             ~\text{\Cref{tr:jmp}}
                                            \end{aligned}
                                        }
            \end{align*}
            Because $\rscr = 0$ since $\rslh = \mi{true}$ the speculation continues with $l_2$.

            Furthermore, we know $\trgB{\exprEval{A}{\compuslhB{\rslh \lor \rscr}}{\compuslhB{\mi{true}}: \taint}}$ since $\rslh = \mi{true}$.
            
            Next, the $\l_2' : \pjmp{\mi{next}(l)}$ is executed.

            Thus, after execution of $l_2 : \passign{\rslh}{\rslh \lor \rscr}$ and $l_2' : \pjmp{\mi{next}(l)}$ we have the state $\SigmaB'$ with:
            \begin{align*}
                \trgB{A'} =& A[\pc \mapsto l + 1 : \safeta, \rslh = \mi{true} : \safeta, \rscr \mapsto 0 : \safeta]
                \\
                \text{and the trace}
                \\
                \trgB{\tauStack} =& \trgB{\pcObs{l_{new}}^{\safeta} \cdot \pcObs{\src{l + 1}}^{\safeta} }
            \end{align*}

            Note that $\rscr$ is tainted safe because of \Cref{tr:t-vassign} and the fact that constants are safe by \Cref{tr:T-val}.

            Each of the steps reduced the speculation window of the speculative instance. However, \Cref{tr:v1-spec} created a new speculative instance that we now need to handle.

            First, we want to apply the IH (since the speculation window did get smaller). To apply it we need two things:
            \begin{enumerate}
                \item $\src{\Omega} \ssrelref_{\src{\OB{f''}}}\ \trgB{\SigmaB'}$ and
                \item $ \SigmaB'' = \trgB{ (\Omega_b, \bot, \safeta) \cdot \OB{(\Omega', n', \unta)} \cdot (\Omega'', n'' - 3, \unta) \cdot (\Omega''', 0, \unta)}$
            \end{enumerate}

            Since the only changes to the state are the registers it remains to show that $\vdash \trgB{A'}: \shrel$:
                
            We know  $\trgB{A'} = \trgB{A[\pc \mapsto l + 1 : \safeta, \rslh \mapsto \mi{true} : \safeta, \rscr \mapsto 0 : \safeta]}$ are the changes to the state.
    
            This means we trivially satisfy $\vdash \trgB{A'}: \shrel$ and have $\src{\Omega} \ssrelref_{\src{\OB{f''}}}\ \trgB{\SigmaB'}$.
            
            By \Thmref{thm:spec-ends} there is a state $\trgB{\SigmaB''}$ such that $\trgB{\SigmaB' \bigspecarrowB{\tra{^\taint}} \SigmaB''}$ and we know that the speculation window of the speculative instance created by the $\jzC{}$ instruction is $0$.

            Now we can apply the IH and get:
            \begin{enumerate}
                \item $\trgB{\SigmaB' \bigspecarrowB{\tra{^\taint}''} \SigmaB''}$ and
                \item $\src{\empTr} \tracerel \trgB{\tra{^\taint}}$ and 
                \item $\src{\Sigma} \ssrel_{\src{\OB{f_c}}} \SigmaB''$
                \item $0 \leq \trgB{n}$
            \end{enumerate}
            Since the language is deterministic, we have $\tra{^\taint} = \tra{^\taint}''$.

            We can apply \Cref{tr:v1-rollback} on $\trgB{\SigmaB''}$ and finish the speculation started by the $\jzC$ instruction. Call this state $\trgB{\SigmaB'''}$.

            Thus, the full trace is $\trgB{\tauStack} = \trgB{\pcObs{l_{new}}^{\safeta} \cdot \pcObs{\src{l + 1}}^{\safeta}  \cdot \tra{^\taint} \cdot \rollbackObsB}$

            \begin{description}
            
                \item[$\src{\empTr} \tracerel \trgB{\pcObs{l_{new}}^{\safeta} \cdot \pcObs{\src{l + 1}}^{\safeta}  \cdot \tra{^\taint} \cdot \rollbackObsB}$]
    
                $\src{\empTr} \tracerel \trgB{\pcObs{l_{new}}^{\safeta}}$ follows from \Cref{tr:tr-rel-safe-a} since the action is safe because only the scratch register $\rscr$ was evaluated during the execution of the $\jzC$ instruction. Since this register always contains the value 0 here, it is safe.

                $\src{\empTr} \tracerel \trgB{\pcObs{\src{l + 1}^{\safeta}}}$ follows from \Cref{tr:tr-rel-safe-a}.
                
                $\src{\empTr} \tracerel \trgB{\tra{^\taint}}$ follows from IH.
                
                $\src{\empTr} \tracerel \trgB{\rollbackObsB}$ follows from \Cref{tr:ac-rel-rlb}.

                \item[$\src{\Omega} \ssrelref_{\src{\OB{f''}}}\ \trgB{\SigmaB'''}$] 
                Since a rollback just deletes a speculative instance we have
                \\
                $\trgB{\SigmaB'''} = \trgB{ (\Omega_b, \bot, \safeta) \cdot \OB{(\Omega', n', \unta)} \cdot (\Omega'', n'' - 3, \unta)}$ which is the same as $\SigmaB''$.

                Since $\src{\Omega'} \ssrelref_{\src{\OB{f''}}}\ \trgB{\SigmaB'''}$ as explained above we are finished.
    
                \item[$\trgB{n''' - 3} \leq \trgB{n} $]
                Trivial.
                
            \end{description}
    
    \end{description}
\end{description}
\end{proof}

We note there is a measure similar to $\minWindow{}$ that reduces with each step (modulo the general steps in the semantics). 

Here we state $\textit{max spec}(\SigmaB) = \SigmaB . n!$.
We use $!$ because each nested speculative transaction can create new nested transactions.

Furthermore, note that we look at terminating executions here. So speculation necessarily has to end.
\begin{lemma}[Speculation lasts at most n]\label{thm:spec-ends-comp}
If
\begin{enumerate}
    \item $\textit{max spec}(\SigmaB) = n$
\end{enumerate}
Then $\exists \SigmaB'$ such that
\begin{enumerate}
    \item $\trgB{\SigmaB \bigspecarrowB{\tra{^{\taint}}} \SigmaB'}$ \text{ and }
    \item $\trgB{\SigmaB'} = \trgB{(\Omega, \bot, \safeta)}$ \end{enumerate}
\end{lemma}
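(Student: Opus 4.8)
The plan is to prove termination of the always-mispredict speculative execution by well-founded induction on the measure $\textit{max spec}(\SigmaB)$, i.e., the factorial-based weight of the speculation window of the active (topmost) speculative instance. The factorial is the crux of the argument: a single branch-speculation step only decrements the active window by one, yet it simultaneously \emph{pushes} a fresh nested transaction, so a measure that merely counted the window would fail to decrease as the stack grows. Reading \Cref{tr:v1-spec}, a step from an instance of window $n+1$ produces an instance of window $n$ together with a new nested instance of window $j = \min(\omega, n) \le n$; over the lifetime of a window-$(n{+}1)$ transaction at most $n+1$ such nested transactions (each of window $\le n$) are ever created, which is exactly the recurrence whose solution grows like a factorial. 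This is why $\textit{max spec}$ must be super-linear, and it lets us argue that the measure strictly decreases along the relevant reductions, so well-foundedness yields termination.

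Concretely, I would first establish a single-step decrease lemma: whenever $\trgB{\SigmaB \specarrowB{\tau} \SigmaB''}$ with the top frame not yet resolved, $\textit{max spec}(\SigmaB'') < \textit{max spec}(\SigmaB)$. This goes by case analysis on the applicable top-frame rule. The non-speculative cases (\Cref{tr:v1-nospec-act}, \Cref{tr:v1-nospec-eps}) decrement the window and so strictly decrease the measure; the barrier case (\Cref{tr:v1-barr-spec}) collapses the window to $0$; and the rollback case (\Cref{tr:v1-rollback}) discards a window-$0$ frame, both lowering the weight and shrinking the stack. The single delicate case is branch speculation (\Cref{tr:v1-spec}), where I rely on the super-linear weight to show that replacing a window-$(n{+}1)$ frame by a window-$n$ frame plus a window-$j$ frame (with $j \le n$) still strictly decreases $\textit{max spec}$; this amounts to verifying the domination inequality $w(n+1) > w(n) + w(j)$ for the chosen per-frame weight $w$ and every legal $j = \min(\omega, n)$.

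With the single-step decrease in hand, the main statement follows by well-founded induction on $\textit{max spec}(\SigmaB)$. In the base case the measure is minimal, so $\SigmaB$ is already a single non-speculative frame $\trgB{(\Omega, \bot, \safeta)}$ and I take $\SigmaB' = \SigmaB$ with the empty trace via \Cref{tr:v1-reflect}. In the inductive case I take one step $\trgB{\SigmaB \specarrowB{\tau} \SigmaB''}$, apply the decrease lemma to get $\textit{max spec}(\SigmaB'') < \textit{max spec}(\SigmaB)$, invoke the induction hypothesis on $\SigmaB''$ to obtain $\trgB{\SigmaB'' \bigspecarrowB{\tra{^{\taint}}} \SigmaB'}$ with $\SigmaB' = \trgB{(\Omega, \bot, \safeta)}$, and stitch the two together with \Cref{tr:v1-single} (or \Cref{tr:v1-silent} when the step is silent). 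The structural big-step rules carry no speculative progress of their own and are threaded through, consistent with the measure note that the decrease holds ``modulo the general steps in the semantics.''

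I expect the main obstacle to be exactly the branch-speculation case together with pinning down the precise form of $\textit{max spec}$ that makes it monotone. One must check that the weight of the replaced frame strictly dominates the combined weight of the decremented frame and the newly nested one for \emph{all} small windows as well (where a naive per-frame factorial sum already breaks, e.g.\ going from window $1$ to window $0$ with a nested window-$0$ frame), which forces the weight to grow fast enough; an exponential per-frame weight $w(n)=c^{n}$ with $c>2$ discharges the inequality cleanly, with the paper's factorial reading corresponding to the induced bound on the total number of reduction steps. A secondary care point is ensuring $\textit{max spec}$ ranges over the entire frame stack rather than the top frame alone, so that a rollback exposing a lower frame (whose window was already reduced) is still covered; phrasing $\textit{max spec}$ as a sum of per-frame weights makes this precise and keeps the induction uniform across arbitrarily deep nesting.
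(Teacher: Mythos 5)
Your proposal is correct and follows essentially the same route as the paper: the paper's proof is a (well-founded) induction on $\textit{max spec}(\SigmaB)$, taking one step, asserting the measure strictly decreases, and closing with the induction hypothesis. Your treatment is in fact more careful than the paper's one-line sketch — in particular, your observations that the measure must be a sum of per-frame weights over the whole stack (so rollbacks are covered) and that a naive factorial weight fails the domination inequality at small windows (forcing, e.g., an exponential weight $c^{n}$ with $c>2$) address details the paper's definition $\textit{max spec}(\SigmaB)=\SigmaB.n!$ leaves implicit.
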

\begin{proof}
    We proceed by induction on n:
    \begin{description}
        \item[n = 0]
        This means there are no more speculative steps in the execution.
        Choose $\trgB{\Sigma'} = \trgB{\Sigma}$ and we are trivially done.

        \item[n = n' + 1] 
        We proceed by case distinction if $\SigmaB$ is currently speculating or not (Technically we do a case distinction on the depth of the stack)
        \begin{description}
            \item[No speculation]
            Choose $\SigmaB' = \SigmaB$ and we are trivially done.
            
            \item[Speculation]
            Then $\SigmaB$ can do some step to $\SigmaB \specarrowB{} \SigmaB''$ and $\textit{max spec}(\SigmaB'') = \trgB{n''}$ where $n'' < n$.
            We now apply the IH for $\SigmaB''$ and get:
            \begin{enumerate}
                \item $\SigmaB'' \bigspecarrowB{} \SigmaB'$ \text{ and }
                \item $\SigmaB' = \trgB{(\Omega, \bot, \safeta)}$ \end{enumerate}
            and we are done.
        \end{description}
    \end{description}
\end{proof}

As a corollary we get the following:
\begin{lemma}[The current Speculation lasts at most n]\label{thm:spec-ends}
If
\begin{enumerate}
    \item $\SigmaB =  \trgB{ (\Omega_b, \bot, \safeta) \cdot \OB{(\Omega', n', \unta)} \cdot (\Omega, n, \unta)}$
    \item $\textit{max spec}(\Sigma) = n$
\end{enumerate}
Then $\exists \SigmaB'$ such that
\begin{enumerate}
    \item $\trgB{\SigmaB \bigspecarrowB{\tra{^{\taint}}} \SigmaB'}$ \text{ and }
    \item $\trgB{\SigmaB'} = \trgB{ (\Omega_b, \bot, \safeta) \cdot \OB{(\Omega', n', \unta)} \cdot (\Omega'', 0, \unta)}$ 
\end{enumerate}
\end{lemma}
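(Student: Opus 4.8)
The plan is to obtain \Cref{thm:spec-ends} as a corollary of \Cref{thm:spec-ends-comp} together with a locality observation about the stack discipline of $\specarrowB{}$. First I would decompose the initial state as $\SigmaB = L \cdot \trgB{(\Omega, n, \unta)}$, where $L = \trgB{(\Omega_b, \bot, \safeta) \cdot \OB{(\Omega', n', \unta)}}$ is the frame that must be preserved and $\trgB{(\Omega, n, \unta)}$ is the distinguished head instance whose window we want to drive down to $0$. Since $\textit{max spec}(\SigmaB) = n$, \Cref{thm:spec-ends-comp} supplies a finite reduction $\SigmaB \bigspecarrowB{\tra{^{\taint}}} \SigmaB_{\mi{fin}}$ with $\SigmaB_{\mi{fin}} = \trgB{(\Omega_{\mi{fin}}, \bot, \safeta)}$, so speculation fully unwinds. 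In particular the distinguished instance is removed from the stack at some step of this sequence, and the only rule that removes an instance is the rollback rule \Cref{tr:v1-rollback}, which fires precisely when the head window is $0$ (or the head state is final).

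The key auxiliary fact I would establish is a frame/locality property of the reduction, which is immediate by inspecting the one-step rules: every rule among \Cref{tr:v1-barr}, \Cref{tr:v1-barr-spec}, \Cref{tr:v1-nospec-eps}, \Cref{tr:v1-nospec-act}, and \Cref{tr:v1-spec} rewrites only the head instance (possibly pushing fresh instances above it), and \Cref{tr:v1-rollback} merely pops the head; none of them inspects or modifies an instance below the head. By the LIFO discipline this means the suffix $L$ is invariant for as long as the distinguished instance, or any instance spawned above it, remains on the stack, and that the distinguished instance can be popped only after every instance pushed above it has itself been rolled back. I would then take $\SigmaB'$ to be the state immediately preceding the rollback that removes the distinguished instance: by the frame property the stack there is exactly $L \cdot \trgB{(\Omega'', 0, \unta)}$, its taint still $\unta$ since no rollback of that instance has yet occurred, and the prefix of the supplied run ending at $\SigmaB'$ witnesses $\SigmaB \bigspecarrowB{\tra{^{\taint}}} \SigmaB'$ as required.

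The hard part will be the rollback side-condition ``$n = 0$ \emph{or} $\finType{\Omega}$'' in \Cref{tr:v1-rollback}. The conclusion insists the head reaches window exactly $0$, whereas in principle the distinguished instance could be rolled back earlier upon reaching a final state with a positive window, in which case no intermediate state with head window $0$ occurs. I would argue that in the setting where the corollary is used the head always exhausts its window before becoming final, so the window-$0$ branch is the one taken; for full generality the honest fix is either to weaken the conclusion to ``$\trgB{(\Omega'', w, \unta)}$ with a rollback enabled'' or to add the hypothesis that the head does not terminate within its window. As a robustness check I would note the alternative of reproving the statement directly by induction on $\textit{max spec}(\SigmaB)$, mirroring \Cref{thm:spec-ends-comp} but strengthening the invariant to retain $L$ and to stop at window $0$; this avoids slicing a prefix out of an externally supplied terminating run at the cost of duplicating the measure bookkeeping.
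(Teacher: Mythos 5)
Your proposal takes the same route as the paper, whose entire proof of this lemma is ``Follows from \Cref{thm:spec-ends-comp}''; the decomposition into a preserved frame $L$ plus a distinguished head instance, and the locality observation that every rule touches only the top of the stack, is exactly the detail the paper leaves implicit. Your worry about the rollback side-condition is also legitimate: since \Cref{tr:v1-rollback} may fire on the final-state disjunct with a strictly positive window, the run need not pass through a state whose head window is exactly $0$ as the conclusion demands, and the paper's one-line proof does not address this, so your proposed weakening of the conclusion (or added non-termination hypothesis) is a reasonable repair rather than a defect of your argument.
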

\begin{proof}
Follows from \Thmref{thm:spec-ends-comp}.
\end{proof}
 
We first restate the definition of Safe Nesting:

\begin{definition}[Safe nested Speculation]\label{def:safe-nest-copy}
    Given a program $p$, we say that the program has safe nesting, written $$\contract{}{x+y} \vdash p : \safeN{}$$,
    iff
    \begin{align*}
        \contract{}{x+y} \vdash P : \safeN{} \isdef&\
            \forall \tauStack \in \behxy{(P)}, 
            \text{ if }
            \\
            &\
            \text{(1) }  \startl{a} \cdot \tauStack' \cdot \rollbl{a} \text{ is a subtrace of $\tauStack$}
            \\
            &\
            \text{(2) } \startl{b} \cdot \tauStack'' \cdot \rollbl{b} \text{ is a subtrace of $\tauStack'$} 
            \\
            &\
            \text{ then }
            \safe{\tauStack''} 
            \\
            &\ \quad \text{ where }
            a \in \{x,y\}
            \text{ and } 
            b \in \{x,y\} \setminus \{a\}
        \end{align*}
\end{definition}

We introduce some auxilary Notation because we need to deal with nested transactions:

\begin{definition}[Trace Prefix Notation]
We write $\tauStack \vdash \SigmaBS'$ if there exists an execution $\SigmaBS \bigspecarrowBS{\tauStack} \SigmaBS'$    
\end{definition}

Next, we need a relation similar to the slh-relation that ensured that
the speculation flag is correctly during speculation.
Instead of requiring that the flag is correctly set when speculation starts, we weaken it to only require it when branch speculation starts and then for all newly created speculative transactions afterwards, even if its not branch speculation.
We phrase this here in general fashion, since we look at different combinations of SLH.

Further, we do not need to necessarily relate this to a source program as we did in previous proofs.

\begin{center}
    \typerule{States SLS}{
         \forall \trgB{\Omega_S} \in \trgB{(\Omega', n', \unta)_{\Bv} \cdot \OB{(\Omega'', n'', \unta)_x}}, \vdash \trgB{\Omega_S} : \hrel
     }{
         \vdash  \trgB{ (\Omega, \bot, \safeta) \cdot \OB{(\Omega, n, \unta)_x}  \cdot (\Omega', n', \unta)_{\Bv} \cdot \OB{(\Omega'', n'', \unta)_x}:\hrel }
     }{v1-states-sls-comb}

     \typerule{Single states SLS}{
        \vdash \trgB{M} : \shrel & \vdash \trgB{A}: \shrel
        }{
            \vdash \trgB{\Omega_S} : \shrel
        }{single-states-sls}

    \typerule{Memory - relation same (copied)}{
            \forall \trgB{n \mapsto v : \taint} \in \trgB{M} 
            \text{ if } \trgB{n} \geq \trgB{0} \text{ then } \trgB{\taint} = \trgB{\safeta}
	}{
		\vdash \trgB{M}: \shrel
	}{shrel-same-copy}

    \typerule{Register - relation SLH (copied)}{
        \forall \trgB{x} \in \trgB{\Reg}  &
        \trgB{A(x)} = \_ : \trgB{\safeta}
        &
        \trgB{A(\rslh)} = \trgB{true} : \trgB{\safeta}
        }{
            \vdash \trgB{A}: \shrel
        }{shrel-same-register-copy}
        \typerule{Program to Compiled program}{
            \Omega_n \in (\Omega, \bot, \safeta) \cdot \OB{(\Omega'', n, \unta)} &
            \Omega = (p, M, A) \\
            p = \comp{\src{p}} & A(pc) = l : i & l \in p
        }{
            \src{p} \vdash (\Omega, \bot, \safeta) \cdot \OB{(\Omega'', n, \unta)}
        }{p-cp-relation}
\end{center}

The Invariant we want to carry through is that whenever speculation of V1 occurs, we know that $\rslh=\mi{true}$.
That is something that we have proven for only $\Bv$. So here we need to ensure it holds for the combinations as well.

\begin{definition}[Invariant]
    We write $\vdash \SigmaBS : \slhInv$ iff 
    either $\vdash \SigmaBS : \hrel$ or there is no 
    ongoing speculation of $\Bv$.
\end{definition}

\begin{definition}[Nesting]
    We write $\vdash \SigmaBS : \nesting$ iff there are nested speculative transactions of the different sources in $\SigmaBS$ (here $\Bv$ and $\Sv$).
\end{definition}

\begin{lemma}[Relation Invariant and Nesting]\label{lem:sn-inv-nesting}
    If 
    \begin{enumerate}
        \item $\vdash \SigmaBS : \slhInv$ and 
        \item $\vdash \SigmaBS : \nesting$
    \end{enumerate}
    Then 
    \begin{enumerate}
        \item $\vdash \SigmaBS : \hrel$
    \end{enumerate}
\end{lemma}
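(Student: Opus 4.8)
The plan is to unfold the definition of the invariant $\slhInv$ and resolve the resulting disjunction using the Nesting hypothesis. Recall that $\vdash \SigmaBS : \slhInv$ holds iff either $\vdash \SigmaBS : \hrel$ holds outright, or there is no ongoing $\Bv$ speculation in the state $\SigmaBS$. The whole argument therefore reduces to a disjunctive syllogism: I will show that the second disjunct is incompatible with $\vdash \SigmaBS : \nesting$, so that the first disjunct, which is exactly the goal, must hold.

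First I would case-split on the disjunction provided by $\vdash \SigmaBS : \slhInv$. In the first case, $\vdash \SigmaBS : \hrel$ is literally the goal, and there is nothing left to prove. In the second case, the hypothesis asserts that no $\Bv$ speculative transaction is ongoing in $\SigmaBS$. I would then invoke $\vdash \SigmaBS : \nesting$, which states the presence of nested speculative transactions originating from the two distinct sources $\Bv$ and $\Sv$. By inspecting the structure of the combined speculative state $\SigmaBS$ (a stack of speculative instances, each tagged with the semantics that created it), the existence of such a nesting forces at least one $\Bv$-tagged instance to be present on the stack, i.e., an ongoing $\Bv$ transaction. This contradicts the second-case assumption, so that case is vacuous and the goal follows.

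The main (and essentially only) subtlety is bridging the two predicates: I must make precise that "nested transactions of the different sources $\Bv$ and $\Sv$" in $\nesting$ entails "ongoing $\Bv$ speculation" in $\slhInv$. Since both notions are read off the same stack of source-tagged speculative instances, the $\nesting$ predicate cannot hold unless the stack contains a $\Bv$-tagged instance, because the nested pair necessarily includes a $\Bv$ transaction as either the outer or the inner component. That occurrence of a $\Bv$-tagged instance is exactly what "ongoing $\Bv$ speculation" records. Beyond this definitional unfolding, no reasoning about the operational semantics, the compiler, or any earlier simulation lemma is required; the statement is a purely logical consequence of the two predicate definitions, which is why I expect this to be a short argument with no genuine obstacle.
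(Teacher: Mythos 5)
Your proof is correct and follows exactly the paper's argument: unfold $\slhInv$ into its two disjuncts, discharge the first immediately, and refute the second by observing that $\nesting$ forces an ongoing $\Bv$ transaction. Your additional remark making explicit why the nested pair must contain a $\Bv$-tagged instance is a welcome elaboration of a step the paper leaves implicit.
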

\begin{proof}
    Unfolding $\vdash \SigmaBS : \slhInv$ we get: 
    \begin{description}
        \item[$\vdash \SigmaBS : \hrel$] 
            This means we are finished.
        \item[there is no ongoing speculation of $\Bv$ in $\SigmaBS$] 
        Because of $\vdash \SigmaBS : \nesting$ we know there is a transaction of $\Bv$ ongoing.
        This means we have a contradiction.
    \end{description}
\end{proof}

\begin{theorem}[$\Bv + \Sv$:Safe Nesting]\label{thm:v14-sn}
    $\contract{}{\Bv+\Sv} \vdash \compuslhB{\cdot} : \safeN{}$
\end{theorem}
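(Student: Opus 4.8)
The plan is to prove Safe Nesting by maintaining a single global invariant on the combined speculative states of $\compuslhB{P}$ and showing that, whenever two transactions are stacked, this invariant forces the speculation flag $\rslh$ to be set, which makes every masked instruction emit only safe observations. Concretely, I would fix an arbitrary $P$ and an arbitrary $\tauStack \in \behBS{\compuslhB{P}}$ together with subtraces $\startl{a} \cdot \tauStack' \cdot \rollbl{a}$ of $\tauStack$ and $\startl{b} \cdot \tauStack'' \cdot \rollbl{b}$ of $\tauStack'$ as in \Cref{def:safe-nest-copy}, with $\{a,b\} = \{\Bv,\Sv\}$. The observation that unifies the two cases ($a=\Sv,b=\Bv$ and $a=\Bv,b=\Sv$) is that \emph{in both a $\Bv$-transaction is ongoing while $\tauStack''$ is produced}: it is the inner transaction when $a=\Sv$ and the outer one when $a=\Bv$. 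Hence it suffices to show that whenever a $\Bv$-transaction sits anywhere on the speculative stack the invariant $\vdash \SigmaBS : \hrel$ holds (so $\rslh$ is armed and every instance from the $\Bv$ one upward is $\shrel$), and that every step from such a state emits a safe observation. Given this, $\safe{\tauStack''}$ follows from \Cref{lem:sn-inv-nesting}.

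The core lemma I would establish is that $\vdash \SigmaBS : \slhInv$ is preserved by the combined reduction, and that in addition every step taken from a state with $\vdash \SigmaBS : \hrel$ produces a safe observation. I would prove this by induction on the combined big-step execution $\bigspecarrowBS{}$, with a case analysis on the rule firing on the top instance. The steps that merely run masked compiled instructions or attacker code while $\Bv$ is ongoing are discharged by reusing the pure-$\Bv$ machinery: \Cref{thm:comp-spec-sing-safe} and \Cref{thm:comp-spec-most-omega} for compiled code, \Cref{thm:ctx-spec-sing-safe} and \Cref{thm:ctx-spec-most-omega} for context code, all of which ultimately rely on \Cref{thm:exp-red-safe} to show expression evaluation under $\shrel$ registers is safe. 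The genuinely new transitions are the two ways a second transaction is pushed: a store bypass (\Cref{tr:v4-skip}) fired inside a $\Bv$-transaction, and a branch misprediction (\Cref{tr:v1-spec}) fired inside an $\Sv$-transaction.

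For a store bypass nested inside $\Bv$, \Cref{tr:v4-skip} copies the register file unchanged except for $\pc$, so $\rslh$ stays $\mi{true}$ and the pushed $\Sv$-instance again satisfies $\shrel$; the invariant is preserved and every subsequent masked step inside the $\Sv$-transaction is safe by the same reasoning as for compiled $\Bv$ speculation. For a branch misprediction nested inside $\Sv$, the flag may be $\mi{false}$ when the branch is reached, so the masking of the branch condition need not fire; crucially, however, the branch's own $\pc$-observation and the unmasked register $\rscr$ belong to the \emph{outer} $\Sv$-transaction, i.e. they precede $\startl{\Bv}$ and are not part of $\tauStack''$. Inside the newly pushed $\Bv$-instance the first instruction on the mispredicted path is a flag-setting assignment ($l_2$ or $l_{\mathit{new}}$) emitting only $\varepsilon$; as shown in the branch case of \Cref{thm:v1-fwd-sim-stm-uslh}, this sets $\rslh$ to $\mi{true}$ regardless of its previous value, because each mispredicted direction disjoins the guard with a constant that evaluates to $\mi{true}$, so the result is $\mi{true}$ independently of the (possibly stale, unsafe) condition value. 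From that point $\shrel$ holds for the $\Bv$-instance, \Cref{lem:sn-inv-nesting} gives $\vdash \SigmaBS : \hrel$, and the rest of $\tauStack''$ is safe; rollbacks merely pop the top instance and trivially preserve $\slhInv$.

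The main obstacle I anticipate is exactly this branch-inside-store case: I must argue that arming $\rslh$ does not depend on the unsafe value flowing out of the bypassed store, and that the only observations emitted before the flag is set are the structural $\pc$-observations of the inserted flag-setting and jump instructions, which target fixed program labels and are therefore safe. A secondary difficulty is the bookkeeping for arbitrary interleavings, since the stack may contain several alternating $\Bv$- and $\Sv$-transactions: I would strengthen the invariant of \Cref{tr:v1-states-sls-comb} to require $\shrel$ for every instance at or above the \emph{lowest} ongoing $\Bv$-transaction, and also admit the one-step window between $\startl{\Bv}$ and the flag-setting instruction (during which the top $\pc$ points at $l_2$/$l_{\mathit{new}}$ and only $\varepsilon$ is emitted), then check that each push, step, and rollback preserves this layered shape. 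Finally, since $\compsslhB{\cdot}$ differs from $\compuslhB{\cdot}$ only in the treatment of variable-latency instructions, the same argument discharges the $\compsslhB{\cdot}$ clauses of \Cref{thm:safe-nesting-results-slh}.
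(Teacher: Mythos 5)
Your proposal follows essentially the same route as the paper's proof: the same invariant ($\slhInv$, i.e.\ $\hrel$ whenever a $\Bv$-transaction is on the stack), the same reduction of Safe Nesting to a per-step case analysis under that invariant (the paper's \Cref{thm:v14-comp-nest-spec-multi-safe} and \Cref{thm:v14-comp-nest-spec-sing-safe} via \Cref{lem:sn-inv-nesting}), and the same reuse of the pure-$\Bv$ machinery (\Cref{thm:comp-spec-sing-safe}, \Cref{thm:ctx-spec-sing-safe}) for the masked and attacker steps. Your explicit handling of the one-step window between $\startl{\Bv}$ and the flag-setting instruction is in fact more careful than the paper, whose argument for that transition (the lemma ``Steps of compiled programs preserve relation'' and the trailing ``It remains to show how the invariants hold for $\SigmaBS'$'') is left essentially unargued.

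One step does not go through as literally written. In the branch-inside-$\Sv$ case you claim that after $l_2$ (resp.\ $l_{\mathit{new}}$) executes, ``$\shrel$ holds for the $\Bv$-instance.'' The \emph{value} of $\rslh$ is indeed $\mi{true}$ on the mispredicted path, but its \emph{taint} need not be $\safeta$: if the branch condition carries an unsafe value (e.g.\ loaded from private memory inside the outer $\Sv$-transaction) while $\rslh$ is still $\mi{false}$, the mask at $l_1'$ does not fire, $\rscr$ stays tainted $\unta$, and by \Cref{tr:T-binop} the assignment $\passign{\rslh}{\rslh \lor \rscr}$ leaves $\rslh$ tainted $\unta$, so \Cref{tr:shrel-same-register-copy} fails. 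The safety conclusion survives --- every subsequent masked instruction rewrites its address or operand to the constant $0$, whose taint is $\safeta$, and \Cref{tr:t-condup-sat} does not propagate the guard's taint into the assigned value --- but your invariant must be weakened to track only the value of $\rslh$ (plus safety of the registers that actually feed observations) rather than full $\shrel$. This is a local repair; note that the paper's own single-step lemma only ever considers branches fired when $\rslh$ is already armed (so $\rscr$ is already masked to a safe $0$) and therefore never confronts this case.
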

\begin{proof}
    Let $p$ be any source program. We need to show:
    $\contract{}{\Bv+\Sv} \vdash \comp{p} : \safeN{}$.

    Thus, for the trace we need to show that for $tauStack \in \behxy{\compuslhB{P}}$ if 
    $\startl{a} \cdot \tauStack' \cdot \rollbl{a} \in \tauStack$
    and $\startl{b} \cdot \tauStack'' \cdot \rollbl{b} \in \tauStack'$ then $\safe{\tauStack''}$ were $a \in \{x,y\}
    \text{ and } 
    b \in \{x,y\} \setminus \{a\}$.

    W.l.o.g. assume that there is nesting (Otherwise $\tauStack''$ is trivially safe).

    Since $tauStack \in \behxy{\compuslhB{P}}$ and we know that there is nesting, we know there exists a states $\SigmaBS, \SigmaBS'$ and $\SigmaBS \specarrowBS{\startl{b}} \SigmaBS'$ 
    .

    Furthermore, we know there exists $\SigmaBS''$ such that 
    $\SigmaBS' \bigspecarrowBS{\tauStack'' \cdot rollbl{b}} \SigmaBS''$.

    Thus, $\SigmaBS'$ is the state at the start of nesting.

    First, this trivially means that $\vdash \SigmaBS : \nesting$ holds.

    We can now apply \Thmref{thm:v14-comp-nest-spec-multi-safe} on 
    $\SigmaBS' \bigspecarrowBS{\tauStack'' \cdot rollbl{b}} \SigmaBS''$, which gives us that $\safe{\tauStack''}$.

    It remains to show how the invariants hold for $\SigmaBS'$:

\end{proof}

\begin{lemma}[$\Bv+\Sv:$ Steps in Nesting are safe]\label{thm:v14-comp-nest-spec-multi-safe}
If 
\begin{enumerate}
    \item $\SigmaBS \bigspecarrowBS{\tauStack} \SigmaBS'$
    \item $\vdash \SigmaBS : \slhInv $
    \item $\vdash \SigmaBS : \nesting$
    \item $\src{p} \vdash \SigmaBS$
    \item $\src{p} \vdash \SigmaBS'$ \end{enumerate}
Then 
\begin{enumerate}
    \item $\vdash \SigmaBS' : \slhInv $ and 
    \item $\safe{\tauStack}$
\end{enumerate}
\end{lemma}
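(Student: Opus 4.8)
The plan is to prove \Thmref{thm:v14-comp-nest-spec-multi-safe} by induction on the derivation of $\SigmaBS \bigspecarrowBS{\tauStack} \SigmaBS'$, after strengthening the invariant that is threaded through the induction. The crucial preliminary step is to combine the two hypotheses $\vdash \SigmaBS : \slhInv$ and $\vdash \SigmaBS : \nesting$ through \Thmref{lem:sn-inv-nesting} to obtain $\vdash \SigmaBS : \hrel$: because nesting in the $\Bv + \Sv$ semantics necessarily involves a $\Bv$ transaction, the second disjunct of $\slhInv$ (no ongoing $\Bv$ speculation) is ruled out, leaving the first. Hence at the start of the nested execution the speculation flag is correctly set, i.e.\ $\rslh = \mi{true}$, and by \Cref{tr:v1-states-sls-comb} every speculative instance from the $\Bv$-transaction onwards has safe public memory and a register file satisfying $\vdash \trgB{A} : \shrel$. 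I would therefore carry $\hrel$ --- rather than the weaker $\slhInv$ --- as the loop invariant, converting back to $\slhInv$ only at the very end. Since the program $p$ is fixed and the program counter stays a valid label, $\src{p} \vdash \cdot$ is preserved by reduction and is thus available at every intermediate state.

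First I would isolate a single-step preservation-and-safety lemma: if $\vdash \SigmaBS : \hrel$, $\src{p} \vdash \SigmaBS$, and $\SigmaBS \specarrowBS{\tau} \SigmaBS''$ with $\src{p} \vdash \SigmaBS''$, then $\safe{\tau}$ and $\SigmaBS''$ either again satisfies $\vdash \SigmaBS'' : \hrel$ (if $\Bv$ speculation is still ongoing) or has no ongoing $\Bv$ speculation, whence $\vdash \SigmaBS'' : \slhInv$ by the second disjunct (this happens only when the step is the final rollback of the $\Bv$ transaction). The proof of this single-step lemma is a case analysis on the instruction on top of the stack, reusing the machinery developed for the pure-$\Bv$ compiler. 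For compiled-code steps the argument mirrors \Thmref{thm:comp-spec-sing-safe}: since $\rslh = \mi{true}$, every mask $\pcondassign{\rscr}{0}{\rslh}$ rewrites the operand of each load, store, assignment, variable-latency instruction, and branch target to the safe constant $0$, so every emitted observation is data-independent, hence safe (the action taint $\taintpc \glb \taint$ is safe because the data taint $\taint$ is safe), and the resulting register file and memory again satisfy $\shrel$. For context-code steps safety follows as in \Thmref{thm:ctx-spec-sing-safe}: an attacker (\Thmref{def:atk}) accesses only public memory, which is safe under $\shrel$, and evaluates expressions over safe registers (\Thmref{thm:exp-red-safe}). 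Rollback steps emit only a rollback action, which is safe, and merely pop an instance, preserving the typing of the remaining stack.

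With this single-step lemma in hand the induction is routine. The reflection case gives $\tauStack = \empTr$ and $\SigmaBS' = \SigmaBS$, so $\safe{\empTr}$ and $\vdash \SigmaBS' : \slhInv$ hold immediately. The step case decomposes $\bigspecarrowBS{}$ into one $\specarrowBS{}$ step followed by the remaining big step, applies the single-step lemma to get safety of the first action together with preservation of $\hrel$ (or the end of $\Bv$ speculation), and then invokes the induction hypothesis on the tail. Concatenating the safe actions yields $\safe{\tauStack}$, and the end state satisfies $\vdash \SigmaBS' : \slhInv$ in either branch of the single-step conclusion.

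The main obstacle I expect is the treatment of the instructions that \emph{start} a nested transaction, since these are precisely the cases absent from the pure-$\Bv$ development. Two sub-cases matter: the compiled branch opening a further $\Bv$ transaction (as in \Thmref{thm:comp-spec-most-omega}) and --- genuinely new here --- the compiled store triggering $\Sv$ store-bypass speculation while $\Bv$ is already ongoing. In both, the combined reduction $\specarrowBS{}$ pushes a fresh speculative instance built from the current configuration, and I must check that this new instance still satisfies the single-state typing --- in particular that the copied register file retains $\rslh = \mi{true}$ and that the public memory is unchanged and safe --- so that the whole stack keeps the $\hrel$ typing of \Cref{tr:v1-states-sls-comb}. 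This requires unfolding the combined semantics obtained through the composition framework of~\cite{spec_comb} and verifying that instance creation copies the register file and memory faithfully; the one delicate point is the $\unta$ taint that the store-bypass and branch rules place on the mispredicted program counter, where I must confirm that it does not corrupt the safety of the emitted observations, which are governed by $\taintpc \glb \taint$ with the data taint $\taint$ kept safe by the masking. Once this bookkeeping is discharged the $\Sv$ store observation is the masked $\storeObs{0}$, hence safe, and the nested $\Sv$ instance continues under $\hrel$, closing the argument.
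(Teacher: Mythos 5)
Your proposal is correct and follows essentially the same route as the paper: induction on the derivation of $\SigmaBS \bigspecarrowBS{\tauStack} \SigmaBS'$, factored through a single-step lemma (the paper's \Cref{thm:v14-comp-nest-spec-sing-safe}) that first applies \Cref{lem:sn-inv-nesting} to turn $\slhInv$ plus $\nesting$ into $\hrel$ and then performs the case analysis on the next instruction, reusing the pure-$\Bv$ machinery for the masked loads/stores/branches and the context lemmas for attacker code. Your refinement of threading $\hrel$ (or ``no ongoing $\Bv$ speculation'') explicitly as the inductive invariant, rather than re-deriving it from $\slhInv$ and $\nesting$ at each step, is a slightly cleaner packaging of what the paper does somewhat informally when it says to reapply the single-step lemma ``until we reach'' the end state, but it is not a different argument.
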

\begin{proof}
    We proceed by Induction on the derivation $\SigmaBS \bigspecarrowBS{\tauStack} \SigmaBS'$: 
    \begin{description}
        \item[\Cref{tr:xy-reflect}] 
        Trivial, since $\SigmaBS' = \SigmaBS$ and $\tauStack = \empTr$.

        Thus, we trivially fulfill all conclusions.

        \item[\Cref{tr:xy-single}]

        Then we have $\SigmaBS \bigspecarrowBS{\tauStack'} \SigmaBS''$ and $\SigmaBS'' \specarrowBS{\tau^{\taint}} \SigmaBS'$.

Because of $\vdash \SigmaBS : \slhInv $ and $\vdash \SigmaBS : \nesting$
        we can apply \Thmref{thm:v14-comp-nest-spec-sing-safe} on $\SigmaBS$ and get 
        $\SigmaBS \bigspecarrowBS{\tauStack'''} \SigmaBS'''$.
        If there is still nesting in $\SigmaBS'''$, we can use the theorem \Thmref{thm:v14-comp-nest-spec-sing-safe} again and continue until we reach $\SigmaBS''$. 
        We will reach $\SigmaBS'$ because of determinism and alignment

        Then, we are done.

        \item[\Cref{tr:xy-silent}]
        Analogous to the case above
    \end{description}
\end{proof}

\begin{lemma}[$\Bv + \Sv$: A step in Nesting is safe]\label{thm:v14-comp-nest-spec-sing-safe}
    If 
    \begin{enumerate}
        \item $\SigmaBS$
        \item $\vdash \SigmaBS : \slhInv $
        \item $\vdash \SigmaBS : \nesting$
        \item $\src{p} \vdash \SigmaBS$
    \end{enumerate}
    Then 
    \begin{enumerate}
        \item $\SigmaBS \bigspecarrowBS{\tau} \SigmaBS'$
        \item $\vdash \SigmaBS' : \slhInv$
        \item $\src{p} \vdash \SigmaBS'$
        \item $\safe{\tau}$
    \end{enumerate}
\end{lemma}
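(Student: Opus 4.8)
\textbf{Proof plan for \Cref{thm:v14-comp-nest-spec-sing-safe}.}
The plan is to prove this single-step safety lemma by case analysis on the instruction currently being executed at the top of the speculative stack $\SigmaBS$, mirroring the structure of the already-established single-step safety lemma \Thmref{thm:comp-spec-sing-safe} for the pure $\Bv$ case. The crucial observation that makes the argument work is that, since $\vdash \SigmaBS : \nesting$ holds, there is an ongoing nested speculative transaction, which in particular includes an ongoing $\Bv$-transaction; combining this with $\vdash \SigmaBS : \slhInv$ via \Cref{lem:sn-inv-nesting} yields $\vdash \SigmaBS : \hrel$. Unfolding $\hrel$ (rules \Cref{tr:v1-states-sls-comb,tr:single-states-sls}) then tells us that every speculating instance satisfies $\vdash \trgB{M} : \shrel$ and $\vdash \trgB{A} : \shrel$, i.e.\ the speculation flag $\rslh$ is set to $\mi{true}$ and all public memory and register entries are tainted $\safeta$. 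This is exactly the invariant that the SLH masking code relies on.

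First I would establish this $\hrel$ fact as the opening move. Then I would perform the case split on the top instruction's compiled form $\compuslhB{i}$. For the non-control-flow cases (\textbf{assign}, \textbf{load}, \textbf{store}, their private variants, and the variable-latency \textbf{vassign}), the argument is essentially identical to the corresponding cases in \Thmref{thm:comp-spec-sing-safe}: because $\rslh = \mi{true}$, the inserted conditional assignment $\pcondassign{\rscr}{0}{\rslh}$ fires (\Cref{tr:condup-sat}), masking the scratch register to $0 : \safeta$, so any emitted $\loadObs{}$, $\storeObs{}$, or $\opObs{}{}$ action carries operand $0$ tainted $\safeta$, hence $\safe{\tau}$. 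I would also need to re-establish $\vdash \SigmaBS' : \slhInv$: since only $\safeta$-tainted values are written and $\rslh$ is never cleared during masked execution, $\vdash \trgB{A'} : \shrel$ and $\vdash \trgB{M'} : \shrel$ are preserved, so $\vdash \SigmaBS' : \hrel$ and hence $\slhInv$ holds. The $\pret$/$\pcall$ cases similarly preserve $\shrel$ because the compiled code saves and restores $\rslh$ through $\rslhC$ while $\rslh = \mi{true}$, and call/return actions are always $\safeta$.

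The branch case (\textbf{beqz}) is where I expect the main subtlety, because executing a $\jzC$ under $\compuslhB{\cdot}$ can itself \emph{start} a new $\Bv$-transaction via \Cref{tr:v1-spec}, giving rise to deeper nesting. Here I would argue that the masked scratch register is $0 : \safeta$ (again because $\rslh = \mi{true}$), so the branch observation $\pcObsKywd$ is safe, and the newly pushed speculative instance inherits $\rslh = \mi{true}$ together with $\shrel$-clean state, so $\vdash \SigmaBS' : \hrel$ continues to hold for all instances including the freshly created one. The second genuine obstacle is propagating the program-wellformedness side condition $\src{p} \vdash \SigmaBS'$ (\Cref{tr:p-cp-relation}): I must check that after the step the top instance's program counter still points into compiled code $\compuslhB{\src{p}}$, which follows because the masking and flag-update instructions inserted by the compiler all stay within the compiled function body and the jump targets $l_{new}$, $\mi{next}(\src{l})$, $l'$ are all labels of compiled instructions.

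The hard part will be handling the interaction point between the two semantics cleanly: the definition of $\hrel$ (\Cref{tr:v1-states-sls-comb}) bakes in the specific stack shape $(\Omega,\bot,\safeta)\cdot\OB{(\Omega,n,\unta)_x}\cdot(\Omega',n',\unta)_{\Bv}\cdot\OB{(\Omega'',n'',\unta)_x}$, and I must verify that a single $\specarrowBS{}$ step—whether it extends the topmost nested transaction, rolls one back, or starts a new one via $\startl{b}$—maps a stack of this shape to another stack of this shape while keeping every instance $\shrel$-clean. I would discharge this by noting that the only way $\rslh$ can become $\mi{false}$ is through a rollback that pops the $\Bv$-transaction entirely, at which point $\nesting$ no longer holds and $\slhInv$ is satisfied by its second disjunct; in all other steps $\rslh$ remains $\mi{true}$, so $\hrel$ is preserved and $\safe{\tau}$ follows from the masking. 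I expect the routine register- and memory-update bookkeeping to be delegated to the $\shrel$-preservation reasoning already used in \Thmref{thm:comp-spec-sing-safe}.
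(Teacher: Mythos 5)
Your proposal is correct and follows essentially the same route as the paper's proof: it opens by combining $\slhInv$ and $\nesting$ via \Cref{lem:sn-inv-nesting} to obtain $\vdash \SigmaBS : \hrel$, then performs a case analysis on the compiled instruction, using $\rslh = \mi{true}$ to show the masking renders every emitted observation safe, and handles rollback by observing that either a $\Bv$-transaction survives (so $\hrel$ persists on the remaining instances) or nesting ends and $\slhInv$ holds by its second disjunct. Your additional attention to the stack-shape preservation in $\hrel$ and the $\src{p} \vdash \SigmaBS'$ side condition is slightly more explicit than the paper, which delegates those points to the analogous cases of \Thmref{thm:comp-spec-sing-safe}, but the substance is the same.
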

\begin{proof}
    Because we have $\vdash \SigmaBS : \slhInv$ and $\vdash \SigmaBS : \nesting$ we can apply \Thmref{lem:sn-inv-nesting} and get $\vdash \SigmaBS : \hrel$.

    Next, we look at the next instruction that will be executed.

    \begin{description}
        \item[branch instruction]
        Let us look ath the reduction that the rules generate: 
        \begin{align*}
            \compuslhB{l : \pjz{x}{l'}} &= \trgB{
                                            \begin{aligned}[t]
                                                    &
                                                    l : \passign{\rscr}{x}
                                                        \\
                                                        &\
                                                        l_1' : \pcondassign{\rscr}{0}{\rslh}
                                                        ~\text{\Cref{tr:condup-sat} since $\rslh = \mi{true}$}
                                                            \\
                                                            &\
                                                           l_1'' :  \pjz{\rscr}{l_{new}} ~\text{\Cref{tr:v1-spec}}
                                                                \\
                                                                &\
                                                                l_2 : \passign{\rslh}{\rslh \lor \rscr} ~\text{\Cref{tr:assign}}
                                                                    \\
                                                                    &\
                                                                    l_2' : \pjmp{next(l)}
                                                                        \\
                                                                        &\ 
                                                                        l_{new} : \passign{\rslh}{\rslh \lor \neg \rscr}
                                                                            \\ 
                                                                            &\ 
                                                                            l_{new}' : \pjmp(l')       
                                            \end{aligned}
                                        }
        \end{align*}
        Let us call the resulting execution afterwards $\SigmaBS \bigspecarrowBS{\tauStack} \SigmaBS'$. 

        \begin{description}
 
\item [$\src{p} \vdash \SigmaBS'$]
                From how we evaluate this program we have $\src{p} \vdash \SigmaBS'$ already.
            \item [$\safe{\tauStack}$]
                From $\vdash \SigmaBS : \hrel $ we get that $\SigmaBS(\rslh) = \mi{true}$.
                Thus, $\rscr = 0$ and after execution $l_1'' :\pjz{\rscr}{l_{new}}$ by \Cref{tr:v1-spec},
                we continue with $l_2$ because of mispeculation. The observation $\tau = \pcObs{{l_2}}^\safeta_{\Bv}$ is generated.

                we get the trace $\tauStack = \storeObs{0}^\safeta_{\Sv}$.

                Thus, we have $\safe{\tauStack}$.
            \item [$\vdash \SigmaBS' : \slhInv$]
                The register was changed in $l_2 : \passign{\rslh}{\rslh \lor \rscr}$. Since $\rslh = \mi{true}$ before the instruction
                is executed, we have $\rslh = \mi{true}$ afterwards as well.

                We thus have $\vdash : \SigmaBS' : \hrel$.
        \end{description}

        \item[store instruction]
        Let us look at the reduction that the rules generate: 
        \begin{align*}
            \compuslhB{l : \pstore{x}{e}} &= \trgB{
                                            \begin{aligned}[t]
                                                    &
                                                    l :\passign{\rscr}{\compuslhB{e}}
                                                        ~\text{\Cref{tr:assign} uses $\trgB{\exprEval{A}{\compuslhB{e}}{v: \taint}}$}
                                                        \\
                                                        &\
                                                        l' :\pcondassign{\rscr}{0}{\rslh} 
                                                            ~\text{\Cref{tr:condup-sat} since $\rslh = \mi{true}$}
                                                            \\
                                                            &\
                                                            l'' :\pstore{x}{\rscr} 
                                                            ~\text{\Cref{tr:v4-skip}}
                                            \end{aligned}
                                        }
        \end{align*}

        Let us call the resulting execution afterwards $\SigmaBS \bigspecarrowBS{\tauStack} \SigmaBS'$. 
        \begin{insight}
            As before, here we describe the case when the speculation window is big enough. 
            If the speculation window reaches 0 before, a rollback is triggered, which is described below. However, a rollback never leaks information, so the most interesting case is when all instructions go through.
        \end{insight}

        \begin{description}
            \item [$\vdash \SigmaBS' : \slhInv$]
                Since these instructions do not change the register $\rslh$, we also have $\vdash \SigmaBS' : \hrel$.
            \item [$\src{p} \vdash \SigmaBS'$]
                From how we evaluate this program we have $\src{p} \vdash \SigmaBS'$ already.
            \item [$\safe{\tauStack}$]
                From $\vdash \SigmaBS : \hrel $ we get that $\SigmaBS(\rslh) = \mi{true}$.
                Thus, $\rscr = 0$ and after execution of $\pstore{x}{0}$, 
                we get the trace $\tauStack = \storeObs{0}^\safeta_{\Sv}$.

                Thus, we have $\safe{\tauStack}$.
        \end{description}
        
        \item[otherwise]
        This is similar to the proof for \Thmref{thm:comp-spec-sing-safe}, which yields $\SigmaBS'$ as desired.
        
        \item[Rollback]
        This happens when the speculation windows was not big enough. 
        This is similar to the rollback case in \Thmref{thm:comp-spec-sing-safe}.
        
        We show why $\vdash \SigmaBS' : \slhInv$ holds. 
        \begin{description}
            \item[The last ongoing V1 transaction was rolled back]
            Then we trivially satisfy $\vdash \SigmaBS' : \slhInv$
            \item[otherwise]
            Then there is at least one ongoing $\Bv$ speculation. 
            From the fact that $\vdash \SigmaBS : \slhInv$ and we had nesting going on, we have $\forall \trgB{\Omega_S} \in 
            \trgB{(\Omega', n', \unta)_{\Bv} \cdot \OB{(\Omega'', n'', \unta)_x}}, \vdash \trgB{\Omega_S} : \hrel$.
            Since a rollback only deletes the topmost state and not change the states below, we have
            $\vdash \SigmaBS' : \slhInv$.
        \end{description}
    \end{description} 
\end{proof}

\begin{lemma}[Initial state fulfills Invariant]
    If 
    \begin{enumerate}
        \item $\Sigma_{\Bv + y} = \initFuncxy{P}$
    \end{enumerate}
    Then 
    \begin{enumerate}
        \item $\vdash \Sigma_{\Bv + y} : \slhInv$
    \end{enumerate}
\end{lemma}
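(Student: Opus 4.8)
The plan is to prove this by simply unfolding the two definitions involved and observing that the conclusion holds via the second (trivial) disjunct of $\slhInv$. The key observation is that the initial state of \emph{any} of our speculative semantics---and hence of their composition---carries no ongoing speculation at all.

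First I would unfold the definition of the initial state $\initFuncxy{P}$ for the combined semantics $\contract{}{\Bv + y}$. Following the shape of the base initial functions (e.g.\ \Cref{tr:v1-init} for $\Bv$, which yields $\initFunc(M,\OB{F},\OB{I}),\bot,\safeta$, and analogously \Cref{tr:v4-init} for $\Sv$), the composed initial state is a single speculative instance whose speculation window is $\bot$. This $\bot$ is exactly the marker that no speculative transaction is in progress, so $\Sigma_{\Bv + y}$ contains no ongoing speculation of any source, and in particular none of $\Bv$.

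Next I would recall the definition of the invariant: $\vdash \SigmaBS : \slhInv$ holds iff either $\vdash \SigmaBS : \hrel$, or there is no ongoing speculation of $\Bv$ in $\SigmaBS$. Since the previous step established that $\Sigma_{\Bv + y}$ has no ongoing $\Bv$ speculation, the second disjunct applies directly, and I would conclude $\vdash \Sigma_{\Bv + y} : \slhInv$ without needing to establish $\hrel$ at all.

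There is no genuine obstacle here: the statement is essentially immediate once the form of the initial state is confirmed. The only point requiring a small amount of care is verifying that the \emph{composed} initial state indeed begins with window $\bot$ (rather than, say, eagerly spawning a speculative instance). This follows from the well-formedness of the composition $\wfc{\contract{}{\Bv + y}}$ together with the uniform shape of each base semantics' initial function, each of which initialises with a single non-speculating instance. This lemma is the base case that feeds into the inductive safe-nesting argument (\Cref{thm:v14-comp-nest-spec-multi-safe}), where the substantive reasoning about $\rslh$ actually takes place.
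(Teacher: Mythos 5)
Your proposal is correct and matches the paper's own proof, which likewise observes that the initial state has no ongoing speculation (in particular none of $\Bv$) and concludes via the trivial disjunct of $\slhInv$. The extra care you take in confirming the composed initial state starts with window $\bot$ is a reasonable elaboration but not a different argument.
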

\begin{proof}
    Initially, there is no ongoing speculation. Thus, there is no $\Bv$ speculation 
    and we trivially satisfy $\vdash \Sigma_{\Bv + y} : \slhInv$
\end{proof}

\begin{lemma}[Steps of compiled programs preserve relation]
   If 
   \begin{enumerate}
    \item $\vdash \SigmaBS : \slhInv$
    \item $\src{p} \vdash \SigmaBS$
   \end{enumerate}
   Then 
   \begin{enumerate}
    \item $\SigmaBS \bigspecarrowBS{\tauStack} \SigmaBS'$
    \item $\vdash \SigmaBS' : \slhInv$
    \item $\src{p} \vdash \SigmaBS$
   \end{enumerate}
\end{lemma}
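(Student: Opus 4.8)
The plan is to prove this by induction on the big-step derivation $\SigmaBS \bigspecarrowBS{\tauStack} \SigmaBS'$, mirroring the structure of \Thmref{thm:v14-comp-nest-spec-multi-safe}. The reflexive case (\Cref{tr:xy-reflect}) is immediate since then $\SigmaBS' = \SigmaBS$ and $\tauStack = \empTr$, so both $\vdash \SigmaBS' : \slhInv$ and $\src{p} \vdash \SigmaBS'$ hold by assumption. For the inductive step, the derivation decomposes (via \Cref{tr:xy-single} or \Cref{tr:xy-silent}) as a prefix $\SigmaBS \bigspecarrowBS{\tauStack'} \SigmaBS''$ followed by a single small step $\SigmaBS'' \specarrowBS{\tau^{\taint}} \SigmaBS'$. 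I would apply the induction hypothesis to the prefix to obtain $\vdash \SigmaBS'' : \slhInv$ and $\src{p} \vdash \SigmaBS''$, and then reduce the whole claim to a single-step preservation statement: a small step out of a compiled-program state satisfying $\slhInv$ lands in a state that again satisfies $\slhInv$.

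The core of the argument is this single-step claim, which I would establish by case analysis on whether there is ongoing $\Bv$ speculation in $\SigmaBS''$. If there is none, then $\vdash \SigmaBS'' : \slhInv$ holds vacuously through the ``no ongoing $\Bv$ speculation'' disjunct, and I split on the executed instruction. Only a branch instruction can start a fresh $\Bv$ transaction, so I would unfold the compilation $\compuslhB{l : \pjz{x}{l'}}$ and trace the emitted instruction sequence exactly as in the branch case of \Thmref{thm:comp-spec-sing-safe}: upon misprediction the rule \Cref{tr:v1-spec} fires after $\pcondassign{\rscr}{0}{\rslh}$, and the subsequent assignment $\passign{\rslh}{\rslh \lor \neg \rscr}$ (or $\passign{\rslh}{\rslh \lor \rscr}$ on the fall-through) forces $\rslh = \mi{true}$ in the newly created speculative instance, thereby establishing $\vdash \SigmaBS' : \hrel$ and hence $\slhInv$. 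For any other instruction no $\Bv$ transaction is born, so the ``no ongoing $\Bv$ speculation'' disjunct persists; note that $\Sv$ may start a store-bypass transaction here, but this neither touches $\rslh$ nor counts as $\Bv$ speculation.

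If instead $\SigmaBS''$ already has ongoing $\Bv$ speculation, then $\vdash \SigmaBS'' : \slhInv$ must hold through the $\hrel$ disjunct. I would observe that $\SigmaBS''$ then also satisfies $\nesting$ whenever a nested transaction is present and invoke \Thmref{lem:sn-inv-nesting} to obtain $\vdash \SigmaBS'' : \hrel$, after which the single-step reasoning of \Thmref{thm:v14-comp-nest-spec-sing-safe} applies verbatim: each masking instruction keeps $\rslh = \mi{true}$, and a rollback at window exhaustion only discards the topmost instance while leaving the $\hrel$ status of the remaining stack (guaranteed by \Cref{tr:v1-states-sls-comb}) intact. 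Preservation of $\src{p} \vdash \SigmaBS'$ then follows directly from \Cref{tr:p-cp-relation}, since no reduction alters the program component $p = \compuslhB{\src{p}}$ and every instance on the stack still points into compiled code.

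The main obstacle I expect is the branch case where a $\Bv$ transaction is created: establishing $\hrel$ for the fresh nested instance requires threading the precise compiled $\jzC$ sequence through \Cref{tr:v1-spec} so that $\rslh$ is provably $\mi{true}$ on the mispredicted path, and then verifying that this $\hrel$ status is inherited by any $\Sv$ transactions subsequently nested beneath it, as demanded by \Cref{tr:v1-states-sls-comb}. The remaining subtlety is bookkeeping across rollbacks, ensuring that discarding a $\Bv$ instance when its window reaches $0$ cannot invalidate the invariant for the still-active outer transactions.
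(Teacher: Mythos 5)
Your proposal is correct and follows essentially the same route as the paper's own proof: a case split on the two disjuncts of $\slhInv$ (no ongoing $\Bv$ speculation vs.\ $\hrel$), combined with a case split on whether the executed instruction is a compiled branch (which starts a $\Bv$ transaction and sets $\rslh$ via the instrumented sequence) or anything else (which neither starts $\Bv$ speculation nor touches the flag). The induction on the big-step derivation and the explicit treatment of the branch instrumentation and rollbacks are details the paper's terser proof leaves implicit, but they do not change the argument.
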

\begin{proof}
    Unfolding $\vdash \SigmaBS : \slhInv$ we have two cases: 
    \begin{description}
        \item[No ongoing speculation of $\Bv$]
        Let us look at the current instruction 
        \begin{description}
            \item[$i = \pjz{x}{l}$]
            Then, a speculative transaction of $\Bv$ is started.
            Since there was no before, we now trivially satisfy $\vdash \SigmaBS : \hrel$.
            \item[otherwise]
            Then, no speculative transaction of $\Bv$ is started and we 
            thus, trivially satisfy $\vdash \SigmaBS' : \slhInv$.

        \end{description}

        \item[$\vdash \SigmaBS : \hrel$]
        Thus, there is ongoing speculation of $\Bv$.
        \begin{description}
            \item[$i = \pjz{x}{l}$] 
            Then the flag is correctly updated and we have $\vdash \SigmaBS : \slhInv$.
            \item[otherwise]
            Note that these instructions cannot change the flag.
            We thus trivially have $\vdash \SigmaBS : \slhInv$. 
        \end{description}
    \end{description}
\end{proof}

Instead of writing 5 different versions of \Thmref{thm:v14-comp-nest-spec-multi-safe} that just use a different step lemma, we just use \Thmref{thm:v14-comp-nest-spec-multi-safe}
\begin{theorem}[$\Bv + \Rv$: Safe Nesting]\label{thm:v15-sn}
    $\contract{}{\Bv+\Rv} \vdash \compuslhB{\cdot} : \safeN{}$
\end{theorem}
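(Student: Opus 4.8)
\textbf{Proof plan for \Cref{thm:v15-sn}.}
The plan is to mirror exactly the structure already developed for the $\Bv + \Sv$ case in \Cref{thm:v14-sn}, since the same compiler $\compuslhB{\cdot}$ is involved and the only thing that changes is the identity of the ``other'' speculation source $y$ (now $\Rv$ instead of $\Sv$). As with \Cref{thm:v14-sn}, I would let $p$ be an arbitrary source program and unfold the definition of Safe Nesting (\Cref{def:safe-nest-copy}): given a trace $\tauStack \in \behxy{\compuslhB{P}}$ with a nested transaction $\startl{b} \cdot \tauStack'' \cdot \rollbl{b}$ occurring inside an outer transaction $\startl{a} \cdot \tauStack' \cdot \rollbl{a}$ (where $\{a,b\}=\{\Bv,\Rv\}$), I must establish $\safe{\tauStack''}$. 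Without loss of generality I assume genuine nesting is present (otherwise $\tauStack''$ is trivially safe), extract from the behaviour the states $\SigmaBR$, $\SigmaBR'$ with $\SigmaBR \specarrowBR{\startl{b}} \SigmaBR'$ marking the onset of nesting and $\SigmaBR' \bigspecarrowBR{\tauStack'' \cdot \rollbl{b}} \SigmaBR''$, and then apply the nested-execution safety lemma to this sub-execution.

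The crux is that the whole argument hinges on the invariant $\vdash \SigmaBR : \slhInv$ (the SLH speculation flag $\rslh$ is correctly set to $\mi{true}$ whenever a $\Bv$ transaction is ongoing) being preserved across all relevant steps. For the $\Bv+\Sv$ case this is carried by the general step lemma \Cref{thm:v14-comp-nest-spec-multi-safe} together with its single-step companion \Cref{thm:v14-comp-nest-spec-sing-safe}. The text explicitly remarks before the statement of \Cref{thm:v15-sn} that rather than rewriting five near-identical lemmas differing only in the component step rule, one reuses \Cref{thm:v14-comp-nest-spec-multi-safe}. So the plan is: invoke \Cref{lem:sn-inv-nesting} to turn $\slhInv + \nesting$ into the stronger $\hrel$ relation at the start of nesting, then drive the nested sub-execution forward step by step, using the case analysis of \Cref{thm:v14-comp-nest-spec-sing-safe}, where the only instruction-specific reasoning is (i) the branch case, for which $\rslh = \mi{true}$ forces the masked scratch register $\rscr$ to $0$, making the masked guard safe and preserving $\hrel$, and (ii) the $\Rv$-specific speculation-source steps (return/RSB mispredictions), which must be shown to emit only safe actions and to leave $\rslh$ untouched so that $\slhInv$ is maintained.

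The main obstacle, and the place where this case genuinely differs from $\Bv+\Sv$, is verifying that the $\Rv$ speculation rules — the return-stack-buffer mispredictions of \Cref{tr:v5-spec} and the call rule \Cref{tr:v5-call} that pushes onto $\RSB$ — neither write the flag register $\rslh$ nor emit unsafe actions under the invariant $\hrel$. Concretely I would check that when $\rslh = \mi{true}$, every masked load/store/assign inside the nested $\Rv$ transaction reads the scratch-masked address $0$ (by the same reasoning as \Cref{thm:comp-spec-sing-safe}), so the emitted $\loadObs{}$/$\storeObs{}$ actions are $\safeta$, and that the compiled $\retC$ and $\callC$ sequences of $\compuslhB{\cdot}$ only touch $\rslhC$, $\rslh$, and $\pc$ in a flag-preserving way — exactly the bookkeeping already done in the \texttt{call}/\texttt{ret} sub-cases of \Cref{thm:comp-spec-sing-safe}. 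Once these $\Rv$-step facts are discharged, the rollback case is handled identically to \Cref{thm:v14-comp-nest-spec-sing-safe} (a rollback only deletes the top instance and preserves $\slhInv$ for the remaining stack), and the induction on the big-step derivation closes the argument. I would also note the initial-state obligation $\vdash \initFuncBR{P} : \slhInv$ holds trivially since no $\Bv$ speculation is ongoing at start, completing the reuse of the $\Bv+\Sv$ scaffolding for the $\Bv+\Rv$ case.
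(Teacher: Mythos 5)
Your plan is correct and follows essentially the same route as the paper: the paper's proof of \Cref{thm:v15-sn} is literally a reuse of the $\Bv+\Sv$ scaffolding (\Cref{thm:v14-sn} via \Cref{thm:v14-comp-nest-spec-multi-safe}), with the only new work concentrated in the $\Rv$-specific single-step lemma \Cref{thm:v15-comp-nest-spec-sing-safe}, which is exactly where you locate it — checking that the compiled $\retC$/$\callC$ sequences save and restore $\rslh$ through $\rslhC$ across RSB mispredictions so that $\slhInv$ (and hence $\hrel$) is preserved and all emitted actions inside the nested transaction are safe.
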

\begin{proof}
    Analogous to \Thmref{thm:v14-sn} using \Thmref{thm:v14-comp-nest-spec-multi-safe}.
\end{proof}

\begin{theorem}[$\Bv + \SLSv$: Safe Nesting]\label{thm:v1sls-sn}
    $\contract{}{\Bv+\SLSv} \vdash \compuslhB{\cdot} : \safeN{}$
\end{theorem}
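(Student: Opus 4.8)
The plan is to mirror the proof of \thmref{thm:v14-sn} (Safe Nesting for $\contract{}{\Bv + \Sv}$), since the overall skeleton is identical and only the ``inner'' speculation mechanism changes from store-bypass ($\Sv$) to straight-line speculation over returns ($\SLSv$). First I would unfold \Cref{def:safe-nest-paper}: given a trace $\tauStack \in \behxy{(\compuslhB{P})}$ with an outer transaction $\startl{a} \cdot \tauStack' \cdot \rollbl{a}$ and a nested transaction $\startl{b} \cdot \tauStack'' \cdot \rollbl{b}$ inside $\tauStack'$ (where $\{a,b\} = \{\Bv, \SLSv\}$), the goal is $\safe{\tauStack''}$. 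If there is no nesting the claim is vacuous, so I assume nesting and isolate the combined state $\SigmaBSLS$ at the point where the nested transaction begins, together with the derivation $\SigmaBSLS \bigspecarrowBSLS{\tauStack'' \cdot \rollbl{b}} \SigmaBSLS'$.

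The core of the argument is an analogue of \thmref{thm:v14-comp-nest-spec-multi-safe} for the $\Bv + \SLSv$ combination, proved by induction on the big-step derivation and reduced to a single-step analogue of \thmref{thm:v14-comp-nest-spec-sing-safe}. The invariant I carry through is $\slhInv$, which states that whenever a branch ($\Bv$) transaction is ongoing the speculation flag satisfies $\hrel$, i.e.\ $\rslh = \mi{true}$. Since nesting is present, \thmref{lem:sn-inv-nesting} upgrades $\slhInv$ to the full relation $\hrel$, so in the single-step lemma I may assume $\rslh = \mi{true}$ and then case-split on the next compiled instruction. For loads, stores, assignments and variable-latency operations the reasoning is exactly as in \thmref{thm:comp-spec-sing-safe}: the conditional masks $\pcondassign{\rscr}{0}{\rslh}$ fire (because $\rslh = \mi{true}$), so every leaked operand is the constant $0$ and the emitted action is safe; for the branch instruction the flag update $\passign{\rslh}{\rslh \lor \rscr}$ preserves $\rslh = \mi{true}$. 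I must additionally re-establish that the initial combined state satisfies $\slhInv$ and that ordinary (non-nested) compiled steps preserve it, exactly as in the $\Bv + \Sv$ development.

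The one genuinely new case, and the place I expect the main difficulty, is the $\SLSv$ step that bypasses a return (\Cref{tr:sls-spec-paper}) while a $\Bv$ transaction is outstanding. Unlike the store-skip of $\Sv$, this rule pushes a fresh nested speculative instance whose program counter points just past the $\pret$, and thereafter the \emph{program layout} dictates which instructions run speculatively. The risk is that this wandering execution reaches a compiled function prologue $\passign{\rslh}{\rslhC}$ (or the post-call restore $\passign{\rslh}{\rslhC}$) and thereby \emph{resets} the flag, breaking $\hrel$. The key observation that resolves this is that the USLH compilation of a return is $\passign{\rslhC}{\rslh};\pret$: immediately before the bypassed return the compiler has copied $\rslh$ into $\rslhC$, so under the invariant $\rslhC = \mi{true}$ as well. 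Consequently any subsequent restore of $\rslh$ from $\rslhC$ during the nested $\SLSv$ transaction re-installs $\mi{true}$ rather than clearing the flag, and the masking code continues to fire. I would package this as a small auxiliary invariant (``$\rslhC = \mi{true}$ whenever $\rslh = \mi{true}$ under $\Bv$ speculation'') that is established at every return/call site and preserved by the $\SLSv$ bypass, and then conclude both $\safe{\tauStack''}$ and $\slhInv$ of the successor state, completing the induction and hence the theorem analogously to \thmref{thm:v14-sn}.
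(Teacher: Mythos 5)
Your proposal matches the paper's proof: the paper likewise reduces Safe Nesting for $\contract{}{\Bv+\SLSv}$ to the $\Bv+\Sv$ development via a multi-step nesting lemma and a single-step lemma, upgrades $\slhInv$ to $\hrel$ using the nesting hypothesis, and handles the bypassed $\pret$ by observing that the compiled sequence $\passign{\rslhC}{\rslh};\pret$ leaves $\rslh$ (hence the invariant) intact and emits only a safe return observation. Your extra remark that $\rslhC=\mi{true}$ at the bypass point, so any prologue restore $\passign{\rslh}{\rslhC}$ reached during the straight-line continuation re-installs the flag rather than clearing it, is a correct and slightly more careful articulation of a point the paper's return case treats tersely.
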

\begin{proof}
    Analogous to \Thmref{thm:v14-sn} using \Thmref{thm:v14-comp-nest-spec-multi-safe} with \Thmref{thm:v1sls-comp-nest-spec-sing-safe}.
\end{proof}

\begin{theorem}[$\Bv + \Sv + \SLSv$: Safe Nesting]\label{thm:v14sls-sn}
    $\contract{}{\Bv+\Sv +\SLSv} \vdash \compuslhB{\cdot} : \safeN{}$
\end{theorem}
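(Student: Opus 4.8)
The plan is to prove $\contract{}{\Bv + \Sv + \SLSv} \vdash \compuslhB{\cdot} : \safeN{}$ by the same pattern as the two-source results \Thmref{thm:v14-sn} and \Thmref{thm:v1sls-sn}, reading the combination with $\Bv$ as the base source and $\Sv + \SLSv$ as the extension. First I would fix an arbitrary source program, unfold $\safeN{}$ to its trace form, and discard the trivial non-nested case. For a trace carrying a nested pair of transactions I would locate the combined speculative state $\SigmaBSSLS'$ reached at the onset of the inner transaction and reduce safety of the inner subtrace to a three-source version of the ``steps in nesting are safe'' lemma \Thmref{thm:v14-comp-nest-spec-multi-safe}. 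As there, this multi-step lemma is proved by induction on the big-step derivation, delegating each small step to a single-step lemma and threading the flag invariant $\slhInv$, whose initial validity and whose preservation under compiled steps transfer essentially verbatim from the $\Bv + \Sv$ development.

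The substance is a single-step-in-nesting lemma generalising the single-step lemma \Thmref{thm:v14-comp-nest-spec-sing-safe} (and its $\SLSv$ counterpart used in \Thmref{thm:v1sls-sn}) to the combined state. Given $\slhInv$ together with active nesting, I would invoke \Thmref{lem:sn-inv-nesting} to obtain $\vdash \SigmaBSSLS' : \hrel$, which forces the speculation flag to $\rslh = \mi{true}$ on the top instance; the point is that every nesting configuration relevant to the split $\Bv + (\Sv + \SLSv)$ interleaves a $\Bv$ transaction with the other sources, so this step stays available. I would then case-split on the next compiled instruction. The $\jzC{}$ and $\storeC{}$ cases are exactly those of \Thmref{thm:v14-sn}: since $\rslh = \mi{true}$, the conditional assignments inserted by $\compuslhB{\cdot}$ drive the scratch register to $0$, so the masked address and any emitted observation carry the safe taint $\safeta$. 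The new ingredient is the $\retC{}$ (straight-line) case, imported from the $\Bv + \SLSv$ proof (\Thmref{thm:v1sls-sn}): I would check that the compiled return sequence, which merely saves $\rslh$ into $\rslhC$ before being bypassed, emits no unsafe action and leaves $\slhInv$ intact. The remaining instructions and the rollback case are discharged exactly as in \Thmref{thm:comp-spec-sing-safe}, with rollback preserving $\slhInv$ because it only discards the top instance and leaves the instances below untouched.

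I expect the main obstacle to be bookkeeping the invariant $\slhInv$ across the richer space of nesting configurations produced by three interacting sources, and in particular confirming that opening an $\SLSv$ (or $\Sv$) transaction inside an active $\Bv$ transaction never spuriously resets $\rslh$. The delicate point is the compiled $\retC{}$ sequence under speculation: I must verify that routing $\rslh$ through $\rslhC$ keeps the flag set, so the nested transaction inherits $\hrel$ and continues to mask its observations. This is precisely the property that fails for indirect jumps (\Thmref{thm:slh-insecure-j}), where speculation can jump over the flag-setting instrumentation; because $\SLSv$ only bypasses a return and continues straight-line, and $\Sv$ only skips a store, neither can evade the masking code, so the flag survives and every cross-source nested observation stays $\safeta$.
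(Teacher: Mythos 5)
Your proposal is correct and follows essentially the same route as the paper: the paper proves \Cref{thm:v14sls-sn} by rerunning the $\Bv+\Sv$ argument of \Cref{thm:v14-sn} through the multi-step nesting lemma \Cref{thm:v14-comp-nest-spec-multi-safe}, instantiated with the three-source single-step lemma \Cref{thm:v14sls-comp-nest-spec-sing-safe}, whose case analysis is exactly what you describe --- branch and store handled as in the $\Bv+\Sv$ lemma, the bypassed return imported from the $\Bv+\SLSv$ lemma, and rollback/other instructions as in \Cref{thm:comp-spec-sing-safe}, all threaded through $\slhInv$ via \Cref{lem:sn-inv-nesting}. Your reading of the composition with $\Bv$ as base and $\Sv+\SLSv$ as extension, so that every relevant nested pair involves an active $\Bv$ transaction and the flag invariant yields $\rslh=\mi{true}$, is the same (implicit) assumption the paper relies on.
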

\begin{proof}
    Analogous to \Thmref{thm:v14-sn} using \Thmref{thm:v14-comp-nest-spec-multi-safe} with \Thmref{thm:v14sls-comp-nest-spec-sing-safe}.
\end{proof}

\begin{theorem}[$\Bv + \Sv + \Rv$: Safe Nesting]\label{thm:v145-sn}
    $\contract{}{\Bv+ \Sv+\Rv} \vdash \compuslhB{\cdot} : \safeN{}$
\end{theorem}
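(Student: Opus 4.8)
The plan is to mirror the proof of \Cref{thm:v14-sn} and its three-component refinement \Cref{thm:v14sls-sn}, treating the combined semantics as the binary composition $\Bv + (\Sv+\Rv)$ with base $x=\Bv$ and extension $y=\Sv+\Rv$. Fix a source program $\src{P}$; the goal is $\contract{}{\Bv+\Sv+\Rv}\vdash\compuslhB{P}:\safeN{}$. Unfolding \Cref{def:safe-nest-paper}, I would take an arbitrary $\tauStack\in\behBSR{\compuslhB{P}}$ together with a nested occurrence $\startl{b}\cdot\tauStack''\cdot\rollbl{b}$ sitting inside an outer transaction $\startl{a}\cdot\tauStack'\cdot\rollbl{a}$, where $\{a,b\}=\{\Bv,\ \Sv+\Rv\}$. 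The first step is to locate the speculative state $\SigmaBSR'$ reached at the inner $\startl{b}$ and to observe that, because exactly one of $a,b$ is $\Bv$, a branch transaction is necessarily ongoing at $\SigmaBSR'$; hence the nesting predicate $\nesting$ holds there. This is precisely the configuration that lets the flag invariant do its work.

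Next I would carry the speculation-flag invariant $\slhInv$ through the inner transaction. By \Cref{lem:sn-inv-nesting}, $\slhInv$ together with $\nesting$ yields $\vdash\SigmaBSR':\hrel$, i.e. $\rslh=\mi{true}$ at $\SigmaBSR'$. I would then invoke a $\Bv+\Sv+\Rv$ analogue of the multi-step lemma \Cref{thm:v14-comp-nest-spec-multi-safe}: from a state that satisfies $\slhInv$, is under nesting, and executes compiled code, every big step preserves $\slhInv$ and emits only safe actions. As in \Cref{thm:v14-sn}, this reduces to a single-step safety lemma in the style of \Cref{thm:v14-comp-nest-spec-sing-safe}, whose instruction case analysis I would extend with the two $\Rv$-specific rules, the return-speculation rule \Cref{tr:v5-spec} and the RSB-pushing rule \Cref{tr:v5-call}. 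In every case already treated (branch, store, load, assignment) the argument is unchanged from \Cref{thm:comp-spec-sing-safe}: since $\rslh=\mi{true}$, the inserted $\pcondassign{\rscr}{0}{\rslh}$ forces $\rscr=0$, so each emitted observation is $\storeObs{0}$, $\loadObs{0}$, or a masked $\pcObs{\cdot}$ carrying the $\safeta$ tag, hence safe.

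The genuinely new work, and the step I expect to be the main obstacle, is the interaction of $\Rv$ return misprediction with the flag save/restore code that $\compuslhB{\cdot}$ threads through calls and returns via the dedicated register $\rslhC$. Concretely, I must show that when a return speculatively lands on a wrong RSB target under \Cref{tr:v5-spec}, the invariant $\slhInv$ survives: the compiled $\pret$ first executes $\passign{\rslhC}{\rslh}$, and every RSB entry points into compiled code whose prologue begins with $\passign{\rslh}{\rslhC}$, so the flag is neither cleared nor lost across the misprediction; combined with the fact that, while a $\Bv$ transaction is already ongoing, $\rslh$ only ever stays $\mi{true}$ (the masking instructions merely \emph{raise} it), this re-establishes $\vdash\SigmaBSR'':\hrel$ at every nested step and therefore $\safe{\tauStack''}$. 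Finally I would discharge the two side conditions of the multi-step lemma for $\SigmaBSR'$ — that it satisfies $\slhInv$ and stands in the compiled-program relation $\src{p}\vdash\SigmaBSR'$ — exactly as in \Cref{thm:v14-sn}, using that $\rslh=\mi{true}$ is forced when the enclosing branch transaction starts and is preserved by all intervening compiled steps.
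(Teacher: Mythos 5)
Your proposal matches the paper's proof: the paper establishes this result exactly as "analogous to the $\Bv+\Sv$ case", reducing to a multi-step lemma over the nesting invariant $\slhInv$ and a single-step safety lemma whose only new case is the $\Rv$ return misprediction, discharged precisely by the $\rslhC$ save/restore observation you identify (RSB targets land on the post-call $\passign{\rslh}{\rslhC}$, so the flag stays $\mi{true}$ and all masked observations remain safe). The decomposition into base $\Bv$ and extension $\Sv+\Rv$, the use of \Cref{lem:sn-inv-nesting} to obtain $\rslh=\mi{true}$ under nesting, and the case analysis are all the same as in the paper.
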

\begin{proof}
    Analogous to \Thmref{thm:v14-sn} using \Thmref{thm:v14-comp-nest-spec-multi-safe} with \Thmref{thm:v145-comp-nest-spec-sing-safe}.
\end{proof}

\begin{lemma}[$\Bv + \SLSv$: A step in Nesting is safe]\label{thm:v1sls-comp-nest-spec-sing-safe}
    If 
    \begin{enumerate}
        \item $\SigmaBSLS$
        \item $\vdash \SigmaBSLS : \slhInv $
        \item $\vdash \SigmaBSLS : \nesting$
        \item $\src{p} \vdash \SigmaBSLS$
    \end{enumerate}
    Then 
    \begin{enumerate}
        \item $\SigmaBSLS \bigspecarrowBSLS{\tau} \SigmaBSLS'$
        \item $\vdash \SigmaBSLS' : \slhInv$
        \item $\src{p} \vdash \SigmaBSLS'$
        \item $\safe{\tau}$
    \end{enumerate}
\end{lemma}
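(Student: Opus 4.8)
The plan is to mirror the proof of the $\Bv+\Sv$ companion lemma (\Thmref{thm:v14-comp-nest-spec-sing-safe}), replacing its store-speculation case with the return-bypass case of $\SLSv$. First I would combine the two hypotheses $\vdash \SigmaBSLS : \slhInv$ and $\vdash \SigmaBSLS : \nesting$ via \Thmref{lem:sn-inv-nesting} to obtain $\vdash \SigmaBSLS : \hrel$. This is the crucial handle: it gives $\SigmaBSLS(\rslh) = \mi{true}$ on the topmost speculative instance together with the $\shrel$ invariant on its memory and register file, which is exactly what makes the masking code of $\compuslhB{\cdot}$ effective during nested speculation.

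Next I would do a case analysis on the compiled instruction $\trgB{i}$ at the current program counter. The branch case $\trgB{i}=\pjz{x}{l'}$ and the generic cases (skip, assign, load, store, call, cross-component return) are discharged exactly as in the branch case of \Thmref{thm:v14-comp-nest-spec-sing-safe} and in \Thmref{thm:comp-spec-sing-safe}: since $\rslh=\mi{true}$, each masking conditional $\pcondassign{\rscr}{0}{\rslh}$ fires via \Cref{tr:condup-sat} and zeroes the scratch register, so every emitted load, store and variable-latency action carries only the constant $0$ and is therefore safe; moreover these instructions either leave $\rslh$ untouched or (for the branch) re-set it to $\mi{true}$, so $\vdash \SigmaBSLS' : \hrel$, hence $\slhInv$, is preserved, and $\src{p} \vdash \SigmaBSLS'$ holds because control stays in compiled code. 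The rollback subcase (speculation window reaches $0$) is identical to the rollback reasoning of \Thmref{thm:comp-spec-sing-safe}: it emits only $\rollbackObsSLS$, related to $\empTr$ by \Cref{tr:ac-rel-rlb}, and, since it merely discards the topmost instance without touching those below, it preserves the invariant just as in \Thmref{thm:v14-comp-nest-spec-sing-safe}.

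The genuinely new case is the $\SLSv$ return bypass. Here the compiler has produced $\compuslhB{\pret} = \trgB{l:\passign{\rslhC}{\rslh}}\;;\;\trgB{l':\pret}$. The leading assignment copies $\rslh$ into $\rslhC$ and hence does not modify $\rslh$ nor emit any data-dependent action. The subsequent $\pret$ triggers \Cref{tr:sls-spec} (with taint counterpart \Cref{tr:sls-t-spec}), which bypasses the return and pushes a fresh speculative instance whose configuration is the current one with $\pc$ only incremented; in particular the register file, and thus $\rslh$, is carried over unchanged. Consequently $\rslh=\mi{true}$ still holds in the new instance, giving $\vdash \SigmaBSLS' : \hrel$ and hence $\slhInv$; the emitted action is the safe non-speculative step of the copied configuration, so $\safe{\tau}$, and $\src{p} \vdash \SigmaBSLS'$ follows since the pushed instance again runs compiled code.

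The main obstacle I anticipate is precisely this return case: I must check against \Cref{tr:sls-spec} and \Cref{tr:sls-t-spec} that the speculation flag propagates correctly across the bypass, i.e.\ that the newly created instance inherits $\rslh=\mi{true}$ rather than a stale or reset value, and that $\SLSv$ provides no instruction that could be used to evade the flag-maintaining code inserted by the compiler. Once that propagation is confirmed --- it follows directly because $\SLSv$ neither resets $\rslh$ nor alters the surrounding masking instructions --- every remaining obligation reduces to the already-established reasoning of \Thmref{thm:comp-spec-sing-safe} and \Thmref{thm:v14-comp-nest-spec-sing-safe}.
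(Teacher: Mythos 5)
Your proposal is correct and follows essentially the same route as the paper's proof: reduce $\slhInv$ plus $\nesting$ to $\hrel$ via \Thmref{lem:sn-inv-nesting}, case-split on the compiled instruction, defer the branch, rollback and generic cases to \Thmref{thm:v14-comp-nest-spec-sing-safe} (and \Thmref{thm:comp-spec-sing-safe}), and handle the $\SLSv$ return bypass by observing that $\compuslhB{\pret}$ leaves $\rslh$ untouched so the flag, and hence $\hrel$, carries into the bypassed instance and the emitted return observation is safe. Your additional remark that the instance created by \Cref{tr:sls-spec} inherits the register file unchanged is exactly the implicit justification the paper relies on.
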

\begin{proof}
    Because we have $\vdash \SigmaBSLS : \slhInv$ and $\vdash \SigmaBSLS : \nesting$ we can apply \Thmref{lem:sn-inv-nesting} and get $\vdash \SigmaBSLS : \hrel$.

    Next, we look at the next instruction that will be executed.

    \begin{description}

        \item[return instruction]
        Then we have 
        \begin{align*}
            \compuslhB{l : \pret} &= \trgB{
									\begin{aligned}[t]
											&
											l :\passign{\rslhC}{\rslh} \\
												&\
												l' :\pret ~\text{\Cref{tr:sls-spec}}
									\end{aligned}
								}
        \end{align*}
        
Let us call the resulting execution afterwards $\SigmaBSLS \bigspecarrowBSLS{\tauStack} \SigmaBSLS'$.
        
        \begin{description}
            \item [$\vdash \SigmaBSLS' : \slhInv$]
                Since these instructions do not change the register $\rslh$, we also have $\vdash \SigmaBSLS' : \hrel$.
            \item [$\src{p} \vdash \SigmaBSLS'$]
                From how we evaluate this program we have $\src{p} \vdash \SigmaBSLS'$ already.
            \item [$\safe{\tauStack}$]
                From $\vdash \SigmaBSLS : \hrel $ we get that $\SigmaBSLS(\rslh) = \mi{true}$.
                After execution of $\pret$, we get the trace $\tauStack = \retObs{}^\safeta_{\SLSv}$.

                Thus, we have $\safe{\tauStack}$.
        \end{description}

        \item[branch instruction]
        Analogous to \Thmref{thm:v14-comp-nest-spec-sing-safe}.

        \item[Rollback]
        Analogous to \Thmref{thm:v14-comp-nest-spec-sing-safe}
        
        \item[otherwise]
        Analogous to \Thmref{thm:v14-comp-nest-spec-sing-safe}.

    \end{description} 
\end{proof}

\begin{lemma}[$\Bv + \Rv$: A step in Nesting is safe]\label{thm:v15-comp-nest-spec-sing-safe}
    If 
    \begin{enumerate}
        \item $\SigmaBR$
        \item $\vdash \SigmaBR : \slhInv $
        \item $\vdash \SigmaBR : \nesting$
        \item $\src{p} \vdash \SigmaBR$
    \end{enumerate}
    Then 
    \begin{enumerate}
        \item $\SigmaBR \bigspecarrowBR{\tau} \SigmaBR'$
        \item $\vdash \SigmaBR' : \slhInv$
        \item $\src{p} \vdash \SigmaBR'$
        \item $\safe{\tau}$
    \end{enumerate}
\end{lemma}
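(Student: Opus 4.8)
The plan is to mirror the proofs of \Thmref{thm:v14-comp-nest-spec-sing-safe} and \Thmref{thm:v1sls-comp-nest-spec-sing-safe}, since the $\Bv + \Rv$ case differs from those only in which instruction of the extension semantics triggers nested speculation: here it is a $\retC$ instruction speculating through the return-stack buffer (\Cref{tr:v5-spec}) rather than $\storeC$ (for $\Sv$) or the straight-line $\retC$ (for $\SLSv$). First I would unfold the hypotheses $\vdash \SigmaBR : \slhInv$ and $\vdash \SigmaBR : \nesting$ and feed them into \Thmref{lem:sn-inv-nesting} to obtain $\vdash \SigmaBR : \hrel$. This yields the crucial facts that on the topmost speculative instance the flag satisfies $\SigmaBR(\rslh) = \mi{true}$ and that $\vdash M : \shrel$ and $\vdash A : \shrel$, i.e. all public memory and all registers are tainted $\safeta$. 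I would then case split on the compiled instruction $\compuslhB{i}$ at the current program counter.

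The characteristic new cases are $\retC$ and $\callC$. For $\retC$ the compiler emits $l : \passign{\rslhC}{\rslh}$ followed by $l' : \pret$; the assignment does not touch $\rslh$, so $\vdash \SigmaBR' : \hrel$ (hence $\slhInv$) is preserved, and the $\pret$ emits only a return action, which is always tagged $\safeta$ (exactly as the $\SLSv$ return in \Thmref{thm:v1sls-comp-nest-spec-sing-safe} where the trace is $\retObs{}^\safeta_{\SLSv}$), giving $\safe{\tau}$. This covers both the ordinary returns (\Cref{tr:v5-retS}, \Cref{tr:v5-retE}) and the case where the $\pret$ triggers RSB misprediction (\Cref{tr:v5-spec}); in the latter the same reasoning bounds the single visible safe return action, and the newly pushed speculative instance is then discharged by the enclosing multi-step argument. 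For $\callC$ the compiler emits $\passign{\rslhC}{\rslh}; \pcall{f}; \passign{\rslh}{\rslhC}$ (with the function prologue re-establishing $\rslh$ from $\rslhC$); since $\rslh = \mi{true}$ forces $\rslhC = \mi{true}$, the flag is restored to $\mi{true}$, preserving $\hrel$, and the call action is safe.

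The remaining cases --- branch, load, store, arithmetic, and skip --- I would discharge exactly as in \Thmref{thm:comp-spec-sing-safe} and \Thmref{thm:v14-comp-nest-spec-sing-safe}: because $\rslh = \mi{true}$, each inserted $\pcondassign{\rscr}{0}{\rslh}$ masks the scratch register to $0$, so loads and stores touch the public address $0$ and arithmetic leaks only constants, while the branch updates $\rslh$ only ever to $\mi{true}$, preserving $\hrel$. The rollback case is identical to its counterpart in \Thmref{thm:v14-comp-nest-spec-sing-safe}: a rollback removes the top instance and leaves the instances below untouched, so either an ongoing $\Bv$ transaction remains (and $\hrel$ still holds on it) or $\slhInv$ holds vacuously.

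The main obstacle is the interaction between the masking invariant and the RSB bookkeeping. Unlike the fence- and retpoline-based compilers, USLH does not trap speculation, so a mispredicted return starts a genuine nested $\Rv$ transaction whose instance inherits the program state; I expect the delicate point to be confirming that the value $\mi{true}$ of the speculation flag is faithfully carried into this pushed instance and that the $\rslhC$-round-trip inserted around calls never restores an unsafe value to $\rslh$ while $\Bv$ speculation is live. Since $\rslhC$ only ever receives the current value of $\rslh$ and that value is $\mi{true}$ throughout nesting, this propagation is safe --- but this is precisely the step that has no analogue in the store-bypass proof and therefore requires the most care.
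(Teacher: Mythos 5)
Your proposal is correct and takes essentially the same route as the paper: obtain $\vdash \SigmaBR : \hrel$ via \Thmref{lem:sn-inv-nesting}, case-split on the compiled instruction, and handle the new $\retC$/$\callC$ cases by observing that the $\rslhC$ round-trip restores $\rslh = \mi{true}$ at the (possibly mispredicted) return target while the emitted return/call actions are always tagged $\safeta$, with all remaining cases deferred to the $\Bv+\Sv$ argument. The ``delicate point'' you flag at the end is exactly the fact the paper's return case leans on, namely that the speculative return lands on the post-call $\passign{\rslh}{\rslhC}$ instruction, which re-establishes the flag.
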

\begin{proof}
    Because we have $\vdash \SigmaBR : \slhInv$ and $\vdash \SigmaBR : \nesting$ we can apply \Thmref{lem:sn-inv-nesting} and get $\vdash \SigmaBR : \hrel$.

    Next, we look at the next instruction that will be executed.

    \begin{description}

        \item[return instruction]
        Then we have 
        \begin{align*}
            \compuslhB{l : \pret} &= \trgB{
									\begin{aligned}[t]
											&
											l :\passign{\rslhC}{\rslh} \\
												&\
												l' :\pret ~\text{\Cref{tr:v5-spec}}
									\end{aligned}
								}
        \end{align*}
        Note that speculation is started here and we could jump to any previous call site.
        Furthermore, we execute the additional instruction after the call $l'' : \passign{\rslh}{\rslhC}$, because that is how call instructions are compiled for the state below.
        Let us call the resulting execution afterwards $\SigmaBR \bigspecarrowBR{\tauStack} \SigmaBR'$.
        
        \begin{description}
            \item [$\vdash \SigmaBR' : \slhInv$]
                Because of $l :\passign{\rslhC}{\rslh}$ and 
                $l'' : \passign{\rslh}{\rslhC}$, we recovered the slh flag after the call.
                We thus have $\vdash \SigmaBR' : \hrel$.
            \item [$\src{p} \vdash \SigmaBR'$]
                From how we evaluate this program we have $\src{p} \vdash \SigmaBR'$ already.
            \item [$\safe{\tauStack}$]
                From $\vdash \SigmaBR : \hrel $ we get that $\SigmaBR(\rslh) = \mi{true}$.
                After execution of $\pret$, we get the trace $\tauStack = \retObs{}^\safeta_{\Rv}$.

                Thus, we have $\safe{\tauStack}$.
        \end{description}

        \item[branch instruction]
        Analogous to \Thmref{thm:v14-comp-nest-spec-sing-safe}.

        \item[Rollback]
        Analogous to \Thmref{thm:v14-comp-nest-spec-sing-safe}
        
        \item[otherwise]
        Analogous to \Thmref{thm:v14-comp-nest-spec-sing-safe}.

    \end{description} 
\end{proof}

\begin{lemma}[$\Bv + \Sv + \SLSv$: A step in Nesting is safe]\label{thm:v14sls-comp-nest-spec-sing-safe}
    If 
    \begin{enumerate}
        \item $\SigmaBSSLS$
        \item $\vdash \SigmaBSSLS : \slhInv $
        \item $\vdash \SigmaBSSLS : \nesting$
        \item $\src{p} \vdash \SigmaBSSLS$
    \end{enumerate}
    Then 
    \begin{enumerate}
        \item $\SigmaBSSLS \bigspecarrowBSSLS{\tau} \SigmaBSSLS'$
        \item $\vdash \SigmaBSSLS' : \slhInv$
        \item $\src{p} \vdash \SigmaBSSLS'$
        \item $\safe{\tau}$
    \end{enumerate}
\end{lemma}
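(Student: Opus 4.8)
The plan is to follow exactly the template of the two-mechanism cases \Thmref{thm:v14-comp-nest-spec-sing-safe} ($\Bv+\Sv$) and \Thmref{thm:v1sls-comp-nest-spec-sing-safe} ($\Bv+\SLSv$), observing that a single (big) step under the three-way composition $\Bv+\Sv+\SLSv$ still executes the compilation of just one source instruction, and hence lands in one of the instruction cases already analyzed for the component semantics. First I would combine the two hypotheses $\vdash \SigmaBSSLS : \slhInv$ and $\vdash \SigmaBSSLS : \nesting$ through \Thmref{lem:sn-inv-nesting} to obtain $\vdash \SigmaBSSLS : \hrel$. This is the crucial consequence: on every frame of the ongoing $\Bv$ transaction the speculation flag satisfies $\rslh = \mi{true}$ and the public memory and registers are safe, i.e. $\vdash \trgB{M} : \shrel$ and $\vdash \trgB{A} : \shrel$.

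Next I would case-split on the instruction $\trgB{p(A(\pc))}$ at the top speculative instance. For a store $\pstore{x}{e}$ (the $\Sv$ trigger) I reuse the store subcase of \Thmref{thm:v14-comp-nest-spec-sing-safe}: since $\rslh = \mi{true}$, the inserted $\pcondassign{\rscr}{0}{\rslh}$ masks the address to $0$, so the emitted action is $\storeObs{0}^{\safeta}_{\Sv}$, which is safe, and the sequence leaves $\rslh$ untouched. For a return $\pret$ (the $\SLSv$ trigger) I reuse the return subcase of \Thmref{thm:v1sls-comp-nest-spec-sing-safe}: the inserted $\passign{\rslhC}{\rslh}$ does not change $\rslh$, and straight-line speculation emits $\retObs{}^{\safeta}_{\SLSv}$, which is safe. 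For a branch $\pjz{x}{l}$ (the $\Bv$ trigger) I reuse the branch subcase of \Thmref{thm:v14-comp-nest-spec-sing-safe}: because $\rslh = \mi{true}$ the mask forces $\rscr = 0$, mispeculation yields only safe $\pcObs{}$ actions, and $\passign{\rslh}{\rslh \lor \rscr}$ keeps $\rslh = \mi{true}$. For every remaining instruction I appeal to \Thmref{thm:comp-spec-sing-safe}, whose reasoning is insensitive to which further speculation mechanisms are active.

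To discharge the three bookkeeping conclusions I would track the invariant through each case. The program-shape condition $\src{p} \vdash \SigmaBSSLS'$ is immediate from the shape of the $\compuslhB{\cdot}$-compiled code, and $\safe{\tau}$ is established case-by-case above. For $\vdash \SigmaBSSLS' : \slhInv$ the argument is that none of $\pstore$, $\pret$ nor the ``otherwise'' instructions write $\rslh$, while $\pjz$ updates it so that it stays $\mi{true}$; thus each frame of the ongoing $\Bv$ stack keeps satisfying $\hrel$. The rollback subcase is handled as in \Thmref{thm:v14-comp-nest-spec-sing-safe}: discarding the top instance either removes the last ongoing $\Bv$ transaction, in which case $\slhInv$ holds vacuously, or leaves the lower $\Bv$ frames unchanged, so they still satisfy $\hrel$.

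The main obstacle I anticipate is the invariant bookkeeping rather than any individual reduction: I must confirm that stacking the $\SLSv$ return bypass on top of the $\Sv$ store bypass never clobbers the branch speculation flag, so that the $\hrel$-part of $\slhInv$ survives every step and the three component arguments compose without introducing new leaks in nested transactions. This ultimately reduces to the syntactic observation that both the $\Sv$-masking code and the $\SLSv$-guarding code emitted by $\compuslhB{\cdot}$ leave $\rslh$ invariant, which is precisely what makes the composition safe.
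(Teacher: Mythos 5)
Your proposal matches the paper's proof exactly: it first combines $\slhInv$ and $\nesting$ via \Thmref{lem:sn-inv-nesting} to obtain $\hrel$, then case-splits on the current instruction, delegating the return case to the $\Bv+\SLSv$ lemma and the branch, store, rollback, and remaining cases to the $\Bv+\Sv$ lemma (which itself falls back on \Thmref{thm:comp-spec-sing-safe}). Your additional remarks on why $\rslh$ is preserved simply spell out what the paper leaves implicit in its ``analogous to'' references.
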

\begin{proof}
    Because we have $\vdash \SigmaBSSLS : \slhInv$ and $\vdash \SigmaBSSLS : \nesting$ we can apply \Thmref{lem:sn-inv-nesting} and get $\vdash \SigmaBSSLS : \hrel$.

    Next, we look at the next instruction that will be executed.

    \begin{description}

        \item[return instruction]
        Analogous to the corresponding case in \Thmref{thm:v1sls-comp-nest-spec-sing-safe}.
        
        \item[branch instruction]
        Analogous to the corresponding case in \Thmref{thm:v14-comp-nest-spec-sing-safe}.

        \item[store instruction]
        Analogous to the corresponding case in \Thmref{thm:v14-comp-nest-spec-sing-safe}.
        
        \item[Rollback]
        Analogous to the corresponding case in \Thmref{thm:v14-comp-nest-spec-sing-safe}
        
        \item[otherwise]
        Analogous to the corresponding case in \Thmref{thm:v14-comp-nest-spec-sing-safe}.

    \end{description} 
\end{proof}

\begin{lemma}[$\Bv + \Sv + \Rv$: A step in Nesting is safe]\label{thm:v145-comp-nest-spec-sing-safe}
    If 
    \begin{enumerate}
        \item $\SigmaBSR$
        \item $\vdash \SigmaBSR : \slhInv $
        \item $\vdash \SigmaBSR : \nesting$
        \item $\src{p} \vdash \SigmaBSR$
    \end{enumerate}
    Then 
    \begin{enumerate}
        \item $\SigmaBSR \bigspecarrowBSR{\tau} \SigmaBSR'$
        \item $\vdash \SigmaBSR' : \slhInv$
        \item $\src{p} \vdash \SigmaBSR'$
        \item $\safe{\tau}$
    \end{enumerate}
\end{lemma}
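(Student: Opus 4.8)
The plan is to follow exactly the template already established by the single-step-in-nesting safety lemmas for the two-way combinations, and to dispatch each instruction case to whichever of those lemmas already covers it. The first move is to discharge the hypotheses: from $\vdash \SigmaBSR : \slhInv$ together with $\vdash \SigmaBSR : \nesting$ I would apply \Thmref{lem:sn-inv-nesting} to obtain $\vdash \SigmaBSR : \hrel$. This is the pivotal fact, since $\hrel$ guarantees that on the topmost (innermost) speculative instance the speculation flag satisfies $\SigmaBSR(\rslh) = \mi{true}$, which is precisely what makes the masking inserted by $\compuslhB{\cdot}$ effective. The conclusion that the single emitted step is $\safe{\tau}$ and that $\slhInv$ is preserved will then follow case by case, and the side conditions $\SigmaBSR \bigspecarrowBSR{\tau} \SigmaBSR'$ and $\src{p} \vdash \SigmaBSR'$ are immediate from how the compiled program executes.

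Next I would case-split on the compiled instruction $\compuslhB{i}$ pointed to by the program counter of the top instance. For a branch instruction and for a store instruction the argument is identical to the corresponding cases of \Thmref{thm:v14-comp-nest-spec-sing-safe}: since $\rslh = \mi{true}$ the masking assignment $\pcondassign{\rscr}{0}{\rslh}$ fires by \Cref{tr:condup-sat}, so the scratch register is forced to $0$, the leaked address (or value) is the constant $0$, and the resulting observation is safe; none of these instructions modify $\rslh$, so $\hrel$, and hence $\slhInv$, is preserved. For a return instruction I would instead invoke the $\Bv+\Rv$ case, \Thmref{thm:v15-comp-nest-spec-sing-safe}: an $\Rv$ misprediction returns to the top entry of the RSB, the emitted return action is intrinsically safe, and the compiler's pairing of $\passign{\rslhC}{\rslh}$ before the return with the $\passign{\rslh}{\rslhC}$ re-load after every call site (and in every function prologue) restores the flag, re-establishing $\slhInv$. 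The $\jzC$ case that starts a fresh $\Bv$ transaction, the rollback case, and the remaining ``otherwise'' instructions are handled exactly as in \Thmref{thm:v14-comp-nest-spec-sing-safe}; in particular rollback preserves $\slhInv$ because discarding the top instance never disturbs a still-ongoing $\Bv$ transaction sitting below it.

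The main obstacle, and the only point genuinely specific to the three-way combination, is the return case: unlike in $\Bv+\Sv$, the $\Rv$ component speculates on $\pret$, so I must argue that a mispredicted return does not corrupt the speculation flag. This is where the flag-spilling discipline of $\compuslhB{\cdot}$ is essential, and I would lean directly on the $\Bv+\Rv$ reasoning to close it. Beyond that, the apparent complication that $\Sv$ and $\Rv$ transactions may now interleave freely beneath the $\Bv$ transaction guaranteed by $\nesting$ turns out to be harmless, because the safety argument is entirely local to the top instance: once $\hrel$ fixes $\rslh = \mi{true}$ there, every observation is masked to a constant regardless of which semantics opened the enclosing transactions. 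The modular case split therefore introduces no new reasoning, and the lemma is in effect a corollary assembled from \Thmref{thm:v14-comp-nest-spec-sing-safe} and \Thmref{thm:v15-comp-nest-spec-sing-safe}, in the same spirit that \Thmref{thm:v145-sn} is obtained from \Thmref{thm:v14-sn} through \Thmref{thm:v14-comp-nest-spec-multi-safe}.
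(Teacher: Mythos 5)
Your proposal matches the paper's proof: both first apply \Thmref{lem:sn-inv-nesting} to obtain $\vdash \SigmaBSR : \hrel$, then case-split on the next compiled instruction, dispatching the return case to the $\Bv+\Rv$ lemma (\Thmref{thm:v15-comp-nest-spec-sing-safe}) and the branch, store, rollback, and remaining cases to the $\Bv+\Sv$ lemma (\Thmref{thm:v14-comp-nest-spec-sing-safe}). The paper states these case dispatches as bare ``analogous to'' references, so your added justification of why the return case needs the flag-spilling discipline and why the three-way interleaving is harmless is consistent elaboration rather than a different route.
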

\begin{proof}
    Because we have $\vdash \SigmaBSR : \slhInv$ and $\vdash \SigmaBSR : \nesting$ we can apply \Thmref{lem:sn-inv-nesting} and get $\vdash \SigmaBSSLS : \hrel$.

    Next, we look at the next instruction that will be executed.

    \begin{description}

        \item[return instruction]
        Analogous to the corresponding case in \Thmref{thm:v15-comp-nest-spec-sing-safe}.
        
        \item[branch instruction]
        Analogous to the corresponding case in \Thmref{thm:v14-comp-nest-spec-sing-safe}.

        \item[store instruction]
        Analogous to the corresponding case in \Thmref{thm:v14-comp-nest-spec-sing-safe}.

        \item[Rollback]
        Analogous to the corresponding case in \Thmref{thm:v14-comp-nest-spec-sing-safe}
        
        \item[otherwise]
        Analogous to the corresponding case in \Thmref{thm:v14-comp-nest-spec-sing-safe}.

    \end{description} 
\end{proof}
 \section{Independence Results for the SLH compilers}

Independence w.r.t. $\sems$, $\semr$ and $\semsls$.

Let us review the definition of Independence.
 $\IND{\contract{}{y}}{\comp{\cdot}} \isdef \forall p.\ \contract{}{y} \vdash p : \rss \implies \contract{}{y} \vdash \comp{p} : \rss$

Since the compilers $\compuslhB{\cdot}$ and $\compsslhB{\cdot}$ only differ for the extended observer $ct + vl$, we only show the results w.r.t. $\compsslhB{\cdot}$. 
The same arguments apply to $\compuslhB{\cdot}$.

\begin{lemma}[SLH Independence w.r.t. $\Svr$]\label{lem:slh-ind-s}
    $\IND{\contract{}{\Sv}}{\compsslhB{\cdot}}$
\end{lemma}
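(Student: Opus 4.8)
\textbf{Proof proposal for \Cref{lem:slh-ind-s}.}

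The plan is to unfold the definition of Independence (\Cref{def:ind-paper}) and reduce the goal to showing that every action produced by $\compsslhB{\cdot}$ under the store-bypass semantics $\contract{}{\Sv}$ is tainted $\safeta$, \emph{assuming} the source program is already $\rss$ under $\contract{}{\Sv}$. Concretely, I fix an arbitrary program $P$ with $\contract{}{\Sv} \vdash P : \rss$ and an arbitrary valid attacker $\ctxc{}$, and must show that every action on every trace in $\behS{\SInit{\ctxc{}\hole{\compsslhB{P}}}}$ is safe. The essential observation is that the store-masking instrumentation added by $\compsslhB{\cdot}$ does not itself \emph{introduce} store-bypass speculation, since the inserted instructions ($\passign{\rscr}{\compsslhB{e}}$, $\pcondassign{\rscr}{0}{\rslh}$, and the masked $\pstore{x}{\rscr}$) trigger $\Sv$-speculation on exactly the same store sites as the source, and the masking only constrains the \emph{address} that is stored to, never revealing new secret-dependent information. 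Thus the speculative behaviour of $\compsslhB{P}$ under $\contract{}{\Sv}$ is a ``masked refinement'' of the speculative behaviour of $P$ under $\contract{}{\Sv}$.

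First I would set up a simulation/state relation between a speculative $\Sv$-execution of $P$ and the corresponding execution of $\compsslhB{P}$, analogous in spirit to the backward/forward simulation machinery already used for the lfence compilers (e.g.\ \Thmref{thm:v4-fwd-sim-stm-lfence} and the state relations $\srel$, $\ssrel$ for SLH), but now with \emph{both} sides running under store speculation rather than comparing a speculative target to a non-speculative source. The key invariant to carry through the speculative stack is that whenever a $\Sv$-transaction is ongoing in $\compsslhB{P}$, the speculation flag $\rslh$ is set to $\mi{true}$, which is exactly the $\shrel$/$\slhInv$-style invariant established in the Safe Nesting development (\Cref{tr:shrel-same}, \Cref{tr:shrel-same-register}). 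Under this invariant, the masking conditional moves $\pcondassign{\rscr}{0}{\rslh}$ force the masked register to $0$, so every speculatively-executed load/store address collapses to a safe constant and the emitted $\storel{n}$/$\loadl{n}$ actions carry taint $\safeta$, precisely mirroring the single-step safety arguments of \Thmref{thm:comp-spec-sing-safe}.

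Second, I would combine the simulation with the hypothesis $\contract{}{\Sv} \vdash P : \rss$: the non-speculative (architectural) actions of $\compsslhB{P}$ correspond to those of $P$ and are therefore safe by hypothesis, while the speculative actions are safe by the masking invariant just described. Since $\rss$ is a single-trace safety property, establishing safety of every action on every reachable trace yields $\contract{}{\Sv} \vdash \compsslhB{P} : \rss$, and robustly quantifying over all valid $\ctxc{}$ (noting that the attacker cannot execute $\ploadprv{}{}$/$\pstoreprv{}{}$ by \Cref{def:atk}, so context-side speculation produces only $\safeta$ actions as in \Thmref{thm:comp-spec-sing-safe}'s context case) closes the goal.

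The main obstacle I anticipate is establishing and maintaining the $\rslh$-correctness invariant across the \emph{nesting-free but speculative} setting: unlike the $\rssp$ proofs, here the precondition is $\rss$ under $\contract{}{\Sv}$ rather than freedom from all speculation, so I cannot rely on a clean non-speculative source reduction to drive the simulation. I expect the delicate point to be showing that the speculation flag is correctly propagated through function calls and returns inside a store-bypass transaction (the $\passign{\rslhC}{\rslh}$ / prologue-restore dance in $\compsslhB{\cdot}$), so that masking is never inadvertently disabled; this is the analogue of the call/return cases in \Thmref{thm:comp-spec-sing-safe} and is where the bulk of the routine-but-careful case analysis will concentrate.
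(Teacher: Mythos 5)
There is a genuine gap in your proposal: the central invariant you plan to carry --- that $\rslh$ is set to $\mi{true}$ whenever a $\Sv$-transaction is ongoing in $\compsslhB{P}$ --- is false, and the safety mechanism you build on it does not operate in this setting. The speculation flag $\rslh$ is raised only by the branch instrumentation ($\passign{\rslh}{\rslh \lor \rscr}$ and its negated twin) upon a branch \emph{misprediction}. Under $\contract{}{\Sv}$ in isolation there is no branch speculation: every $\jzC{}$ resolves correctly, so $\rscr$ always agrees with the taken branch and $\rslh$ remains $\mi{false}$ throughout, including inside every store-bypass transaction. Consequently the conditional moves $\pcondassign{\rscr}{0}{\rslh}$ never fire, the masked addresses are \emph{not} collapsed to a safe constant, and your claim that ``every speculatively-executed load/store address collapses to $0$'' fails at its first step. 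Were you to push the simulation through, the speculative observations would carry the original, unmasked addresses, and the masking invariant would give you nothing.

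The paper's proof goes the opposite way and is far shorter: precisely \emph{because} $\rslh$ is never $\mi{true}$ under $\contract{}{\Sv}$, the entire SLH instrumentation is a semantic no-op there. The compiler inserts no new $\storeC$ instructions (so no new $\Sv$-speculation sites), and the masking leaves every address unchanged; hence any unsafe action of $\compsslhB{P}$ under $\contract{}{\Sv}$ would already be an unsafe action of $P$ under $\contract{}{\Sv}$, contradicting the hypothesis $\contract{}{\Sv} \vdash P : \rss$. Your opening observation --- that the instrumentation triggers $\Sv$-speculation at exactly the same store sites and reveals no new secret-dependent information --- is the right one; the fix is to derive safety of the \emph{speculative} actions from the assumed $\rss$ of the source under $\contract{}{\Sv}$ via this no-op observation, not from the masking. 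The $\rslh = \mi{true}$ invariant belongs to the Safe Nesting development for combinations containing $\semb$, where a branch transaction encloses the store transaction; it plays no role in this lemma.
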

\begin{proof}
    
        By definition of Independence, we know that for all programs $p$ that $\contract{}{\Sv} \vdash p : \rss$.

        Proof by Contradiction. Suppose that $\contract{}{\Sv} \nvdash \compsslhB{p} : \rss$.

        This means there is a leak caused by speculation on store instructions.
        However, note that the compiler does not add new store instructions, but rather masks the already existing ones using the slh-flag.

        Note that the mask does not have an effect here, since there is no speculation of $\semb$ here. This means that the slh-flag is never true and masking does nothing.

        Thus, we are left with the original store instructions of $p$.
        Now we have a contradiction since $\contract{}{\Sv} \vdash p : \rss$ by assumption.

\end{proof}

\begin{lemma}[SLH Independence w.r.t. $\Rvr$]\label{lem:slh-ind-r}
    $\IND{\contract{}{\Rv}}{\compsslhB{\cdot}}$
\end{lemma}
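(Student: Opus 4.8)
The plan is to mirror the proof of SLH Independence with respect to $\Sv$ (\Cref{lem:slh-ind-s}), replacing the reasoning about store instructions with one about return instructions and the return-stack buffer. Unfolding \Cref{def:ind-paper}, I must show that for every $\src{P}$ with $\contract{}{\Rv} \vdash \src{P} : \rss$ we also have $\contract{}{\Rv} \vdash \compsslhB{\src{P}} : \rss$; since $\compsslhB{\cdot}$ and $\compuslhB{\cdot}$ coincide except on the extended $ct+vl$ observer, the same argument covers $\compuslhB{\cdot}$. I would argue by contradiction: assume $\contract{}{\Rv} \nvdash \compsslhB{\src{P}} : \rss$, so some $\Rv$-speculative trace of $\compsslhB{\src{P}}$ carries an unsafe action, and then exhibit a corresponding unsafe $\Rv$-trace of $\src{P}$, contradicting $\contract{}{\Rv} \vdash \src{P} : \rss$.

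The crux is the invariant that, \emph{under $\contract{}{\Rv}$ alone}, the speculation flag $\rslh$ is never raised to its masking value $\mi{true}$. The only instructions that can set $\rslh$ to $\mi{true}$ are those emitted when compiling a branch, and only along a \emph{mispredicted} branch direction; but $\contract{}{\Rv}$ never speculates on $\jzC$, so every branch occurring inside an $\Rv$-speculative window runs along its correct direction, which keeps $\rslh = \mi{false}$ (this is exactly the flag invariant carried by the register relation used in the forward-simulation lemmas, where relatedness of register files forces $\rslh = \mi{false}$ during non-speculative execution). Consequently every mask $\pcondassign{\rscr}{0}{\rslh}$ inserted by the compiler degenerates to a no-op, so $\compsslhB{\src{P}}$ propagates precisely the values that $\src{P}$ computes. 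The only other instructions the compiler injects are the flag save/restore moves $\passign{\rslhC}{\rslh}$ and $\passign{\rslh}{\rslhC}$ wrapping $\pcall{f}$ and $\pret$ and prepended to each function body; these are plain register assignments emitting $\epsilon$, hence they can never contribute an unsafe action.

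I would then set up a simulation between the $\Rv$-executions of $\src{P}$ and $\compsslhB{\src{P}}$. Because $\compsslhB{\cdot}$ neither adds nor removes $\callC$/$\retC$ instructions but merely wraps them, the return addresses pushed onto $\Rsb$ and the frames in $B$ remain in correspondence up to the fixed label offset introduced by the inserted prologue and save/restore moves; hence RSB-driven mispredictions explore matching control flow, and—since masking is inert—produce matching data-dependent observations. Any unsafe action on an $\Rv$-trace of $\compsslhB{\src{P}}$ therefore arises from the compiled image of a source instruction that yields the same unsafe action on a corresponding $\Rv$-trace of $\src{P}$, giving the desired contradiction.

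The main obstacle is precisely establishing this correspondence for return speculation. Unlike the $\Sv$ case, where store-bypass is a purely local skip and the reduction to the source leak is immediate, $\Rv$ speculation is governed by the cross-instruction $\Rsb$, and the compiler shifts every return address by wrapping calls and returns and by prefixing each function with a flag-restore. I expect the delicate step to be showing that these offsets preserve the misprediction structure—that $\Rsb$ entries and the expected return addresses stay related across compilation—which is most cleanly discharged by the same backward/forward simulation used in the $\Rv$ secure-compilation proof, here specialised to the observation that $\rslh$ remains $\mi{false}$ so that all masking collapses to the identity.
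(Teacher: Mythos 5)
Your proposal is correct and rests on exactly the same key observation as the paper's proof: under $\contract{}{\Rv}$ alone there is no branch speculation, so $\rslh$ is never raised, every mask degenerates to a no-op, and the compiler's remaining insertions are silent register moves, so the compiled program inherits the source program's $\Rv$-safety. The paper states this in three sentences without spelling out the RSB/simulation bookkeeping you describe, but the argument is the same.
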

\begin{proof}
    By definition of Independence, we know that for all programs $p$ that $\contract{}{\Rv} \vdash p : \rss$.

    Note that the mask of slh does not have an effect here, since there is no speculation of $\semb$ here. This means that the slh-flag is never true and masking does nothing.

    Since there were no leaks in the initial program and the fact that the mask will do nothing, we have $\contract{}{\Rv} \vdash \compsslhB{p} : \rss$.
\end{proof}

\begin{lemma}[SLH Independence w.r.t. $\SLSvr$]\label{lem:slh-ind-sls}
    $\IND{\contract{}{\SLSv}}{\compsslhB{\cdot}}$
\end{lemma}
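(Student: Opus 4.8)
The plan is to mirror the reasoning used for Independence w.r.t. $\Sv$ (\Thmref{lem:slh-ind-s}) and $\Rv$ (\Thmref{lem:slh-ind-r}): unfold the definition of Independence and argue that, under the $\contract{}{\SLSv}$ semantics taken in isolation, the masking machinery inserted by $\compsslhB{\cdot}$ is inert, so that no new unsafe action can appear. Concretely I would fix an arbitrary $p$ with $\contract{}{\SLSv}\vdash p : \rss$ and proceed by contradiction, assuming $\contract{}{\SLSv}\nvdash \compsslhB{p} : \rss$, i.e. some trace of $\compsslhB{p}$ under $\contract{}{\SLSv}$ contains an action tainted $\unta$.

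The central lemma I would establish is a speculation-flag invariant: along any $\contract{}{\SLSv}$ execution of $\compsslhB{p}$ the flag register satisfies $\rslh = \mi{false}$, both in the architectural instance and inside every straight-line speculative transaction. This holds because $\rslh$ is only ever raised to $\mi{true}$ by the branch-redirect code emitted for $\pjz{x}{l}$ on a mispredicted path, and $\contract{}{\SLSv}$ performs no branch speculation; on the correct (architectural) path the updates $\passign{\rslh}{\rslh \lor \rscr}$ and $\passign{\rslh}{\rslh \lor \neg \rscr}$ preserve $\rslh = \mi{false}$ by the usual SLH bookkeeping. I would prove this invariant by induction on the big-step $\contract{}{\SLSv}$ derivation, checking that it is preserved by ordinary steps, by the return-bypass step (\Cref{tr:sls-spec}, which starts a nested transaction but copies the same register file into it), and by rollback.

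Given the invariant, every conditional mask $\pcondassign{\rscr}{0}{\rslh}$ reduces via the unsatisfied branch and leaves $\rscr$ unchanged, so each compiled load/store accesses exactly the address computed by the source instruction and emits the same address observation with the same taint. The only instructions $\compsslhB{\cdot}$ adds beyond the source computation are register moves and conditional assignments on scratch/flag registers, the flag save $\passign{\rslhC}{\rslh}$ before each $\pret$, and the $\pc$-carrying jumps of the branch redirect; all of these either emit no observation or emit a $\pc$ observation over a compile-time-constant label, hence an action tainted $\safeta$. Therefore every $\unta$-tainted action of $\compsslhB{p}$ under $\contract{}{\SLSv}$ must already appear as an $\unta$-tainted data observation of $p$ under $\contract{}{\SLSv}$, contradicting $\contract{}{\SLSv}\vdash p : \rss$.

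The main obstacle is the straight-line-speculation step itself: unlike the $\Sv$ and $\Rv$ cases, where SLH adds no instruction that the extension semantics speculates on, here $\compsslhB{\cdot}$ keeps the $\pret$ (and even inserts code immediately before it), so $\contract{}{\SLSv}$ genuinely speculates through the compiled code that follows a bypassed return. The delicate part is therefore arguing that the flag invariant survives into these nested transactions and that the masking stays inert there, so the speculative continuation of $\compsslhB{p}$ leaks no more than the corresponding speculative continuation of $p$. This is precisely what the $\rslh = \mi{false}$ invariant buys; once it is in place, the remaining case analysis on the inserted instruction shapes is routine.
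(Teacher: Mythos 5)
Your proposal is correct and follows essentially the same route as the paper: the paper's proof also rests on the observation that under $\contract{}{\SLSv}$ alone there is no branch misprediction, so the speculation flag is never raised, the conditional masks are inert, and the compiled program therefore emits no unsafe action beyond those of the source, contradicting (or rather, inheriting) $\contract{}{\SLSv}\vdash p:\rss$. You merely make explicit, as an invariant proved by induction (including its preservation into the nested transactions opened by bypassed returns), what the paper asserts in one sentence, which is a reasonable elaboration rather than a different argument.
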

\begin{proof}
    By definition of Independence, we know that for all programs $p$ that $\contract{}{\SLSv} \vdash p : \rss$.

    Note that the mask does not have an effect here, since there is no speculation of $\semb$ here. This means that the slh-flag is never true and masking does nothing.

    Since there were no leaks in the initial program and the fact that the mask will do nothing, we have $\contract{}{\SLSv} \vdash \compsslhB{p} : \rss$.
\end{proof}

\begin{lemma}[SLH Independence w.r.t. $\Svr$ + $\Rvr$]\label{lem:slh-ind-sr}
    $\IND{\contract{}{\Sv + \Rv}}{\compsslhB{\cdot}}$
\end{lemma}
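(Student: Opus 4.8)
The plan is to reuse the argument template of \Cref{lem:slh-ind-s} and \Cref{lem:slh-ind-r}, lifted to the combined extension semantics. Unfolding \Cref{def:ind-paper}, I fix an arbitrary program $P$ with $\contract{}{\Sv + \Rv} \vdash P : \rss$ and argue by contradiction, assuming some trace of $\compsslhB{P}$ under $\contract{}{\Sv + \Rv}$ carries an unsafe action. The goal is to trace that action back to an unsafe action already present in a $\contract{}{\Sv + \Rv}$-trace of $P$, contradicting the hypothesis.

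The structural observation that drove the two single-semantics cases carries over verbatim: the instructions injected by $\compsslhB{\cdot}$ are register moves, the conditional masks $\pcondassign{\rscr}{0}{\rslh}$, the flag save/restore around calls and returns, and the branch-rewriting block; the compiler inserts no additional $\storeC$, $\callC$, or $\retC$. Hence the speculation sources for $\Sv$ (stores) and for $\Rv$ (calls/returns) in $\compsslhB{P}$ are exactly those of $P$, and all masking of addresses is governed solely by the flag $\rslh$.

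The key step is to establish the invariant that $\rslh$ remains $\mi{false}$ throughout every $\contract{}{\Sv + \Rv}$-execution of $\compsslhB{P}$. This is where the combined semantics matters: $\Sv$ and $\Rv$ transactions may nest inside one another, but neither is a branch, and $\rslh$ is raised only inside the compiled $\jzC$ block, which models $\semb$ speculation. Since $\contract{}{\Sv + \Rv}$ never speculates on branches, that block is never entered speculatively, so $\rslh$ stays $\mi{false}$ and the conditional masks always take the unsatisfied branch, leaving $\rscr$ equal to the address computed from the source expression. I would phrase this as a degenerate instance of the $\slhInv$ invariant from the Safe Nesting development: here the $\semb$ disjunct of that invariant simply cannot fire, so the flag argument is immediate rather than requiring a full case analysis over nested transactions.

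Combining these, the microarchitectural actions of $\compsslhB{P}$ --- store and load addresses, return targets, and pc observations --- coincide with those of $P$ under $\contract{}{\Sv + \Rv}$, since the extra moves and conditional assignments emit only $\epsilon$. As $\contract{}{\Sv + \Rv} \vdash P : \rss$ taints all of these $\safeta$, we obtain the desired contradiction. The main obstacle I anticipate is not the flag argument, which is genuinely degenerate without $\semb$, but making the \emph{same-actions} step precise under nesting: I must check by a step-for-step simulation that an $\Sv$-transaction nested in an $\Rv$-transaction (and vice versa) in $\compsslhB{P}$ is matched by the corresponding nested transaction in $P$, so that no new $\Sv$/$\Rv$-interaction leak is introduced by the inserted instructions. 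This is routine bookkeeping, with the one caveat that the extra instructions decrement the speculation window faster; but an earlier rollback only truncates an already-safe trace, so it causes no difficulty.
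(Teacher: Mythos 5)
Your proposal is correct, and its core insight is exactly the one the paper relies on: under $\contract{}{\Sv + \Rv}$ there is no branch speculation, so the flag $\rslh$ is never raised, the conditional masks are inert, and the compiled program's speculation sources (the original stores, calls, and returns --- the compiler inserts no new ones, only register moves and conditional assignments emitting $\epsilon$) produce the same observations as in $P$. The difference is structural: the paper proves the two single-extension lemmas (\Cref{lem:slh-ind-s}, \Cref{lem:slh-ind-r}) with this flag argument and then disposes of the combined case in one line, ``follows from'' the two. You instead run the argument directly on the combined semantics, explicitly tracking the flag invariant through nested $\Sv$/$\Rv$ transactions. Your route costs a step-for-step simulation under nesting, but it buys something the paper's one-liner glosses over: as the paper itself notes in \Cref{ssec:bg-combinations}, a composed semantics can exhibit leaks that neither component does in isolation, so independence w.r.t.\ $\contract{}{\Sv}$ and w.r.t.\ $\contract{}{\Rv}$ separately does not formally entail independence w.r.t.\ $\contract{}{\Sv+\Rv}$ without an argument that the inserted instructions create no new cross-mechanism interaction. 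Your explicit handling of nesting (including the observation that a mispredicted return can only land on the flag-restore instruction after a call, which restores a false flag) supplies precisely that missing justification. One minor imprecision: the flag-update instructions in the compiled branch block \emph{are} executed architecturally, not only under branch speculation; the invariant holds because on the architecturally followed direction the update $\passign{\rslh}{\rslh \lor \rscr}$ (resp.\ its negated twin) evaluates to false --- worth stating when you formalise the invariant.
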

\begin{proof}
    Follows from \Thmref{lem:slh-ind-s} and \Thmref{lem:slh-ind-r}.
\end{proof}

\begin{lemma}[SLH Independence w.r.t. $\Svr$ + $\SLSvr$]\label{lem:slh-ind-ssls}
    $\IND{\contract{}{\Sv + \SLSv}}{\compsslhB{\cdot}}$
\end{lemma}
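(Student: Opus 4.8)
The plan is to follow the same recipe as the two component results \Thmref{lem:slh-ind-s} and \Thmref{lem:slh-ind-sls}, namely to show that the masking instrumentation inserted by $\compsslhB{\cdot}$ is inert whenever the underlying semantics does not speculate over branches. First I would unfold \Cref{def:ind-paper}: fix an arbitrary program $p$ with $\contract{}{\Sv + \SLSv} \vdash p : \rss$ and aim to establish $\contract{}{\Sv + \SLSv} \vdash \compsslhB{p} : \rss$. As usual this reduces to showing that every action appearing on a trace of $\compsslhB{p}$ under the composed semantics is tainted $\safeta$.

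The key observation is that the only protective effect of $\compsslhB{\cdot}$ comes from the conditional assignments $\pcondassign{\rscr}{0}{\rslh}$ (together with the flag updates emitted on branches), and these only alter the computation when the speculation flag $\rslh$ holds $\mi{true}$. The flag $\rslh$, however, is only ever raised inside a branch-misprediction transaction, i.e.\ under $\semb$. Since neither $\Sv$ nor $\SLSv$ speculates on $\jzC$, no execution under $\contract{}{\Sv + \SLSv}$ ever sets $\rslh$ to $\mi{true}$: it stays $\mi{false}$ in every speculative instance. Consequently each $\pcondassign{\rscr}{0}{\rslh}$ is a no-op, the remaining compiler-inserted instructions only read and write the scratch registers $\rscr, \rscr'$ and the flag $\rslh$ and emit only safe observations, and the data-dependent actions of $\compsslhB{p}$ coincide with those of $p$. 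Hence $p$ being $\rss$ under $\contract{}{\Sv + \SLSv}$ transfers directly to $\compsslhB{p}$.

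The one point that requires care --- and which I expect to be the main obstacle --- is that this is a statement about the \emph{composed} semantics $\contract{}{\Sv + \SLSv}$, which may exhibit nested transactions (an $\Sv$-transaction inside an $\SLSv$-transaction or vice versa) that occur in neither component in isolation. I would therefore not merely cite \Thmref{lem:slh-ind-s} and \Thmref{lem:slh-ind-sls} but re-run their core argument at the level of the composition, observing that the invariant ``$\rslh = \mi{false}$ throughout'' is insensitive to which non-branch source triggered a given (possibly nested) transaction: no rule of $\Sv$ or $\SLSv$ writes $\rslh$, so the flag remains false across all nesting depths. This uniform inertness of the mask is exactly what makes the component reasoning compose, and it lets me conclude $\IND{\contract{}{\Sv + \SLSv}}{\compsslhB{\cdot}}$ without a fresh secure-compilation proof, mirroring the earlier compound case \Thmref{lem:slh-ind-sr}.
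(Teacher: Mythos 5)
Your proof is correct and rests on the same key observation as the paper's: under $\contract{}{\Sv + \SLSv}$ there is no branch speculation, so $\rslh$ is never raised and the SLH instrumentation is inert, leaving exactly the data-dependent actions of the original program. The paper's own proof is terser --- it simply states that the result ``follows from'' the two component lemmas \Cref{lem:slh-ind-s} and \Cref{lem:slh-ind-sls} --- so your explicit re-run of the argument at the level of the composed semantics, checking that the invariant $\rslh = \mi{false}$ survives nested $\Sv$/$\SLSv$ transactions, supplies a justification for that composition step that the paper leaves implicit.
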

\begin{proof}
    Follows from \Thmref{lem:slh-ind-s} and \Thmref{lem:slh-ind-sls}.
\end{proof}
 \section{General Framework Combinations}

Most of these definitions are analogous to \cite{spec_comb} and restated here for clarity.

To account for the taint semantics, we split the combination into a value and a taint part as well.

\subsection{Union of States}
The states $\Sigmaxy$ are defined as the union of the state of its parts
\begin{align*}
    \Sigmaxy \bnfdef&~ \Sigmax \cup \Sigmay \\
    \Phixy \bnfdef&~ \Phix \cup \Phiy 
\end{align*}

\subsection{Unified Trace Model}
Similar to the states $\Sigmaxy$, the observations are defined as the union of the trace models
\begin{align*}
    \Obsxy \bnfdef~ \Obsx \cup \Obsy 
\end{align*}

\subsection{Join and Projection on instances / states}

We define the a join operator $\joinxy$ on instances:
\begin{align*}
    \joinxy \colon (\Phix, \Phiy) \mapsto \Phixy
\end{align*}
Note, that the join operations is only defined iff all the components $c$ in $\Phix \cap \Phiy$ are equal, i.e. $\forall c \in \Phix \cap \Phiy. ~ \Phix. c = \Phiy . c$.

Furthermore, we define projection functions $\specProjectxy{}$ as the inverse of the join function:

\begin{align*}
    \specProjectxy{} \colon \Phixy \mapsto (\Phix, \Phiy)
\end{align*}

We require that 
\begin{align*}
    \joinxy(\specProjectxy{\Phixy}) =& \Phixy \\
    \specProjectxy{(\joinxy(\Phix, \Phiy))} =& (\Phix, \Phiy)
\end{align*}

Next, we will define two more specific projection $\specProjectxyx{}$ and $\specProjectxyy{}$:
\begin{align*}
    \specProjectxyx{} \colon \Phixy \mapsto \Phix \\
    \specProjectxyy{} \colon \Phixy \mapsto \Phiy 
\end{align*}

They are defined as the first and second projection of the generated pair from $\specProjectxy{}$.

\mytoprule{Judgements}
\begin{align*}
    &
    \trgxy{\phiStackxyv \specarrowxy{\tau} \phiStackxyv'}
    && \text{State $\trgxy{\phiStackxyv}$ small-steps to $\trgxy{\phiStackxyv'}$ and emits observation $\trgxy{\tau}$.}
    \\
    &
    \trgxy{\Phixyv \specarrowxy{\tau} \phiStackxy'}
    && \text{Speculative instance $\trgxy{\Phixyv}$ small-steps to $\trgxy{\phiStackxyv{'}}$ and emits observation $\trgxy{\tau}$.}
    \\
    &
    \trgxy{\phiStackxyt' \specarrowxy{\taint} \phiStackxyt'}
    && \text{State $\trgxy{\phiStackxyt}$ small-steps to $\trgxy{\phiStackxyt}$ and emits taint $\trgxy{\taint}$.}
    \\
    &
    \trgxy{\Phixyt \specarrowxy{\taint} \phiStackxyt'}
    && \text{Speculative instance $\trgxy{\Phixyt}$ small-steps to $\trgxy{\phiStackxyt{'}}$ and emits taint $\trgxy{\taint}$.}
    \\
    &
    \trgxy{\Sigmaxy \specarrowxy{\tau^{\taint}} \Sigmaxy'}
    && \text{Speculative state $\trgxy{\Sigmaxy}$ small-steps to $\trgxy{\Sigmaxy'}$ and emits tainted observation $\trgxy{\tau^{\taint}}$.}
    \\
    &
    \trgxy{\Sigmaxy \bigspecarrowxy{\OB{\tau^{\taint}}} \Sigmaxy'}
    && \text{State $\trgxy{\Sigmaxy}$ big-steps to $\trgxy{\Sigmaxy'}$ and emits a list of tainted observations $\trgxy{\tauStackT}$.}
    \\
    &
    \trgxy{\amTracevxy{P}{\tauStackT}}
    && \text{Program $\trgxy{P}$ produces the tainted observations $\trgxy{\tauStackT}$ during execution.}
\end{align*}

\begin{center}
    \centering
    \small
    \mytoprule{\trgxy{\phiStackxyv \specarrowxy{\tau} \phiStackxyv'}}
    
    \typerule {Context-xy}
    {\trgxy{\Phixyv \specarrowxy{\tau} \phiStackxyv'}
    }
    {\trgxy{\phiStackxyv \cdot \Phixyv \specarrowxy{\tau} \phiStackxyv \cdot \phiStackxyv'}
     }{xy-context}
     
    \mytoprule{\trgxy{\Phixyv \specarrowxy{\tau} \phiStackxyv}}
     
    \typerule {x-step}
    {\specProjectxyx{\Phixyv} \specarrowxZ{\tau} \specProjectxyx{\phiStackxyv}
    }
    {\Phixyv
    \specarrowxy{\tau}
    \phiStackxyv
     }{xy-x-step}
     \typerule {y-step}
    {\specProjectxyy{\Phixyv} \specarrowyZ{\tau} \specProjectxyy{\phiStackxyv}
    }
    {\Phixyv
    \specarrowxy{\tau}
    \phiStackxyv
     }{xy-y-step}
     
     \typerule{x-Rollback}
    { \trgxy{\Phixyv . n = 0}\ \text{ or }\ \finType{\trgxy{\Phixyv}} \\ 
    \specProjectxyx{\Phixyv} \specarrowx{\rollbackObsx} \emptyset 
    }
    {\trgxy{\Phixy \specarrowxy{\rollbackObsx} \emptyset}
    }{xy-x-rollback}
    \typerule{y-Rollback}
    { \trgxy{\Phixyv . n = 0}\ \text{ or }\ \finType{\trgxy{\Phixyv}} \\ 
    \specProjectxyy{\Phixyv} \specarrowy{\rollbackObsy} \emptyset 
    }
    {\trgxy{\Phixyv \specarrowxy{\rollbackObsy} \emptyset}
    }{xy-y-rollback}
     
\end{center}

\begin{center}
    \centering
    \small
    \mytoprule{\trgxy{\phiStackxyt \specarrowxy{\taint} \phiStackxyt'}}
    
    \typerule {T-Context-xy}
    {\trgxy{\Phixyt \specarrowxy{\taint} \phiStackxyt'}
    }
    {\trgxy{\phiStackxyt \cdot \Phixyt \specarrowxy{\taint} \phiStackxyt \cdot \phiStackxyt'}
     }{xy-t-context}
     
    \mytoprule{\trgxy{\Phixyt \specarrowxy{\taint} \phiStackxyt}}
     
    \typerule {T-x-step}
    {\specProjectxyx{\Phixyt} \specarrowxZ{\taint} \specProjectxyx{\phiStackxyt}
    }
    {\Phixyt
    \specarrowxy{\taint}
    \phiStackxyt
     }{xy-t-x-step}
     \typerule {T-y-step}
    {\specProjectxyy{\Phixyt} \specarrowyZ{\taint} \specProjectxyy{\phiStackxyt}
    }
    {\Phixyt
    \specarrowxy{\taint}
    \phiStackxyt
     }{xy-t-y-step}
     
     \typerule{T-x-Rollback}
    { \trgxy{\Phixyt . n = 0}\ \text{ or }\ \finType{\trgxy{\Phixyt}} \\ 
    \specProjectxyx{\Phixyt} \specarrowx{\safeta} \emptyset 
    }
    {\trgxy{\Phixyt \specarrowxy{\safeta} \emptyset}
    }{xy-t-x-rollback}
    \typerule{T-y-Rollback}
    { \trgxy{\Phixyt . n = 0}\ \text{ or }\ \finType{\trgxy{\Phixyt}} \\ 
    \specProjectxyy{\Phixyt} \specarrowy{\safeta} \emptyset 
    }
    {\trgxy{\Phixyt \specarrowxy{\safeta} \emptyset}
    }{xy-t-y-rollback}
     
\end{center}

\begin{center}
    \centering
    \small
    \mytoprule{\trgxy{\Sigmaxy \specarrowxy{\tau^{\taint}} \Sigmaxy'}}
    
    \typerule {xy:Combine}
    {\trgxy{\Sigmaxy = \phiStackxy{}} & \trgxy{\Sigmaxy' = \phiStackxy'} \\
    \trgxy{\phiStackxyv + \phiStackxyt = \phiStackxy} &  \trgxy{\phiStackxyv' + \phiStackxyt' = \phiStackxy'} \\
    \trgxy{\phiStackxyv \specarrowxy{\tau} \phiStackxyv'} & \trgxy{\phiStackxyt \specarrowxy{\taint} \phiStackxyt'}
    }
    {\trgxy{\Sigmaxy \specarrowxy{\tau^{\taint}} \Sigmaxy'}
     }{xy-combine}
\end{center}

\begin{center}
    \mytoprule{\trgxy{\Sigmaxy \bigspecarrowxy{\tauStack} \Sigmaxy'}}
    
    \typerule{xy:AM-Reflection}
    {
    }
    {\trgxy{\Sigmaxy \bigspecarrowxy{\varepsilon} \Sigmaxy}
    }{xy-reflect}
    \typerule{xy:AM-Single}
    {
    \trgxy{\Sigmaxy \bigspecarrowxy{\OB{\tau^{\taint}}} \Sigmaxy''} & \trgxy{\Sigmaxy'' \specarrowxy{\tau^{\taint}} \Sigmaxy'} \\
    \trgxy{\Sigmaxy'' = \phiStackxy \cdot  \tup{\OB{F}; \OB{I}; \OB{B}; \sigma, n}} &
    \trgxy{\Sigmaxy' = \phiStackxy \cdot \tup{\OB{F}; \OB{I}; \OB{B'}; \sigma', n' }} \\
    \ffun{\trgxy{\sigma(\pc)}} = \trgxy{f} & \ffun{\trgxy{\sigma'(\pc)}} = \trgxy{f'} \\
    \text{ if } \trgxy{f == f'} and ~\trgxy{f \in \OB{I}} \text{ then } \trgxy{\tau^{\taint} = \epsilon} \text{ else } \trgxy{\tau^{\taint} = \tau^{\taint}}
    }
    {
    \trgxy{\Sigmaxy \bigspecarrowxy{\OB{\tau^{\taint}} \cdot \tau^{\taint}} \Sigmaxy'}
    }{xy-single}
    
    \typerule{xy:AM-silent}
    {
    \trgxy{\Sigmaxy \bigspecarrowxy{\tra{^\taint}} \Sigmaxy''} & \trgxy{\Sigmaxy'' \specarrowxy{\epsilon} \Sigmaxy'}
    }
    {
    \trgxy{\Sigmaxy \bigspecarrowxy{\tra{^\taint}} \Sigmaxy'}
    }{xy-silent}

\end{center}

\begin{center}
    \mytoprule{Helpers}
    
    \typerule{xy:AM-Merge}
    {
    }
    {\Phixyv + \Phixyt = \Phixy
    }{xy-merge}
    \typerule{xy:AM-Init}
    {
    }
    {\SInit(M ,\OB{F}, \OB{I}) = \coloneqq \joinxy(\initFuncx{M ,\OB{F}, \OB{I}}, \initFuncy{M ,\OB{F}, \OB{I})}
    }{xy-init}
    
    \typerule{xy:AM-Fin-Ending}
    { \Sigmaxy = \Phixy & \finType{\Phixy}
    }
    {\finTypef{\Sigmaxy}
    }{xy-fin-end}
    \typerule{xy:AM-Fin}
    { \Sigmaxy = \phiStackxy \cdot \Phixy & \finType{\Phixy}
    }
    {\finType{\Sigmaxy}
    }{xy-fin}
    
    \typerule {xy:AM-Trace}
    { \exists \finTypef{\Sigmaxy'}\ & \initFuncxy(P) \bigspecarrowxy{\tauStackT}\ \Sigmaxy'
    }
    { \amTracevxy{P}{\tauStackT}
     }{xy-trace}
    \typerule {xy:AM-Beh}
    {}
    { \behxy{P} =  \{\tauStackT \mid \amTracevxy{P}{\tauStackT} \}
     }{xy-beh}
\end{center}

And, we define when we call composition well-formed
\begin{definition}[Well-formed composition]\label{def:wellformed}
A composition $\contract{}{xy}$ of two speculative contracts $\contract{}{x}$ and $\contract{}{y}$ is \textit{well-formed}, denoted with $\wfc{\contract{}{xy}}$ if:
\begin{compactenum}
    \item \textit{(Confluence)} Whenever $\Sigmaxy \specarrowxy{\tau'} \Sigmaxy'$ and $\Sigmaxy \specarrowxy{\tau} \Sigmaxy''$, then  $\Sigmaxy' = \Sigmaxy''$ and $\tau = \tau'$.
    \item \textit{(Projection preservation)} For all programs $p$, $\behx{p} = \specProjectxyx{\behxy{(p)}}$ and $\behy{p} = \specProjectxyy{\behxy{(p)}}$.
    \item (Relation Preservation)
    If $\Sigmax \specarrowx{\tau} \Sigmax'$ and $\Sigmax \relsaxy \Sigmaxy^{\dagger}$ then there exists $\Sigmaxy''$ such that $\Sigmaxy^{\dagger} \specarrowxy{\tau} \Sigmaxy''$ and $\Sigmaxy' \relsaxy \Sigmaxy''$.
\end{compactenum}
\end{definition}
 \section{Lifting Security Guarantees to stronger attackers}

Here we present results that lift guarantees of a compiler for a specific semantics to guarantees about the combination of different spec. semantics.  

The running example that we will use is a combination of $\specj$ and $\specr$. The retpoline countermeasure $\compretpJ{\cdot}$ introduces new $\retC$ instructions into the compiled code. These $\retC$ instructions are a source of speculation for $\specr$. Thus, the natural question to ask is, if the compiled program introduces new $\specr$ vulnerabilities into the code. 

Afterwards, we define two properties Syn, Independence and Trapped Speculation which are simpler to prove and imply the stronger conditions.

\subsection{Interplay of semantics}

\begin{figure}
\tikzset{
    between/.style args={#1 and #2}{
         at = ($(#1)!0.5!(#2)$)
    }
}
\begin{tikzpicture}
    \node (h1) {$\scriptstyle\startl{x}$};
    \node[right = 0.6 cm of h1] (m1) {$\scriptstyle\tauStack$};
    \node[right = 0.6 cm of m1] (h2) {$\scriptstyle\startl{y}$};
    \node[right = 0.6 cm of h2] (m5) {$\scriptstyle\tauStack'$};
    
    \node[below = 1 cm of h1] (b1){$\scriptstyle\startl{y}$};
    \node[right = 0.6 cm of b1] (m2) {$\scriptstyle\tauStack$};
    \node[below = 1 cm of h2] (b2){$\scriptstyle\startl{x}$};
    \node[below = 1 cm of m5] (m6) {$\scriptstyle\tauStack'$};

    \node[right = 0.6 cm of m5] (h3) {$\scriptstyle\rollbl{y}$};
    \node[right = 0.6 cm of h3] (m3) {$\scriptstyle\tauStack''$};
    \node[right = 0.6 cm of m3] (h4) {$\scriptstyle\rollbl{x}$};

    \node[below = 1 cm of h3] (b3){$\scriptstyle\rollbl{x}$};
    \node[right = 0.6 cm of b3] (m4) {$\scriptstyle\tauStack''$};
    \node[below = 1 cm of h4] (b4){$\scriptstyle\rollbl{y}$};

    \node[between = h1 and m1](t){$\cdot$};
    \node[between = b1 and m2](t){$\cdot$};
    \node[between = m1 and h2](t1){$\cdot$};
    \node[between = m2 and b2](t2){$\cdot$};
    
    \node[between = h2 and m5](t){$\cdot$};
    \node[between = b2 and m6](t){$\cdot$};
    \node[between = m5 and h3](t){$\cdot$};
    \node[between = m6 and b3](t){$\cdot$};

    \node[between = h3 and m3](t3){$\cdot$};
    \node[between = b3 and m4](t4){$\cdot$};
    \node[between = m3 and h4](t){$\cdot$};
    \node[between = m4 and b4](t){$\cdot$};

    \draw[decorate,decoration={brace,mirror, amplitude=5pt},below=10pt]
        (h2.south west) -- (h3.south east) node[font=\footnotesize, draw, midway, below=7pt] {\region{3}};
        
    \draw[decorate,decoration={brace, amplitude=5pt},above=10pt]
        (b2.north west) -- (b3.north east)  {};
    
    \node[fit=(h1) (m1) (t1), inner sep=5pt, fill=red, opacity=0.1, align=center, rounded corners] (background) {};
    \node[fit=(h2) (m5) (h3), inner sep=5pt, fill=green, opacity=0.1, align=center, rounded corners] (background) {};
    \node[fit=(t3)  (h4), inner sep=5pt, fill=red, opacity=0.1, align=center, rounded corners] (background) {};

    \node[fit=(b1) (m2) (t2), inner sep=5pt, fill=blue, opacity=0.1, align=center, rounded corners] (background) {};
    \node[fit=(b2) (m6) (b3), inner sep=5pt, fill=green, opacity=0.1, align=center, rounded corners] (background) {};
    \node[fit=(t4) (b4), inner sep=5pt, fill=blue, opacity=0.1, align=center, rounded corners] (background) {};

    \draw[draw = red, dotted] ([yshift=.3em]m1.north) to ([yshift=0.6cm] m1.north) to node[draw = black, above= 4pt,font=\footnotesize, solid](sd1){\region{1}}  ([yshift=0.6cm]m3.north) to ([yshift=.3em]m3.north);

     \draw[draw = blue, dotted ] ([yshift=-.3em]m2.south) to ([yshift=-0.6cm] m2.south) to node[draw = black, opacity = 1, below = 4pt,font=\footnotesize, solid](sd2){\region{2}}  ([yshift=-0.6cm]m4.south) to ([yshift=-.3em]m4.south);
\end{tikzpicture}
\vspace{-13pt}
\caption[]{Possible interplay of semantics $\contract{}{x}$ and $\contract{}{y}$ when executing a program under the combined semantics $\contract{}{x+y}$.}
\label{figure:lifting-explanation-tr}
\end{figure}

\Cref{figure:lifting-explanation-tr} depicts two portions of traces produced when executing a program under the composed semantics $\contract{}{x+y}$.
The first trace (top) starts with a speculative transaction from semantics $\contract{}{x}$ (highlighted in red and starting with action $\startl{x}$).
Inside the red speculative transaction, there is a \emph{nested} speculative transaction from semantics $\contract{}{y}$ (highlighted in green, starting with action $\startl{y}$ and ending with action  $\rollbl{y}$).
After the termination of the nested transaction, the outer red transaction continues until the end of its speculative window (action $\rollbl{x}$).
Dually, the second trace (bottom) starts with a speculative transaction from $\contract{}{y}$ followed by a nested transaction from $\contract{}{x}$.
 
Thus, there are three different regions that might results in leaks:
\begin{asparaitem}
\item \textbf{\region{1}}: the speculative transaction started by $\contract{}{x}$ (highlighted in red),
\item \textbf{\region{2}}: the speculative transaction started by $\contract{}{y}$ (highlighted in blue), and 
\item \textbf{\region{3}}: the nested transactions (highlighted in green).
\end{asparaitem}

While leaks in region 1 are fixed by proving the security of a compiler for spec. semantics $\contract{}{x}$, i.e., $\contract{}{x} \vdash \comp{\cdot} : \rssp$, we need additional conditions to ensure the absence of leaks in Regions 2 and 3.

Thus, we introduce the notions of Independence w.r.t. a spec. semantics $\contract{}{y}$ and Safe Nesting which ensure the absence of leaks in Regions 2 and 3 respectively.

\subsection{Independence}

A compiler $\comp{\cdot}$ for the origin semantics $\contract{}{x}$ is called independent for the extension semantics $\contract{}{y}$ iff
the compiler does not introduce further leaks under the extension semantics $\contract{}{y}$.

\begin{definition}[Independence in extension]\label{def:ind}
    $\IND{\contract{}{y}}{\comp{\cdot}} \isdef \forall p.\ \contract{}{y} \vdash p : \rss \implies \contract{}{y} \vdash \comp{p} : \rss$
\end{definition}

\begin{corollary}[Self Independence]\label{lem:self-ind-tr}
    If $\contract{}{x} \vdash \comp{\cdot} : \rssp$ then 
    
    \noindent $\IND{\contract{}{x}}{\comp{\cdot}}$
\end{corollary}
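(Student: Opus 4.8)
The plan is to unfold both definitions and observe that the gap between them is closed by the triviality of RSS under the non-speculative semantics. Expanding \Cref{def:ind} with the extension semantics taken to be $\contract{}{x}$, the goal is to show that for every source program $P$, if $\contract{}{x} \vdash P : \rss$ then $\contract{}{x} \vdash \comp{P} : \rss$. First I would fix an arbitrary source program $P$ and assume the antecedent $\contract{}{x} \vdash P : \rss$ (although, as the argument will reveal, this hypothesis is not actually needed).

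The key step is to reconcile the mismatch between the premise of \rssp{} (\Cref{def:compiler-contract-security-paper}), which requires $\contract{}{NS} \vdash P : \rss$, and the premise of Independence, which instead refers to $\contract{}{x}$. I would invoke \Cref{thm:ss-sni-source} (equivalently, the observation that RSS holds trivially under $\contract{}{NS}$ because that semantics performs no speculation, so every trace equals its own non-speculative projection) to conclude that \emph{every} source program satisfies $\contract{}{NS} \vdash P : \rss$. With this in hand the premise of \rssp{} is automatically discharged, so applying the hypothesis $\contract{}{x} \vdash \comp{\cdot} : \rssp$ to $P$ yields exactly $\contract{}{x} \vdash \comp{P} : \rss$, which is the required conclusion.

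There is essentially no obstacle beyond identifying the correct lemma to apply: the argument collapses once one notes that $\contract{}{NS}$-RSS is universally true for source programs, so that \rssp{} in fact guarantees $\contract{}{x} \vdash \comp{P} : \rss$ unconditionally over all source $P$. In particular the antecedent $\contract{}{x} \vdash P : \rss$ of the Independence statement is never consumed; the implication holds because its conclusion is already valid for every source program. The only minor point worth checking is that the quantification domains of the two definitions agree — \rssp{} ranges over $\src{P} \in \src{L}$ while Independence ranges over all $P$ for which $\comp{P}$ is meaningful — but since $\comp{\cdot}$ is defined only on source programs, these coincide, so the quantifier instantiation is sound.
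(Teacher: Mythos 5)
Your proof is correct and follows essentially the same route as the paper's: both discharge the \rssp{} premise by invoking \Cref{thm:ss-sni-source} (RSS holds trivially under $\contract{}{NS}$) and then apply the hypothesis directly. Your additional observation that the Independence antecedent is never consumed is accurate and consistent with the paper's argument.
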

\begin{proof}

    Assume we have $\contract{}{x} \vdash P : \rss$. We now need to show that $\contract{}{x} \vdash \comp{P} : \rss$.

    Unfolding $\contract{}{x} \vdash \comp{\cdot} : \rssp$ we get $\contract{}{NS} \vdash \src{P} : \rss \text{ then } \contract{}{x} \vdash \comp{\src{P}} : \rss$

    Because of \Thmref{thm:ss-sni-source} we have $\contract{}{NS} \vdash \src{P} : \rss$ and thus can use $\contract{}{x} \vdash \comp{\cdot} : \rssp$ to get $\contract{}{x} \vdash \comp{\src{P}} : \rss$.

    With this, we are done.
\end{proof}

\subsection{Safe Nesting}
The last we need to talk about nesting. For that we define a predicate safe nesting 
\begin{definition}[Safe nested Speculation]\label{def:safe-nest}
    Given a program $p$, we say that the program has safe nesting, written $$\contract{}{x+y} \vdash p : \safeN{}$$,
    iff
    \begin{align*}
        \contract{}{x+y} \vdash P : \safeN{} \isdef&\
            \forall \tauStack \in \behxy{(P)}, 
            \text{ if }
            \\
            &\
            \text{(1) }  \startl{a} \cdot \tauStack' \cdot \rollbl{a} \text{ is a subtrace of $\tauStack$}
            \\
            &\
            \text{(2) } \startl{b} \cdot \tauStack'' \cdot \rollbl{b} \text{ is a subtrace of $\tauStack'$} 
            \\
            &\
            \text{ then }
            \safe{\tauStack''} 
            \\
            &\ \quad \text{ where }
            a \in \{x,y\}
            \text{ and } 
            b \in \{x,y\} \setminus \{a\}
        \end{align*}
\end{definition}

Finally, we say that a compiler satisfies the Safe Nested Speculation property, written $\contract{}{x+y} \vdash \comp{\cdot} : \safeN{}$, iff all its compiled programs satisfy \Cref{def:safe-nest}, i.e., $\forall p.\ \contract{}{x+y} \vdash \comp{p} : \safeN{}$.

\subsection{Conditional Robust Speculative Safety}
Often compilers implementing Spectre countermeasures are developed to prevent leaks introduced by a \emph{specific} speculation mechanism.

\rssp{} is too strict as a criterion here, since it cannot distinguish between the different speculation mechanisms that might induce the leak.
To account for this, we propose a new secure compilation criterion called Conditional Robust Speculative Safety Preservation ($\crssp$, \Cref{def:crssp-tr}).
As the name indicates, $\crssp$ is a variant of $\rssp$ that restricts $\rss$ preservation \emph{only} to those programs that do not contain leaks caused only by the additional mechanisms. 

\begin{definition}[Conditional Robust Speculative Safety Preservation (\crssp)]\label{def:crssp-tr}
    $\contract{}{x}, \contract{}{y} \vdash  \comp{\cdot} : \crssp \isdef \forall \src{P} \in \src{L}. \text{ if }
      \contract{}{NS} \vdash \src{P} : \rss \text{ and } \contract{}{y} \vdash \src{P} : \rss \text{ then } \contract{}{x+y} \vdash \comp{\src{P}} : \rss$
  \end{definition}

\subsection{Lifted Compiler Preservation}

Using Independence and Safe Nesting, we are now ready to state our main theorem:

\begin{theorem}[Lifted Compiler Preservation]\label{cor:lift-comp-pres}
    If 
    $\contract{}{x} \vdash  \comp{\cdot} : \rssp$
    and
    $\IND{\contract{}{y}}{\comp{\cdot}}$
    and
    $\contract{}{x+y} \vdash \comp{\cdot} : \safeN{}$
    and
    $\wfc{\contract{}{x+y}}$ 
    then
    $\contract{}{x}, \contract{}{y} \vdash  \comp{\cdot} : \crssp$
\end{theorem}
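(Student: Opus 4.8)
The plan is to unfold \crssp{} and reduce the goal to a single-action safety check, then discharge that check by a case split on where the action sits inside the speculative transactions of the trace, using Security in Origin for the $\contract{}{x}$-only part, Independence for the $\contract{}{y}$-only part, and Safe Nesting for the genuinely mixed part. The glue relating the combined semantics to its components is Projection Preservation, which is available since $\wfc{\contract{}{x+y}}$ holds.

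First I would fix an arbitrary source program $P$ satisfying the two hypotheses of \crssp{}, namely $\contract{}{NS} \vdash P : \rss$ and $\contract{}{y} \vdash P : \rss$, and aim for $\contract{}{x+y} \vdash \comp{P} : \rss$. From these I would harvest three facts about $\comp{P}$: (i) $\contract{}{x} \vdash \comp{P} : \rss$, by feeding $\contract{}{NS} \vdash P : \rss$ into Security in Origin $\contract{}{x} \vdash \comp{\cdot} : \rssp$; (ii) $\contract{}{y} \vdash \comp{P} : \rss$, by feeding $\contract{}{y} \vdash P : \rss$ into $\IND{\contract{}{y}}{\comp{\cdot}}$; and (iii) $\contract{}{x+y} \vdash \comp{P} : \safeN{}$, which is the compiler's Safe Nesting property instantiated at $P$.

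Next I would unfold the goal $\rss$: fix a valid attacker $\ctxc{}$, a trace $\tauStack \in \behxy{(\SInit{\ctxc{}\hole{\comp{P}}})}$, and an action $\lambda^\sigma \in \tauStack$; it remains to show $\sigma \equiv \safeta$. I would split on the transaction structure of $\tauStack$ around $\lambda^\sigma$, noting that exactly one of three exhaustive cases holds. If $\lambda^\sigma$ lies inside no $\contract{}{y}$-transaction, it survives the projection $\specProjectComp{\tauStack}{y}$, and by Projection Preservation the surviving trace lies in $\behx{\SInit{\ctxc{}\hole{\comp{P}}}}$, which by fact (i) is entirely $\safeta$-tainted, so $\sigma \equiv \safeta$. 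Symmetrically, if $\lambda^\sigma$ lies inside no $\contract{}{x}$-transaction, it survives $\specProjectComp{\tauStack}{x} \in \behy{\SInit{\ctxc{}\hole{\comp{P}}}}$ and fact (ii) gives $\sigma \equiv \safeta$; note the non-speculative actions fall in both of these cases. Otherwise $\lambda^\sigma$ lies inside both some $\contract{}{x}$-transaction and some $\contract{}{y}$-transaction: I would take the outermost transaction $T$ containing it, of some semantics $a$, and an opposite-semantics transaction $T'$ (tag $b \neq a$) also containing it; since $T$ is outermost, the block of $T'$ is a subtrace of the content of $T$, so Safe Nesting (\Cref{def:safe-nest}, via fact (iii)) forces the whole content of $T'$, and hence $\lambda^\sigma$, to be $\safeta$. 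This case subsumes arbitrarily deep nesting, because any action two or more transactions deep with alternating tags is contained in such a nested pair.

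The main obstacle I expect is the trace-structural bookkeeping of the last paragraph: making precise the notions of ``inside a transaction'', ``outermost containing transaction'', and ``subtrace'', and then verifying that the three cases are genuinely exhaustive and that each projection deletes exactly the opposite-semantics transactions while leaving a surviving action's taint untouched. This is where Confluence and Projection Preservation of the well-formed composition do the real work, and where one must check carefully that \emph{same}-semantics nesting (e.g.\ an $\contract{}{x}$-transaction inside another $\contract{}{x}$-transaction) is handled by the projection cases rather than by Safe Nesting, whose hypothesis $b \neq a$ only constrains alternating nestings.
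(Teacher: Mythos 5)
Your proof is correct and takes essentially the same route as the paper's: both collect the same three facts ($\contract{}{x} \vdash \comp{P} : \rss$ via RSSP, $\contract{}{y} \vdash \comp{P} : \rss$ via Independence, and Safe Nesting of $\comp{P}$), then case-split on whether an action of the combined trace lies in an $x$-only speculative region, a $y$-only region, or a nested region, discharging the first two cases via Projection Preservation and the last via Safe Nesting. Your handling of the nested case (outermost containing transaction, same-semantics nesting falling to the projection cases) is in fact somewhat more explicit than the paper's.
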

\begin{proof}
    Assume a program $P$ and  $\contract{}{NS} \vdash \src{P} : \rss \text{ and } \contract{}{y} \vdash \src{P} : \rss$.
    We need to show $\contract{}{x+y} \vdash \comp{\src{P}} : \rss$.

    Let us first collect we know:
    
    From $\IND{\contract{}{y}}{\comp{\cdot}}$ and $\contract{}{y} \vdash \src{P} : \rss$ we get $\contract{}{y} \vdash \comp{\src{P}} : \rss$. 

    From $\contract{}{NS} \vdash \src{P} : \rss$ and $\contract{}{x} \vdash  \comp{\cdot} : \rssp$ we get $\contract{}{x} \vdash \comp{\src{P}} : \rss$.

    By $\contract{}{x+y} \vdash \comp{\cdot} : \safeN{}$ we know that nested speculative actions are safe.

    $\tauStack$ that is generated for $\contract{}{xy}$ for $\comp{\src{P}}$.

    The non-speculative parts are safe by definition. They cannot produce unsafe actions.

    Lets look at the speculative parts of $\tauStack$.

    \begin{description}
        \item[Inside a speculation of x]
            Now lets look at the trace actions that can be generated
            Now lets look at the subtraces $\tauStack'$:
            \begin{description}
                \item[$\tauStack' = \startObsy{\ctr} \cdot \tauStack'' \cdot \rollbackObsy{\ctr}$]
                    Thus, we have nesting of speculative transactions.
                    By $\contract{}{x+y} \vdash \comp{p} : \safeN{}$  we know that nesting is safe.
                \item[otherwise]
                    From above, we have $\contract{}{x} \vdash \comp{\src{P}} : \rss$. Together with Projection Preservation of $\wfc{\contract{}{x+y}}$ 
                    we know that these actions are safe
            \end{description}
        \item[Inside a speculation of y]
            Now lets look at the trace actions that can be generated
            \begin{description}
                \item[$\tauStack' = \startObsx{\ctr} \cdot \tauStack'' \cdot \rollbackObsx{\ctr}$]
                By $\contract{}{x+y} \vdash \comp{p} : \safeN{}$  we know that nesting is safe.
                \item[otherwise]
                From above, we have $\contract{}{y} \vdash \comp{\src{P}} : \rss$ (Independence).
            \end{description}
    \end{description}

    We are now done.
\end{proof}

This means our compilers can be embedded under even stronger attackers without needing to worry if there are bad interactions with the inserted instructions of the compiler and other speculative variants.
Essentially, our well-formed compilers are secure even under a stronger attacker that allows for more speculation.

\subsection{The nesting condition and how we fulfil it}
We now explain why we have this nesting condition here.

Consider the different kinds of nesting of speculative transactions that are allowed.

\begin{tikzpicture}

\node (node1) {$st_{x}$};
\node[right=of node1] (node2) {$\dots$};
\node[right=of node2] (node3) {$st_y$};
\node[right=of node3] (node4) {$\dots$};
\node[right=of node4] (node5) {$rb_y$};
\node[right=of node5] (node6) {$\dots$};
\node[right=of node6] (node7) {$rb_x$};
\draw [decorate, decoration={brace, amplitude=5pt, raise=-5pt}] (node1.south west) -- (node2.south east |- node1.south east) node[midway, below=7pt] {RSSP};

\draw [decorate, decoration={brace, amplitude=5pt, raise=-5pt}] (node3.south west) -- (node5.south east) node[midway, below=7pt] {??};

\draw [decorate, decoration={brace, amplitude=2pt, raise=-5pt}] (node6.south west) -- (node7.south |- node6.south) node[midway, below=7pt] {RSSP};

\end{tikzpicture}

\begin{tikzpicture}

\node (node1) {$st_{y}$};
\node[right=of node1] (node2) {$\dots$};
\node[right=of node2] (node3) {$st_x$};
\node[right=of node3] (node4) {$\dots$};
\node[right=of node4] (node5) {$rb_x$};
\node[right=of node5] (node6) {$\dots$};
\node[right=of node6] (node7) {$rb_y$};
\draw [decorate, decoration={brace, amplitude=5pt, raise=-5pt}] (node1.south west) -- (node2.south east |- node1.south east) node[midway, below=7pt] {Independence};

\draw [decorate, decoration={brace, amplitude=5pt, raise=-5pt}] (node3.south west) -- (node5.south east) node[midway, below=7pt] {??};

\draw [decorate, decoration={brace, amplitude=2pt, raise=-5pt}] (node6.south west) -- (node7.south |- node6.south) node[midway, below=7pt] {Independence};

\end{tikzpicture}

This of course is a very strong property. We define a second definition that almost all our compilers fulfill.

We call speculation trivial iff the only observations made between the start of speculation is the rollback observation.

\begin{definition}[Trapped Speculation]
    $\forall \tau \in \specProject{\tauStack}, \exists \ctr, \tau = \rollbackObsx{\ctr}$
\end{definition}

\begin{definition}[Trapped Speculation of Compiler]\label{def:trapped-spec}
We write $\contract{}{x} \vdash \comp{\cdot} : \trappedC{}$ 
iff $\forall p \tauStack \in \behx{\comp{p}}$,
then $\forall \tau \in \specProject{\tauStack}, \exists \ctr, \tau = \rollbackObsx{\ctr}$ 
\end{definition}

Most of our compilers stop or trap speculation outright (the only exception is $\compuslhB{\cdot}$) and thus fulfill the condition.

Our image then looks like this: 

\begin{tikzpicture}

\node (node1) {$st_{y}$};
\node[right=of node1] (node2) {$\dots$};
\node[right=of node2] (node3) {$st_x$};
\node[right=of node3] (node5) {$rb_x$};
\node[right=of node5] (node6) {$\dots$};
\node[right=of node6] (node7) {$rb_y$};
\draw [decorate, decoration={brace, amplitude=5pt, raise=-5pt}] (node1.south west) -- (node2.south east |- node1.south east) node[midway, below=7pt] {Independence};

\draw [decorate, decoration={brace, amplitude=5pt, raise=-5pt}] (node3.south west) -- (node4.south east |- node3.south east) node[midway, below=7pt] {Trapped Speculation};

\draw [decorate, decoration={brace, amplitude=2pt, raise=-5pt}] (node6.south west) -- (node7.south |- node6.south) node[midway, below=7pt] {Independence};

\end{tikzpicture}

And from this new notion we can show that we fulfill the safe nesting condition:
\begin{lemma}[Trapped speculation implies safe nesting]\label{lem:trapped-imp-safe-nest}
    If $\contract{}{x} \vdash \comp{\cdot} : \trappedC{}$ then $ \contract{}{x+y} \vdash \comp{\cdot} : \safeN{}$

\end{lemma}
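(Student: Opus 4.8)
The plan is to unfold $\contract{}{x+y} \vdash \comp{\cdot} : \safeN{}$ into its per-program, per-trace form and reduce the required safety of every nested transaction to the trapped-speculation hypothesis about the \emph{single} semantics $\contract{}{x}$, transporting facts from $\behx{\comp{P}}$ to $\behxy{(\comp{P})}$ via Projection Preservation of the composition $\contract{}{x+y}$. Concretely, I would fix a program $P$, a trace $\tauStack \in \behxy{(\comp{P})}$, and a nesting witness: an outer $a$-transaction $\startl{a}\cdot\tauStack'\cdot\rollbl{a}$ that is a subtrace of $\tauStack$, and an inner $b$-transaction $\startl{b}\cdot\tauStack''\cdot\rollbl{b}$ that is a subtrace of $\tauStack'$, with $\{a,b\}=\{x,y\}$. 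The goal is $\safe{\tauStack''}$, and the first observation is that in both orientations the content $\tauStack''$ lies within the dynamic extent of an $\contract{}{x}$-transaction: it \emph{is} the $x$-content when $b=x$, and it is nested inside the outer $x$-content $\tauStack'$ when $a=x$.

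The core step is to show that the content of any $\contract{}{x}$-transaction appearing in a combined trace of $\comp{P}$ is composed solely of safe actions. For this I would apply Projection Preservation, $\behx{\comp{P}} = \specProjectComp{\behxy{(\comp{P})}}{y}$, to $\tauStack$: deleting the $y$-transactions yields a pure-$x$ trace in $\behx{\comp{P}}$ in which the given $x$-transaction survives with its $y$-nested material removed. Trapped Speculation, $\contract{}{x}\vdash\trappedS{\comp{\cdot}}$, then forces every action in the speculative projection of that pure-$x$ trace to be either $\startObsx{\ctr}$ or $\rollbackObsx{\ctr}$; since the start and rollback bookkeeping actions are tainted $\safeta$, the $x$-native content is safe. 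This immediately settles the orientation $b=x$, because there $\specProjectComp{\tauStack''}{y}$ is exactly the content of a pure-$x$ $x$-transaction, hence start/rollback only, hence safe.

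The remaining and genuinely delicate point — which I expect to be the main obstacle — is the orientation $a=x$, where $\tauStack''$ is the content of a $y$-transaction nested \emph{inside} an $x$-transaction: this is precisely the material that $\specProjectComp{\cdot}{y}$ discards, so the pure-$x$ statement does not constrain it directly. The plan to close this gap is to argue that under Trapped Speculation no observation-bearing nested $y$-transaction can arise at all. Any instruction executing inside the $x$-transaction that produces a data observation or spawns a $y$-transaction would, by the agreement of the combined and single semantics on the step preceding the $y$-split (\Cref{tr:xy-x-step}) together with Confluence, also contribute an action other than $\startObsx{\ctr}$ or $\rollbackObsx{\ctr}$ to the $x$-transaction of the corresponding pure-$x$ trace, contradicting Trapped Speculation; hence the $x$-transaction is inert and any nested $y$-content is empty (or start/rollback only) and therefore safe. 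Assembling the two orientations gives $\safe{\tauStack''}$ for every witness, i.e. $\contract{}{x+y}\vdash\comp{P}:\safeN{}$, and since the hypothesis $\contract{}{x}\vdash\trappedS{\comp{\cdot}}$ already quantifies over all $P$, the conclusion $\contract{}{x+y}\vdash\comp{\cdot}:\safeN{}$ follows.
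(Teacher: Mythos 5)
Your proof takes essentially the same route as the paper's: a case split on which semantics owns the outer transaction, with the inner-$x$ orientation discharged because trapped speculation leaves only $\startl{x}$/$\rollbl{x}$ actions (hence safe content), and the outer-$x$ orientation discharged because a trapped $x$-transaction is inert and so cannot give rise to observable nested $y$-speculation. The paper's own proof is considerably terser — it merely asserts the two cases — whereas you make explicit the transport via Projection Preservation and correctly identify the outer-$x$ orientation as the delicate step; both arguments ultimately rest on the same informal claim there.
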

\begin{proof}
    For nesting we need to look at the specific nestings:
    \begin{description}
        \item[A y speculation nested in a x speculation]
            Cannot happen because $\comp{\cdot}$ has trivial speculation. Thus, a speculation of y
            cannot happen.
        \item[A x speculation nested in y speculation]
            Because $\comp{\cdot}$ is trivial, the speculation that is started is trivial. 
            Thus, the nesting is safe.
    \end{description}
\end{proof}

\subsection{Syntactic Independence}

Using our running example of $\specj$ and $\specr$ and their combination, we can see that the newly introduced $\retC$ instructions could interact with $\specr$ in the combination. That is why we have the second condition of well-formedness. However, what is with the other direction? The $\specr$ compiler $\complfenceR{\cdot}$ inserts only fence instructions into the program.

This is important since many of the compilers we defined do not have interactions with other speculative mechanisms. Thus, we do not need to do a full proof that they do not interact but instead can use this easier way to fulfil the condition

Before formalizing SI, we introduce some notation.
Given a compiler $\comp{\cdot}$, we denote by $\insertedInstrs{\comp{\cdot}}$ the set of instructions that the compiler inserts during compilation.
For instance, for a simple compiler  $\complfenceB{\cdot}$ that inserts $\kywd{spbarr}$ instructions after branch instruction to prevent speculation~\cite{S_sec_comp}, $\insertedInstrs{\complfenceB{\cdot}}$ is the set  $\{ \kywd{spbarr} \}$.
Given a semantics $\contract{}{}$, we denote by $\speculationInstrs{\contract{}{}}$ the set of instructions that trigger speculation in $\contract{}{}$.
For example, $\pjz{x}{l}$ for $\contract{}{\Bv}$.

\begin{definition}[Syntactic Independence]
    A compiler $\comp{\cdot}$  is  {syntactically independent} for a contract $\contract{}{}$, written $\SI{\contract{}{}}{\comp{\cdot}}{}$, iff $\insertedInstrs{\comp{\cdot}} \cap \speculationInstrs{\contract{}{}} = \emptyset \text{ and } \insertedInstrs{\comp{\cdot}} \cap \{\jzC{}, \jmpC, \storeC, \loadC, \leftarrow  \} = \emptyset$.
\end{definition}

\begin{corollary}\label{cor:syn-ind-ind}
    If
    \begin{enumerate}
        \item $\SI{\contract{}{x}}{\comp{\cdot}}{\contract{}{y}}$
    \end{enumerate}
    Then
    \begin{enumerate}
        \item $\IND{\contract{}{y}}{\comp{\cdot}}$
    \end{enumerate}
\end{corollary}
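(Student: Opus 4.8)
The plan is to unfold both definitions and reduce the statement to a simulation between the $\contract{}{y}$-semantics of $\comp{P}$ and that of $P$. Fixing a compiler with $\SI{\contract{}{x}}{\comp{\cdot}}{\contract{}{y}}$ and an arbitrary component $P$ with $\contract{}{y} \vdash P : \rss$, I must establish $\contract{}{y} \vdash \comp{P} : \rss$, i.e. that for every valid attacker $\ctxc{}$ (with $\vdash \ctxc{} : \com{atk}$) and every $\tauStack \in \behy{\SInit{\ctxc{}\hole{\comp{P}}}}$ every action on $\tauStack$ is tagged $\safeta$. Because compilation only rewrites the component, the \emph{same} attacker $\ctxc{}$ is a valid context for $P$ as well, so it suffices to relate each $\contract{}{y}$-execution of $\ctxc{}\hole{\comp{P}}$ to a $\contract{}{y}$-execution of $\ctxc{}\hole{P}$ and transport safety of actions back from $P$, which is guaranteed by $\contract{}{y} \vdash P : \rss$.

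First I would set up a simulation whose invariant relates a speculative state of $\ctxc{}\hole{\comp{P}}$ to one of $\ctxc{}\hole{P}$ by (i) taint-and-value equivalence on memory and on all registers other than $\pc$ and $\spR$, and (ii) a matching of the two speculative-state stacks, so that corresponding speculative instances carry the same speculation source and produce the same sequence of non-inserted observations. The instructions of $\insertedInstrs{\comp{\cdot}}$ are treated as stutter steps that are absorbed on the $P$-side without advancing the source $\pc$ past the next original instruction.

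The two clauses of Syntactic Independence discharge the two nontrivial obligations of this invariant. The clause $\insertedInstrs{\comp{\cdot}} \cap \speculationInstrs{\contract{}{y}} = \emptyset$ ensures that no inserted instruction can trigger a $\contract{}{y}$-transaction, so every $\startl{y}$/$\rollbl{y}$ pair in the $\comp{P}$-run is emitted by a non-inserted instruction and is mirrored one-for-one in the $P$-run; hence the simulation introduces \emph{no new speculative transactions} and the transaction structure is preserved. The clause $\insertedInstrs{\comp{\cdot}} \cap \{\jzC{}, \jmpC{}, \storeC{}, \loadC{}, \leftarrow\} = \emptyset$ ensures that every inserted instruction touches only $\pc$ and $\spR$ and emits no data-dependent microarchitectural action; by inspection of the non-speculative rules such an instruction emits either $\epsilon$ or an action whose taint is always $\safeta$ (the $\pbarrier$, call and return cases), and it leaves the taint of every other register unchanged. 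Thus an inserted step neither produces an unsafe action nor disturbs the taint-equivalence part of the invariant, which is exactly what allows it to be absorbed as a stutter step.

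Finally I would conclude by contradiction: suppose some action $\lambda^{\unta}$ occurred on a trace of $\ctxc{}\hole{\comp{P}}$. By the second clause it cannot have been emitted by an inserted instruction, so it is emitted by a non-inserted instruction; by the simulation and the preservation of transaction structure from the first clause, the same action with taint $\unta$ appears on the corresponding $\contract{}{y}$-trace of $\ctxc{}\hole{P}$, contradicting $\contract{}{y} \vdash P : \rss$. The main obstacle is making the simulation invariant robust in the presence of the speculation window: the extra steps taken by inserted instructions consume window budget and may cause a $\contract{}{y}$-rollback to fire earlier (relative to the $P$-run) than it otherwise would. I expect to handle this by showing that such a shift only truncates a speculative transaction to a prefix of its $P$-counterpart, and any prefix of a sequence of safe actions is still safe, so no unsafe action is ever created; verifying this truncation behaviour uniformly across all the speculation sources $\contract{}{y}$ under consideration is the delicate part, but it follows in each case from the fact, guaranteed by the two clauses, that the inserted instructions are transparent both to speculation and to taint.
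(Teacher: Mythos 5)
Your proposal is correct and takes essentially the same route as the paper, whose own proof is just a two-sentence informal version of your argument: since SI guarantees that inserted instructions can neither trigger $\contract{}{y}$-speculation nor emit data-dependent actions, any unsafe action in $\comp{P}$ would already have to occur in $P$, contradicting $\contract{}{y} \vdash P : \rss$. Your explicit simulation invariant and your treatment of the speculation-window truncation caused by inserted stutter steps make precise a subtlety the paper's proof silently glosses over.
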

\begin{proof}
    Assume we have a program $p$ with $\contract{}{y} \vdash p : \rss$.
    
    Because of $\SI{\comp{\cdot}}{\contract{}{y}}{}$, we know that the compiler does not insert any instructions related to speculation of ${\contract{}{y}}$. Thus, no additional speculation related to $\contract{}{y}$ is introduced by the compiler.
    Since the initial program was secure for $\contract{}{y}$ and no additional speculation is added during compilation, we have $\contract{}{y} \vdash \comp{p} : \rss$
\end{proof}

\subsection{Non-Trivial Independence Results}

Here, we provide the results for the Independence results $\ITab$ in the table above.

We show the Independence result only for the "smallest" contract where the proof is necessary. In all combined contracts the same proof can be used. For example, if we have proven $\IND{\contract{}{\Rv}}{\compretpJ{\cdot}}{}$ then we have the same proof for $\IND{\contract{}{\Bv + \Rv}}{\compretpJ{\cdot}}{}$ as well.

\begin{lemma}
$\IND{\contract{}{\Rv}}{\compretpJ{\cdot}}{}$
\end{lemma}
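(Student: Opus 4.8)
The plan is to prove Independence directly from \Cref{def:ind}: fix an arbitrary program $P$ with $\contract{}{\Rv} \vdash P : \rss$ and an arbitrary valid attacker $\ctxR{}$, and show that every action occurring in any trace $\tauStack \in \behR{\ctxR{}\hole{\compretpJ{P}}}$ is tagged $\safeta$. Since $\compretpJ{\cdot}$ leaves every instruction of $P$ unchanged except that it rewrites each indirect jump $\pjmp{e}$ into the trampoline $\callKywd\ \ldots; \pskip; \pbarrier; \pjmp{\ldots}$ together with an auxiliary function $\pmodret{\compretpJ{e}}; \pret$, the only new observations stem from these trampolines. The structural fact that drives the whole argument is that a trampoline is \emph{RSB-neutral}: its internal $\callKywd$ pushes exactly one entry — the address of the trap-loop $\pskip$ — onto the return-stack buffer, $\pmodret{}$ rewrites the top of the architectural return stack $B$ to the original target $L$ without touching the RSB, and the trampoline's $\pret$ then pops that single RSB entry again (\Cref{tr:v5-spec}).

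Concretely, I would set up a backward simulation between $\Rv$-executions of $\ctxR{}\hole{\compretpJ{P}}$ and of $\ctxR{}\hole{P}$ whose state relation (i) relates program states and register files away from the trampoline scratch code, and (ii) relates the two return-stack buffers up to the single transient entry that is live only while control is inside a trampoline. Under this relation each step of $\compretpJ{P}$ either matches a step of $P$ emitting the same observation, or is a trampoline step; after every trampoline the architectural program counter equals $L$ and the two RSBs again coincide, so $P$'s subsequent (possibly speculating) returns proceed in lock-step on both sides. It then suffices to verify that all trampoline steps emit only safe actions.

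For the trampoline steps there are two cases. The architectural execution (and any execution inherited inside an outer $\Rv$ transaction of $P$) runs $\callKywd$, $\pmodret{}$, $\pret$: the $\callKywd$ is internal and emits $\epsilon$, while $\pmodret{}$ and $\pret$ emit only return-flavoured actions, which $\contract{}{\Rv}$ tags $\safeta$ (the return taint step yields $\safeta$ and $\taintpc \glb \safeta = \safeta$, the reason $\Rv$ does not $\glb$ call/return observations with data). The delicate case is the \emph{new} return misprediction caused by the trampoline's own $\pret$: because $\pmodret{}$ has set $B$'s top to $L$ while the RSB top still holds the trap-loop label $l' \neq L$, \Cref{tr:v5-spec} fires and speculatively redirects execution to $l'$, i.e. into $\pskip; \pbarrier; \pjmp{}$; the $\pskip$ emits $\epsilon$ and the $\pbarrier$ sets the window to $0$ (\Cref{tr:v5-barr-spec}), forcing an immediate rollback (\Cref{tr:v5-rollback}). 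Thus this entire transaction contributes only $\epsilon$'s and a safe $\rollbackObsR$, uniformly whether the trampoline is reached non-speculatively or nested inside an outer $\Rv$ transaction.

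Combining the two cases, every action of $\compretpJ{P}$ is either an action of $P$ — safe since $\contract{}{\Rv} \vdash P : \rss$ and the RSB relation makes $P$'s returns speculate identically on both sides — or a safe trampoline action, yielding $\contract{}{\Rv} \vdash \compretpJ{P} : \rss$. I expect the main obstacle to be establishing the RSB invariant of the simulation: one must show that the transient trampoline entry is always pushed and popped in matched pairs (even when a trampoline is entered while already speculating), so that the two return-stack buffers reconverge after each trampoline and the retpoline therefore never perturbs $P$'s own, already safe, return speculations.
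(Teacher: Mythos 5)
Your proof is correct and follows essentially the same route as the paper's: the paper likewise observes that the trampoline's $\pret$ is the only new source of $\Rv$-speculation, that its misprediction lands on the RSB entry pointing at the $\pskip;\pbarrier;\pjmp{}$ trap loop, and that the barrier forces an immediate rollback emitting only safe observations. Your explicit backward-simulation scaffolding and RSB-neutrality invariant make rigorous what the paper only sketches, but the core argument is identical.
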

\begin{proof}
    By definition of Independence, we know that forall programs $p$ that $\contract{}{\Rv} \vdash p : \rss$.
    Thus, we are only interested in newly added $\retC$ instructions from the compiler $\compretpJ{\cdot}$.
    These are only added when an indirect jump is encountered and is replaced with the following sequence.
    \begin{align*}
    \compretpJ{l : \pjmp{e}} &= \trgJ{
                                         \begin{aligned}[t]
                                                    &
                                                    l': \pcall{retpo\textunderscore trg\textunderscore l}  ~~ \text{\Cref{tr:v5-call}}
                                                        \\
                                                        &\
                                                        l'_1: \pskip  ~~ \text{\Cref{tr:v5-nospec-eps}}
                                                            \\
                                                            &\
                                                            l'_2 : \barrierKywd
                                                            \\
                                                            &\
                                                            l'_3 : \pjmp{l'_1}
                                                            \\
                                                            &\
                                                            l'_4 : modret \compretpJ{e}  ~~ \text{\Cref{tr:v5-nospec-eps}}
                                                            \\
                                                            &\
                                                            l'_5: \pret  ~~ \text{\Cref{tr:v5-spec}}
                                            \end{aligned}
                                }
    \end{align*}

    Here we annotated, which rule is used from the $\semr$ semantics when this code sequence is executed in the interesting case when speculation is stated. Otherwise, this sequence is trivially secure.

    When speculation is started, the topmost entry of the $\Rsb$ is taken, which is $l'_1$, because the last $\callC$ instruction was the one in $l'$.
    The code sequence at $l'_1, l'_2$ and $l'_3$ is an infinite loop containing a barrier instruction. Thus, speculation is trapped and stopped at $l_2'$. This causes a rollback to occur. Most importantly, during speculation no memory is accessed and no observations apart from the rollback were made. Thus, the speculation is secure.

    Since this is the only additional source of speculation, we have that $\contract{}{\Rv} \vdash \compretpJ{p} : \rss$.
\end{proof}

Next, we provide the Independence result for the extended retpoline compiler $\compretpJF{\cdot}$.

\begin{lemma}
$\IND{\contract{}{\SLSv}}{\compretpJF{\cdot}}{}$
\end{lemma}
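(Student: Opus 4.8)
The plan is to mirror the structure of the immediately preceding Independence argument for $\IND{\contract{}{\Rv}}{\compretpJ{\cdot}}$. First I would unfold \Cref{def:ind}: fix an arbitrary program $p$ with $\contract{}{\SLSv} \vdash p : \rss$ and establish $\contract{}{\SLSv} \vdash \compretpJF{p} : \rss$. Note that Syntactic Independence does not apply here, since $\insertedInstrs{\compretpJF{\cdot}}$ contains a $\retC \in \speculationInstrs{\contract{}{\SLSv}}$, which is exactly why a direct argument is needed. Because $\speculationInstrs{\contract{}{\SLSv}} = \{\pret\}$, the only way compilation can introduce a \emph{new} unsafe speculative action is through a $\retC$ instruction that $\compretpJF{\cdot}$ inserts, or through a change to the speculative behaviour of an existing return; every other inserted instruction (the trampoline's $\callC$, $\pskip$, $\barrierKywd$, $\jmpC$, and return-address manipulation) is not an SLS speculation source and so cannot by itself begin a new transaction.

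The core of the argument concerns the single $\retC$ that $\compretpJF{\cdot}$ injects per indirect jump, at label $l'_5$ of the trampoline, which is now immediately followed by a $\barrierKywd$ at $l'_6$. I would analyse the SLS transaction started at this return: by \Cref{tr:sls-spec} the return is speculatively bypassed and execution continues at $l'_6$; the barrier then fires via \Cref{tr:sls-barr-spec}, setting the speculation window to $0$; and \Cref{tr:sls-rollback} immediately discards the transaction, emitting only a rollback action. By \Cref{tr:ac-rel-rlb} this action is safe, so the injected return contributes no unsafe observation. This is precisely the repair over $\compretpJ{\cdot}$, whose trampoline return at $l''$ was \emph{not} fence-guarded and hence could be straight-line speculated into whatever code follows it in the layout.

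It then remains to argue that the original returns of $p$ do not acquire new leaks under compilation. Here I would use that $\compretpJF{\cdot}$ leaves return instructions unchanged ($\compretpJF{\pret} = \pret$) and preserves the data- and taint-flow of every non-jump instruction, so an SLS transaction started at an original return executes the compiled counterparts of exactly the instructions that were speculated over in $p$. Since $\contract{}{\SLSv} \vdash p : \rss$ already guarantees these produce only safe actions, the compiled run does too. Combining the two cases yields that every action of every trace in $\behSLS{\compretpJF{p}}$ is safe, i.e.\ $\contract{}{\SLSv} \vdash \compretpJF{p} : \rss$.

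The main obstacle I anticipate lies in the second case rather than the trampoline: I must rule out that expanding indirect jumps into trampolines changes the instruction sequence immediately following any original return, since SLS speculates precisely over the fall-through after a bypassed return. The argument therefore hinges on a layout-preservation invariant---that the compiler only substitutes indirect-jump bodies and never perturbs the successor of an existing $\retC$---which, together with the value and taint relations established in the compiler's backtranslation section, lets me transport the safety of $p$'s speculative returns to $\compretpJF{p}$.
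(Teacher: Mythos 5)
Your proposal is correct and follows essentially the same route as the paper: the paper's proof likewise reduces to the single $\retC$ injected at $l'_5$ of the trampoline and observes that the barrier at $l'_6$ immediately terminates the SLS transaction started by \Cref{tr:sls-spec}, so the added speculation is safe. Your additional discussion of the original returns of $p$ and the layout-preservation concern is more careful than the paper, which treats that case as immediate from the hypothesis $\contract{}{\SLSv} \vdash p : \rss$.
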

\begin{proof}
    By definition of Independence, we know that for all programs $p$ that $\contract{}{\SLSv} \vdash p : \rss$.
    Thus, we are only interested in newly added $\retC$ instructions from the compiler $\compretpJF{\cdot}$.
    These are only added when an indirect jump is encountered and is replaced with the following sequence.

    \begin{align*}
    \compretpJ{l : \pjmp{e}} &= \trgJ{
                                         \begin{aligned}[t]
                                                    &
                                                    l': \pcall{retpo\textunderscore trg\textunderscore l}  ~~ \text{\Cref{tr:sls-nospec-act}}
                                                        \\
                                                        &\
                                                        l'_1: \pskip  ~~ \text{\Cref{tr:sls-nospec-eps}}
                                                            \\
                                                            &\
                                                            l'_2 : \barrierKywd
                                                            \\
                                                            &\
                                                            l'_3 : \pjmp{l'_1}
                                                            \\
                                                            &\
                                                            l'_4 : modret \compretpJ{e}  ~~ \text{\Cref{tr:sls-nospec-eps}}
                                                            \\
                                                            &\
                                                            l'_5: \pret  ~~ \text{\Cref{tr:sls-spec}}
                                                            \\
                                                            &\
                                                            l'_6 : \barrierKywd
                                            \end{aligned}
                                }
    \end{align*}

When speculation is started with \Cref{tr:sls-spec}, it is immediately stopped by the additional barrier in $l'_6$.
Thus, the additional speculation is secure.
\end{proof}

\subsection{Trivial Speculation results}
Here, we describe the compilers that satisfy \Thmref{def:trapped-spec} and together with \Thmref{lem:trapped-imp-safe-nest} we get one of the preconditions for our lifting theorem \Thmref{cor:lift-comp-pres}.

\begin{table}[h]
    \centering
    \begin{tabular}{c|c}
    \toprule
        Compiler &  Fulfills \Thmref{def:trapped-spec} \\
        \midrule
        $\complfenceS{\cdot}$ & \greencheck \\
        $\complfenceR{\cdot}$ & \greencheck \\
        $\complfenceSLS{\cdot}$ & \greencheck\\
        $\compretpJ{\cdot}$ & \greencheck \\
        $\compretpJF{\cdot}$ & \greencheck \\
        $\compretpR{\cdot}$ & \greencheck \\
        $\complfenceB{\cdot}$ & \greencheck \\
        $\compuslhB{\cdot}$ & \redxmark \\
        $\compsslhB{\cdot}$ & \redxmark \\
        \bottomrule
    \end{tabular}
\end{table}

These follow from the definition of the compilers.
 \section{Speculative Safety Combined}

We will write $\wfcn{\contract{}{xy}}{n}$ to refer to the n-th fact of \Thmref{def:wellformed}

From these conditions, we can generally derive that SS overapproximates SNI for a general well-formed combination (\Thmref{thm:Comb-ss-impl-sni}).

\begin{theorem}[\sstext{} implies SNI]\label{thm:Comb-ss-impl-sni}
If
\begin{enumerate}
    \item $\contract{}{xy}\vdash P : \ss $
\end{enumerate}
Then
\begin{enumerate}[label=\Roman*]
    \item $\contract{}{xy}\vdash P : \sni $
\end{enumerate}

\end{theorem}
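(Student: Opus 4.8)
The plan is to replay the structure of the per-semantics proofs (e.g. \Cref{thm:v4:ss-impl-sni} and \Cref{thm:sls-ss-impl-sni}), but to replace their bespoke inductive simulation lemmas with the Relation Preservation clause of well-formedness (\Cref{def:wfc2}), which is exactly a big-step ``safe observations preserve safe-equivalence'' statement phrased once and for all for the combined semantics. Throughout I assume $\wfc{\contract{}{xy}}$, as the surrounding discussion intends. First I would argue by contradiction: assume $\contract{}{xy}\vdash P : \sstext$ but $\contract{}{xy}\nvdash P : \sni$. Unfolding \Cref{def:sni}, there is a program $P'$ with $P\loweq P'$ and traces $\tra{_1}\in\contract{}{xy}(\SInit{P})$ and $\tra{_2}\in\contract{}{xy}(\SInit{P'})$ such that $\nspecProject{\tra{_1}}=\nspecProject{\tra{_2}}$ yet $\tra{_1}\neq\tra{_2}$. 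Taking the maximal common prefix, I obtain executions
\[
\SInit{P}\ \bigspecarrowxy{\tauStack}\ \Sigmaxyt{\dagger}\ \specarrowxy{\alpha^{\taint}}\ \cdots,
\qquad
\SInit{P'}\ \bigspecarrowxy{\tauStack}\ \Sigmaxyt{'}\ \specarrowxy{{\tau'}^{\taint'}}\ \cdots
\]
with $\alpha^{\taint}\neq{\tau'}^{\taint'}$. From $\contract{}{xy}\vdash P : \sstext$, the whole trace $\tra{_1}$ is safe; in particular $\safe{\tauStack}$ and $\safe{\tauStack\cdot\alpha^{\taint}}$.

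Next I would lift low-equivalence to safe-equivalence of the initial states, $\SInit{P}\relsa\SInit{P'}$, via a combined-semantics analogue of \Cref{lemma:v4-low-equivalent-programs-low-equivalent-states}. The combined initial state is the join of the component initial states (\Cref{tr:xy-init}); since $\relsa$ is componentwise on memories and registers (\Cref{def:safeeq}) and the join of two $\relsa$-related instances is again $\relsa$-related, the claim follows from $\dom{\MemL{}}=\dom{\MemL{'}}$ and the agreement on public memory guaranteed by $P\loweq P'$ (\Cref{def:loweq}). Applying Relation Preservation (\Cref{def:wfc2}) to $\SInit{P}\bigspecarrowxy{\tauStack}\Sigmaxyt{\dagger}$ with $\SInit{P}\relsa\SInit{P'}$ and $\safe{\tauStack}$ yields $\SInit{P'}\bigspecarrowxy{\tauStack}\Sigmaxyt{''}$ and $\Sigmaxyt{\dagger}\relsa\Sigmaxyt{''}$; by Confluence, i.e. determinism of $\specarrowxy{}$ ($\wfcn{\contract{}{xy}}{1}$), the reached state is unique, so $\Sigmaxyt{''}=\Sigmaxyt{'}$ and therefore $\Sigmaxyt{\dagger}\relsa\Sigmaxyt{'}$.

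To close, I would apply Relation Preservation once more to the one-action-longer safe prefix $\SInit{P}\bigspecarrowxy{\tauStack\cdot\alpha^{\taint}}\Sigmaxyt{\dagger\dagger}$ (using $\safe{\tauStack\cdot\alpha^{\taint}}$), obtaining $\SInit{P'}\bigspecarrowxy{\tauStack\cdot\alpha^{\taint}}\Sigmaxyt{'''}$. But the $P'$-execution producing $\tra{_2}$ continues from $\Sigmaxyt{'}$ by $\specarrowxy{{\tau'}^{\taint'}}$, and Confluence forces the one-step extension of $\tauStack$ from $\Sigmaxyt{'}$ to be unique. Hence $\alpha^{\taint}={\tau'}^{\taint'}$, contradicting $\alpha^{\taint}\neq{\tau'}^{\taint'}$, which establishes $\contract{}{xy}\vdash P : \sni$.

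I expect the main obstacle to be conceptual rather than computational: the whole argument hinges on Relation Preservation being available for the \emph{combined} semantics, which is precisely why \Cref{def:wfc2} was introduced as a third well-formedness condition instead of re-proving a simulation lemma by induction as in each component. The remark following \Cref{def:wfc2}---that Relation Preservation can be derived whenever each component semantics enjoys it---is what makes this hypothesis dischargeable for every combination in \Cref{fig:instantiate}. The more delicate bookkeeping point is the initial-state lemma: one must verify that $\relsa$, which on combined states is defined through the join/projection machinery, is faithfully reconstructed from the componentwise low-equivalence, so that a single uniform contradiction argument replaces the per-semantics case analyses.
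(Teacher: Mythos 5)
Your proposal is correct and follows essentially the same route as the paper's proof: contradiction on a maximal common prefix, relating the initial states from low-equivalence, propagating safe-equivalence along the safe prefix via Relation Preservation, and using Confluence to force the allegedly differing observations to coincide. The only cosmetic difference is that you invoke the big-step form of Relation Preservation on the prefix extended by one action, whereas the paper applies the big-step lemma to the common prefix and then the small-step Relation Preservation clause to the single differing step; these are interchangeable.
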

\begin{proof}
Let $P$ be an arbitrary program such that $\contract{}{xy}\vdash P : \ss$.
Proof by contradiction. 

Assume that $\contract{}{xy} \vdash P : \sni$ does not hold. That is, there is another program $P'$ and traces $\tra{_1} \in \ \behav{\SInit{P}} $ and  $\tra{_2} \in \behav{\SInit{P'} }$ such that $P \loweq P'$, $\nspecProject{\tra{_1}} = \nspecProject{\tra{_2}}$, and $\tra{_1} \neq \tra{_2}$.

Since $\trgxy{\tra{_1}} \in \trgxy{ \behav{\SInit{P}} }$ we have $\trgxy{\amTracevxy{P}{\tauStackT}}$ and by \Cref{tr:xy-trace} we know that there exists $\Sigmaxy^{F}$ such that $\trgxy{\Sigmaxy^0(P) \bigspecarrowxy{\tra{_1}}\ \Sigmaxy^{F}}$. Similarly, we get  $\trgxy{\Sigmaxy^0(P') \bigspecarrowxy{\tra{_2}}\ \Sigmaxy^{F'}}$.

Combined with the fact that $\tra{_1} \neq \tra{_2}$, it follows that there exists speculative states $\Sigmaxyt{\dagger}, \Sigmaxyt{\dagger\dagger}, \Sigmaxyt{'}, \Sigmaxyt{''}$ and sequences of observations $\trgxy{\tauStack, \tauStack_{end}, \tauStack'_{end}}, \trgxy{\aca{^\taint}, {\alpha'}^{\taint'} } $ such that $\trgxy{\aca{^{\taint}}} \neq \trgxy{ {\alpha'}^{\taint'}}$  and:

\begin{align*}
\trgxy{\Sigmaxy^0(P)} & \trgxy{\bigspecarrowxy{\tauStack}} \Sigmaxyt{\dagger} \specarrowxy{\aca{^{\taint}}} \Sigmaxyt{\dagger\dagger} \bigspecarrowxy{\tauStack_{end}} \trgxy{\Sigmaxy^{F}} 
\\
\trgxy{\Sigmaxy^0(P')} & \bigspecarrowxy{\tauStack} \Sigmaxyt{'} \specarrowxy{{\alpha'}^{\taint'}} \Sigmaxyt{''} \bigspecarrowxy{\tauStack'_{end}} \trgxy{\Sigmaxy^{F'}}
\end{align*}

From $\contract{}{xy} \vdash P : \ss $ we get that $\safe{\tra{_1}}$.

From \Thmref{lemma:comb-low-equivalent-programs-low-equivalent-states} we get ${\Sigmaxy^0(P)} \relsa {\Sigmaxy^0(P')}$.

Using \Thmref{lemma:xy-bigspecarrow-preserves-safe} on $\Sigmaxy^0(P) \bigspecarrowxy{\tauStack} \Sigmaxyt{\dagger}$ we get  $\Sigmaxy^0(P') \bigspecarrowxy{\tauStack} \Sigmaxyt{'''}$ and $\Sigmaxyt{\dagger} \relsa \Sigmaxyt{'''}$.

By $\wfcn{\contract{}{xy}}{1}$ (Confluence) we know that $\Sigmaxyt{'} = \Sigmaxyt{'''}$ and thus we have $\Sigmaxyt{\dagger} \relsa \Sigmaxyt{''}$.

We know apply $\wfcn{\contract{}{xy}}{3}$ (Relation Preservation) on $\Sigmaxyt{\dagger} \specarrowxy{\aca{^{\taint}}} \Sigmaxyt{\dagger\dagger}$ and $\Sigmaxyt{\dagger} \relsa \Sigmaxyt{'}$ and get $\Sigmaxyt{'} \specarrowxy{\aca{^{\taint}}}  \Sigmaxyt{''''}$.

Again, by Confluence we know that $\Sigmaxyt{''} = \Sigmaxyt{''''}$.

However, now we have that the same observation $\aca{^{\taint}}$ is made in both executions contradicting our initial assumption

\end{proof}

From this we can follow that RSS implies RSNI as well.

\begin{theorem}[RSS implies RSNI]\label{thm:Comb-rss-impl-rsni}
If
    \begin{enumerate}
        \item $\contract{}{xy}\vdash P : \rss$
    \end{enumerate}
Then
    \begin{enumerate}[label=\Roman*]
        \item $\contract{}{xy}\vdash P : \rsni $
    \end{enumerate}

\end{theorem}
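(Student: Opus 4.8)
The plan is to reduce this robust, context-quantified statement to the whole-program statement \Thmref{thm:Comb-ss-impl-sni} (\sstext{} implies SNI for compositions), which has just been established. The only genuine content of that theorem lives in the combination machinery—Confluence, Relation Preservation, and the safe-equivalence preservation lemmas—so the present statement should follow by simply peeling off the quantifier over attackers that distinguishes the robust notions $\rss$ and $\rsni$ from their non-robust counterparts $\ss$ and $\sni$.

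Concretely, I would first unfold the definition of $\rss$ (\Cref{def:rdss-contract}): from the hypothesis $\contract{}{xy}\vdash P : \rss$ we obtain that for every context $\ctxc{}$ with $\vdash\ctxc{}:\com{atk}$, the whole program $\ctxc{}\hole{P}$ satisfies $\contract{}{xy}\vdash\ctxc{}\hole{P} : \ss$. Next I would fix an arbitrary valid attacker $\ctxc{}$, and apply \Thmref{thm:Comb-ss-impl-sni} to the whole program $\ctxc{}\hole{P}$; this yields $\contract{}{xy}\vdash\ctxc{}\hole{P} : \sni$. Finally, since the attacker $\ctxc{}$ was arbitrary among valid ones, I would re-fold the definition of $\rsni$ (\Cref{def:rsni}) to conclude $\contract{}{xy}\vdash P : \rsni$, as required.

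There is essentially no obstacle here: the argument is a one-line lifting that transports the per-whole-program implication through the universal quantifier over valid attackers, exactly as the analogous single-semantics results (e.g. \Thmref{thm:v4:ss-impl-sni} feeding into the corresponding robust statement) are structured. The substantive difficulty has already been discharged inside \Thmref{thm:Comb-ss-impl-sni}, whose proof is the place where the well-formedness conditions ($\wfcn{\contract{}{xy}}{1}$ and $\wfcn{\contract{}{xy}}{3}$) and \Thmref{lemma:xy-bigspecarrow-preserves-safe} do the real work; the only thing I must be careful about is that the same fixed attacker $\ctxc{}$ is used on both sides of the implication so that the $\ss$ hypothesis supplied to \Thmref{thm:Comb-ss-impl-sni} concerns precisely the whole program whose $\sni$ I then assert.
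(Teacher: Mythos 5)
Your proposal is correct and matches the paper's own proof: both arguments fix an arbitrary valid attacker $\ctxc{}$, instantiate the $\rss$ hypothesis to obtain $\contract{}{xy}\vdash\ctxc{}\hole{P} : \ss$, and then apply \Thmref{thm:Comb-ss-impl-sni} to conclude $\sni$ for that whole program, discharging $\rsni$ by generality of the attacker. The only cosmetic difference is that the paper starts by unfolding the goal $\rsni$ rather than the hypothesis $\rss$; the substance is identical.
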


\begin{proof}
    Unfolding the definition of $\contract{}{xy} \vdash \com{P} : \rsni$, we get an attacker $A$ such that  $ \ctxc{} : \com{atk}$ and we need to prove $\vdash \com{\ctxc{}\hole{P}} : \sni$.
    Using that attacker $\ctxc{}$ on $\contract{}{xy}\vdash P : \rss$ we get $\contract{}{xy} \vdash\com{\ctxc{}\hole{P}} : \ss$.

    Now we can apply \Thmref{thm:Comb-ss-impl-sni} on $\contract{}{xy} \vdash\com{\ctxc{}\hole{P}} : \ss$ and get $\contract{}{xy} \vdash \com{P} : \sni$.

    Thus, we are finished.

\end{proof}

And we can derive the following Corollary.
\begin{corollary}
If 
\begin{enumerate}
    \item $\wfc{\contract{}{xy}}$
    \item $\contract{}{xy}\vdash P : \ss$
\end{enumerate}
Then
\begin{enumerate}[label=\Roman*]
    \item $\contract{}{x} \vdash P : \ss$ and 
    \item $\contract{}{y}\vdash P : \ss$
\end{enumerate}
\end{corollary}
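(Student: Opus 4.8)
The plan is to reduce this corollary to the Projection Preservation clause of $\wfc{\contract{}{xy}}$ together with the fact that Speculative Safety (\Cref{def:ss-contract}) is a purely \emph{per-action} safety condition. Unfolding $\contract{}{x} \vdash P : \ss$, the goal is to show that every action $\acac{^\sigma}$ occurring in every trace $\tauStack \in \behx{P}$ carries taint $\safeta$, and symmetrically for $\contract{}{y}$. The central observation is that each component behaviour is recovered from the combined behaviour by projection, so every component trace is merely a sub-trace of some combined trace, and projection never relabels the actions it retains.

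First I would fix an arbitrary trace $\tauStack \in \behx{P}$. From $\wfc{\contract{}{xy}}$ I invoke Projection Preservation, which gives $\behx{P} = \specProjectComp{\behxy{(P)}}{y}$; hence there is a combined trace $\tauStack' \in \behxy{(P)}$ with $\tauStack = \specProjectComp{\tauStack'}{y}$. Next, from the hypothesis $\contract{}{xy} \vdash P : \ss$ I obtain $\safe{\tauStack'}$, i.e.\ every action of $\tauStack'$ is tagged $\safeta$. Finally I appeal to the definition of $\specProjectComp{\cdot}{y}$: it only deletes the sub-sequences enclosed between $\startl{y}$ and $\rollbl{y}$ and leaves every remaining action, together with its taint, untouched. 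Consequently every action of $\tauStack$ already occurs in $\tauStack'$ and is therefore $\safeta$, which establishes $\contract{}{x} \vdash P : \ss$. The argument for $\contract{}{y} \vdash P : \ss$ is completely symmetric, using instead $\behy{P} = \specProjectComp{\behxy{(P)}}{x}$ and projecting away the $x$-transactions.

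I expect no serious obstacle: the only point requiring care is that $\specProjectComp{\cdot}{y}$ is removal-only and taint-preserving, i.e.\ that it never introduces a fresh (possibly unsafe) action and never relabels a retained one, but this is immediate from its syntactic definition as deletion of $\startl{y}$--$\rollbl{y}$ delimited sub-sequences. In effect the corollary is Projection Preservation read at the level of taints: since $\ss$ is a universally quantified, per-action safety condition, it can only be \emph{preserved or weakened} by discarding actions, so safety under the richer combined semantics descends to each of its projections.
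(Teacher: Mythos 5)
Your proposal is correct and follows essentially the same route as the paper's proof: both arguments combine the Projection Preservation clause of well-formedness with the observation that the projections only delete actions (and never relabel their taints), so safety of the combined trace descends to its projections. The only cosmetic difference is the direction of quantification — you start from an arbitrary component trace and pull it back to a combined one, while the paper starts from an arbitrary combined trace and pushes its projections forward into the component behaviours — but since Projection Preservation is an equality of behaviour sets, the two formulations are interchangeable.
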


\begin{proof}

Unfolding $\contract{}{xy} \vdash P : \ss $ we have:
$\forall \trac{^\sigma}\in\behavc{P}\ldotp \forall \acac{^\sigma}\in\trac{^\sigma}\ldotp \com{\sigma}\equiv\com{\safeta}$.
Choose an arbitrary $\trac{^\sigma}\in\behavc{P}$. We know it is $\safe{\trac{^\sigma}}$.

By $\wfcn{\contract{}{xy}}{2}$ (Projection Preservation), we know that $\behx{p} = \specProjectxyx{\behxy{(p)}}$ and $\behy{p} = \specProjectxyy{\behxy{(p)}}$.

Furthermore, by definition of the projections $\specProjectxyx{}$ and $\specProjectxyy{}$ they only delete elements of a trace but never add observation to it.

Since $\safe{\trac{^\sigma}}$, we know by definition that $\safe{\specProjectxyx{\trac{^\sigma}}}$ and $\safe{\specProjectxyy{\trac{^\sigma}}}$.

By Projection Preservation we know that $\specProjectxyx{\trac{^\sigma}} \in \behx{P}$ and  $\specProjectxyy{\trac{^\sigma}} \in \behy{P}$.

Since the trace was arbitrarily chosen, we are finished.
\end{proof}

Furthermore, if a compiler satisfies a combined contract $\contract{}{xy}$, than it also satisfies the two component contracts $\contract{}{x}$ and $\contract{}{y}$, which follows from \Thmref{cor:weak-contract-sat} and Projection Preservation.

\begin{corollary}[Combined Compiler Contract Security (RSNIP)]
    \hfill

    \begin{enumerate}
        \item $\wfc{\contract{}{xy}}$ and 
        \item $\contract{}{xy} \vdash  \comp{\cdot} : \rsnip$
    \end{enumerate}
    then 
    \begin{enumerate}[label=\Roman*]
        \item $\contract{}{x} \vdash  \comp{\cdot} : \rsnip$ and
        \item $\contract{}{y} \vdash  \comp{\cdot} : \rsnip$
    \end{enumerate}
\end{corollary}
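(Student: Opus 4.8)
The plan is to reduce the claim to the monotonicity of $\rsnip$ along the leakage order (\Thmref{cor:weak-contract-sat}), so that the only genuine work is to show that each component semantics sits below the composed one in that order, namely $\contract{}{x} \sqsubseteq \contract{}{xy}$ and $\contract{}{y} \sqsubseteq \contract{}{xy}$. Since the argument is fully symmetric in $x$ and $y$, I would establish the ordering for $x$ and then repeat it verbatim for $y$.

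First I would unfold the leakage order: $\contract{}{x} \sqsubseteq \contract{}{xy}$ means that for any two initial configurations (equivalently, programs $P, P'$), if $\contract{}{x}$ already distinguishes them, i.e. $\behx{P} \neq \behx{P'}$, then $\contract{}{xy}$ distinguishes them as well, i.e. $\behxy{(P)} \neq \behxy{(P')}$. The key ingredient is the Projection Preservation clause of $\wfc{\contract{}{xy}}$, which gives $\behx{P} = \specProjectxyx{\behxy{(P)}}$ for every program. Reading this contrapositively, if $\behxy{(P)} = \behxy{(P')}$ then, applying the total deterministic projection $\specProjectxyx{\cdot}$ to both sides, $\behx{P} = \specProjectxyx{\behxy{(P)}} = \specProjectxyx{\behxy{(P')}} = \behx{P'}$. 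Hence distinct $x$-behaviours force distinct $xy$-behaviours, which is exactly $\contract{}{x} \sqsubseteq \contract{}{xy}$; the identical computation with $\specProjectxyy{\cdot}$ yields $\contract{}{y} \sqsubseteq \contract{}{xy}$. With the orderings in hand I would conclude by instantiating \Thmref{cor:weak-contract-sat} twice, once with the weaker semantics taken to be $\contract{}{x}$ and once $\contract{}{y}$, against the stronger semantics $\contract{}{xy}$: from the hypothesis $\contract{}{xy} \vdash \comp{\cdot} : \rsnip$ together with $\contract{}{x} \sqsubseteq \contract{}{xy}$ (resp. $\contract{}{y} \sqsubseteq \contract{}{xy}$) the result delivers $\contract{}{x} \vdash \comp{\cdot} : \rsnip$ (resp. $\contract{}{y} \vdash \comp{\cdot} : \rsnip$), as required.

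The main subtlety, and where I expect to have to be careful rather than where I expect genuine difficulty, is the direction of the leakage order together with the strictness convention: the composed semantics is the \emph{stronger}, more-leaking one, so Projection Preservation must be used to push distinctions \emph{up} from a component to the composition, never down. I would also flag that \Thmref{cor:weak-contract-sat} as stated uses the strict order $\sqsubset$, whereas Projection Preservation only supplies the non-strict $\sqsubseteq$; this gap is harmless, since in the degenerate case $\contract{}{x} = \contract{}{xy}$ the conclusion is immediate from the hypothesis and otherwise the strict version applies. The clean route is therefore to invoke the non-strict monotonicity statement \Cref{cor:weak-contract-paper}, which is already phrased with $\sqsubseteq$, and no unfolding of the attacker quantification in $\rsnip$ is needed because that monotonicity result has already internalised it.
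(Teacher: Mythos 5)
Your proposal is correct and follows essentially the same route as the paper: derive $\contract{}{x} \sqsubseteq \contract{}{xy}$ and $\contract{}{y} \sqsubseteq \contract{}{xy}$ from the Projection Preservation clause of well-formedness, then conclude by two applications of the monotonicity of $\rsnip$ along the leakage order. Your extra care in spelling out the contrapositive argument for the ordering and the strict-versus-non-strict distinction only makes explicit what the paper leaves implicit.
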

\begin{proof}
    
    From $\wfc{\contract{}{xy}}$ (Projection Preservation) we get $\contract{}{x} \sqsubseteq \contract{}{xy}$ and $\contract{}{y} \sqsubseteq \contract{}{xy}$.

    We can now apply \Thmref{cor:weak-contract-sat} on $\contract{}{xy} \vdash  \comp{\cdot} : \rssp$ and $\contract{}{x} \sqsubseteq \contract{}{xy}$ and $\contract{}{y} \sqsubseteq \contract{}{xy}$ respectively.

    Thus, we are done.
\end{proof}

We can state it for RSSP as well:
\begin{corollary}[Combined Compiler Contract Security (RSSP)]
    \hfill

    \begin{enumerate}
        \item $\wfc{\contract{}{xy}}$ and 
        \item  $\contract{}{xy} \vdash  \comp{\cdot} : \rssp$
    \end{enumerate}
    then 
    \begin{enumerate}[label=\Roman*]
        \item $\contract{}{x} \vdash  \comp{\cdot} : \rssp$ and
        \item $\contract{}{y} \vdash  \comp{\cdot} : \rssp$
    \end{enumerate}
\end{corollary}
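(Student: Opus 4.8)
The plan is to mirror the proof of the \rsnip{} version stated immediately above, but to route the argument through Projection Preservation rather than through a leakage-ordering lemma: for a safety property the appropriate tool is not the ordering $\sqsubseteq$ (which concerns distinguishability of configurations) but the fact that the component projections only delete events from a trace while preserving their taints. First I would unfold \Cref{def:rdsspc}. To establish $\contract{}{x} \vdash \comp{\cdot} : \rssp$ it suffices to fix an arbitrary source program $\src{P}$ with $\contract{}{NS} \vdash \src{P} : \rss$ and derive $\contract{}{x} \vdash \comp{\src{P}} : \rss$, and symmetrically for $\contract{}{y}$; so both conclusions reduce to the same intermediate fact about $\comp{\src{P}}$.

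Applying the hypothesis $\contract{}{xy} \vdash \comp{\cdot} : \rssp$ to this same $\src{P}$ immediately yields $\contract{}{xy} \vdash \comp{\src{P}} : \rss$. It then remains to descend from the composed semantics to each component. Unfolding \Cref{def:rdss-contract}, for every valid attacker $\ctxc{}$ with $\vdash \ctxc{} : \com{atk}$ we have $\contract{}{xy} \vdash \ctxc{}\hole{\comp{\src{P}}} : \sstext$. I would then invoke the corollary stating that, under $\wfc{\contract{}{xy}}$, $\sstext$ of a program for $\contract{}{xy}$ implies $\sstext$ for $\contract{}{x}$ and for $\contract{}{y}$, obtaining $\contract{}{x} \vdash \ctxc{}\hole{\comp{\src{P}}} : \sstext$ and $\contract{}{y} \vdash \ctxc{}\hole{\comp{\src{P}}} : \sstext$. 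Since $\ctxc{}$ was an arbitrary valid attacker, re-quantifying over all such $\ctxc{}$ repackages these into $\contract{}{x} \vdash \comp{\src{P}} : \rss$ and $\contract{}{y} \vdash \comp{\src{P}} : \rss$; and because $\src{P}$ was arbitrary, this is precisely $\contract{}{x} \vdash \comp{\cdot} : \rssp$ and $\contract{}{y} \vdash \comp{\cdot} : \rssp$.

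The only non-bookkeeping step, and hence the main obstacle, is the descent $\contract{}{xy} \vdash Q : \sstext \Rightarrow \contract{}{x} \vdash Q : \sstext$. It rests on the Projection Preservation condition of \Cref{def:wellformed}, namely $\behx{Q} = \specProjectxyx{\behxy{(Q)}}$, combined with the observation that $\specProjectxyx{}$ merely erases the $\contract{}{y}$-transactions from a trace and never inserts an event nor alters a taint. Consequently every action occurring in $\behx{Q}$ already occurs in $\behxy{(Q)}$ with the same taint, so if all actions of every $\tauStack \in \behxy{(Q)}$ are tagged $\safeta$ then so are all actions of every trace in $\behx{Q}$, i.e. $\safe{\specProjectxyx{\tauStack}}$. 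The real work therefore lies in confirming this taint-preserving behaviour of $\specProjectxyx{}$ and $\specProjectxyy{}$ uniformly across the base semantics and their combinations in \Cref{fig:instantiate}; once that is in hand the \rssp{} statement is a one-line composition of the hypothesis with this projection fact, exactly parallel to the \rsnip{} case, and a symmetric argument delivers the $\contract{}{y}$ conclusion.
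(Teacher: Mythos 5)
Your proposal is correct and follows essentially the same route as the paper's own proof: both arguments descend from $\behxy{\comp{P}}$ to the component behaviours via the Projection Preservation clause of $\wfc{\contract{}{xy}}$, using the fact that $\specProjectxyx{}$ and $\specProjectxyy{}$ only delete events (and preserve taints), so safety of all actions in the combined trace set carries over to each projected trace set. The only cosmetic difference is that you route through the previously stated corollary ($\sstext$ for $\contract{}{xy}$ implies $\sstext$ for each component) and spell out the attacker quantification, whereas the paper inlines the same projection argument directly.
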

\begin{proof}

    Assume a program p.
    By $\contract{}{xy} \vdash  \comp{\cdot} : \rssp$ we know that there are only safe actions in $\behxy{\comp{p}}$.

    From $\wfc{\contract{}{xy}}$ (Projection Preservation) we get $\behx{\comp{p}} = \specProjectxyx{\behxy{\comp{p}}}$ and similar $\behy{\comp{p}} = \specProjectxyx{\behxy{\comp{p}}}$.

    Thus, there are only safe actions in $\behx{\comp{p}}$ and $\behy{\comp{p}}$.

    Since p was chosen arbitrarily, we are done.
\end{proof}

Dually, if a program is insecure w.r.t. one of the components contracts, then it is insecure w.r.t. the composition contract as well.

\begin{corollary}[Combined Compiler Contract Security Negative (RSNIP)]
    \hfill 

    \begin{enumerate}
        \item $\wfc{\contract{}{xy}}$ and 
        \item  $\contract{}{x} \nvdash  \comp{\cdot} : \rsnip$ or $\contract{}{y} \nvdash  \comp{\cdot} : \rsnip$
    \end{enumerate}
    Then 
    \begin{enumerate}[label=\Roman*]
        \item $\contract{}{xy} \nvdash  \comp{\cdot} : \rsnip$
    \end{enumerate}
\end{corollary}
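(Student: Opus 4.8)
The plan is to prove this corollary as the contrapositive dual of the positive \emph{Combined Compiler Contract Security (RSNIP)} result immediately preceding it, routing the argument through the leakage-ordering monotonicity of \rsnip{} established in \Cref{cor:weak-contract-unsat}. The key observation is that the two ingredients we need are already isolated: Projection Preservation (part of $\wfc{\contract{}{xy}}$, see \Cref{def:wellformed}) supplies the leakage inequalities $\contract{}{x} \sqsubseteq \contract{}{xy}$ and $\contract{}{y} \sqsubseteq \contract{}{xy}$, after which \Cref{cor:weak-contract-unsat} transports insecurity \emph{upward} along the order.

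First I would derive the orderings $\contract{}{x} \sqsubseteq \contract{}{xy}$ and $\contract{}{y} \sqsubseteq \contract{}{xy}$ from Projection Preservation. Recall from \Cref{ssec:leakage-ordering} that $\contract{}{1} \sqsubseteq \contract{}{2}$ holds iff any two initial configurations producing distinct $\contract{}{1}$-traces also produce distinct $\contract{}{2}$-traces. By Projection Preservation we have $\behx{P} = \specProjectxyx{\behxy{(P)}}$, i.e.\ the $\contract{}{x}$-behaviour of a program is obtained from its $\contract{}{xy}$-behaviour by deleting the sub-sequences enclosed in $\contract{}{y}$-transactions. Since $\specProjectxyx{}$ is a function of the combined trace, equal $\contract{}{xy}$-traces yield equal $\contract{}{x}$-traces; contrapositively, distinct $\contract{}{x}$-traces force distinct $\contract{}{xy}$-traces, which is exactly $\contract{}{x} \sqsubseteq \contract{}{xy}$. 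The same argument with $\specProjectxyy{}$ gives $\contract{}{y} \sqsubseteq \contract{}{xy}$. This step I would lift essentially verbatim from the proof of the positive corollary.

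Then I would case split on the disjunction in hypothesis (2). In the case $\contract{}{x} \nvdash \comp{\cdot} : \rsnip$, I apply \Cref{cor:weak-contract-unsat} with $\contract{}{1} \coloneqq \contract{}{x}$ and $\contract{}{2} \coloneqq \contract{}{xy}$: from $\contract{}{x} \nvdash \comp{\cdot} : \rsnip$ together with $\contract{}{x} \sqsubseteq \contract{}{xy}$ we conclude $\contract{}{xy} \nvdash \comp{\cdot} : \rsnip$. The case $\contract{}{y} \nvdash \comp{\cdot} : \rsnip$ is symmetric, using $\contract{}{y} \sqsubseteq \contract{}{xy}$. In either case the desired conclusion follows, completing the proof.

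There is no real obstacle here: the statement is the contrapositive of the positive corollary and the whole argument is a two-line invocation of monotonicity, so the only content worth spelling out is the derivation of the leakage inequalities. The one subtlety to double-check is the \emph{direction} of the inequality: Projection Preservation shows that the combined semantics is at least as discriminating as each component (it leaks more), so insecurity of a weaker component genuinely propagates up to the stronger combined semantics rather than the reverse, matching the orientation of $\sqsubseteq$ fixed in \Cref{ssec:leakage-ordering} and hence the hypothesis shape expected by \Cref{cor:weak-contract-unsat}.
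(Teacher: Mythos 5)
Your proposal matches the paper's own proof: it also case-splits on the disjunction, derives $\contract{}{x} \sqsubseteq \contract{}{xy}$ (resp.\ $\contract{}{y} \sqsubseteq \contract{}{xy}$) from Projection Preservation, and concludes via \Cref{cor:weak-contract-unsat}. Your extra unpacking of why the projection functions yield the leakage inequality is a welcome elaboration of a step the paper only asserts, but the argument is the same.
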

\begin{proof}
    We do the proof for $\contract{}{x} \nvdash  \comp{\cdot} : \rsnip$.

    The proof for $\contract{}{y} \nvdash  \comp{\cdot} : \rsnip$ is analogous.

    Because of $\contract{}{x} \nvdash  \comp{\cdot} : \rsnip$ and $\contract{}{x} \sqsubset \contract{}{xy}$ from Projection Preservation, we can apply \Thmref{cor:weak-contract-unsat} and immediately get $\contract{}{xy} \nvdash  \comp{\cdot} : \rsnip$
    
\end{proof}

\begin{corollary}[Combined Compiler Contract Security Negative (RSSP)]
    \hfill 

    \begin{enumerate}
        \item $\wfc{\contract{}{xy}}$ and 
        \item  $\contract{}{x} \nvdash  \comp{\cdot} : \rssp$ or $\contract{}{y} \nvdash  \comp{\cdot} : \rssp$
    \end{enumerate}
    Then 
    \begin{enumerate}[label=\Roman*]
        \item $\contract{}{xy} \nvdash  \comp{\cdot} : \rssp$
    \end{enumerate}
\end{corollary}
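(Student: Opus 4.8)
The plan is to prove the contrapositive-flavoured statement directly, mirroring the structure of the preceding negative corollary for \rsnip{} but swapping the driving lemma. The goal is: assuming $\wfc{\contract{}{xy}}$ and that either $\contract{}{x} \nvdash \comp{\cdot} : \rssp$ or $\contract{}{y} \nvdash \comp{\cdot} : \rssp$, conclude $\contract{}{xy} \nvdash \comp{\cdot} : \rssp$. By symmetry of the two disjuncts (the roles of $x$ and $y$ are interchangeable under the composition framework), I would carry out the argument for the case $\contract{}{x} \nvdash \comp{\cdot} : \rssp$ and remark that the case $\contract{}{y} \nvdash \comp{\cdot} : \rssp$ is fully analogous, using $\contract{}{y} \sqsubseteq \contract{}{xy}$ in place of $\contract{}{x} \sqsubseteq \contract{}{xy}$.

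The first step is to extract the leakage-ordering facts $\contract{}{x} \sqsubseteq \contract{}{xy}$ and $\contract{}{y} \sqsubseteq \contract{}{xy}$ from the well-formedness hypothesis. These come from Projection Preservation (the second clause of \Cref{def:wellformed}), exactly as in the proof of the positive RSSP corollary above: since $\behx{P} = \specProjectxyx{\behxy{(P)}}$ and the projection $\specProjectxyx{}$ only deletes observations (it never introduces new ones), any pair of initial configurations distinguishable under $\contract{}{x}$ remains distinguishable under $\contract{}{xy}$, which is precisely the leakage-ordering relation. The second step is to invoke the negative leakage-ordering result for RSSP. The excerpt states the analogous negative result for \rsnip{} as \Cref{cor:weak-contract-unsat} (and the positive RSSP version of \Cref{cor:weak-contract-paper} / \Cref{cor:comb-contract}); I would appeal to the RSSP counterpart of the leakage-ordering monotonicity, namely that if $\contract{}{x} \nvdash \comp{\cdot} : \rssp$ and $\contract{}{x} \sqsubseteq \contract{}{xy}$ then $\contract{}{xy} \nvdash \comp{\cdot} : \rssp$. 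Combining this with the ordering fact from step one closes the case immediately.

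Concretely, the witness-chasing behind that monotonicity step is: $\contract{}{x} \nvdash \comp{\cdot} : \rssp$ unfolds (via \Cref{def:compiler-contract-security-paper}) to a source program $\src{P}$ with $\contract{}{NS} \vdash \src{P} : \rss$ but $\contract{}{x} \nvdash \comp{\src{P}} : \rss$; the latter yields an attacker $\ctxc{}$ and a trace in $\behx{\ctxc{}\hole{\comp{\src{P}}}}$ carrying an unsafe action. Projection Preservation lets me recover that same unsafe action in a trace of $\behxy{\ctxc{}\hole{\comp{\src{P}}}}$ (since the unsafe-action taint survives the embedding from the component semantics into the composed semantics — the composed trace projects onto the component trace, so the unsafe action must already be present before projection). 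Hence $\contract{}{xy} \nvdash \comp{\src{P}} : \rss$ while $\contract{}{NS} \vdash \src{P} : \rss$ still holds, giving $\contract{}{xy} \nvdash \comp{\cdot} : \rssp$.

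The main obstacle I anticipate is the direction of the action-preservation argument under projection: I must be careful that an unsafe action visible under the component semantics $\contract{}{x}$ genuinely reappears in the composed trace rather than being an artefact of projection. This is the reverse direction of the reasoning used in the positive corollary (there, safety of the composed trace descends to safety of the projected trace because projection only removes actions). Here I need that projection removing actions \emph{cannot create} an unsafe action that was absent, equivalently that every action of the projected trace is inherited from the composed trace — which is exactly what Projection Preservation guarantees, since $\behx{P} = \specProjectxyx{\behxy{(P)}}$ means the $\contract{}{x}$-behaviour is literally the projected composed behaviour, so every $\contract{}{x}$-trace (unsafe action and all) is the projection of some composed trace that already contains that action. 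Once this subtlety is dispatched, the remainder is a routine appeal to the leakage-ordering machinery already established in the excerpt.
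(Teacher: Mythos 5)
Your proposal is correct and follows essentially the same route as the paper: the paper's proof also picks the witness program $p$ with $\contract{}{NS} \vdash p : \rss$ and $\contract{}{x} \nvdash \comp{p} : \rss$, exhibits the unsafe action in a $\contract{}{x}$-trace, and uses Projection Preservation ($\behx{\comp{p}} = \specProjectxyx{\behxy{\comp{p}}}$) to conclude that the unsafe action already appears in the composed trace, hence $\contract{}{xy} \nvdash \comp{p} : \rss$. The subtlety you flag about the direction of action preservation is dispatched exactly as you describe, so no gap remains.
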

\begin{proof}
    We do the proof for $\contract{}{x} \nvdash  \comp{\cdot} : \rssp$.

    The proof for $\contract{}{y} \nvdash  \comp{\cdot} : \rssp$ is analogous.

    Because of $\contract{}{x} \nvdash  \comp{\cdot} : \rssp$, we know there exists a program $p$ such that $\contract{}{NS} \vdash p : \rss$ and $\contract{}{x} \nvdash \comp{p} : \rss$.

    That is, there is an unsafe action $\tau^{\unta}$ in the trace of $\comp{p}$.

    From $\wfc{\contract{}{xy}}$ (Projection Preservation) we get $\behx{\comp{p}} = \specProjectxyx{\behxy{\comp{p}}}$.

    This means the unsafe action $\tau^{\unta}$ is part of the combined trace of $\comp{p}$ as well!
    Thus, $\contract{}{xy} \nvdash \comp{p} : \rss$.

    Since we have $\contract{}{NS} \vdash p : \rss$ and  $\contract{}{xy} \nvdash \comp{p} : \rss$ we can conclude that $\contract{}{xy} \nvdash  \comp{\cdot} : \rssp$.
\end{proof}

\subsection{Using Relation Preservation on the Combination}
Next, we show how we can lift Relation Preservation of the combination to the reflexive transitive closure: 

\begin{lemma}[Comb: Steps of $\bigspecarrowxy{}$ with safe observations preserve Safe-equivalence]\label{lemma:xy-bigspecarrow-preserves-safe}
If
\begin{enumerate}
    \item $\Sigmaxy \bigspecarrowxy{\tauStack} \Sigmaxyt{\dagger}$ and 
    \item $\Sigmaxy \relsa \Sigmaxyt{'}$
    \item $\safe{\tauStack}$ \item $\wfc{\contract{}{xy}}{}$
\end{enumerate}
Then there exists $\Sigmaxyt{''}$ such that
\begin{enumerate}
    \item $\Sigmaxyt{'} \bigspecarrowxy{\tauStack} \Sigmaxyt{''}$ and
    \item $\Sigmaxyt{\dagger} \relsa \Sigmaxyt{''}$
\end{enumerate}
\end{lemma}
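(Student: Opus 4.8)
The plan is to prove \Cref{lemma:xy-bigspecarrow-preserves-safe} by induction on the derivation of the big-step judgement $\Sigmaxy \bigspecarrowxy{\tauStack} \Sigmaxyt{\dagger}$, mirroring the structure already used for the per-semantics versions (e.g.\ \Thmref{lemma:v4-bigspecarrow-preserves-safe}) but now discharging the single-step obligation using the well-formedness conditions rather than a semantics-specific single-step lemma. The three cases are the reflexivity rule \Cref{tr:xy-reflect}, the observable-step rule \Cref{tr:xy-single}, and the silent-step rule \Cref{tr:xy-silent}. The reflexive case is immediate: $\tauStack = \empTr$ and $\Sigmaxyt{\dagger} = \Sigmaxy$, so taking $\Sigmaxyt{''} = \Sigmaxyt{'}$ and invoking the hypothesis $\Sigmaxy \relsa \Sigmaxyt{'}$ closes it.

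For the two inductive cases I would first decompose the big step as $\Sigmaxy \bigspecarrowxy{\tauStack'} \Sigmaxy'' $ followed by a single step $\Sigmaxy'' \specarrowxy{\tau^{\taint}} \Sigmaxyt{\dagger}$ (with $\tauStack = \tauStack' \cdot \tau^{\taint}$ in the single case, and $\tau^{\taint}$ silent in the silent case). From $\safe{\tauStack}$ I extract $\safe{\tauStack'}$ and $\safe{\tau^{\taint}}$. Applying the induction hypothesis to the shorter derivation gives $\Sigmaxyt{'} \bigspecarrowxy{\tauStack'} \Sigmaxyt{'''}$ with $\Sigmaxy'' \relsa \Sigmaxyt{'''}$. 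Now I invoke Relation Preservation for the combination, i.e.\ the third clause of \Cref{def:wellformed}: from $\Sigmaxy'' \specarrowxy{\tau^{\taint}} \Sigmaxyt{\dagger}$ and $\Sigmaxy'' \relsa \Sigmaxyt{'''}$ (and using $\safe{\tau^{\taint}}$), I obtain $\Sigmaxyt{'''} \specarrowxy{\tau^{\taint}} \Sigmaxyt{''}$ with $\Sigmaxyt{\dagger} \relsa \Sigmaxyt{''}$. Reassembling via \Cref{tr:xy-single} (resp.\ \Cref{tr:xy-silent}) yields $\Sigmaxyt{'} \bigspecarrowxy{\tauStack} \Sigmaxyt{''}$, and the relation $\Sigmaxyt{\dagger} \relsa \Sigmaxyt{''}$ is exactly what is required.

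The main obstacle is a mismatch of signatures: the excerpt states Relation Preservation (\Cref{def:wfc2} / third clause of \Cref{def:wellformed}) at the level of the \emph{single} combined step $\specarrowxy{}$ against a relation hypothesis phrased with $\relsaxy$, whereas the conclusion I assemble must be in terms of the big-step $\bigspecarrowxy{}$ judgement. I expect the care to lie in (i) checking that the side-conditions of \Cref{tr:xy-single} — the cross-component/context bookkeeping on $\ffun{}$ and the suppression of context-internal actions — are preserved under the safe-equivalence relation so that the reassembled step is derivable, and (ii) confirming that the single-step Relation Preservation clause really does apply to the combined step and not merely to a component step; if the stated clause is phrased only for a component semantics $\Sigmax$, I would need to first establish the analogous combined-step statement, which should follow from Confluence (\Cref{def:wellformed} clause 1, giving determinism so that the related state has a matching step) together with the component-level Relation Preservation transported through the projection functions $\specProjectxyx{}$, $\specProjectxyy{}$. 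Determinism from Confluence is also what guarantees the reconstructed trace $\tauStack$ matches exactly rather than merely up to the relation, which is needed for the downstream use in \Thmref{thm:Comb-ss-impl-sni}.
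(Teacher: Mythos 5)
Your proof follows the paper's argument exactly: induction on the derivation of the big step, with the reflexive case closed by the hypothesis $\Sigmaxy \relsa \Sigmaxyt{'}$ and the silent and observable cases closed by the induction hypothesis on the prefix followed by the single-step Relation Preservation clause of well-formedness applied to the final combined step, then reassembled via the corresponding big-step rule. The signature concern you raise — whether Relation Preservation really holds for the combined step $\specarrowxy{}$ rather than only for component steps — is precisely what the paper discharges separately in \Cref{lemma:comb-relation-preservation-small-step}, by transporting the component-level relation-preservation assumptions through the projections $\specProjectxyx{}$ and $\specProjectxyy{}$, so your proposed fallback is exactly the paper's route.
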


\begin{proof}
The proof proceeds by induction on $\Sigmaxy \bigspecarrowxy{\tauStack} \Sigmaxyt{\dagger}$:

\begin{description}
    \item[\Cref{tr:xy-reflect}]
    
    Then we have $\Sigmaxy \bigspecarrowxy{\tauStack} \Sigmaxyt{\dagger}$ where $\tauStack = \empTr$ and $\Sigmaxyt{\dagger} = \Sigmaxy$.
    
    Thus, we use \Cref{tr:xy-reflect} on $\Sigmaxyt{'}$ and get $\Sigmaxyt{'} \bigspecarrowxy{\tauStack'} \Sigmaxyt{''}$, where where $\tauStack' = \empTr$ and $\Sigmaxyt{''} = \Sigmaxyt{'}$.
    
    Thus, $\tauStack = \tauStack'$ trivially holds and $\Sigmaxyt{\dagger} \relsa \Sigmaxyt{''}$ holds by $\Sigmaxy \relsa \Sigmaxyt{'}$.
    
    \item[\Cref{tr:xy-silent}]
    
    Then we have $\Sigmaxy \bigspecarrowxy{\tra{^\taint}} \Sigmaxyt{\dagger\dagger}$ and $\Sigmaxyt{\dagger\dagger} \specarrowxy{\epsilon} \Sigmaxyt{\dagger}$.
    
    Applying the IH on $\Sigmaxy \bigspecarrowxy{\tra{^\taint}} \Sigmaxy''$ we get:
    \begin{align*}
    \Sigmaxyt{'} \bigspecarrowxy{\tauStack''} \Sigmaxyt{'''} \\
    \Sigmaxyt{\dagger\dagger} \relsa \Sigmaxyt{'''} \\
    \safe{\tauStack''}
    \end{align*}

    Applying $\wfcn{\contract{}{xy}}{3}$ (Relation Preservation) on $\Sigmaxyt{\dagger\dagger} \specarrowxy{\epsilon} \Sigmaxyt{\dagger}$ using $\Sigmaxyt{\dagger\dagger} \relsa \Sigmaxyt{'''}$ and $\safe{\empTr}$ we get $\Sigmaxyt{'''} \specarrowxy{\empTr} \Sigmaxyt{*}$ and $\Sigmaxyt{\dagger} \relsa \Sigmaxyt{*}$.
    
    Now choose $\Sigmaxyt{''} = \Sigmaxyt{*}$.
    
    We now use \Cref{tr:xy-silent} and get $\Sigmaxyt{'} \bigspecarrowxy{\tauStack} \Sigmaxyt{''}$
    
    $\Sigmaxyt{\dagger} \relsa \Sigmaxyt{''}$ follows from $\Sigmaxyt{\dagger} \relsa \Sigmaxyt{*}$ and $\Sigmaxyt{''} = \Sigmaxyt{*}$.
    
    \item[\Cref{tr:xy-single}]
    
    Then we have $\Sigmaxy \bigspecarrowxy{\tra{^\taint}} \Sigmaxyt{\dagger\dagger}$ and $\Sigmaxyt{\dagger\dagger} \specarrowxy{\tau^{\taint}} \Sigmaxyt{\dagger}$.
    
    Furthermore, we get:
    \begin{align*}
    \Sigmaxy'' =& \phiStackxy \cdot  \tup{\Omega, n} \\
    \Sigmaxy' =& \phiStackxy \cdot \tup{\Omega', n' } \\
    \ffun{\sigma''(\pc)} =& f \\
    \ffun{\sigma'(\pc)} =& f' \\
    \text{ if } f == f'& \text{ and } ~f \in \OB{I} \text{ then } \tau^{\taint} = \epsilon \text{ else } \tau^{\taint} = \tau^{\taint}
    \end{align*}
    
    Applying the IH on $\Sigmaxy \bigspecarrowxy{\tra{^\taint}} \Sigmaxy''$ we get:
    \begin{align*}
    \Sigmaxyt{'} \bigspecarrowxy{\tauStack''} \Sigmaxyt{'''} \\
    \Sigmaxyt{\dagger\dagger} \relsa \Sigmaxyt{'''} \\
    \safe{\tauStack''}
    \end{align*}
    
    Applying $\wfcn{\contract{}{xy}}{3}$ (Relation Preservation) on $\Sigmaxyt{\dagger\dagger} \specarrowxy{\tau^{\taint}} \Sigmaxyt{\dagger}$ using $\Sigmaxyt{\dagger\dagger} \relsa \Sigmaxyt{'''}$ and $\safe{\tau^{\taint}}$ (since $\safe{\tauStack}$) we get $\Sigmaxyt{'''} \specarrowxy{\tau^{\taint}} \Sigmaxyt{*}$ and $\Sigmaxyt{\dagger} \relsa \Sigmaxyt{*}$.
    
    Now choose $\Sigmaxyt{''} = \Sigmaxyt{*}$.
    
    We can now use \Cref{tr:xy-single} and get $\Sigmaxyt{'} \bigspecarrowxy{\tauStack} \Sigmaxyt{''}$
    
    $\Sigmaxyt{\dagger} \relsa \Sigmaxyt{''}$ follows from $\Sigmaxyt{\dagger} \relsa \Sigmaxyt{*}$ and $\Sigmaxyt{''} = \Sigmaxyt{*}$.
\end{description}
\end{proof}

\begin{lemma}[Comb : Low-equivalent programs have low-equivalent initial states]\label{lemma:comb-low-equivalent-programs-low-equivalent-states}
\begin{align*}
\forall {P}, {P'}. &
	\text{ if } {P} \loweq {P'} 
	\text{ then } \SInit{P} \approx \SInit{P'}
\end{align*}
\end{lemma}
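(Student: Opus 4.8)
The plan is to reduce this claim for the combined semantics to the component-level results already established in the excerpt, exploiting the definition of the combined initial state as a join. By \Cref{tr:xy-init} we have $\SInit{P} = \joinxy(\initFuncx{(P)}, \initFuncy{(P)})$ and $\SInit{P'} = \joinxy(\initFuncx{(P')}, \initFuncy{(P')})$. First I would unfold both combined initial states through \Cref{tr:xy-init}, so that each is presented as the join of the two component initial states built from $P$ (resp.\ $P'$), and then recall that $\initFuncx{(\cdot)}$ and $\initFuncy{(\cdot)}$ produce the \emph{same} memory and register file, so that the join is well-defined (the shared components in $\Phix \cap \Phiy$ agree).

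The key observation is that the value-bearing components relevant to $\approx$ (\Cref{def:safeeq})---namely the memory $M$ and the register file $A$---lie in the intersection $\Phix \cap \Phiy$ that the join identifies. Since $\approx$ on a state is determined solely by the taints and the $\safeta$-tagged values of its memory and registers, and these are shared across the two projections, the relation $\approx$ on $\SInit{P}$ and $\SInit{P'}$ factors through the relation on the (shared) memory and register files. Concretely I would argue that $\SInit{P} \approx \SInit{P'}$ holds exactly when $\initFuncx{(P)} \approx \initFuncx{(P')}$, since projecting out the $x$- or $y$-specific bookkeeping (e.g.\ the return-stack buffer $\Rsb$ of $\Rv$, the speculation windows) via $\specProjectxy{}$ leaves precisely the same shared configuration on both sides.

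Then I would invoke the component-level lemmas---\Thmref{lemma:v4-low-equivalent-programs-low-equivalent-states} and its analogues for the other base semantics---which, given $P \loweq P'$, yield $\initFuncx{(P)} \approx \initFuncx{(P')}$ (and symmetrically for $y$). Composing this with the factoring argument above delivers $\SInit{P} \approx \SInit{P'}$, completing the proof.

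The main obstacle will be the middle step: making precise that $\approx$ on combined states is compatible with the join/projection structure, i.e.\ that it is invariant under the component-specific data introduced by the join and never imposes additional relatedness obligations on the parts of $\Phix$ and $\Phiy$ lying outside their shared intersection. Establishing this cleanly would use the projection identities $\joinxy(\specProjectxy{\Phixy}) = \Phixy$ and $\specProjectxy{(\joinxy(\Phix, \Phiy))} = (\Phix, \Phiy)$ to transport relatedness back and forth between the combined state and its projections. Once this compatibility is in place, the reduction to the component lemmas is routine, mirroring the base-semantics versions of this lemma (which are themselves discharged by direct inspection of $\initFunc{}$ and $\loweq$).
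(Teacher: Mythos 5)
Your proposal is correct, but it takes a more structured route than the paper, whose entire proof is the one-liner ``Follows from the definition of initial states'' --- i.e.\ a direct inspection of $\initFunc{}$ in the spirit of the detailed component-level proof of \Cref{lemma:v4-low-equivalent-programs-low-equivalent-states} (all registers start at $0:\safeta$, the public memory agrees by $\loweq$, the private memory is uniformly tainted $\unta$, and the semantics-specific bookkeeping is identical on both sides). You instead factor the claim through the join/projection structure of \Cref{tr:xy-init} and reuse the already-proven component lemmas. This buys modularity --- nothing about the initial configuration has to be re-inspected for each combination --- at the price of the compatibility step you correctly flag as the crux: that relatedness of a combined state is equivalent to relatedness of its two projections. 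In the paper that step is essentially free, since the combined relation $\relsaxy$ is \emph{defined} as the conjunction of the projected relations and the projection identities give $\specProjectxy{(\joinxy(\Phix,\Phiy))}=(\Phix,\Phiy)$, so your middle step discharges immediately rather than being an obstacle. One small point worth making explicit in a write-up: the join in \Cref{tr:xy-init} is well-defined precisely because $\initFuncx{}$ and $\initFuncy{}$ agree on the shared non-speculative configuration, which you note but which deserves a sentence rather than being assumed.
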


\begin{proof}

Follows from the definition of initial states.
\end{proof}

\subsection{Deriving Relation Preservation generally}\label{sec:comb-rel-pres-general}

Here we show how to derive Relation Preservation, a premise of $\wfc{}$, in a general fashion from additional assumptions about the source contracts

\begin{definition}[$\relsaxy$]
    $\Sigmaxy \relsaxy \Sigmaxy'$ iff
    $\specProjectxyx{\Sigmaxy} \relsax \specProjectxyx{\Sigmaxy'}$ and $\specProjectxyy{\Sigmaxy} \relsay \specProjectxyy{\Sigmaxy'}$
\end{definition}

\begin{assumption}[Relation Preservation of x]\label{ass:rel-preserve-x}

If
\begin{enumerate}
    \item $\Sigmaxt{\dagger} \relsax \Sigmaxt{'}$
    \item $\Sigmaxt{\dagger} \specarrowx{\tau} \Sigmaxt{\dagger\dagger}$
    \item $\safe{\tau}$
\end{enumerate}

Then there exists $\Sigmaxt{''}$
\begin{enumerate}[label=\Roman*]
    \item $\Sigmaxt{'} \specarrowx{\tau} \Sigmaxt{''}$
    \item $\Sigmaxt{\dagger\dagger} \relsax \Sigmaxt{''}$
\end{enumerate}

\end{assumption}

\begin{assumption}[Relation Preservation of y]\label{ass:rel-preserve-y}

If
\begin{enumerate}
    \item $\Sigmayt{\dagger} \relsay \Sigmayt{'}$
    \item $\Sigmayt{\dagger} \specarrowy{\tau} \Sigmayt{\dagger\dagger}$
    \item $\safe{\tau}$
\end{enumerate}

Then there exists $\Sigmayt{''}$
\begin{enumerate}[label=\Roman*]
    \item $\Sigmayt{'} \specarrowy{\tau} \Sigmayt{''}$
    \item $\Sigmayt{\dagger\dagger} \relsay \Sigmayt{''}$
\end{enumerate}

\end{assumption}

Here, we have two additional assumptions about the compatibility of the relations $\relsax$ and $\relsay$.
These assumptions state that we can recompose the relation $\relsaxy$ from $\relsax$ when doing a relation preserving step in the x semantics.
\begin{assumption}[Compatibility of relation $\relsax$]\label{ass:compat-x-rel}
    Let $\Sigmaxy = \phiStackxy \cdot \Phixy$ and $\Sigmaxy' = \phiStackxy' \cdot \Phixy'$.
    If
    \begin{enumerate}
        \item $\Sigmaxy \relsaxy \Sigmaxy'$ and
        \item $\specProjectxyx{\Phixy} \specarrowx{\tau} \phiStackx^\dagger$
        \item $\specProjectxyx{\Phixy'} \specarrowx{\tau} \phiStackx''$
        \item $\phiStackx^\dagger \relsax \phiStackx''$ and
        \item $\specProjectxyx{\phiStackxy^\dagger} = \phiStackx^\dagger$ and $\specProjectxyx{\phiStackxy''} = \phiStackx''$
        \item $\Sigmaxy^\dagger = \phiStackxy \cdot \phiStackxy^\dagger$ and $\Sigmaxy'' = \phiStackxy' \cdot \phiStackxy''$
    \end{enumerate}
    Then
    \begin{enumerate}
        \item $\Sigmaxy^\dagger \relsaxy \Sigmaxy''$
    \end{enumerate}
\end{assumption}

\begin{assumption}[Compatibility of relation $\relsay$]\label{ass:compat-y-rel}
    Let $\Sigmaxy = \phiStackxy \cdot \Phixy$ and $\Sigmaxy' = \phiStackxy' \cdot \Phixy'$.
    If
    \begin{enumerate}
        \item $\Sigmaxy \relsaxy \Sigmaxy'$ and
        \item $\specProjectxyy{\Phixy} \specarrowy{\tau} \phiStacky^\dagger$
        \item $\specProjectxyy{\Phixy'} \specarrowy{\tau} \phiStacky''$
        \item $\phiStacky^\dagger \relsay \phiStacky''$ and
        \item $\specProjectxyy{\phiStackxy^\dagger} = \phiStacky^\dagger$ and $\specProjectxyy{\phiStackxy''} = \phiStacky''$
        \item $\Sigmaxy^\dagger = \phiStackxy \cdot \phiStackxy^\dagger$ and $\Sigmaxy'' = \phiStackxy' \cdot \phiStackxy''$
    \end{enumerate}
    Then
    \begin{enumerate}
        \item $\Sigmaxy^\dagger \relsaxy \Sigmaxy''$
    \end{enumerate}
\end{assumption}

\begin{insight}
The safe condition is needed to get the equality of the traces. That is something that easily follows from low equivalence, since when a value is safe it can only depend on low equivalent values. Since they are low equivalent we are finished.
This proof relies on Relation Preservation (+ Obs equal) and Confluence 
\end{insight}

\begin{lemma}[Relation Preservation of $\specarrowxy{}$]\label{lemma:comb-relation-preservation-small-step}
If
\begin{enumerate}
    \item $\Sigmaxy \specarrowxy{\tau} \Sigmaxyt{\dagger}$ and
    \item $\Sigmaxy \relsaxy \Sigmaxyt'$ and
    \item \Thmref{ass:rel-preserve-x} and \Thmref{ass:rel-preserve-y} and
    \item $\safe{\tau}$
\end{enumerate}
Then
\begin{enumerate}
    \item $\Sigmaxy'\specarrowxy{\tau'} \Sigmaxyt{''}$ and
    \item $\Sigmaxyt{\dagger} \relsaxy \Sigmaxyt{''}$ and
    \item $\tau = \tau'$ 
\end{enumerate}
\end{lemma}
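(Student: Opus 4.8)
The plan is to proceed by inversion on the derivation of the combined step $\Sigmaxy \specarrowxy{\tau} \Sigmaxyt{\dagger}$. By \Cref{tr:xy-combine} this move is the combination of a value step and a taint step, and after stripping the context rule \Cref{tr:xy-context} (which merely focuses reduction on the topmost speculative instance) the top instance reduces by exactly one of four rules: an $x$-step (\Cref{tr:xy-x-step}), a $y$-step (\Cref{tr:xy-y-step}), an $x$-rollback (\Cref{tr:xy-x-rollback}), or a $y$-rollback (\Cref{tr:xy-y-rollback}). In each case the combined move is \emph{defined} by projecting the top instance onto one component with $\specProjectxyx{\cdot}$ or $\specProjectxyy{\cdot}$ and stepping there, so every case reduces to the corresponding component semantics. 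The observation $\tau$ carried by the combined step is the one produced by that component step, which immediately yields $\tau = \tau'$ once the matching step out of $\Sigmaxyt{'}$ is built with the identical component observation.

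First I would handle the $x$-step case, the $y$-step case being symmetric. From $\Sigmaxy \relsaxy \Sigmaxyt{'}$ I extract $\specProjectxyx{\Sigmaxy} \relsax \specProjectxyx{\Sigmaxyt{'}}$ by definition of $\relsaxy$. The combined step is witnessed by a component step $\specProjectxyx{\Sigmaxy} \specarrowx{\tau} \specProjectxyx{\Sigmaxyt{\dagger}}$ (the $\Zxy$-annotation on the projected step being inessential here); since $\safe{\tau}$, I invoke the assumed Relation Preservation of $\contract{}{x}$ (\Thmref{ass:rel-preserve-x}) to obtain a matching component step out of $\specProjectxyx{\Sigmaxyt{'}}$ emitting the \emph{same} $\tau$ and landing in a state related to $\specProjectxyx{\Sigmaxyt{\dagger}}$. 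Re-applying \Cref{tr:xy-x-step} and \Cref{tr:xy-context} lifts this back to a combined step $\Sigmaxyt{'} \specarrowxy{\tau} \Sigmaxyt{''}$ whose $x$-projection is that component result, so $\tau' = \tau$. It then remains to re-establish $\Sigmaxyt{\dagger} \relsaxy \Sigmaxyt{''}$, that is, to relate \emph{both} projections; this is exactly what the compatibility assumption \Thmref{ass:compat-x-rel} delivers, taking $\Sigmaxy \relsaxy \Sigmaxyt{'}$ together with the two related $x$-component steps and returning $\Sigmaxyt{\dagger} \relsaxy \Sigmaxyt{''}$. The $y$-step case is identical using \Thmref{ass:rel-preserve-y} and \Thmref{ass:compat-y-rel}.

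For the two rollback cases I would rely on the fact that the safe-equivalence relation (\Cref{def:safeeq}) forces related instances to carry the same speculation window $n$, the same program, and (since $\pc$ is always tagged safe) the same program counter. Hence the rollback side condition $n = 0$ or $\finType{\cdot}$ transfers verbatim from the top instance of $\Sigmaxy$ to that of $\Sigmaxyt{'}$, so $\Sigmaxyt{'}$ performs the same rollback, emitting the safe action $\rollbackObsx{}$ (resp.\ $\rollbackObsy{}$) tagged $\safeta$; after popping the top instance, $\relsaxy$ is inherited by the tails, which were already related. The step I expect to be the main obstacle is the recomposition in the non-rollback cases: a single combined move acts on only one projection yet silently mutates the state components ($\Omega$ and $n$) shared with the other projection, so I must argue that stepping in $\contract{}{x}$ does not break the $\contract{}{y}$-relation and vice versa. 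The compatibility assumptions \Thmref{ass:compat-x-rel} and \Thmref{ass:compat-y-rel} are precisely engineered for this, so the crux of the argument is to verify that their premises—related component steps over related combined states whose projections reassemble as in their side conditions—are met in each case; and throughout it is the $\safe{\tau}$ hypothesis that lets component Relation Preservation return a step with the identical observation, securing $\tau = \tau'$.
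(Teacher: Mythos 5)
Your proposal matches the paper's proof: both proceed by inversion on the combined step, dispatch the $x$-/$y$-step cases by projecting, invoking the component Relation Preservation assumptions (\Thmref{ass:rel-preserve-x}/\Thmref{ass:rel-preserve-y}) under $\safe{\tau}$, and recomposing via the compatibility assumptions (\Thmref{ass:compat-x-rel}/\Thmref{ass:compat-y-rel}), while the rollback cases transfer the $n=0$/final side condition through the state relation. This is essentially the same argument as the paper's.
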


\begin{proof}
By $\Sigmaxy \relsaxy \Sigmaxyt'$, we know that $\Sigmaxy . \pc = \Sigmaxyt' . \pc$.

We start by inversion on $\Sigmaxy \specarrowxy{\tau} \Sigmaxyt{\dagger}$ and $\Sigmaxy'\specarrowxy{\tau'} \Sigmaxyt{''}$:
\begin{description}
    
    \item[\Cref{tr:xy-x-rollback}]
    Then we have $\Phixy \specarrowxy{\rollbackObsx} \emptyset$ with $\Phixyv . n = 0 \text{ or }\ \finType{\trgxy{\Phixyv}}$ and 
    $\specProjectxyx{\Phixy} \specarrowx{\rollbackObsx^{\safeta}} \emptyset$.
    
    By $\Sigmaxy \relsaxy \Sigmaxy'$, we have ${\Phixyv}' . n = \Phixyv . n = 0$ or $\finType{\trgxy{{\Phixyv}'}}$ and $\notfinType{\Sigmaxy}$ (since there exists a step) implies $\notfinType{\Sigmaxy'}$ because of $\Sigmaxy \relsaxy \Sigmaxy'$.
    
    Furthermore, $\notfinType{\Sigmaxy'}$ and $\Sigmaxy \relsaxy \Sigmaxy'$ together with $\specProjectxyx{\Phixy} \specarrowx{\rollbackObsx^\safeta} \emptyset$ implies $\notfinType{\specProjectxyx{\Sigmaxy'}}$ and thus $\notfinType{\specProjectxyx{\Phixy'}}$.
    
    This means there necessarily exists a $\phiStackx$ and $\tau'$ such that $\specProjectxyx{\Phixy'} \specarrowx{\tau'} \phiStackx$.
    By \Thmref{ass:rel-preserve-x} we can then conclude that $\emptyset \relsax \emptyset$ and $\tau \relsax \rollbackObsx^{\safeta}$ and together with \Thmref{ass:compat-x-rel} we get \item $\Sigmaxyt{\dagger} \relsaxy \Sigmaxyt{''}$ and
    \item $\tau \relsaxy \tau'$ as required.

    \item[\Cref{tr:xy-y-rollback}]
    Analogous to above.
    
    \item[\Cref{tr:xy-context}]
    Then we have $\Phixy \specarrowxy{\tau} \phiStackxy^\dagger$. 

    We proceed by inversion $\Phixy \specarrowxy{\tau} \phiStackxy^\dagger$:
        \begin{description}
            \item[\Cref{tr:xy-x-step}]
                Then we have $\specProjectxyx{\Phixy} \specarrowx{\tau} \specProjectxyx{\phiStackxy^\dagger}$.
            
                Furthermore, we have $\Phixy . \sigma(\pc) = \Phixy' . \sigma(\pc)$ because of $\Sigmaxy \relsaxy \Sigmaxyt'$. That means the same instruction is executed.

                Because of the definition of $\relsaxy$ we have $\specProjectxyx{\Phixy} \relsax \specProjectxyx{\Phixy'}$ as well.
                We can now use \Thmref{ass:rel-preserve-x} on  $\specProjectxyx{\Phixy} \relsax \specProjectxyx{\Phixy'}$ and $\specProjectxyx{\Phixy} \specarrowx{\tau} \specProjectxyx{\phiStackxy'}$ and $\safe{\tau}$ and get $\specProjectxyx{\Phixy'} \specarrowx \phiStackx''$ and $\specProjectxyx{\phiStackxy^\dagger} \relsax \phiStackx''$. 

                Choose $\phiStackxy''$ as $\specProjectxyx{\phiStackxy''} = \phiStackx''$

                We now use \Cref{tr:xy-x-step} with $\specProjectxyx{\Phixy'} \specarrowx \phiStackx''$ and get
                $\specProjectxyx{\Phixy'} \specarrowxy{\tau} \specProjectxyx{\phiStackxy''}$.

                It remains to show that $\Sigmaxyt{\dagger} \relsaxy \Sigmaxyt{''}$.

                Since only the topmost state changed, we need to show that $\phiStackxy^\dagger \relsaxy \phiStackxy''$.

                We use \Thmref{ass:compat-x-rel} on $\specProjectxyx{\phiStackxy^\dagger} \relsax \specProjectxyx{\phiStackxy''}$ since we fulfill all conditions and get
                $\phiStackxy^\dagger \relsaxy \phiStackxy''$ and are finished.

            \item[\Cref{tr:xy-y-step}]
                Analogous to above using \Thmref{ass:rel-preserve-y} and \Thmref{ass:compat-y-rel}.
        \end{description}

\end{description}
\end{proof}

 \section{WFC for all combinations}

Here, we argue why all our combined contracts presented are WFC.

In general, Confluence follows from the correct instantiation of the metaparameter $Z$. Our AM semantics delegate back to the non-speculative semantics in all cases not related to speculation. Since the non-speculative semantics is deterministic, all the rules delegating back are confluent. The only difference in the execution can appear if speculation is started. However, then the metaparamter restricts the no-branching rule of the other semantics part of the combination and it cannot be used.
Thus, speculation is always started and only one rule is possible, contradicting the assumption of Confluence.

For Projection Preservation, we refer to \cite{spec_comb} since the proofs are virtually the same.
Intuitively this follows from the structure of our semantics. We never forget a speculative transaction because of 
the metaparameter Z restriction. Thus, no actions are forgotten.

For Relation Preservation, we on our general construction in \Cref{sec:comb-rel-pres-general}, where we need to fulfil the assumptions made, which we will provide now:
\begin{description}
    \item[$\contract{}{\Bv}$]
        The lemmas needed are \cite[Lemma L.6]{S_sec_comp}
    \item[$\contract{}{\Jv}$] 
        The lemmas needed are \Thmref{lemma:v2-specarrow-preserves-safe}.
    \item[$\contract{}{\Sv}$]
        The lemmas needed are \Thmref{lemma:v4-specarrow-preserves-safe}.
    \item[$\contract{}{\Rv}$]
        The lemmas needed are \Thmref{lemma:v5-specarrow-preserves-safe}.
    \item[$\contract{}{\SLSv}$]
        The lemmas needed are \Thmref{lemma:sls-specarrow-preserves-safe}.
\end{description}

Now for the compatibility lemmas between $\relsax$ and $\relsay$ \Thmref{ass:compat-x-rel} and \Thmref{ass:compat-y-rel}.

First, let us quickly note that the relation $\relsa$ is the same for $\Bv$, $\Jv$, $\Sv$ and $\SLSv$.

Only for $\Rv$, we have the additional constraint that the Rsb $\Rsb$ is related.

Thus, for all combinations not involving $\contract{}{\Rv}$, the compatibility lemma trivial holds since there is no difference between $\relsax$ and $\relsaxy$.

For combinations involving $\contract{}{\Rv}$, we know that when a step of the other source semantics is happening, the Rsb $\Rsb$ is not changed. Thus, it is also easy to reinstate $\relsaxy$ in those cases.

Thus, we fulfil all assumptions to derive Relation Preservation for all our source contracts
 \section{Independence Results}\label[appendix]{app:ind}

Here, we describe our Independence results.

We first list $\insertedInstrs{\comp{\cdot}}$ and $\speculationInstrs{\contract{}{}}$ of our compilers and and spec. semantics (\Cref{tab:ins-instructions}).

\begin{table}[h]
    \begin{subtable}{0.55\linewidth}
    \centering
    \begin{tabular}{l|c}
    \toprule
        Compiler &  Inserted Instructions ($\insertedInstrs{\comp{\cdot}}$) \\
        \midrule
        $\complfenceS{\cdot}$ & $\{\barrierKywd \}$ \\
        $\complfenceR{\cdot}$ & $\{\barrierKywd \}$ \\
        $\complfenceSLS{\cdot}$ & $\{\barrierKywd \}$ \\
        $\compretpJ{\cdot}$ & $\{\callC{}, \pskip, \barrierKywd, \retC, \modretC, (static) \jmpC  \}$ \\
        $\compretpJF{\cdot}$ & $\{\callC{}, \pskip, \barrierKywd, \retC, \modretC, (static) \jmpC  \}$ \\
        $\compretpR{\cdot}$ & $\{\pskip, \barrierKywd, \popretC, \retC{}, \callC{}, (static) \jmpC{} \}$ \\
        $\compuslhB{\cdot}$ & $\{\pcondassign{r}{e'}{e}, \passign{r}{e} \}$ \\
        $\compsslhB{\cdot}$ \cite{S_sec_comp} & $\{\pcondassign{r}{e'}{e}, \passign{r}{e} \}$ \\
        $\complfenceB{\cdot}$ \cite{S_sec_comp} & $\{\barrierKywd \}$ \\
        \bottomrule
    \end{tabular}
    \end{subtable}    

    \vspace{2em}
    \setlength\tabcolsep{1.5pt}
    \begin{subtable}{0.55\linewidth}
    \centering
    \begin{tabular}{c|c}
    \toprule
        Semantics &  Speculation Instructions ($\speculationInstrs{\contract{}{}}$) \\
        \midrule
        $\contract{}{\Bv}$ & $\{ \jzC{} \}$ \\
        $\contract{}{\Jv}$ & $\{\jmpC \}$ \\
        $\contract{}{\Sv}$ & $\{ \storeC \}$ \\
        $\contract{}{\Rv}$ & $\{ \retC, \callC \}$ \\
        $\contract{}{\SLSv}$ & $\{\retC \}$ \\
        \bottomrule
    \end{tabular}
    \end{subtable}
    \caption{The type of instructions inserted by our compilers and the speculative instructions for the semantics.}
    \label{tab:ins-instructions}
\end{table}
A (static) $\jmpC$ instruction is local and produces no observation.
For combined semantics, we use the union of the speculation instructions.

\Cref{t:table-ind} describes in which cases (Syn.) Independence applies to the different spec. semantics we define.

If there is an entry for compiler $\comp{\cdot}$ in column $\contract{}{y}$ then we can show the following (Syn.) Independence statement: $\IND{\contract{}{y}}{\comp{\cdot}}$.

\begin{table*}[ht]

    \begin{subtable}{\textwidth}
    \centering
    \begin{tabular}{l*{13}{c}}
    \toprule
        \diaghead{\theadfont Contract longer text}{Compiler}{Semantics}  & \thead{$\contractSpec{}{\Bv}$}& \thead{$\contractSpec{}{\Jv}$} & \thead{$\contractSpec{}{\Sv}$} & \thead{$\contractSpec{}{\Rv}$ }& \thead{$\contractSpec{}{\SLSv}$} & \thead{$\contractSpec{}{\Bv + \Jv}$} & \thead{$\contractSpec{}{\Bv + \Sv}$} & \thead{$\contractSpec{}{\Bv + \Rv}$} & \thead{$\contractSpec{}{\Bv + \SLSv}$} & \thead{$\contractSpec{}{\Jv + \Sv}$} & \thead{$\contractSpec{}{\Jv + \Rv}$} & \thead{$\contractSpec{}{\Jv + \SLSv}$} & \thead{$\contractSpec{}{\Sv + \Rv}$}  \\
        \midrule

        $\complfenceS{\cdot}$  & \SITab & \SITab & \SITab & \SITab & \SITab & \SITab & \SITab & \SITab & \SITab &  \SITab & \SITab & \SITab & \SITab  \\
        $\complfenceR{\cdot}$  & \SITab & \SITab & \SITab  & \SITab &  \SITab & \SITab & \SITab & \SITab & \SITab & \SITab & \SITab &\SITab & \SITab  \\
        $\complfenceSLS{\cdot}$  & \SITab & \SITab & \SITab  & \SITab & \SITab & \SITab & \SITab & \SITab & \SITab & \SITab & \SITab & \SITab & \SITab \\
        $\compretpJ{\cdot}$  &  \SITab & \SITab & \SITab &  \ITab & \N & \SITab & \SITab & \ITab & \N &  \SITab & \ITab & \N & \ITab  \\
        $\compretpJF{\cdot}$ & \SITab & \SITab & \SITab &  \ITab & \ITab & \SITab & \SITab & \ITab & \ITab &  \SITab & \ITab & \ITab & \ITab  \\
        $\compretpR{\cdot}$  &  \SITab & \SITab & \SITab &  \SITab & \N  & \SITab & \SITab & \SITab & \N &  \SITab & \SITab & \N & \SITab  \\
        $\compuslhB{\cdot}$  & \ITab & \ITab &  \ITab & \ITab & \ITab & \ITab & \ITab & \ITab & \ITab & \ITab & \ITab & \ITab & \ITab  \\
        $\compsslhB{\cdot}$  & \ITab & \ITab &  \ITab & \ITab & \ITab & \ITab & \ITab & \ITab & \ITab & \SITab & \ITab & \ITab & \ITab  \\
        $\complfenceB{\cdot}$  & \SITab & \SITab & \SITab & \SITab & \SITab & \SITab & \SITab & \SITab & \SITab &  \SITab & \SITab & \SITab & \SITab  \\
     \bottomrule
    \end{tabular}
    \end{subtable}
    
\vspace{2em}

    \setlength\tabcolsep{1.5pt}
    \begin{subtable}{\textwidth}
    \centering
    \begin{tabular}{l*{10}{c}}
    \toprule
    \diaghead{\theadfont Contract longer text}{Compiler}{Semantics} & \thead{$\contractSpec{}{\Sv + \SLSv}$} & \thead{$\contractSpec{}{\Bv + \Jv + \Sv}$} & \thead{$\contractSpec{}{\Bv + \Jv + \Rv}$} &\thead{$\contractSpec{}{\Bv + \Jv + \SLSv}$} & \thead{$\contractSpec{}{\Bv + \Sv + \Rv}$} & \thead{$\contractSpec{}{\Bv + \Sv + \SLSv}$} & \thead{$\contractSpec{}{\Jv + \Sv + \Rv}$} & \thead{$\contractSpec{}{\Jv + \Sv + \SLSv}$} & \thead{$\contractSpec{}{\Bv + \Jv + \Sv + \Rv}$} &  \thead{$\contractSpec{}{\Bv + \Jv + \Sv + \SLSv}$}\\
    \midrule
    $\complfenceS{\cdot}$  & \SITab & \SITab & \SITab & \SITab & \SITab & \SITab & \SITab & \SITab & \SITab & \SITab  \\
    $\complfenceR{\cdot}$  & \SITab & \SITab & \SITab & \SITab & \SITab & \SITab & \SITab & \SITab & \SITab & \SITab \\
    $\complfenceSLS{\cdot}$ & \SITab  & \SITab &  \SITab  & \SITab  &  \SITab & \SITab & \SITab  & \SITab & \SITab & \SITab \\
    $\compretpJ{\cdot}$  & \N &  \SITab & \ITab & \N &  \ITab & \N & \ITab & \N & \ITab & \N \\
    $\compretpJF{\cdot}$ & \ITab &  \SITab & \ITab & \ITab &  \ITab & \ITab & \ITab & \ITab & \ITab & \ITab \\
    $\compretpR{\cdot}$  & \N &  \SITab & \SITab &  \N &  \SITab & \N & \SITab & \N  & \SITab &  \N \\
    $\compuslhB{\cdot}$  & \ITab & \ITab & \ITab & \ITab  & \ITab & \ITab & \ITab & \ITab & \ITab & \ITab \\
    $\compsslhB{\cdot}$  & \ITab & \ITab & \ITab & \ITab  & \ITab & \ITab & \ITab & \ITab & \ITab & \ITab \\
    $\complfenceB{\cdot}$  & \SITab & \SITab & \SITab & \SITab & \SITab & \SITab & \SITab & \SITab & \SITab & \SITab  \\
    \bottomrule
    \end{tabular}
    \end{subtable}
    \caption{Lists of Secure compilers and how they attain Independence for our semantics $\SI{\contract{}{}}{\comp{\cdot}}{}$. \SITab~ means by Syntactic Independence, \ITab~ means by proving Independence and \N ~ means that Independence is not possible. For $\compuslhB{\cdot}$  we use a different observer named the $ct + vl$ observer.
}\label{t:table-ind}
\end{table*}

Adding a fence to $\compretpJ{\cdot}$ to get $\compretpJF{\cdot}$ allows us to achieve Independence w.r.t to $\contract{}{\SLSv}$.
We could do the same for $\compretpR{\cdot}$, however, we cannot combine $\contract{}{\Rv}$ with $\contract{}{\SLSv}$ due to limitations in the combination framework.

We want to note that for $\contract{}{\Jv}$ speculation happens only on indirect jumps, while the retpoline compilers only insert direct jumps. If a jump is direct or indirect can be seen statically by inspection of the code. 
That is why we have for example $\SI{}{\compretpJ{\cdot}}{\Bv + \Jv}$.

Technically, we have that Syntactic Independence does not hold for $\compretpR{\cdot}$ w.r.t. $\contract{}{\Rv}$, since $\retC$ instructions are added. However, the Independence proof is a consequence of \Thmref{lem:self-ind-tr} and $\compretpR{\cdot} \vdash \rssp$ and this extends to all the combined semantics using $\contract{}{\Rv}$ as well.
 \section{The problem with SLH and $\Jvr$}

In the paper we claimed that lifting the security guarantees of $\compsslhB{\cdot}$ and $\compuslhB{\cdot}$ to a combination including $\semj$ is not possible.

Now, let us take an example program $p$ written in C that is secure for $\contract{}{\Jv}$ : $\contract{}{\Jv} \vdash p : \rss$:

\begin{lstlisting}[basicstyle=\small,style=Cstyle,
    label=lst:uslh-c,escapechar=|, captionpos=t]
int array1[160];
int array2[256 * 512];
unsigned int array1_size = 16;

void leak(int idx) {
    int x = array1[idx];
    int y = array2[x * 4096];
}

void main() {
    int x = 0;
    if (x != 0){
        int x2 = array1[idx];
        //int y2 = array2[x2 * 4096];
        static const void* table[] = {&&winter, &&spring, &&summer, &&fall};
        goto *table[x2];
        winter:
            return;
        spring:
            return;
        summer:
            return;
        fall:
            return;
        return;
    }
}
\end{lstlisting}
This program is secure because the indirect jump is never executed because it is guarded by the branch.

After SLH the program looks like this using clang with option -mspeculative-load-hardening.

\begin{lstlisting}[language={[x86masm]Assembler}]
    leak:                                   # @leak
        push    rbp
        mov     rbp, rsp
        mov     rax, -1
        mov     rcx, rsp
        sar     rcx, 63
        mov     dword ptr [rbp - 8], edi
        movsxd  rax, dword ptr [rbp - 8]
        mov     edx, dword ptr [4*rax + array1]
        mov     eax, ecx
        or      eax, edx
        mov     dword ptr [rbp - 12], eax
        mov     eax, dword ptr [rbp - 12]
        shl     eax, 12
        cdqe
        mov     edx, dword ptr [4*rax + array2]
        mov     eax, ecx
        or      eax, edx
        mov     dword ptr [rbp - 16], eax
        mov     eax, dword ptr [rbp - 4]
        shl     rcx, 47
        or      rsp, rcx
        pop     rbp
        ret
main:                                   # @main
        push    rbp
        mov     rbp, rsp
        mov     rax, -1
        mov     qword ptr [rbp - 32], rax       # 8-byte Spill
        mov     rax, rsp
        sar     rax, 63
        mov     qword ptr [rbp - 24], rax       # 8-byte Spill
        mov     dword ptr [rbp - 4], edi
        mov     dword ptr [rbp - 8], 0
        cmp     dword ptr [rbp - 8], 0
        je      .LBB3_8
        jmp     .LBB3_1
.LBB3_8:
        mov     rcx, qword ptr [rbp - 32]       # 8-byte Reload
        mov     rax, qword ptr [rbp - 24]       # 8-byte Reload
        cmovne  rax, rcx
        mov     qword ptr [rbp - 40], rax       # 8-byte Spill
        jmp     .LBB3_6
.LBB3_1:
        mov     rcx, qword ptr [rbp - 32]       # 8-byte Reload
        mov     rax, qword ptr [rbp - 24]       # 8-byte Reload
        cmove   rax, rcx
        mov     qword ptr [rbp - 56], rax       # 8-byte Spill
        movsxd  rcx, dword ptr [rbp - 4]
        mov     edx, dword ptr [4*rcx + array1]
        mov     ecx, eax
        or      ecx, edx
        mov     dword ptr [rbp - 12], ecx
        movsxd  rcx, dword ptr [rbp - 12]
        mov     rcx, qword ptr [8*rcx + main.table]
        or      rax, rcx
        mov     qword ptr [rbp - 48], rax       # 8-byte Spill
        jmp     .LBB3_7
.Ltmp8:                                 # Block address taken
.LBB3_2:
        mov     rcx, qword ptr [rbp - 32]       # 8-byte Reload
        mov     rax, qword ptr [rbp - 56]       # 8-byte Reload
        mov     rdx, qword ptr [rbp - 64]       # 8-byte Reload
        cmp     rdx, offset .LBB3_2
        cmovne  rax, rcx
        mov     qword ptr [rbp - 40], rax       # 8-byte Spill
        jmp     .LBB3_6
.Ltmp10:                                # Block address taken
.LBB3_3:
        mov     rcx, qword ptr [rbp - 32]       # 8-byte Reload
        mov     rax, qword ptr [rbp - 56]       # 8-byte Reload
        mov     rdx, qword ptr [rbp - 64]       # 8-byte Reload
        cmp     rdx, offset .LBB3_3
        cmovne  rax, rcx
        mov     qword ptr [rbp - 40], rax       # 8-byte Spill
        jmp     .LBB3_6
.Ltmp12:                                # Block address taken
.LBB3_4:
        mov     rcx, qword ptr [rbp - 32]       # 8-byte Reload
        mov     rax, qword ptr [rbp - 56]       # 8-byte Reload
        mov     rdx, qword ptr [rbp - 64]       # 8-byte Reload
        cmp     rdx, offset .LBB3_4
        cmovne  rax, rcx
        mov     qword ptr [rbp - 40], rax       # 8-byte Spill
        jmp     .LBB3_6
.Ltmp14:                                # Block address taken
.LBB3_5:
        mov     rcx, qword ptr [rbp - 32]       # 8-byte Reload
        mov     rax, qword ptr [rbp - 56]       # 8-byte Reload
        mov     rdx, qword ptr [rbp - 64]       # 8-byte Reload
        cmp     rdx, offset .LBB3_5
        cmovne  rax, rcx
        mov     qword ptr [rbp - 40], rax       # 8-byte Spill
        jmp     .LBB3_6
.LBB3_6:
        mov     rax, qword ptr [rbp - 40]       # 8-byte Reload
        shl     rax, 47
        or      rsp, rax
        pop     rbp
        ret
.LBB3_7:
        mov     rax, qword ptr [rbp - 56]       # 8-byte Reload
        mov     rcx, qword ptr [rbp - 48]       # 8-byte Reload
        mov     qword ptr [rbp - 64], rcx       # 8-byte Spill
        or      rax, rcx
        jmp     rax
__llvm_retpoline_r11:                   # @__llvm_retpoline_r11
        call    .Ltmp18
.LBB1_1:                                # Block address taken
        pause
        lfence
        jmp     .LBB1_1
.Ltmp18:
        mov     qword ptr [rsp], r11
        ret
\end{lstlisting}

While the target of the indirect jump is masked during possible branch speculation via SLH (code sequence starting in line 104), it is still an indirect jump.
Thus, it can mispredict, anywhere into the code. For example into the array access in the leak function.

Even with forward-edge security with Intel-CET or anything else, this is still problematic. WIth CFI, we can only indirect jump to the beginning of a function. While the function prologue uses the $rsp$ register to set up the mask for interprocedural slh, it is not correctly tracked here.

Since the indirect jump is not a call, the high bits of $rsp$ do not correctly track the speculation flag and the mask is not correctly initialized. So even if the indirect jump can only jump to the beginning of leak, it does not matter because the mask is not correctly initialized.

Thus, slh is not secure in this case and the program is not secure for the combined contract $\contract{}{\Bv + \Jv}$.

This is inline with the discussion of \citet{slh} who advocate the use of retpoline to protect indirect jumps.

 \section{How instantiate Labels: Or how to link the program to run it}\label{sec:Linker}

In the syntax of the program we have abstract labels, that annotate each instructions and are used for $\jmpC$ instructions. For example $\pjmp{l}$ (where $l$ is an abstract label of the program) should jump to that location in the program.
This is done because when we compile the source language and insert a countermeasure we would need to shift all instruction numbers by one (i.e. if we insert a fence). That is why we have these abstract labels.

A second reason is that attacker and component define their functions. But they do not know which instruction number the other uses. We cannot allow clashes because otherwise it is not defined what $\pjmp{2}$ means, when there are multiple instructions it can belong to.

While our semantics works on those abstract labels, we could instantiate these labels to concrete values, e.g., numbers.
This is essentially modelling the work of the linker and there are some pitfalls that we need to consider.
For example, since we want to prove something about the security of our compilers, we need a cross language trace relation $\tracerel$ that relates source and target traces. Thus, we need to somehow relate the concrete locations between source and target program.

The labels $l$ are drawn from a countable infinite set $\labelset{}$.

For the remainder of this technical report, we will not use the exact details of these labels.

\subsubsection{An example}
First, we need to plug together all the different functions of the attacker and the component. We assume that each instruction has a unique label, which is reasonable. Otherwise, we would rename certain labels.

Let us look at an image of two functions $f$ and $g$ with their instructions and their respective order. 
\begin{center}
    
    \begin{tikzpicture}[trim left]
        \node (top) at (0,0) {$f$};
        \node [below of=top] (i1)  {$i_1$};
        \node [below of=i1] (i2)  {$i_2$};
        \node [below of=i2] (in)  {$\cdots$};
    \node [below of=in] (in')  {$i_{n'}$};

    \node (top1) at ($ (top) + (80:2) $) {$g$};
    \node [below of=top1] (j1)  {$j_1$};
    \node [below of=j1] (j2)  {$j_2$};
    \node [below of=j2] (jn)  {$\cdots$};
    \node [below of=jn] (jn')  {$j_{n'}$};
    \draw [black,  thick, shorten <=-2pt, shorten >=-2pt] (i1) -- (i2) -- (in) -- (in');
    \draw [black, thick, shorten <=-2pt, shorten >=-2pt] (j1) -- (j2) -- (jn) -- (jn');
\end{tikzpicture}
\end{center}

Note, that this is not the control flow graph of the program. Even if function $g$ would contains a loop or an if statement the resulting diagram would be the same. We are only interested in the program order of the instructions itself.

Just looking at the functions in isolation, we can see that the labels form a total order. However, looking at the whole program together, it is course not a total order. The instructions between the two functions are not related, e.g. , $i1 < j1$ or $j1 < i1$ does not hold.
This means these functions form chains in the partial order of the whole program.

\subsubsection{The algorithm}
What we want is a function mapping the abstract labels of the program to natural numbers.
This function should be isomorphic.

First, we know that the set of labels of a program $\labelset_{P}$ is finite because we do not have infinite programs.
Every finite order has a least element and is in fact a well order

This means we can just take a totally ordered set of natural numbers ordered by $<$ that has the same cardinality as the set of labels used in the program.

This algorithm computes a bijection between the program labels ordered by an initial segment of the natural numbers.
\begin{algorithm}
    \caption{Computing the label map}
    \label{alg:labelmap}
    \begin{algorithmic}[1] \Input
      \Desc{(P, R) }{The partial order defined of the program where P includes all labels used in the program}
      \Desc{n}{A natural number n used as initial seed to generate the successor set of n.}
      \EndInput
      \Output
      \Desc{$\labelmap$}{A map from labels to natural numbers}
      \Desc{$N$}{The finite successor set of N }
      \EndOutput
        \Procedure{Euclid}{$a,b$}
            \State $N = \emptyset$
            \State $\labelmap = \emptyset$
            \While{$! empty(P)$}
                \State $x\gets rand(P)$
                \State $l \gets least\textunderscore element(x)$ \While{$cover(l, R) \neq l$} \State $P \gets P \setminus l$
                    \State $\labelmap \gets \labelmap[l \mapsto n]$
                    \State $N \gets N \cup {n}$
                    \State $n \gets n + 1$ \label{alg:line:nadd}
                \EndWhile
            \EndWhile
            \State \textbf{return} $(\labelmap, N)$
        \EndProcedure
    \end{algorithmic}
\end{algorithm}

We take a random label in the program. Each label belongs to one function in the program. Since the functions itself are total ordered and are finite, we know that there is a least element in that total order. We essentially 'climb down' the chain as the starting point. This ensures that we do not end up in situations like this for some function $f$:
\begin{lstlisting}[basicstyle=\small,style=MUASMstyle, escapechar=|, captionpos=t,]
5 : jmp l_1      
3 : ...
4 : load eax, 2
\end{lstlisting}
The algorithm ensures that the full chain / function is consistent in the numbering.

Then we 'climb up the chain by using the definition of cover (i.e. the smallest element that is bigger in the total order) and assign natural numbers in increasing order to those labels. After we are done with this chain, the algorithm will select the next one.

The map $\labelmap$ is a partial function from abstract labels to natural numbers. However, we know that the function will be total for all labels used in the program by construction.

\subsubsection{Why we use the successor set of N}

Note that instead of adding 1 to n in \Cref{alg:line:nadd} we could have increased $n$ arbitrarily. 
Furthermore, we need to use this algorithm for both the source program and the compiled program. Because the compiler can add instructions into the program.

\subsubsection{Computing the initial partial order of the whole program}
To use our algorithm, we need to give it as input the partial order of the program P.

W.l.o.g we assume that the labels used per function are disjoined.
We can define the partial order of the program P by the sum of all total orders defined by the functions appearing in the program.
\begin{definition}
$\sum_{(\mathcal{P}_i, \mathcal{T}_i) \in \{P.\OB{F}, P.\OB{I} \}}{}{(\mathcal{P}_i, \mathcal{T}_i) }$
\end{definition}

Now it remains to show, how we can compute the total orders for each function.

We give an example how to do it for function.
First, we map all entry points of functions into a set S.
First, lookup the function $f$ in the function map $F(f) = l$. This $l$ is the first instruction of the function.
A program is a sequence of labelled instructions. We iterate this sequence until the function is finished. This means until we find a label that is in the set of function entry points.
so $l : i; l_i : i' ; p2$ we add the edge $(l, l_i) \cup R$ and unfold the sequence further.

When then take the transitive closure and obtain $R^{+}$.

\subsection{The abstract label semantics}\label{sec:abstract-semantics}
Instead of using the instantiation, we just use the abstract labels.

\begin{gather*}
    \begin{aligned}
    	\mi{(Registers)}~ x \in&\ \Reg
    	&
    	\mi{(Labels)}~ l \in&\ \labelset
    	&
    	\mi{(Nats)}~ n \in&\ \Nat \cup \{\bot\}
    	&
        \mi{(Values)}~ v \in&\ \Val = \Nat \cup \{\bot\} \cup{} \labelset
    \end{aligned}
\end{gather*}

\subsubsection{Expression Evaluation}
\begin{center}
    \mytoprule{\exprEval{\sigma}{e}{v}}

\typerule{E-val}
{}
{
\exprEval{\sigmav}{v}{v}
}{E-val-inst}
\typerule{E-lookup}
{\sigmav(r) = v & \text{$r \in \Reg$}}
{
\exprEval{\sigmav}{r}{v}
}{E-lookup-inst}

\typerule{E-binop}
{
\exprEval{\sigmav}{e_1}{n_1} & \exprEval{\sigmav}{e_2 }{n_2} & v = n_1 \otimes n_2'
}
{
\exprEval{\sigmav}{e_1 \otimes e_2 }{v}
}{E-binop-inst}
\typerule{E-unop}
{
\exprEval{\sigmav}{e }{n}
}
{
\exprEval{\sigmav}{\ominus e }{\ominus n}
}{E-unop-inst}

\end{center}

BinOp and unop now take a natural number(extended with $\bot$).
Lookup can return a label

\begin{center}\small
\mytoprule{\sigma \nsarrow{\tau} \sigma}

\typerule{Assign}
{
\select{p}{\av(\pc)} = \passign{x}{e} & x \neq \pc
}
{
\src{C;\OB{\Bva}; \tup{p,\mv,\av}} \nsarrow{} \src{C;\OB{\Bva}; \tup{m, \av[\pc \mapsto \inc(\pc),x \mapsto \exprEval{\av}{e}{v}]}}
}{assign-inst}

\end{center}

Essentially we could decide to have a static semantics using labels and a run time semantics, where we instantiate all the labels.
We define the semantics on the abstract labels. However, since we defined how we can instantiate these labels we try to keep this abstract.
The difference is really just notational.
That is why we have the function $\inc$ defined:
\begin{align*}
    \inc(\pc) = 
     \begin{cases}
       \pc + 1 &\quad\text{if $\pc \in \Nat$}\\
       \succes(\pc) &\quad\text{if $\pc \in \labelset$}
     \end{cases}
\end{align*}

where $\succes(l)$ computes the cover of the label $l$. Since the cover of a label $l$ can be the label itself, i.e. at the end of a function when there is no return (which is strange in itself.). We need to annotate this with a special symbol (similar to $\bot$ for nats), recording that the program is now in a stuck state.

\begin{center}
    \typerule{Load}
{
\select{p}{\av(\pc)} = \pload{x}{e} & x \neq \pc & \exprEval{\av}{e}{n}
}
{
\src{C;\OB{\Bva}; \tup{p,\mv,\av}}  \nsarrow{\loadObs{n}} \src{C;\OB{\Bva}; \tup{\mv, \av[\pc \mapsto \av(\pc)+1, x \mapsto \mv^{\low}(n)]}}
}{load-ext}

\typerule{Store}
{
\select{p}{\av(\pc)} = \pstore{x}{e} &  \exprEval{\av}{e}{n}
}
{
\src{C;\OB{\Bva}; \tup{p,\mv,\av}} \nsarrow{\storeObs{n}} \src{C;\OB{\Bva}; \tup{ \mv^{\low}[n \mapsto \av(x)], \av[\pc \mapsto \av(\pc)+1]}}
}{store-ext}
\end{center}

Note that in both rules the expression $e$ needs to evaluate to a natural number $n$. This implicitly means that there are no labels $l$ in the expression $e$.

Let us explain why we forbid labels in the expression $e$ of store and loads. First of all, labels are not instantiated in our semantics so a store location like $l$ is not in the domain of the memory.

Of course, this argument does not hold if one simply instantiates the semantics. 
We will explain this detail in \Cref{sec:inst-semantics} where we argue that we still should forbid labels from appearing as store and load locations.

\begin{center}
    \typerule{Jmp}
{
\select{p}{\av(\pc)} = \pjmp{e} &  \exprEval{\av}{e}{\lbl} \\
\ffun{\av(\pc)} = f & \ffun{\lbl} = f' & \src{C.\mtt{intfs}}\vdash\src{f,f'}:\src{internal}
}
{
\src{C;\OB{\Bva}; \tup{p,\mv,\av}}  \nsarrow{\pcObs{\lbl}} \src{C;\OB{\Bva};\tup{ m, a[\pc \mapsto \lbl]}}
}{jmp-ext}
\end{center}

Here the expression $e$ needs to evaluate to a label. So eiher $e$ is just a label $l$ or is a register $r$ containing a label $l$.
The second case is actually called an indirect jump (relevant for SPectre V2) because the label is indirectly stored in the register.
Some CPUs allow $\pjmp{[n]}$ where $[n]$ means the value stored at memory address n. We do not have that but can simulate it by 
$\pload{r}{n}$ followed by $\pjmp{r}$. Furthermore, using the gcc flag \textit{-mindirect-branch-register} prohibits the use indirect jumps using memory and essentially does a rewrite like we do.

In \Cref{sec:inst-semantics} we argue why one should not allow natural numbers as targets for jumps in the instantiated semantics.

\subsection{Using the Instantiation : Semantics}\label{sec:inst-semantics}

We have shown the changes to the semantics when one uses the abstract labels. However, we can define the semantics using the instantiation of the labels using our computed $\labelmap$.
This means that all labels are resolved before/during the execution of the program.

For example, resolving the labels at \textbf{run-time} means that we add an additional rule the expression evaluation.
\begin{center}
    \mytoprule{\exprEval{\sigma, \labelmap}{e}{v}}

\typerule{E-label}
{\labelmap(l) = n}
{
\exprEval{\sigmav, \labelmap}{l}{n}
}{E-label}
\end{center}

This is similar to a virtual memory address lookup.

\subsubsection{Store instructions}\label{sec:linker-store-ex}
In \Cref{sec:abstract-semantics} we explained why the expression $e$ of store and loads cannot contain any labels. In the instantiation however, labels are resolved to natural numbers, so they could be used as store and load locations.

However, we argue that they still should not be used.

Consider the small program and two to possible instantiations according to our algorithm :
\begin{center}
    
\begin{minipage}[b]{0.33\linewidth}
\begin{lstlisting}[basicstyle=\small,style=MUASMstyle,  label=lst:label-store-numbers1,escapechar=|, captionpos=t,]
l_1 : load r1, l_x  |\label{label_load-1}| 
l_2 : store x, l_3 |\label{label_store-2}|
l_3 : store x, 100 |\label{label_store-3}|
\end{lstlisting}
\end{minipage}
\hfill
\begin{minipage}[b]{0.33\linewidth}
\begin{lstlisting}[basicstyle=\small,style=MUASMstyle,  label=lst:label-store-numbers2,escapechar=|, captionpos=t,]
98 : load r1, 5     
99 : store x, 100
100 : store x, 100 
\end{lstlisting}
\end{minipage}
\hfill
\begin{minipage}[b]{0.33\linewidth}
\begin{lstlisting}[basicstyle=\small,style=MUASMstyle,  label=lst:label-store-numbers3,escapechar=|, captionpos=t,]
1 : load r1, 8     
2 : store x, 3
3 : store x, 100 
\end{lstlisting}
\end{minipage}
\captionof{lstlisting}{Code on the left and the instantiated code on the right for different instantiations.}
\end{center}

In the program, we have a $\loadC$ instruction in \Cref{label_load-1} loading from the location of label $l_x$ and a $\storeC$ instruction in \Cref{label_store-2} that stores to location $l_3$.
See that the $\loadC$ instruction now non-deterministically chooses the location from memory dependent on the layout the linker created.
Second, for the $\storeC$ instruction in \Cref{label_store-2}, the store location can overlap with the store in \Cref{label_store-3}. Again, this behaviour depends on the layout of the program which the linker chooses. Most importantly, both of these behaviours are probably not intended by the programmer. Thereby, we do not allow them.

That is why we will not allow abstract labels as destination and sources for store and load instructions.

\subsubsection{Jmp instructions}

Similar for $\jmpC$ instructions, allowing natural numbers to be used as jump targets is a bad idea as well. Consider this program:

\begin{center}
    
\begin{minipage}[b]{0.33\linewidth}
\begin{lstlisting}[basicstyle=\small,style=MUASMstyle,  label=lst:label-jmp-numbers1,escapechar=|, captionpos=t,]
l_1 : jmp 5 
l_2 : store x, 5 
l_3 : store x, 100 
\end{lstlisting}
\end{minipage}
\hfill
\begin{minipage}[b]{0.33\linewidth}
\begin{lstlisting}[basicstyle=\small,style=MUASMstyle,  label={lst:label-jmp-numbers2},escapechar=|, captionpos=t,]
3 : jmp 5
4 : store x, 5 
5 : store x, 100 
\end{lstlisting}
\end{minipage}
\hfill 
\begin{minipage}[b]{0.33\linewidth}
\begin{lstlisting}[basicstyle=\small,style=MUASMstyle,  label={lst:label-jmp-numbers3},escapechar=|, captionpos=t,]
98 : jmp 5
99 : store x, 5 
100 : store x, 100 
\end{lstlisting}
\end{minipage}
\end{center}
In \Cref{lst:label-jmp-numbers2} the jump actually resolves. However, in \Cref{lst:label-jmp-numbers3} the jump does not resolve because there exists no instruction at location 5!

Again, the programmer relies on the layout of the program, which the linker decides.

\subsubsection{Using the instantiation: Changes to the Cross-Language relation}

Even though we use the abstract labels in the semantics. It is possible to use the instantiation as well.
At run-time the $\labelmap$ would be used to resolve all labels in the program.

However, one needs to be careful with the cross-language relation.
In the instantiation of the labels we have observations like $\pcObs{n}$ for $\jmpC$ instructions. Since we need to relate source and target traces, we need to very careful here.

Essentially what we do is a reverse lookup

\begin{center}

\begin{minipage}[b]{0.45\linewidth}
\begin{lstlisting}[basicstyle=\small,style=MUASMstyle, escapechar=|, captionpos=t,]
l_1 : jmp l_3       
l_2 : store x, e
l_3 : load eax, 2
\end{lstlisting}
\end{minipage}
\hfill
\begin{minipage}[b]{0.45\linewidth}
\begin{lstlisting}[basicstyle=\small,style=MUASMstyle,  escapechar=|, captionpos=t,]
l_1 : jmp l_3     
l_2 : store x, e
l_2' : spbarr
l_3 : load eax, 2
\end{lstlisting}
\end{minipage}
\captionof{lstlisting}{Code on the left and the compiled program with a countermeasure on the right.}
\end{center}
Lets assume we have an instantiation of the abstract labels according to our algorithm starting at 1.
Here, we have the observations $\src{\pcObs{3}}$ and $\trg{\pcObs{4}}$ when executing the $\jmpC$ instruction in source and target program.
However, these observations should be related, because they actually jump to the same instruction. The mismatch happens, because of the added countermeasure by the compiler.

So to relate source and target traces, we use the $\src{\labelmap^{-1}}$ and $\trg{\labelmap^{-1}}$ to do a reverse lookup and check if they map to the same abstract label
\begin{center}

\typerule{Action Relation - pcObs}{
		\src{\labelmap^{-1}(n)} = \trg{\labelmap^{-1}(n')}
		&
		\src{\taint}\equiv \trg{\taint'}
	}{
		\src{\pcObs{n}^{\taint}} \arel \trg{\pcObs{n'}^{\taint'}}
	}{ac-rel-cl-inst}
\end{center}

So even though we have $\src{\pcObs{3}}$ and $\trg{\pcObs{4}}$ they are  related by the trace relation because 
$\src{\labelmap^{-1}(3)} = l_3$ and  $\trg{\labelmap^{-1}(4)} = l_3$!

\subsection{Lifting the restriction on the jumps}
A jump instruction like $\pjmp{l_1 + 2}$ is not allowed right now. Here we explain, how one could lift this restriction 

Use the partial order of the source $\src{\labelmap}$ to resolve jump targets in the target language. See this example

\begin{center}

\begin{minipage}[b]{0.45\linewidth}
\begin{lstlisting}[basicstyle=\small,style=MUASMstyle, escapechar=|, captionpos=t,]
l_1 : jmp l_1 + 2       
l_2 : store x, e
l_3 : load eax, 2
\end{lstlisting}
\end{minipage}
\hfill
\begin{minipage}[b]{0.45\linewidth}
\begin{lstlisting}[basicstyle=\small,style=MUASMstyle,  escapechar=|, captionpos=t,]
l_1 : jmp l_1 + 2     
l_2 : store x, e
l_2' : spbarr
l_3 : load eax, 2
\end{lstlisting}
\end{minipage}
\captionof{lstlisting}{Code on the left and the compiled program with a countermeasure on the right.}
\end{center}

We expect the program in the source to jump to $l_3$. However in the target language, the compiler added a barrier instruction. Thus the target code could jump to $l_2'$. To remediate the situation, we need to think how we can resolve $l_1 + 2$.

Naively, we could use our label bijection to get $\src{\labelmap(l_1) + 2}$. Assuming our order starts at 1, this would resolve to 3.
Computing in the target, we get $\trg{\labelmap(l_1) + 2} = 3$. In this case $\src{\labelmap(l_1)} = \trg{\labelmap(l_1)}$. This is generally not the case. Think about it if the jmp was after the barrier instruction. This would shift the target level jump by 1 for example.

They both jump to instruction. This then points to the load instruction in the src and the barrier instruction in the target which is a mismatch that we do not want.

Instead, we transform the computation into the partial order domain: $l_1 + 2 = \succes(\succes(l_1))$. This computation returns a label instead of a natural number. In this case $l_3$. On which we now can apply the labelmap of the source. $\src{\labelmap(l_1) = 3}$

Note that we applied the success function on the \textbf{partial order of the source program}.
If we use the partial order of the target program we would get
$\trg{\succes(\succes(l_1))} = l_{2'}$ and we gained nothing because that still not what we want.
However, we could use the partial order of the source program to compute $\src{\succes(\succes(l_1))} = l_{3}$. On which we now apply the labelmap of the target program: $\trg{\labelmap(l_3) = 4}$.

So we compute the successor label using the partial order of the source  program and then apply the label map of either source or target language.
The trick was to interpret $+$ in the abstract domain of the partial order and to use the partial order of the source in the target as well!
$\src{\labelmap}(\trg{l + n})$, where $+$ means the successor of the label.
We think that this is a model of relocations that the linker applies when linking code together.

\end{document}